\newcites{links}{Links}
\newcommand{\ignore}[1]{}
\newcommand{\task}[2]{%
  \ifthenelse{\boolean{ShowTODO}}%
  {%
   \begin{Task}[H]%
     \caption{#1}%
     {#2}%
   \end{Task}%
  }%
  {%
  }%
}%
\newcommand{\todo}[1]{%
  \ifthenelse{\boolean{ShowTODO}}%
  {\begin{center}%
      \framebox{\parbox{0.8\textwidth}{#1}}%
   \end{center}}%
  {}%
}
\renewcommand{\chaptermark}[1]{\markboth{{\it \chaptername\ \thechapter.\ #1}}{}}
\newtheorem{theorem}{Theorem}[chapter]
\newtheorem{lemma}[theorem]{Lemma}
\newtheorem{observation}[theorem]{Observation}
\newtheorem{corollary}[theorem]{Corollary}
\newcommand{\umin}{u_{\rm min}}
\newcommand{\umax}{u_{\rm max}}
\newcommand{\vmin}{v_{\rm min}}
\newcommand{\vmax}{v_{\rm max}}
\newcommand{\calA}{\mathcal{A}}
\newcommand{\calC}{\mathcal{C}}
\newcommand{\calH}{\mathcal{H}}
\newcommand{\calS}{\mathcal{S}}
\newcommand{\parms}{\mathbb{P}}
\newcommand{\reals}{\mathbb{R}}
\newcommand{\dirs}{\mathbb{S}}
\newcommand{\rrr}{\reals^3}
\newcommand{\rr}{\reals^2}
\newcommand{\spheretwo}{\dirs^2}
\newcommand{\vecd}{\vec{d}}
\newcommand{\Cpp}{{C}{\tt ++}}
\newcommand{\sgal}{{\sc Sgal}}
\newcommand{\cgal}{{\sc Cgal}}
\newcommand{\leda}{{\sc Leda}}
\newcommand{\core}{{\sc Core}}
\newcommand{\gmp}{{\sc GMP}}
\newcommand{\dcel}{{\sc Dcel}}
\newcommand{\hds}{{\sc Hds}}
\newcommand{\lisp}{{\sc Lisp}}
\newcommand{\exacus}{{\sc Exacus}}
\newcommand{\boost}{{\sc Boost}}
\newcommand{\bgl}{{\sc BGL}}
\newcommand{\stl}{{\sc STL}}
\newcommand{\gf}{{\sc Geometry Factory}}
\newcommand{\cgm}{{\sc Cgm}}
\newcommand{\vrml}{{\sc VRML}}
\newcommand{\kdtree}{{\sc Kd}-tree}
\newcommand{\sgmo}{{\bf SGM}}
\newcommand{\cgmo}{{\bf CGM}}
\newcommand{\ngmo}{{\bf NGM}}
\newcommand{\ch}{{\bf CH}}
\newcommand{\Fuku}{{\bf Fuk}}
\newcommand{\bez}{B{\'e}zier}
\newcommand{\mobius}{M\"obius}
\newcommand{\etal}{{\it et~al.}}
\newcommand{\tildegen}{\protect\raisebox{-0.12cm}{\symbol{'176}}}
\newcommand{\mindia}[1]{\ensuremath{{\mathcal M}(#1)}}
\newcommand{\distance}[2]{\ensuremath{\rho(#1, #2)}}
\definecolor{orange}{rgb}{1,0.5,0}
\def\concept#1{\textsf{\it #1}}
\def\ccode#1{{\texttt{#1}}}
\newcommand{\arr}{\ccode{Arrangement\_2}}
\newcommand{\arrwh}{\ccode{Arrangement\_with\_history\_2}}
\newcommand{\aos}{\ccode{Arrangement\_on\_surface\_2}}
\newcommand{\aoswh}{\ccode{Arrangement\_on\_surface\_with\_history\_2}}
\newcommand{\cBso}{\ccode{Boolean\_set\_operations\_2}}
\newcommand{\cCgm}{\ccode{Cubical\_gaussian\_map\_3}}
\newcommand{\cEnvelope}{\ccode{Envelope\_3}}
\newcommand{\cSweepline}{\ccode{Sweep\_line\_2}}
\newcommand{\cZone}{\ccode{Arrangement\_zone\_2}}
\newcommand{\cObserver}{\ccode{Arr\_observer}}
\newcommand{\cPolyhedron}{\ccode{Polyhedron\_3}}
\newcommand{\cNefii}{\ccode{Nef\_2}}
\newcommand{\cNefiii}{\ccode{Nef\_3}}
\def\capStyle#1{{\small {\textsf{#1}}}}
\newcommand{\GeneralPolygonSetTraits}{\concept{GeneralPolygonSetTraits\_2}}
\newcommand{\ArrangementDirectionalXMonotoneTraits}{\concept{ArrangementDirectionalXMonotoneTraits\_2}}
\newcommand{\ArrangementBasicTraits}{\concept{ArrangementBasicTraits\_2}}
\newcommand{\ArrangementXMonotoneTraits}{\concept{ArrangementXMonotoneTraits\_2}}
\newcommand{\ArrangementTraits}{\concept{ArrangementTraits\_2}}
\newcommand{\ArrangementLandmarksBasicTraits}{\concept{ArrangementLandmarksBasicTraits\_2}}
\newcommand{\ArrangementLandmarksXMonotoneTraits}{\concept{ArrangementLandmarksXMonotoneTraits\_2}}
\newcommand{\ArrangementLandmarksTraits}{\concept{ArrangementLandmarksTraits\_2}}
\newcommand{\NoBoundaryTraits}{\concept{NoBoundaryTraits}}
\newcommand{\HasBoundaryTraits}{\concept{HasBoundaryTraits}}
\newcommand{\BoundedBoundaryTraits}{\concept{BoundedBoundaryTraits}}
\newcommand{\UnboundedBoundaryTraits}{\concept{UnboundedBoundaryTraits}}
\newcommand{\AllBoundaryTraits}{\concept{AllBoundaryTraits}}
\def\ArrangementDirectionalXMonotoneT{\psframebox{\small \ArrangementDirectionalXMonotoneTraits}}
\def\GeneralPolygonSetT{\psframebox{\small \GeneralPolygonSetTraits}}
\def\ArrangementBasicT{\psframebox{\small \ArrangementBasicTraits}}
\def\ArrangementXMonotoneT{\psframebox{\small \ArrangementXMonotoneTraits}}
\def\ArrangementT{\psframebox{\small \ArrangementTraits}}
\def\ArrangementLandmarksBasicT{\psframebox{\small \ArrangementLandmarksBasicTraits}}
\def\ArrangementLandmarksXMonotoneT{\psframebox{\small \ArrangementLandmarksXMonotoneTraits}}
\def\ArrangementLandmarksT{\psframebox{\small \ArrangementLandmarksTraits}}
\def\NoBoundaryT{\psframebox{\small \NoBoundaryTraits}}
\def\HasBoundaryT{\psframebox{\small \textcolor{gray}{\HasBoundaryTraits}}}
\def\BoundedBoundaryT{\psframebox{\small \BoundedBoundaryTraits}}
\def\UnboundedBoundaryT{\psframebox{\small \UnboundedBoundaryTraits}}
\def\AllBoundaryT{\psframebox{\small \AllBoundaryTraits}}
\newcommand{\Index}[1]{#1\index{#1}}
\def\psarcellipse{\pst@object{psarcellipse}}
\def\psarcellipse@i(#1){\@ifnextchar(%
{\psarcellipse@ii(#1)}{\psarcellipse@ii(0,0)(#1)}}
\def\psarcellipse@ii(#1)(#2)#3#4{%
\psarcellipse@iii(#1)(#2){#3}{#4}}
\def\psarcellipse@iii(#1)(#2)#3#4{%
\begin@ClosedObj
\pst@getcoor{#1}\pst@tempa
\pst@@getcoor{#2}%
\addto@pscode{%
#3 #4
\pst@coor
\ifdim\psk@dimen\p@=\z@\else
\psk@dimen CLW mul
dup 4 -1 roll sub neg 3 1 roll sub
\fi
\pst@tempa
\tx@Ellipse
}%
\def\pst@linetype{0}%
\showpointsfalse
\end@ClosedObj}
\def\my_axis#1#2#3#4#5#6{%
  \rput{#1}(0,0){
    \psline[linewidth=2pt]{->}(0,0)(#2,0)
    \uput[0]{-#1}(#2,0){{\large $X$}}
  }
  \rput{#3}(0,0){
    \psline[linewidth=2pt]{->}(0,0)(#4,0)
    \uput[0]{-#3}(#4,0){{\large $Y$}}
  }
  \rput{#5}(0,0){
    \psline[linewidth=2pt]{->}(0,0)(#6,0)
    \uput[0]{-#5}(#6,0){{\large $Z$}}
  }
}
\providecommand*{\toclevel@algorithm}{0}
\begin{document}
\begin{titlepage}
\begin{center}
  \psfig{figure=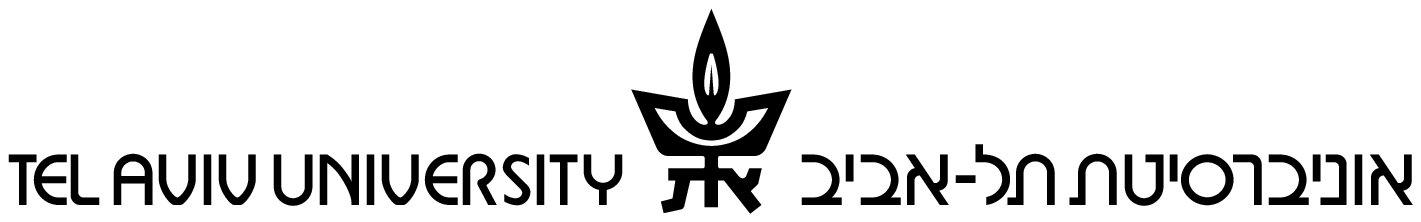,silent=}\\\vspace{0.2ex}
  {\large {\sc Raymond and Beverly Sackler}}\\\vspace{0.2ex}
  {\large {\sc Faculty of Exact Sciences}}\\\vspace{0.2ex}
  {\large {\sc The Blavatnik School of Computer Science}}\\\vfill

  {\Huge\bf Minkowski Sum Construction and}\\\vspace{0.3cm}
  {\Huge\bf other Applications of Arrangements}\\\vspace{0.3cm}
  {\Huge\bf of Geodesic Arcs on the Sphere}\\\vspace{3cm}

  {\large Thesis submitted for the degree of ``Doctor of Philosophy''}\\\vspace{0.5cm}
  {\large  by}\\\vspace{0.5cm}
  {\LARGE\bf Efraim Fogel}\\\vspace{3cm}
  {\large This work has been carried out under the supervision of}\\\vspace{0.2ex}
  {\large Prof. Dan Halperin}\\\vfill
  {\large Submitted to the Senate of Tel-Aviv University}\\\vspace{0.2ex}
  {\large October 2008}

  \author{}
  \date{}
\end{center}

\end{titlepage}
\thispagestyle{empty}
\frontmatter

\newpage
\subsection*{Acknowledgements}
This research project would not have been possible without the support
of many people. I wish to express my gratitude to my advisor,
Prof. Dan Halperin who was abundantly helpful and offered invaluable
assistance, support, insights, and guidance. This sentense is far
short of describing the magnitude of positive influence Danny had on
this research and this researcher.

\vspace{0.3ex}
\noindent I would like to thank Ron Wein and Ophir Setter
from Tel-Aviv University and Eric Berberich from Max-Planck-Insitut
f{\"u}r Informatik, for ongoing and fruitful collaboration. I would
also like to thank all other members of the applied computational
geometry group at the computer science school in Tel-Aviv University
who made the study duration festive.

\vspace{0.3ex}
\noindent I would like to thanks the members of the \cgal{} Editorial
Board in particular and all members of the \cgal{} developers'
community in general for sharing their wisdom through many useful
advises and rich discussions.

\vspace{0.3ex}
\noindent During the years I participated in the EU-funded projects
{\sc Ecg} (Effective Computational Geometry for Curves and Surfaces;
contract No. IST-2000-26473), {\sc Movie} (Motion Planning in Virtual
Environments, contract No. IST-2001-39250) and {\sc Acs} (Algorithms
for Complex Shapes; contract No. IST-006413) projects. I wish to
thank all the other participants in these projects, with whom I
enjoyed working.

\vspace{0.3ex}
\noindent Finally, I would like to express my love and gratitude to my
beloved families; for their understanding and endless love, through the
duration of my studies. 

\newpage ~~~ \newpage
\section*{Abstract}
We present two exact implementations of efficient output-sensitive
algorithms that compute {\em Minkowski sums} of two convex polyhedra
in $\rrr$. We do not assume general position. Namely, we handle
degenerate input, and produce exact results. We provide a tight bound
on the exact maximum complexity of Minkowski sums of polytopes in
$\rrr$ in terms of the number of facets of the summand polytopes.
The complexity of Minkowski sum structures is directly related to the
time consumption of our Minkowski sum constructions, as they are
output sensitive. We demonstrate the effectiveness of our Minkowski-sum
constructions through simple applications that exploit these operations
to detect collision between, and answer proximity queries about, two convex
polyhedra in $\rrr$.

The algorithms employ variants of a data structure that represents
arrangements embedded on two-dimensional parametric surfaces in $\rrr$,
and they make use of many operations applied to arrangements in these
representations. We have developed software components that support the
arrangement data-structure variants and the operations applied to them.
These software components are generic, as they can be instantiated with
any number type. However, our algorithms require only (exact) rational
arithmetic. These software components together with exact
rational-arithmetic enable a robust, efficient, and elegant
implementation of the Minkowski-sum constructions and the related
applications. These software components are provided through a
package of the Computational Geometry Algorithm Library
(\cgal)~\citelinks{cgal} called
\aos~\cite{cgal:wfzh-a2-07}. The code of \cgal{} in general, the \aos{}
package in particular, and all the rest of the code developed as part
of this thesis adhere to the {\em Generic Programming} paradigm and
follow the {\em Exact Geometric Computation} paradigm.

We also present exact implementations of other applications that exploit
arrangements of arcs of great circles, also known as  geodesic arcs,
embedded on the sphere. For example, we implemented robust polyhedra
central-projection and Boolean set-operations applied to point sets
embedded on the sphere bounded by geodesic arcs. We use them as basic
blocks in an exact implementation of an efficient algorithm that
partitions an assembly of polyhedra in $\rrr$ with two hands using
infinite translations. This application makes extensive use of
Minkowski-sum constructions and other operations on arrangements of
geodesic arcs embedded on the sphere. It distinctly shows the
importance of exact computation, as imprecise computation might result
with dismissal of valid partitioning-motions.

We have produced three movies that explain some of the concepts portrayed
in this thesis~\citelinks{movies}.\footnote{Throughout the thesis a
number in brackets (e.g.,~\citelinks{movies}) refers to the link list
starting on page~118, and an alphanumeric string in
brackets (e.g.,~\cite{ffhl-vemsa-02}) is a standard bibliographic reference.} 
The first movie~\cite{ffhl-vemsa-02} explains what Minkowski sums are and
demonstrates how they are used in various applications. The second
movie~\cite{fh-emscp_05} demonstrates the first method we have developed
to construct Minkowski-sums of convex polyhedra. The third
movie~\cite{fsh-agas-08} illustrates exact construction and
maintenance of arrangements induced by geodesic arcs and applications
that exploit such arrangements.

Additional information is available at the following web sites:\newline
\url{http://acg.cs.tau.ac.il/projects/internal-projects/gaussian-map-cubical}\newline
\url{http://acg.cs.tau.ac.il/projects/internal-projects/arrangements-on-surfaces},\newline
\texttt{http://acg.cs.tau.ac.il/projects/internal-projects/exact-complexity-of-}\newline
\texttt{minkowski-sums}, and \newline
\url{http://acg.cs.tau.ac.il/projects/internal-projects/arr-geodesic-sphere}

Auxiliary programs, source code, data sets, and documentation can be
downloaded from {\tt http://www.cs.tau.ac.il/\tildegen{}efif/Minkowski-sum}.

\tableofcontents
\listoffigures
\listoftables
\newpage ~~~ \newpage
\mainmatter

\begin{savequote}[10pc]
\sffamily
The Guide is definitive. Reality is frequently inaccurate.
\qauthor{Douglas Adams}
\end{savequote}
\chapter{Introduction}
\label{chap:intro}
Let $P$ and $Q$ be two closed convex polyhedra in $\mathbb{R}^d$.
The \Index{Minkowski sum} of $P$ and $Q$ is the convex polyhedron
$M = P \oplus Q = \{p + q\,|\,p \in P, q \in Q\}$. A polyhedron $P$
translated by a vector $t$ is denoted by $P^t$. {\em Collision Detection}
is a procedure that determines whether $P$ and $Q$ overlap. The
{\em Separation Distance} $\pi(P,Q)$ and {\em the Penetration Depth}
$\delta(P,Q)$ defined as
\begin{align*}
\pi(P,Q) & = \min\{\Vert t \Vert \,|\, P^t \cap Q \neq \emptyset, t \in \mathbb{R}^d\}\ ,\\
\delta(P,Q) & = \inf\{\Vert t \Vert \,|\, P^t \cap Q = \emptyset, t \in \mathbb{R}^d\}
\end{align*} 
are the minimum distances by which $P$ has to be translated so that $P$
and $Q$ intersect or become interior disjoint, respectively.
The problems of finding the distances above can also be posed given a
normalized vector $r$ that represents a direction, in which case the
minimum distance sought is in direction $r$. The
{\em Directional Penetration-Depth}, for example, is defined as
\begin{align*}
\delta_r(P,Q) = \inf\{\alpha\,|\, P^{\alpha \vec{r}} \cap Q = \emptyset, \alpha \in \mathbb{R}\}\ .
\end{align*}

\section{Main Contribution}
\label{sec:contribution}
We present two exact and robust implementations of efficient 
output-sensitive algorithms\index{algorithm!output sensitive} to
compute the Minkowski sum~\cite{ffhl-vemsa-02} of two convex polyhedra,
polytopes\index{polytope} for short, in
$\rrr$~\cite{fh-emscp_05,fh-eecms-07,fsh-eiaga-08,bfhks-apsca-09}. The
implementations are exact and robust, as they handle all degenerate cases,
and guarantee exact results. We demonstrate the effectiveness of our
Minkowski-sum computations through simple applications that exploit these
operations to detect collision, and compute the Euclidean separation-distance
between, and the directional penetration depth of, two polytopes in $\rrr$.

We compare our Minkowski-sum constructions with three other methods that
produce exact results we are aware of. One is a simple method that
computes the \Index{convex hull} of the pairwise sums of vertices of
two polytopes. The second is based on Nef polyhedra embedded on the
sphere, and the third is based on linear programming. We conducted
experiments with a broad family of polytopes
and compared the performance of our methods with the performance of
the other. The results reported in Table~\ref{tab:mink-time} clearly
shows that both our methods are significantly faster.

Each method we have developed uses a different variant of
Gaussian maps\index{Gaussian map}, also known as normal
diagrams\index{diagram!normal} or slope diagrams\index{diagram!slope}, to
maintain dual representations of polytopes. Each method employs a different
variant of a generic data-structure that represents
{\em arrangement}\index{arrangement} embedded on two-dimensional parametric
surfaces in $\rrr$ to maintain the dual representations. (Arrangements
embedded on two-dimensional parametric surfaces are subdivisions of the
surface, as induced by curves embedded on the surface. They play a central
role in this thesis; see Chapter~\ref{chap:aos} for a formal definition and
Figure~\ref{fig:arr} for an illustration.) Each method makes use of many
operations applied to arrangements in the corresponding representations.

We present a tight bound on the exact maximum complexity of Minkowski
sums of polytopes in $\rrr$~\cite{fhw-emcms-07}. In particular, we prove
that the maximum number of facets of the Minkowski sum of $k$ polytopes
with $m_1,m_2,\ldots,m_k$ facets respectively is bounded from above by
$\sum_{1 \leq i < j \leq k}(2m_i - 5)(2m_j - 5) + 
\sum_{1 \leq i \leq k}m_i + \binom{k}{2}$. Given $k$ positive integers
$m_1,m_2,\ldots,m_k$, we describe how to construct $k$ polytopes with
corresponding number of facets, such that the number of facets of their
Minkowski sum is exactly $\sum_{1 \leq i < j \leq k}(2m_i - 5)(2m_j - 5) + 
\sum_{1 \leq i \leq k}m_i + \binom{k}{2}$. When $k = 2$, for example, the
expression above reduces to $4m_1m_2 - 9m_1 - 9m_2 + 26$.

We also present an exact implementation of an efficient algorithm that
partitions an assembly of polyhedra in $\rrr$ with two hands using
infinite translations~\cite{fh-papit-08}. This application makes extensive
use of Minkowski-sum constructions and other operations on arrangements of
arcs of great circles, also known as \Index{geodesic} arcs, embedded on
the sphere, such as polyhedra central-projection and Boolean
set-operations of point sets embedded on the sphere bounded by geodesic
arcs. The assembly partitioning demonstrates the importance of exact
computation, as imprecise computation might result with dismissal of valid
partitioning-motions.

Both methods that construct Minkowski sums and the related applications
are implemented on top of the Computational Geometry Algorithms Library
(\cgal{})~\cite{ft-gpcl-07}, and are mainly based on the arrangement
package of the
library~\cite{fwh-cfpeg-04,wfzh-aptac-07,fhktww-a-07,bfhmw-scmtd-07,bfhmw-apsgf-09},
which supports arrangements embedded on two-dimensional parametric
surfaces and operations on them. We have redesigned, re-implemented, and
significantly extended the package exploiting advanced programming
techniques to yield a package that is easy to use, to extend, and to adapt
to a variety of applications.

The package contains a general framework for computing the zone of a
single curve embedded on a two-dimensional parametric surface and a
general framework for sweeping a set of such curves. The former
framework is used, for example, to insert curves one at a time into the
arrangement. The latter framework is used, for example, to compute the
arrangement induced by a collection of curves, and to compute the overlay
of two arrangements. Other operations, such as point location and vertical
ray shooting, are supported as well.

The design dictates the separation between the topological and geometric
aspects of the two-dimensional subdivision. This separation is
advantageous, as it allows users to employ the package with their own
representation of any special family of curves. They must however
supply a relevant component called
{\em geometry traits}\index{traits!geometry} class that handles the specific
family of curves  they are interested in, which mainly involves algebraic
computation. The separation is enabled by a modular design and conveniently
implemented within the {\em generic-programming}\index{generic programming}
paradigm. The separation is a key aspect of the package, as well as of
other central \cgal{} components, such as the various triangulation
packages~\cite{bdty-tcgal-00} and convex-hull algorithms
(see~\cite{cgal:eb-07} for more details), has been forced since its early
stages, and heightened by our new design.

The package comes with many geometric traits classes that handle all
kinds of curves organized in a structured hierarchy. In particular, we
mention the geometric traits class that handles continuous piecewise
linear curves, referred to as polylines\index{polyline} and the geometric
traits class that handles geodesic arcs embedded on the
sphere~\cite{fsh-agas-08,fsh-eiaga-08,bfhks-apsca-09}.
The former are of particular interest, as they can be used to approximate
more complex curves in the plane. At the same time they are easier to
deal with in comparison to higher-degree algebraic curves, as rational
arithmetic is sufficient to carry out exact computations on polylines.
The latter is broadly used by the assembly-partitioning application and
by the second method that constructs Minkowski sums.

The implementation of the package as well as the implementation our
Minkowski-sum constructions, collision detection, and assembly partitioning
applications handle degenerate input and produce exact results, as long as
the underlying \Index{number type} supports the arithmetic operations
$+$, $-$, $*$, and $/$ in unlimited precision over the
rationals,\footnote{Commonly referred to as a {\em field} number
type\index{number type!field}.} such as the rational number type
{\tt CGAL::Gmpq} based on \gmp{} --- Gnu's Multi Precision
library~\citelinks{gmp}.

\section{Background: Minkowski Sums}
\label{sec:intro:ms-bg}
Minkowski sums are closely related to proximity queries~\cite{cpq-lm-04}.
For example, the minimum separation distance between two polytopes $P$
and $Q$ is equal to the minimum distance between the origin and the
boundary of the Minkowski sum of $P$ and the reflection of $Q$ through
the origin~\cite{cc-dmtdb-86}. Computing Minkowski sums, collision
detection and proximity calculation constitute fundamental tasks in
computational geometry~\cite{hkl-r-04,cpq-lm-04,s-amp-04}. These
operations are ubiquitous in robotics, solid modeling, design
automation, manufacturing, assembly planning, virtual prototyping, and
many more domains; see,
e.g.,~\cite{blt-mosal-97,eor-cmsdm-92,KAUL91,l-rmp-91,vksm-sacmp}. 

The wide spectrum of ideas expressed in the massive amount of literature
published about the subject during the last three decades has inspired
the development of quite a few useful solutions. For an extensive overview
about the subject and a comprehensive list of packages see~\cite{cpq-lm-04}.
However, only recent advances in the implementation of
computational-geometry algorithms and data structures made our exact,
robust, and efficient implementation possible. The exact
geometric-computation paradigm~\cite{y-rgc-04} designed for 
implementing computational geometry algorithms particularly prevails
in \cgal{}. While in general the underlying arithmetic is highly time
consuming compared to machine floating-point arithmetic, major efficiency
is gained by computing predicates only to sufficient precision to evaluate
them correctly; see Section~\ref{ssec:intro:geometric-programming} and
e.g.,~\cite{pf-glese-06}.

Various methods to compute the Minkowski sum of two polyhedra in $\rrr$
have been proposed. The goal is typically to compute the boundary of the
sum provided in some representation. The combinatorial complexity
of the Minkowski sum of two polyhedra of $m$ and $n$ features respectively
can be as high as $\Theta(m^3n^3)$. One common approach to compute it is
to decompose each polyhedron into convex pieces, compute pairwise Minkowski
sums of pieces of the two, and finally the union of the pairwise sums.
Computing the exact Minkowski sum of non-convex polyhedra is naturally
expensive. Therefore, researchers have focused on computing an
approximation that satisfies some criteria, such as the algorithm
presented by Varadhan and Manocha~\cite{vm-amapm-06}. They guarantee a
two-sides Hausdorff distance bound on the approximation, and ensure that
it has the same number of connected components as the exact Minkowski sum.
Computing the Minkowski sum of two convex polyhedra remains a key
operation, and this is what we focus on. The combinatorial complexity of
the sum can be as high as $\Theta(mn)$ when both polyhedra are convex.
For the complexity of the intermediate case, where only one polyhedron
is convex, cf.~\cite{as-tmpcp-97,s-amp-04}.

Convex decomposition is not always possible, as in the presence of
non-convex curved objects. In these cases other techniques must be
applied, such as approximations using polynomial/rational curves in
2D~\cite{lke-prams-98}. Seong at al.~\cite{jmk-mstss-02} proposed an
algorithm to compute Minkowski sums of a subclass of objects; that is,
surfaces generated by slope-monotone closed curves. Flato and
Halperin~\cite{afh-pdecm-02} presented algorithms based on \cgal{} for
robust construction of planar Minkowski sums. While the citations in this
paragraph refer to computations of Minkowski sums of non-convex polyhedra,
and we concentrate on the convex cases, the latter is of particular
interest, as our method makes heavy use of the same software components,
in particular the \cgal{} arrangement package, which went through a few
phases of
improvements~\cite{fwh-cfpeg-04,wfzh-aptac-07,bfhmw-scmtd-07,bfhmw-apsgf-09}
since its usage in~\cite{afh-pdecm-02}; see
Section~\ref{sssec:intro:cgal:arr-history} for more details.

A particular accomplishment of the {\em \Index{kinetic framework}} in two
dimensions introduced by Guibas \etal~\cite{grs-kfcg-83} was the
definition of the {\em convolution} operation in two dimensions, a
superset of the Minkowski sum operation, and its exploitation in a
variety of algorithmic problems. Basch \etal\ extended its predecessor
concepts and presented an algorithm to compute the convolution in three
dimensions~\cite{bgrr-pttc-96}. An output-sensitive
algorithm\index{algorithm!output sensitive} for computing Minkowski
sums of polytopes was introduced in~\cite{gs-ccrs-87}.
Gritzmann and Sturmfels~\cite{gs-mapca-93} obtained a polynomial time
algorithm in the input and output sizes for computing Minkowski sums of
$k$ polytopes in $\mathbb{R}^d$ for a fixed dimension $d$, and
Fukuda~\cite{f-zcmac-04} provided an output-sensitive polynomial algorithm
for variable $d$ and $k$. Ghosh~\cite{p-ucfms-93} presented a unified
algorithm for computing 2D and 3D Minkowski sums of both convex and
non-convex polyhedra based on a {\em slope diagram}\index{diagram!slope}
representation. Computing the Minkowski sum amounts to computing the
slope diagrams of the two objects, {\em merging them} (see details in
Section~\ref{ssec:mscn:sgm:mink_sum},) and extracting the boundary of
the Minkowski sum from the merged diagram.
Wu \etal~\cite{wsd-iacms-03} introduced an improved version of Ghosh'
algorithm for convex polyhedra using vector operations. Bekker and
Roerdink~\cite{654217} provided a variation on the same idea. The
slope diagram of a 3D convex polyhedron can be represented as a 2D
object, essentially reducing the problem to a lower dimension. We
follow the same approach, but use exact computation.

We postpone a formal definition of arrangements to the next chapter.
In fact, we temporary put Minkowski sums and arrangements aside to
provide relevant programming background material.

\section{Background: Programming}
\label{sec:programming-bg}
\subsection{Generic Programming}
\label{ssec:intro:generic-programming}
Several definitions of the term {\em \Index{generic programming}}
have been proposed since it was first coined around the early
sixties, along with the introduction of the \lisp\index{LISP@\lisp}
programming language. Here we confine ourself to the classic notion
first described by Musser \etal~\cite{ms-gp-88}, who considered generic
programming as a discipline that consists of the gradual lifting of
concrete algorithms abstracting over details, while retaining the
algorithm semantics and efficiency. Within this context, several
approaches have been put into trial through the introduction of new
features in existing computer languages, or even new computer
languages all together. The software described in this thesis is
written in \Cpp, a programming language that is well equipped for
writing software according to the generic-programming paradigm
through the extensive use of class templates and function templates.

\subsubsection{Concepts and Models}
\label{sssec:intro:gp:concepts}
One crucial abstraction supported by all contemporary computer
languages is the subroutine (also known as procedure or function,
depending on the programming language). Another abstraction
supported by \Cpp{} is that of abstract data typing, where a new data
type is defined together with its basic operations. \Cpp{} also
supports \Index{object-oriented programming}, which emphasizes packaging
data and functionality together into units within a running program,
and is manifested in hierarchies of polymorphic data-types related
by inheritance. It allows referring to a value and manipulating it
without needing to specify its exact type. As a consequence, one can
write a single function that operates on a number of types within an
inheritance hierarchy. Generic programming\index{generic programming}
identifies a more powerful abstraction (perhaps less tangible than
other abstractions). It is a formal hierarchy of polymorphic abstract
requirements on data types referred to as {\em concepts}\index{concept},
and a set of classes that conform precisely to the specified
requirements, referred to as {\em models}\index{model}. Models that
describe behaviors are referred to as {\em \Index{traits}}
classes~\cite{m-tnutt-97}. Traits classes typically add a level of
indirection in template instantiation to avoid accreting parameters
to templates.

A generic algorithm has two parts: The actual instructions that
describe the steps of the algorithm, and a set of requirements that
specify which properties its argument types must satisfy. The
following \ccode{swap()} function is an example of the first part of a
generic algorithm.

\noindent\begin{tabularx}{\textwidth}{l@{}p{2ex}X}
  \hline
  & \multicolumn{2}{l}{\ccode{template <typename T> void swap(T \& a, T \& b) \{}}\\
  & & \ccode{T tmp = a; a = b; b = tmp;}\\
  & \multicolumn{2}{l}{\ccode{\}}}\\
  \hline
\end{tabularx}

\noindent
When the function call is compiled, it is instantiated with a data
type that must have an assignment operator. A data type that fulfils
this requirement is a model of a concept commonly called
\concept{Assignable}\index{Assignable@\concept{Assignable}}~\cite{a-gps-99}.
The \ccode{int} data type, for example, is a model of this concept, so
it can be used to instantiate the function
template~\cite{a-gps-99}~\citelinks{stl}.

A concept is a set of requirements divided into four categories, namely,
associated types, valid expressions, invariants, and complexity guarantees.
When a type meets all requirements of a concept, the type is considered a
{\em \Index{model}} of the \Index{concept}. When a concept extends the
requirements of another concept, the former is said to be a
{\em \Index{refinement}} of the latter.
\begin{description}
\item[Associated Types] are auxiliary types. For example, a type that
  represents a two-dimensional point, namely \ccode{Point\_2}, is required
  by the arrangement geometry traits concept; see
  Section~\ref{sec:aos:geometry-traits}.
\item[Valid Expressions] are \Cpp{} expressions that must compile
  successfully. For example, \ccode{p = q}, where \ccode{p}
  and \ccode{q} are both objects of type \ccode{Point\_2}. Valid
  expressions identify the set of operations a model of the concept
  must be able to perform.
\item[Invariants] are run-time characteristics such as time and space
  complexity bounds. In our context invariants typically take the form
  of preconditions and postconditions, which must always be satisfied.
  For example, a condition that requires that an input point $p$ lies
  on an input curve $c$ on invocation to a predicate that accepts both
  $p$ and $c$ as parameters. Having preconditions typically minimizes
  the concept, as the operations provided by a model must operate
  only on restricted arguments. Formally, removing preconditions from,
  and introducing postconditions to, a requirement set results with a
  refined concept.
\item[Complexity Guarantees] are maximum limits on the computing
  resources consumed by the various expressions.
\end{description}

\subsubsection{Traits Classes}
\label{sssec:intro:generic-programming:traits}
The name ``traits class'' comes from a standard \Cpp{} design
pattern~\cite{m-tnutt-97}, which provides a way of associating
information with a compile-time entity (typically a type). For example,
the standard class-template \ccode{std::iterator\_traits<T>} looks
roughly like this:

\noindent\begin{tabularx}{\textwidth}{l@{}p{2ex}X}
  \hline
  & \multicolumn{2}{l}{\ccode{template <typename Iterator> struct iterator\_traits \{}}\\
  & & \ccode{typedef ...\ iterator\_category;}\\
  & & \ccode{typedef ...\ value\_type;}\\
  & & \ccode{typedef ...\ difference\_type;}\\
  & & \ccode{typedef ...\ pointer;}\\
  & & \ccode{typedef ...\ reference;}\\
  & \multicolumn{2}{l}{\};}\\
  \hline
\end{tabularx}

\noindent
Iterators play an important role in \Index{generic programming}: A function
that operates on a range of objects usually accepts two iterators that
specify this range. The traits' \ccode{value\_type} specifies the type of
object the iterators are pointing at, while the \ccode{iterator\_category}
can be used to select more efficient algorithms depending on the
iterator's capabilities.

A key property of trait classes is that they are non-intrusive. Namely,
they allow us to associate information with arbitrary types, without
interfering with the internal representation of those types. Thus, it is
possible to define a traits class also for built-in types and types
defined in third-party libraries.

Within the context of \cgal, for example, a typical traits class is
required to define nested types of geometric objects and support
predicates involving objects of these types. Some algorithms also require
the provision of constructions by the traits class.

Let us continue with an easy geometric example. Consider a function that
accepts a set of points, given by the range
\ccode{[pts\_begin, pts\_end)},\footnote{This notation means that
\ccode{pts\_begin} points to the first point in the range, while
\ccode{pts\_end} points after the range ends (it is therefore called a
{\em past-the-end iterator}, and need not point to any valid point
object).} and computes the minimal axis-parallel rectangle that
contains all points in the range. It does so by locating the points
with extremal $x$ and $y$-coordinates, and then constructs the bounding
iso-rectangle accordingly:

\noindent\begin{tabularx}{\textwidth}{l@{}p{2ex}p{2ex}p{2ex}X}
  \hline
  & \multicolumn{4}{l}{\ccode{template <typename InputIterator, typename Traits>}}\\
  & \multicolumn{4}{l}{\ccode{typename Traits::Iso\_rectangle\_2}}\\
  & \multicolumn{4}{l}{\ccode{bounding\_rectangle(InputIterator pts\_begin, InputIterator pts\_end) \{}}\\
  & & \multicolumn{3}{l}{\ccode{Traits traits;}}\\
  & & \multicolumn{3}{l}{\ccode{InputIterator curr = pts\_begin;}}\\
  & & \multicolumn{3}{l}{\ccode{InputIterator left, right, top, bottom;}}\\
  & & \multicolumn{3}{l}{\ccode{left = right = top = bottom = curr++;}}\\
  & & \multicolumn{3}{l}{\ccode{while (curr++ != pts\_end) \{}}\\
  & & & \multicolumn{2}{l}{\ccode{if (traits.compare\_x(*curr, *left) == SMALLER) left = curr;}}\\
  & & & \multicolumn{2}{l}{\ccode{else if (traits.compare\_x(*curr, *right) == LARGER) right = curr;}}\\
  & & & \multicolumn{2}{l}{\ccode{if (traits.compare\_y(*curr, *bottom) == SMALLER) bottom = curr;}}\\
  & & & \multicolumn{2}{l}{\ccode{else if (traits.compare\_y(*curr, *top) == LARGER) top = curr;}}\\
  & & \multicolumn{3}{l}{\ccode{\}}}\\
  & & \multicolumn{3}{l}{\ccode{return traits.construct\_iso\_rectangle(*left, *right, *bottom, *top);}}\\
  & \multicolumn{4}{l}{\ccode{\}}}\\
  \hline
\end{tabularx}

The requirements that an instantiated traits class must satisfy in this case
are as follows: It has to defined the nested type \ccode{Iso\_rectangle\_2}
(and implicitly also a point type say \ccode{Point\_2}).
Moreover, it should supply two three-valued predicates\footnote{The
predicate return value is \ccode{SMALLER}, \ccode{EQUAL}, or
\ccode{LARGER}.} that compare two points by their $x$-coordinates
and by their $y$-coordinates, respectively. It should also support the
construction of an axis-parallel iso-rectangle from four points that
specify its extremal $x$ and $y$-coordinates. Note, however, that the
actual representation of points and rectangles (the coordinate system,
the number-type used to represent the coordinates, etc.)
and the implementation of the traits-class operations is entirely
decoupled from the function \ccode{bounding\_rectangle()} we have
introduced.

Consider an imaginary generic implementation of a data structure that
handles geometric arrangements embedded on two-dimensional parametric
surfaces in space called \aos. Its prototype is listed
below. This template class must be instantiated with a traits class
that in turn defines a type that represents a certain family of
curves, and some functions (or function
objects\index{function object}; see~\citelinks{function-object} for an exact
definition) that operate on curves of this family.

\noindent
\begin{tabularx}{\textwidth}{l@{}X}
  \hline
  & \ccode{template <typename Traits> class Arrangement\_on\_surface\_2 \{ ...\ \};}\\
  \hline
\end{tabularx}

\noindent
Traits classes that handle families of curves embedded on parametric
surfaces are intricate, as they model involved concepts. The precise
definitions of these concepts and their refinement hierarchy are
described in Section~\ref{sec:aos:geometry-traits}.

One important objective is to minimize the set of requirements the
traits concept imposes. A tight traits concept may save tremendously in
analysis and programming of classes that model the concept. Another
important reason for striving for the minimal set of requirements is
to avoid computing the same algebraic entity in different ways. The
importance of this is amplified in the context of computational
geometry, as a non tight model that consists of duplicate, but
slightly different, implementations of the same algebraic entity, can
lead to artificial degenerate conditions, which in turn can
drastically impair the performance.

Most traits classes in \cgal{} are parameterized by a model of the
\concept{Kernel} concept. The choice of a particular model determines,
among the other, the type of arithmetic used, as explained in the
following sections. One can easily switch between different models of the
\concept{Kernel} concept, but here lies a trap, as
Section~\ref{ssec:intro:geometric-programming} reveals. A kernel model
that supports exact arithmetic must be used to ensure robustness,
although inexact arithmetic could be used at a certain risk.

\subsubsection{Libraries}
\label{sssec:intro:generic-programming:lib}
Alexander Stepanov began exploring the potential of compile-time
polymorphism for revolutionizing software development in 1979. With
the help of several other researchers his work evolved into a prime
generic-programming library --- the Standard Template Library (\stl).
This library had became part of the \Cpp{} standard library in 1994,
approximately one year before early development of \cgal{} started; see
Section~\ref{sssec:intro:cgal:chronicles} for details about the evolution
of \cgal.

Through the years a few other generic-programming libraries emerged.
One notable library in the context of computational geometry is
\leda\index{LEDA@\leda} (Library of Efficient Data Types and
Algorithms), a library of combinatorial and geometric data types and
algorithms~\cite{mn-lpcgc-00}~\citelinks{leda}. Early development
of \leda{} started in 1988, ten years before the first public release
of \cgal{} became available. In some sense \leda{} is a predecessor
of \cgal{}, although the two libraries are headed in different
directions. While \leda{} is mostly a large collection of fundamental graph
related and general purpose data-structures and algorithms, \cgal{} is a
collection of large and complex data-structures and algorithms focusing on
geometry. 

A noticeable influence on \Index{generic programming} is conducted by the
\boost{} online community, which encourages the development of free
\Cpp{} software gathered in the \boost{} library
collection~\citelinks{boost}. It is a large set of
portable and high quality \Cpp{} libraries that work well with, and are
in the same spirit as, the \Cpp{} Standard Template Library. The
\boost{} Graph Library (\bgl)~\cite{sll-bgl-02}, which consists of
generic graph-algorithms, serves a particularly important role in our
context. An arrangement instance, for example, can be adapted as a
\bgl{} graph, and passed as input to generic algorithms already
implemented in the \bgl, such as the Dijkstra shortest path algorithm.
We use the \bgl{} to compute the strongly connected components of a
directed graph --- a phase in the assembly partitioning operation; see
Section~\ref{ssec:assem_plan:processing} for more details about the
application of this operation.

\subsection{Geometric Programming}
\label{ssec:intro:geometric-programming}
Implementing geometric algorithms and data structures is notoriously
difficult, much harder than may seem when just considering the
algorithm as described in a paper or a book.
In the traditional computational-geometry literature two assumptions
are usually made to simplify the design and analysis of geometric
algorithms. First, inputs are in ``general position''. That is,
degenerate input (e.g., three curves intersecting at a common point)
is precluded. Secondly, operations on real numbers yield accurate
results (the ``real {\sc Ram}'' model~\cite{ps-cgi-85}, which also
assumes that each basic operation takes constant time).
Unfortunately, these assumptions do not hold in practice, as
degenerate input is commonplace in practical applications and
numerical errors are inevitable. Thus, an algorithm implemented
without keeping this in mind may yield incorrect results, get into
an infinite loop, or just crash, while running on a degenerate, or
nearly degenerate, input (see~\cite{kmpsy-cerpg-08,s-rpigc-00} for
examples). These pitfalls have become well known, and have been the
subject of intensive research~\cite{s-rpigc-00,y-rgc-04}.

Indeed, the last decade has seen significant progress in the
development of software for computational geometry. The mission of
such a task, which Kettner and N{\"a}her~\cite{kn-tcgll-04} call
{\em geometric programming}, is to develop software that is correct,
efficient, flexible (namely adaptable and extensible\footnote{{\em
Adaptability\/} refers to the ability to incorporate existing user
code, and {\em extendibility\/} refers to the ability to enhance the
software with more code in the same style.}), and easy to use.

\subsubsection{Separation of Topology and Geometry}
\label{sssec:intro:geometric-programming:sep}
The use of the generic-programming\index{generic programming}
paradigm enables a convenient separation between the topology and the
geometry of data structures.\footnote{In this context, we sometimes say
{\em combinatorics} instead of topology, and say {\em algebra} or
{\em numerics} instead of geometry. We always mean the same thing: The
separation between the abstract, graph-like structure (the topology)
from the actual embedding on the surface (the geometry).} This is a key
aspect in the design of geometric software, and is put into practice,
for example, in the design of \cgal{} polyhedra, \cgal{} triangulations,
and our \cgal{} arrangements. This separation allows the convenient
abstraction of algorithms and data structures in combinatorial and
topological terms, regardless of the specific geometry of the objects at
hand and the algebra used to represent them. This abstraction is realized
through class and function templates that represent specific
data-structures and algorithmic frameworks, respectively. Consider again
our imaginary \aos{} class template from the previous section;
its improved prototype is listed below. It is instantiated with two
classes. The first, referred to as a {\em geometric traits} class,
defines the set of geometric-object types and the operations on objects
of these types. The second, defines the topological-object types and the
operations required to maintain the incident relations among objects of
these types.

\noindent
\begin{tabularx}{\textwidth}{l@{}X}
  \hline
  & \ccode{template <typename Geometry\_traits, typename Topology\_traits>}\\
  & \ccode{class Arrangement\_on\_surface\_2 \{ ... \};}\\
  \hline
\end{tabularx}

An immediate advantage of the separation between the topology and
the geometry of data structures is that users with limited expertise
in computational geometry can employ the data structure with their
own special type of objects. They must however supply the relevant
traits class, which mainly involve algebraic computations. A traits
class also encapsulates the number types used to represent
coordinates of geometric objects and to carry out algebraic
operations on them. It encapsulates the type of coordinate system
used (e.g., Cartesian, Homogeneous), and the geometric or algebraic
computation methods themselves. Naturally, a prospective user of the
package that develops a traits class would like to face as few
requirements as possible in terms of traits development.

Another advantage gained by the use of generic programming is the
convenient handling of numerical issues to expedite exact geometric
computation. We arrive at this conclusion at the end of the
next section. In a geometric algorithm each computational step is
either a construction step or a conditional step based on the result
of a predicate. The former produces a new geometric object such as
the intersection point of two segments. The latter typically
computes the sign of an expression used by the program control.
Different computational paths lead to results with different
combinatorial characteristics. Although numerical errors can
sometimes be tolerated and interpreted as small perturbations in the
input, they may lead to invalid combinatorial structures or
inconsistent state during a program execution. Thus, it suffices to
ensure that all predicates are evaluated correctly to eliminate
inconsistencies and guarantee combinatorially correct results.
This is easier said than done, but nowadays possible, as the next
section exposes.

\subsubsection{Exact Geometric Computation}
\label{sssec:intro:geometric-programming:egc}
The need for robust software implementations of computational-geometry
algorithms has driven many researchers over the last decade to develop
variants of the classic algorithms that are less susceptible to
degenerate inputs. The approaches taken to overcome the difficulties
in robustly implementing geometric algorithms can be roughly divided
into two categories: (i)~Exact computing, and (ii)~fixed-precision
approximation. In the latter approach the algorithms are modified so
that they can consistently cope with the limited precision of
computer arithmetic. In the former, which is the approach taken by
\cgal{} in general and the arrangement package in particular, ideal
computer arithmetic is emulated for the specific type of objects
being manipulated, and the code is prepared for successfully handling
degenerate input.

Advances in computer algebra enabled the development of efficient
software libraries that offer exact arithmetic manipulations on
unbounded integers, rational numbers (\gmp{} --- Gnu's multi-precision
library~\citelinks{gmp}), and algebraic numbers (the \core{}
library~\cite{klpy-clrngc-99}~\citelinks{core} and the numerical
facilities of \leda~\cite[Chapter~4]{mn-lpcgc-00}~\citelinks{leda}).
These exact-number types serve as fundamental building-blocks in the
robust implementation of many geometric applications in general
(see~\cite{y-rgc-04} for a review) and of those that employ
arrangements in particular.

Exact Geometric Computation (EGC), as summarized by
Yap~\cite{y-rgc-04}, simply amounts to ensuring that we never
err in predicate evaluations. EGC represents a significant
relaxation from the naive concept of numerical exactness. We only
need to compute to sufficient precision to make the correct
predicate evaluation. This has led to the development of several
techniques such as precision-driven computation, lazy evaluation,
adaptive computation, and floating-point filters, some of which are
implemented in \cgal, such as numerical filtering. Here, computation
is carried out using a number type that supports only inexact
arithmetic (e.g., double-precision floating-point arithmetic), while
applying a filter that checks whether the computation has reached a
stage of uncertainty, an event referred to as a {\em filter
failure} in the hacker's jargon. When a filter failure occurs, the
computation is re-done using exact arithmetic.

Switching between number types and exact computation techniques, and
choosing the appropriate components that best suit the application
needs, is conveniently enabled through the generic-programming
paradigm, as it typically requires only a minor code change reflected
in the instantiating of just a few data types.

\subsection{Computational Geometry Algorithms Library}
\label{ssec:intro:cgal}
\subsubsection{\protect{\cgal} Chronicles}
\label{sssec:intro:cgal:chronicles}
Several research groups in Europe started to develop small geometry
libraries on their own in the early 1990s. A consortium of several sites
in Europe and Israel was founded in 1995 to cultivate the labor
of these groups and gather their produce in a common library called
\cgal{} --- the Computational Geometry Algorithms Library. 
The goal was to promote the research in computational geometry and
translate the results into useful, reliable, and efficient programs for
industrial and academic
applications~\cite{o-dcgal-96,cgal:eb-07,kn-tcgll-04,fgkss-dccga-00},
the very same goal that governs \cgal{} development efforts to date.

An INRIA startup, \gf~\citelinks{gf}
was founded in January 2003. The company sells \cgal{} commercial
licenses, support for \cgal, and customized developments based on \cgal.

In November 2003, when Version~3.0 was released, \cgal{} became an Open
Source\linebreak
Project~\citelinks{cgal}, allowing new
contributions from various resources. Most parts of \cgal{} are now
distributed under the GNU Lesser General Public License (GNU LGPL).

\cgal{} has evolved through the years and is now representing the
state-of-art in implementing computational geometry software in many
areas. The implementations of the \cgal{} software modules described in
this thesis are complete and robust, as they handle all degenerate cases.
They rigorously adhere to the
generic-programming\index{generic programming} paradigm to
overcome problems encountered when effective computational geometry
software is implemented. A glimpse at the structure of \cgal{} is
given in the following subsection.

\subsubsection{\protect{\cgal} Content}
\label{sssec:intro:cgal:content}
\cgal{} is written in \Cpp{} according to the
generic-programming\index{generic programming} paradigm described above.
It has a common programming style, which is very similar to that of the
\stl. Its \Index{application programming-interface} (API) is homogeneous,
and allows for a convenient and
consistent interfacing with other software packages and applications.
The library consists of about 900,000 lines of code divided among
approximately 4,000 files. \cgal{} also comes with numerous examples
and demos. The manual comprises about 3,500 pages. There are
approximately 65 chapters arranged in 14 parts. The \aos{} package,
for example, consists of about 140,000 lines of code divided among
approximately 300 files, described in about 300 pages of a didactic manual.

One distinguished piece consists of the geometric
kernels~\cite{fgkss-dccga-00}. A geometric kernel consist
of types of constant size non-modifiable geometric primitive objects
(e.g., points, lines, triangles, circles, etc.) and operations on
objects of these types.

Another distinguished piece, referred to as the ``Support Library''
consists of non-geometric facilities, such as circulators, random
generators, and I/O support for interfacing \cgal{} with various
visualization tools (i.e., input and output streams). An important
contribution of this piece is the number-type support. This piece
also contains extensive debugging utilities that handle warnings
and errors that may result from unfulfilled conditions.

The rest of the library offers a collection of geometric data
structures and algorithms such as convex hull, polygons and
polyhedra and operations on them (Boolean operations, polygon
offsetting), 2D and 3D triangulations, Voronoi diagrams, surface
meshing and surface subdivision, search structures, geometric
optimization, interpolation, and kinetic data-structures. The 2D
arrangements and its related data-structures naturally fit in. These
data structures and algorithms are parameterized by traits classes
that define the interface between them and the primitives they use.
In many cases, a kernel can be used as a traits class, or at least
the subtypes of a kernel can be used as components of traits classes
for these data structures and algorithms.

\subsubsection{\cgal{} Arrangement Package History}
\label{sssec:intro:cgal:arr-history}
\cgal{} contains an elaborate and efficient implementation of a generic
data-structure that represents an arrangement induced by general types
of curves embedded on a two-dimensional parametric surface in $\rrr$,
but it has not been like this from the beginning. The first version of
the \cgal{} arrangement package supported only line segments,
circular arcs, and restricted types of parabolas embedded in the plane.

While the first version supported only limited types of curves, it was
originally designed with the vision of supporting general
curves~\cite{fhhne-dipmc-00,h-dipac-00,hh-tdaca-00}. This vision was
reflected, among the other, through the separation between the
topological and the geometric aspects enabled by the generic-programming
paradigm (see Section~\ref{sssec:intro:geometric-programming:sep}).
Most of the principles related to the topology, e.g., the use of a
{\em doubly-connected edge list} (\dcel\index{DCEL@\dcel}) to maintain
the incident relations between the arrangement cells (i.e., vertices,
halfedges, and faces) were conceived from the start.
However, the types of curves that induce arrangements (see
Section~\ref{sec:aos:geometry-traits}) gradually expanded. A couple
of years after the introduction of \cgal{} arrangement Wein extended its
implementation to support arcs of conics~\cite{w-hlfac-02}. The
arsenal of geometric traits continues to grow to date. 

Following the requirements that emerged from the ECG project\footnote{ECG
is a Shared-Cost RTD (FET Open) Project of the European Union devoted
to Effective Computational Geometry for curves and
surfaces~\citelinks{ecg}.}, together with Wein we improved and refined
the software design of the \cgal{} arrangement
package~\cite{fwh-cfpeg-04}. This new design formed a common platform
for a preliminary comparison between different approaches to handle
arcs of conics~\cite{ecg:fhw-ecsca-04}. Two years later in a joint
effort with Wein and Zukerman the whole package was
revamped~\cite{wfzh-aptac-05} leading to more compact, easier-to-use,
and efficient code.

\cgal{} Version~3.2 released in~2006 included an arrangement package that
supported only bounded curves in the plane. This forced users to clip
unbounded curves before inserting them into the arrangement; it was the
user responsibility to clip without loss of information. However, this
solution is generally inconvenient and outright insufficient for some
applications. For example, representing the minimization diagram
\index{diagram!minimization} defined by the lower envelope of unbounded
surfaces in $\rrr$~\cite{m-rgece-06} generally requires more than one
unbounded face, whereas an arrangement of bounded clipped curves contains
a single unbounded face.

\cgal{} Version~3.3 released a year later in~2007 already included an
arrangement package that handled unbounded planer curves. As a matter of
fact it included much more. Together with Berberich, Melhorn, and Wein we
observed the possibility to maximize code reuse by generalizing the
various algorithms applied to arrangements and their implementations so
that they could be employed on a large class of surfaces and curves
embedded on them~\cite{bfhmw-scmtd-07,bfhmw-apsgf-09}. Indeed, the
algorithms and their implementations were designed with the vision of
supporting general curves embedded on parametric surfaces. However, only
a few geometric traits-classes that supported unbounded curves in the plane
were included in Version~3.3.

A future version of \cgal{}, expected to be released in~2010, is planned
to include an arrangement package that constructs, maintains, and operates
on arrangements embedded on certain two-dimensional orientable parametric
surfaces. The package already exists as a prototypical \cgal{} package under
the new name \aos{} to better reflect its capabilities. For example, it
includes
(i)~a geometric traits that handles geodesic arcs embedded on the
sphere~\cite{fsh-agas-08,bfhks-apsca-09}, (ii)~a geometric traits that
handles intersections between quadric surfaces embedded on a
quadric~\cite{bfhmw-scmtd-07,bfhks-apsca-09}, and (iii)~a geometric traits
that handles intersections between arbitrary algebraic surfaces and a
parameterized \Index{Dupin cyclide} embedded on the Dupin
cyclide~\cite{bk-eatdc-08,bfhks-apsca-09}. The references to the
arrangement software in this thesis in general and in Chapter~\ref{chap:aos}
in particular pertain to this latest version.

The leap in arrangement technology expressed by the ability to construct
and maintain arrangements embedded on two-dimensional parametric surfaces
immediately affects other components in \cgal{}, such as the \cBso{}
package. Only little effort is now required to support Boolean
set-operations on point sets bounded by general curves embedded on
two-dimensional parametric surfaces.

\section{Thesis Outline and Related Publications}
\label{outline}
The rest of this thesis is organized as follows. In
Chapter~\ref{chap:aos} we give an overview of the \cgal{} arrangement
package, which provides the common infrastructure for all software
solutions described in this thesis. This chapter contains selected
sections from several papers and from a book chapter we have co-authored,
and it provides new material that has not been published yet. In
particular, the chapter contains a description of the architecture of
the arrangement generic data-structure, part of which also appeared in
the chapter {\em Arrangement} of the book {\em Effective Computational
Geometry for Curves and Surfaces}~\cite{fhktww-a-07}. The chapter contains
a description of advanced programming techniques applied in the context of
arrangement, parts of which appeared in a joint work with R.\ Wein and
B.\ Zukererman, and published in the journal {\em Computational
Geometry --- Theory and Application} (special issue on
\cgal)~\cite{wfzh-aptac-07}. Preliminary results were first introduced in
(i) the proceedings of the {\em $12^{th}$ Annual European Symposium on
Algorithms (ESA)}~\cite{fwh-cfpeg-04} and (ii) the {\em $1^{st}$
Workshop of Library-Centric Software Design}~\cite{wfzh-aptac-05}. The
chapter also presents arrangements induced by general curves and embedded
on parametric surfaces, and it offers a detailed description of a
particular type of arrangement, namely arrangements of geodesic arcs on
the sphere. The presentation of general arrangements embedded on
parametric surfaces is based on a joint work with E.\ Berberich,
K.\ Melhorn, and R.\ Wein. Preliminary results of this work were first
published at the  {\em $23^{rd}$ European Workshop on Computational Geometry
(EWCG)}~\cite{bfhw-scmtd-07}. Improved results appeared in the proceeding
of the {\em $15^{th}$ Annual European Symposium on Algorithms
(ESA)}~\cite{bfhmw-scmtd-07}, and mature results were presented in a
manuscript~\cite{bfhmw-apsgf-09}. The discussion about particular
arrangements embedded on spheres is based on joint work with
O.\ Setter. Preliminary results of this work were first published at
the {\em $24^{th}$ European Workshop on Computational Geometry
(EWCG)}~\cite{fsh-eiaga-08}. A movie rendering the results of this work was
presented at the {\em $24^{th}$ Annual Symposium on Computational Geometry
(SoCG)}. The proceeding of this symposium contains the related
extended abstract~\cite{fsh-agas-08}. Mature results were presented in a
manuscript~\cite{bfhks-apsca-09}.

In Chapter~\ref{chap:mink-sums-construction} we present two different
complete, yet efficient, implementations of
output-sensitive algorithms\index{algorithm!output sensitive} to compute
the Minkowski sum of two polytopes in $\rrr$. We describe how the input
polytopes in polyhedral-mesh representation both methods accept
are converted into the corresponding internal representations unique to
each method. We provide the theoretical concepts both methods rely on,
and detailed descriptions specific to each method. The first method was
also published in the journal {\em Computer Aided
Design}~\cite{fh-eecms-07}. A preliminary version of this paper
appeared in the proceedings of the {\em $8^{th}$ Workshop on Algorithm
Engineering and Experimentation (Alenex'06)}~\cite{fh-eecms-06}. The
second method appeared in the
publications~\cite{fsh-eiaga-08,fsh-agas-08,bfhks-apsca-09}
mentioned above. We compare our Minkowski-sum constructions with other
methods that produce exact results, and provide a summary of a
performance analysis of our methods.

Chapter~\ref{chap:mink-sum-complexity} provides a tight bound on the
exact maximum complexity of Minkowski sums of $k$ polytopes in $\rrr$ in
terms of the number of facets of the polytope summands. It is based
on collaborative work with C.\ Weibel. The results of this work were introduced
in the proceedings of the {\em $23^{rd}$ annual Symposium on Computational
Geometry (SoCG)}~\cite{fhw-emcms-07}, and were accepted for publication
in the journal {\em Discrete and Computational Geometry}~\cite{fhw-emcms}.
We use this opportunity to thank
Shakhar Smorodinsky for fruitful discussions conducted while we were
investigating the bound above. The chapter provides a
proof of the upper bound, and establishes the lower bound through a
construction procedure.

Chapter~\ref{chap:assem_plan} introduces an exact implementation of an
efficient algorithm to obtain a partitioning motion given an assembly
of polyhedra in $\rrr$ --- a solution to a problem in the domain of
assembly planning. This application uses several types of operations
on arrangements of geodesic arcs embedded on the sphere as basic
blocks. In this context the chapter introduces exact 
implementations of additional applications that exploit geodesic arcs
embedded on the sphere, such as polyhedra central-projection and
Boolean set-operations applied to point sets embedded on the sphere
bounded by geodesic arcs. Great parts of this chapter are extracts
from a paper recently published in the {\em $8^{th}$
International Workshop on Algorithmic Foundations of Robotics
(WAFR)}~\cite{fh-papit-08}. Specific background of assembly planning
is provided at the beginning of the chapter.

We refer the reader to some ongoing research and future prospects and
conclude in Chapter~\ref{chap:conclusion}.

The software access-information along with some further design details
are provided in the Appendix.

\begin{savequote}[10pc]
\sffamily
A common mistake that people make when trying to design something
completely foolproof is to underestimate the ingenuity of complete fools.
\qauthor{Douglas Adams}
\end{savequote}
\chapter{Arrangements on Surfaces}
\label{chap:aos}
\newcommand{\cA}{\mathcal{A}}
\newcommand{\cC}{\mathcal{C}}
\newcommand{\cS}{\mathcal{S}}

\hyphenation{two-dimen-sional}
\begin{figure}[!b]
  \centerline{
    \begin{tabular}{ccc}
      \multicolumn{1}{p{124pt}}{
        \pspicture[](0,0)(4,4)
        \psset{unit=1cm,linewidth=1.0pt}
        \pscircle[linewidth=1pt](1,2){1}
        \pscircle[linewidth=1pt](2,1){1}
        \pscircle[linewidth=1pt](3,2){1}
        \pscircle[linewidth=1pt](2,3){1}
        \pscircle[linewidth=1pt](1.293,1.293){1}
        \pscircle[linewidth=1pt](2.707,1.293){1}
        \pscircle[linewidth=1pt](2.707,2.707){1}
        \pscircle[linewidth=1pt](1.293,2.707){1}
        \pscircle*[linecolor=red,linewidth=1.5pt](1.02,1.02){4pt}
        \psarc[linecolor=blue,linewidth=2pt](1.293,2.707){1}{181}{225}
        \pscustom[linewidth=0,fillstyle=solid,fillcolor=green]{
          \psarc(1,2){1}{45}{90}
          \psarc(1.293,2.707){0.97}{45}{90}
          \psarc(2,3){0.97}{135}{180}
          \psarc(2.707,2.707){1}{135}{180}
        }
        \psline[linecolor=green](1.04,3.02)(1.97,3.4)
        \endpspicture
      } &
      \multicolumn{1}{p{186pt}}{
        \pspicture[](0,0)(6,4)
        \psset{unit=1cm,linewidth=1.0pt}
        \psline[linewidth=1pt](1,0)(3.5,4)
        \psline[linewidth=1pt](3,0)(5,4)
        \psline[linewidth=1pt](0,1.8)(6,2.2)
        \psline[linewidth=1pt](0,3)(4.5,0)
        \psline[linewidth=1pt](0,3.5)(6,1.5)
        \psline[linewidth=1pt](2,4)(6,1)
        \psline[linewidth=1pt](0,2.5)(4.5,4)
        \pspolygon[linewidth=0.5pt](0,0)(6,0)(6,4)(0,4)
        \pscircle*[linecolor=red,linewidth=1.5pt](1.5,3){4pt}
        \psline[linecolor=blue,linewidth=2pt](3.02,3.23)(4.17,2.37)
        \pspolygon*[linecolor=green](2.05,1.65)(3.373,0.77)(4.015,2.05)(2.23,1.935)
        \endpspicture
      } &
      \includegraphics[height=120pt,keepaspectratio]{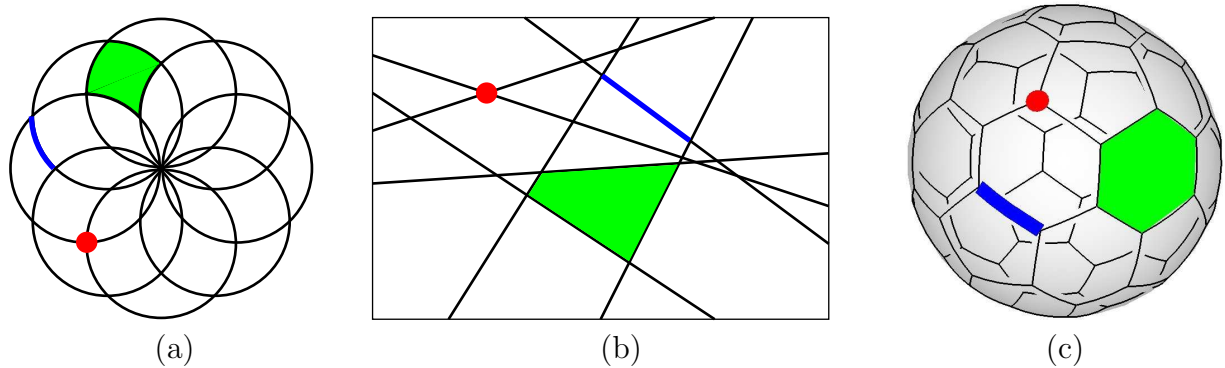}\\
      (a) & (b) & (c)
    \end{tabular}
  }
  \caption[Various types of arrangements]{\capStyle{(a) An arrangement
      of circles in the plane, (b) an arrangement of lines in the
      plane, and (c) an arrangement of geodesic arcs on the
      sphere.}}
  \label{fig:arr}
\end{figure}
Given a finite collection $\calC$ of geometric objects (such as
lines, planes, or spheres) the {\em \Index{arrangement}} $\calA(\calC)$
is the subdivision of the space where these objects reside into cells
as induced by the objects in $\calC$. In this thesis we deal only with
arrangements embedded on certain two-dimensional orientable parametric
surfaces in $\rrr$, i.e., planes, cylinders, spheres, tori, and surfaces
homeomorphic to them. In this case the objects in $\calC$ embedded on
the surface $S$ are curves that divide $S$ into a finite number of cells
of dimension $0$ ({\em vertices}), $1$ ({\em edges}) and $2$
({\em faces}). Figure~\ref{fig:arr} shows various types of arrangements
embedded on two-dimensional parametric surfaces.

The \aos{}\footnote{As a convention, \cgal{} prescribes the
suffix~{\tt \_2} for all data structures of planar objects and the
suffix~{\tt \_3} for all data structures of 3D objects. In the case of
arrangements on surfaces the suffix {\tt \_2} indicates the dimension
of the parameter space of the embedding surface.} package of
\cgal{}~\cite{cgal:wfzh-a2-07} is a
generic implementation of a complete software package that constructs
and maintains arrangements embedded on two-dimensional parametric
surfaces~\cite{fhktww-a-07,fwh-cfpeg-04,wfzh-aptac-07,bfhmw-scmtd-07,bfhmw-apsgf-09}
(see Section~\ref{sssec:intro:cgal:arr-history} for the history of the
package). As arrangements are ubiquitous in computational geometry and
have many theoretical and practical applications( see,
e.g.,~\cite{bkos-cgaa-00,as-aa-00,h-a-04}), many potential users in
academia and in industry can benefit from the \aos{} package.

As mentioned in Section~\ref{sssec:intro:geometric-programming:egc}
\cgal{} in general and the \aos{} package of \cgal{} in particular
follow the Exact Geometric Computation paradigm. The developed code
covers all cases to successfully handle degenerate input, while ideal 
computer arithmetic is emulated. While the \cgal{} package supports
arrangements induced by general algebraic curves of arbitrary degree,
the constructions and uses of arrangements described in this thesis
require only rational arithmetic. This is a key property that enables
efficient implementations of all the algorithms presented in the
thesis.

\cgal{} arrangements and their related components are the results of
ongoing collaborative research we were, and still are, deeply involved
with; see Section~\ref{sssec:intro:cgal:arr-history} for the evolution
of the arrangement package. We, in particular, had a significant
impact on specific topics within this research area as follows: We
came up with great parts of the geometry-traits concept hierarchy, and
the corresponding sets of minimal requirements (see
Section~\ref{sec:aos:geometry-traits}); we contributed the
geometry-traits module for polylines (see Section 
~\ref{ssec:aos:geometry-traits:models:polylines}) and the
geometry and topology traits modules for geodesic arcs embedded on the
sphere (see Section~\ref{sec:aos:geodesics}); we led the design of the
Boolean set-operation architecture (See
Section~\ref{ssec:aos:applications:bso}), and we provided solutions to
numerous issues encountered during the development of all these
components. This chapter provides a comprehensive overview of these
components, with a slight emphasis on asspects related to our work. 
It explains how arrangements induced by any type of curves
can be constructed, maintained, and used by other applications.
\section{Related Work}
\label{sec:aos:related-work}
Both closely related {\sc Mapc}~\cite{kcmk-eemapc-00}~\citelinks{mapc}
and {\sc Esolid}~\cite{ckfkm-esebe-04}~\citelinks{esolid} libraries
consist of an arrangement-construction module for
algebraic curves. However, these implementations make some
general-position assumptions on the input curves. The \leda{}
library~\cite{mn-lpcgc-00} includes geometric facilities that allow
the robust construction and maintenance of planar maps of line
segments that may contain degeneracies. However, the resulting
planar maps are represented as simple graphs that cannot fully
describe the topological structure of the arrangement. For example,
it is impossible to encode the containment relation between
disconnected components of the graph (i.e., to keep track of the
holes contained in a face; see
Section~\ref{ssec:aos:architecture:member-ops}). \leda-based
implementation of arrangements of conic curves and of cubic curves
were developed under the \exacus{}
project~\cite{behhk-exacus-05}~\citelinks{exacus}.

\cgal's arrangement package was the first complete software
implementation, designed for constructing arrangements of arbitrary
planar curves and supporting operations and queries on such
arrangements. The package was employed by many users to develop a
variety of applications in various domains. For example, it was used
to solve geometric optimization problems~\cite{r-vaasp-03,clkt-icncs-03},
to construct Minkowski sums\index{Minkowski sum}
efficiently~\cite{afh-pdecm-02,fh-eecms-07,fsh-agas-08},
to design snap-sounding algorithms~\cite{hp-isr-02}, to construct
envelopes of surfaces~\cite{m-rgece-06}. It was used in
\Index{motion planning}~\cite{hh-hymp-03,wbh-vvca-07},
\Index{assembly partitioning}~\cite{fh-papit-08},
\Index{cartography}~\cite{dhh-sesdtm-01}, and
several other applications~\citelinks{agilent, irit}.
The package was also used to compute arrangements of
quadrics~\cite{bhksw-eceic-05} by considering the planar arrangements
of their projected intersection curves. A better approach to compute such
arrangements~\cite{bfhmw-scmtd-07,bfhks-apsca-09} was introduced once the
package started to support arrangement on parametric surfaces, which also
enables the computation of arrangements embedded on Dupin
cyclides\index{Dupin cyclide}~\cite{bk-eatdc-08}. The torus, for example,
is a Dupin cyclide.

Sweeping the plane with a line is one of the most fundamental
algorithmic mechanisms in computational geometry. The \aos{} package
includes a generic implementation of an elaborate version of this
mechanism exploited by several higher-level operations supported by
the package.

The famous sweep-line\index{sweep line} algorithm of Bentley and
Ottmann~\cite{bo-arcgi-79} was originally formulated for sets of
non-vertical line segments, with the ``general position'' assumption
that no three segments intersect at a common point and no two segments
overlap. Many generalizations have been introduced ever since, such as
the ability to handle more general curves~\cite{sh-sac-89} and to deal
with degeneracies(see~\cite[Section~2.1]{bkos-cgaa-00}
and~\cite[Section~10.7]{mn-lpcgc-00} for a discussion about
degeneracies induced by line segments).

Effective algorithms for manipulating arrangements of curves have been
a topic of considerable interest in recent years with an emphasis on
exactness and efficiency of implementation~\cite{fhktww-a-07}.
Mehlhorn and Seel~\cite{ms-if-03} propose a general framework for
extending the sweep-line algorithm to handle unbounded curves; however,
their implementation can only handle lines in the plane. Arrangements on
spheres are covered by Andrade and Stolfi~\cite{as-eacs-01}, Halperin
and Shelton~\cite{hs-pssaa-98}, and recently Cazals and
Loriot~\cite{cl-ceacs-06}. Cazals and Loriot have developed a software
package that can sweep over a sphere and compute exact arrangements of
circles on it.

The \leda{} external package (LEP)
\ccode{SphereGeometry}~\citelinks{leda-SphereGeometry} handles geodesic
arcs on the sphere using an implicit representation, which enables the
use of exact rational arithmetic to handle objects of this type. The
package contains implementations of basic algorithms related to geodesic
arcs on a sphere, such as computing the spherical convex hull, the union
of two spherical polygons, and the width of a three-dimensional set of
points. It does not, however, support arrangements.

\section{Parametric Surfaces\index{surface!parametric}}
\label{sec:aos:parametric-surfaces}
A parametric surface $S$ is defined by a continuous function
$f_S: \parms \rightarrow \rrr$, where the domain
$\parms = U \times V$ is a rectangular two-dimensional parameter space
with bottom, top, left, and right boundaries, and the range
$S = f_S(\parms)$. $U$ and $V$ are open, half-open, or closed intervals
with endpoints in $\reals \cup \{-\infty, +\infty\}$.We use $\umin$, $\umax$,
$\vmin$, and $\vmax$ to denote the endpoints of $U$ and $V$, respectively. 
For example, the standard parameterization of the plane is
$f_S(u, v) = (u, v, 0)$, where $U = V =(-\infty, +\infty)$, and the unit
sphere is commonly parameterized as
$f_{S}(u, v) = (\cos u \cos v, \sin u \cos v, \sin v)$, where
$\parms = [-\pi, \pi] \times [-\frac{\pi}{2}, \frac{\pi}{2}]$.

A {\em \Index{contraction point}} $p \in S$ is a singular point, which
is the mapping of a whole boundary of the domain $\parms$. For example,
if the top boundary is contracted, we have
$\forall u \in U, f_S(u,\vmax) = p'$ for some fixed point $p' \in \rrr$.
An {\em \Index{identification curve}} $C \subset S$ is a continuous curve,
which is the mapping of opposite closed boundaries of the domain $\parms$.
If the left and right boundaries are identified, we have
$\forall v \in V, f_S(\umin,v) = f_S(\umax,v)$, (and similarly for the
bottom and top boundaries).
For example, consider the sphere as parameterized above. Its contraction
points are $(0, 0,\pm 1)$, as $f_S(u,-\frac{\pi}{2}) = (0, 0, -1)$ and
$f_S(u,\frac{\pi}{2}) = (0, 0, 1)$ for all $u$. Its identification curve
is $\{f_S(\pi,v)\,|\,-\frac{\pi}{2} \leq v \leq \frac{\pi}{2}\}$, as
$f_S(-\pi,v) = f_S(+\pi,v)$ for all $v$.

A {\em parameterizable curve} $\gamma$ is a continuous function
$\gamma: I \rightarrow \parms$ where $I$ is an open, half-open, or
closed interval with endpoints $0$ and $1$, and $\gamma$ is injective,
except for at a finite number of points. If $0 \not\in I$,
$\lim_{t \rightarrow 0+} \gamma(t)$ exists (in the closure of $\parms$)
and lies in an open side of the boundary. Similarly, if $1 \not\in I$,
$\lim_{t \rightarrow 1-} \gamma(t)$ exists and lies in an open side of
the boundary. A curve $C$ in $S$ is the image of a curve $\gamma$ in the
domain.

A curve is \emph{closed in the domain} if $\gamma(0) = \gamma(1)$; in
particular, $0 \in I$ and $1 \in I$. A curve is \emph{closed in the surface
$S$ (or simply closed)} if $f_S(\gamma(0)) = f_S(\gamma(1))$.
A curve $\gamma$ has two \emph{ends}, the 0-end $\langle \gamma,0 \rangle$
and the 1-end $\langle \gamma,1 \rangle$. If $d \in I$, the $d$-end has a
geometric interpretation. It is a point in $\parms$. If $d \not\in I$, the
$d$-end has no geometric interpretation. You may think of it as a point on
an open side of the domain or an initial or terminal segment of $\gamma$.
If $d \not\in I$, we say that the $d$-end of the curve is open.
Consider for example the equator curve on the sphere as parameterized
above. This curve is given by $\gamma(t) = (\pi(2t - 1),0)$, for
$t \in [0,1]$. The $0$-end of $\gamma$ is the point $(-\pi,0)$ in $\parms$
and a point on the equator of the sphere. It is closed on the sphere, but
non-closed in~$\parms$.  

A \emph{$u$-monotone curve} is the image of a curve $\gamma$, such that
if $t_1 < t_2$, then $u(\gamma(t_1)) < u(\gamma(t_2))$ for $t_1 < t_2$.
A \emph{vertical curve} is the image of a curve $\gamma$, such that
$u(\gamma(t)) = c$ for all $t \in I$ and some $c \in U$ and
$v(\gamma(t_1)) < v(\gamma(t_2))$ for $t_1 < t_2$. For instance, every
Meridian curve of a sphere parameterized as above is vertical.
A \emph{weakly $u$-monotone curve} is either vertical or 
$u$-monotone.\footnote{$u$-monotone curves refer to weakly $u$-monotone
curves hereafter.}

The \aos{} package handles inducing curves that are decomposable into
parameterizable weakly $u$-monotone curves as defined above. Any two
weakly $u$-monotone curves must intersect only a finite number of times
or overlap only in a finite number of sections, if at all. The curves
must be embedded on parameterizable surfaces as defined above. A curve can
be unbounded or reach the boundaries of the embedding surface.
A boundary may define a contraction point or an identification
curve.\footnote{We do not support surfaces that contain a contracted
identification curve.} We allow non-injectivity on the boundary, denoted
$\partial \parms$, and require bijectivity only in
$\parms \setminus \partial \parms$ (the interior of $\parms$). More
precisely, we require that $f_S(u_1,v_1) = f_S(u_2,v_2)$ and
$(u_1,v_1) \neq (u_2,v_2)$ imply $(u_1,v_1) \in \partial \parms$ and
$(u_2,v_2) \in \partial \parms$. Informally, we require that all geometric
operations defined in Section~\ref{sec:aos:geometry-traits} be applicable
on our curves.

Code reuse is maximized by generalizing the prevalent algorithms and their
implementations originally designed to operate on arrangements
embedded in the plane. The generalized code handles features embedded
in the modified surface
$\widetilde{S} : f_{\widetilde{S}} = f_S(u,v)\,|\,(u,v) \in \parms \setminus \partial \parms$ defined over the interior of the parameter space, where
identification curves, contraction points, and points at infinity are
removed. Specific code that handles unbounded features or features that
reach the boundaries is added to yield a complete implementation.

\section{The Arrangement Package Architecture}
\label{sec:aos:architecture}
The main class of the package, namely \aos{}, constructs and maintains
the embedding of a set of continuous weakly $u$-monotone curves that are
pairwise disjoint in their interiors on a two-dimensional parametric
surface in $\rrr$. It provides the necessary capabilities for maintaining
the embedded graph, while associating geometric data with the vertices,
edges, and faces of the graph. The embedded graph is represented using a
{\em doubly-connected edge list}
(\dcel\index{DCEL@\dcel})~\cite[Section~2.2]{bkos-cgaa-00}, which
maintains the incidence relations on its features~\cite{wfzh-aptac-07}.
Each edge of the subdivision is represented by two halfedges
with opposite orientation, and each halfedge is associated with the
face to its left. It is based on an implementation of a
halfedge data-structure \index{halfedge data-structure|see{HDS}}
(\hds\index{HDS@\hds})~\cite{cgal:k-hds-07} also used by the
polyhedral-surfaces package~\cite{k-ugpdd-99,cgal:k-ps-07}.

An important guideline in the design is to decouple the arrangement
representation from the various algorithms that operate on it. Thus,
the \aos{} class provides only a restricted set of methods for locally
modifying the arrangement; see
Section~\ref{ssec:aos:architecture:member-ops}. Non-trivial algorithms
that involve geometric operations are implemented as free (global)
functions that use the interface of the arrangement class;
see Section~\ref{sec:aos:facilities}. For example, the package offers
free functions for \emph{incremental} or {\em aggregated} insertion of
curves that may not necessarily be $u$-monotone, and the insertion
location of which are unknown {\em a priori}. Each input curve is
subdivides into several $u$-monotone subcurves before inserted using
one of the member methods listed in
Section~\ref{ssec:aos:architecture:member-ops}.

\subsection{The Data Structure}
\label{ssec:aos:architecture:data-structure}
The \ccode{Arrangement\_on\_surface\_2<GeometryTraits,TopologyTraits>}
class-template must be instantiated with two types as follows:
\begin{itemize}
\item
A geometry-traits\index{traits!geometry} class, which defines the
abstract interface between the arrangement data-structure and the
geometric primitives it uses. It is tailored to handle a specific
family of curves, and it encapsulates implementation details, such
as the number type used, the coordinate representation (i.e., Cartesian
or homogeneous), the algebraic computation methods, and auxiliary data
stored with the geometric objects, if present; see
Section~\ref{sec:aos:geometry-traits} for more details.
\item
A topology-traits\index{traits!topology} class, which adapts the
underlying \dcel{} to the embedding modified surface $\widetilde{S}$.
It determines whether the embedded surface is bounded, or otherwise
whether a boundary defines a contraction point or an identification
curve. If the inducing curves are confined to the modified
parameter space, the tasks of the topology-traits class are minimal.
However, in other cases it maintains the features that escape the
modified parameter space $\widetilde{\parms}$.
\end{itemize}

The underlying \dcel{} in turn associates a point with each vertex and
a $u$-monotone curve with each halfedge pair, where the geometric types
of the point and the $u$-monotone curve are defined by the
geometry-traits class. Users may extend the default \dcel{} data-structure,
in order to attach additional data to the \dcel{} records, as explained in
Section~\ref{ssec:aos:facilities:extension}. 

The \aoswh{} class-template represents an arrangement of general curves
embedded on a two-dimensional parametric surface, and maintains the
construction history of the arrangement. Input curves that induce the
arrangement are split into $u$-monotone subcurves that are pairwise
disjoint in their interior, and these subcurves are the embeddings of
the arrangement halfedges. While using the \aos{} class we lose track of
the connection between input curves and their final embeddings, in the
\aoswh{} data-structure each embedded $u$-monotone curve is extended to
store a pointer to the input curve associated with it, or a container of
curve pointers in case the embedded $u$-monotone curve is an overlapping
section of several input curves.

The \aoswh{} class is a simple {\em \Index{decorator}\/}\footnote{A
decorator attaches additional responsibilities to an object
dynamically~\cite{ghjv-dp-95}.} for\linebreak
\aos. It inherits from an
\aos{} class-template instantiated with a geometry-traits class that extends
the $u$-monotone curve type. It also stores the set of input curves, and
maintains a data structure that enables efficient traversal of all
halfedges induced by given input curves. The cross-pointers between input
curves and arrangement halfedges are maintained using an
{\em \Index{observer}}
(see Section~\ref{ssec:aos:facilities:notification}) that keeps track of
each change that involves an arrangement halfedge and updates its
underlying curve accordingly.

Users can traverse the original curves of each arrangement
halfedge, or iterate over all halfedges induced by a given input curve.
Tracing back the curve (or curves) that induced an arrangement edge is
essential in a variety of applications that use arrangements, such as
robot motion planning; see, e.g.,~\cite{hh-hymp-03}. It is possible, for
example, to efficiently remove a curve from the arrangement by deleting
all edges it induces.

Arrangements embedded in the plane are very common and, at least as far
as the arrangement package of \cgal{} is concerned, have a longer history
than their generalization for two-dimensional surfaces in $\rrr$. The
\arr{} class-template represents a planar subdivision. It maintains the
embedding of continuous weakly $u$-monotone curves in the $xy$ plane,
parameterized the natural way. That is, the two parameters $u$ and $v$ are
directly mapped to $x$ and $y$, respectively. Thus, $u$-monotonicity
implies $x$-monotonicity and vice versa. The \arr{} class-template
is parameterized with a geometry-traits class and with a \dcel{}
data-structure. It inherits from an \aos{} class-template instantiated with
the geometry traits template parameter and with a specific topology-traits
class suitable for the plane. The dedicated topology traits is instantiated
with the \dcel{}\index{DCEL@\dcel} template parameter. Similarly the
\arrwh{} class-template represents a planar subdivision, and maintains
the construction history of the arrangement.

The package offers various operations on arrangements stored in these
representations, such as point location, insertion of curves, removal of
curves, and overlay computation.

\subsection{Member Operations}
\label{ssec:aos:architecture:member-ops}
The interface of \aos{} consists of various methods that enable the
traversal of arrangement features. The class supplies iterators over
its vertices, halfedges, or faces. The classes \ccode{Vertex},
\ccode{Halfedge}, and \ccode{Face}, nested in the \aos{} class, supply
in turn methods for local traversals. For example, it is possible to
visit all halfedges incident to a specific vertex. Halfedges stored in
doubly-connected lists form chains. The chains define the inner and
outer connected components of the boundary
\index{components of the boundary|see{CCB}} (\Index{CCB}) of each face.
A bounded face in the \arr{} data structure has a single outer CCB
representing the outer boundary of the face, and may have several inner
CCBs representing its holes. However, a face in the general \aos{} data
structure may have several inner and outer CCBs; see
Section~\ref{sec:aos:geodesics}. Naturally, it is possible to traverse
all the halfedges along the inner and outer boundaries of a given face.

\begin{wrapfigure}[19]{r}{4.9cm}
  \vspace{-15pt}
  \setlength{\tabcolsep}{3pt}
  \begin{tabular}{cc}
    \epsfig{figure=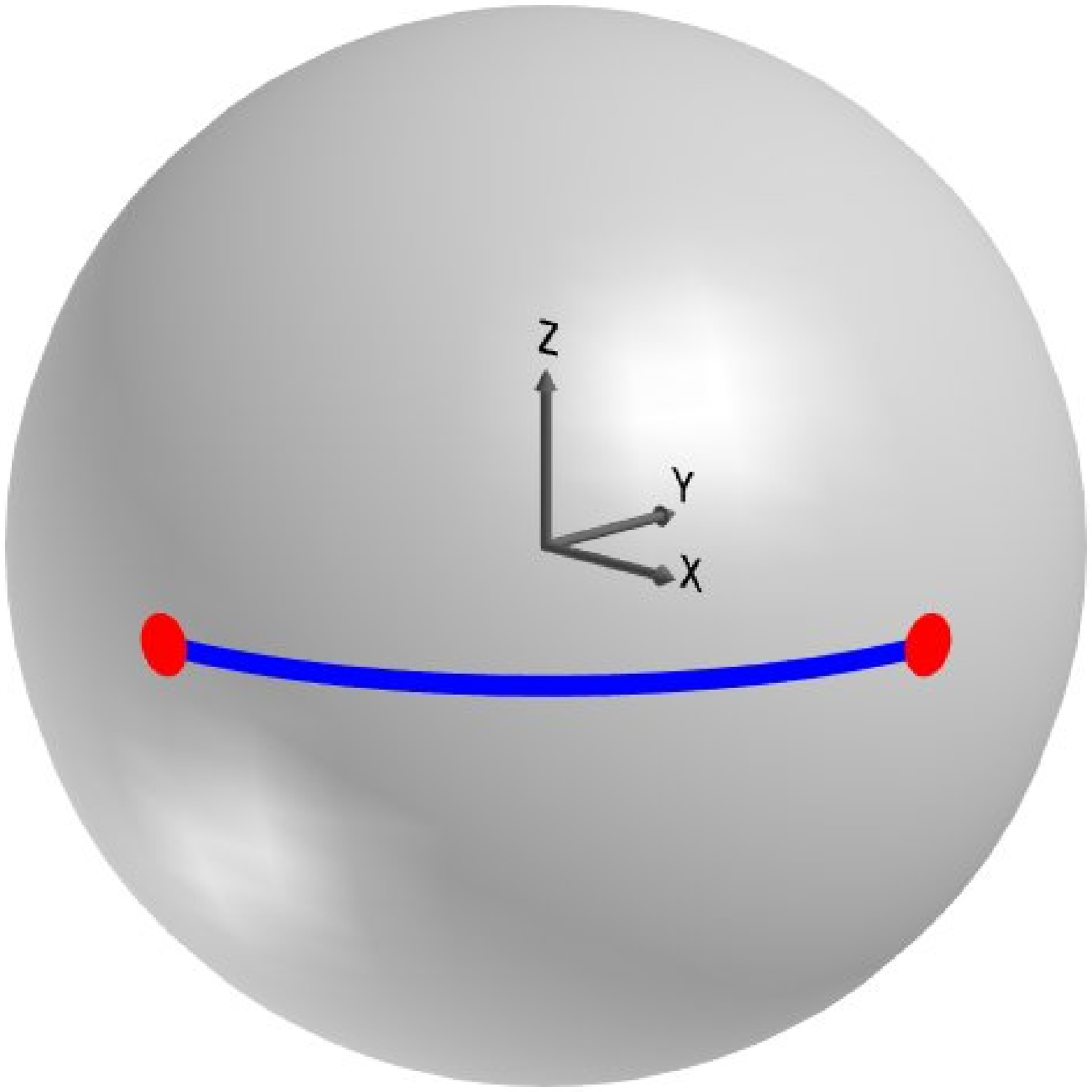,width=2.35cm,silent=} &
    \epsfig{figure=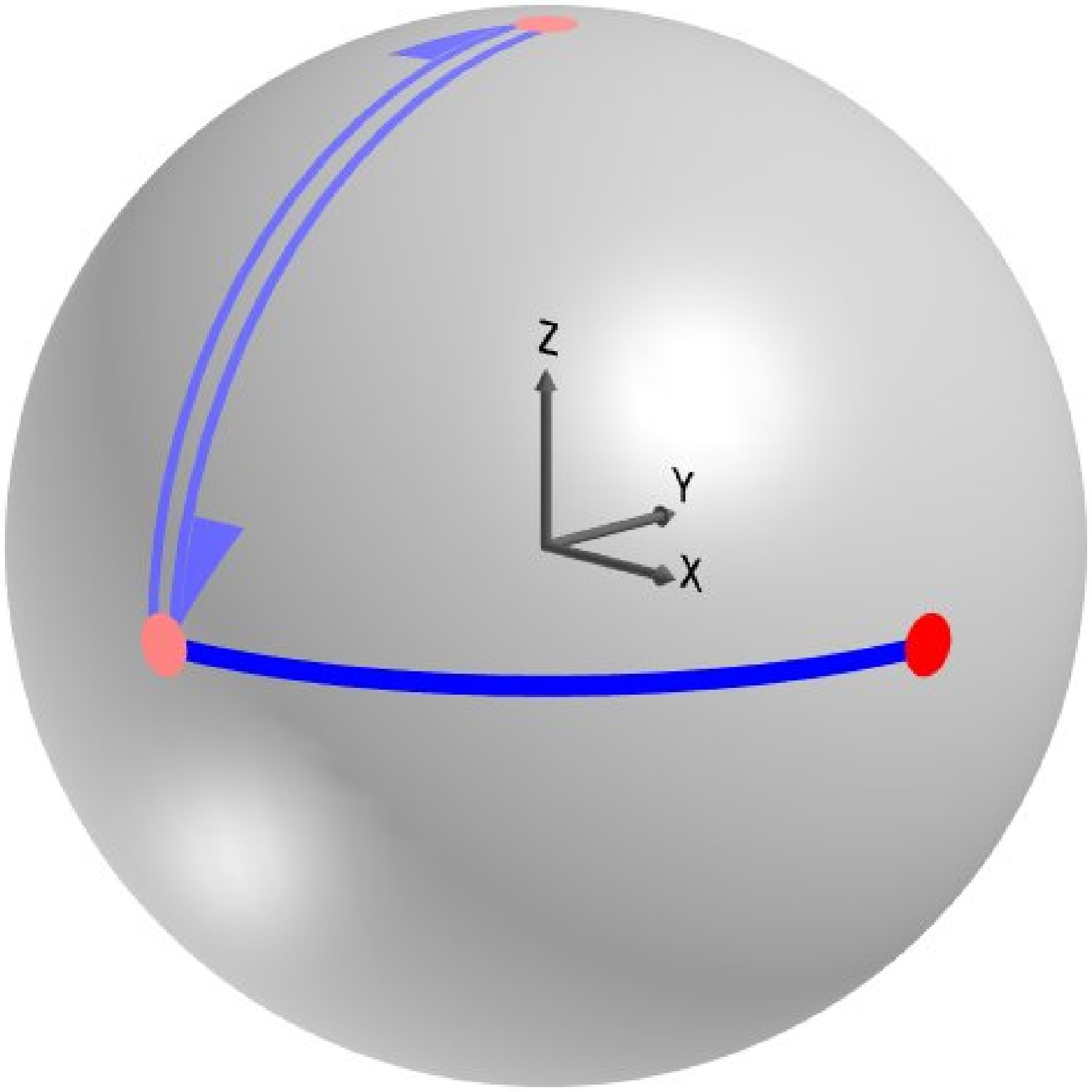,width=2.35cm,silent=}\\
    (a) & (b)\\
    \epsfig{figure=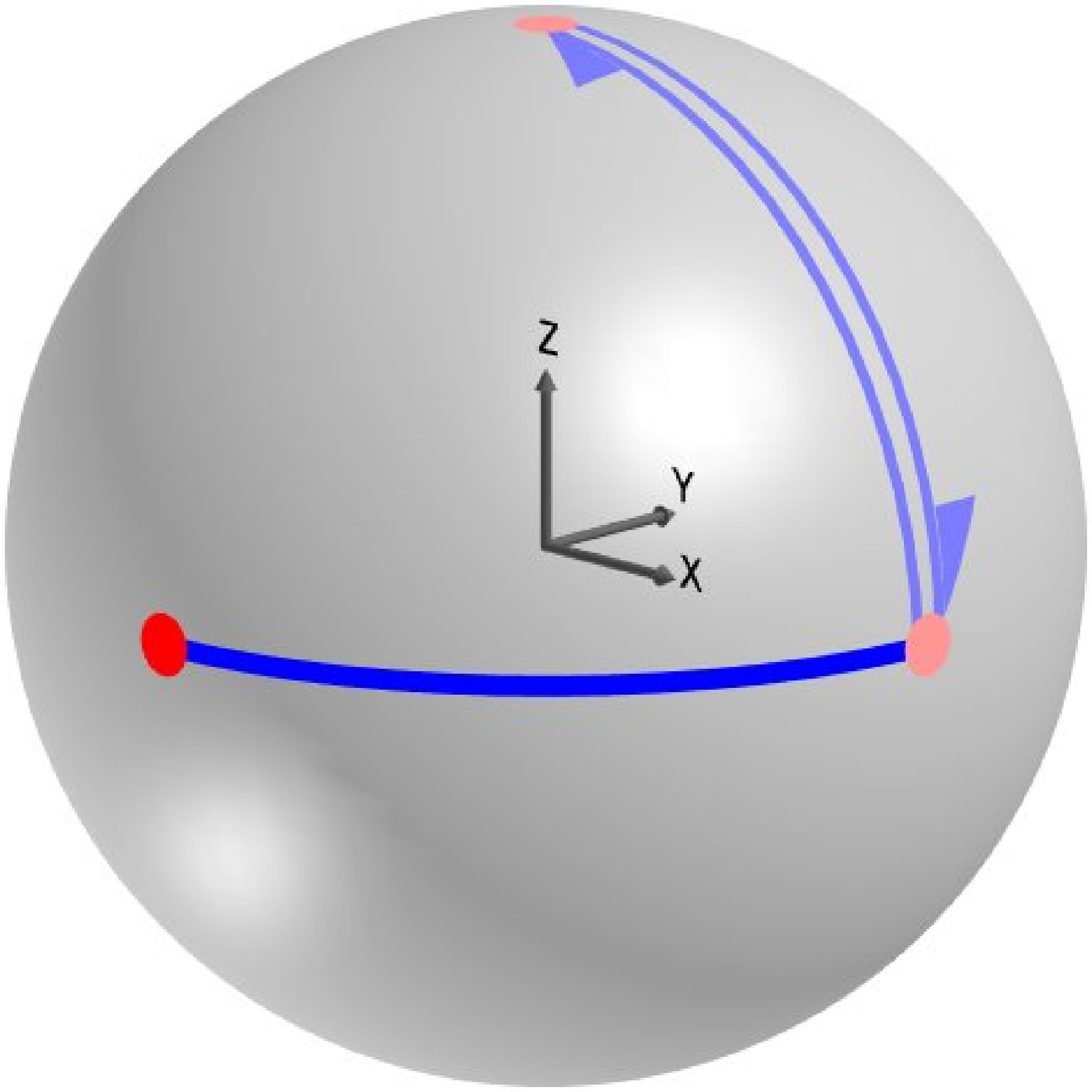,width=2.35cm,silent=} &
    \epsfig{figure=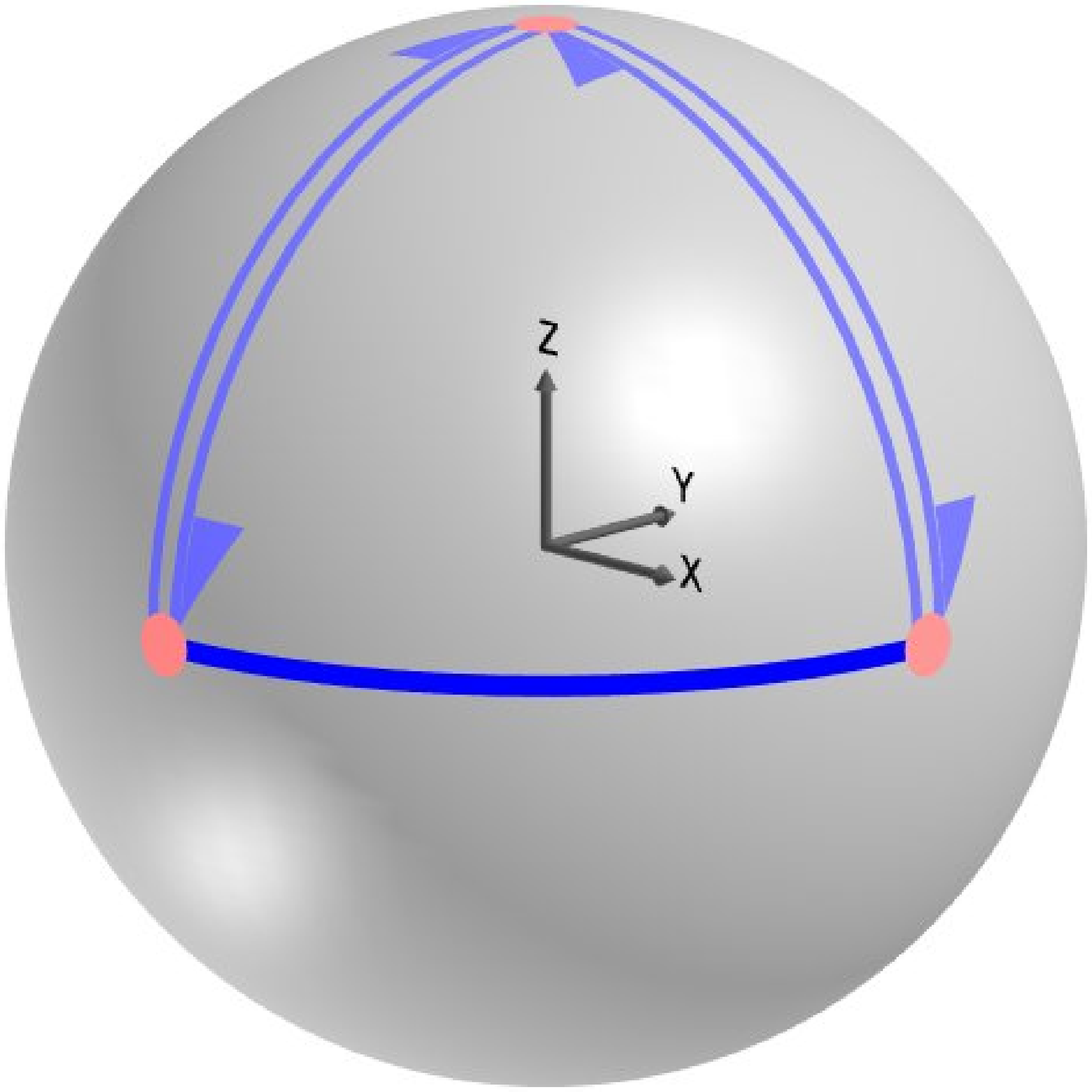,width=2.35cm,silent=}\\
    (c) & (d)
  \end{tabular}
  \caption[The arrangement immediate insertion methods]
  {\capStyle{The arrangement immediate insertion methods. The newly
      inserted curve is drawn in bright blue. Vertices created as a result
      of the insertion are drawn in bright red.}}
  \label{fig:insertion-fncs}
\end{wrapfigure}
\newcounter{aos:arch:cntr}
Alongside with the traversal methods, the arrangement class also
supports several methods that modify the arrangement. The
functions
\ccode{insert\_in\_face\_interior(C,f)} (Figure~\ref{fig:insertion-fncs} (a)),
\ccode{insert\_from\_left\_vertex(C,v)} (Figure~\ref{fig:insertion-fncs} (b)),
\ccode{insert\_from\_right\_vertex(C,v)}) (Figure~\ref{fig:insertion-fncs} (c)),
and
\ccode{insert\_at\_vertices(C,v1,v2)} (Figure~\ref{fig:insertion-fncs} (d))
create an edge that
corresponds to a $u$-monotone curve $C$, whose interior is disjoint
from existing edges and vertices. The choice of which one to use
depends on whether the curve endpoints are associated with existing
non-isolated arrangement vertices:
\setcounter{aos:arch:cntr}{1}(\roman{aos:arch:cntr})~If both curve endpoints
do not correspond to any existing vertex,
\ccode{insert\_in\_face\_interior()} is used to generate a new hole
inside an existing face.
\addtocounter{aos:arch:cntr}{1}(\roman{aos:arch:cntr})~If exactly one
endpoint corresponds to an existing \dcel{} vertex, one of the
functions \ccode{insert\_from\_left\_vertex()} or
\ccode{insert\_from\_right\_vertex()} is called, depending on which
endpoint is associated with an existing vertex. It forms an
``antenna'' emanating from an existing connected component.
\addtocounter{aos:arch:cntr}{1}(\roman{aos:arch:cntr})~Otherwise, both
endpoints correspond to existing vertices, and
\ccode{insert\_at\_vertices()} is called to connect these vertices
using a pair of twin halfedges. These functions hardly involve
any geometric operations, if at all.\footnote{Unless we
force checking preconditions. In this case the precondition
evaluation involves geometric computation.} They accept topologically
related parameters, and use them to operate directly on the \dcel{}
records, thus saving algebraic operations, which are especially
expensive when high-degree curves are involved.

Other modification methods included in the arrangement class enable
users to split an edge into two, to merge two edges incident to a
common vertex, and to remove an edge from the arrangement. It is
also possible to insert a point in the interior of a given face,
creating an isolated vertex that corresponds to this point, or to
remove an isolated vertex from the arrangement.

\subsection{Cell Extension}
\label{ssec:aos:facilities:extension}
As mentioned above the \aos{} is parameterized by a topological traits,
which in turn is parameterized by a \dcel{} class. Users may extend
the default \dcel{} data-structure, in order to attach additional
data to the \dcel{} records. The default \dcel{} model simply
associates a point with each \dcel{} vertex and a $u$-monotone curve
with each halfedge pair. Although it is possible to store auxiliary data
with the curves or points by extending their respective types (see
Section~\ref{ssec:aos:geometry-traits:extension}), it is sometimes
necessary to extend the vertex, halfedge, or face topological features
of the \dcel{}. Many times it is desired to associate extra data with
the arrangement faces only. For example, when an arrangement
represents the subdivision of a country into regions associated with
their population density. In this case there is no alternative other
than to extend the \dcel{} face, as there is no geometry-traits class
entity that corresponds to an arrangement face. A similar mechanism
for extending topological features with auxiliary attributes can be
found in other components of \cgal, such as the triangulation
packages~\cite{cgal:py-tds2-07} and the polyhedral-surfaces
package~\cite{cgal:k-ps-07}.

\section{The Arrangement Facilities}
\label{sec:aos:facilities}
\subsection{Sweep Line}
\label{ssec:aos:facilities:sweep-line}
The \aos{} package offers a generic implementation of the sweep-line
\index{sweep line} algorithm~\cite[Section~2.1]{bkos-cgaa-00} in form
of a class template called \cSweepline{}. It handles any set of
arbitrary $u$-monotone curves, and serves as the foundation of a family
of concrete operations, such as computing all intersection points induced
by a set of curves, constructing an arrangement of curves, aggregately
inserting a set of curves into an existing arrangement, and computing
the overlay of two arrangements. A concrete algorithm is realized
through a sweep-line visitor, a template parameter of \cSweepline{},
which follows the {\em \Index{visitor}}
design-pattern~\cite{ghjv-dp-95}, and models the concept
\concept{SweepLineVisitor\_2}. In this case, a visitor defines an
operation based on the sweep-line algorithm to be performed on an
arrangement without the need to change the arrangement structure.
Users may introduce their own sweep based algorithms by implementing an
appropriate visitor class.\footnote{The \boost{} Graph Library, for
example, uses visitors~\cite[Section~12.3]{sll-bgl-02} to support
user-defined extensions to its fundamental graph algorithms.}

Another parameter of the \cSweepline{} class-template is the
geometry-traits class, which must be instantiated with a model of the
\concept{ArrangementXMonotoneTraits\_2} concept; see
Section~\ref{sec:aos:geometry-traits} for the precise definition of
this concept. It defines the minimal set of geometric primitives,
among the other, required to perform the sweep-line algorithm briefly
described next.

An imaginary vertical curve is swept over the surface from left to right,
transforming the static two-dimensional problem into a dynamic
one-dimensional one. At each time during the sweep a subset of the input
$u$-monotone curves intersect this vertical line in a certain order. The
subset of curves and their order along the sweep line may change as the
line moves along the $u$-axis, implying a change in the topology of the
arrangement, only at a finite number of \emph{event points}, namely
intersection points of two curves and left endpoints or right endpoints
of arcs of curves. The event points, namely endpoints and all the
intersection points that have already been discovered, are stored in a
$uv$-lexicographic order in a dynamic event queue, named the
\emph{$U$-structure}. The ordered sequence of segments intersecting the
imaginary vertical line is stored in a dynamic structure called the
\emph{$V$-structure}. Both structures are maintained as balanced binary
trees that enable their efficient maintenance using an advanced
implementation of red-black trees~\cite{w-eirbt-05}.

During the sweep-line process the event objects in the $U$-structure
are sorted lexicographically, and the subcurve objects are stored in
the $V$-structure in the same order as the lexicographic order of
their intersection with the imaginary sweep-line. The \cSweepline{}
class performs only the operations required to maintain the $U$-structure
and the $V$-structure, while the visitor class is responsible for
producing the actual output of the algorithm. Whenever the sweep-line
class handles an event point $p$, it sends a notification to its visitor.
Using this information, the visitor can access the subcurves incident to
$p$ and the neighboring subcurves from above and below.

\subsection{Map Overlay}
\label{ssec:aos:facilities:overlay}
The map overlay\index{map overlay|see{overlay}} of two planar
subdivisions $\calS_1$ and $\calS_2$ is a planar subdivision $\calS$,
such that there is a face $f$ in $\calS$ if and only if there are
faces $f_1$ and $f_2$ in $\calS_1$ and $\calS_2$ respectively, such
that $f$ is a maximal connected subset of
$f_1 \cap f_2$~\cite[Section~2.3]{bkos-cgaa-00}. The \Index{overlay}
of two two-dimensional subdivisions embedded on a surface is defined
similarly.

The overlay of two given arrangements, conveniently referred to as
the ``blue'' and the ``red'' arrangements, is straightforwardly
implemented using a sweep-line visitor. A consolidated set of the
``blue'' and ``red'' curves is processed, while the imaginary vertical
line is swept over the surface. The $u$-monotone curve type is extended
with a color attribute (whose value is either \ccode{BLUE} or \ccode{RED});
see Section~\ref{ssec:aos:geometry-traits:extension}. Using the
extended type we filter out unnecessary computations. For example, we
ignore monochromatic intersections, and compute only red--blue
intersection points (or overlaps). This way the arrangement of a
consolidated set of ``blue'' and ``red'' curves is computed efficiently.

The overlay visitor needs to construct a \dcel{} that properly
represents the overlay of two input arrangements, the \dcel{}'s of
which are potentially independently extended (see
Section~\ref{ssec:aos:facilities:extension}). A face in the overlay
arrangement corresponds to overlapping regions of the blue and red
faces. An edge in the overlay arrangement is due to a blue edge, a
red edge, or an overlap of two differently colored edges. An overlay
vertex is due to a blue vertex, a red vertex, a coincidence of two
differently colored vertices, or an intersection of a blue and a
red curve. In each case, the data associated with the 
overlay \dcel{} feature should be computed from the red and blue
\dcel{} features that induce it. To this end, the overlay visitor is
parameterized by an overlay-traits module, which defines the merging
operations between various \dcel{} features, achieving maximum
genericity and flexibility for the users. The instantiated overlay
traits models the \concept{OverlayTraits} concept. The concept
requires the provision of ten functions that handle all possible cases
as follows:
\begin{compactenum}
\item A new vertex $v$ is induced by coinciding vertices $v_r$ and $v_b$.
\item A new vertex $v$ is induced by a vertex $v_r$ that lies on an edge $e_b$.
\item An analogous case of a vertex $v_b$ that lies on an edge $e_r$.
\item A new vertex $v$ is induced by a vertex $v_r$ that is contained in a 
  face $f_b$.
\item An analogous case of a vertex $v_b$ contained in a face $f_r$.
\item A new vertex $v$ is induced by the intersection of two edges $e_r$ and 
  $e_b$.
\item A new edge $e$ is induced by the overlap of two edges $e_r$ and $e_b$.
\item A new edge $e$ is induced by the an edge $e_r$ that is contained in a 
  face $f_b$.
\item An analogous case of an edge $e_b$ contained in a face $f_r$.
\item A new face $f$ is induced by the overlap of two faces $f_r$ and $f_b$.
\end{compactenum}
We apply the overlay operations in four different ways in this
thesis; see
Sections~\ref{ssec:mscn:sgm:mink_sum}, \ref{ssec:mscn:cgm:mink_sum},
\ref{ssec:assem_plan:pairwise-ms-projection}, and
\ref{ssec:assem_plan:motion-space-construction} for the different
applications. Each application requires the provision of a different
set of the ten functions above.

\subsection{Zone Construction}
\label{ssec:aos:facilities:zone}
The \Index{zone}~\cite{h-a-04} of a $u$-monotone curve $C$ in an
arrangement $\calA$ is the set of cells of $\calA(\calC)$ intersected
by the curve $C$.

The \aos{} package includes the \cZone{} class-template, which
computes the zone of an arrangement. Similar to the \cSweepline{}
template, the \cZone{} template is parameterized with a zone visitor,
a model of the concept \concept{ZoneVisitor\_2}, and it serves as
the foundation of a family of concrete operations, such as inserting
a single curve into an arrangement and determining whether a query
curve intersects with the curves of an arrangement.

The zone of a curve $C$ is computed by locating the left endpoint of
$C$ in the arrangement, and then ``walking'' along the curve towards
the right endpoint, keeping track of the vertices, edges, and faces
crossed on the way (see, for example,~\cite[Section~8.3]{bkos-cgaa-00}
for the computation of the zone of a line in an arrangement of lines).

It is sometimes necessary to compute the zone of a curve in an
arrangement without actually inserting the curve. In other situations,
the entire zone is not required, as in the case of a process that only
checks whether a query curve passes through an existing arrangement
vertex; if the answer is positive, the process can terminate as soon
as the vertex is located. While the sweep-line algorithm operates on a
set of input $u$-monotone curves and its visitors can just use the
notifications they receive to construct their output structures, the
zone-computation algorithm operates on an arrangement object and its
visitors may modify the same arrangement object as the computation
progresses. This makes the interaction of the main class with its
visitors slightly more intricate.

\subsection{Observers}
\label{ssec:aos:facilities:notification}
Some arrangement-based algorithms and applications should be bound
to a specific arrangement instance and receive notifications on
various topological changes this arrangement undergoes. This is not
just a convenience, but crucial to the usability of the package, as
it might be the only way for providing an algorithm with a certain
input, such as data that should be bound to the topological features
of the arrangement, and is available only during construction; see
Section~\ref{ssec:aos:facilities:extension} for an example.

The \aos{} package supports a notification mechanism, which follows
the {\em \Index{observer}} design-pattern~\cite{ghjv-dp-95}. In
this case of one-to-many dependency a set of observes depend on a
single arrangement, so that when the arrangement changes state, all
its dependents are notified and updated automatically. Using this
mechanism it is possible to attach any number of observer instances
to a specific arrangement, such that all attached observers get
notified on local and global changes the arrangement undergoes.

The \ccode{Arr\_observer<Arrangement>} class-template, parameterized
by an arrangement type, stores a pointer to an arrangement object,
and is capable of receiving notifications just before a structural
change occurs in the arrangement and immediately after such a change
takes place. Hence, each notification comprises of a pair of ``before''
and ``after'' functions (e.g., \ccode{before\_split\_face()} and
\ccode{after\_split\_face()}). The \cObserver{} class-template
serves as a base class for other observer classes and defines a set of
virtual notification functions, giving them all a default empty
implementation. The interface of the base class is designed to capture
all possible changes that arrangements can undergo, with a minimal set
of topological events.

The set of functions can be subdivided into three categories as follows:
\begin{enumerate}
\item
Notifiers of changes that affect the entire topological structure.
Such changes occur when the arrangement is cleared or when it is
assigned with the contents of another arrangement.
\item
Notifiers of a {\em local} change to the topological structure, such
as the creation of a new vertex or an edge, the splitting of an edge
or a face, the formation of a new hole inside a face, the removal of
an edge, etc.
\item\label{item:glob_notif}
Notifiers of a {\em global} change initiated by a free (global)
function, and called by the free function (e.g., incremental or
aggregate insert; see Section~\ref{sec:aos:architecture}). This
category consists of a single pair of notifiers, neither of them is
called by methods of the \aos{} class-template itself.
Issuing point-location queries (or any other queries for that matter)
between the calls to the ``before'' and ``after'' functions of this pair
is forbidden.\footnote{This constraint improves the efficiency of
the maintenance of auxiliary data structures for the relevant
point-location strategies, which have to update their data structures
according to the changes the arrangement undergoes (see 
Section~\ref{ssec:aos:facilities:point-location} for more
details). Since no point-location queries are issued between the
invocation of \ccode{before\_global\_change()} and
\ccode{after\_global\_change()}, it is not necessary to perform an 
update each time a local topological change occurs, and it is
possible to postpone the updates until after the global operation is
completed.}
\end{enumerate}
See~\cite{wf-ndcap-05} for a detailed specification of the arrangement
observer class sketched above.

Each arrangement object stores a list of pointers to \cObserver{}
objects, and whenever one of the structural changes listed in the
first two categories above is about to take place, the arrangement
object invokes the appropriate function of each of its observers. It
also does so immediately after the change has taken place. In addition,
a free function may choose to trigger a similar notification, which
falls under the third category above.

In case the new observer is attached to a non-empty arrangement,
its constructor may extract the relevant data from the non-empty
arrangement using various traversal methods offered by the public
interface of the \aos{} class, and update any internal data stored in
the observer. This is necessary, for example, in case of the
non-stateless point-location strategies, as shown in the next section.

\subsection{Point Location}
\label{ssec:aos:facilities:point-location}
Point location\index{point location} is defined as follows: Given a
point, find the arrangement cell that contains it. The \aos{} package
provides the means to answer this query. Typically, the result of the
point-location query is one of the arrangement faces, but in degenerate
situations the query point can lie on an edge, or it may coincide with
a vertex. Since the arrangement representation is decoupled from the
algorithms that operate on it, the \aos{} class does not support
point-location queries directly. Instead, the package provides a set of
classes that are capable of answering such queries, all are models of
the concept \concept{ArrangementPointLocation\_2}. Each model employs a
different algorithm or \emph{strategy} for answering queries. A model of
this concept must define the \ccode{locate()} function, which accepts an
input query point and returns an object representing the arrangement
cell that contains this point (a polymorphic \ccode{CGAL::Object}
instance that can either be a \ccode{Face\_handle}, a
\ccode{Halfedge\_handle}, or a \ccode{Vertex\_handle}).

The following models of the concept \concept{ArrangementPointLocation\_2}
are included in the\linebreak
\aos{} package.
\begin{itemize}
\item
\ccode{Arr\_naive\_point\_location} locates the query point
naively, by exhaustively scanning all arrangement cells. It is the
only strategy with unlimited support; see Section~\ref{sec:aos:geodesics}.
\item
\ccode{Arr\_walk\_along\_a\_line\_point\_location} simulates a
reverse traversal along an imaginary vertical ray emanating from the
query point toward infinity. It starts from the unbounded face of
the arrangement and moves downward toward the query point until it
locates the arrangement cell containing it.
\item
\ccode{Arr\_landmarks\_point\_location<Generator>} uses an auxiliary
generator class to create a set of ``landmark'' points, whose
location in the arrangement is known. Given a query point, it uses a
nearest-neighbor search structure (e.g., \kdtree) to find the
nearest landmark, and then traverses the straight-line segment
connecting this landmark to the query point.\footnote{The
``landmarks'' strategy, requires that the arrangement is
instantiated with a geometry-traits class that models the
\concept{ArrangementLandmarksTraits\_2} concept, which adds two
requirements to the basic \concept{ArrangementBasicTraits\_2}
concept: \setcounter{aos:arch:cntr}{1}(\roman{aos:arch:cntr}) Approximating
the coordinates of a given point $p$ using the double-precision
arithmetic, and \addtocounter{aos:arch:cntr}{1}(\roman{aos:arch:cntr})
constructing a $u$-monotone curve that connects two given points
$p$ and $q$, where $p$ represents a landmark point and $q$ is the
query point. Most traits classes included in the arrangement package
are models of this refined concept.} See~\cite{hh-esplp-08} for more
details.
\item
\ccode{Arr\_trapezoidal\_ric\_point\_location} implements Mulmuley's
point-location algorithm~\cite{m-fppa-90}, which is based on the
vertical decomposition of the arrangement into pseudo-trapezoids, and
maintains a history directed acyclic graph\index{graph!directed
  acyclic} (DAG) on top of the decomposition.
\end{itemize}
The last two strategies have query times that are shorter than the
query times of the first two. However, they require preprocessing and
consume more space, as they maintain auxiliary data structures. The
first two strategies do not require any extra data and operate directly
on the \dcel{} that represents the arrangement. For a complete survey
see~~\cite{hh-esplp-08}.

Each of the ``landmarks'' point-location class and the trapezoidal
point-location class uses an observer to receive notifications
whenever the arrangement is modified. For example, the default
generator employed by the ``landmarks'' strategy uses the arrangement
vertices as landmarks, so whenever a new vertex is created (by the
insertion of a new edge, by the splitting of an existing edge, or by
the insertion of an isolated point), it should be inserted into the
nearest-neighbor search structure maintained by the respective
landmark class. The usage of the notification mechanism makes it
possible to associate several point-location objects with the same
arrangement simultaneously.

The ``landmarks'' and the trapezoidal point-location strategies are
both characterized by very efficient query time at the cost of
time-consuming preprocessing. Naturally, these strategies exhibit
better overall performance when the number of arrangement updates is
relatively small compared to the number of issued queries. For a
report on extensive experiments with the various point-location
strategies see~\cite{hh-esplp-08}.

\section{Geometry-Traits Concepts}
\label{sec:aos:geometry-traits}
\begin{wrapfigure}[15]{r}{10.9cm}
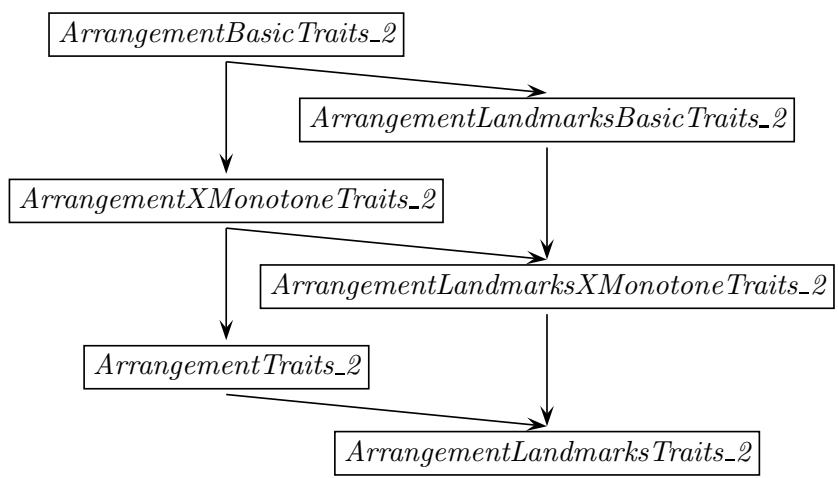

  \vspace{-12pt}
  \psset{treevshift=0,unit=1em,xunit=2em,yunit=1em,everytree={},
        etcratio=.75,triratio=.5}
     \jtree[everylabel=\sl,xunit=60pt,arrows=->]
     \! = {\ArrangementBasicT}
           <vert>[scaleby=0 3.6]{\ArrangementXMonotoneT}@axt !axt
          ^<right>[scaleby=2 1]{\ArrangementLandmarksBasicT}@lbt !lbt
           .
     \!axt = <vert>[scaleby=0 3.6]{\ArrangementT}@at !at
 	   .
     \!lbt = <vert>[scaleby=0 3.6]{\ArrangementLandmarksXMonotoneT}@lxt !lxt
           .
     \!lxt = <vert>[scaleby=0 3.6]{\ArrangementLandmarksT}@alt !alt
           .
     \ncline{axt:b}{lxt:t}
     \ncline{at:b}{alt:t}
     \endjtree
  \psset{treevshift=0,unit=1cm,xunit=1cm,yunit=1cm,everytree={},
         etcratio=.75,triratio=.5}
  \caption[Geometry-traits concept basic refinement
    hierarchy]{\capStyle{Refinement hierarchy of geometry-traits concepts.}}
  \label{fig:traits-hierarchy}
\end{wrapfigure}
The implementations of the various algorithms that construct and
manipulate arrangements are generic, as they are independent of the
type of curves they handle. All steps of the algorithms are enabled
by the minimal set of geometric primitives gathered in the
geometry-traits class, a model of a geometry-traits concept.
The geometry-traits concept is factored into a hierarchy of refined
concepts. The refinement hierarchy is defined according to the
identified minimal requirements imposed by different algorithms that
operate on arrangements, thus alleviating the production of traits
classes that handle complicated curves, and increasing the usability
of the algorithms. The requirements listed by the geometry-traits
concepts include only the utterly essential types and operations, and
fully specify all the preconditions that the input must satisfy, as
these may simplify the implementation of models of this concept even
further.

The following sections are dedicated to a detailed description of the
hierarchy. We list the minimal requirements of each layer in the
hierarchy, and provide formal definitions for the required operations.
The letters $x$ and $y$ are used in the code to refer to the two surface
parameters, as arrangements embedded in the $xy$-plane are more common
and familiar. The names of required nested types (e.g.,
\ccode{X\_monotone\_curve\_2}) and valid expressions (e.g.,
\ccode{compare\_x}) are faithful to the original source code. However,
we use the letters $u$ and $v$ in the formal definitions below to refer
to the two surface parameters, as these definitions apply to the general
case of arrangements embedded on surfaces. Let $\mathrm{cmp}_u()$ and
$\mathrm{cmp}_v()$ denote two predicates that accept two points and
compare them by their $u$-coordinates and by their $v$-coordinates
respectively. We use the following notation. For a point $p$, $(u_p,v_p)$,
denotes a pre-image, and for a curve $C$, $\gamma$ denotes a pre-image,
that is, $p = f_S(u_p,v_p)$ and $C(t) = f_S(\gamma(t))$ for all $t \in I$.

The basic concept \ArrangementBasicTraits{} requires the definition of
the types \ccode{Point\_2} and \ccode{X\_monotone\_curve\_2}. The latter
represents a $u$-monotone curve, and the former is the type of the
endpoints of the curves, representing a point on the surface. This
concept lists the minimal set of predicates on objects of these two types
sufficient to enable the operations provided by the \aos{}
class-template itself, namely the insertion of bounded $u$-monotone
curves that are interior disjoint from any vertex and edge in the
arrangement. All points and curves in the set below are required to
have an inverse pre-image in $\parms \setminus \partial \parms$. In
particular all curves are $u$-monotone.\\
\rule{\textwidth}{1pt}
\begin{compactdesc}
\item[\ccode{Compare\_x\_2}:]
  Compare two points by their $u$-coordinates.
\item[\ccode{Compare\_xy\_2}:]
  Compare two points lexicographically by their $u$ and then by their
  $v$-coordi\-nates.
\item[\ccode{Construct\_min\_2}:]
  Return the lexicographically smaller (left) endpoint of a given curve.
\item[\ccode{Construct\_max\_2}:]
  Return the lexicographically larger (right) endpoint of a given curve. 
\item[\ccode{Is\_vertical\_2}:]
  Determine whether a weakly $u$-monotone curve is vertical.
\item[\ccode{Compare\_y\_at\_x}:]
  Given a point $p$ and a curve $C$, such that the $u_p$
  lies in the $u$-range of $C$, determine whether $p$ is above, below,
  or lies on $C$. More precisely, if $C$ is vertical, determine
  whether $p$ lies on $C$, or above or below $C$. Otherwise,
  since $u(\gamma(0)) \leq u_p \leq u(\gamma(1))$ must hold and $C$ is
  $u$-monotone, there must be a unique $0 \leq t' \leq 1$, that
  satisfies $u(\gamma(t')) = u_p$. Return $\mathrm{cmp}_v(p, \gamma(t'))$.
\item[\ccode{Compare\_y\_at\_x\_right}:]
  Given two curves $C_1$ and $C_2$ that share a common left endpoint $p$,
  determine the relative position of the two curves immediately to the
  right of $p$. More precisely, return
  $\mathrm{cmp}_v(\gamma_1(\epsilon_1),\gamma_2(\epsilon_2))$, where
  $\epsilon_1,\epsilon_2 > 0$ are infinitesimally small.
\item[\ccode{Compare\_y\_at\_x\_left}:]
  Given two curves $C_1$ and $C_2$ that share a common right endpoint $p$,
  determine the relative position of the two curves immediately to the
  left of $p$. More precisely, return
  $\mathrm{cmp}_v(\gamma_1(1-\epsilon_1),\gamma_2(1-\epsilon_2))$, where
  $\epsilon_1,\epsilon_2 > 0$ are infinitesimally small. This is an
  {\em optional} requirement with ramifications in case it is not
  fulfilled; see Section~\ref{ssec:aos:geometry-traits:adaptor}.
\end{compactdesc}
\rule[5pt]{\textwidth}{1pt}
The set of predicates listed above is also sufficient for answering
point-location queries by the various point-location strategies,
with a small exception of the ``landmarks'' strategy, which requires
a traits class that models the refined concept
\ArrangementLandmarksTraits. This is described in
Section~\ref{ssec:aos:facilities:point-location}.

Constructing arrangements induced by $u$-monotone curves that may
intersect in their interior, requires an arrangement instantiated
with a traits class that models the concept
\ArrangementXMonotoneTraits{}. This concept refines the
basic arrangement-traits concept described above, as it requires
an additional method for computing intersections between $u$-monotone
curves, among the other. An intersection point between two curves is
also represented by the \ccode{Point\_2} type. The refined traits
concept also requires methods for splitting curves at these
intersection points to obtain pairs of interior disjoint subcurves
and merging pairs of subcurves. In summary, a model of the refined
concept must provide the additional operations bellow. All
points and curves in the set below are required to have an
inverse pre-image in $\parms \setminus \partial \parms$. In particular
all curves are $u$-monotone.\\
\rule{\textwidth}{1pt}
\begin{compactdesc}
\item[\ccode{Intersection\_2}:]
  Compute the intersections between two given curves $C_1$ and $C_2$.
\item[\ccode{Split\_2}:]
  Split a given curve $C$ at a given point $p$, which
  lies in the interior of $C$, into two interior disjoint subcurves.
\item[\ccode{Merge\_2}:]
  Merge two mergeable curves $C_1$ and $C_2$ into a single curve $C$.
\item[\ccode{Is\_mergeable\_2}:]
  Determine whether two curves $C_1$ and $C_2$ that share a common
  endpoint can be merged into a single continuous curve representable
  by the traits class.
\end{compactdesc}
\rule[5pt]{\textwidth}{1pt}

The further refined concept \ArrangementTraits{} enables the
construction of arrangements induced by {\em general} curves. A model
of the refined concept must define a third type that represents a
general (not necessarily $u$-monotone) curve, named \ccode{Curve\_2}.
It also has to supply a method that subdivides a given curve into
simple $u$-monotone subcurves, and possibly isolated points.\footnote{For
example, the curve $(x^2 + y^2)(x^2 + y^2 - 1) = 0$ is comprised of two
$u$-monotone circular arcs, which together form the unit circle, and a
singular isolated point at the origin.} We refer to the entire hierarchy
of refinements defined above as a single ``abstract'' concept called
\NoBoundaryTraits{}, as it represents concepts the models of which
handle curves that must have inverse pre-images in
$\parms \setminus \partial \parms$. We use this abstract concept to
simplify the description of the hierarchy defined below.

\begin{wrapfigure}[12]{r}{9.5cm}
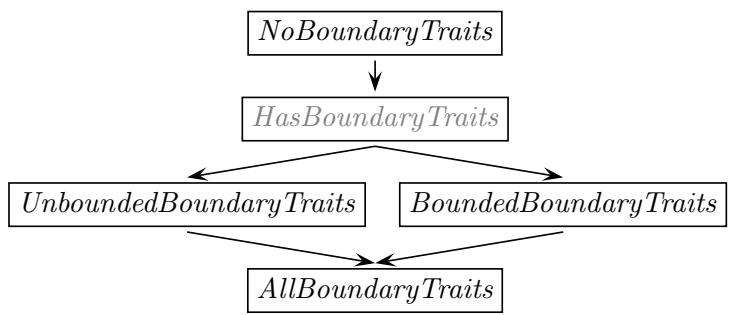

  \vspace{-12pt}
  \psset{treevshift=0,unit=1em,xunit=2em,yunit=1em,everytree={},
        etcratio=.75,triratio=.5}
     \jtree[everylabel=\sl,xunit=70pt,arrows=->]
     \! = {\NoBoundaryT}
           <vert>[scaleby=0 1]{\HasBoundaryT}@hbt !hbt
           .
     \!hbt = <right>[scaleby=1 1]{\BoundedBoundaryT}@bbt !bbt
 	  ^<left>[scaleby=1 1]{\UnboundedBoundaryT}@ubt !ubt
 	  .
     \!bbt = <left>[scaleby=1 1]{\AllBoundaryT}@abt !abt
 	  .
     \ncline{ubt:b}{abt:t}
     \endjtree
  \psset{treevshift=0,unit=1cm,xunit=1cm,yunit=1cm,everytree={},
         etcratio=.75,triratio=.5}
  \caption[Geometry-traits concept abstract refinement
    hierarchy]{\capStyle{Abstract refinement hierarchy of geometry-traits
      concepts for arrangement on surfaces.}}
  \label{fig:traits-abstract-hierarchy}
\end{wrapfigure}
The package introduces additional concepts, models of which are able
to handle unbounded curves or curves that reach the boundaries, the
endpoints of which coincide with contraction points or lie on
identification curves; see Figure~\ref{fig:traits-abstract-hierarchy}. The
``abstract'' \HasBoundaryTraits{} sub-hierarchy lists additional
predicates required to handle both curves that reach or approach the
boundaries of the parameter space. It has no models. The refined
\BoundedBoundaryTraits{} and \UnboundedBoundaryTraits{}
sub-hierarchies list additional predicates required to handle bounded
and unbounded curves, respectively. The geometry-traits class that
handles arcs of great circles models the \BoundedBoundaryTraits{}
concept, as the parameter space is bounded in all four
directions. Finally, the \AllBoundaryTraits{} sub-hierarchy refines
all the above. A model of this concept can handle unbounded curves
in some directions and bounded curves in others.

In the rest of this section all curves are required to be $u$-monotone.
The \HasBoundaryTraits{} concept requires the following additional
operations:\\
\rule{\textwidth}{1pt}
\begin{compactdesc}
\item[\ccode{Parameter\_space\_in\_x\_2}:]
  Given a curve $C$ and an index $d \in \{0,1\}$ that identifies one
  of its ends, determine the location of its pre-image in the domain
  $\parms$ along the $u$ dimension. More precisely, determine whether
  $u(\gamma(d))$ is equal to $\umin$, $\umax$, or falls in between. In
  case of an unbounded curve, determine whether
  $\lim_{t \rightarrow d} u(\gamma(t))$ is equal to $-\infty$ or $+\infty$.
\item[\ccode{Parameter\_space\_in\_y\_2}:]
  Given a curve $C$ and an index $d \in \{0,1\}$ that identifies one
  of its ends, determine the location of its pre-image in the domain
  $\parms$ along the $v$ dimension. More precisely, determine whether
  $v(\gamma(d))$ is equal to $\vmin$, $\vmax$, or falls in between. In
  case of an unbounded curve, determine whether
  $\lim_{t \rightarrow d} v(\gamma(t))$ is equal to $-\infty$ or $+\infty$.
\item[\ccode{Compare\_x\_near\_boundary\_2}:]
  There are two predicates:
  \begin{compactenum}
  \item
    Given a point $p$, the inverse of which
    is in $\parms \setminus \partial \parms$, a curve $C$, and an index
    $d \in \{0,1\}$ that identifies an end of $C$, compare the $u$
    coordinates of $p$ and a point along $C$ near its given end.
    More precisely, return $\mathrm{cmp}_u(p, \gamma(|d-\epsilon|))$,
    where $\epsilon > 0$ is infinitesimally small.
  \item
    Given two curves $C_1$ and $C_2$ and two corresponding indices
    $d_1, d_2 \in \{0,1\}$ that identify two ends of $C_1$ and $C_2$
    respectively, compare the $u$ coordinates of two points along $C_1$
    and $C_2$ respectively near their given ends. More precisely, return
    $\mathrm{cmp}_u(\gamma_1(|d_1-\epsilon_1|), \gamma_2(|d_2-\epsilon_2|))$,
    where $\epsilon_1, \epsilon_2 > 0$ are infinitesimally small.
  \end{compactenum}
  See Section~\ref{ssec:aos:geodesics:geometry-traits} for an example.
\item[\ccode{Compare\_y\_near\_boundary\_2}:]
  Given two curves $C_1$ and $C_2$, and a single index $d \in \{0,1\}$
  that identifies two ends of $C_1$ and $C_2$, compare the $v$
  coordinates of two points along $C_1$ and $C_2$ respectively near the
  given ends. More precisely, return
  $\mathrm{cmp}_v(\gamma_1(|d-\epsilon_1|), \gamma_2(|d-\epsilon_2|))$,
  where $\epsilon_1,\epsilon_2 > 0$ are infinitesimally small.
  See Section~\ref{ssec:aos:geodesics:geometry-traits} for an example.
\end{compactdesc}
\rule[5pt]{\textwidth}{1pt}

The \UnboundedBoundaryTraits{} concept requires the following additional
operations:\\
\rule{\textwidth}{1pt}
\begin{compactdesc}
\item[\ccode{Is\_bounded\_2}:]
  Given a curve $C$ and an index $d \in \{0,1\}$ that identifies an end
  of $C$, determine whether the curve end is bounded.
\end{compactdesc}
\rule[5pt]{\textwidth}{1pt}

The \BoundedBoundaryTraits{} concept requires the following additional
operations:\\
\rule{\textwidth}{1pt}
\begin{compactdesc}
\item[\ccode{Is\_on\_x\_identification\_2}:]
  This predicate applies only to a parameterization that has a vertical
  identification curve. Given a point $p$ (respectively a curve $C$),
  determine whether $p$ (respectively $C$) lies on the vertical and
  identified sides of the boundary. More precisely, determine whether
  $u_p \in \{\umin,\umax\}$. (Respectively, determine whether
  $u(\gamma(t)) \in \{\umin,\umax\}, \forall t \in [0,1]$.
\item[\ccode{Is\_on\_y\_identification\_2}:]
  This predicate applies only to a parameterization that has a horizontal
  identification curve. 
  Given a point $p$ (respectively a curve $C$), determine whether $p$
  (respectively $C$) lies on the horizontal and identified sides of the
  boundary. More precisely, determine whether $v_p \in \{\vmin,\vmax\}$ for
  all pre-images of $p$. (Respectively, determine whether
  $v(\gamma(t)) \in \{\vmin,\vmax\}, \forall t \in [0,1]$.)
\item[\ccode{Is\_on\_x\_contraction\_2}:]
  This predicate applies only to a parameterization that has a contracted
  vertical boundary. 
  determine whether $p$ coincides with a contraction point. More
  precisely, determine whether $u_p$ is equal to $\umin$ or $\umax$.
\item[\ccode{Is\_on\_y\_contraction\_2}:]
  This predicate applies only to a parameterization that has a contracted
  horizontal boundary. 
  determine whether $p$ coincides with a contraction point. More
  precisely, determine whether $v_p$ is equal to $\vmin$ or $\vmax$.
\item[\ccode{Compare\_x\_on\_identification\_2}:]
  This predicate applies only to a parameterization that has a horizontal
  identified sides of the boundary. 
  Given two points $p_1$ and $p_2$ that lie on the horizontal
  identification arc, compare their $u$-coordinates.
\item[\ccode{Compare\_y\_on\_identification\_2}:]
  This predicate applies only to a parameterization that has a vertical
  identified sides of the boundary.. 
  Given two points $p_1$ and $p_2$ that lie on the vertical
  identification arc, compare their $v$-coordinates.
\end{compactdesc}
\rule[5pt]{\textwidth}{1pt}

All traits-class operations are implemented as function
objects\index{object!function} ({\em
functors}\index{functor|see{object!function}}) according to \cgal's
guidelines. This allows extending the geometric types above, without
the need to redefine the methods that operate on them;
see~\cite{hhkps-aegk-07} for details on the extensible kernel. For a
detailed specification of the various concept
requirements see~\cite{wf-ndcap-05}.

\subsection{The Geometry-Traits Adaptor}
\label{ssec:aos:geometry-traits:adaptor}
The geometry-traits adaptor class-template implements geometric
operations that are not provided by a model of the geometry-traits
concept itself, using the operations supplied by a model of the
geometry-traits concept as basic blocks. It decreases the effort
required to develop geometry-traits models, and at the same time
increases the usability of the geometry-traits models, adapting them
for extended uses. A geometry-traits type is injected as a template
parameter into the adaptor class, which inherits from it,
centralizing all geometric operations. In cases where the efficiency
of methods is crucial, a developer has a way to override these methods
with optimized ones. 

For example, in order to determine whether a point $p$ is in the
$u$-range of a $u$-monotone curve $C$, the adaptor simply compares
$p$ to the endpoints of $C$. It checks whether $p$ lies to the right
of the left endpoint and to the left of the right endpoint.

In some cases, the geometry-traits adaptor class uses a
{\em tag-dispatching}\index{tag dispatching} mechanism to select the
appropriate implementation of a geometry-traits class operation. Tag
dispatching is a technique that uses function overloading to dispatch
a function {\em at compile time}, based on properties of the types of
the arguments the function
accepts~\citelinks{boost-generic-programming}. This mechanism enables
users to implement their traits class with a reduced or alternative
set of operations. The adaptor respects the tags listed below every
geometry-traits class must define.
\begin{compactdesc}
\item[\ccode{Has\_left\_category}:]
A Boolean tag that indicates whether the traits class provides the predicate
\ccode{compare\_y\_at\_x\_left}, which compares two
$u$-monotone curves to the \emph{left} of a common right endpoint.
This predicate is required only by some point-location strategies and
by the zone-computation algorithm.
While in some cases it is fairly easy for the traits-class implementer
to provide it, in other cases it can be rather difficult, or even quite
impossible. When this tag is false, the traits-class adaptor resorts
to a somewhat less efficient algorithm that uses (other) existing
traits-class predicates.
\item[\ccode{Has\_merge\_category}]
A Boolean tag that indicates whether a model of the
\concept{ArrangementXMonotoneTraits\_2} supports the merge of
$u$-monotone curves. If the tag is true, the traits class must
provide the two operations \ccode{merge\_2} and \ccode{is\_mergeable\_2}.
The merger operation is used to eliminate redundant features in the
arrangement. For example, if we have a T-shaped structure
formed by two line segments, and the vertical segment forming the
``leg'' is removed, then it is possible to merge the two horizontal
sub-segments. When the \emph{has-merge} tag is false, the adaptor
simply declares any pair of curves as non-mergeable. The only effect 
on the arrangement is that we cannot remove redundant vertices (of
degree two) following the deletion of edges.
\item[\ccode{Boundary\_category}:]
  A quadruple tag that categorizes the traits class according to the
  hierarchy described in Figure~\ref{fig:traits-abstract-hierarchy}.
  The adaptor provides empty implementations of the operations that are
  never invoked, yet required for smooth compilation.
\end{compactdesc}

\subsection{Geometry-Traits Models}
\label{ssec:aos:geometry-traits:models}
\begin{table}[!htp]
  \caption[Geometry-traits models]{\capStyle{Geometry-traits models}}
  \label{tab:geometry-traits-models}
  \centerline{
    \begin{tabular}{|l|l|l|l|l|}
      \hline
      \multicolumn{1}{|c|}{\textbf{Curve Family}} &
      \multicolumn{1}{c|}{\textbf{Degree}} &
      \multicolumn{1}{c|}{\textbf{Surface}} &
      \multicolumn{1}{c|}{\textbf{Boundness}} &
      \multicolumn{1}{c|}{\textbf{Arithmetic}}\\
      \hline
      \hline
      linear segment & 1 & plane & bounded & rational\\
      \hline
      linear segments, rays, lines & 1 & plane & unbounded & rational\\
      \hline
      piecewise linear curves & $\infty$ & plane & bounded & rational\\
      \hline
      circular arcs, linear segments & $\leq 2$ & plane & bounded & rational\\
      \hline
      algebraic curves & $\leq 2$ & plane & unbounded & algebraic\\
      \hline
      quadric projections & $\leq 4$ & plane & unbounded & algebraic\\
      \hline
      algebraic curves & $\leq 3$ & plane & unbounded & algebraic\\
      \hline
      algebraic curves & $\leq n$ & plane & unbounded & algebraic\\
      \hline
      planar \bez{} curves & $\leq n$ & plane & unbounded & algebraic\\
      \hline
      univariate polynomials & $\leq n$ & plane & unbounded & algebraic\\
      \hline
      geodesic arcs on sphere & $\leq 2$ & sphere & bounded & rational\\
      \hline
      quadric intersection arcs & $\leq 4$ & quadric & unbounded & algebraic\\
      \hline
      Dupin cyclide intersection arcs & $\leq n$ & Dupin cyclides &
      bounded & algebraic\\
      \hline
    \end{tabular}
  }
\end{table}
The large number of geometry-traits models already implemented enables
the construction and maintenance of arrangements induced by many
different types of curves. The package itself contains several models
of the geometry-traits concept. A few other models have been developed
by other groups of researchers. Models are distinguished not only by
the different families of curve they handle, but also by their
suitability for constructing and maintaining arrangements with
different characteristics. For example, there are two distinct models
that handle line segments~\cite{wfzh-aptac-07}. One caches information
in the curve records, while the other retains the minimal amount of
data. While operations on arrangements instantiated with the former
model consume more space, they are more efficient for dense
arrangements (namely, arrangements induced by curves with a large
number of intersections). Another model handles not only (bounded)
line-segments, but also rays and lines~\cite{bfhmw-scmtd-07,bfhmw-apsgf-09}.
There are traits models for non-linear curves, such as circular
arcs~\cite{cpt-eeccc-07}, conic
curves~\cite{w-hlfac-02,behhm-cbcab-02,ekptt-tpck-04}, cubic
curves~\cite{eksw-ceeccc-04}, and quartic
curves that are the projection of the intersection of two quadric
surfaces~\cite{bhksw-eceic-05}, and there are traits classes for arcs
of graphs of rational univariate polynomial
functions~\cite{lpt-cakra-08,wfzh-aptac-07}. There is even a traits
class that handles algebraic curves of arbitrary
degrees~\cite{ek-eeaaa-08}. There is also a traits class that handles
\bez{} curves~\cite{hw-execa-07}. There is a traits class for geodesic
arcs embedded on the sphere~\cite{fsh-agas-08,bfhks-apsca-09}, (see
Section~\ref{sec:aos:geodesics}), another one for intersections of
quadrics embedded on a quadric~\cite{bfhmw-scmtd-07,bfhks-apsca-09},
and another one for intersections of arbitrary algebraic surfaces with
a \Index{Dupin cyclide} embedded on the Dupin
cyclide~\cite{bk-eatdc-08,bfhks-apsca-09}. Finally, there is a model
that handles continuous piecewise linear curves, referred to as
polylines\index{polyline}, (see
Section~\ref{ssec:aos:geometry-traits:models:polylines}).

\subsection{Geometry-Traits Extension}
\label{ssec:aos:geometry-traits:extension}
Traits-class decorators\index{decorator} are used to extend the
geometric entities defined by the traits class with additional,
possibly non-geometric, data. An alternative way to achieve this is to
extend the geometric types of the kernel, as the kernel is fully
adaptable and extensible~\cite{hhkps-aegk-07}. However, this
indiscriminating extension may lead to an undue space-consumption, as
every geometric object is extended, regardless of its use. It also
requires nontrivial knowledge about the kernel structure and the
techniques to extend it.

There is a decorator that enables the extension of the (general) curve
and the $u$-monotone curve types with distinct types of data, and
there is a convenient one, where the data attached to the $u$-monotone
curve type is a set of objects, the type of which is attached to the
(general) curve type. This set usually contains a single data object,
unless the $u$-monotone curve corresponds to an overlapping section of
two curves or more. When a curve with a data field $d$ is split into
$u$-monotone subcurves, each subcurve is associated with a singleton
set $\{d\}$. When two $u$-monotone curves overlap, the decorator takes
the union of their data sets, and associates it with the resulting
overlapping subcurve.

\subsection{A Geometry-Traits Model that Handles Polylines}
\label{ssec:aos:geometry-traits:models:polylines}
Polylines\index{polyline} are of particular interest, as they can be
used to approximate more complex curves. At the same time handling
them is easier than handling higher-degree algebraic curves, as
rational arithmetic is sufficient to carry out exact computations
on polylines.

The geometry-traits model that handles polylines is a class-template
called\linebreak 
\ccode{Arr\_polyline\_traits\_2}. It must be instantiated with a
geometry-traits class that models the concept
\concept{ArrangementLinearTraits}. This concept refines the 
\concept{ArrangementTraits} concept, as it adds a variant --- it must
handle line segments. This variant cannot be enforced by the compiler,
but rather be verified at run time. A polyline curve is represented as
a vector of \ccode{SegmentTraits::X\_monotone\_curve\_2} objects
(namely segments). The polyline-traits class does not perform
any geometric operations directly. Instead, it solely relies on the
functionality of the instantiated segment-traits class. For example,
when we need to determine the position of a point with respect to a
$u$-monotone polyline, we use binary search to locate the relevant
segment that contains the point in its $u$-range, then we compute
the position of the point with respect to this segment. Thus,
operations on $u$-monotone polylines of size $m$ typically take
$O(\log m)$ time.

Users are free to choose the underlying segment-traits class based
on the number of expected intersection points (see discussion above
in Section~\ref{ssec:aos:geometry-traits:models}). Moreover, it is
possible to instantiate the polyline-traits class-template with a
traits class that handles segments with some additional data attached
to them (see Section~\ref{ssec:aos:geometry-traits:extension}). This
makes it possible to associate different data objects with the
different segments that compose a polyline.

\section{Arrangements of Geodesic Arcs on the Sphere}
\label{sec:aos:geodesics}
In this section we concentrate on the particular category of
arrangements embedded on surfaces, where the embedding space is the
sphere, and the inducing objects are geodesic arcs. There is an analogy
between this class of arrangements and the class of planar arrangements
induced by linear curves (i.e., segments, rays, and lines), as
properties of linear curves in the plane can be often, but not always,
adapted to geodesic arcs on the sphere.

An arrangement of geodesic arcs embedded on the sphere is defined as
an instance of the \aos{} class-template instantiated with appropriate
geometry- and topology-traits classes, namely
\ccode{Arr\_geodesic\_arc\_on\_sphere\_traits\_2} and\linebreak
\ccode{Arr\_spherical\_topology\_traits\_2}, respectively.
The geometry-traits class is tailored to handle geodesic arcs as
efficiently as possible, and defines the parameterization used:
$\parms = [-\pi + \alpha, \pi + \alpha] \times [-\frac{\pi}{2}, \frac{\pi}{2}]$,
$f_{S}(u, v) = (\cos u \cos v, \sin u \cos v, \sin v)$, where $\alpha$
is a variable that must be set at compile time, and is by default $0$.
This parameterization induces two contraction points
$p_s = (0, 0, -1)$ and $p_n = (0, 0, 1)$, referred to as the south and
north pole respectively, and an identification curve that coincides
with the opposite Prime (Greenwich) Meridian. We developed the
topology-traits class to support not only arrangements of geodesic arcs,
but any type of curves embedded on the sphere parameterized as above,
without compromising the performance of the operations gathered
in the traits class. We hope that this topology-traits class will come
in handy in the future for constructing and maintaining arrangements
induced by types other than geodesic arcs, such as general circular
arcs, which appear in arrangements induced by intersections of spheres
embedded on the sphere~\cite{cl-ceacs-06}. The topology-traits
class initializes the \dcel{} to have a single face, the embedding of which,
is the entire sphere. It is designed to retain the variant that this
face always contains the north pole during modifications the
arrangement may undergo. The topology-traits class is required, for example,
to inform its users that the top and bottom boundaries of the
parameter space are contracted and the left and right boundaries are
identified. It maintains a search structure of vertices that coincide
with the contraction points or lie on the identification arc.

\begin{wrapfigure}[7]{r}{3.5cm}
  \vspace{-5pt}
  \epsfig{figure=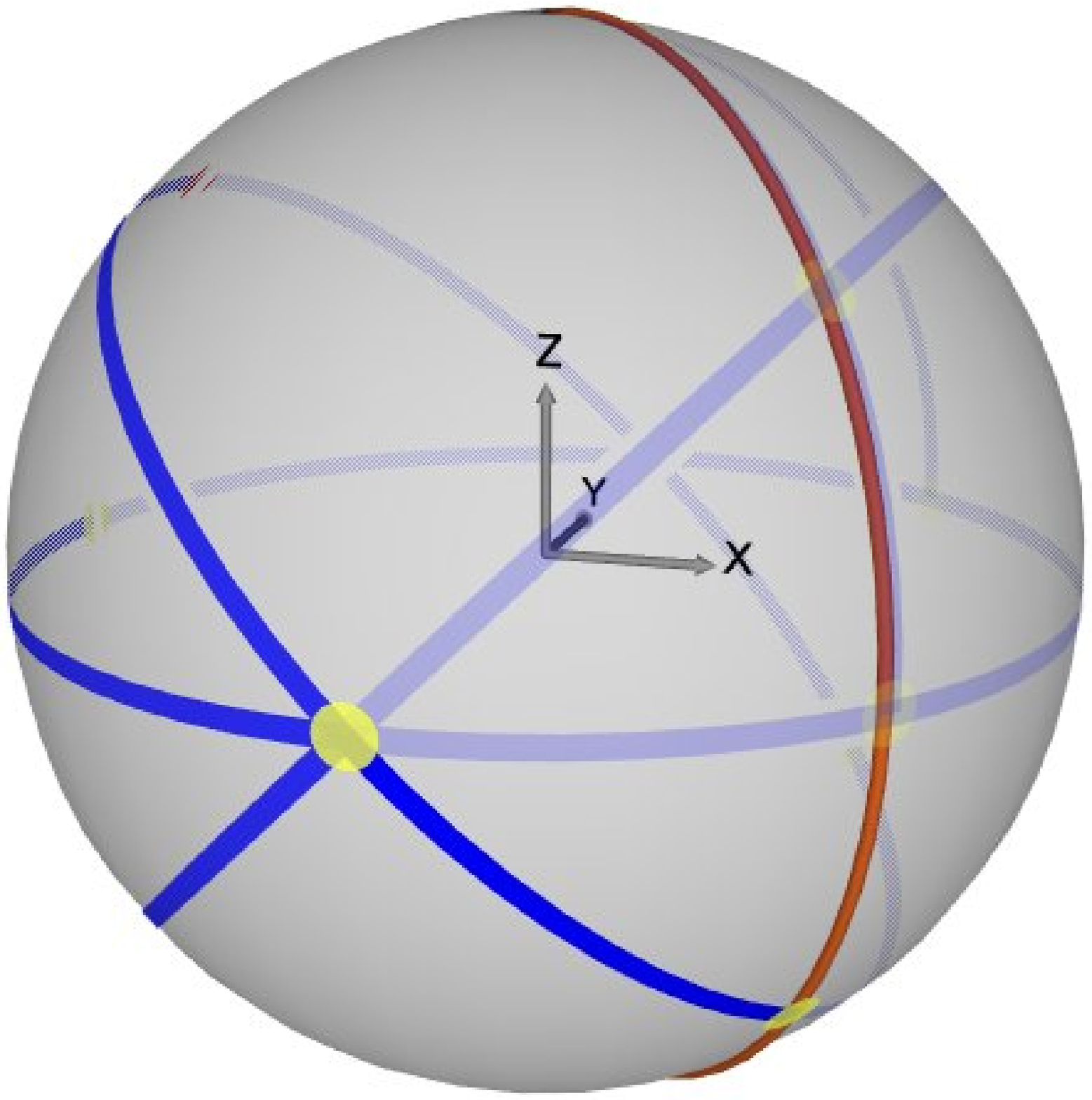,width=3.5cm,silent=}
\end{wrapfigure}
The figure to the right is a snapshot of a movie~\cite{fsh-agas-08}
that demonstrates, among the other, the sweep-line procedure carried
out on the sphere. The red vertical arc that connects the poles is the
imaginary sweep-line. The yellow vertex have been processed
already. The dark blue arcs are curves that have been processed
already and inserted into the arrangement. The light blue arcs are
curves that are to be processed as the sweep line advances.

\subsection{The Geometry-Traits Model}
\label{ssec:aos:geodesics:geometry-traits}
The geometry-traits class for geodesic arcs on the sphere is
parameterized with a geometric kernel~\cite{hhkps-aegk-07} that
encapsulates the number type used to represent coordinates of
geometric objects and to carry out algebraic operations on those
objects. The implementation handles all degeneracies, and is exact,
as long as the underlying number type is rational, even though the
embedding surface is a sphere. We are able to use high-performance
kernel models instantiated with exact rational number-types for the
implementation of this geometry-traits class, as exact rational
arithmetic suffices to carry out all necessary algebraic operations.
The ability to robustly construct arrangements of geodesic arcs on
the sphere, and robustly apply operations on them using only (exact)
rational arithmetic is a key property that enables an efficient
implementation.

\newcounter{aos:geodesics:cntr}
A point in our arrangement is defined to be an unnormalized vector in
$\rrr$, representing the place where the ray emanating from the origin
in the relevant direction pierces the sphere. An arc of a great circle is
represented by its two endpoints, and by the plane that contains the
endpoint vectord and goes through the origin. The orientation of the
plane and the source and target points determine which one of the two
great arcs is considered.

The point type is extended with an enumeration that indicates whether the
vector
\setcounter{aos:geodesics:cntr}{1}(\roman{aos:geodesics:cntr}) pierces the
south pole,
\addtocounter{aos:geodesics:cntr}{1}(\roman{aos:geodesics:cntr}) pierces
the north pole,
\addtocounter{aos:geodesics:cntr}{1}(\roman{aos:geodesics:cntr}) intersects
the identification arc, or
\addtocounter{aos:geodesics:cntr}{1}(\roman{aos:geodesics:cntr}) is in any
other direction. An arc of a great circle is extended with three Boolean
flags that indicate whether any one of the $x,y,z$ coordinates of the
normal of the plane that defines the arc vanishes. These flags are
used to minimize the number of invocations of the geometry-traits
operations, which has a drastic effect on the performance of arrangement
operations at the account of a slight increase in space consumption.
This representation enables an exact yet efficient implementation of
all geometric operations required by the geometry-traits concept using
exact rational arithmetic, as normalizing vectors (that represent
directions and plane normals) is completely avoided.

We describe in details four predicates, namely
\ccode{Compare\_x\_2}, \ccode{Compare\_xy\_2},\linebreak
\ccode{Compare\_x\_near\_boundary\_2}, and
\ccode{Compare\_y\_near\_boundary\_2};
see Section~\ref{sec:aos:geometry-traits} for the complete set of the
concept requirements. The former compares two points $p_1$ and $p_2$
by their $u$-coordinates. The concept admits the assumption that the
input points do not coincide with the contraction points and do not
lie on the identification arc. Recall that points are in fact
unnormalized vectors that represent directions in $\rrr$. We project
$p_1$ and $p_2$ onto the $xy$-plane to obtain two-dimensional
unnormalized vectors $\hat{p_1}$ and $\hat{p_2}$, respectively. We
compute the intersection between the identification arc and the
$xy$-plane to obtain a third two-dimensional unnormalized vector
$\hat{d}$. Finally, we test whether $\hat{d}$ is reached strictly
before
\begin{wrapfigure}[15]{r}{3.8cm}
  \centerline{
    \begin{tabular}{c}
      \epsfig{figure=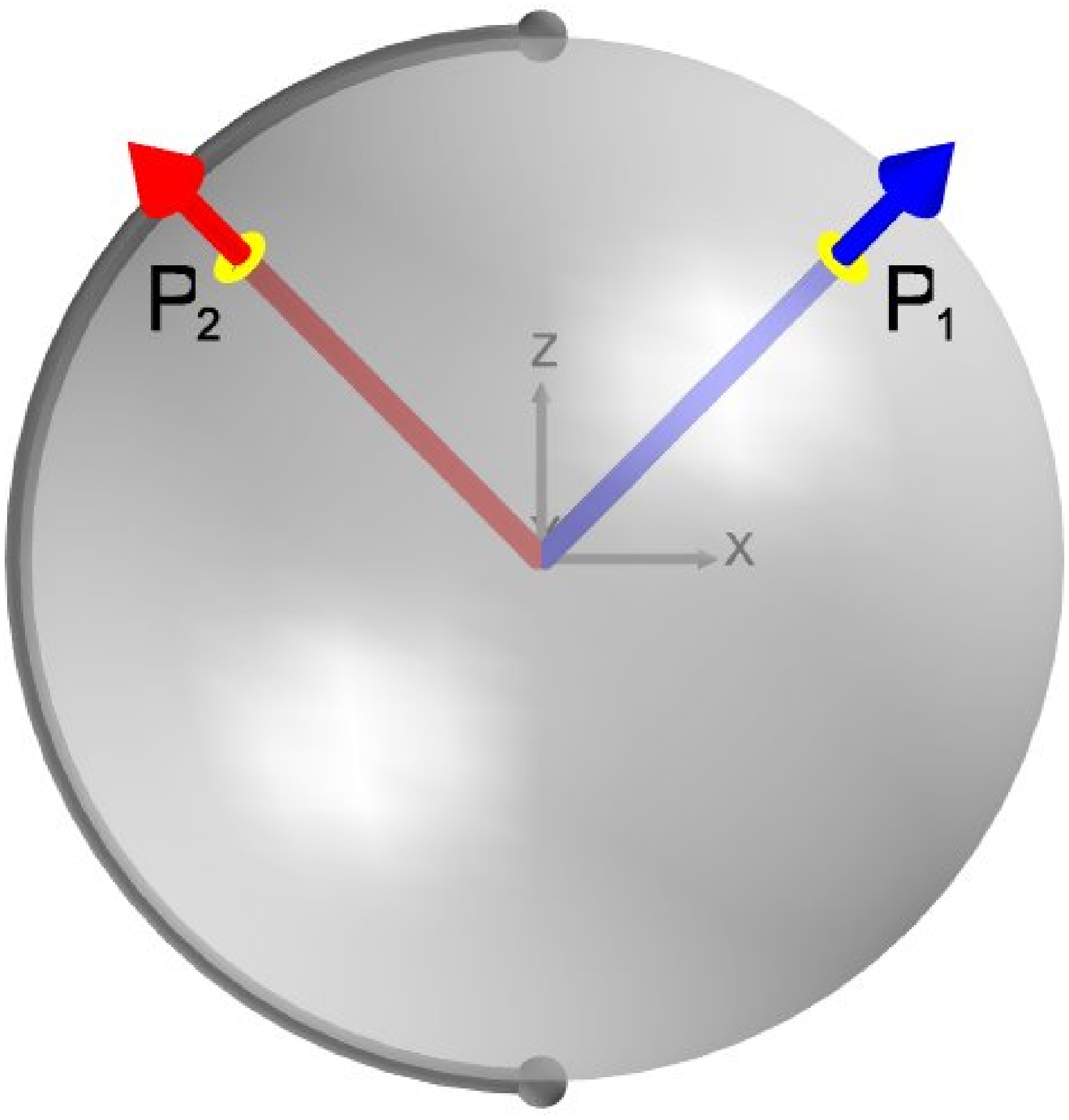,width=3.5cm,silent=}\\
      \pspicture[](-1.9,-1.9)(1.9,1.9)
      \psset{unit=1cm,linewidth=1pt}
      \pscircle[linewidth=1pt](0,0){1.5}
      \psaxes[labels=none]{->}(0,0)(1.5,1.5)
      \uput[0]{0}(1.5,0){$x$}
      \uput[90]{0}(0,1.5){$y$}
      \psline{->}(0,0)(-1.5,0)
      \uput[180]{0}(-1.5,0){$\hat{d}$}
      \psline{->}(0,0)(1.06,-1.06)
      \uput[-45]{0}(1.06,-1.06){$\hat{p_1}$}
      \psline{->}(0,0)(-1.06,-1.06)
      \uput[-135]{0}(-1.06,-1.06){$\hat{p_2}$}
      \endpspicture
    \end{tabular}
  }
\end{wrapfigure}
$\hat{p_2}$ is reached, while rotating counterclockwise
starting at $\hat{p_1}$. This geometric operation is supported by
every geometric kernel of \cgal{}. In the figure on the right $\hat{d}$
is reached strictly before $\hat{p_2}$ is reached. Therefore, the
$u$-coordinate of $p_1$ is larger than the $u$-coordinate of $p_2$.

The predicate \ccode{Compare\_xy\_2} compares two points $p_1$ and
$p_2$ lexicographically. It first applies \ccode{Compare\_x\_2} to
compare the $u$-coordinates of the two points. If the $u$-coordinates
are equal, it applies a predicate that compares the $v$-coordinates
of two points with identical $u$-coordinates, referred to as
\ccode{Compare\_y\_2}. This predicate first compares the signs of the
$z$-coordinates of the two unnormalized input vectors. If they differ,
it concludes that the point with the positive $z$-coordinate has a
$v$-coordinate that is larger than the $v$-coordinate of the point with
the negative $z$-coordinate. If the signs are identical, it compares the
squares of their normalized $z$-coordinates, essentially avoiding
the square-root operation. If the sign of the (unnormalized)
$z$-coordinates of both points is positive (resp. negative), the point
with the larger (resp. smaller) square of normalized $z$-coordinate has
a larger $v$-coordinate.

\begin{wrapfigure}[8]{r}{3.5cm}
  \vspace{-15pt}
  \centerline{
    \epsfig{figure=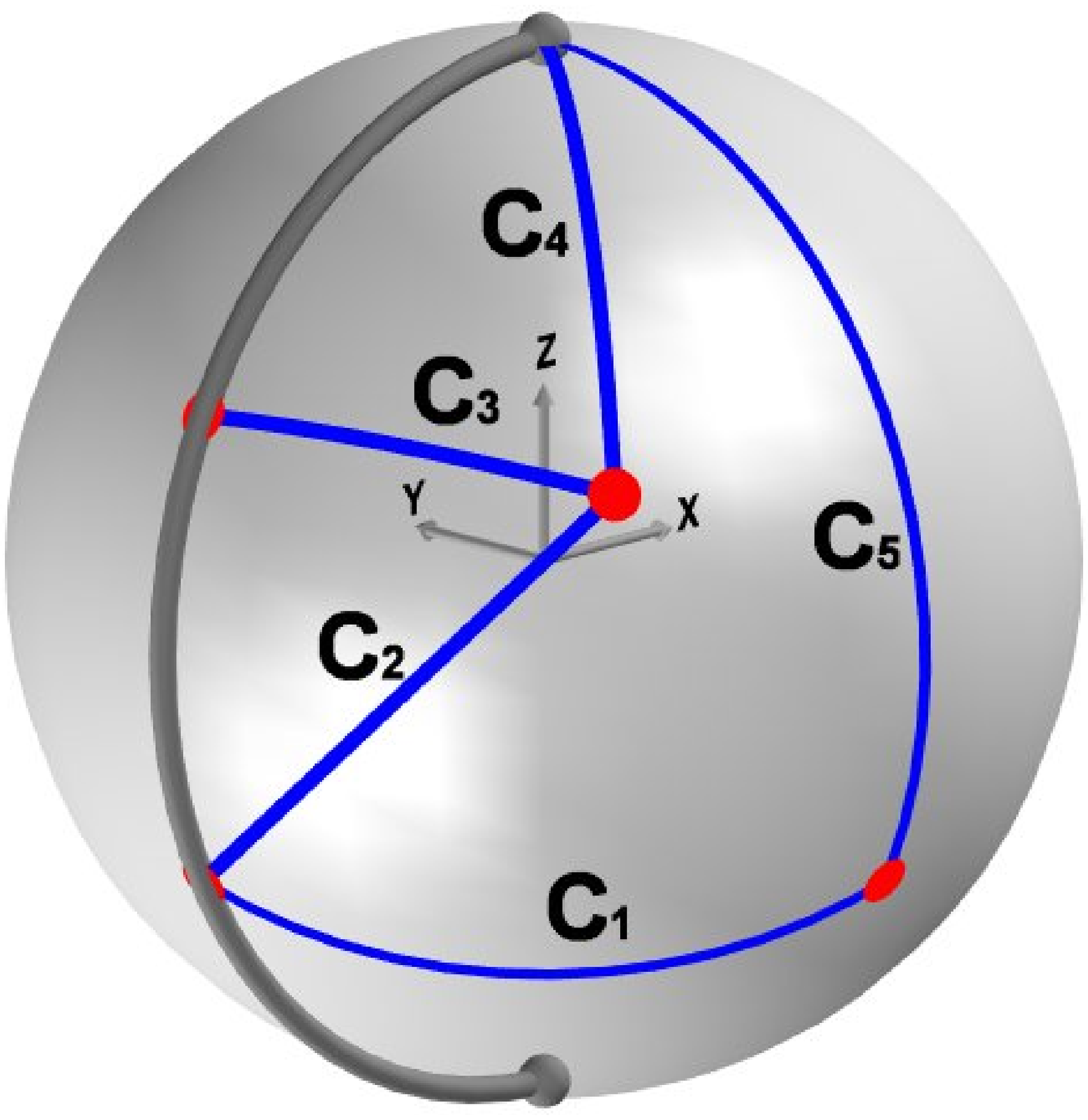,width=3.5cm,silent=}
  }
\end{wrapfigure}
The predicates above accept points, the pre-images of which,
lie in the interior of the parameter space. However, there is also a
need to lexicographically compare the ends of arcs, the
pre-images of which reach the boundary of the parameter space. The
predicate \ccode{Compare\_x\_near\_boundary\_2} accepts either
\begin{inparaenum}[(i)]
\item~a point, the pre-image of which lies in the interior of the
parameter space, and an arc end, or
\item~two arc ends.
\end{inparaenum}
Such an arc end is provided by a vertical arc and an index that
identifies one of the two ends of the arc, and must coincide with
one of the contraction points. The first variant compares the
$u$-coordinates of the input point and a point along the input arc
near its given end, whereas the second variant compares the
$u$-coordinates of two points along the input arcs near their
respective given ends. Recall, that the $u$-coordinates of all points
along a vertical arc are the same ($C_4$ and $C_5$ in the figure above).
Thus, we can compare the $u$-coordinate of an arbitrary point on a
vertical arc that lie inside the parameter space. For example, for the
second case, we compare the two vectors perpendicular to the normals to
the planes that define the vertical arcs, respectively, e.g., 
the $u$-coordinate of a point on the arc $C_4$ near its top end is
smaller than the $u$-coordinate of a point on the arc $C_5$ near its
top end, and in particular it is smaller than the $u$-coordinate of the
bottom end of $C_5$.
The \ccode{Compare\_y\_near\_boundary\_2} predicate compares the
$v$-coordinate of two arcs ends, the pre-images of which lie on the
same (left or right) identified side of the boundary of the parameter
space. We use the aforementioned \ccode{Compare\_y\_2} predicate to
compare the end points. If the points are equal, we compare the
normals to the plane that define the arcs. In our example, the left
end of $C_1$ is smaller than the left end of $C_2$, which is smaller
than the left end of $C_3$.

All the required geometric types listed in the traits concept are
maintained using only rational numbers. All required geometric
operations are implemented using only rational
arithmetic.\footnote{Points are represented as unnormalized vectors;
The coordinates of such points are converted into machine
floating-point only for rendering purposes.} Degeneracies, such as
overlapping arcs that occur during intersection computation, are
properly handled. The end result is a robust yet efficient implementation.

\section{Applications}
\label{sec:aos:applications}
Arrangement on surfaces have many applications this thesis falls short
to list. However, we do list a few samples we were involved (or
remotely involved) with the implementation of which, i.e., Regularized
Boolean Set-Operations, Envelopes of Surfaces, and Voronoi diagrams.
\Index{Minkowski sum} construction is covered in details in the following
chapter. The Boolean set-operation results, minimization diagrams,
maximization diagrams and Voronoi diagrams, (see
Section~\ref{ssec:aos:applications:env} for definitions), and
Minkowski-sums are all represented as arrangements, and as such can be
passed as input to consecutive operations on arrangements supported by
the \aos{} package and its derivatives.

\subsection{Regularized Boolean Set-Operations}
\label{ssec:aos:applications:bso}
Together with R.\ Wein and B.\ Zuckerman we have developed a package
that supports \Index{Boolean set-operations} on point sets bounded by
$u$-monotone curves embedded on two-dimensional parametric surfaces
in $\rrr$~\cite{cgal:fwzh-rbso2-07}. In particular, it contains the
implementation of {\em regularized} Boolean 
set-operations\index{Boolean set-operations!regularized}, intersection
predicates, and  point containment predicates. A regularized Boolean
set-operation $\mbox{op}^*$ can be obtained by first taking the
interior of the resulting point-set of an {\em ordinary} Boolean
set-operation $(P\ \mbox{op}\ Q)$ and then by taking the
closure~\cite{h-sm-04}. That is, 
$P\ \mbox{op}^*\ Q = \mbox{closure}(\mbox{interior} (P\ \mbox{op}\ Q))$.
Regularized Boolean set-operations appear in constructive solid
geometry\index{constructive solid geometry|see{CSG}} (\Index{CSG}),
because regular sets are closed under regularized Boolean set-operations,
and because regularization eliminates lower dimensional features, namely
isolated vertices and ``antennas'' (namely, dangling edges), thus
simplifying and restricting the representation to physically meaningful
solids. Ordinary Boolean set-operations, which distinguish between the
interior and the boundary of a polygon, are not implemented within this
package. However, we implemented a specialized ordinary union operation
as part of an assembly partitioning application; see
Chapter~\ref{chap:assem_plan}.

\begin{wrapfigure}{r}{7.6cm}
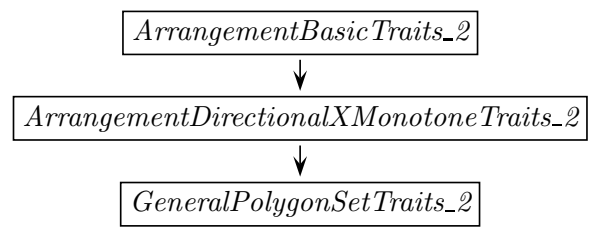

  \vspace{-12pt}
  \psset{treevshift=0,unit=1em,xunit=2em,yunit=1em,everytree={},
        etcratio=.75,triratio=.5}
     \jtree[everylabel=\sl,xunit=70pt,arrows=->]
     \! = {\ArrangementBasicT}
           <vert>[scaleby=0 1]{\ArrangementDirectionalXMonotoneT}@dxt !dxt
           .
     \!dxt = <vert>[scaleby=1 1]{\GeneralPolygonSetT}@gpst !gpst
 	   .
     \endjtree
  \psset{treevshift=0,unit=1cm,xunit=1cm,yunit=1cm,everytree={},
         etcratio=.75,triratio=.5}
  \caption[Geometry-traits concept refinement hierarchy for Boolean
    set-operations]{\capStyle{Refinement hierarchy of geometry traits
      concepts for Boolean set-operations.}}
  \label{fig:bso-traits-hierarchy}
\end{wrapfigure}
The operands and results of the regularized operations are general
polygons that may have holes. The boundaries of a general polygon and
of holes, if present, are general $u$-monotone curves. The \aos{}
class is employed to represent a point set embedded on a two-dimensional
parametric surface as an arrangement. A point set is typically
constructed from a single general polygon or a collection of interior
disjoint general polygons. The underlying arrangement must be
instantiated with a geometry traits that models the concept
\GeneralPolygonSetTraits. This concept refines the concept
\ArrangementDirectionalXMonotoneTraits, which in turns refines the
concept \ArrangementBasicTraits{}
(see Section~\ref{sec:aos:geometry-traits}).

The \ArrangementDirectionalXMonotoneTraits{} concept treats its
$u$-monotone curves as objects directed from one endpoint appointed
to be the source to the other endpoint appointed to be the target.
Thus, it requires few additional operations on $u$-monotone curves:\\
\rule{\textwidth}{1pt}
\begin{compactdesc}
\item[\ccode{Compare\_endpoints\_xy\_2}:]
  Given a $u$-monotone curve $C$, compare the source and target
  points of $C$ lexicographically.
\item[\ccode{Construct\_opposite\_2}:]
  Given a $u$-monotone curve $C$, construct the opposite curve of $C$
  (namely, swap the source and target endpoints of $C$).
\item[\ccode{Intersection\_2}:]
  Compute the intersections between two given curves $C_1$ and $C_2$.
\item[\ccode{Merge\_2}:]
  Merge two mergeable curves $C_1$ and $C_2$ into a single curve $C$.
\item[\ccode{Is\_mergeable\_2}:]
  Determine whether two curves $C_1$ and $C_2$ that share a common
  endpoint can be merged into a single continuous curve representable
  by the traits class.
\end{compactdesc}
\rule[5pt]{\textwidth}{1pt}
Most traits classes bundled in the \aos{} package and distributed with
\cgal, are models of the concept
\ArrangementDirectionalXMonotoneTraits.\footnote{The
\ccode{Arr\_polyline\_traits\_2} traits class is not a model of the
\ArrangementDirectionalXMonotoneTraits{} concept, as the
$u$-monotone curve it defines is always directed from left to
right. Thus, an opposite curve cannot be constructed.}

The \GeneralPolygonSetTraits{} concept requires its models to define
a type that represents a general polygon and another one that represents
general polygon with holes in addition to the \ccode{Point\_2} and
\ccode{X\_monotone\_curve\_2} types that must be defined by all models
of the generalized concept. It also requires the provision of several
operations that operate on these two types listed below.\\
\rule{\textwidth}{1pt}
\begin{compactdesc}
\item[\ccode{Construct\_polygon\_2}:]
  Given a sequence $\calC$ of $u$-monotone curves, construct a general
  polygon that has $\calC$ as its outer boundary.
\item[\ccode{Construct\_curves\_2}:]
  Given a general polygon $P$, obtain the sequence of $u$-monotone
  curves that comprise the boundary of $P$.
\item[\ccode{Construct\_general\_polygon\_with\_holes\_2}:]
  Given a general polygon $P$ and a (possibly empty) set of holes
  $\calH$, construct a general polygon with holes that has $P$ as its
  outer boundary and $\calH$ as its holes.
\item[\ccode{Construct\_outer\_boundary}:]
  Given a general polygon-with-holes $P$, obtain the general polygon
  that is its outer boundary.
\item[\ccode{Construct\_holes}:]
  Given a general polygon-with-holes $P$, obtain the holes of $P$ if any.
\item[\ccode{Is\_unbounded}:]
  Given a general polygon-with-holes $P$, determine whether it has an
  outer boundary.
\end{compactdesc}
\rule[5pt]{\textwidth}{1pt}

\subsection{Envelopes}
\label{ssec:aos:applications:env}
Lower envelopes of functions on parametric surfaces are defined in a way
similar to the standard definition of lower envelopes of bivariate
functions in space~\cite{h-a-04}. Let $S$ be a two-dimensional parametric
surface is $\rrr$. Given a set of bivariate functions
$F = \{f_1, \ldots, f_n\}$, where $f_i:S \to \reals$, their
{\em lower envelope}\index{envelope!lower} $\Psi(u, v)$ is defined to be
their point-wise minimum $\Psi(u, v) = \min_{1 \leq i \leq n} f_{i}(u, v)$.
The {\em \Index{minimization diagram}} \mindia{F} of the set $F$ is the
two-dimensional map obtained by the projection of the lower
envelope onto $S$. Upper envelopes and maximization diagrams are
defined analogously for the point-wise maximum of the functions.

\begin{figure}[!htp]
  \vspace{-10pt}
  \begin{tabular}{ccc}
    \epsfig{figure=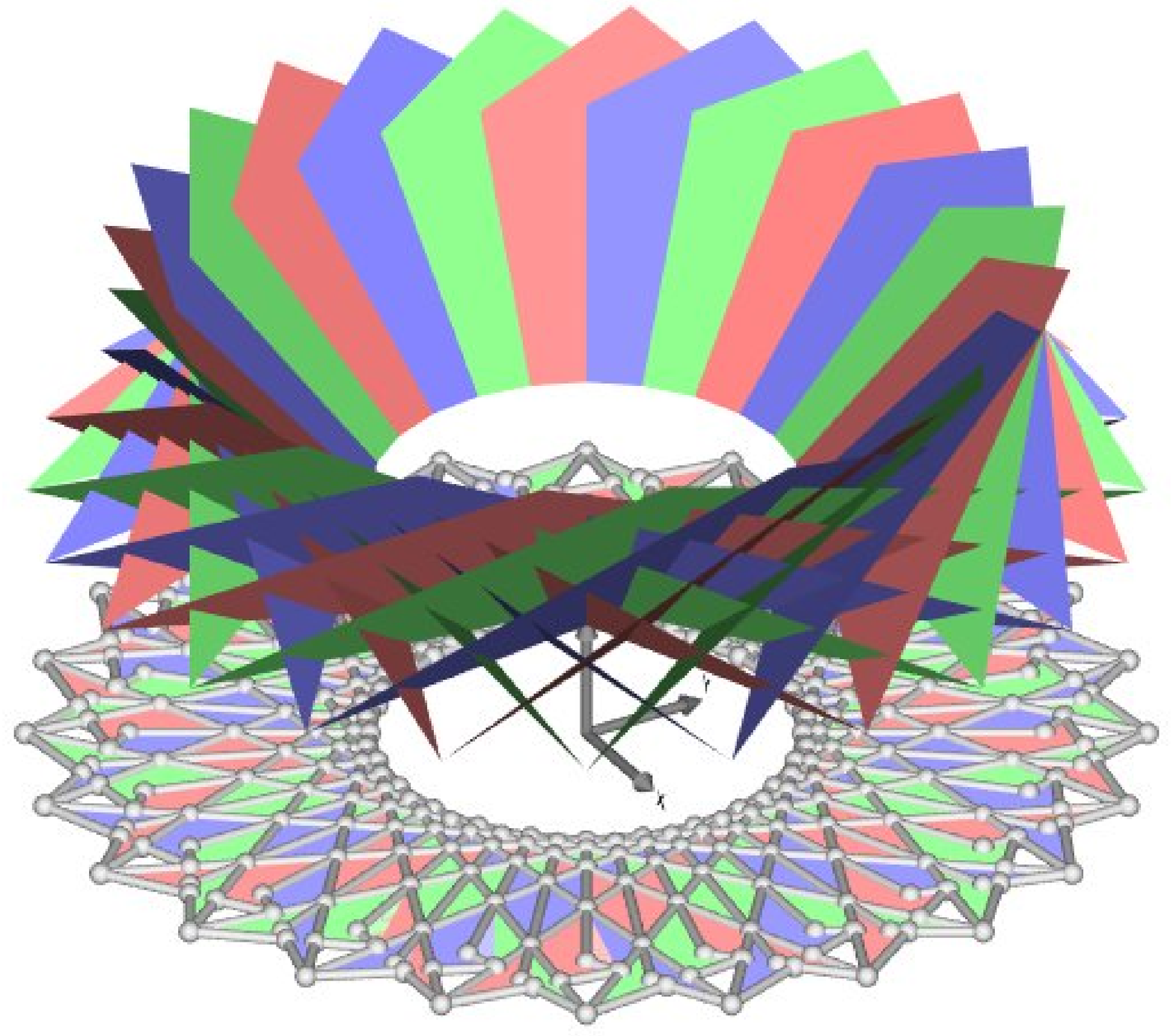,width=5cm,silent=} &
    \epsfig{figure=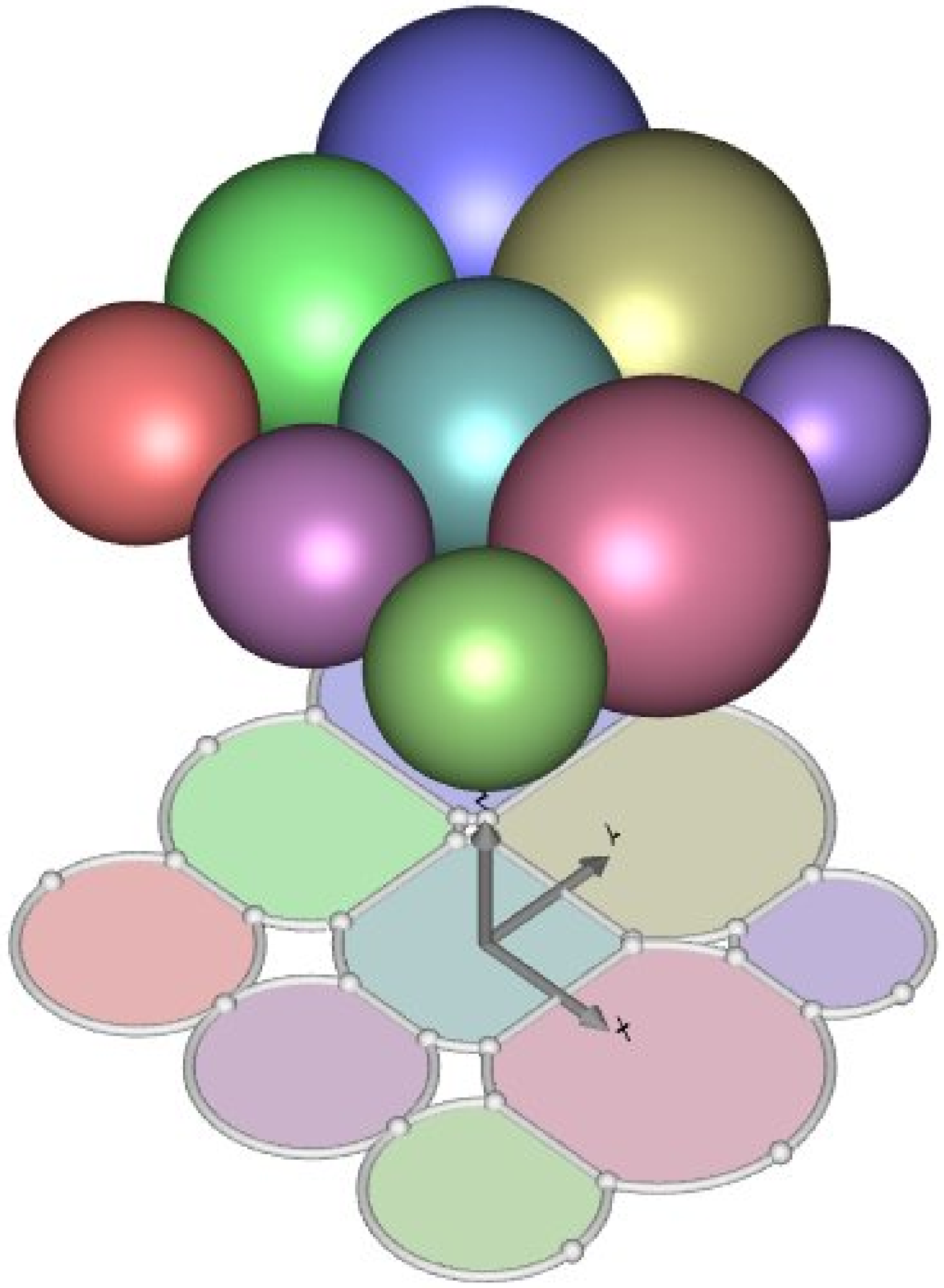,width=5cm,silent=} &
    \epsfig{figure=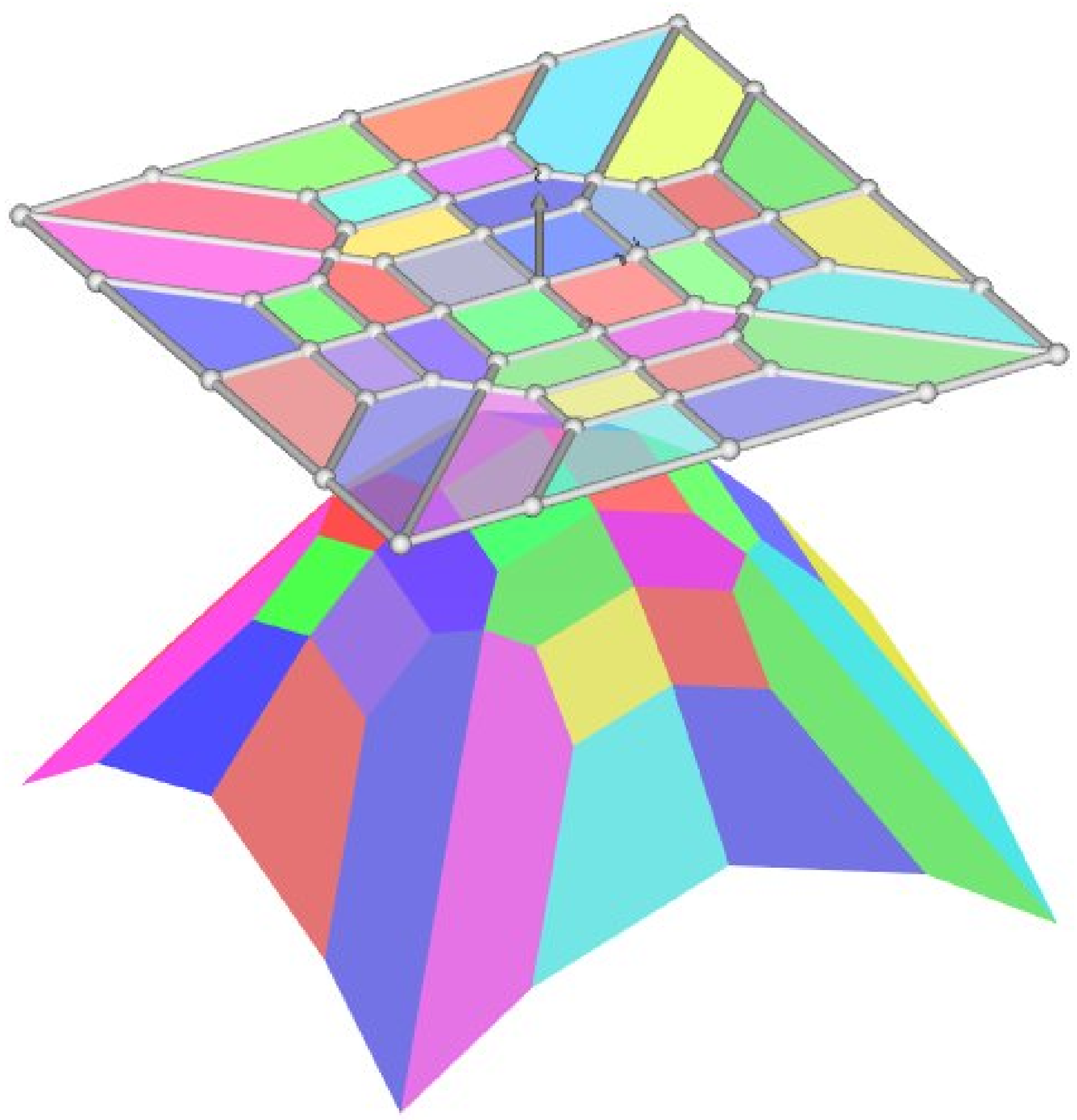,width=5cm,silent=}\\
      (a) & (b) & (c)
  \end{tabular}
  \caption[Lower envelopes of various types of surfaces]
  {\capStyle{Lower envelopes of various types of surfaces
      (a) The lower envelope of triangles.
      (b) The lower envelope of spheres.
      (c) The lower envelope of planes. (The minimization diagrams is
      drawn above the planes for clarity.)}}
  \label{fig:envelope}
\end{figure}
The \cEnvelope{} package of \cgal{}~\cite{m-rgece-06,cgal:mwz-e3-07}
computes the lower (or the upper) envelope of a set of surfaces in $\rrr$.
It is based on the \aos{} package, and like the base package, it handles
degenerate input, and produces exact results. An arrangement data-structure
is used to represent the resulting minimization diagram~\cite{h-a-04}.
The envelope computation is enabled by a traits class --- a model of
the concept \concept{EnvelopeTraits\_3}, which refines the
\concept{ArrangementTraits\_2} concept. The \cEnvelope{} package
currently contains three models of the \concept{EnvelopeTraits\_3}
concept that can be used to compute the envelope of triangles, spheres,
and planes, respectively; see Figure~\ref{fig:envelope} for an
illustration. Other models of the \concept{EnvelopeTraits\_3} concept
have been developed, for example, a traits class that enables the
computation of the envelope of a set of quadric surfaces~\cite{bm-ceq-07}.

\subsection{Voronoi Diagrams}
\label{ssec:aos:applications:voronoi}
Voronoi diagrams were thoroughly investigated, and were used to solve
many geometric problems, since introduced by Shamos and Hoey to the
field of computer science~\cite{shamos-hoey-75} (although their origin
dates back centuries ago; see~\cite{obsc-stcav-00}). The concept of
computing cells of points that are closer to a certain object than to
any other object, among a finite number of objects, was extended to
various kinds of geometric sites, ambient spaces, and distance
functions, e.g., power diagrams of circles in the plane,
multiplicatively weighted Voronoi diagrams, additively weighted
Voronoi diagrams~\cite{ak-vd-00}. This space decomposition is strongly
connected to arrangements~\cite{voronoi-es-86}, a property that yields
a very general approach for computing Voronoi diagrams.

Given a set of $n$ points $P = \{p_1, \ldots, p_n\}$, $p_i \in S$, we
define
$R(P, p_i) = \{x \in S \mid \distance{x}{p_i} < \distance{x}{p_j}, j \neq i\}$,
where $\distance{p_i}{p_j}$ is some given
\Index{distance function}.\footnote{In certain cases, the distance to
a site may depend on various parameters associated with the site,
e.g., in the cases of \mobius{} diagrams or anisotropic diagrams.}
$R(P, p_i)$ is the region of all points that are closer to $p_i$ then to
any other point in $P$. The Voronoi diagram\index{diagram!Voronoi} of $P$
over $S$ is defined to be the regions
$R(P, p_1), R(P, p_2), \ldots, R(P, p_n)$ and their boundaries.
Edelsbrunner and Seidel~\cite{voronoi-es-86} observed the connection
between Voronoi diagrams in $\reals^{d}$ and lower envelopes in
$\reals^{d+1}$ of the corresponding distance functions to the sites. From
the above definitions it is clear that if \mbox{$f_i: S \to \reals$} is set
to be $f_i(x) = \distance{x}{p_i}$, for $i = 1, \ldots, n$, then the
minimization diagram of $\{f_1, \ldots, f_n\}$ over $S$ is exactly the
Voronoi diagram of $P$ over $S$.

\begin{wrapfigure}[13]{r}{7.5cm}
  \vspace{-10pt}
  \begin{tabular}{cc}
    \epsfig{figure=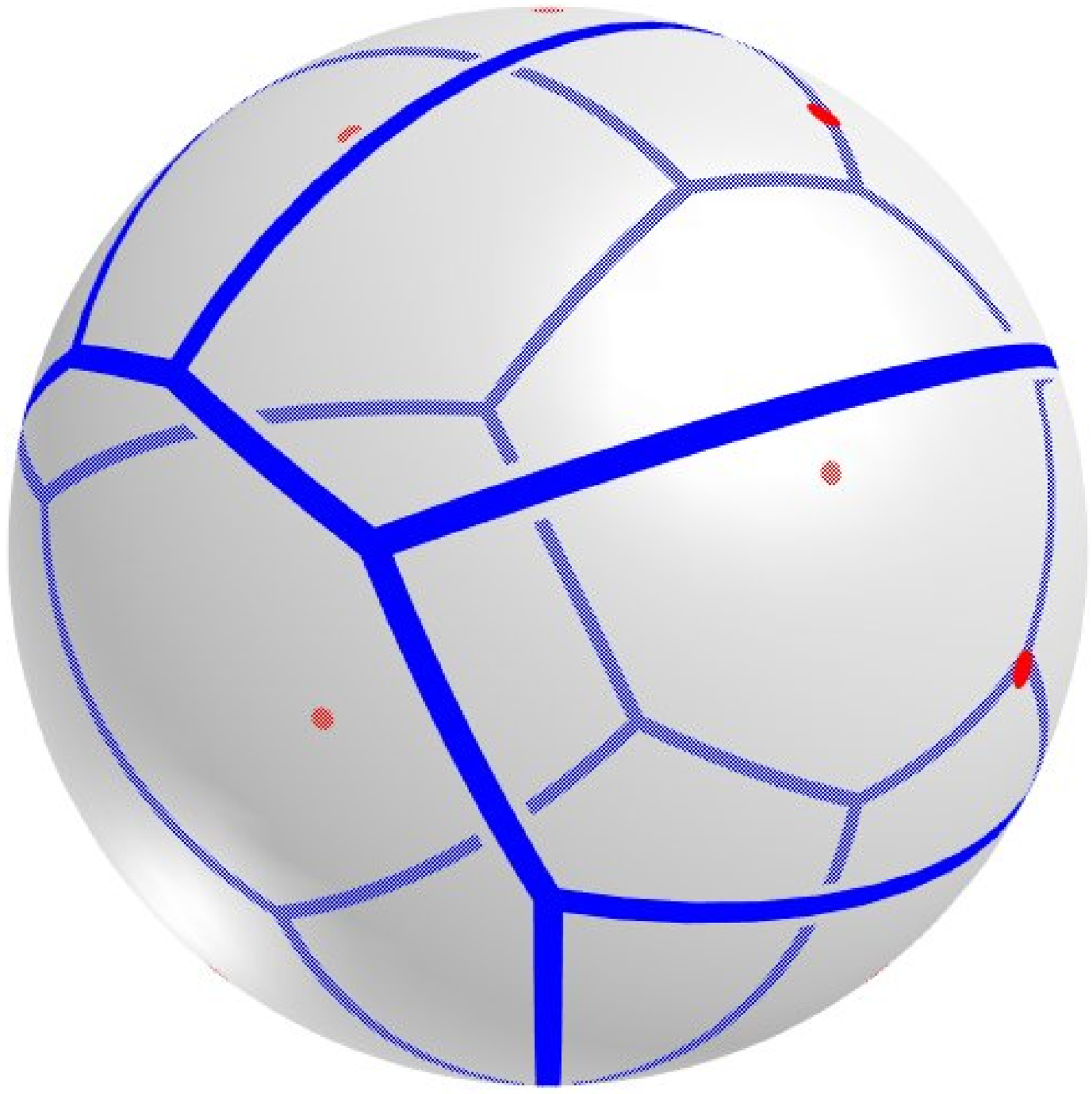,width=3.5cm,silent=} &
    \epsfig{figure=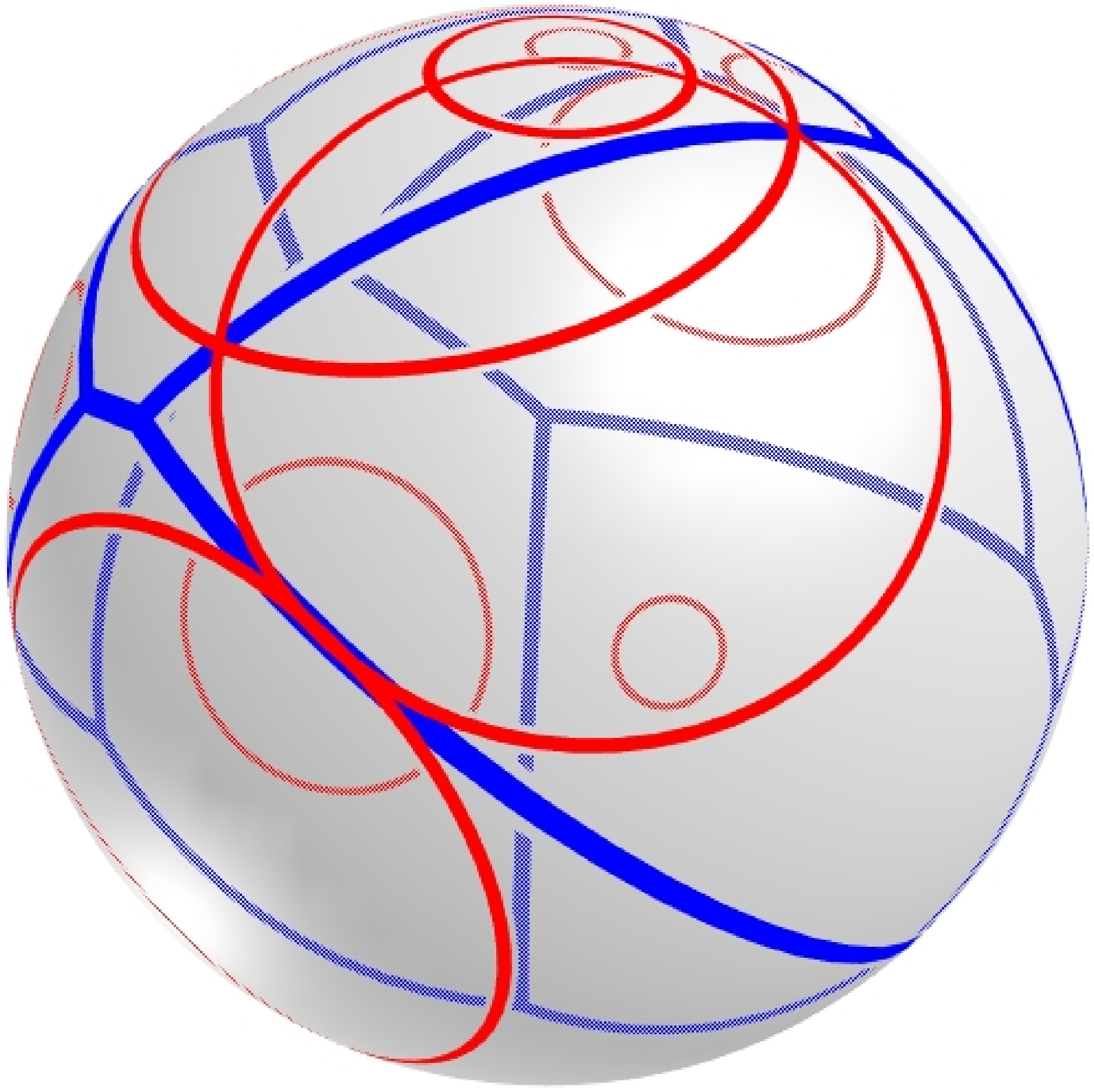,width=3.5cm,silent=}\\
    (a) & (b)
  \end{tabular}
  \caption[Voronoi diagrams on the sphere]{\capStyle{Voronoi diagrams
      on the sphere.
      (a) The Voronoi diagram of 14 random points. (b) The power
      diagram of 10 random circles.}}
  \label{fig:voronoi}
\end{wrapfigure}
K.\ E.\ Hoff \etal\ developed a technique for computing
generalized 2D and 3D Voronoi diagrams using interpolation-based
polygon rasterization hardware~\cite{hklmc-fcgvd-99}. The hardware is
used to draw the discrete and approximate lower envelope of the
site-distance functions. Following similar principles, O.\ Setter
\etal\ developed a new framework to compute different types of
Voronoi diagrams embedded on certain parametric surfaces in an exact
manner~\cite{shs-cdvdd-09}. The framework is based on the exact
computation of the lower envelope of the site-distance functions over
the surface~\cite{m-rgece-06}. It provides a reduced and convenient
interface between the construction of the diagrams and the
construction of envelopes, which in turn are computed using the
\cEnvelope{} package~\cite{cgal:mwz-e3-07}. Obtaining a new type of
Voronoi diagrams only amounts to the provision of a geometry-traits
class that handles the type of bisector curves of the new diagram
type. Essentially, every type of Voronoi diagram, the bisectors of
which can be handled by a geometry traits class, can be implemented
using this framework. In particular the geometry-traits class for
geodesic arcs embedded on the sphere enables Voronoi diagrams of
points on the sphere and their generalization, power diagrams, also
known as Laguerre Voronoi diagrams, on the sphere, as the bisector
curves between point sites on the sphere are great
circles~\cite{na-sphere-voronoi-02,obsc-stcav-00}, and so are the
bisectors between circle sites on the sphere under the Laguerre
distance~\cite{sugihara-sphere-02}; see Figure~\ref{fig:voronoi}.
Power diagrams on the sphere have several applications similar to the 
applications of power diagrams in the plane. For example, determining
whether a point is included in the union of circles on the sphere, and
finding the boundary of the union of circles on the
sphere~\cite{imai-85,sugihara-sphere-02}.

\pagebreak
\begin{wrapfigure}[7]{r}{4.2cm}
  \centerline{
    \epsfig{figure=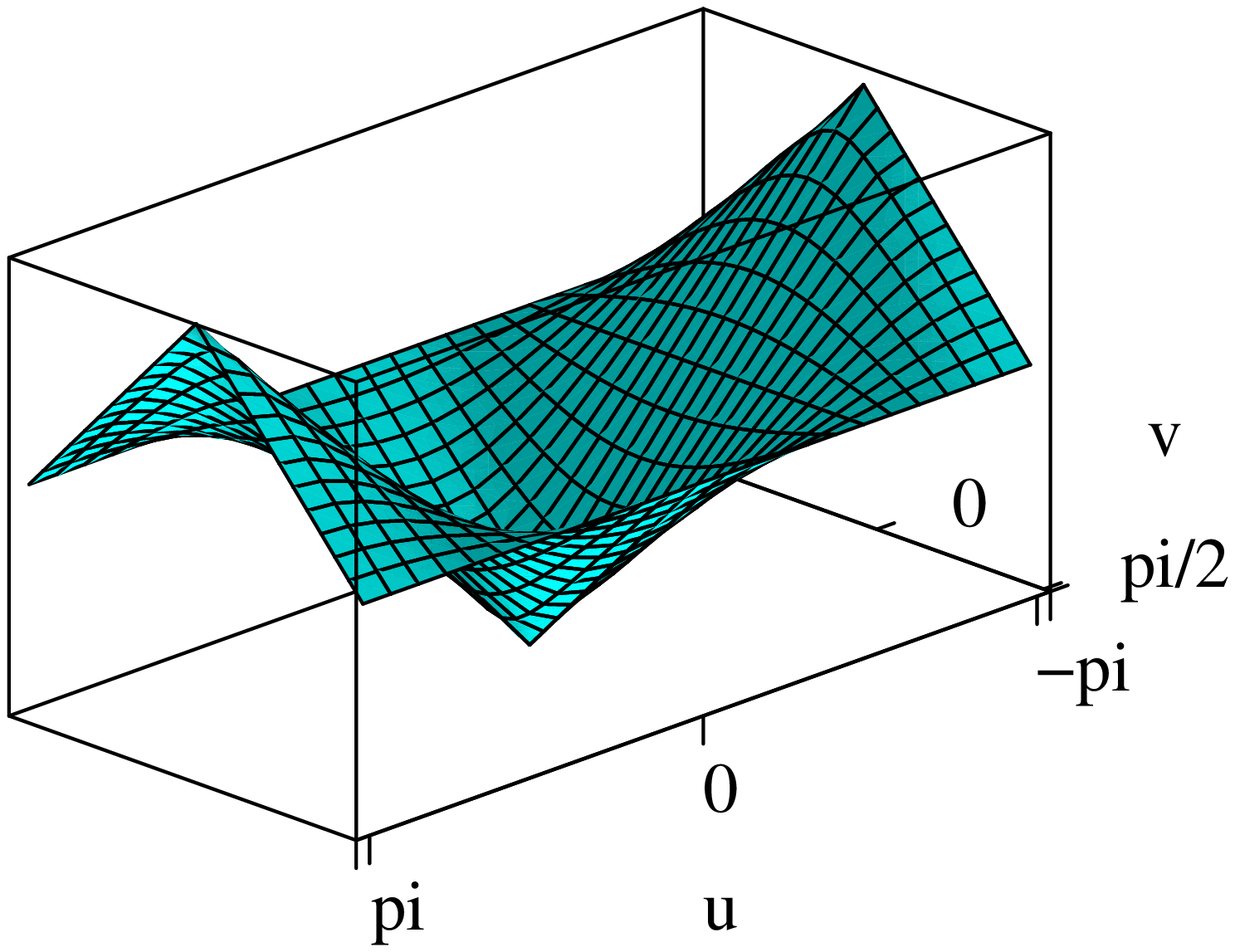,width=4cm,silent=}
  }
\end{wrapfigure}
We implicitly construct envelopes of distance functions defined over the
sphere to compute Voronoi diagrams. The image to the right illustrates
the distance function from
$(0, 0) \in [-\pi, \pi] \times [-\frac{\pi}{2}, \frac{\pi}{2}]$
on the sphere in the parameter space.
The great circle bisector of two point sites on the sphere is the
intersection of the sphere and the bisector plane of the points in
$\rrr$ (imposed by the Euclidean metric). 

If a point on the sphere is given as a general vector in $\rrr$,
it must be normalized. If the normalization results with a point with
irrational coordinates, then it must be approximated to a point that
lies exactly on the sphere~\cite{cdr-rrmrg-92}. Once approximated, all
further computations are carried out using exact rational arithmetic.

\begin{wrapfigure}{l}{7.5cm}
  \begin{tabular}{cc}
    \epsfig{figure=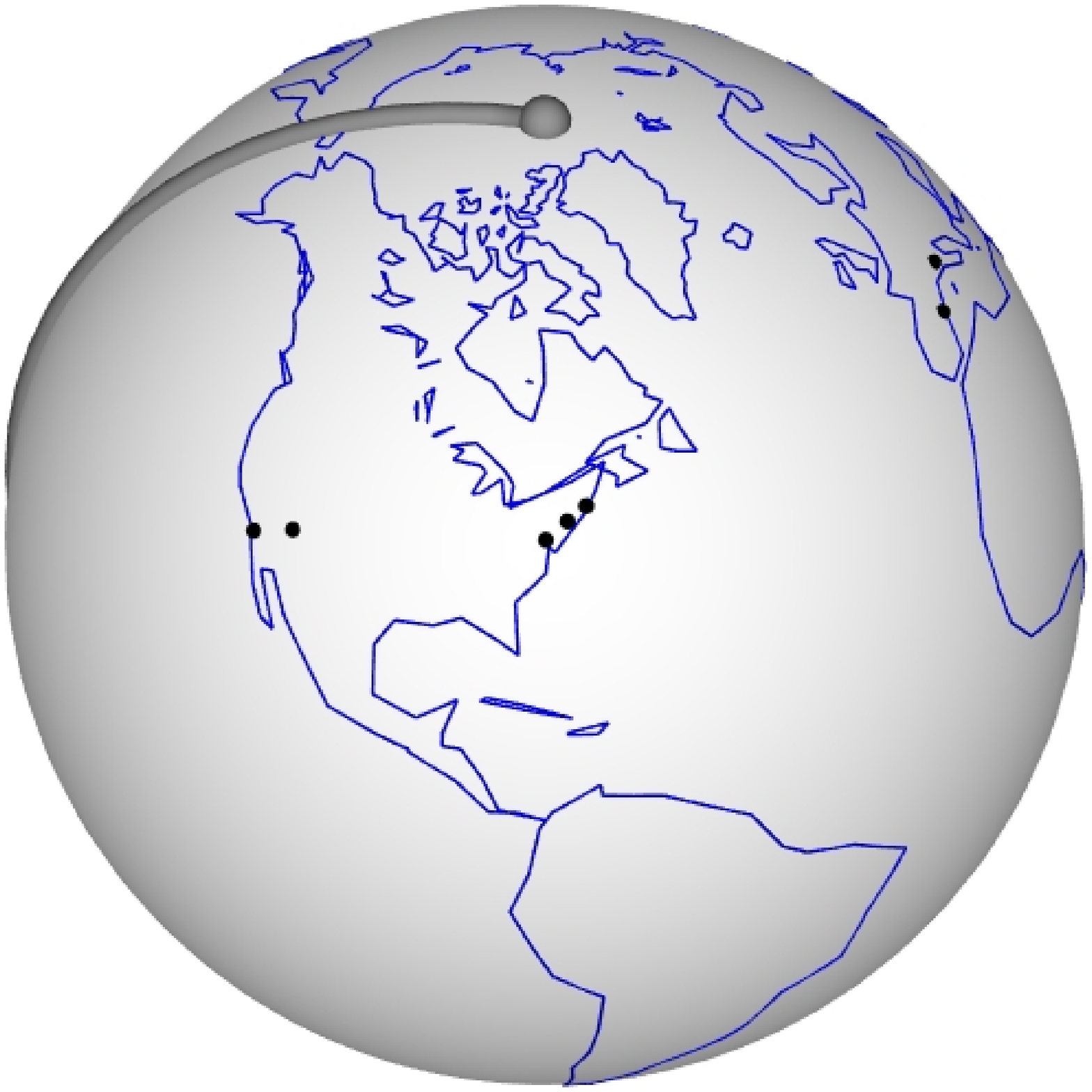,width=3.5cm,silent=} &
    \epsfig{figure=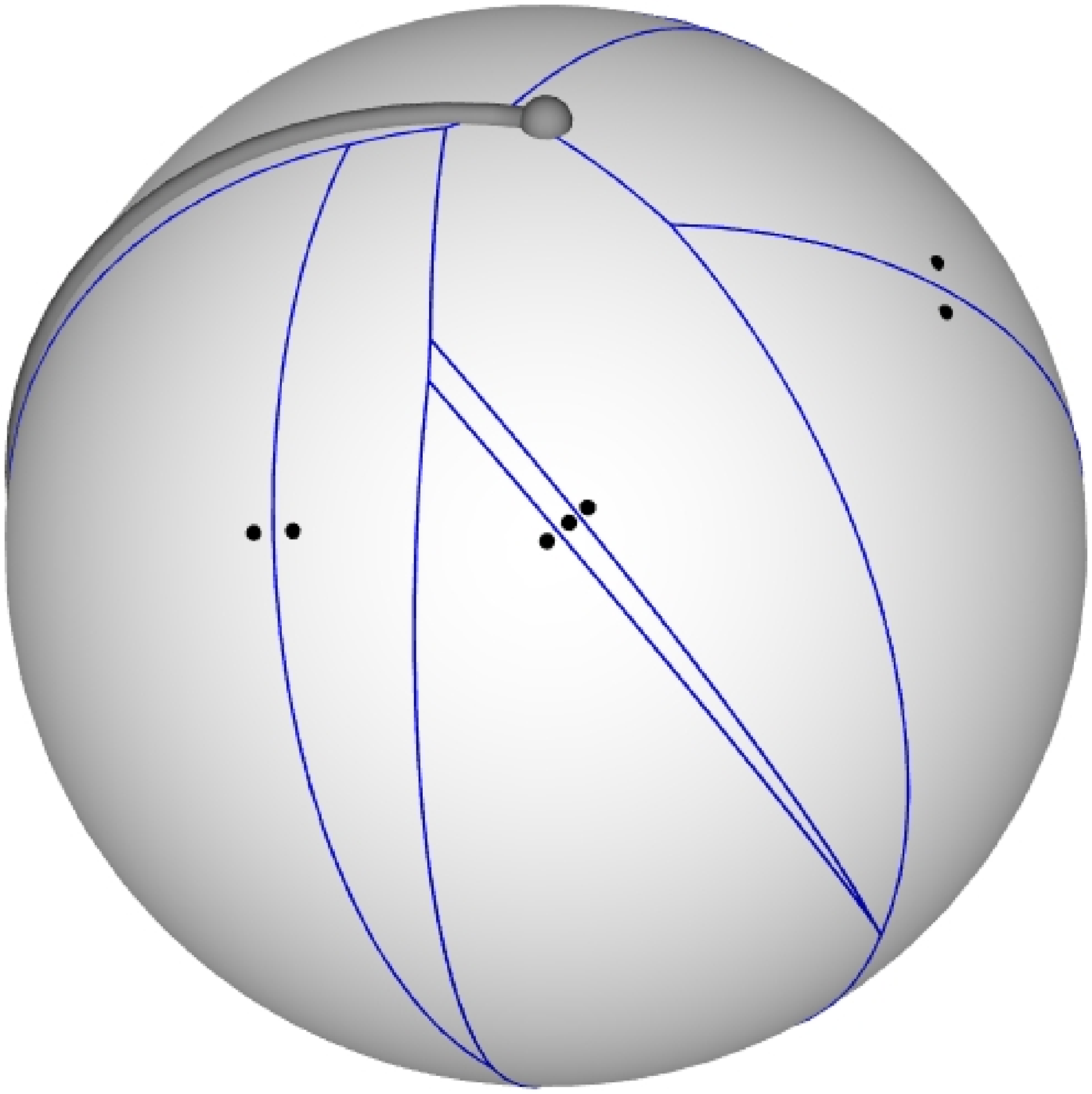,width=3.5cm,silent=}\\
    (a) & (b)
  \end{tabular}
  \caption[Arrangements on the sphere]{\capStyle{Arrangements on the sphere.}}
  \label{fig:world}
\end{wrapfigure}
Figure~\ref{fig:world} (a) on the left shows an arrangement on the
sphere induced by (i) the continents and some of the islands on earth,
and (ii) the institutions that hosted SoCG during this millennium,
which appear as isolated vertices. The sphere is oriented such that
College Park, MD, USA is at the center. The arrangement consists of
1054 vertices, 1081 edges, and 117 faces. The data was taken from
gnuplot~\citelinks{gnuplot} and google
maps~\citelinks{google-maps}. Figure~\ref{fig:world} (b) shows an
arrangement that represents the Voronoi diagram of the nine cities,
the institutions above are located at, namely
College Park,
Gyeongju,
Sedona,
Pisa,
New York, 
San Diego,
Barcelona,
Medford, and
Hong Kong.

\begin{wrapfigure}[8]{r}{3.5cm}
  \vspace{-12pt}
  \centerline{
    \epsfig{figure=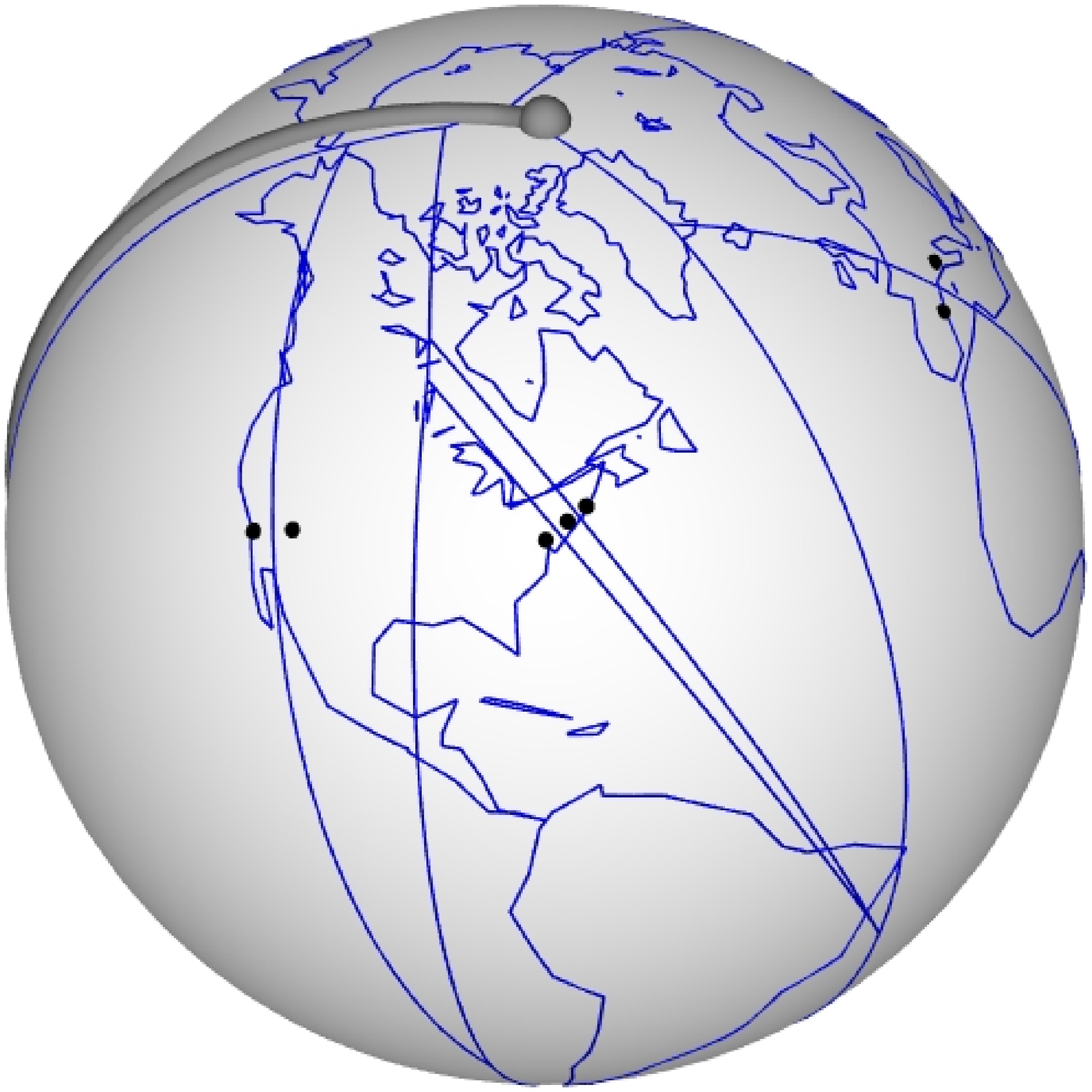,width=3.5cm,silent=}
  }
\end{wrapfigure}
As mention above Voronoi diagrams, among the other, are represented as
arrangements and can be passed as input to consecutive operations
on arrangements supported by the \aos{} package and its derivatives.
The figure on the right shows the \Index{overlay} of the two
arrangements shown in Figure~\ref{fig:world}.

\begin{savequote}[10pc]
\sffamily
Efficiency is just intelligent laziness.
\qauthor{Anonymous}
\end{savequote}
\chapter{Minkowski Sum Construction}
\label{chap:mink-sums-construction}
\newlength{\smallCellWidth}\setlength{\smallCellWidth}{0.7cm}
\newlength{\cellWidth}\setlength{\cellWidth}{0.9cm}

We present two exact and robust implementations of efficient
output-sensitive algorithms\index{algorithm!output sensitive} to
compute the \Index{Minkowski sum} of two polytopes in $\rrr$. We
demonstrate the effectiveness of our Minkowski-sum computations
through simple applications that exploit these operations to detect
collision, and compute the Euclidean \Index{separation distance}
between, and the directional \Index{penetration depth} of, two
polytopes in $\rrr$. In Chapter~\ref{chap:assem_plan} we show a more
involved application of these operations.

Each method we have developed uses a different variant of Gaussian maps
to maintain dual representations of polytopes. Each method employs a
different variant of two-dimensional arrangements\index{arrangement}
to maintain the dual representations, and it makes use of many
operations applied to arrangements in the corresponding
representations. The first method uses the traditional (spherical)
Gaussian map\index{Gaussian map!spherical}. The map is represented as
an arrangement of geodesic arcs embedded on the unit sphere. The
second method uses a data structure called {\em Cubical Gaussian
Map} \index{Gaussian map!cubical}. It consists of six arrangements
induced by linear segments embedded in the plane. The six arrangements
correspond to the six faces of the \Index{unit cube} --- the
parallel-axis cube circumscribing the \Index{unit sphere}. 

\newcounter{ms-const:cntr}
A simple method to compute the Minkowski sum of two polytopes is to
compute the \Index{convex hull} of the pairwise sum of the vertices of
the two polytopes. Although there are many implementations of various
algorithms to compute Minkowski sums and answer proximity queries, we
are unaware of the existence of complete implementations of methods to
compute exact Minkowski sums other than
\setcounter{ms-const:cntr}{1}(\roman{ms-const:cntr}) the naive method above, 
\addtocounter{ms-const:cntr}{1}(\roman{ms-const:cntr}) a method based on Nef
polyhedra embedded on the sphere~\cite{hkm-bosnc-07}, and
\addtocounter{ms-const:cntr}{1}(\roman{ms-const:cntr}) an implementation 
by Weibel~\citelinks{w-ms} of Fukuda's algorithm~\cite{f-zcmac-04}.
Both our methods exhibit much better performance than the other methods
in all cases, as demonstrated by the experiments reported in
Table~\ref{tab:mink-time}. Our methods well handle degenerate cases that
require special treatment when alternative representations are
used. For example, the case of two parallel facets facing the same
direction, one from each polytope, does not bear any burden on our
methods, and neither does the extreme case of two polytopes with
identical sets of facet normals.

The results of experimentation with a broad family of convex polyhedra are
reported. The relevant programs, source code, data sets, and documentation
are available at \url{http://www.cs.tau.ac.il/~efif/CD}.
A short movie~\cite{fh-emscp_05} that describes some of
the concepts of the cubical Gaussian-map method can be downloaded from
\url{http://acg.cs.tau.ac.il/projects/internal-projects/gaussian-map-cubical/Mink3d.avi}. Another short\linebreak
movie~\cite{fsh-agas-08} that describes some of
the concepts of the (spherical) Gaussian map method among the other can be
downloaded from
\url{http://acg.cs.tau.ac.il/projects/internal- }
\url{projects/arr-geodesic-sphere/movie/aos-xvid.avi}.

Both methods are implemented on top of \cgal{}, and are mainly based on
the arrangement package of the library (see Chapter~\ref{chap:aos}
and ~\cite{fwh-cfpeg-04,wfzh-aptac-07}),
although other parts, such as the polyhedral-surface package developed by
L.\ Kettner~\cite{k-ugpdd-99}, are used as well. In some cases it is
sufficient to build only portions of the boundary of the Minkowski sum of
two given polytopes to answer collision and proximity queries efficiently.
This requires locating the corresponding features that contribute to the
sought portion of the boundary. As both methods we have developed employ
two-dimensional arrangements implemented on top of the \cgal{} arrangement
package, we harness the ability to answer point-location queries
efficiently that comes along, to locate corresponding features of two
given polytopes.

The rest of this chapter is organized as follows. The {\em Gaussian
maps} dual representations of polytopes in $\rrr$ are described in
Section~\ref{sec:mscn:gauss_map} along with some of their properties.
In Sections~\ref{sec:mscn:sgm-method}~and~\ref{sec:mscn:cgm-method} we
show how 3D Minkowski sums can be computed efficiently, after the
summands are converted to (spherical) Gaussian maps and cubical
Gaussian maps, respectively. Section~\ref{sec:mscn:3d_col_det} presents 
an exact implementation of an efficient
collision-detection\index{collision detection} algorithm
under translation based on either of the dual representations.
In the last section,
\begin{wrapfigure}{r}{7.6cm}
  \vspace{-10pt}
  \centerline{
    \begin{tabular}{cc}
      \epsfig{figure=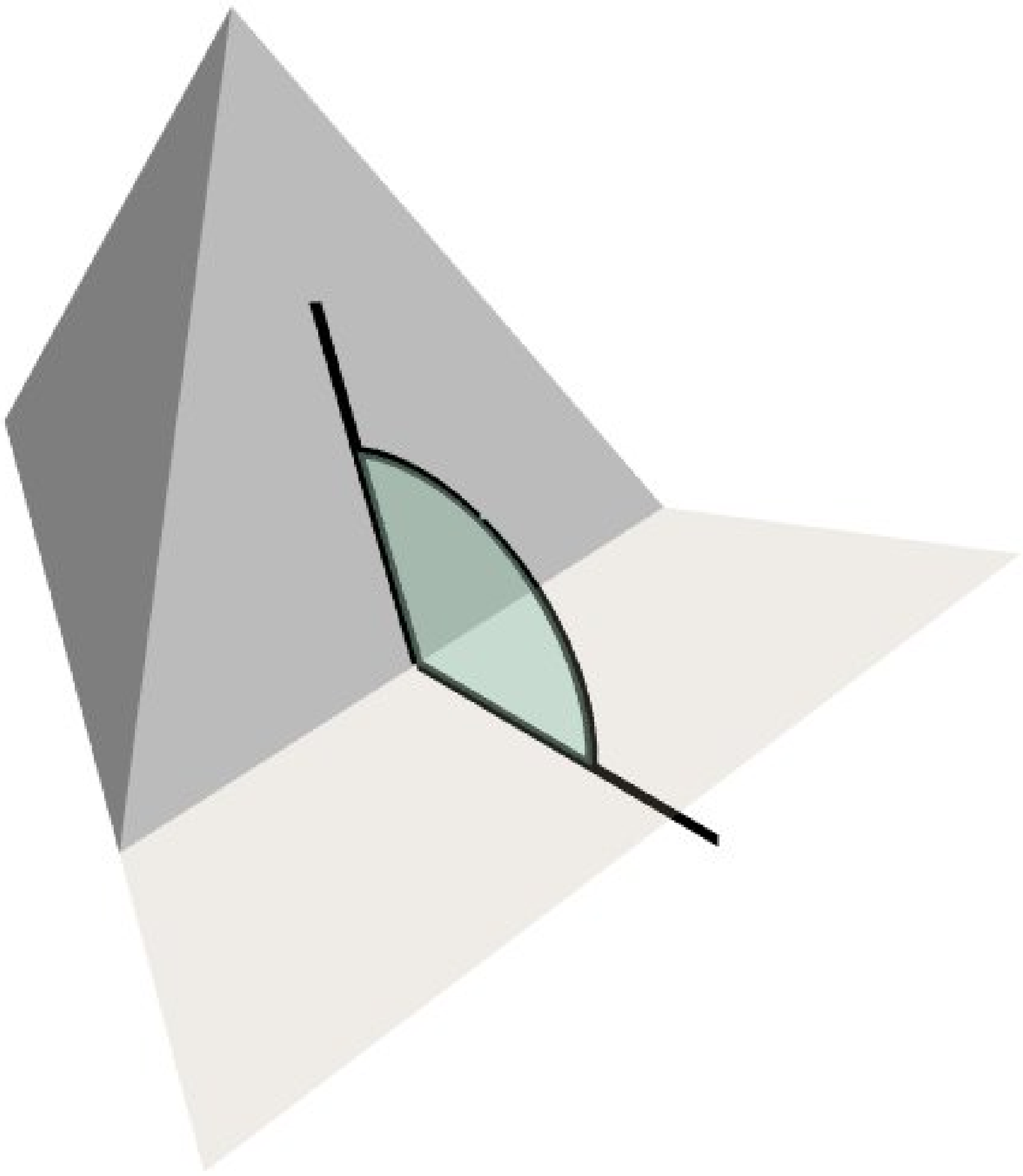,width=3.5cm,silent=} &
      \epsfig{figure=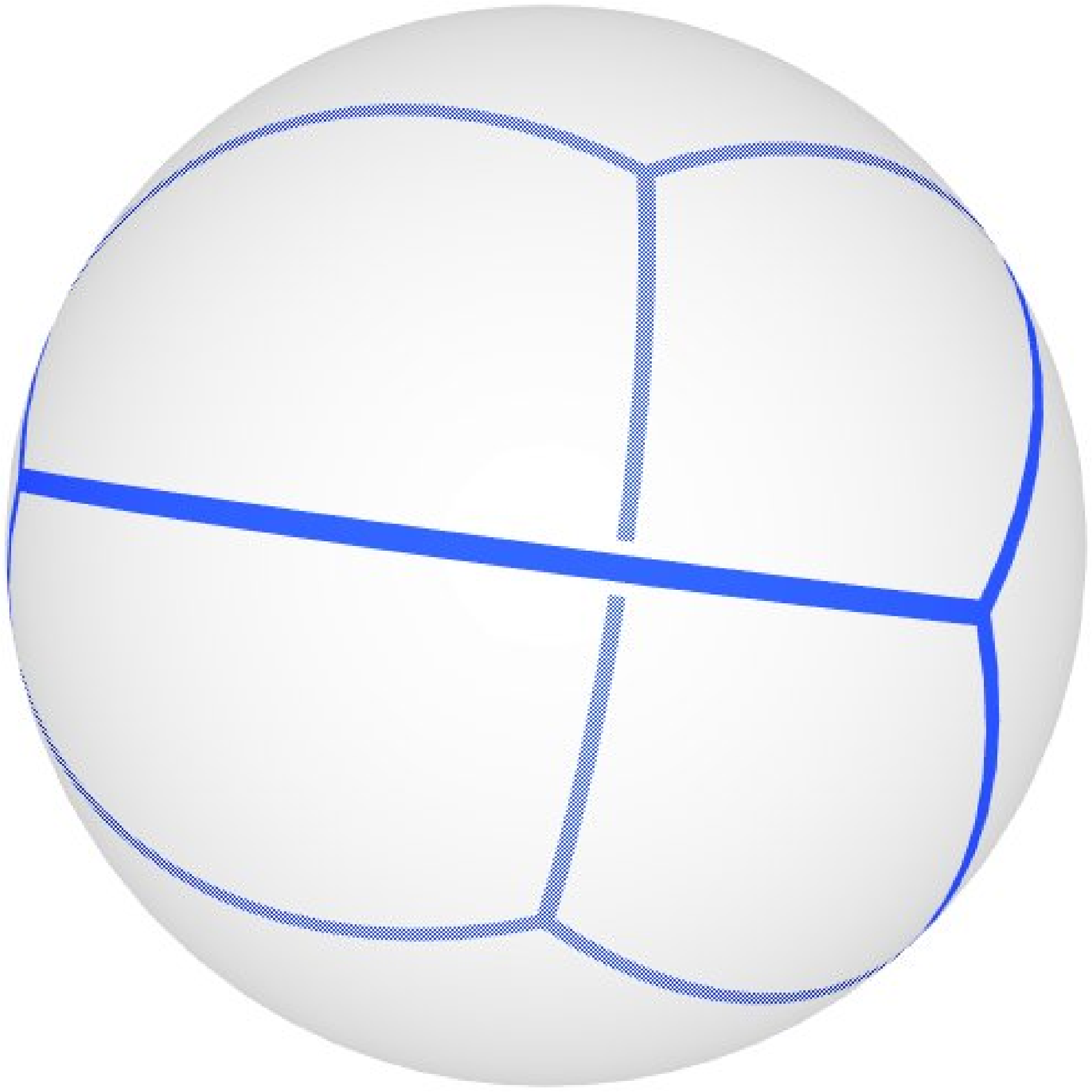,width=3.5cm,silent=}\\
      (a) & (b)\\
      \epsfig{figure=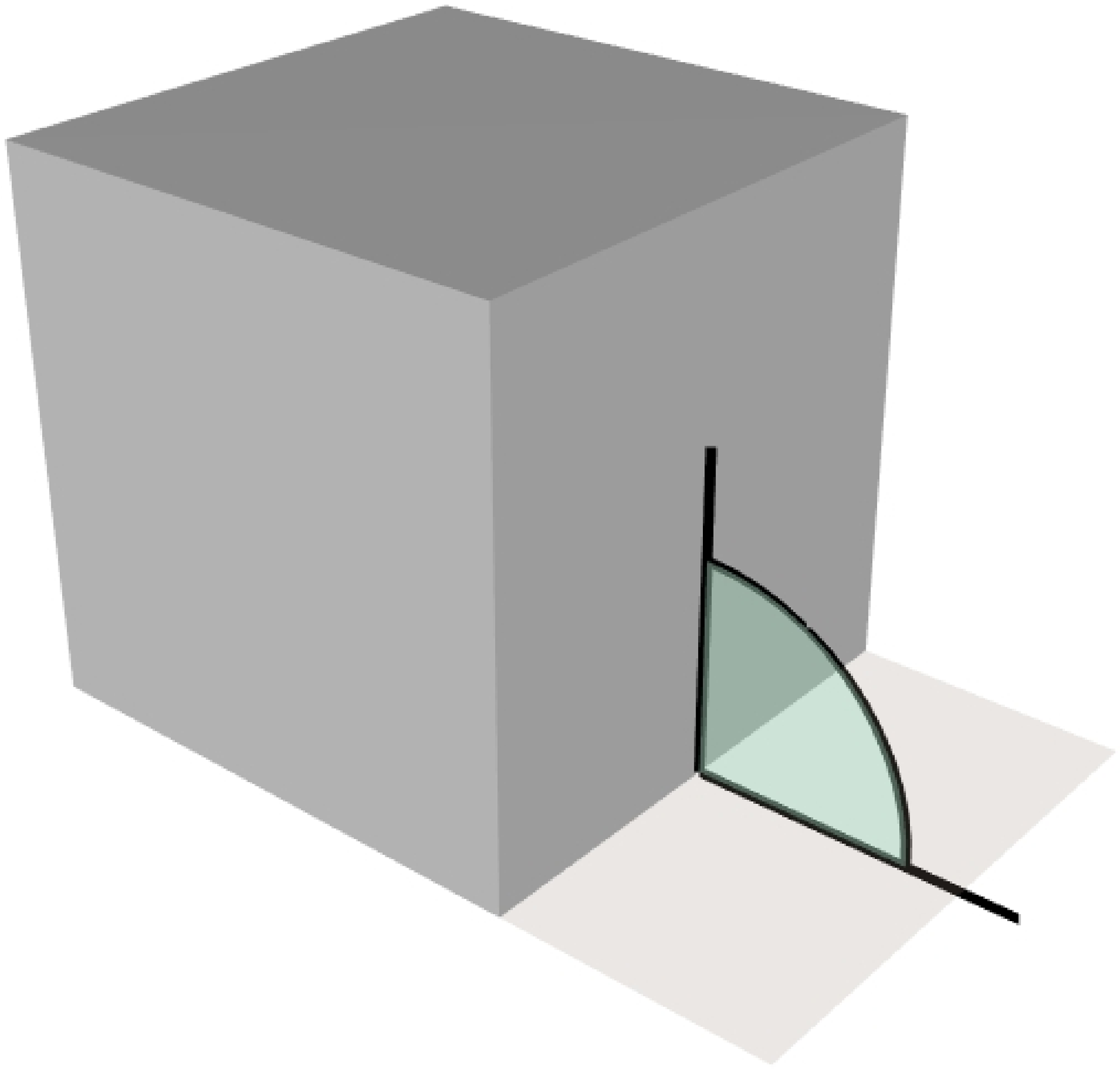,width=3.5cm,silent=} &
      \epsfig{figure=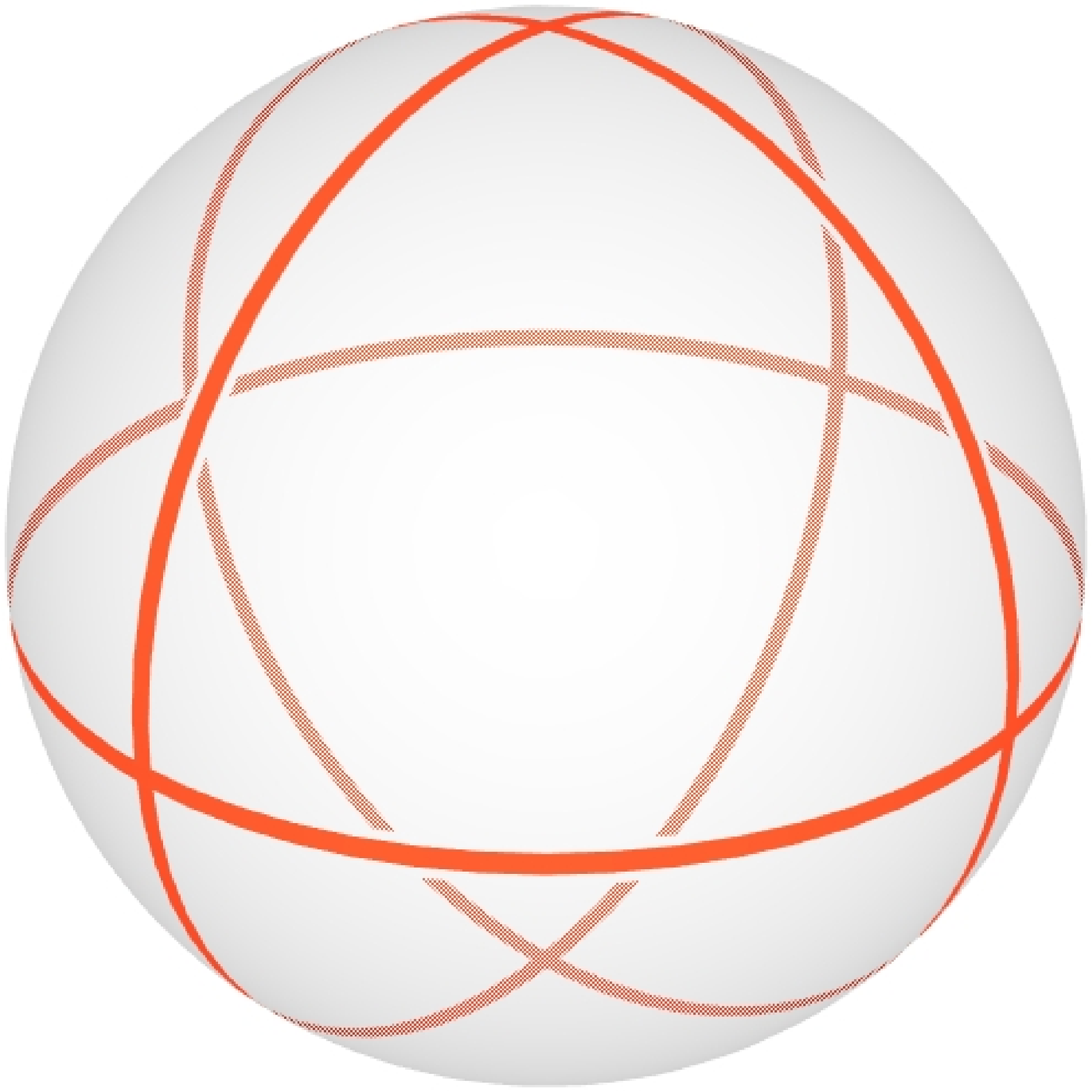,width=3.5cm,silent=}\\
      (c) & (d)\\
    \end{tabular}
  }
  \caption[Gaussian maps of polytopes]%
          {\capStyle{(a) A tetrahedron, (b) the Gaussian map of the
           tetrahedron, (c) a cube, and (d) the Gaussian map of the cube.}}
  \label{fig:gm}
  \vspace{-70pt}
\end{wrapfigure}
dedicated to experimental results, we highlight the
performance of our methods on various 
benchmarks. Suggestions for future directions
are provided in the conclusion chapter in Section~\ref{sec:conclusion:cd}.
The software access-information along with some further design details
are provided in the Appendix.

\section{Gaussian Maps}
\label{sec:mscn:gauss_map}
The {\em \Index{Gaussian map}} $G = G(P)$ of a compact convex
polyhedron $P$ in Euclidean three-dimensional space $\rrr$ is a set-valued
function from $P$ to the unit sphere $\spheretwo$, which assigns to
each point $p$ on the boundary of $P$ the set of outward unit normals
to support planes to $P$ at $p$. Thus, the whole of a facet $f$ of $P$
is mapped under $G$ to a single point, representing the outward unit
normal to $f$. An edge $e$ of $P$ is mapped to a (geodesic) segment $G(e)$
on $\spheretwo$, whose length is easily seen to be the exterior
dihedral angle at $e$. A vertex $v$ of $P$ is mapped by $G$ to a
spherical polygon $G(v)$, whose sides are the image under $G$ of
edges incident to $v$, and whose angles are the angles supplementary
to the planar angles of the facets incident to $v$; that is, $G(e_1)$
and  $G(e_2)$ meet at angle $\pi - \alpha$ whenever $e_1$ and $e_2$
meet at angle $\alpha$. In other words, $G(v)$ is exactly the
``spherical polar'' of the link of $v$ in $P$. (The link of a vertex
is the intersection of an infinitesimal sphere centered at $v$ with
$P$, rescaled, so that the radius is 1.) The above implies that $G(P)$
is combinatorially dual to $P$ and an arrangement embedded on the
unit sphere~\cite{hrs-cchpc-92}. Extending the mapping above,
by marking each face $f = G(v)$ of the arrangement with its dual vertex
$v$, enables a unique inverse Gaussian mapping, denoted by $G^{-1}$,
which maps an extended arrangement embedded on the unit sphere back to a
polytope boundary.

\begin{wrapfigure}[6]{r}{4.4cm}
  \vspace{-10pt}
  \centerline{
    \pspicture[](-2.2,-1.3)(2.2,1.5)
    \psset{unit=1cm,linewidth=0.5pt}
    \psline[linewidth=1pt](-1.4,0.4)(2.2,0.4)
    \pscircle*[linecolor=white](0,0){1.3}
    \pscircle[linewidth=1pt](0,0){1.2}
    \psarc[linecolor=white,linewidth=2pt](0,0){1.2}{-25}{0}
    \psarc[linecolor=white,linewidth=2pt](0,0){1.2}{180}{205}
    \psarcellipse[linewidth=1pt](0,0)(1.2,0.3){180}{0}
    \psline[linewidth=1pt](1.4,-0.4)(2.2,0.4)
    \psline[linewidth=1pt](-2.2,-0.4)(-1.4,0.4)
    \psline[linewidth=1pt](-2.2,-0.4)(1.4,-0.4)
    \rput{0}(0,1){
      \pspolygon*[linecolor=white](1.4,-0.4)(2.2,0.4)(-1.4,0.4)(-2.2,-0.4)
      \pspolygon[linewidth=1pt](1.4,-0.4)(2.2,0.4)(-1.4,0.4)(-2.2,-0.4)
      \rput{0}(1,0){$x_d = 1$}
    }
    \cnode*(0,0){2pt}{o}
    \cnode*(-0.7,0.42){2pt}{u}
    \cnode*(-1.5,0.9){2pt}{uhat}
    \pnode(-2.0,1.2){t}
    \ncline[linestyle=dotted,dotsep=2pt,linewidth=1pt]{o}{u}
    \ncline{u}{uhat}
    \ncline[arrowsize=3pt 3]{->}{uhat}{t}
    \uput[-135]{0}(-0.7,0.42){$u$}
    \uput[45]{0}(-1.5,0.9){$\hat{u_d}$}
    \uput[45]{0}(0,0){$o$}
    \endpspicture
  }
   \label{fig:central_projection}
\end{wrapfigure}
An alternative and practical definition follows. A direction in
$\mathbb{R}^3$ can be represented by a point $u \in \mathbb{S}^2$.
Let $P$ be a polytope in $\rrr$, and let $V$ denote the set of
its boundary vertices. For a direction $u$, we define the
{\em extremal point} in direction $u$ to be
$\lambda_V(u) = \arg \max_{p \in V}\langle u,p\rangle$, where
$\langle\cdot,\cdot\rangle$ denotes the inner product.
The decomposition of $\mathbb{S}^2$ into maximal connected regions, so that
the extremal point is the same for all directions within any region forms
the Gaussian map of $P$.
For some $u \in \mathbb{S}^2$ the intersection point of the ray
$\vec{ou}$ emanating from the origin with one of the planes
listed below is a {\em central projection} of $u$ denoted as $\hat{u_d}$,
and illustrated on the right.
The relevant planes are $x_d = 1,\, d = 1,2,3$, if $u$ lies in 
the positive respective hemisphere, and $x_d = -1,\, d = 1,2,3$ otherwise.

Similar to the Gaussian map, the {\em Cubical Gaussian Map} (\cgm)
$C = C(P)$ of a polytope $P$ in $\rrr$ is a set-valued
function from $P$ to the six faces of the unit cube whose edges are
parallel to the major axes and are of length two. A facet $f$ of $P$
is mapped under $C$ to a central projection of the outward unit normal
to $f$ onto one of the cube faces. Observe that, a single edge $e$ of
$P$ is mapped to a chain of at most four connected segments that lie
in four adjacent cube-faces respectively, and a vertex $v$ of $P$ is
mapped to at most five abutting convex dual faces that lie in five
adjacent cube-faces, respectively. The decomposition of the unit-cube
faces into maximal connected regions, so that the extremal point is
the same for all directions within any region forms the \cgm{} of
$P$. Likewise, the inverse \cgm, denoted by $C^{-1}$, maps the six
extended arrangement embedded on the six faces of the unit cube to the
polytope boundary. Figure~\ref{fig:tet} shows the \cgm{} of a tetrahedron.
\begin{figure*}[!htp]
  \centerline{%
    \begin{tabular}{ccc}
      \epsfig{figure=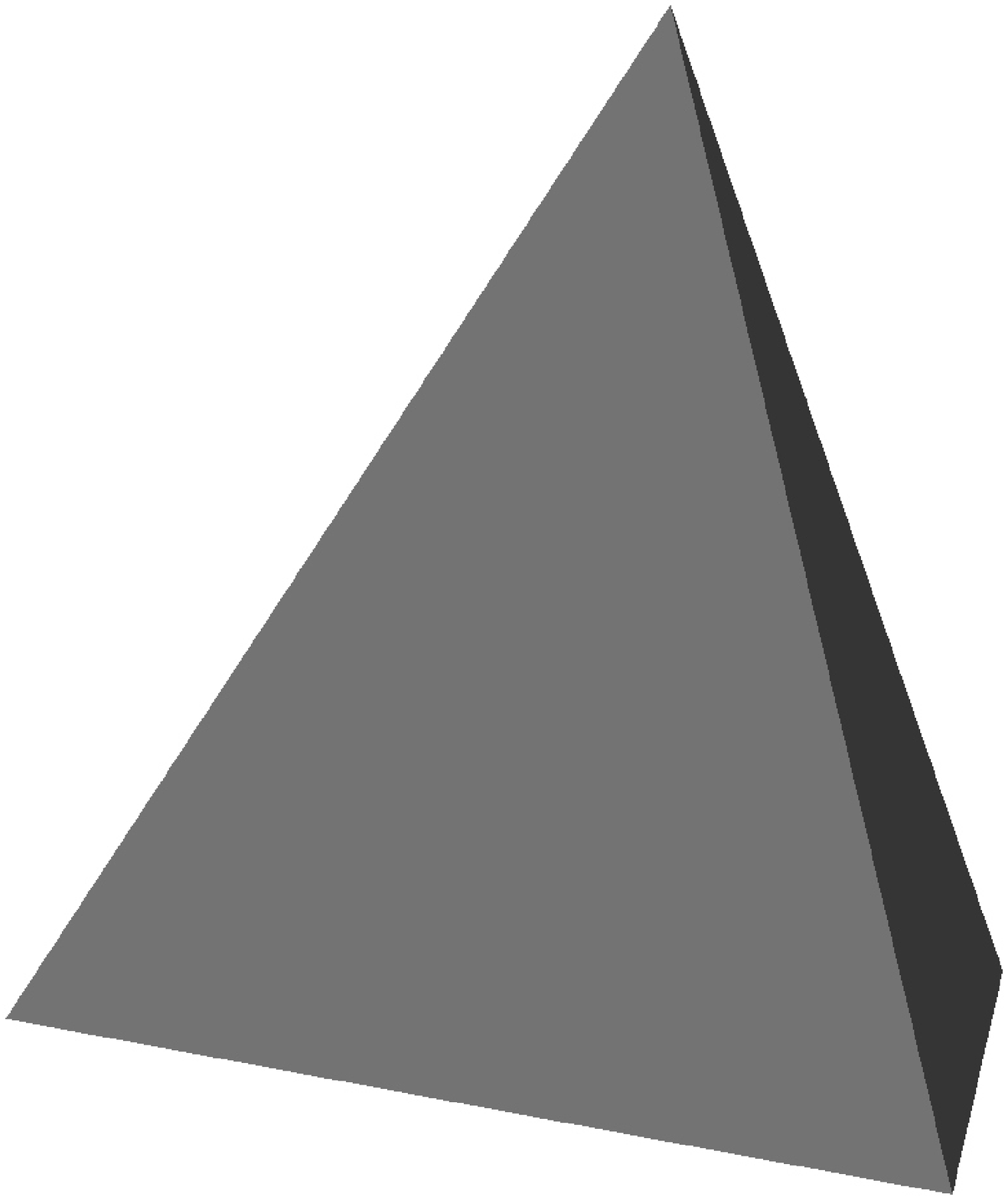,width=4cm,silent=} &
      \epsfig{figure=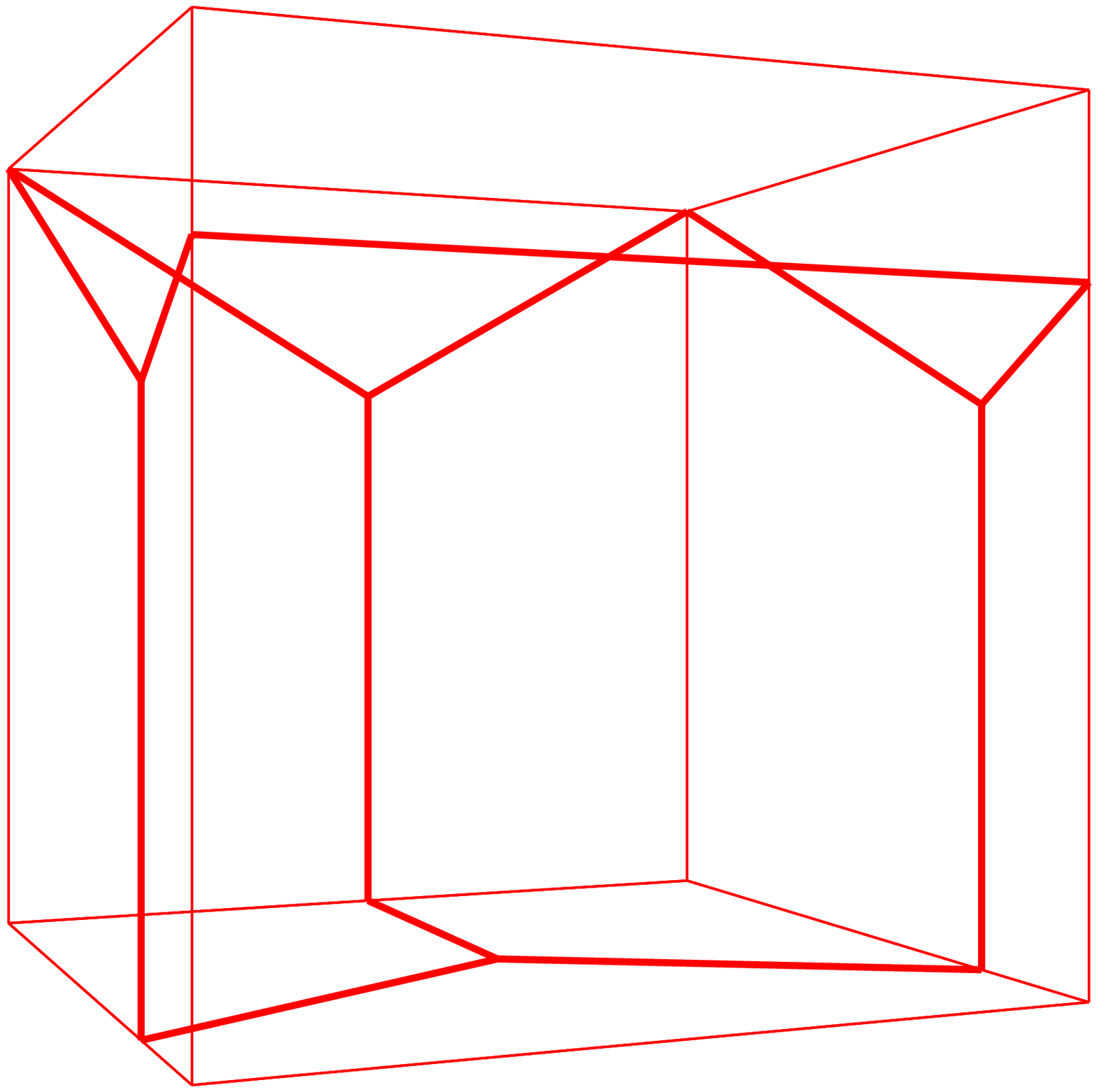,width=4cm,silent=} &
      \epsfig{figure=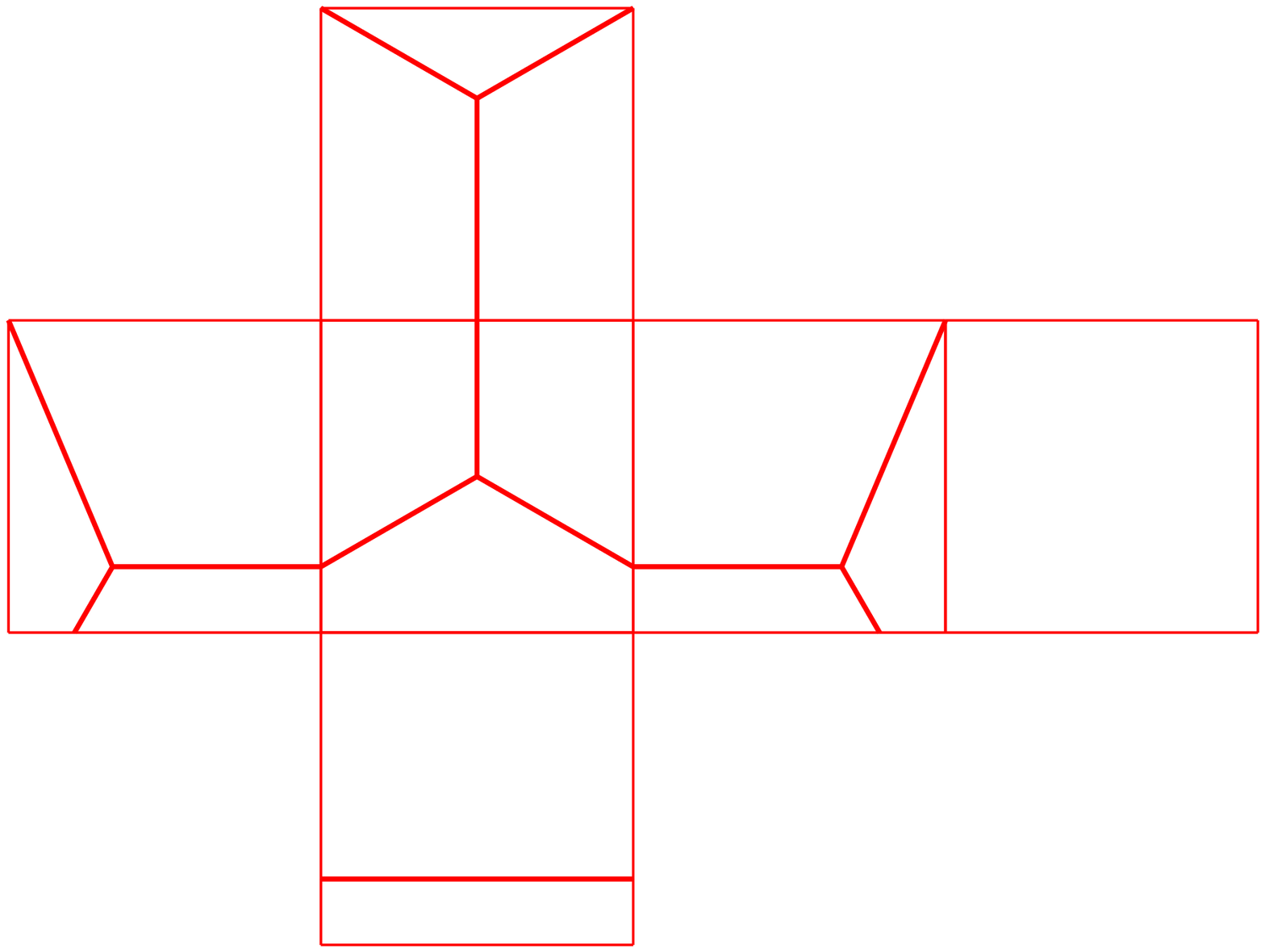,width=5cm,silent=}\\
      	(a) & (b) & (c)
    \end{tabular}
  }
  \caption[The cubical Gaussian map of a tetrahedron]%
           {\capStyle{(a) A tetrahedron, (b) the \cgm{} of the
           tetrahedron, and (c) the \cgm{} unfolded. Thick lines
           indicate real edges.}}
  \label{fig:tet}
\end{figure*}

\section{The (Spherical) Gaussian-Map Method}
\label{sec:mscn:sgm-method}
Armed with the geometry-traits class for geodesic arcs on the sphere
(see Section~\ref{sec:aos:geodesics}), we
compute Minkowski sums of convex polyhedra, by overlaying their
respective Gaussian maps represented as arrangements of geodesics on
the sphere. Each face $f$ of the Gaussian map is extended with
the coordinates of its dual vertex $v = G^{-1}(f)$, resulting with a
unique representation.

\subsection{The Representation}
\label{ssec:mscn:sgm:representation}
An input model of a polytope is provided as a polyhedral mesh in $\rrr$.
A polyhedral mesh representation consists of an array of boundary
vertices and the set of boundary facets, where each facet is described
by an array of indices into the vertex array. Constructing the Gaussian
map of a model given in this representation is done indirectly. First,
the \cgal{} \cPolyhedron~\cite{k-ugpdd-99} data-structure that represents
the model is constructed. This data structure provides quick
access to the incidence relations on the polytope features. Then, the
Gaussian map is constructed from the accessible information stored in
the  \cPolyhedron{} data-structure. Once the construction of the
Gaussian map is complete, the \cPolyhedron{} intermediate
representation is discarded.

\begin{figure}[!htp]
  \centerline{
    \begin{tabular}{cccc}
      \epsfig{figure=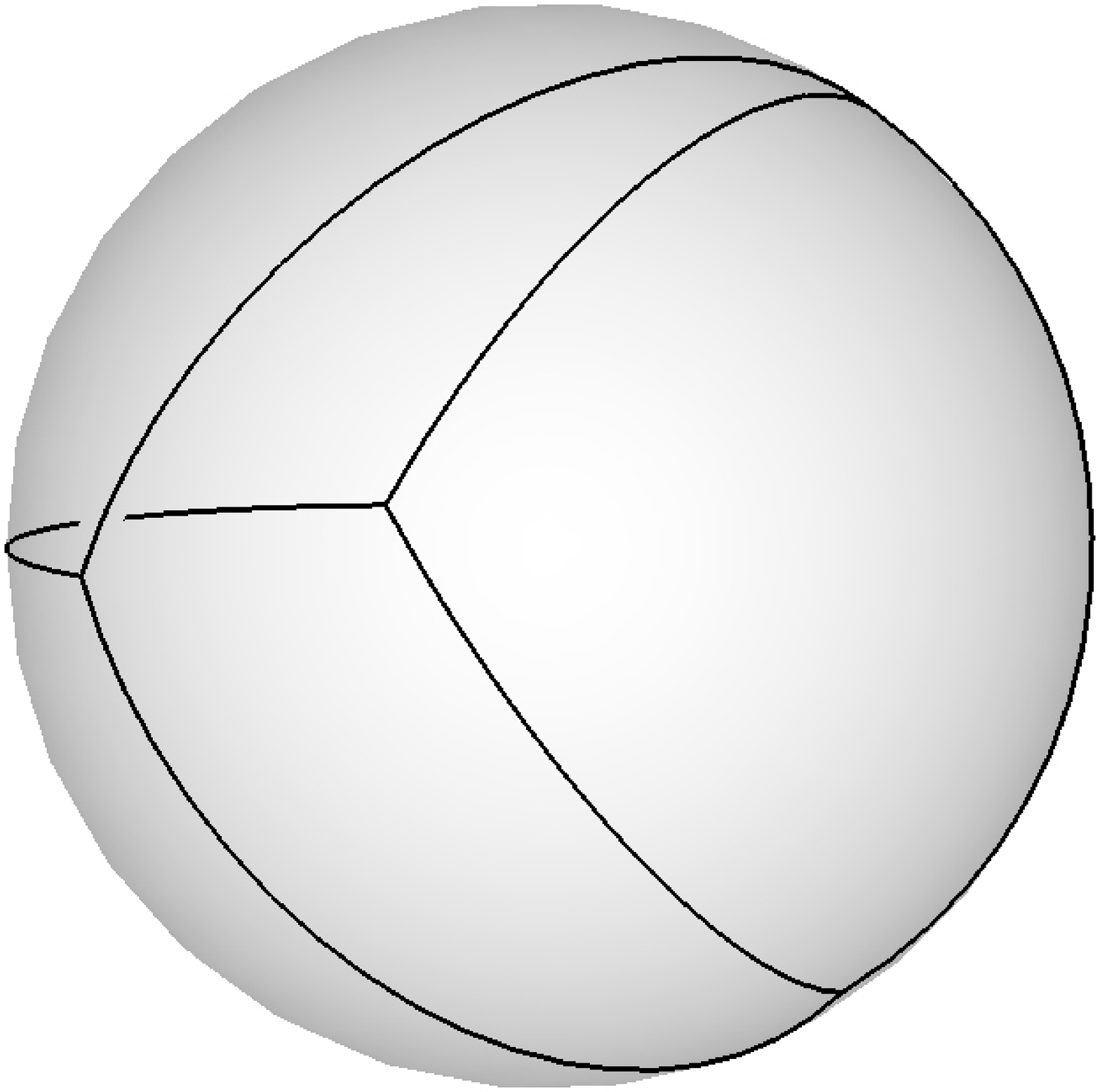,width=3cm,silent=} &
      \epsfig{figure=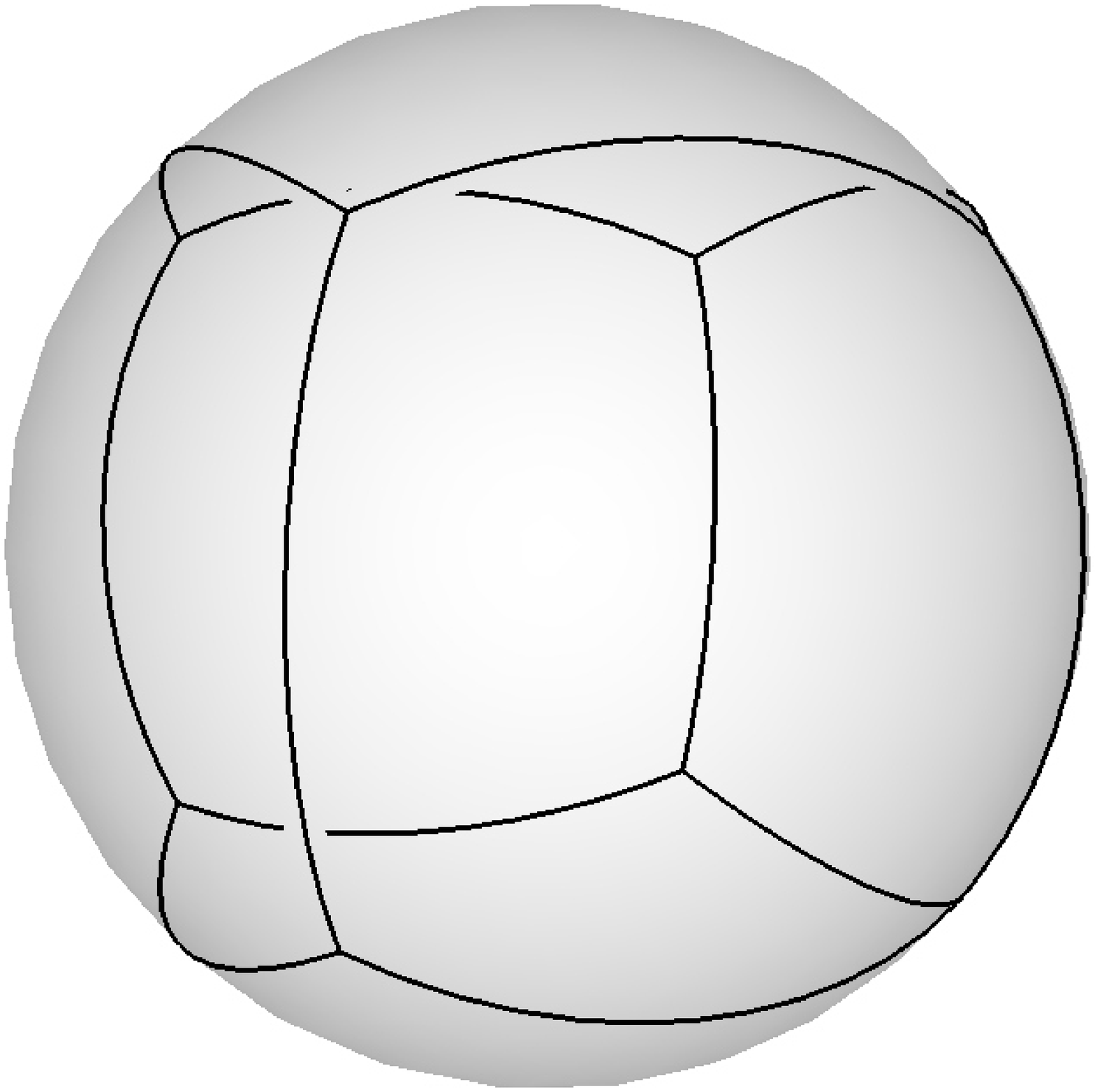,width=3cm,silent=} &
      \epsfig{figure=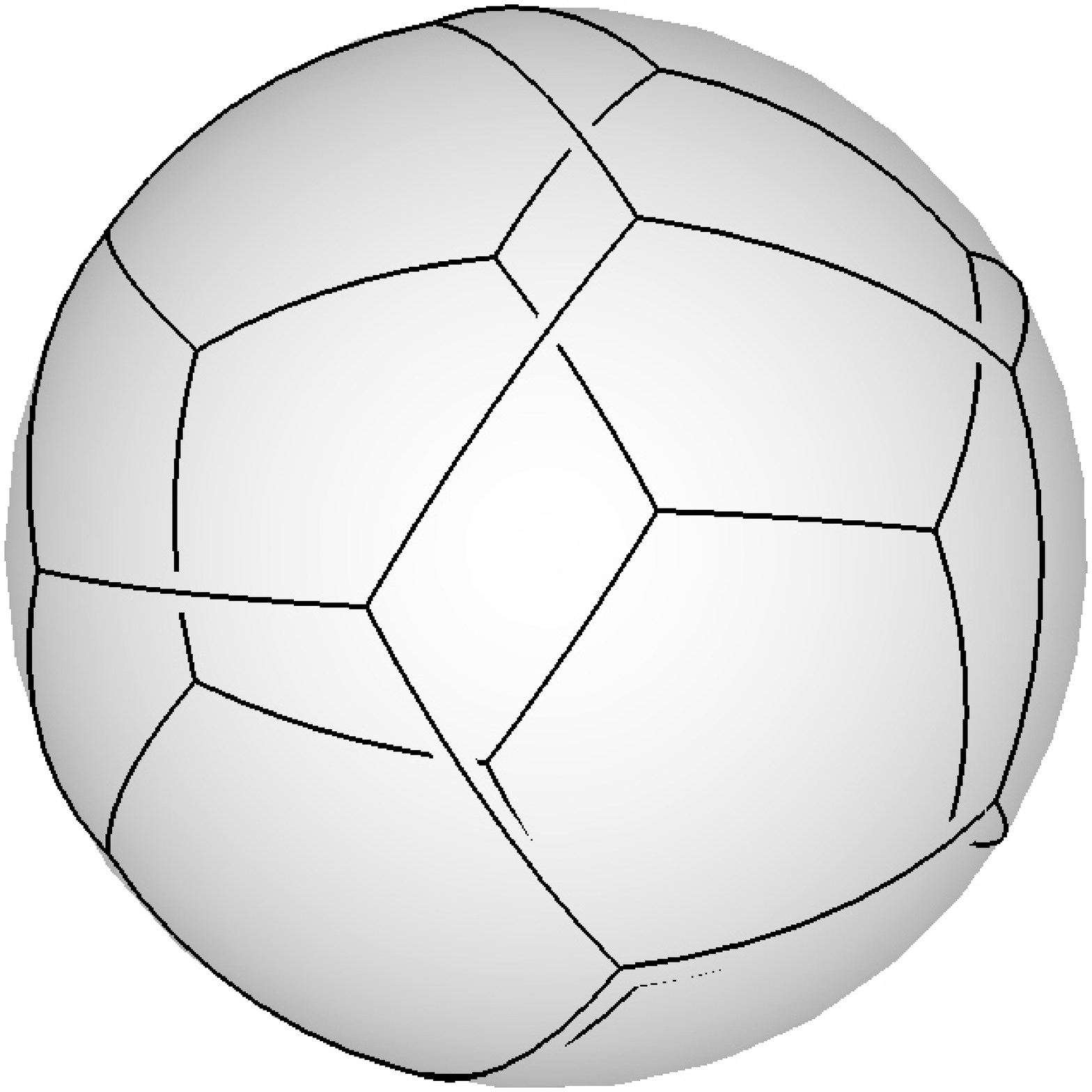,width=3cm,silent=} &
      \epsfig{figure=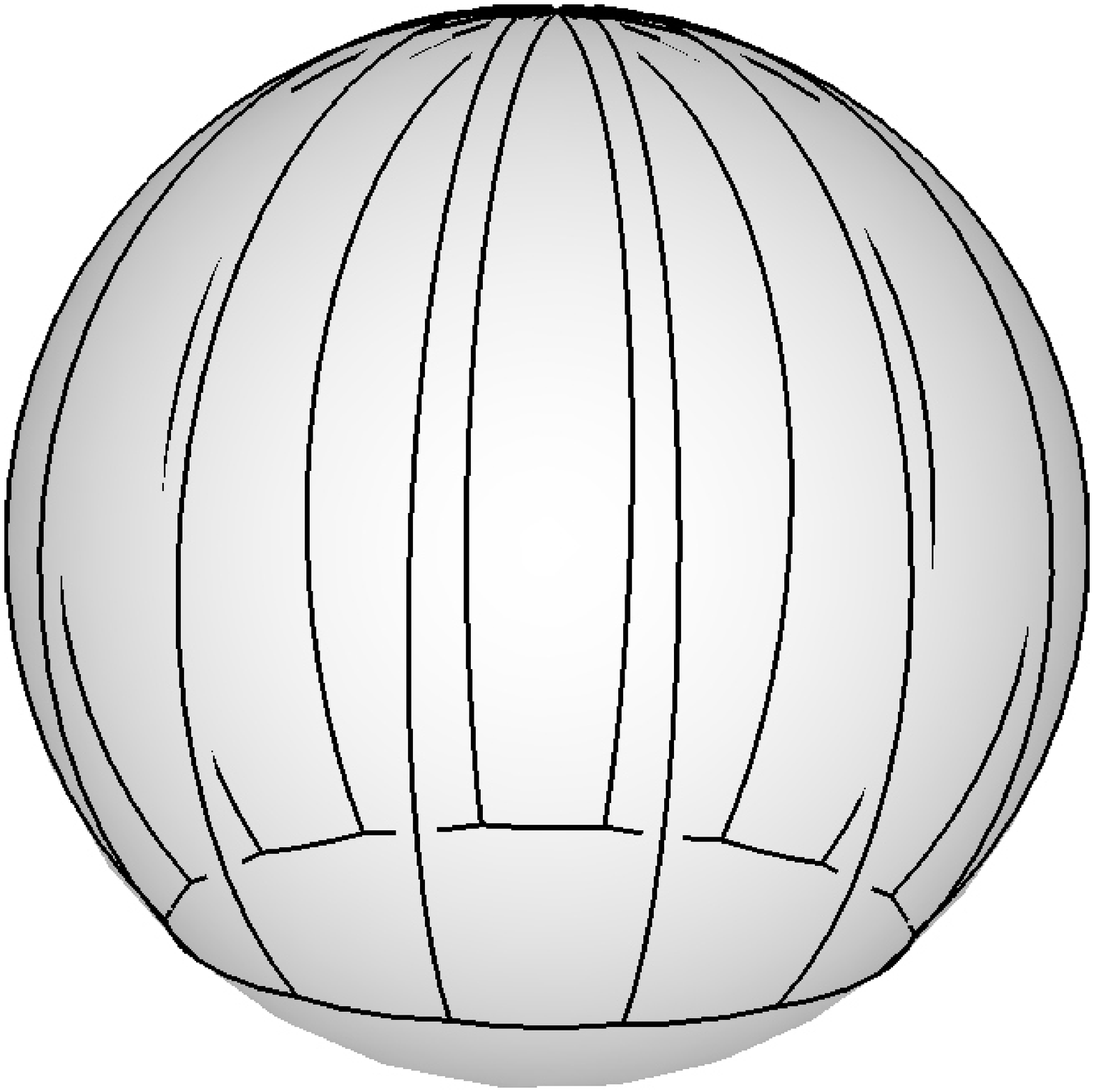,width=3cm,silent=}\\
      Tetrahedron & Octahedron & Icosahedron & Dioctagonal Pyramid\\
      {\small ~} & {\small ~} & {\small ~} & {\small ~} \\
      \epsfig{figure=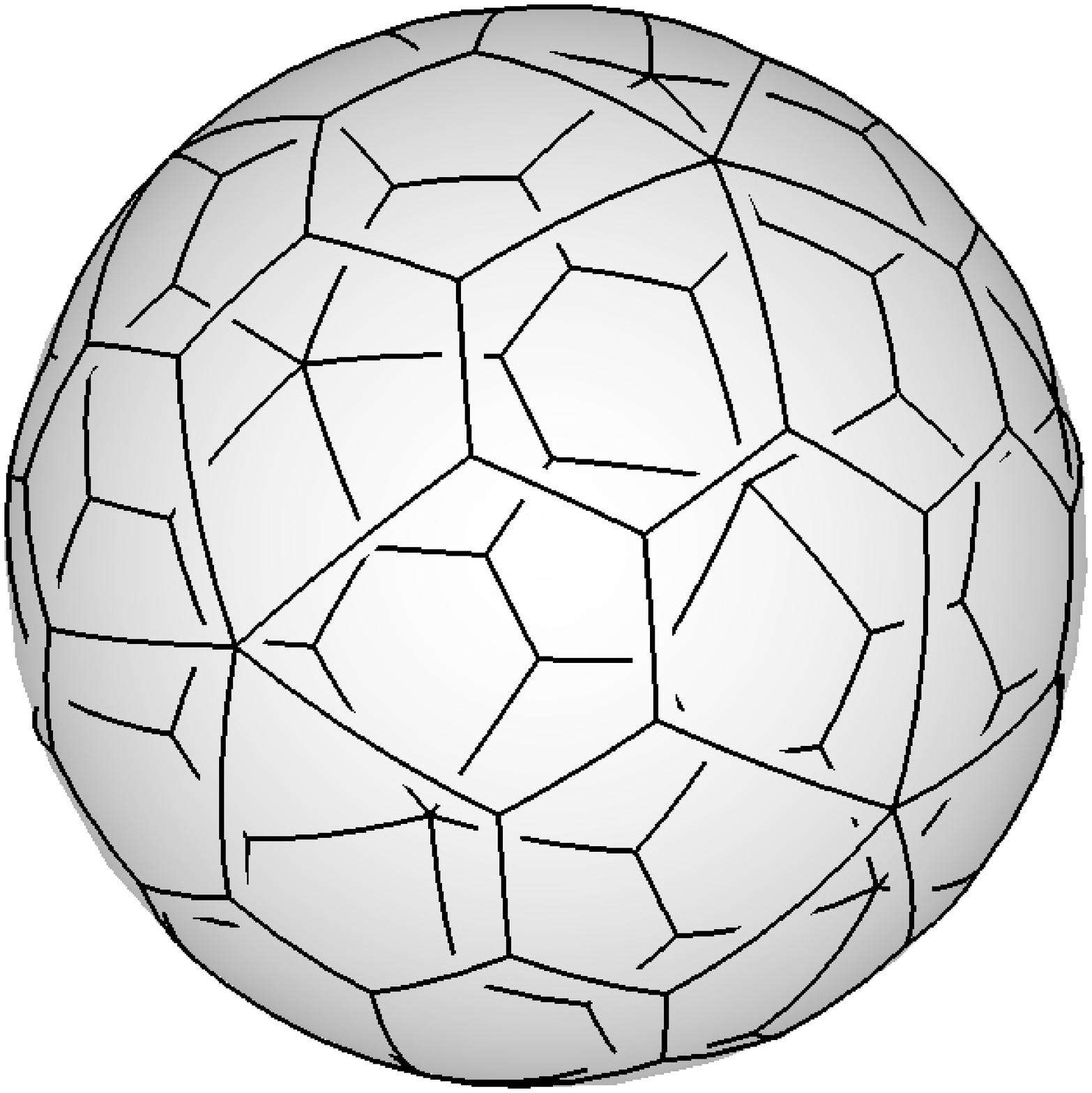,width=3cm,silent=} &
      \epsfig{figure=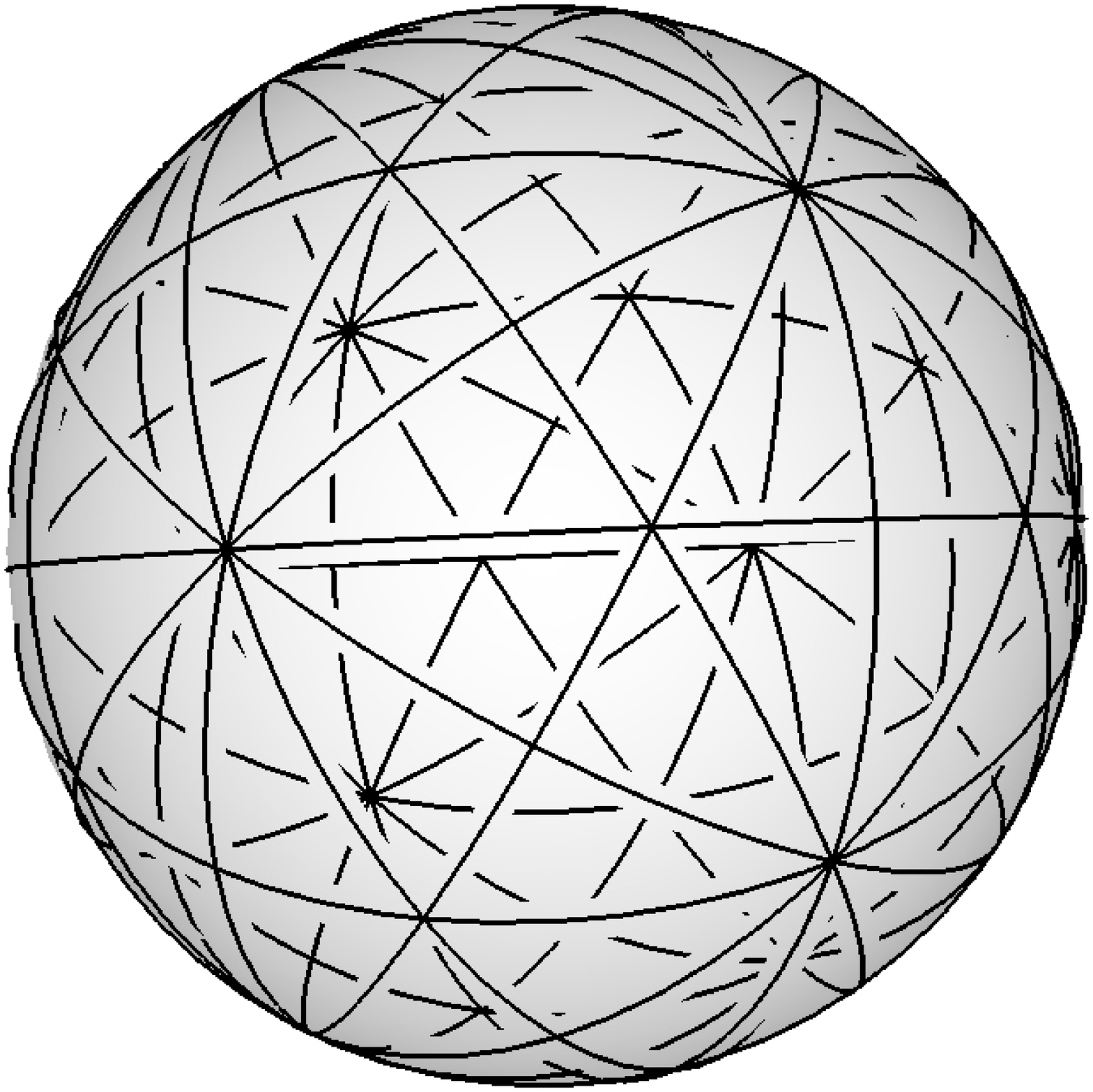,width=3cm,silent=} &
      \epsfig{figure=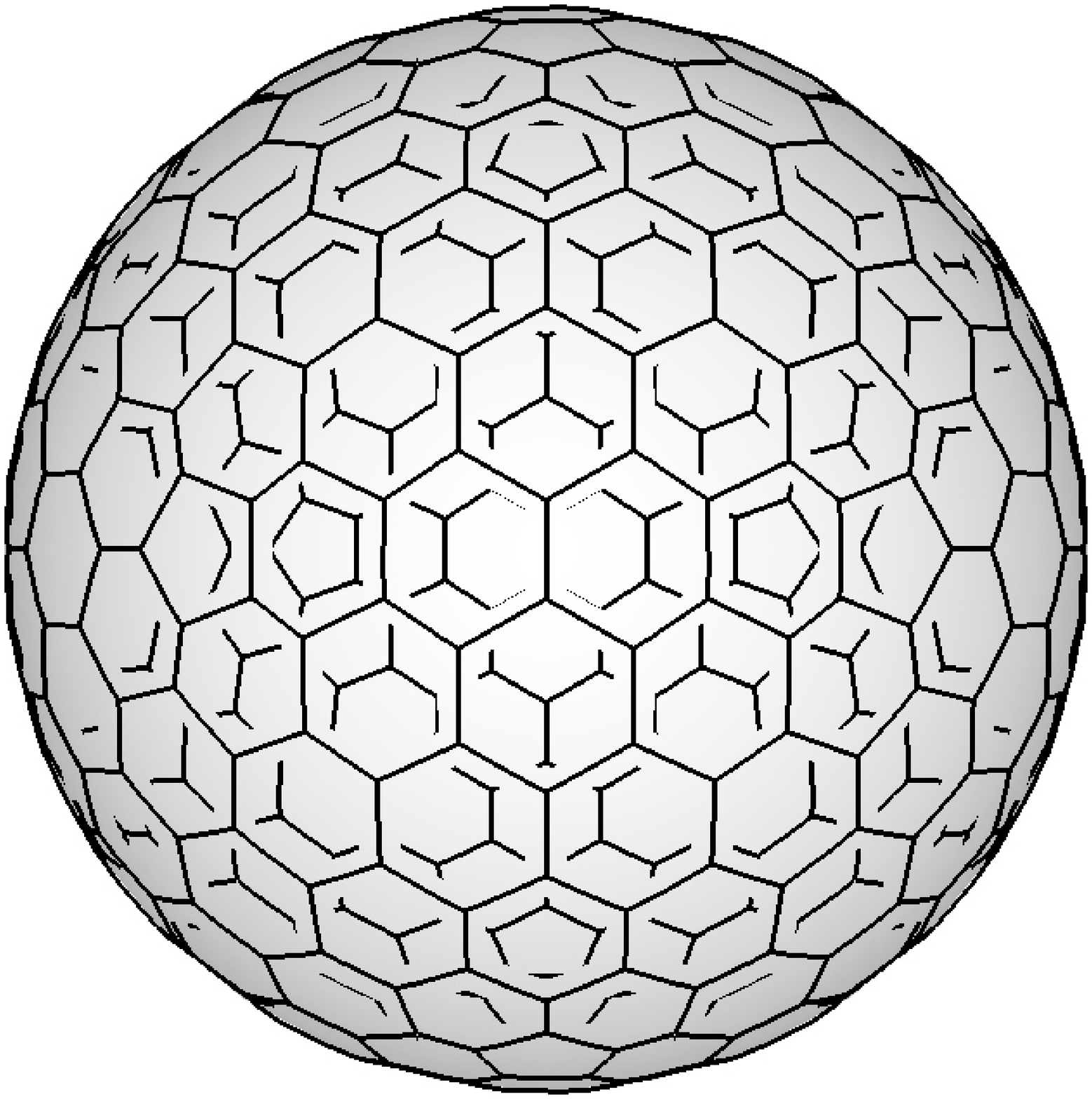,width=3cm,silent=} &
      \epsfig{figure=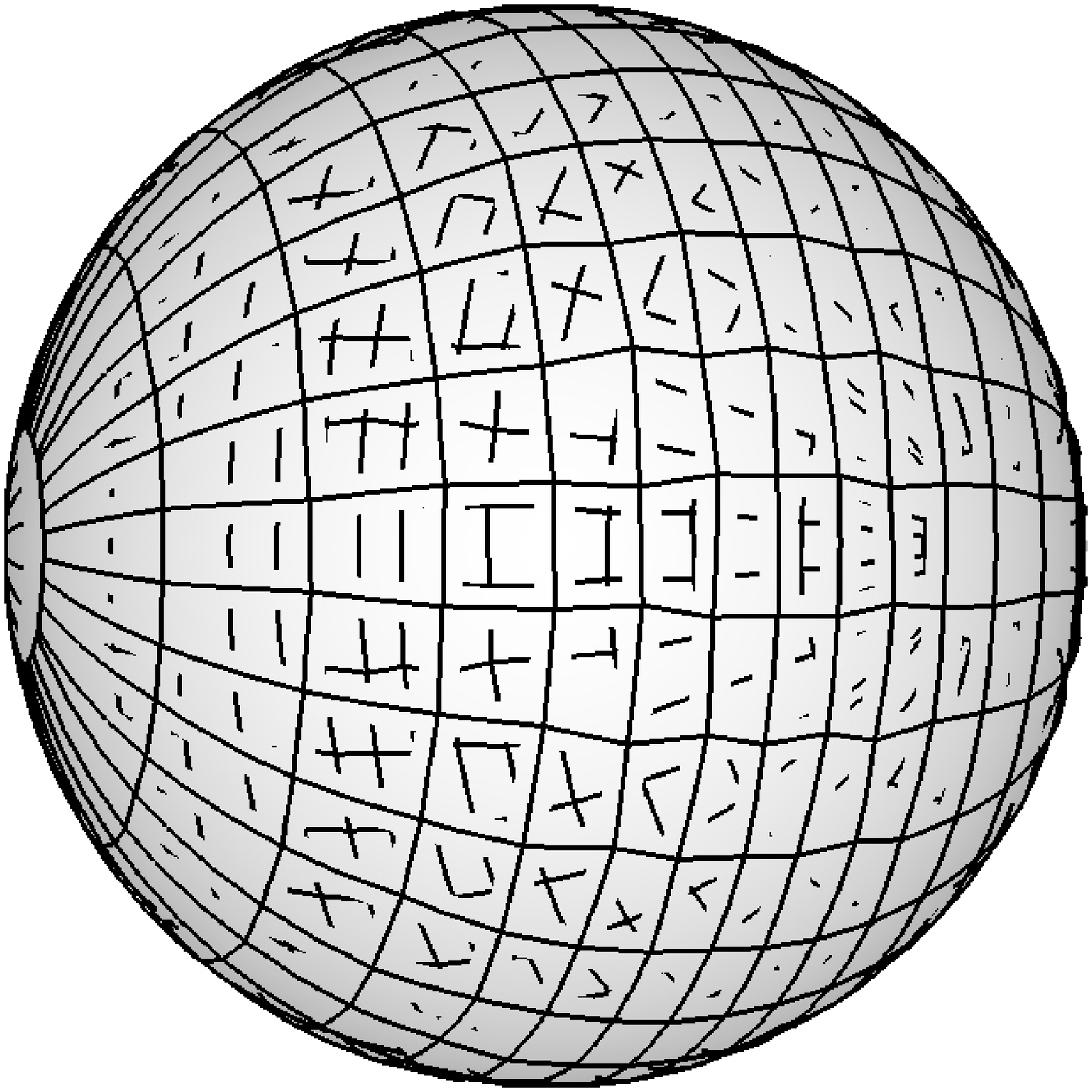,width=3cm,silent=}\\
      Pentagonal      & Truncated         & Geodesic       & Ellipsoid \\
      Hexecontahedron & Icosidodecahedron & Sphere level~4 & like polyhedron
    \end{tabular}
  }
  \caption[Gaussian maps of various polytopes]{\capStyle{Gaussian maps
    of various polytopes.}}
  \label{fig:gaussian-map}
\end{figure}

The \cPolyhedron{} data-structure, like the arrangement
\dcel\index{DCEL@\dcel}, is
based on the implementation of an \hds\index{HDS@\hds}; see
Section~\ref{sec:aos:architecture}. It consists of extendible
vertices, halfedges, and facets and incidence relations on them. It
provides methods to traverse all vertices, halfedges, and facets, and
for local traversals, such as traversing all halfedges incident to a
specific vertex. It also provides quick access from one halfedge
to its twin, and to the incident facet to its left. Each vertex and
halfedge of  the \cPolyhedron{} data-structure is extended with a
Boolean flag that indicates whether the vertex or the halfedge
respectively have already been processed during the construction of
the arrangement that represents the Gaussian map. Each facet is
extended with the handle of an arrangement vertex. The handle
extension of a facet $f$ that has already been processed points to 
the dual vertex $v = G(f)$. The procedure that converts an
(extended) \cPolyhedron{} data-structure $P$ representing a
polytope to an (extended) \aos{} data-structure $G(P)$ consists of
two steps. First all field extensions of all vertices, halfedges, and
facets of $P$ are cleared. Then, a recursive function provided with
an arbitrary vertex $v$ of $P$ as a single parameter is invoked.
This function traverses the halfedges incident to $v$. When it
encounters an unprocessed halfedge $e$, it obtains the normal $n_1$
and the vertex-handle extension $h(v_1)$ of the facet $f_1$ adjacent to
the left of $e$ and the normal $n_2$ and the vertex-handle extension
$h(v_2)$ of the facet $f_2$ adjacent to the left of the next halfedge
in the cyclic chain of halfedges incident to $v$. Finally, the geodesic
short arc between $n_1$ and $n_2$ is constructed and inserted into the
arrangement $G(P)$, as explained below, using one of the efficient
insertion member-functions of \aos{}; see
Section~\ref{ssec:aos:architecture:member-ops}. Once the insertion is
complete, the halfedge $e$ is marked as processed. $v$ is marked as
processed once all its incident halfedges are processed. The function
recursively invokes itself providing an unprocessed vertex adjacent
to $v$, and terminates when no such vertex is found.

Let $C$ indicate the new geodesic arc to be inserted into the
arrangement representing the Gaussian map. Assume that $C$ is $u$-monotone
with respect to the parameterization defined by the geometry-traits class;
see Section~\ref{sec:aos:geodesics}. That is, it does not intersect the
identification arc. Let $v_1$, $v_2$, $f_1$, and $f_2$ be the two vertices
and two facets as described above. There are four cases to handle as follows.
\begin{compactenum}
\item
  If $v_1$ and $v_2$ are both null, it implies that this is the
  first attempt to insert a geodesic arc into the arrangement. In this
  case we call \ccode{insert\_in\_face\_interior($C$,$f$)}, where $f$ is
  the single face the \dcel{} was initialized with; see
  Section~\ref{sec:aos:geodesics}. The handle of the two new vertices
  associated with the endpoints of the newly created geodesic arc $C$
  are stored in the records of the corresponding facets $f_1$ and
  $f_2$ of $P$ respectively for later use.
\item
  If $v_1$ is null but $v_2$ is not, we call either 
  \ccode{insert\_from\_left\_vertex($C$,$v_2$)} or\linebreak
  \ccode{insert\_from\_right\_vertex($C$,$v_2$)} depending on whether
  the existing vertex $v_2$ is to the right or to the left of $C$, and
  update the vertex-handle field of the corresponding facet $f_1$ with
  the new vertex $v_1$.
\item
  We handle the analogous case where $v_2$ is null but $v_1$ is not
  similarly.  
\item
  If both $v_1$ and $v_2$ are not null, we call
  \ccode{insert\_at\_vertices($C$,$v_1$,$v_2$)}. In this case no
  vertex-handle field needs to be updated.
\end{compactenum}
If $C$ intersects the identification arc, it is first split at the
intersection point into two $u$-monotone arcs $C_1$ and $C_2$. Then,
$C_1$ and $C_2$ are inserted according to four cases similar to the
above. The handling is a bit more intricate. For example, consider
the case where $v_1$ is null but $v_2$ already exists. Assume that
$C_1$ reaches the right boundary of the parameter space and $C_2$
reaches the left boundary (they both meet at an identified point).
If $v_2$ is associated with the left endpoint of $C_1$, we first call
\ccode{insert\_from\_left\_vertex($C_1$,$v_2$)}, and then
\ccode{insert\_from\_left\_vertex($C_2$,$v'$)}, where $v'$ is the new
vertex associated with the right endpoint of $C_1$ introduced while
$C_1$ is inserted. Otherwise, $v_2$ must be associated with the right
endpoint of $C_2$. In this case we first
call \ccode{insert\_from\_right\_vertex($C_2$,$v_2$)}, and
then \ccode{insert\_from\_right\_vertex($C_1$,$v'$)}, where $v'$ is the
new vertex associated with the right endpoint of $C_2$ introduced
while $C_2$ is inserted. The other three cases are handled similarly.

We have created a large database of models of polytopes.
Table~\ref{tab:rep} lists, for a small subset of our polytope
collection, the number of features in the arrangement of geodesic arcs
embedded on the sphere that represents the Gaussian map of each
polytope. Recall that the number of faces ({\bf F}) of the Gaussian
map is always equal to the number of vertices of the polytope. However,
the number of vertices ({\bf V}) of the Gaussian map is either equall to,
or greater than, the number of facets of the primal representation due to
intersections between Gaussian-map edges and the identification arc. A
similar argument holds for the edges. That is, the number of halfedges
({\bf HE}) of the Gaussian map is either equall to, or greater than,
twice the number of edges of the primal representation. An edge of the
Gaussian map that intersects the identification arc must be split at the
intersection point into two $u$-monotone geodesic arcs. The table also
lists the time in seconds ({\bf t}) it takes to construct the arrangement
once the intermediate polyhedron is in place, on a Pentium PC clocked at
1.7~GHz.

\subsection{Exact Minkowski Sums}
\label{ssec:mscn:sgm:mink_sum}
\begin{wrapfigure}[12]{r}{7.6cm}
  \vspace{-20pt}
  \centerline{
    \begin{tabular}{cc}
      \epsfig{figure=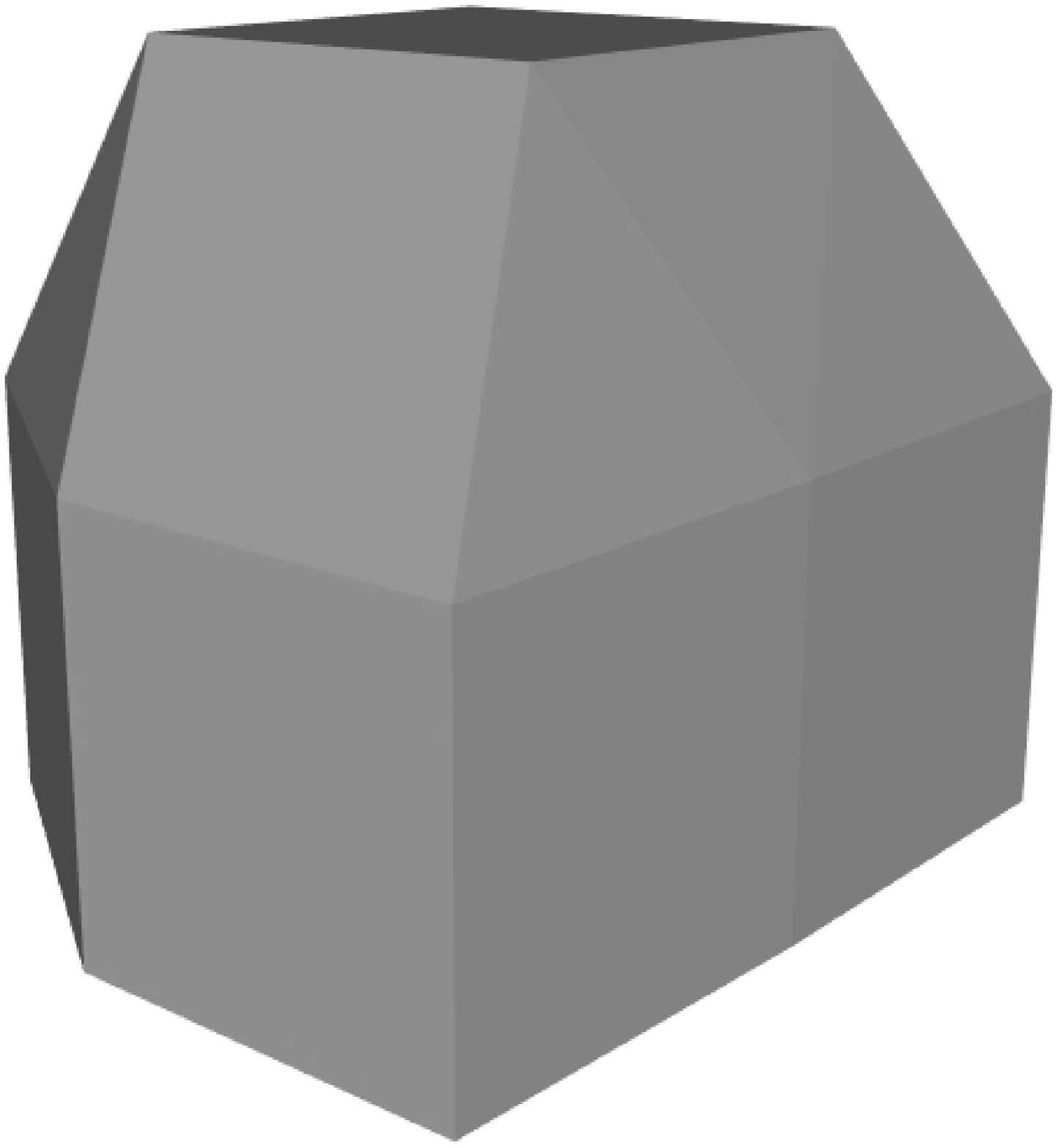,width=3.2cm,silent=} &
      \epsfig{figure=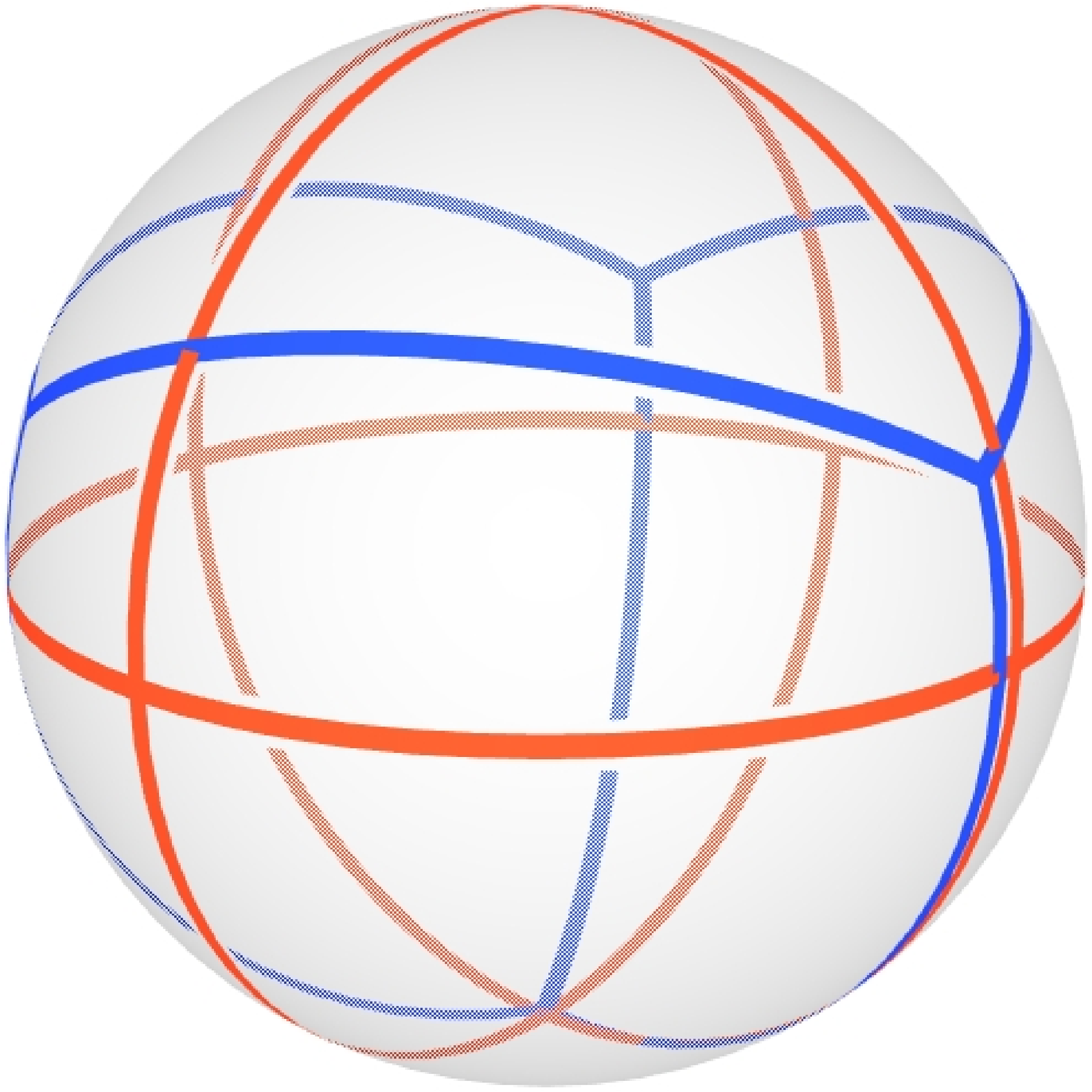,width=3.5cm,silent=}\\
      (a) & (b)
    \end{tabular}
  }
  \caption[The Minkowski sum of two polyhedra]%
           {\capStyle{(a) The Minkowski sum of a tetrahedron and a cube and
            (b) the Gaussian map of the Minkowski sum.}}
  \label{fig:mink-gm}
  \vspace{-20pt}
\end{wrapfigure}
The overlay (see Section~\ref{ssec:aos:facilities:overlay} for the
exact definition) of the Gaussian maps of two polytopes $P$ and $Q$
respectively identifies all pairs of features of $P$ and $Q$ that
have parallel supporting planes, as they occupy the same space on the
unit sphere, thus, identifying all the pairwise features that
contribute to the boundary of the Minkowski sum of $P$ and $Q$. A
facet of the Minkowski sum is either a facet $f$ of $Q$ translated by
a vertex of $P$ supported by a plane parallel to $f$, or vice versa,
or it is a facet parallel to two parallel planes supporting an edge of
$P$ and an edge of $Q$, respectively. A vertex of the Minkowski sum is
the sum of two vertices of $P$ and $Q$ respectively supported by
parallel planes.

\begin{figure}[!htp]
  \centerline{
    \begin{tabular}{cccc}
      \epsfig{figure=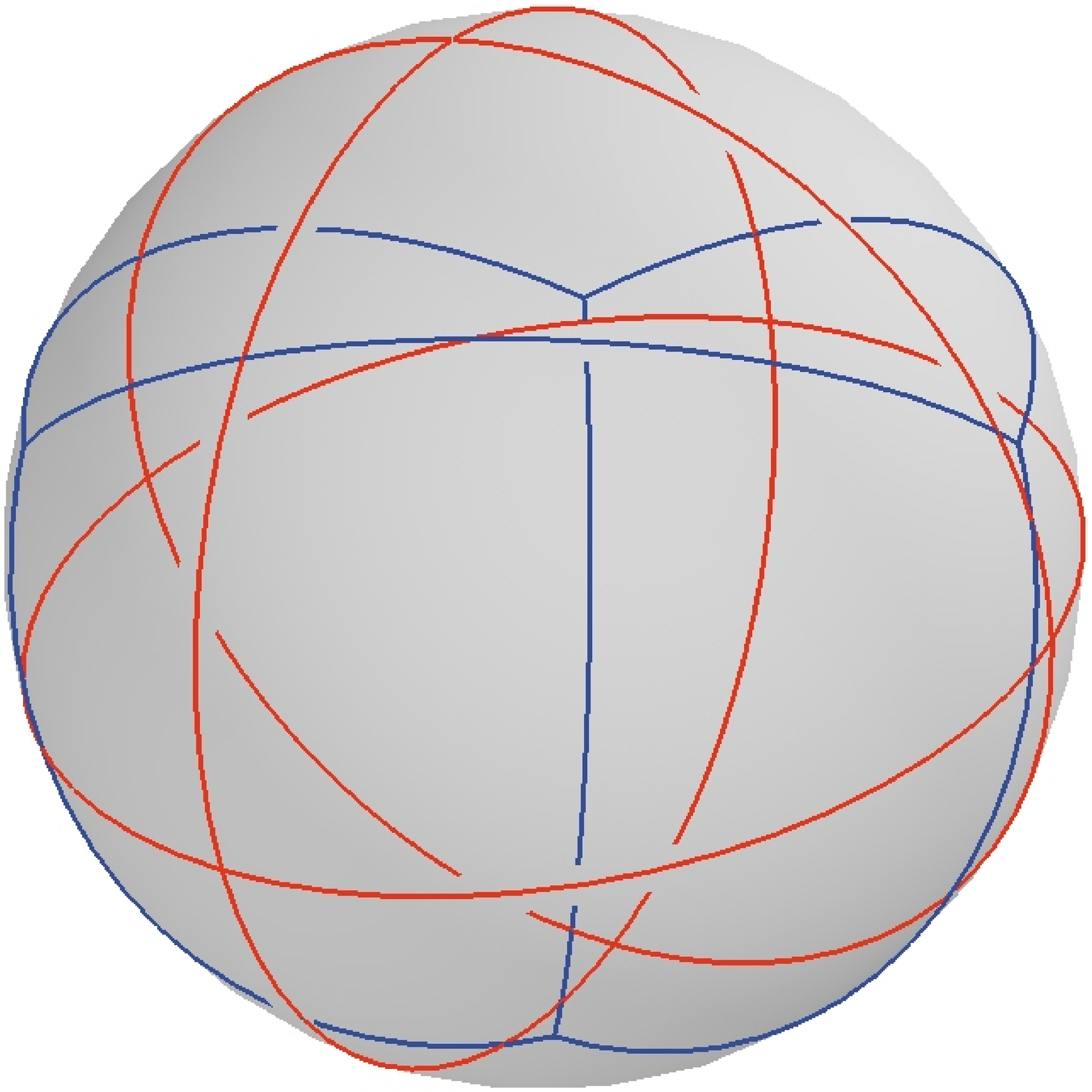,width=3cm,silent=} &
      \epsfig{figure=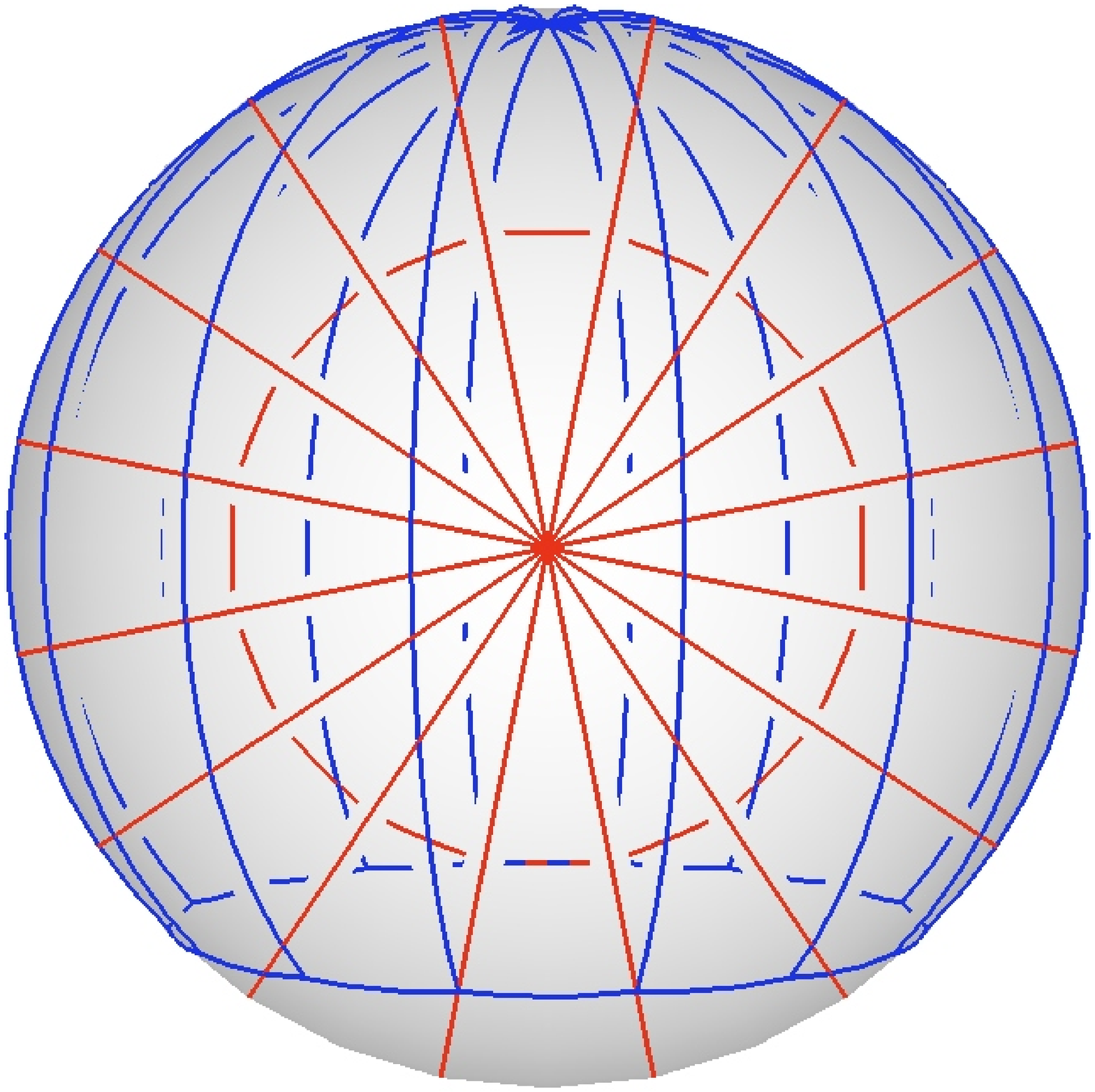,width=3cm,silent=} &
      \epsfig{figure=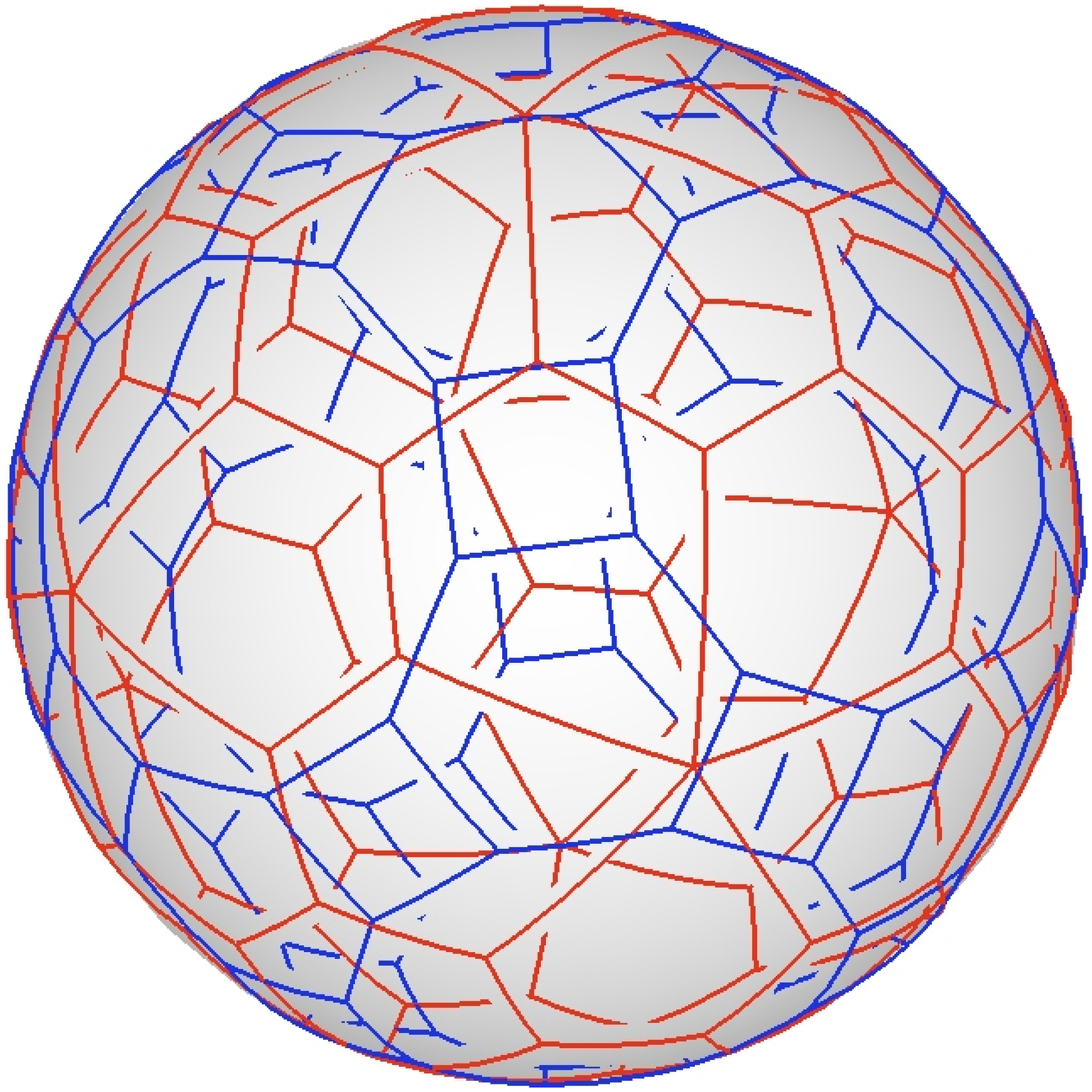,width=3cm,silent=} &
      \epsfig{figure=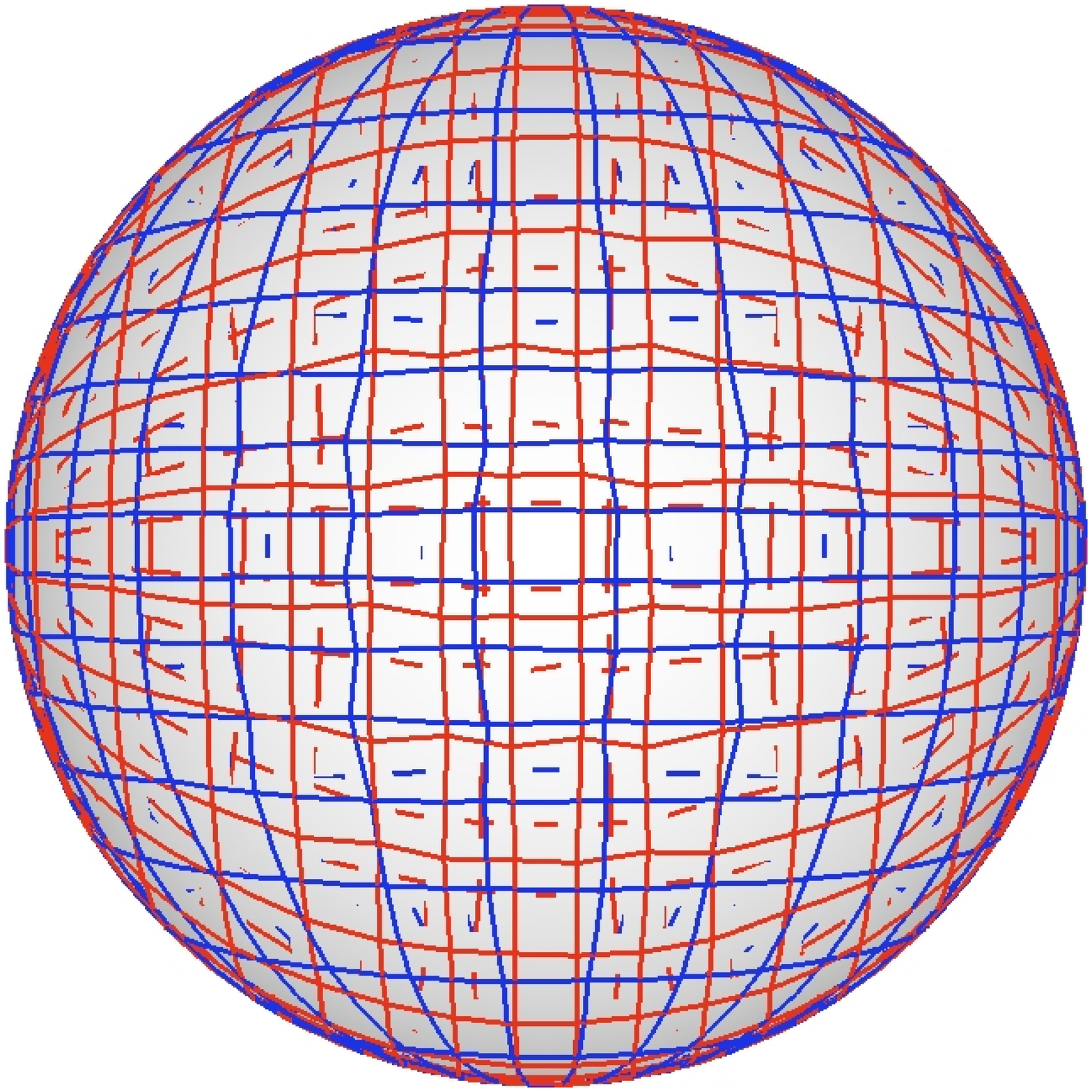,width=3cm,silent=}\\
      Tetra. $\oplus$ Cube &
      DP $\oplus$ ODP &
      PH $\oplus$ TI &
      El16 $\oplus$ OEl16
    \end{tabular}
  }
  \caption[Gaussian maps of Minkowski sums]{\capStyle{Gaussian maps of
    Minkowski sums.
             DP --- Dioctagonal Pyramid,
             PH --- Pentagonal Hexecontahedron,
             TI --- Truncated Icosidodecahedron,
             GS4 --- Geodesic Sphere level~4,
             El16 --- Ellipsoid-like polyhedron made of 16 latitudes and 32
             longitudes.
    }}
  \label{fig:gaussian-map-minkowski-sum}
\end{figure}

When the overlay operation progresses, new vertices, edges, and faces
of the resulting arrangement are created based on features of the two
operands. When a new feature is created its attributes are updated.
There are ten cases that arise and must be handled; see
Section~\ref{ssec:aos:facilities:overlay} for the precise enumeration
of the various cases. For example, a new face $f$ is induced by the
overlap of two faces $f_1$ and $f_2$ of the two summands, respectively.
The primal vertex associated with $f$ is set to be the sum of the primal
vertices associated with $f_1$ and $f_2$, respectively.

Table~\ref{tab:mink-cnt} lists the number of features ({\bf V},
{\bf HE}, {\bf F}) in the arrangement that represents the Gaussian map
of the respective Minkowski sums. Table~\ref{tab:mink-time} shows the time
in seconds ({\bf t}) it takes to construct the arrangement once the
Gaussian maps of the summands are in place.

\section{The Cubical Gaussian-Map Method}
\label{sec:mscn:cgm-method}
While using the \cgm{} increases the overhead of some operations
sixfold, and introduces degeneracies that are not present in the case
of alternative representations, it simplifies the construction and
manipulation of the representation, as the partition of each cube face
is a planar map of segments, a well known concept that has been
intensively experimented with during recent years. Indeed, all the basic
software components that the \cgm{} layer depends on are available in
\cgal{} version~3.3 and higher, while many of the software components
required by the (spherical) Gaussian map method are expected to appear
only in a future release of \cgal{}. The \cgm{} method, being more mature,
exhibits better performance than the (spherical) Gaussian map method. One
of the reasons for the performance gap is the lack of optimized primitives
that operate on unnormalized vectors in $\rrr$ in case of the (spherical)
Gaussian map method. Evidently, most of the methods of the
geometry-traits class that handle geodesic arcs embedded on the sphere
project their input $u$-monotone curves and points onto one of the
axis-aligned planes every time they are invoked, while all geometric
objects in case of the \cgm{} method are projected onto the plane
{\em a priori}. This creates an opportunity for optimization; see
Section~\ref{ssec:conclusion:arr-2d:geometry-traits-models}.
In addition, the \cgm{} data structure, being based on components
confined to the plane, has a broader recognition, as it can be used in
restricted environments, e.g., 3D hardware accelerators; see
Section~\ref{sec:conclusion:reflection-mapping}.

\subsection{The Representation}
\label{ssec:mscn:cgm:representation}
We use the \cgal{} \arr{} data-structure to maintain the planar
maps. The construction of the six planar maps from the polytope
features and their incident relations is similar to the construction
of the arrangement of geodesic arcs that represents the Gaussian map;
see Section~\ref{ssec:mscn:sgm:representation}. As in the case of the
(spherical) Gaussian map, the \cPolyhedron{} intermediate
representation is discarded once the construction of the \cgm{} is
complete. However, while a single edge of $P$ is mapped to at most two
$u$-monotone geodesic arcs in case of the (spherical) Gaussian map, it
can be mapped to a chain of at most four connected segments that lie
in four adjacent cube-faces, respectively. In any case the constructions
of the Gaussian maps of both methods respectively amount to the
insertion of curves that are pairwise disjoint in their interior into
the arrangement, an operation that is carried out efficiently,
especially when one or both endpoints are known. The construction of the
Minkowski sum, described in Section~\ref{ssec:mscn:cgm:mink_sum},
amounts to the computation of the six overlays of six pairs of planar
maps, respectively, an operation well supported by the data structure as
well.

A related dual representation had been considered and discarded before the
\cgm{} representation was chosen. It uses only two planar maps that partition
two parallel planes respectively instead of six, but each planar map 
partitions the entire plane.\footnote{Each planar map that corresponds to one 
of the six unit-cube faces in the \cgm{} representation also partitions the 
entire plane, but only the $[-1,-1] \times [1,1]$ square is relevant. 
The unbounded face, which comprises all the rest, is irrelevant.}
In this 2-map representation facets that are near orthogonal to the parallel
planes are mapped to points that are
far away from the origin. The exact representation of such points requires
coordinates with large bit-lengths, which increases significantly the
time it takes to perform exact arithmetic operations on them. Moreover,
facets exactly orthogonal to the parallel planes are mapped to points at
infinity, and require special handling all together.

Features that are not in general position, such as two parallel facets
facing the same direction, one from each polytope, or worse yet, two
identical polytopes, typically require special treatment. Still, the
handling of many of these problematic cases falls under the
``general'' case, and becomes transparent to the Gaussian-map layer
(either cubical or spherical). Consider for example the case of two
neighboring facets in one polytope that have parallel neighboring
facets in the other. This translates to overlapping segments in case
of the \cgm{} method and overlapping geodesic arcs in case of the
(spherical) Gaussian-map method, one from each Gaussian map of the two
polytopes,\footnote{Other conditions translate to overlapping segments
in case of the \cgm{} or geodesic arcs in case of the (spherical)
Gaussian map method as well.} that appear during the Minkowski sum
computation. The algorithm that computes it is oblivious to this
condition, as the underlying arrangement data structure, either
embedded in the plane or on the sphere, is perfectly capable of
handling overlapping curves. However, as mentioned above, other
degeneracies do emerge, and are handled successfully. One example, in 
case of the \cgm{}, is a facet $f$ mapped to a point that lies on an
edge of the unit cube, or even worse, coincides with one of the eight
corners of the cube. Figure~\ref{fig:models1}(a,b,c) depicts an
extreme degenerate case of an octahedron oriented in such a way that
its eight facets are mapped to the eight vertices of the unit
cube, respectively. The (spherical) Gaussian map method is not free of
degenerate conditions. For example, a facet mapped to a point that
lies on the identification arc, or worse yet, coincides with one of
the two poles.

The dual representation is extended further, in order to handle all
these degeneracies and perform all the necessary operations as
efficiently as possible. Each planar map is initialized with four
edges and four vertices that define the unit square --- the
parallel-axis square circumscribing the unit circle. During
construction, some of these edges or portions of them along with some
of these vertices may turn into real elements of the \cgm. The
introduction of these artificial elements not only expedites the
traversals below, but is also necessary for handling degenerate cases,
such as an empty cube face that appears in the representation of the
tetrahedron and depicted in Figure~\ref{fig:tet}(c). The global data 
consists of the six planar maps and 24 references to the planar
vertices that coincide with the unit-cube corners.

\begin{wrapfigure}{l}{8.6cm}
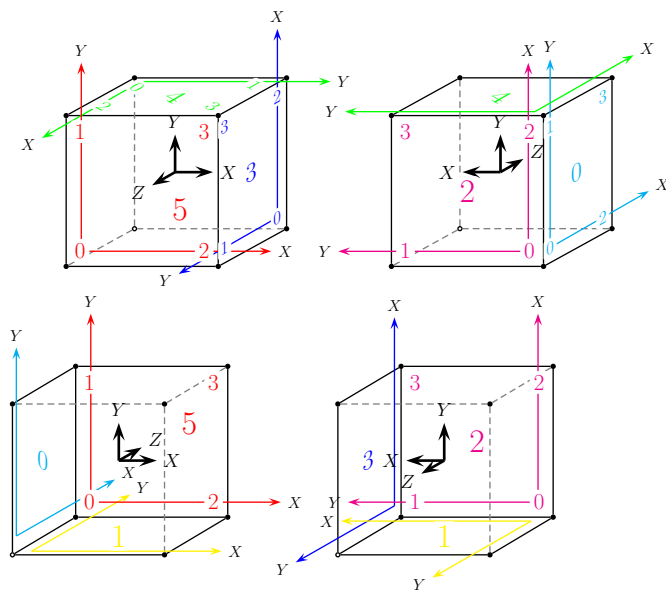

  \vspace{-25pt}
  \begin{center}
    \scalebox{0.5}{\newlength{\displacement}\setlength{\displacement}{0.4cm}
\newlength{\ndisplacement}\setlength{\ndisplacement}{-0.4cm}
\newlength{\hdisplacement}\setlength{\hdisplacement}{0.2cm}
  \begin{tabular}{cc}
    \pspicture[](-3,-2.5)(5,4.5)
    \psset{unit=1cm,subgriddiv=0,arrowsize=3pt 3,linewidth=1pt,
      viewpoint=1 -1 1
    }
    \cnode*(-2,-2){2pt}{0}
    \cnode*(-2, 2){2pt}{1}
    \cnode*( 2, 2){2pt}{2}
    \cnode*( 2,-2){2pt}{3}
    \rput{30}(0,0){
      \rput{-30}(2,0){
	\cnode(-2,-2){2pt}{4}
	\cnode*(-2, 2){2pt}{5}
	\cnode*( 2, 2){2pt}{6}
	\cnode*( 2,-2){2pt}{7}
      }
    }
    \rput{30}(2,-2){
      \rput{0}(2,0){
	\rput{-30}(-\hdisplacement,0){
	  \rput{0}(0,\displacement){
	    \pnode(0,0){a1}
	    \pnode(0,5){b1}
	    \uput[90]{0}(0,5){$X$}
	    \rput{30}(0,0){
	      \pnode(-3,0){c1}
	      \uput[180]{-30}(-3,0){$Y$}
	    }
	  }
	}
      }
    }
    \rput{30}(-2,2){
      \rput{0}(2,0){
	\rput{-30}(-\hdisplacement,0){
	  \rput{0}(\displacement,0){
	    \pnode(0,0){a2}
	    \pnode(5,0){b2}
	    \uput[0]{0}(5,0){$Y$}
	    \rput{30}(0,0){
	      \pnode(-3,0){c2}
	      \uput[180]{-30}(-3,0){$X$}
	    }
	  }
	}
      }
    }
    \ncline{0}{1}
    \ncline{1}{2}
    \ncline{2}{3}
    \ncline{3}{0}
    \ncline[linecolor=gray,linestyle=dashed]{4}{5}
    \ncline{5}{6}
    \ncline{6}{7}
    \ncline[linecolor=gray,linestyle=dashed]{7}{4}
    \ncline[linecolor=gray,linestyle=dashed]{0}{4}
    \ncline{1}{5}
    \ncline{2}{6}
    \ncline{3}{7}
    \ncline[linecolor=blue]{->}{a1}{b1}
    \ncline[linecolor=blue]{->}{a1}{c1}
    \ncline[linecolor=green]{->}{a2}{b2}
    \ncline[linecolor=green]{->}{a2}{c2}
    \rput{0}(\displacement,\displacement){
      \psline[linecolor=red]{->}(-2,-2)(3,-2)
      \psline[linecolor=red]{->}(-2,-2)(-2,3)
      \uput[0]{0}(3,-2){$X$}
      \uput[90]{0}(-2,3){$Y$}
    }
    \rput{30}(0,0){
      \rput{-30}(1,0){
	\my_axis{0}{1}{90}{1}{210}{0.7}
      }
    }
    \rput{0}(1,-0.5){\psframebox*[framearc=.3]{\red \Huge 5}}
    \rput{30}(2,0){
      \rput{-30}(1,0){
	\ThreeDput[normal=1 .0 .0](0,0,0){
	  \psframebox*[framearc=.3]{\blue \Huge 3}
	}
      }
    }
    \rput{0}(2,2){
      \rput{30}(0,-\displacement){
	\rput{0}(2,0){
	  \rput{-30}(-\hdisplacement,0){
	    \ThreeDput[normal=1 .0 .0](0,0,0){
	      \psframebox*[framearc=.3]{\blue \Large 2}
	    }
	  }
	}
      }
    }
    \rput{0}(2,-2){
      \rput{30}(0,\displacement){
	\rput{-30}(\hdisplacement,0){
	  \ThreeDput[normal=1 .0 .0](0,0,0){
	    \psframebox*[framearc=.3]{\blue \Large 1}
	  }
	}
      }
    }
    \rput{0}(2,-2){
      \rput{30}(0,\displacement){
	\rput{0}(2,0){
	  \rput{-30}(-\hdisplacement,0){
	    \ThreeDput[normal=1 .0 .0](0,0,0){
	      \psframebox*[framearc=.3]{\blue \Large 0}
	    }
	  }
	}
      }
    }
    \rput{0}(2,2){
      \rput{30}(0,-\displacement){
	\rput{-30}(\hdisplacement,0){
	  \ThreeDput[normal=1 .0 .0](0,0,0){
	    \psframebox*[framearc=.3]{\blue \Large 3}
	  }
	}
      }
    }
    \rput{30}(0,2){
      \rput{-30}(1,0){
	\ThreeDput[normal=.0 .0 1](0,0,0){
	  \psframebox*[framearc=.3]{\green \Huge 4}
	}
      }
    }
    \rput{0}(2,2){
      \ThreeDput[normal=.0 .0 1](0,0,0){
	\rput{0}(-0.4,0.2){
	  \psframebox*[framearc=.3]{\green \Large 3}
	}
      }
    }
    \rput{0}(-2,2){
      \ThreeDput[normal=.0 .0 1](0,0,0){
	\rput{0}(0.4,1){
	  \psframebox*[framearc=.3]{\green \Large 2}
	}
      }
    }
    \rput{30}(2,2){
      \rput{-30}(2,0){
	\ThreeDput[normal=.0 .0 1](0,0,0){
	  \rput{0}(-0.3,-0.8){
	    \psframebox*[framearc=.3]{\green \Large 1}
	  }
	}
      }
    }
    \rput{30}(-2,2){
      \rput{-30}(2,0){
	\ThreeDput[normal=.0 .0 1](0,0,0){
	  \rput{0}(0.4,-0.2){
	    \psframebox*[framearc=.3]{\green \Large 0}
	  }
	}
      }
    }
    \uput[45]{0}(-2,-2){\psframebox*[framearc=.3]{\red \Large 0}}
    \uput[-45]{0}(-2, 2){\psframebox*[framearc=.3]{\red \Large 1}}
    \uput[-135]{0}( 2, 2){\psframebox*[framearc=.3]{\red \Large 3}}
    \uput[135]{0}( 2,-2){\psframebox*[framearc=.3]{\red \Large 2}}
    \endpspicture &
    \pspicture[](-3,-2.5)(5,4.5)
    \psset{unit=1cm,subgriddiv=0,arrowsize=3pt 3,linewidth=1pt,
      viewpoint=-1 -1 -1
    }
    \cnode*(-2,-2){2pt}{0}
    \cnode*(-2, 2){2pt}{1}
    \cnode*( 2, 2){2pt}{2}
    \cnode*( 2,-2){2pt}{3}
    \rput{30}(0,0){
      \rput{-30}(2,0){
	\cnode(-2,-2){2pt}{4}
	\cnode*(-2, 2){2pt}{5}
	\cnode*( 2, 2){2pt}{6}
	\cnode*( 2,-2){2pt}{7}
      }
    }
    \rput{30}(2,-2){
      \rput{-30}(\hdisplacement,0){
	\rput{0}(0,\displacement){
	  \pnode(0,0){a1}
	  \pnode(0,5){b1}
	  \uput[90]{0}(0,5){$Y$}
	  \rput{30}(0,0){
	    \pnode(3,0){c1}
	    \uput[0]{-30}(3,0){$X$}
	  }
	}
      }
    }
    \rput{0}(2,2){
      \rput{30}(-\displacement,0){
	\rput{-30}(\hdisplacement,0){
	  \pnode(0,0){a2}
	  \pnode(-5,0){b2}
	  \uput[180]{0}(-5,0){$Y$}
	  \rput{30}(0,0){
	    \pnode(3,0){c2}
	    \uput[0]{-30}(3,0){$X$}
	  }
	}
      }
    }
    \ncline{0}{1}
    \ncline{1}{2}
    \ncline{2}{3}
    \ncline{3}{0}
    \ncline[linecolor=gray,linestyle=dashed]{4}{5}
    \ncline{5}{6}
    \ncline{6}{7}
    \ncline[linecolor=gray,linestyle=dashed]{7}{4}
    \ncline[linecolor=gray,linestyle=dashed]{0}{4}
    \ncline{1}{5}
    \ncline{2}{6}
    \ncline{3}{7}
    \ncline[linecolor=cyan]{->}{a1}{b1}
    \ncline[linecolor=cyan]{->}{a1}{c1}
    \ncline[linecolor=green]{->}{a2}{b2}
    \ncline[linecolor=green]{->}{a2}{c2}
    \rput{0}(2,-2){
      \rput{0}(-\displacement,\displacement){
	\psline[linecolor=magenta]{->}(0,0)(-5,0)
	\psline[linecolor=magenta]{->}(0,0)(0,5)
	\uput[180]{0}(-5,0){$Y$}
	\uput[90]{0}(0,5){$X$}
      }
    }
    \rput{30}(0,0){
      \rput{-30}(1,0){
	\my_axis{180}{1}{90}{1}{30}{0.7}
      }
    }
    \rput{0}(0,0){\psframebox*[framearc=.3]{\magenta \Huge 2}}
    \rput{30}(2,0){
      \rput{-30}(1,0){
	\ThreeDput[normal=-1 .0 .0](0,0,0){
	  \psframebox*[framearc=.3]{\cyan \Huge 0}
	}
      }
    }
    \rput{0}(2,-2){
      \rput{30}(0,\displacement){
	\rput{-30}(\hdisplacement,0){
	  \ThreeDput[normal=-1 .0 .0](0,0,0){
	    \psframebox*[framearc=.3]{\cyan \Large 0}
	  }
	}
      }
    }
    \rput{0}(2,-2){
      \rput{30}(0,\displacement){
	\rput{0}(2,0){
	  \rput{-30}(-\hdisplacement,0){
	    \ThreeDput[normal=-1 .0 .0](0,0,0){
	      \psframebox*[framearc=.3]{\cyan \Large 2}
	    }
	  }
	}
      }
    }
    \rput{0}(2,2){
      \rput{30}(0,-\displacement){
	\rput{-30}(\hdisplacement,0){
	  \ThreeDput[normal=-1 .0 .0](0,0,0){
	    \psframebox*[framearc=.3]{\cyan \Large 1}
	  }
	}
      }
    }
    \rput{0}(2,2){
      \rput{30}(0,-\displacement){
	\rput{0}(2,0){
	  \rput{-30}(-\hdisplacement,0){
	    \ThreeDput[normal=-1 .0 .0](0,0,0){
	      \psframebox*[framearc=.3]{\cyan \Large 3}
	    }
	  }
	}
      }
    }
    \rput{30}(0,2){
      \rput{-30}(1,0){
	\ThreeDput[normal=.0 .0 -1](0,0,0){
	  \psframebox*[framearc=.3]{\green \Huge 4}
	}
      }
    }
    \uput[45]{0}(-2,-2){\psframebox*[framearc=.3]{\magenta \Large 1}}
    \uput[-45]{0}(-2, 2){\psframebox*[framearc=.3]{\magenta \Large 3}}
    \uput[-135]{0}( 2, 2){\psframebox*[framearc=.3]{\magenta \Large 2}}
    \uput[135]{0}( 2,-2){\psframebox*[framearc=.3]{\magenta \Large 0}}
    \endpspicture \\
    \pspicture[](-3,-3.5)(4.5,4)
    \psset{unit=1cm,subgriddiv=0,arrowsize=3pt 3,linewidth=1pt,
      viewpoint=-1 1 -1
    }
    \cnode*(-2,-2){2pt}{0}
    \cnode*(-2, 2){2pt}{1}
    \cnode*( 2, 2){2pt}{2}
    \cnode*( 2,-2){2pt}{3}
    \rput{210}(0,0){
      \rput{-210}(2,0){
	\cnode(-2,-2){2pt}{4}
	\cnode*(-2, 2){2pt}{5}
	\cnode*( 2, 2){2pt}{6}
	\cnode*( 2,-2){2pt}{7}
      }
    }
    \rput{210}(-2,-2){
      \rput{0}(2,0){
	\rput{-210}(-\hdisplacement,0){
	  \rput{0}(0,\displacement){
	    \pnode(0,0){a1}
	    \pnode(0,5){b1}
	    \uput[90]{0}(0,5){$Y$}
	    \rput{210}(0,0){
	      \pnode(-3,0){c1}
	      \uput[180]{-210}(-3,0){$X$}
	    }
	  }
	}
      }
    }
    \rput{0}(-2,-2){
      \rput{210}(\displacement,0){
	\rput{0}(2,0){
	  \rput{-210}(-\hdisplacement,0){
	    \pnode(0,0){a2}
	    \pnode(5,0){b2}
	    \uput[0]{0}(5,0){$X$}
	    \rput{210}(0,0){
	      \pnode(-3,0){c2}
	      \uput[180]{-210}(-3,0){$Y$}
	    }
	  }
	}
      }
    }
    \ncline{0}{1}
    \ncline{1}{2}
    \ncline{2}{3}
    \ncline{3}{0}
    \ncline{4}{5}
    \ncline[linecolor=gray,linestyle=dashed]{5}{6}
    \ncline[linecolor=gray,linestyle=dashed]{6}{7}
    \ncline{7}{4}
    \ncline{0}{4}
    \ncline{1}{5}
    \ncline[linecolor=gray,linestyle=dashed]{2}{6}
    \ncline{3}{7}
    \ncline[linecolor=cyan]{->}{a1}{b1}
    \ncline[linecolor=cyan]{->}{a1}{c1}
    \ncline[linecolor=yellow]{->}{a2}{b2}
    \ncline[linecolor=yellow]{->}{a2}{c2}
    \rput{0}(\displacement,\displacement){
      \psline[linecolor=red]{->}(-2,-2)(3,-2)
      \psline[linecolor=red]{->}(-2,-2)(-2,3)
      \uput[0]{0}(3,-2){$X$}
      \uput[90]{0}(-2,3){$Y$}
    }
    \rput{210}(0,0){
      \rput{-210}(1,0){
	\my_axis{0}{1}{90}{1}{30}{0.7}
      }
    }
    \rput{0}(1,0.5){\psframebox*[framearc=.3]{\red \Huge 5}}
    \rput{210}(-2,0){
      \rput{-210}(1,0){
	\ThreeDput[normal=-1 .0 .0](0,0,0){
	  \psframebox*[framearc=.3]{\cyan \Huge 0}
	}
      }
    }
    \rput{210}(0,-2){
      \rput{-210}(1,0){
	\psframebox*[framearc=.3]{\yellow \Huge 1}
      }
    }
    \uput[45]{0}(-2,-2){\psframebox*[framearc=.3]{\red \Large 0}}
    \uput[-45]{0}(-2, 2){\psframebox*[framearc=.3]{\red \Large 1}}
    \uput[-135]{0}( 2, 2){\psframebox*[framearc=.3]{\red \Large 3}}
    \uput[135]{0}( 2,-2){\psframebox*[framearc=.3]{\red \Large 2}}
    \endpspicture &
    \pspicture[](-3,-3.5)(4.5,4)
    \psset{unit=1cm,subgriddiv=0,arrowsize=3pt 3,linewidth=1pt,
      viewpoint=1 1 -1
    }
    \cnode*(-2,-2){2pt}{0}
    \cnode*(-2, 2){2pt}{1}
    \cnode*( 2, 2){2pt}{2}
    \cnode*( 2,-2){2pt}{3}
    \rput{210}(0,0){
      \rput{-210}(2,0){
	\cnode(-2,-2){2pt}{4}
	\cnode*(-2, 2){2pt}{5}
	\cnode*( 2, 2){2pt}{6}
	\cnode*( 2,-2){2pt}{7}
      }
    }
    \rput{0}(-2,-2){
      \rput{210}(0,\displacement){
	\rput{-210}(\hdisplacement,0){
	  \pnode(0,0){a1}
	  \pnode(0,5){b1}
	  \uput[90]{0}(0,5){$X$}
	  \rput{210}(0,0){
	    \pnode(3,0){c1}
	    \uput[0]{-210}(3,0){$Y$}
	  }
	}
      }
    }
    \rput{0}(2,-2){
      \rput{210}(-\displacement,0){
	\rput{-210}(\hdisplacement,0){
	  \pnode(0,0){a2}
	  \pnode(-5,0){b2}
	  \uput[180]{0}(-5,0){$X$}
	  \rput{210}(0,0){
	    \pnode(3,0){c2}
	    \uput[0]{-210}(3,0){$Y$}
	  }
	}
      }
    }
    \ncline{0}{1}
    \ncline{1}{2}
    \ncline{2}{3}
    \ncline{3}{0}
    \ncline{4}{5}
    \ncline[linecolor=gray,linestyle=dashed]{5}{6}
    \ncline[linecolor=gray,linestyle=dashed]{6}{7}
    \ncline{7}{4}
    \ncline{0}{4}
    \ncline{1}{5}
    \ncline[linecolor=gray,linestyle=dashed]{2}{6}
    \ncline{3}{7}
    \ncline[linecolor=blue]{->}{a1}{b1}
    \ncline[linecolor=blue]{->}{a1}{c1}
    \ncline[linecolor=yellow]{->}{a2}{b2}
    \ncline[linecolor=yellow]{->}{a2}{c2}
    \rput{0}(-\displacement,\displacement){
      \psline[linecolor=magenta]{->}(2,-2)(-3,-2)
      \psline[linecolor=magenta]{->}(2,-2)(2,3)
      \uput[180]{0}(-3,-2){$Y$}
      \uput[90]{0}(2,3){$X$}
    }
    \rput{210}(0,0){
      \rput{-210}(1,0){
	\my_axis{180}{1}{90}{1}{210}{0.7}
      }
    }
    \rput{0}(.0,.0){\psframebox*[framearc=.3]{\magenta \Huge 2}}
    \rput{210}(-2,0){
      \rput{-210}(1,0){
	\ThreeDput[normal=1 .0 .0](0,0,0){
	  \psframebox*[framearc=.3]{\blue \Huge 3}
	}
      }
    }
    \rput{210}(0,-2){
      \rput{-210}(1,0){
	\psframebox*[framearc=.3]{\yellow \Huge 1}
      }
    }
   \uput[45]{0}(-2,-2){\psframebox*[framearc=.3]{\magenta \Large 1}}
    \uput[-45]{0}(-2, 2){\psframebox*[framearc=.3]{\magenta \Large 3}}
    \uput[-135]{0}( 2, 2){\psframebox*[framearc=.3]{\magenta \Large 2}}
    \uput[135]{0}( 2,-2){\psframebox*[framearc=.3]{\magenta \Large 0}}
    \endpspicture
  \end{tabular}
}
    \caption[The cubical Gaussian map data structure]%
            {\capStyle{The data structure. Large-font numbers indicate plane
             ids. Small-font numbers indicate corner ids. $X$ and $Y$ axes
             in different 2D coordinate systems are rendered in different
             colors.}}
   \label{fig:data_struct}
  \end{center}
  \vspace{-20pt}
\end{wrapfigure}
The exact mapping from a facet normal in the 3D coordinate-system to a pair
that consists of a planar map and a planar point in the 2D coordinate-system
is defined precisely through the indexing and ordering system, illustrated
in Figure~\ref{fig:data_struct}. Now before your eyes cross permanently, we
advise you to keep reading the next few lines, as they reveal the meaning of
some of the enigmatic numbers that appear in the figure. The six planar maps
are given unique ids from $0$ through $5$. Ids $0$, $1$, and $2$ are
associated with the planes $x = -1$, $y = -1$, and $z = -1$, respectively,
and ids $3$, $4$, and $5$ are associated with the planes $x = 1$, $y = 1$,
and $z = 1$, respectively. The major axes in the 2D Cartesian 
coordinate-system of each planar map are determined by the 3D
coordinate-system. The four corner vertices of each planar map are
also given unique ids from $0$ through $3$ in lexicographic order in
their respective 2D coordinate-system, see
Table~\ref{tab:coord-system} columns titled
\textbf{Underlying Plane} and \textbf{2D Axes}.

\begin{table}[!hbp]
  \caption[The coordinate systems and the cyclic chains of corner vertices]%
          {\capStyle{The coordinate systems and the cyclic chains of corner
           vertices. \textbf{PM} stands for \textbf{Planar Map}, and
           \textbf{Cr} stands for \textbf{Corner}.}}
  \label{tab:coord-system}
  \centerline{
  \begin{tabular}{|cc||c|c||c|c||c|c||c|c||c|c|}
    \hline
    \multicolumn{2}{|c||}{\textbf{Underlying}} &
    \multicolumn{2}{c||}{\multirow{2}*{\textbf{2D Axes}}} &
    \multicolumn{8}{c|}{\textbf{Corner}} \\ \cline{5-12}
    \multicolumn{2}{|c||}{\textbf{ ~~~~ Plane}} & \multicolumn{1}{c}{\ } & &
    \multicolumn{2}{c||}{\textbf{0} (0,0)} &
    \multicolumn{2}{c||}{\textbf{1} (0,1)} &
    \multicolumn{2}{c||}{\textbf{2} (1,0)} &
    \multicolumn{2}{c|}{\textbf{3} (1,1)} \\  \cline{1-2} \cline{3-4} \cline{5-12}
    \multicolumn{1}{|c|}{\textbf{Id}} & \textbf{Eq} &
    \multicolumn{1}{c|}{\boldmath ~ $X$ ~} &
    \multicolumn{1}{c||}{\boldmath $Y$} &
    \textbf{PM} & \textbf{Cr} & \textbf{PM} & \textbf{Cr} &
    \textbf{PM} & \textbf{Cr} & \textbf{PM} & \textbf{Cr} \\
    \hline
    \hline
    \multicolumn{1}{|c|}{0} & $x = -1$ & $Z$ & $Y$ & 1 & 0 & 2 & 2 & 5 & 0 & 4 & 2\\
    \multicolumn{1}{|c|}{1} & $y = -1$ & $X$ & $Z$ & 2 & 0 & 0 & 2 & 3 & 0 & 5 & 2\\
    \multicolumn{1}{|c|}{2} & $z = -1$ & $Y$ & $X$ & 0 & 0 & 1 & 2 & 4 & 0 & 3 & 2\\
    \multicolumn{1}{|c|}{3} & $x =  1$ & $Y$ & $Z$ & 2 & 1 & 1 & 3 & 4 & 1 & 5 & 3\\
    \multicolumn{1}{|c|}{4} & $y =  1$ & $Z$ & $X$ & 0 & 1 & 2 & 3 & 5 & 1 & 3 & 3\\
    \multicolumn{1}{|c|}{5} & $z =  1$ & $X$ & $Y$ & 1 & 1 & 0 & 3 & 3 & 1 & 4 & 3\\
    \hline
    \end{tabular}
  }
\end{table}

Each feature type of the \dcel\index{DCEL@\dcel} used to maintain the incidence
relations of the vertices, halfedges, and faces of the \arr{} data
structure (see Section~\ref{sec:aos:architecture}) is extended to hold
additional attributes. Some of the attributes are introduced only in
order to expedite the computation of certain operations, but most of
them are necessary to handle degenerate cases such as a planar vertex
lying on the unit-square boundary. Each planar-map vertex $v$ is
extended with
\setcounter{ms-const:cntr}{1}(\roman{ms-const:cntr}) the coefficients of the
plane containing the polygonal facet $C^{-1}(v)$ (see
Section~\ref{sec:mscn:gauss_map} for the definition of $C$ and $C^{-1}$),
\addtocounter{ms-const:cntr}{1}(\roman{ms-const:cntr}) the location of the
vertex --- an enumeration indicating whether the vertex coincides with a cube
corner, or lies on a cube edge, or contained in a cube face,
\addtocounter{ms-const:cntr}{1}(\roman{ms-const:cntr}) a Boolean flag
indicating whether it is non-artificial (there exists a facet that maps to
it), and
\addtocounter{ms-const:cntr}{1}(\roman{ms-const:cntr}) a pointer to a vertex
of a planar map associated with an adjacent cube-face that represents the
same central projection for vertices that coincide with a cube corner or lie
on a cube edge. Each planar-map halfedge $e$ is extended with a Boolean flag
indicating whether it is non-artificial (there exists a polytope edge that
maps to it). Each planar-map face $f$ is extended with the polytope vertex
that maps to it $v = C^{-1}(f)$.

\begin{figure*}[!htp]%
  \centerline{
    \begin{tabular}{ccc}
      \epsfig{figure=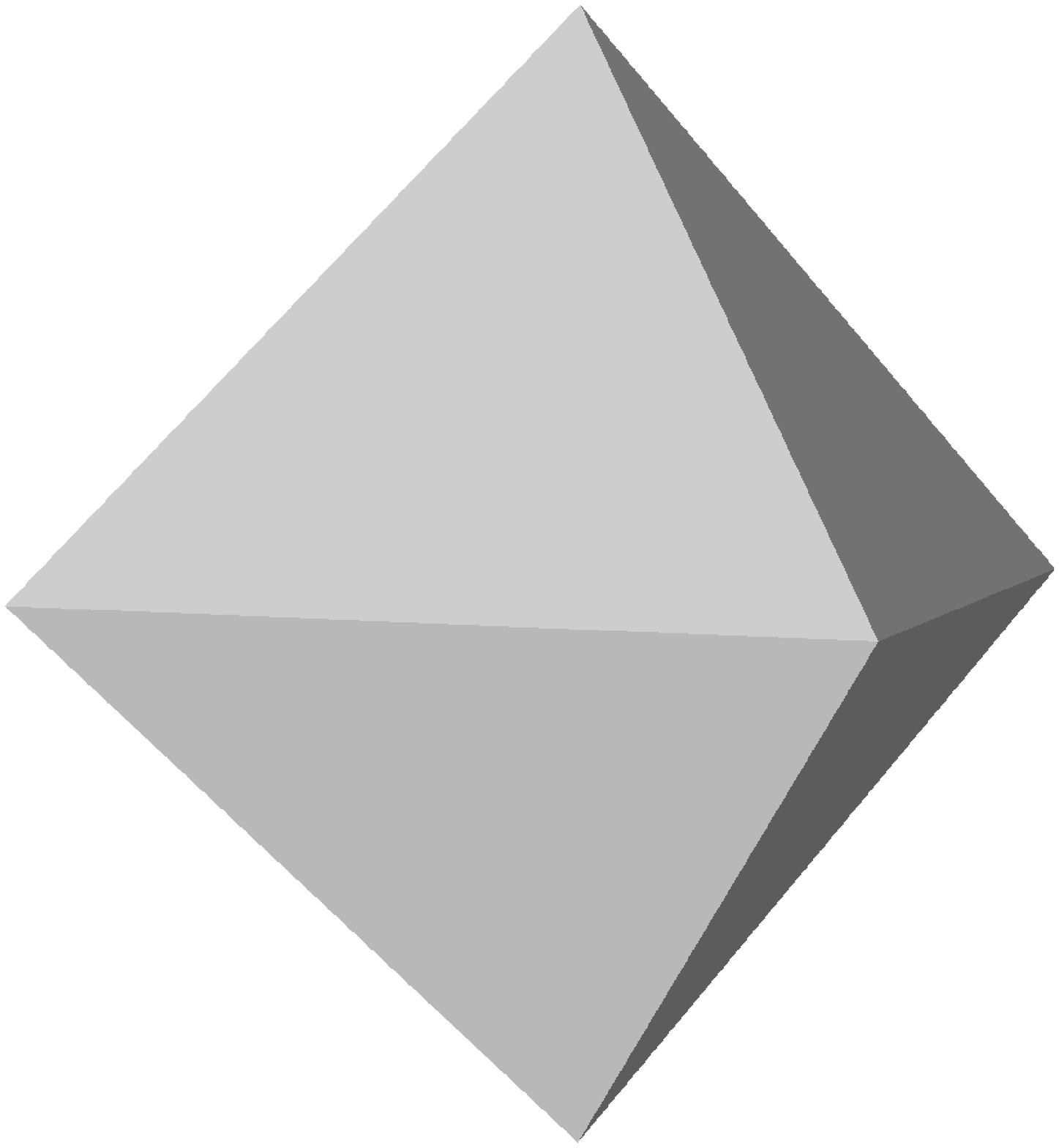,width=4cm,silent=} &
      \epsfig{figure=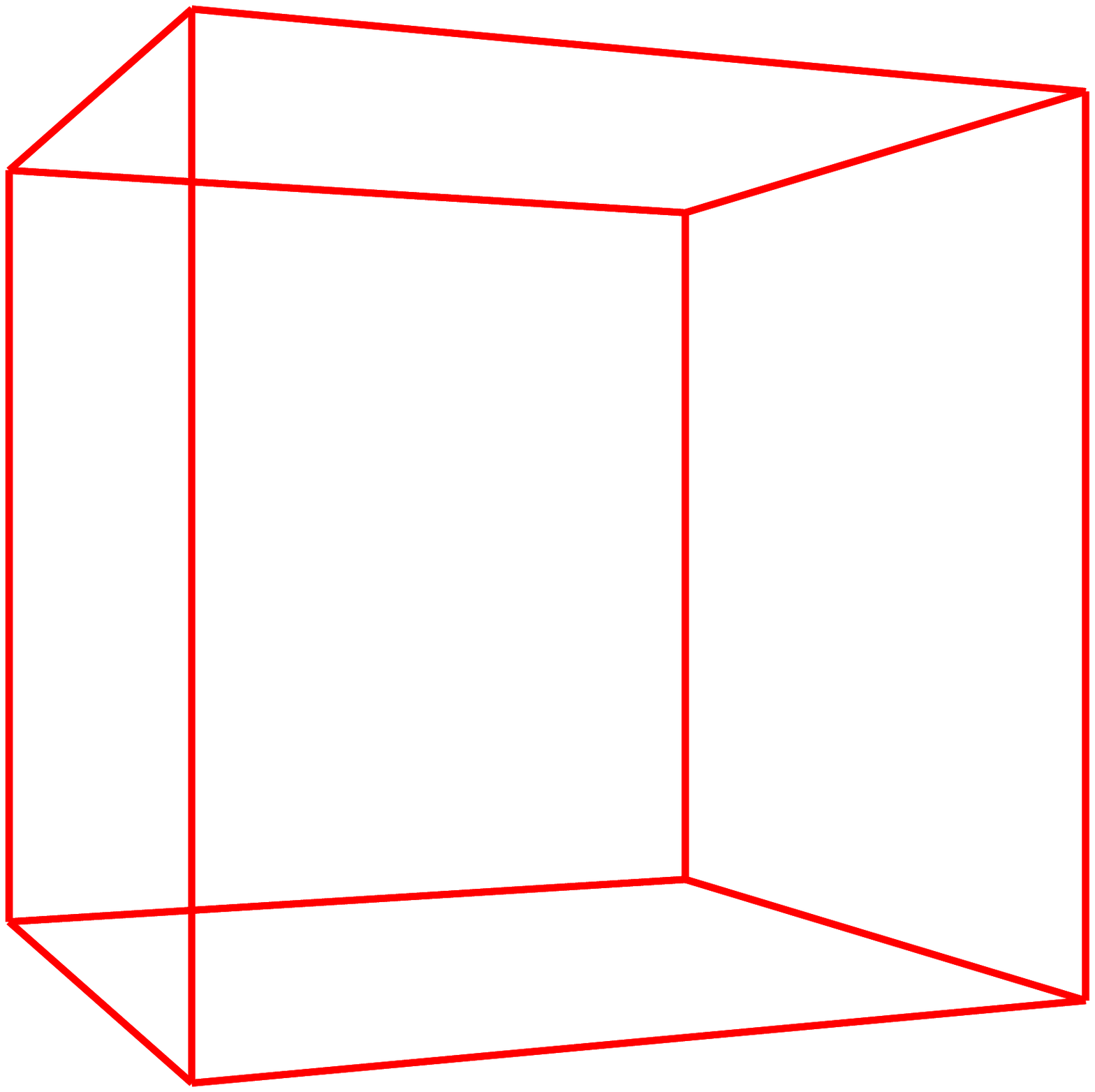,width=4cm,silent=} &
      \epsfig{figure=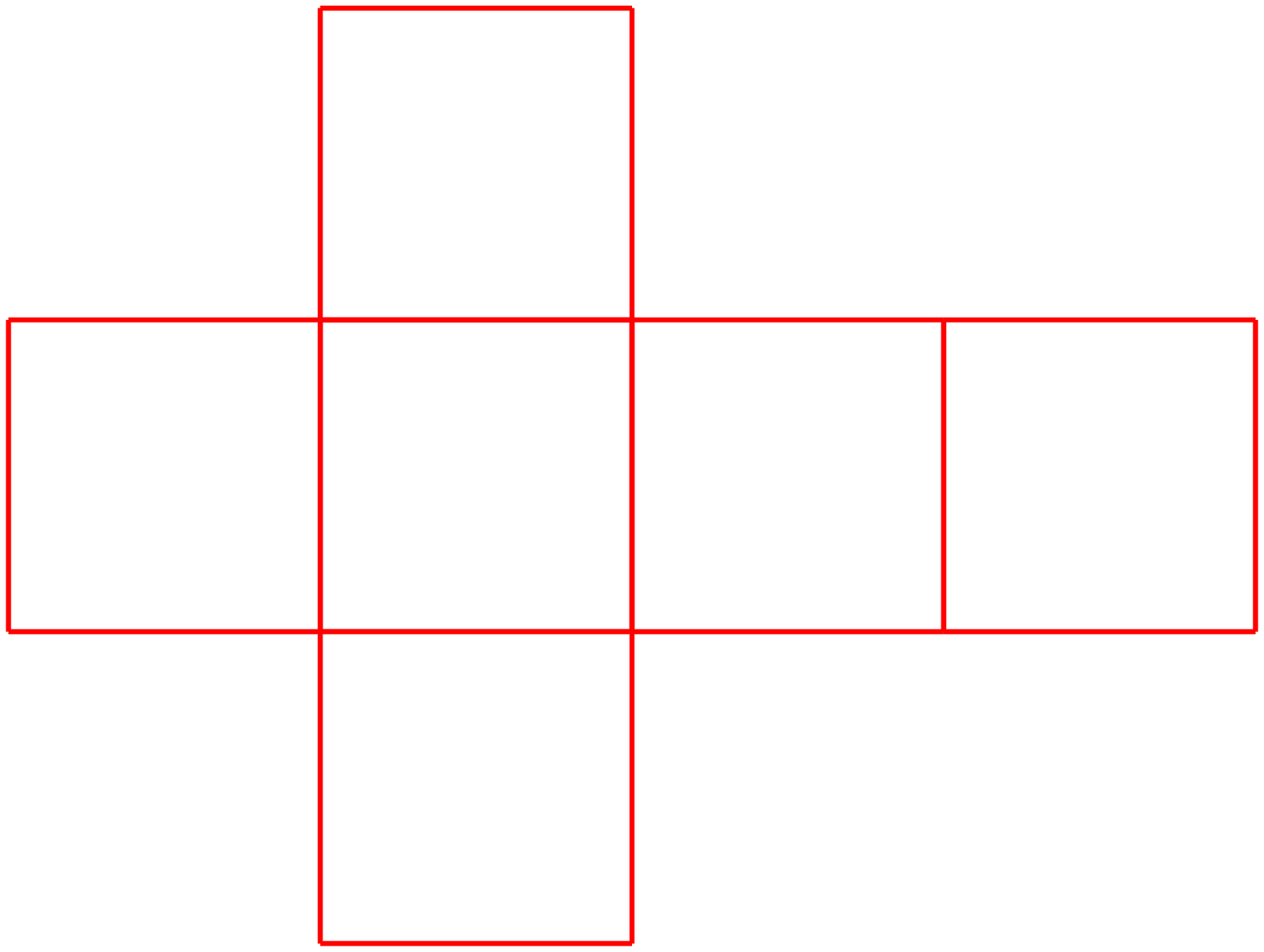,width=5cm,silent=}\\
      (a) & (b) & (c)\\
      \epsfig{figure=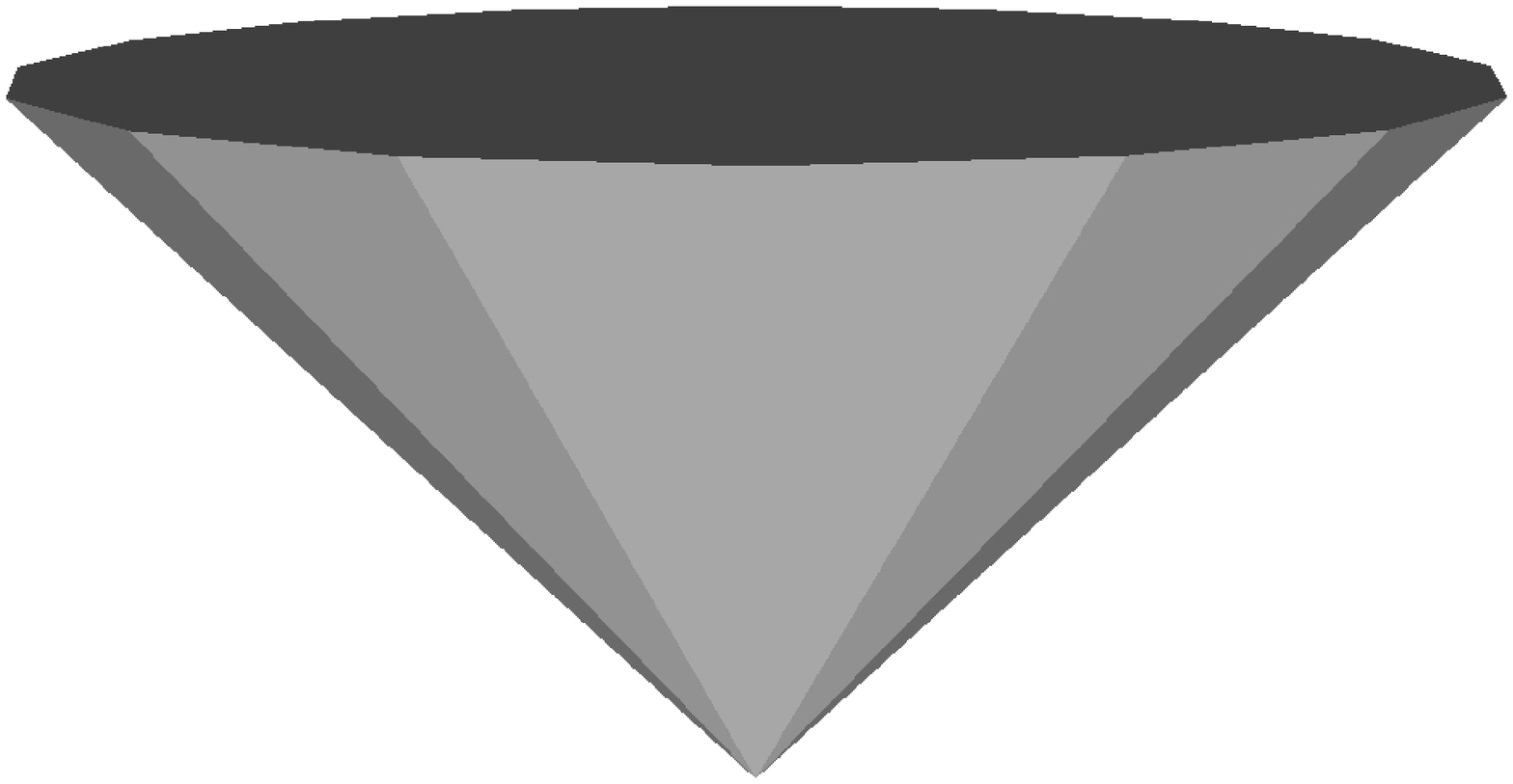,width=4cm,silent=} &
      \epsfig{figure=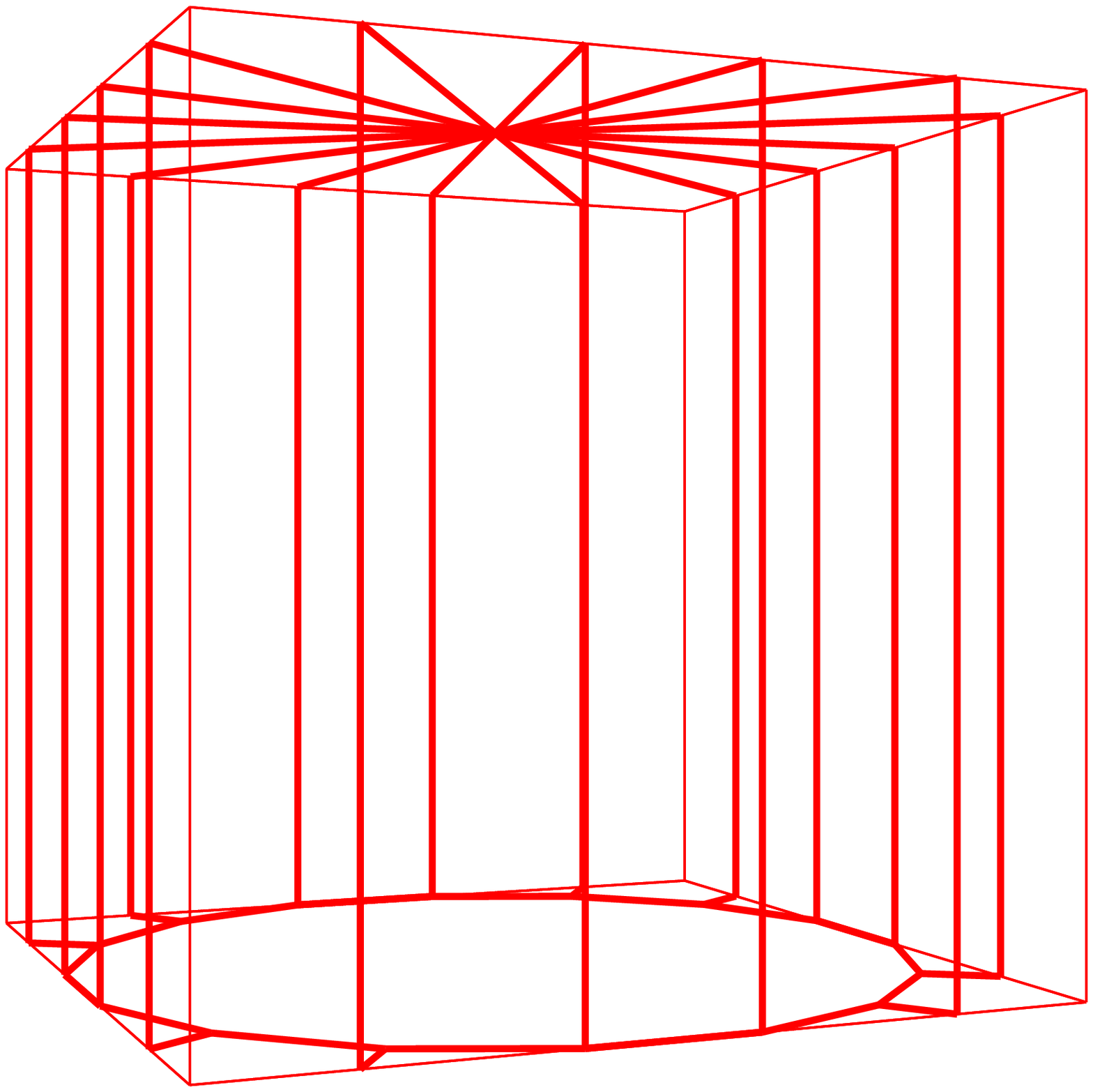,width=4cm,silent=} &
      \epsfig{figure=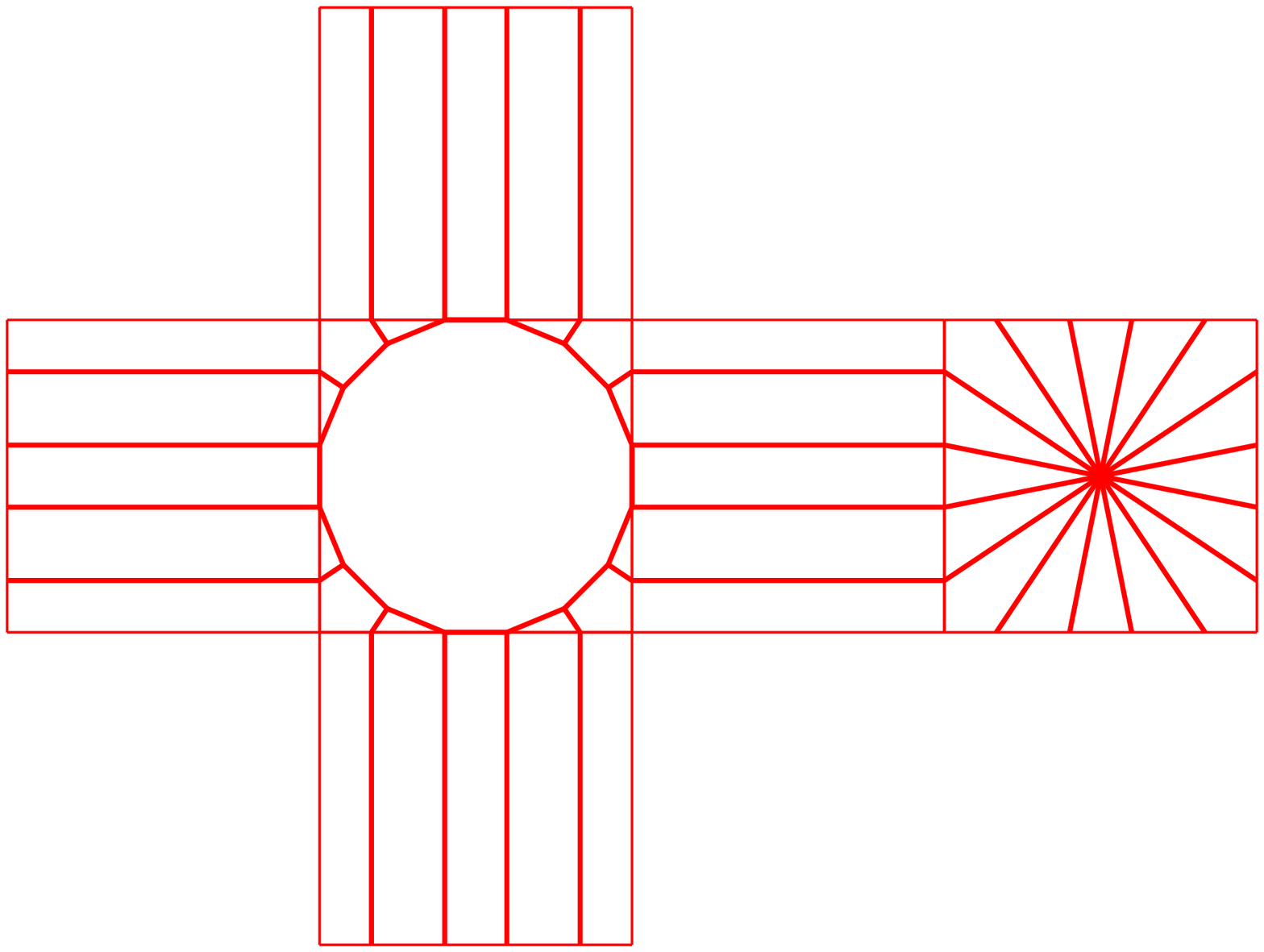,width=5cm,silent=}\\
      (d) & (e) & (f)\\
      \epsfig{figure=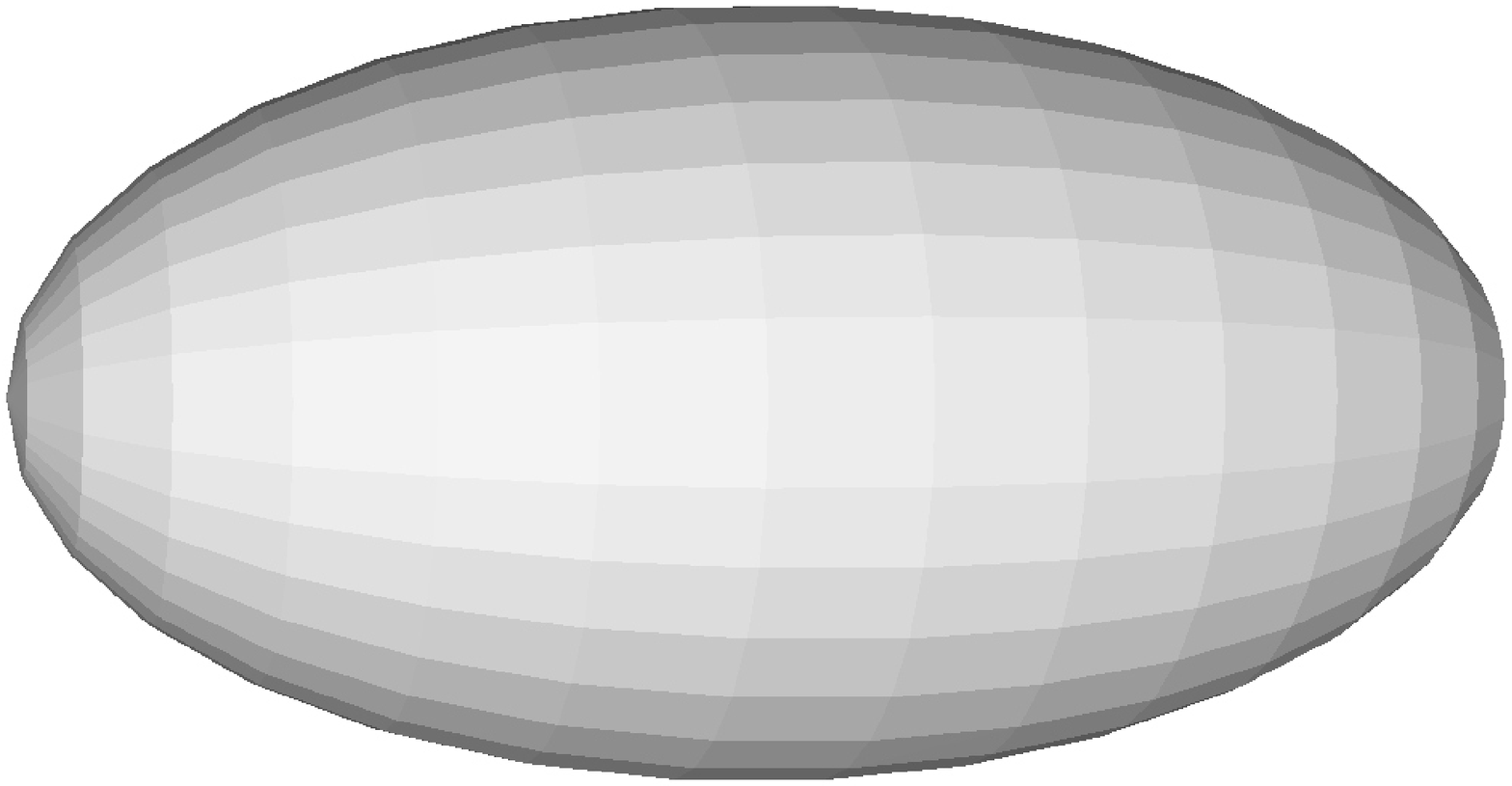,width=4cm,silent=} &
      \epsfig{figure=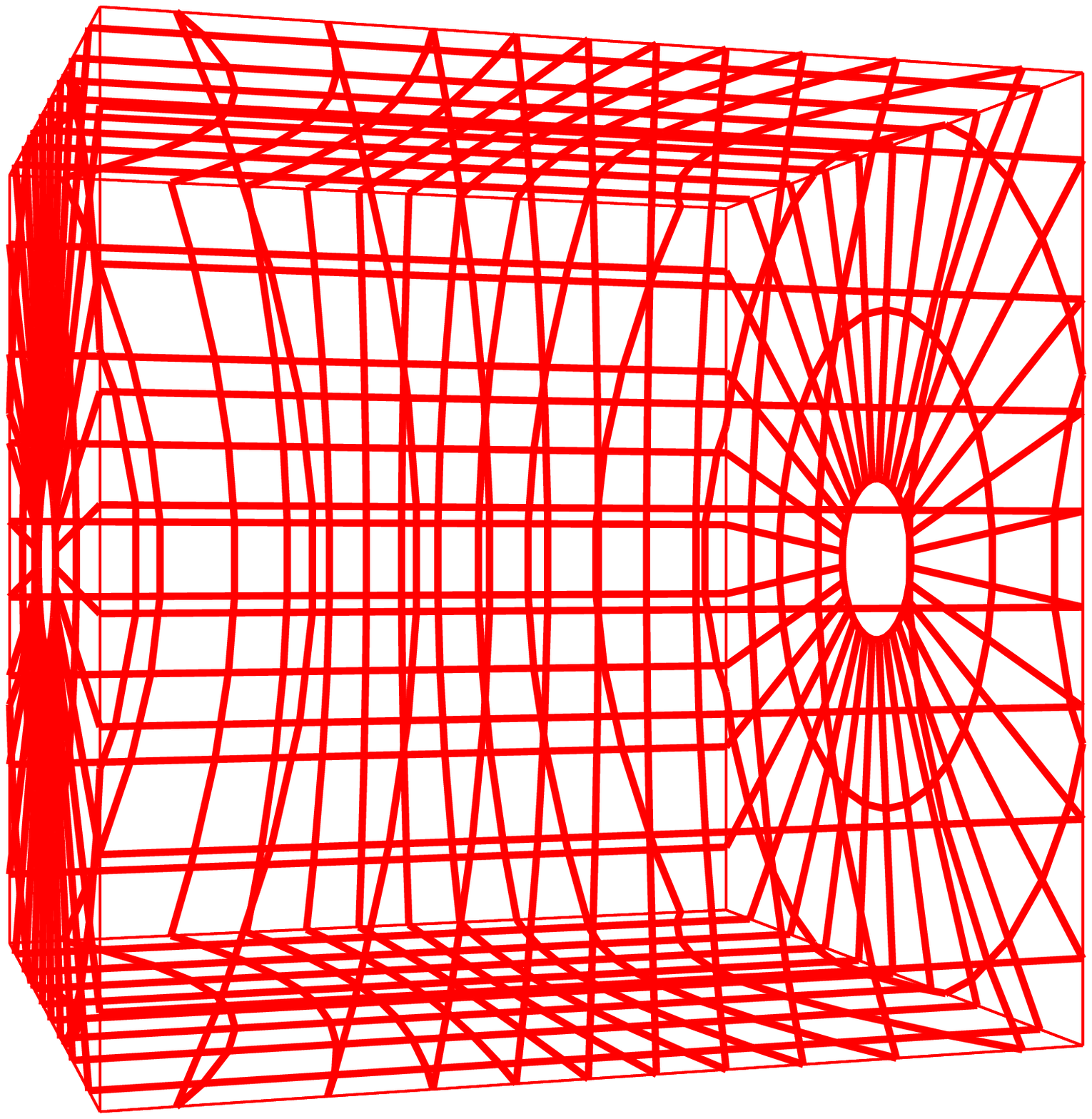,width=4cm,silent=} &
      \epsfig{figure=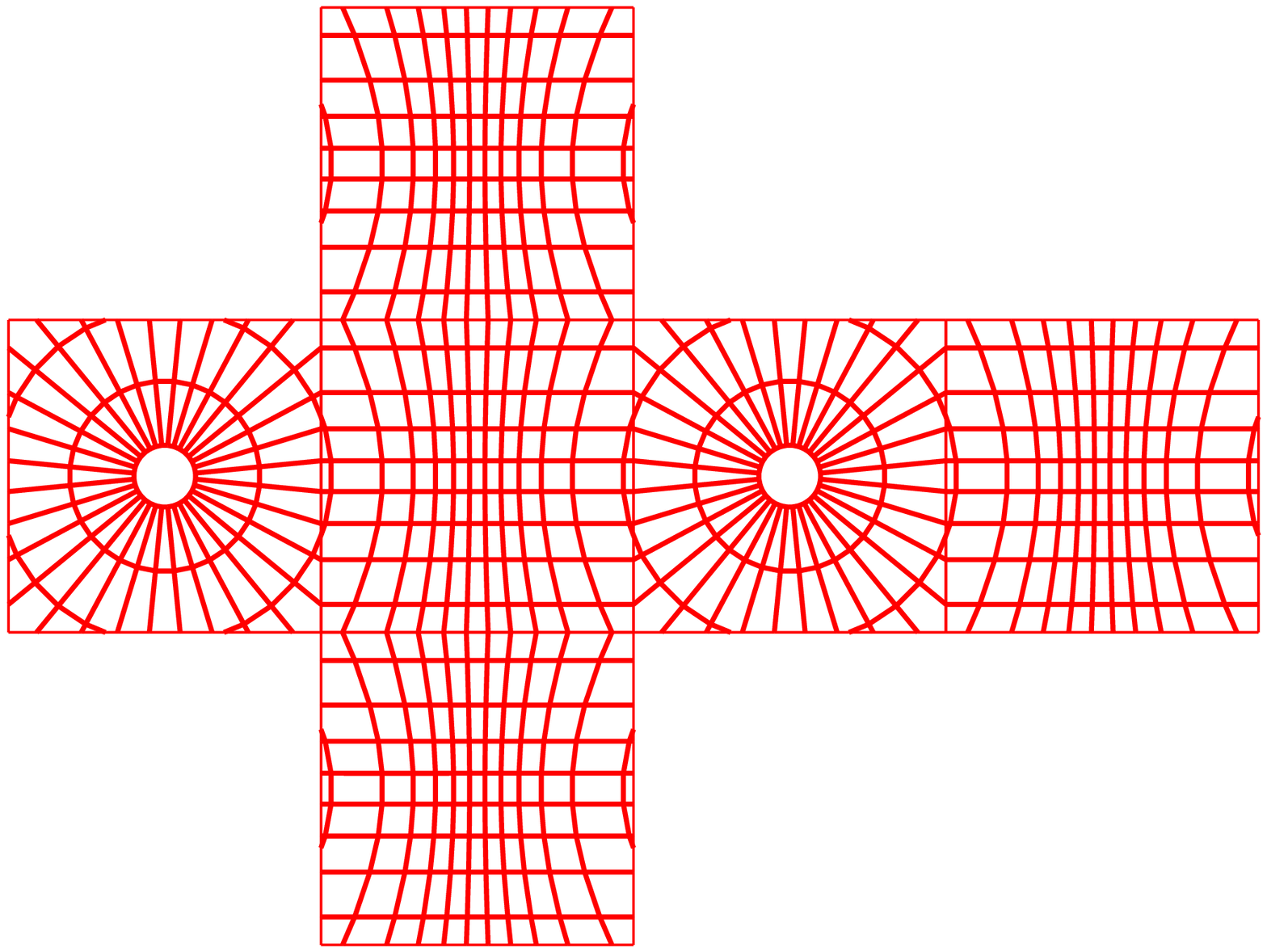,width=5cm,silent=}\\
      (g) & (h) & (i)
    \end{tabular}
  }
  \caption[Cubical Gaussian maps of polyhedra]%
           {\capStyle{(a) An octahedron, (d) a dioctagonal pyramid,
           (g) an ellipsoid-like polyhedron level~16,
           (b,e,h) the \cgm{} of the respective polytope, and (c,f,i)
           the \cgm{} unfolded.}}
  \label{fig:models1}
\end{figure*}

Each vertex that coincides with a unit-cube corner or lies on a
unit-cube edge contains a pointer to a vertex of a planar map
associated with an adjacent cube face that represents the same central
projection. Vertices that lie on a unit-cube edge (but do not coincide
with
\begin{wrapfigure}[5]{r}{3.0cm}
  \centerline{
    \pspicture[](-1.3,-1.1)(1.3,0.9)
    \psset{unit=1cm,linewidth=1pt}
    \rput{ 30}(0,0){\rput{*0}(1,0){\ovalnode{a}{0,0}}}
    \rput{150}(0,0){\rput{*0}(1,0){\ovalnode{b}{1,0}}}
    \rput{270}(0,0){\rput{*0}(1,0){\ovalnode{c}{2,0}}}
    \ncarc{<-}{b}{a}
    \ncarc{<-}{c}{b}
    \ncarc{<-}{a}{c}
    \endpspicture
  }
\end{wrapfigure}
unit-cube corners) come in pairs. Two vertices that form such a
pair lie on the unit-square boundary of planar maps associated with
adjacent cube faces, and they point to each other. Vertices that
coincide with unit-cube corners come in triplets and form cyclic
chains ordered clockwise around the respective vertices.
The diagram on the right specifies one of the eight cyclic chains.
The left and right indices of each pair specify a planar-map id and a
corner id, respectively. All specific connections are listed in
Table~\ref{tab:coord-system}. Recall that all maps are facing outwards.
As a convention, edges incident to a vertex are ordered clockwise around
the vertex, and edges that form the boundary of a face are ordered
counterclockwise. The {\tt Polyhedron\_3} and \arr{}
data structures for example both use a \dcel{} data structure that
follows the convention above.

\begin{wrapfigure}[7]{l}{3.3cm}
  \centerline{
    \pspicture[](-0.7,-0.3)(1.9,1.5)
    \psset{unit=1cm,linewidth=0.5pt}
    \rput{0}(1,1){
      \cnode*( 0, 0){2.5pt}{a0}
      \cnode (-2, 0){2.5pt}{a1}
      \cnode ( 0,-2){2.5pt}{a2}
      \cnode (-2,-2){2.5pt}{a5}
      \pnode(-1,-0.7){a3}
      \pnode(-0.7,-1){a4}
      \psarc[linewidth=0.5pt,arrowsize=3pt 3]{<-}(0,0){1}{200}{250}
      \rput{45}(0,0.2){
	\rput{-45}(0.1,0){
	  \cnode*( 0, 0){2.5pt}{b0}
	  \cnode (-2, 0){2.5pt}{b1}
	  \rput{-45}(0,0){
	    \cnode (0, 1){2.5pt}{b2}
	    \pnode(-0.5,0.3){b3}
	    \pnode(-0.8,-0.1){b4}
	  }
	  \rput{0}(-2,0){
	    \rput{-45}(0,0){
	      \cnode(0,1){2.5pt}{b5}
	    }
	  }
	  \psarcellipse[linewidth=0.5pt,arrowsize=3pt 3](0,0)(1,0.5){80}{160}
	}
      }
      \rput{-45}(0.2,0){
	\rput{45}(0,0.1){
	  \cnode*( 0, 0){2.5pt}{c0}
	  \cnode ( 0,-2){2.5pt}{c1}
	  \rput{-45}(0,0){
	    \cnode (0, 1){2.5pt}{c2}
	    \pnode(0.6,0.2){c3}
	    \pnode(0.9,-0.1){c4}
	  }
	  \rput{0}(0,-2){
	    \rput{-45}(0,0){
	      \cnode(0,1){2.5pt}{c5}
	    }
	  }
	  \psarcellipse[linewidth=0.5pt,arrowsize=3pt 3](0,0)(0.5,1){290}{10}
	}
      }
    }
    \ncline[linewidth=0.5pt]{a1}{a0}
    \ncline[linewidth=0.5pt]{a2}{a0}
    \ncline[linewidth=0.5pt]{b1}{b5}
    \ncline[linewidth=0.5pt]{b5}{b2}
    \ncline[linewidth=1pt]{a3}{a0}
    \ncline[linewidth=1pt]{a4}{a0}
    \ncline[linewidth=0.5pt]{a5}{a1}
    \ncline[linewidth=0.5pt]{a5}{a2}
    \ncline[linewidth=0.5pt]{b1}{b0}
    \ncline[linewidth=0.5pt]{b2}{b0}
    \ncline[linewidth=1pt]{b3}{b0}
    \ncline[linewidth=1pt]{b4}{b0}
    \ncline[linewidth=0.5pt]{c1}{c0}
    \ncline[linewidth=0.5pt]{c2}{c0}
    \ncline[linewidth=0.5pt]{c5}{c1}
    \ncline[linewidth=0.5pt]{c5}{c2}
    \ncline[linewidth=1pt]{c3}{c0}
    \ncline[linewidth=1pt]{c4}{c0}
    \endpspicture
  }
\end{wrapfigure}
We provide a fast clockwise traversal of the faces incident to any
given vertex $v$. Clockwise traversals around internal vertices are immediately
available by the \dcel. Clockwise traversals around boundary vertices
are enabled by the cyclic chains above. This traversal is used to calculate
the normal to the (primary) polytope-facet $f = C^{-1}(v)$ and to
render the facet. Fortunately, rendering systems are capable of
handling a sequence of vertices that define a polygon in clockwise
order as well, an order opposite to the conventional ordering above.

\begin{wrapfigure}[5]{r}{2.4cm}
  \centerline{
    \pspicture[](-1,-0.7)(1,0.7)
    \psset{unit=1cm,linewidth=0.5pt}
    \cnode (-1, -1){2.5pt}{a}
    \cnode (-1, -1){2.5pt}{a}
    \cnode*(-1,  1){2.5pt}{0}
    \cnode*(-0.5,1){2.5pt}{1}
    \cnode*( 0,  1){2.5pt}{2}
    \cnode*( 0.5,1){2.5pt}{3}
    \cnode*( 1,  1){2.5pt}{4}
    \cnode ( 1, -1){2.5pt}{5}
    \pnode(-0.7,-0.5){11}
    \pnode(-0.3, 0){22}
    \pnode(0.6, -0.2){33}
    \ncline[linewidth=0.5pt]{a}{0}
    \ncline[linewidth=1pt]{->}{0}{1}
    \ncline[linewidth=1pt]{->}{1}{2}
    \ncline[linewidth=1pt]{->}{2}{3}
    \ncline[linewidth=1pt]{->}{3}{4}
    \ncline[linewidth=0.5pt]{4}{5}
    \ncline[linewidth=0.5pt]{5}{a}
    \ncline[linewidth=1pt]{1}{11}
    \ncline[linewidth=1pt]{2}{22}
    \ncline[linewidth=1pt]{3}{33}
    \endpspicture
  }
\end{wrapfigure}
The data structure also supports a fast traversal over the
planar-map halfedges that form each one of the four unit-square edges.
This traversal is used during construction to quickly locate a vertex
that coincides with a cube corner or lies on a cube edge. It is also
used to update the cyclic chains of pointers mentioned above; see 
Section~\ref{ssec:mscn:cgm:mink_sum}.

We maintain a flag that indicates whether a planar vertex coincides with a
cube corner, a cube edge, or a cube face. At first glance this looks redundant.
After all, this information could be derived by comparing the $x$ and $y$
coordinates to $-1$ and $+1$. However, it has a good reason as explained next.
Using exact number-types often leads to representations of the geometric
objects with large bit-lengths. Even though we use various techniques
to prevent the length from growing exponentially~\cite{fwh-cfpeg-04}, we
cannot prevent the length from growing at all. Even the computation of a single
intersection requires a few multiplications and additions. 
Cached information computed once and stored at the features of the planar map
avoids unnecessary processing of potentially long representations.

\begin{wraptable}[10]{r}{6.6cm}
  \vspace{-24pt}
  \caption[The complexity of the dioctagonal pyramid \cgm{} planar maps]
          {\capStyle{The number of features of the six planar maps of
           the \cgm{} of the dioctagonal
           pyramid.}}
  \label{tab:pyramid_features}
  \centerline{
  \begin{tabular}{|l||r|r|r|}
    \hline
    \multicolumn{1}{|c||}{\textbf{Planar map}} &
    \multicolumn{1}{p{\smallCellWidth}|}{\textbf{~V~}} &
    \multicolumn{1}{p{\smallCellWidth}|}{\textbf{HE}} &
    \multicolumn{1}{p{\smallCellWidth}|}{\textbf{~F~}} \\
    \hline
    0, ($x = -1$) & 12 &  32 &  6 \\
    1, ($y = -1$) & 28 &  80 & 14 \\
    2, ($z = -1$) & 12 &  32 &  6 \\
    3, ($x = 1$)  & 12 &  32 &  6 \\
    4, ($y = 1$)  & 21 &  72 & 17 \\
    5, ($z = 1$)  & 12 &  32 &  6 \\
    \hline
    \textbf{Total} & 97 & 280 & 55 \\
    \hline
  \end{tabular}}
\end{wraptable}
The table to the right shows the number of vertices ({\bf V}),
halfedges ({\bf HE}), and faces ({\bf F}) of the six planar maps that
comprise the \cgm{} of the \Index{dioctagonal pyramid} shown in
Figure~\ref{fig:models1} (d,e,f). The number of faces of each planar map 
include the unbounded face.
Table \ref{tab:rep} shows the number of features in the primal and
dual representations of a small subset of our polytopes
collection, on which we report the results of experiments below. The
number of planar features is the total number of features of the six
planar maps.

\begin{table}[!hbp]%
  \caption[Complexities of primal and dual representations]%
            {\capStyle{Complexities of the primal and dual representations.
             DP --- Dioctagonal Pyramid,
             PH --- Pentagonal Hexecontahedron,
             TI --- Truncated Icosidodecahedron,
             GS4 --- Geodesic Sphere level~4,
             El16 --- Ellipsoid-like polyhedron made of 16 latitudes and 32 
             longitudes,
             t - time consumption in seconds.}}
  \label{tab:rep}
  \centerline{
    \begin{tabular}{|l||r|r|r||r|r|r|r|r|r|r|r|}
      \hline
      \multirow{2}*{\textbf{Object Type}} &
      \multicolumn{3}{c||}{\textbf{Primal}} &
      \multicolumn{4}{c|}{{\bf SGM}} &
      \multicolumn{4}{c|}{{\bf CGM}}\\\cline{2-12}
      &
      \multicolumn{1}{|c}{\textbf{V}} &
      \multicolumn{1}{|c}{\textbf{E}} &
      \multicolumn{1}{|c||}{\textbf{F}} &
      \multicolumn{1}{c|}{\textbf{V}} &
      \multicolumn{1}{c|}{\textbf{HE}} &
      \multicolumn{1}{c|}{\textbf{F}} &
      \multicolumn{1}{c|}{\textbf{t}} &
      \multicolumn{1}{c|}{\textbf{V}} &
      \multicolumn{1}{c|}{\textbf{HE}} &
      \multicolumn{1}{c|}{\textbf{F}} &
      \multicolumn{1}{c|}{\textbf{t}}\\
      \hline
      Tetrahedron &   4 &   6 &   4 &   4 &   12 &   4 & 0.01 &  42 &  102 &  21 & 0.01\\
      Octahedron  &   6 &  12 &   8 &  10 &   28 &   6 & 0.01 &  24 &   48 &  12 & 0.01\\
      Icosahedron &  12 &  30 &  20 &  21 &   62 &  12 & 0.01 &  72 &  192 &  36 & 0.01\\
      DP          &  17 &  32 &  17 &  25 &   80 &  17 & 0.01 &  97 &  280 &  55 & 0.01\\
      PH          &  60 & 150 &  92 & 101 &  318 &  60 & 0.03 & 200 &  600 & 112 & 0.02\\
      TI          & 120 & 180 &  62 &  77 &  390 & 120 & 0.05 & 230 &  840 & 202 & 0.03\\
      GS4         & 252 & 750 & 500 & 506 & 1512 & 252 & 0.08 & 708 & 2124 & 366 & 0.07\\
      El16        & 482 & 992 & 512 & 528 & 2016 & 482 & 0.11 & 776 & 2752 & 612 & 0.06\\
      \hline
    \end{tabular}
  }
\end{table}

\subsection{Exact Minkowski Sums}
\label{ssec:mscn:cgm:mink_sum}
A similar argument regarding the representation of Minkowski sums
using Gaussian maps mentioned in Section~\ref{sec:mscn:sgm-method}
holds for the cubical Gaussian maps with the unit cube replacing the
unit sphere. More precisely, a single map that subdivides the unit
sphere is replaced by six planar maps, and the computation of a single
overlay is replaced by the computation of six overlays of
corresponding pairs of planar maps. Recall that each (primal) vertex
is associated with a planar-map face, and is the sum of two vertices
associated with the two overlapping faces of the two \cgm's of the two
input polytopes, respectively.

Each planar map in a \cgm{} is a convex subdivision. Finke and
Hinrichs~\cite{fh-oscpl-95} describe how to compute the overlay of such
special subdivisions optimally in linear time. However, a preliminary
investigation shows that a large constant governs the linear complexity,
which renders this choice less attractive.
Instead, we resort to a sweep-line based algorithm that exhibits good
practical performance, and incurs a mere logarithmic factor over the
optimal computing time. In particular we use the overlay operation
supported by the Arrangement package. It requires the
provision of a complementary component that is responsible for
updating the attributes of the \dcel{} features of the resulting six
planar maps; see Section~\ref{ssec:aos:facilities:overlay}.

The overlay operates on two instances of \arr{}.
In the description below $v_1$, $e_1$, and $f_1$ denote a vertex, a
halfedge, and a face of the first operand respectively, and $v_2$,
$e_2$, and $f_2$ denote the same feature types of the second operand,
respectively. When the overlay operation progresses, new vertices,
halfedges, and faces of the resulting planar map are created based on
features of the two operands. Exactly ten cases described below
arise and must be handled. When a new feature is created its attributes
are updated. The updates performed in all cases except for case (1) are
simple and require constant time.
\begin{compactenum}
\item A new vertex $v$ is induced by coinciding vertices $v_1$ and $v_2$.\\
  The location of the vertex $v$ is set to be the same as the
  location of the vertex $v_1$ (the locations of $v_2$ and $v_1$ must
  be identical). The induced vertex is not artificial if and only if 
  \setcounter{ms-const:cntr}{1}(\roman{ms-const:cntr}) at least one of the vertices 
  $v_1$ or $v_2$ is not artificial, or
  \addtocounter{ms-const:cntr}{1}(\roman{ms-const:cntr}) the vertex lies on a cube 
  edge or coincides with a cube corner, and both vertices $v_1$ and 
  $v_2$ have non-artificial incident halfedges that do not overlap.

\item A new vertex $v$ is induced by a vertex $v_1$ that lies on an edge 
  $e_2$.\\
  The location of the vertex $v$ is set to be the same as the
  location of the vertex $v_1$. $v$ is not artificial if and only if 
  $v_1$ is not artificial or $e_2$ is not artificial.

\item An analogous case of a vertex $v_2$ that lies on an edge $e_1$.

\item A new vertex $v$ is induced by a vertex $v_1$ that is contained in a 
  face $f_2$.\\
  The attributes of the vertex $v$ are set to be the same as the
  attributes of the vertex $v_1$.

\item An analogous case of a vertex $v_2$ contained in a face $f_1$.

\item A new vertex $v$ is induced by the intersection of two edges $e_1$ and 
  $e_2$.\\
  The vertex $v$ cannot lie on a cube edge and cannot coincide with a cube
  corner. Thus, it is necessarily not artificial.
  
\item A new edge $e$ is induced by the overlap of two edges $e_1$ and $e_2$.\\
  The edge $e$ is not artificial if at least one of $e_1$ or $e_2$ is not
  artificial.
  
\item A new edge $e$ is induced by an edge $e_1$ that is contained in a 
  face $f_2$.\\
  The edge $e$ is not artificial if $e_1$ is not artificial.

\item An analogous case of an edge $e_2$ contained in a face $f_1$.

\item A new face $f$ is induced by the overlap of two faces $f_1$ and $f_2$.\\
  The primal vertex associated with $f$ is set to be the sum of the primal
  vertices associated with $f_1$ and $f_2$, respectively.
\end{compactenum}

After the six map overlays are computed, some maintenance operations
must be performed to obtain a valid \cgm{} representation. As mentioned
above, the global data consists of the six planar maps and 24
references to vertices that coincide with the unit-cube corners. For
each planar map we traverse its vertices, obtain the four vertices
that coincide with the unit-cube corners, and initialize the global
data. We also update the cyclic chains of pointers to vertices that
represent identical central projections. To this end, we exploit the
fast traversal over the halfedges that coincide with the unit-cube
edges mentioned in Section~\ref{ssec:mscn:cgm:representation}.

The complexity of a single overlay operation is $O(k \log n)$, where 
$n$ is the total number of vertices in the input planar maps, and $k$
is the number of vertices in the resulting planar map. The total number of 
vertices in all the six planar maps in a \cgm{} that represents a polytope
$P$ is of the same order as the number of facets in the primary polytope $P$. 
Thus, the complexity of the entire overlay operation is
$O(F \log (F_1 + F_2))$, where
$F_1$ and $F_2$ are the number of facets in the input polytopes
respectively, and $F$ is the number of facets in the Minkowski sum.

\section{Exact Collision Detection\index{collision detection}}
\label{sec:mscn:3d_col_det}
Computing the separation distance between two polytopes with $m$ and $n$
features respectively can be done in $O(\log m \log n)$ time, after an
investment of at most linear time in preprocessing~\cite{dk-dsppu-90}.
Many practical algorithms that exploit spatial and temporal
coherence between successive queries have been developed, some of which
became classic, such as the GJK algorithm~\cite{gjk-fpcdb-88} and its
improvement~\cite{c-egcmp-97}, and the LC algorithm~\cite{lc-faidc-91} and 
its optimized variations~\cite{el-apqcp-00,ghz-hhdcm-99,m-vcfrp-98}.
Several general-purpose software libraries that offer practical solutions are
available today, such as the SOLID library~\citelinks{solid} based on the
improved GJK algorithm, the SWIFT library~\citelinks{swift} based on an
advanced version of the LC algorithm, the QuickCD
library~\citelinks{quickcd}, and more. For an extensive review of methods
and libraries see the survey~\cite{cpq-lm-04}.

Given two polytopes $P$ and $Q$, detecting collision between them and
computing their relative placement can be conveniently done in the
configuration space, where their Minkowski sum $M = P \oplus (-Q)$ resides.
These problems can be solved in many ways, and not all require the
explicit representation of the Minkowski sum $M$. However, having it
available is attractive, especially when the polytopes are restricted to
translations only, as the combinatorial structure of the Minkowski sum $M$
is invariant to translations of $P$ or $Q$. The algorithms described below 
are based on the following well known observations (see
Chapter~\ref{chap:intro} for definitions):
\begin{align*}
P^u \cap Q^w & \neq \emptyset \Leftrightarrow w - u \in M = P \oplus (-Q)\ ,\\
\pi(P^u,Q^w) & = \min\{\Vert t \Vert \,|\, (w - u + t) \in M, t \in \mathbb{R}^3\}\ ,\\
\delta_r(P^u,Q^w) & = \inf\{\alpha \,|\, (w - u + \alpha \vec{r}) \notin M, \alpha \in \mathbb{R}\}\ .
\end{align*}

Given two polytopes $P$ and $Q$ in either (spherical) Gaussian map or
\cgm{} representation respectively, we reflect $Q$ through the origin
to obtain $-Q$, compute the Minkowski sum $M = P \oplus (-Q)$, and
retain it in the respective Gaussian-map representation $G(M)$. Then,
each time $P$ or $Q$ or both translate by two vectors $u$ and $w$ in
$\rrr$ respectively, we apply a procedure that determines whether the
query point $s = w - u$ is inside, on the boundary of, or outside $M$.
In addition to an enumeration of one of the three conditions above, the
procedure returns a witness of the respective relative placement. Let
$r$ be a ray emanating from an internal point $c \in M$ and going through
$s$. If the (spherical) Gaussian map representation is used, the witness
data is the vertex $v = G(f)$ --- a mapping of a facet $f$ of $M$ embedded
on the sphere and stabbed by the ray $r$. If the \cgm{} representation is
used, the witness data is a pair that consists of a vertex $v = G(f)$ ---
a mapping of a facet $f$ of $M$ embedded in a unit cube face, and the
planar map $\mathcal{P}$ containing $v$. This information is used as a
hint in consecutive invocations. The internal point $c$ could be the
average of all vertices of $M$ computed once and retained along $M$,
or just the midpoint of two vertices that have supporting planes with
opposite normals easily extracted from either map representation. Once
$f$ is obtained, determining whether $P^u$ and $Q^w$ collide is trivial,
according to the first formula (of the three) above. The query point
$s$ is contained in the open half-space defined by the supporting plane to
$f$ if and only if $s$ is outside of $M$, this occurs if and only if $P^u$
does not collide with $Q^w$.

\begin{wrapfigure}[9]{r}{5.7cm}
  \vspace{-12pt}
  \centerline{
    \epsfig{figure=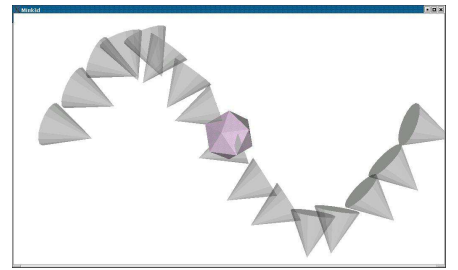,width=\linewidth,silent=}
  }
  \caption[Simulation of motion]{\capStyle{Simulation of motion.}}
  \label{fig:sim}
\end{wrapfigure}
The collision-detection procedure applies a local walk on the
respective Gaussian map faces. It starts with some vertex $v_0$, and
then performs a loop moving from the current vertex to a neighboring
vertex, until it reaches the final vertex. If the \cgm{}
representation is used, the procedure may jump from a planar map
associated with one cube-face to a different one associated with 
an adjacent cube-face. The first time the procedure is invoked, $v_0$ is
chosen to be a vertex that lies on the central projection of the normal
directed in the same direction as the ray $r$. In consecutive calls,
$v_0$ is chosen to be the final vertex of the previous call
exploiting spatial and temporal coherence.
The figure above is a snapshot of a simulation program that
detects collision between a static obstacle and a moving robot, and
draws the obstacle and the trail of the robot; instructions to obtain,
install, and execute the program appear in the Appendix.
The Minkowski sum is recomputed only when the robot is rotated, which
occurs every other frame. The program has the distinctive feature of
being able to identify the case where the robot grazes the obstacle,
but does not penetrate it, since it produces exact results. The
computation takes just a fraction of a second on a Pentium PC clocked
at 1.7~GHz using either representation. Similar procedures that compute
the directional penetration-depth and minimum distance are available as
well.

\section{Minkowski Sum Complexity}
\label{sec:mscn:mink_complexity}
\begin{figure*}[!htp]
  \centerline{
    \begin{tabular}{ccc}
      \epsfig{figure=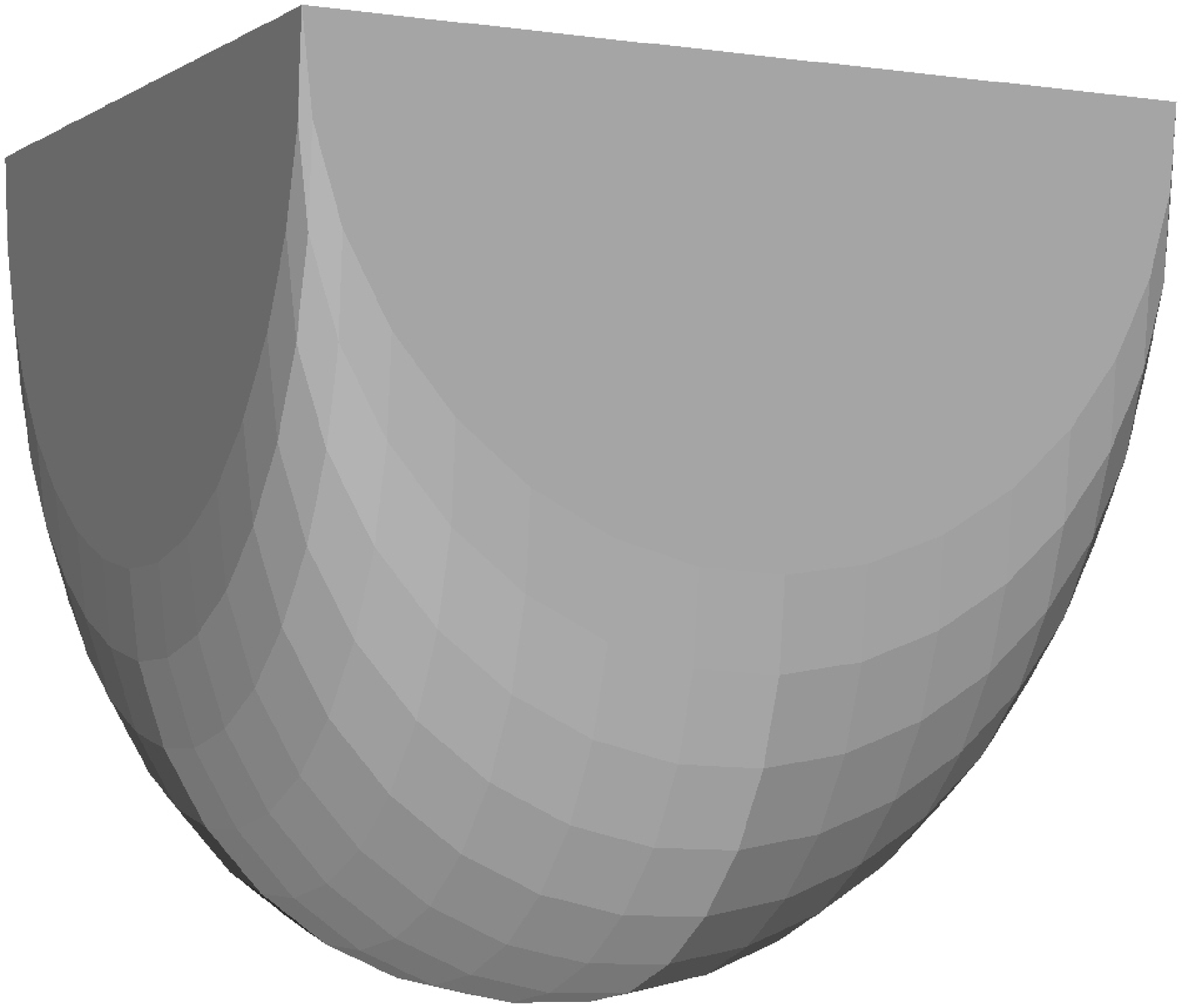,width=4cm,silent=} &
      \epsfig{figure=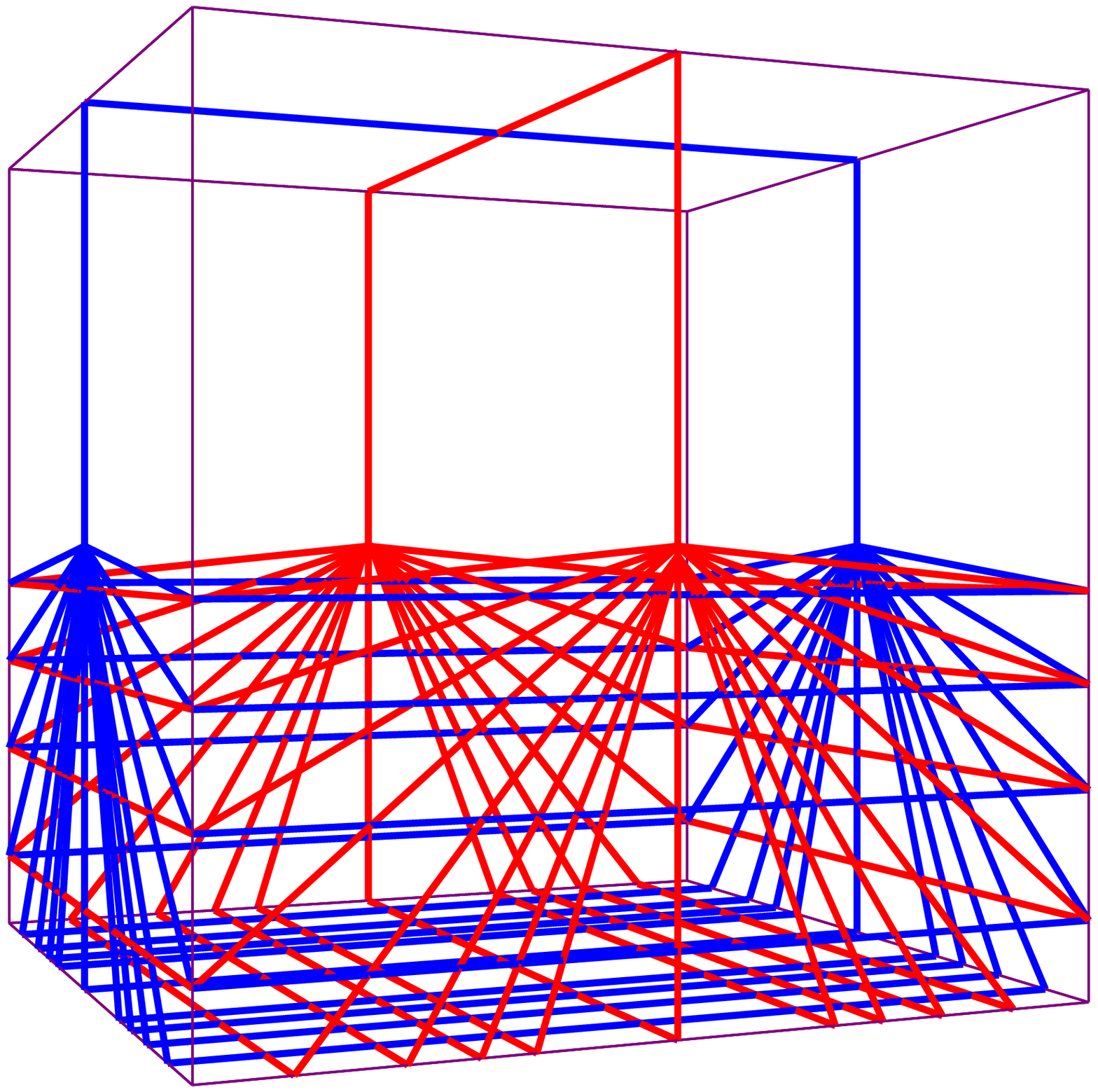,width=4cm,silent=} &
      \epsfig{figure=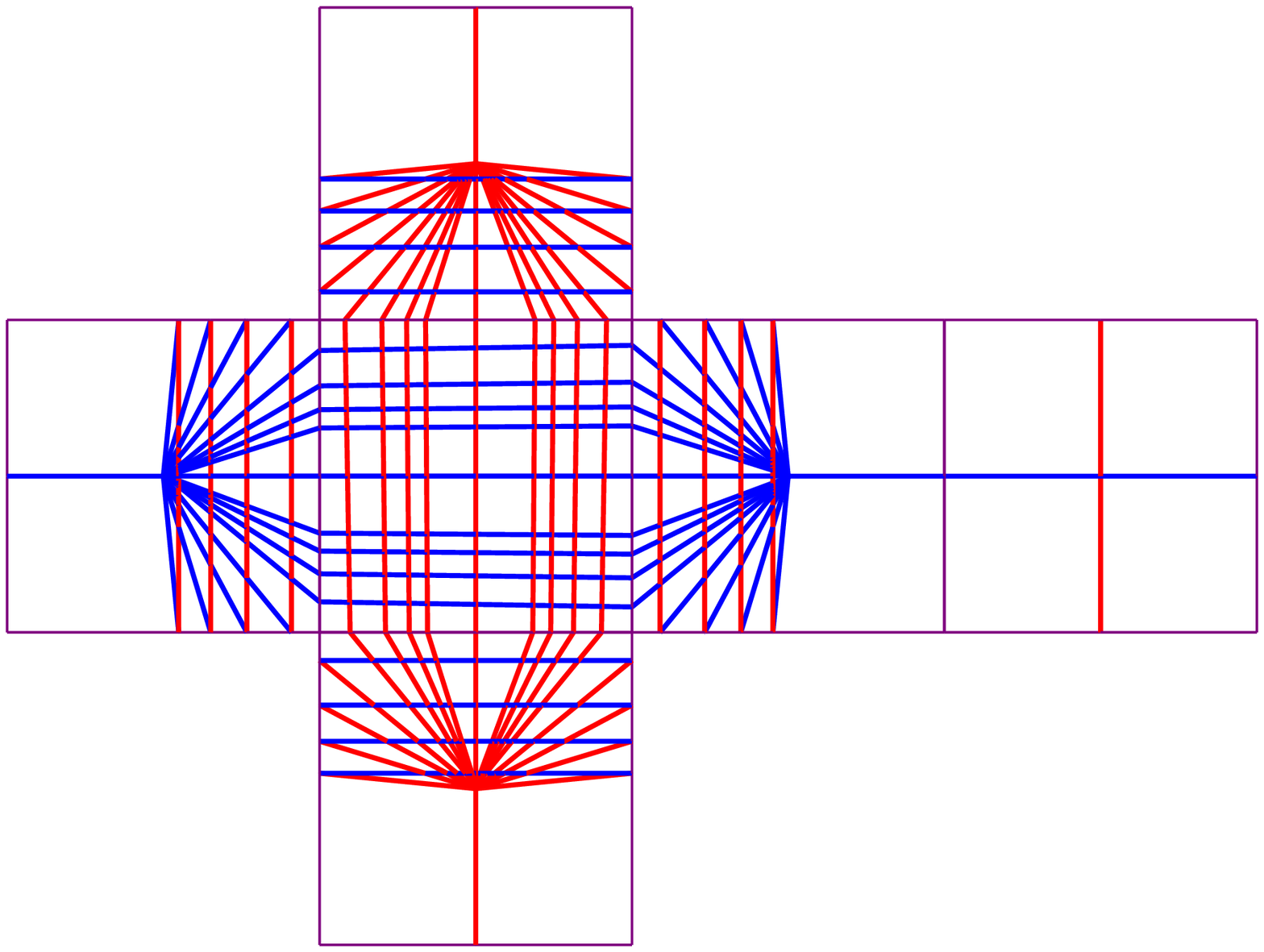,width=5cm,silent=}\\
      	(a) & (b) & (c)
    \end{tabular}
  }
  \caption[The Minkowski sum (of two polytopes) the complexity of which
           is maximal]%
          {\capStyle{(a) The Minkowski sum (of two polytopes) the
           complexity of which is maximal, (b) the \cgm{} of the Minkowski
           sum, and (c) the \cgm{} unfolded. Red lines are graphs of edges
           that originate from one polytope and blue lines are graphs of
           edges that originate from the other.}}
  \label{fig:max11}
\end{figure*}
In Chapter~\ref{chap:mink-sum-complexity} we show that the exact
maximum number of facets of Minkowski sums of two polytopes is
$4mn - 9m - 9n + 26$, where $m$ and $n$ are the number of facets of
the two summands, respectively. This bound is tight~\cite{fhw-emcms-07}.
The example depicted in Figure~\ref{fig:max11} shows a Minkowski sum
that reaches this maximum complexity. It is the sum of two identical
polytopes, each containing $n$ faces ($n = 11$ in
Figure~\ref{fig:max11}), but one is rotated about the vertical axis 
approximately\footnote{The results of all rotations are approximate,
as we have not yet dealt with exact rotation. One of our future goals
is the handling of exact rotations.} $90^\circ$ relative to the
other. The polytopes are specifically shaped to ensure that the number
of intersections between dual edges, which are the mappings of the
polytope edges, is maximal. A careful counting reveals that the number
of vertices in the dual representation excluding the artificial
vertices reaches $4 \cdot 11 \cdot 11 - 9 \cdot 11 - 9 \cdot 11 + 26 =
312$, which is the number of facets of the Minkowski sum.

Not every pair of polytopes yields a Minkowski sum proportional to $mn$.
As a matter of fact, it can be as low as $n$ in the extremely-degenerate
case of two identical polytopes variant under scaling. Even if no
degeneracies exist, the complexity can be proportional to only $m+n$,
as in the case of two geodesic spheres\footnote{An icosahedron, every
triangle of which is divided into $4^l$ triangles using class I
aperture 4 partition method, whose vertices are elevated to the
circumscribing sphere~\cite{swk-gdggs}.} level $l = 2$ slightly
rotated with respect to each other, depicted in
Figure~\ref{fig:geo}. Naturally, an algorithm that accounts for all
pairs of vertices, one from each polytope, is rendered inferior
compared to an output-sensitive
algorithm\index{algorithm!output sensitive} like ours in such cases,
as we demonstrate in the next section.
\begin{figure*}[!htp]
  \centerline{
    \begin{tabular}{ccc}
      \epsfig{figure=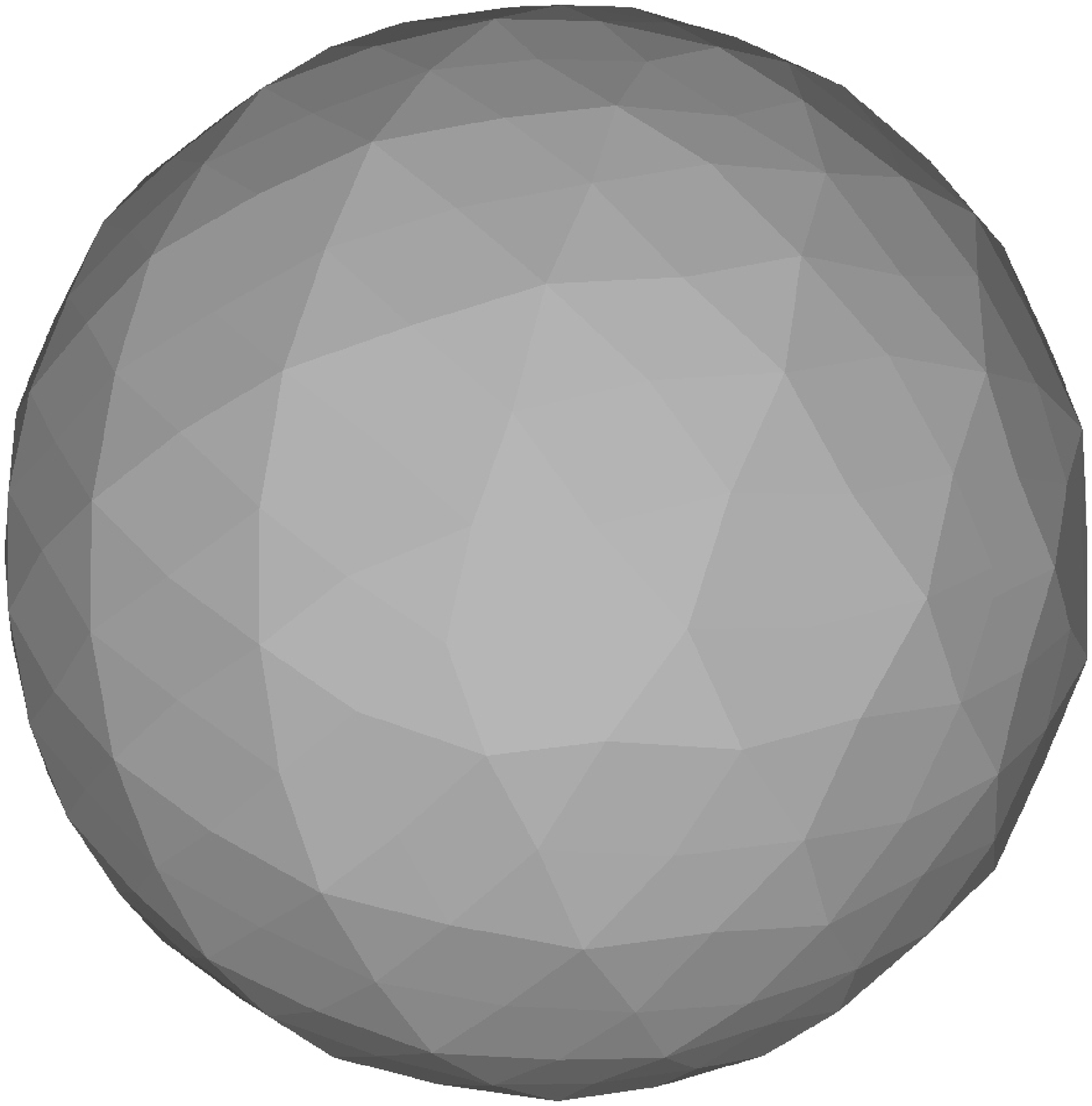,width=4cm,silent=} &
      \epsfig{figure=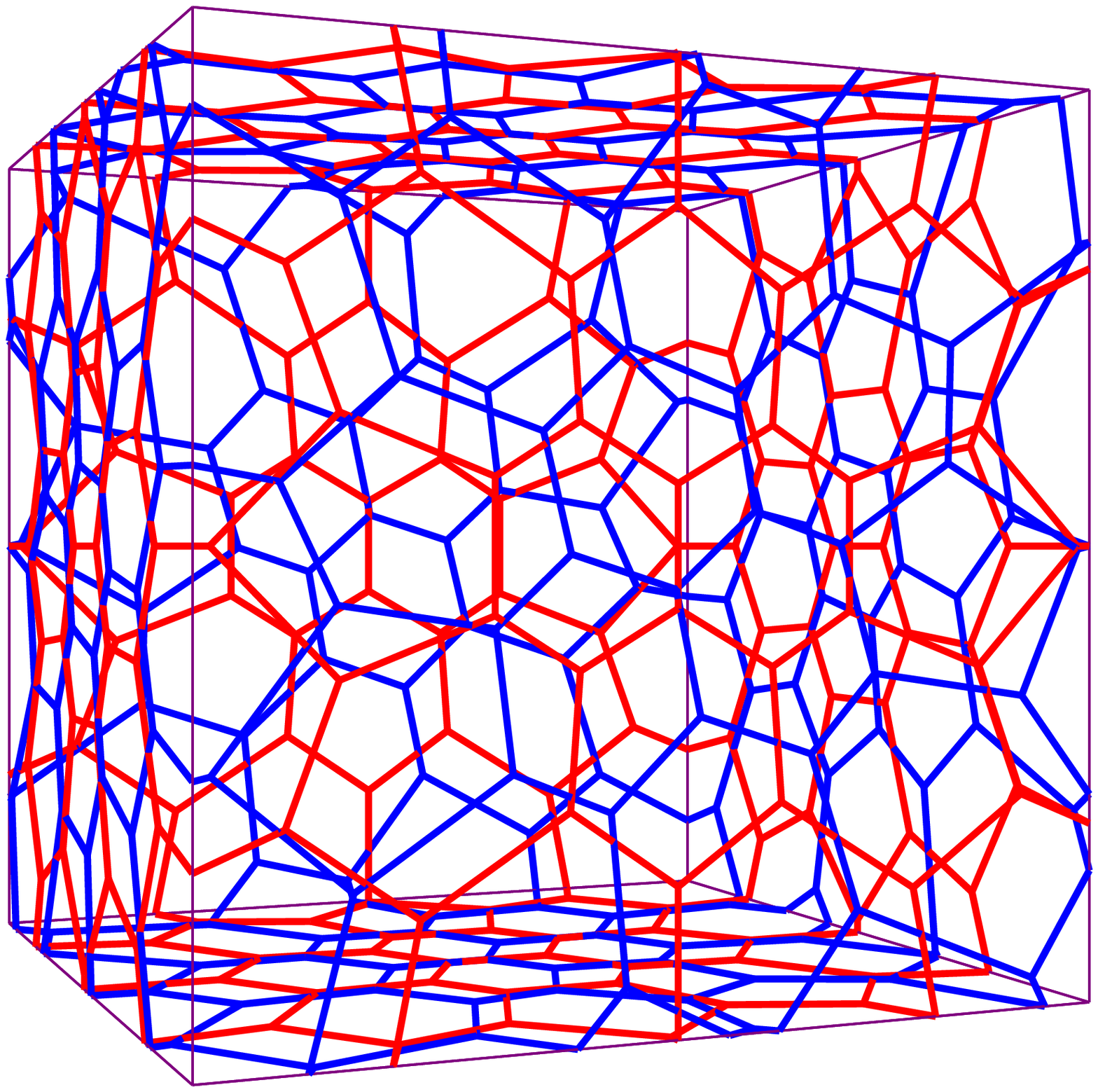,width=4cm,silent=} &
      \epsfig{figure=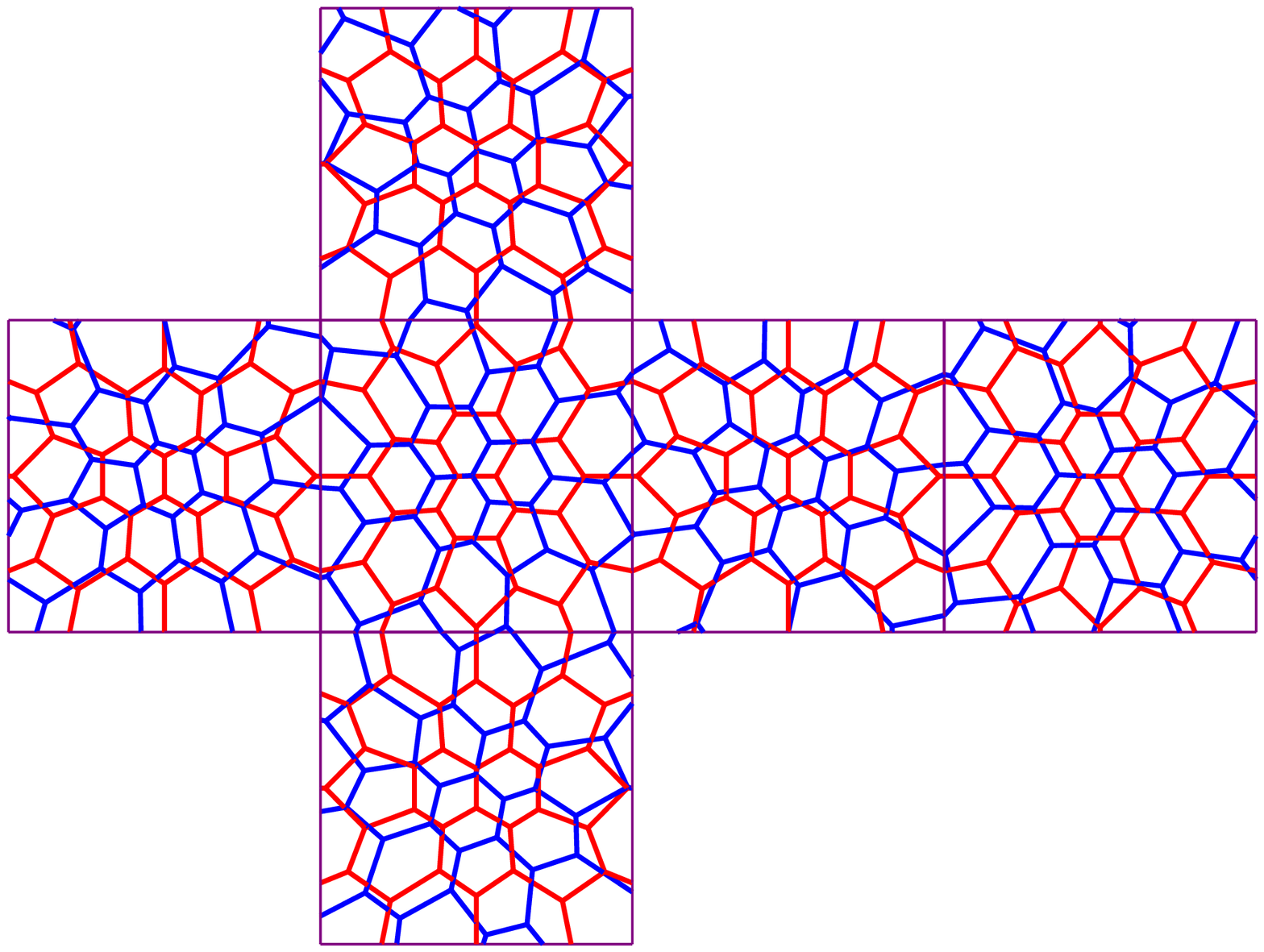,width=5cm,silent=} \\
      	(a) & (b) & (c)
    \end{tabular}
  }
  \caption[The Minkowski sum of two geodesic spheres level~2]%
          {\capStyle{(a) The Minkowski sum of two geodesic spheres
           level~2 slightly rotated with respect to each other, (b)
           the \cgm{} of the Minkowski sum, and (c) the \cgm{}
           unfolded.}}
   \label{fig:geo}
\end{figure*}

\section{Experimental Results}
\label{sec:mscn:exp_res}
\begin{figure*}[!htp]
  \centerline{
    \begin{tabular}{ccc}
      \epsfig{figure=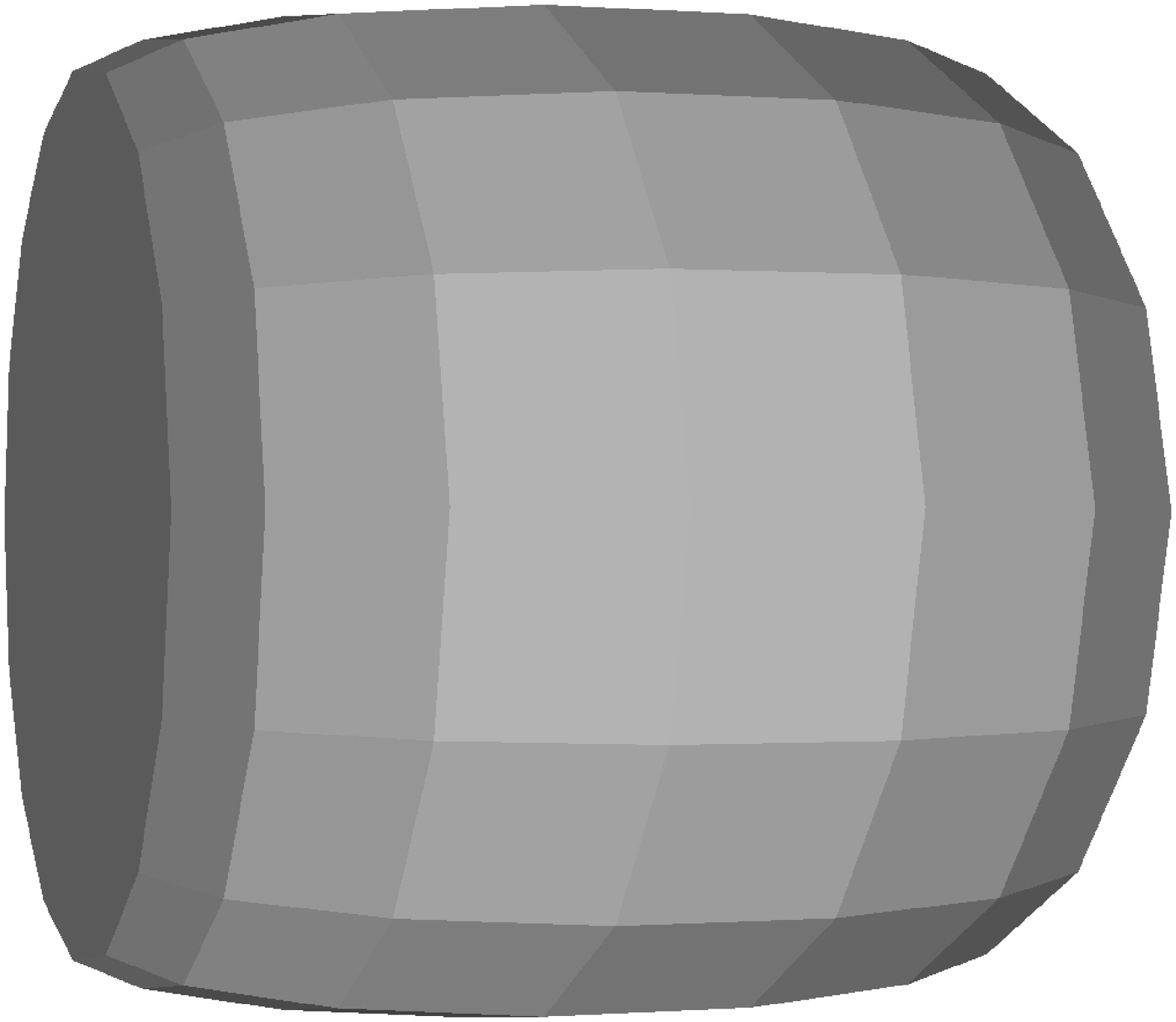,width=4cm,silent=}
      &
      \epsfig{figure=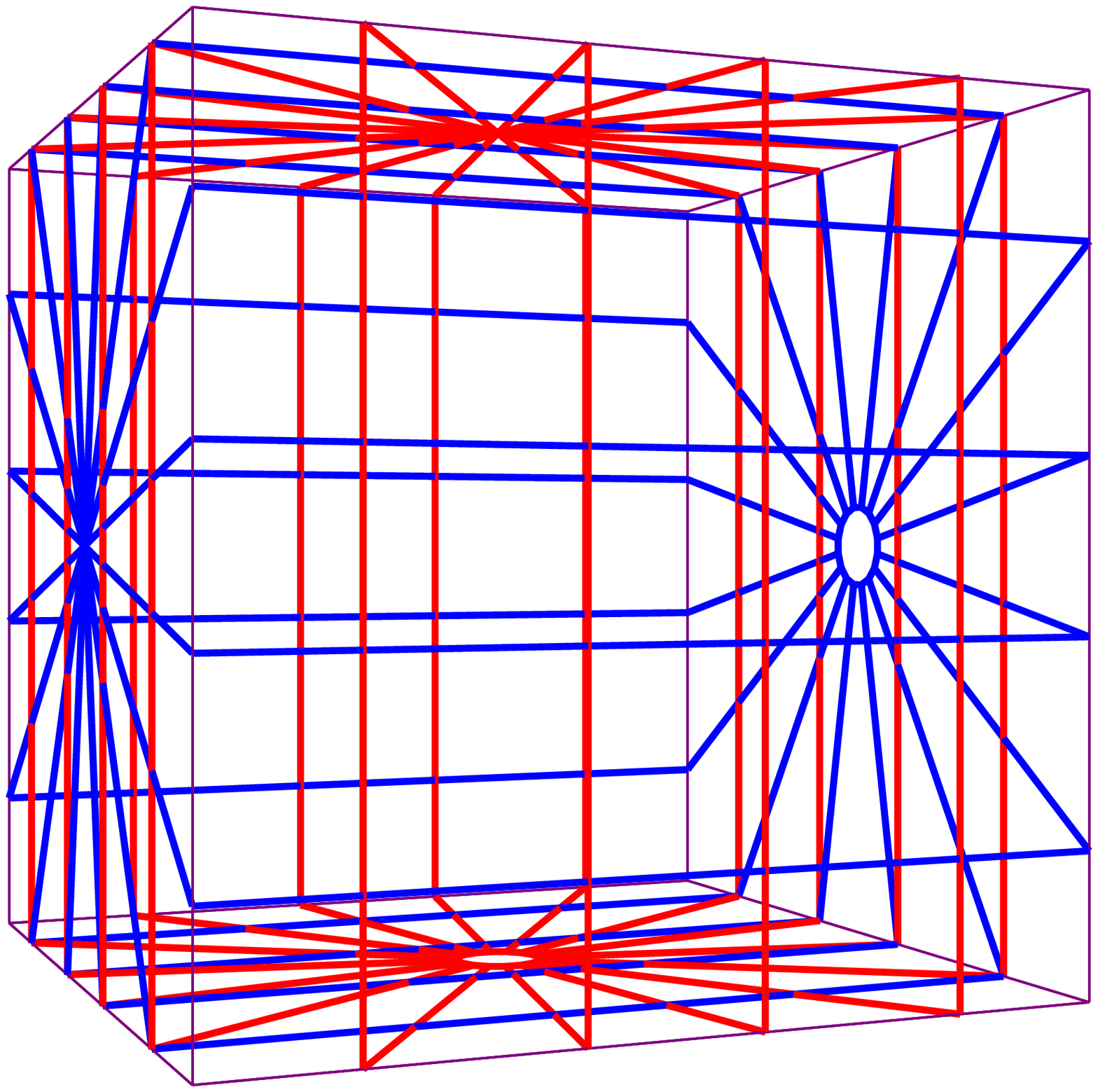,width=4cm,silent=}
      &
      \epsfig{figure=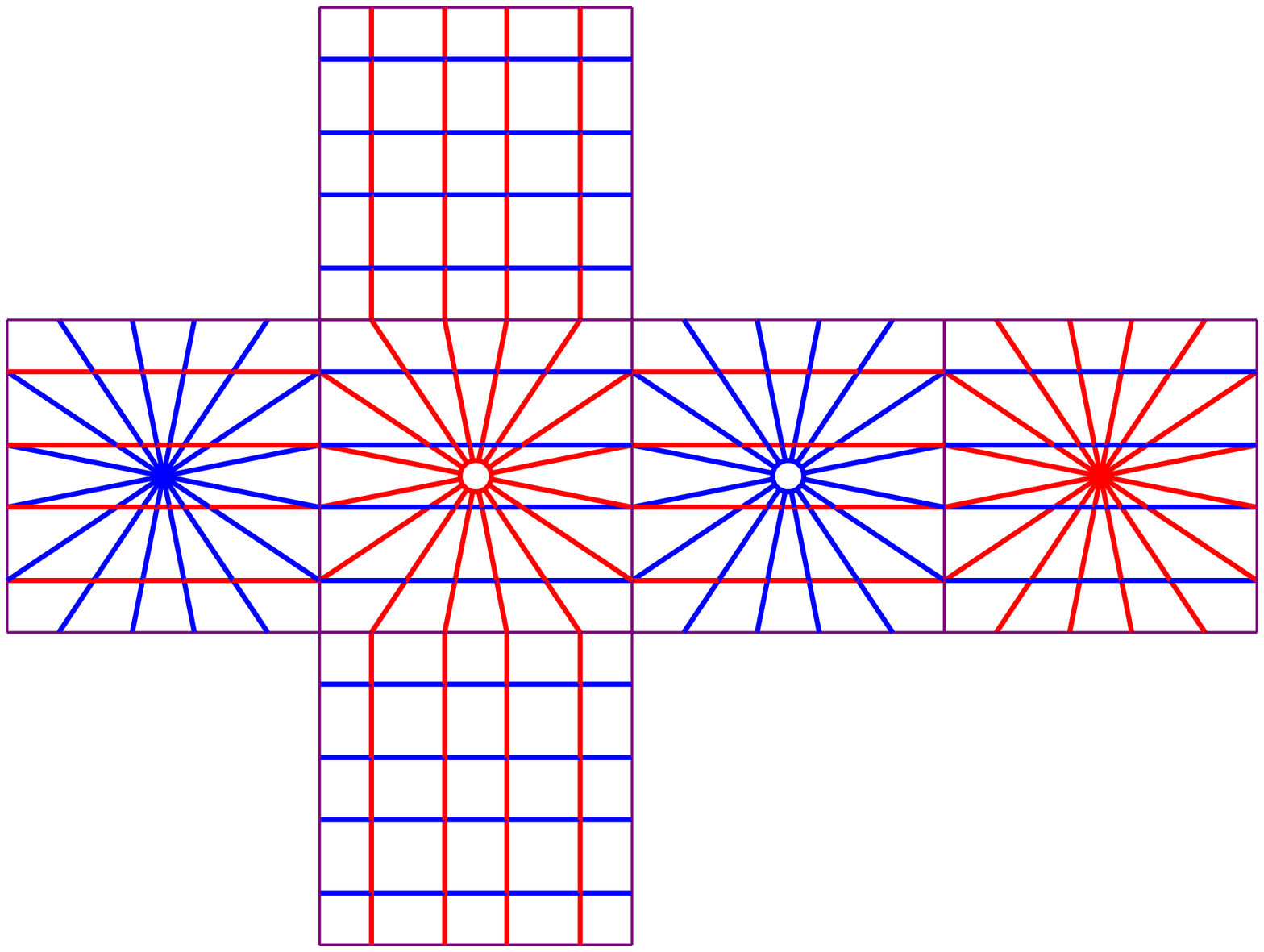,width=5cm,silent=}\\
      (a) & (b) & (c)
    \end{tabular}
  }
  \caption[The Minkowski sum of two orthogonal squashed dioctagonal pyramids]%
          {\capStyle{(a) The Minkowski sum of two approximately orthogonal
           squashed dioctagonal pyramids, (b) the \cgm, and (c) the \cgm{}
           unfolded.}}
  \label{fig:pyr_pyr2}
\end{figure*}
As mentioned above, the Minkowski sum of two polytopes is the
\Index{convex hull} of the pairwise sum of the vertices of the two
polytopes. We have implemented this straightforward method using the
\cgal{} \ccode{convex\_hull\_3} function, which uses the
\cPolyhedron{} data structure to represent the resulting polytope, and
used it to verify the correctness of our two methods. We compared the
time it took to compute exact Minkowski sums using our two methods, a
third method implemented by Hachenberger based on Nef polyhedra
embedded on the sphere~\cite{hkm-bosnc-07}, a fourth method
implemented by Weibel~\citelinks{w-ms}, based on an output-sensitive
algorithm designed by Fukuda~\cite{f-zcmac-04}, and the naive
convex-hull method.

The Nef-based method is not specialized for Minkowski sums. It can
compute the overlay of two arbitrary Nef polyhedra embedded on the
sphere, which can have open and closed boundaries, facets with holes,
and lower dimensional features. The overlay is computed by two separate
hemisphere-sweeps. 

Fukuda's algorithm relies on linear programming. Its complexity is
$O(\delta LP(3,\delta) V)$, where $\delta = \delta_1 + \delta_2$ is the
sum of the maximal degrees of vertices, $\delta_1$ and $\delta_2$, in the
two input polytopes respectively, $V$ is the number of vertices of the
resulting Minkowski sum, and $LP(d,m)$ is the time required to solve a
linear programming in $d$ variables and $m$ inequalities. Note that
Fukuda's algorithm is more general, as it can be used to compute the
Minkowski sum of polytopes in an arbitrary dimension $d$, and as far as
we know, it has not been optimized specifically for $d = 3$.

\begin{table*}[!htp]
  \setlength{\tabcolsep}{4pt}
  \caption[Complexities of primal and dual Minkowski-sum representations]%
           {\capStyle{Complexities of primal and dual Minkowski-sum
           representations.
           DP --- Dioctagonal Pyramid,
           ODP --- Dioctagonal Pyramid orthogonal to DP,
           PH --- Pentagonal Hexecontahedron,
           TI --- Truncated Icosidodecahedron,
           GS4 --- Geodesic Sphere level~4,
           RGS4 --- Rotated Geodesic Sphere level~4,
           El16 --- Ellipsoid-like polyhedron made of 16 latitudes and 32 
             longitudes,
           OEl16 --- Ellipsoid-like polyhedron made of 16 latitudes
             and 32 longitudes orthogonal to El16.}} 
  \label{tab:mink-cnt}
  \centerline{
    \begin{tabular}{|l|l||r|r|r||r|r|r||r|r|r|}
    \hline
    \multirow{3}*{\textbf{Summand 1}} &
    \multirow{3}*{\textbf{Summand 2}} &
    \multicolumn{9}{c|}{\textbf{Minkowski Sum}}\\\cline{3-11}
    & &
    \multicolumn{3}{c||}{\textbf{Primal}} &
    \multicolumn{3}{c||}{\sgmo} &
    \multicolumn{3}{c|}{\cgmo}\\\cline{3-11}
    & &
    \multicolumn{1}{c}{\textbf{V}} &
    \multicolumn{1}{|c}{\textbf{E}} &
    \multicolumn{1}{|c||}{\textbf{F}} &
    \multicolumn{1}{c|}{\textbf{V}} &
    \multicolumn{1}{c|}{\textbf{HE}} &
    \multicolumn{1}{c||}{\textbf{F}} &
    \multicolumn{1}{c|}{\textbf{V}} &
    \multicolumn{1}{c|}{\textbf{HE}} &
    \multicolumn{1}{c|}{\textbf{F}}\\
    \hline
    Icosahedron & Icosahedron &
                    12 &   30 &   20 &   21 &   62 &   12 &   72 &   192 &   36\\
    \hline
    DP   & ODP  &  131 &  261 &  132 &  141 &  540 &  131 &  242 &   832 &  186\\
    \hline
    PH   & TI   &  248 &  586 &  340 &  429 & 1712 &  429 &  514 &  1670 &  333\\
    \hline
    GS4  & RGS4 & 1048 & 2582 & 1536 & 1564 & 5220 & 1048 & 1906 &  6288 & 1250\\
    \hline
    El16 & OEl16 &
                  2260 & 4580 & 2322 & 2354 & 9224 & 2260 & 2826 & 10648 & 2510\\
    \hline  
    \end{tabular}
  }
\end{table*}

\begin{table*}[!htp]
  \setlength{\tabcolsep}{4pt}
  \caption[Time consumption of the Minkowski-sum computation]%
           {\capStyle{Time consumption (in seconds) of the Minkowski-sum
             computation.
             \ch{} --- the Convex Hull method,
             \sgmo --- the (spherical) Gaussian map based method,
             \cgmo --- the cubical Gaussian-map based method,
             \ngmo --- the Nef based method,
             \Fuku --- Fukuda's Linear-Programming based algorithm,
             \boldmath $\frac{F_1 F_2}{F}$ --- the ratio between the product of
             the number of input facets and the number of output facets.}}
  \label{tab:mink-time}
  \centerline{
    \begin{tabular}{|l|l||r|r|r|r|r||r|}
    \hline
    \textbf{Summand 1} & \textbf{Summand 2} & \sgmo & \cgmo & \ngmo & \Fuku & \ch &
    {\boldmath $\frac{F_1 F_2}{F}$}\\
    \hline
    Icosahedron & Icosahedron &
                   0.01 & \textbf{0.01} & 0.12 &  0.01 & 0.01 &  20.0\\
    \hline
    DP   & ODP   & 0.04 & \textbf{0.02} & 0.33 &  0.35 & 0.05 &   2.2\\
    \hline
    PH   & TI    & 0.13 & \textbf{0.03} & 0.84 &  1.55 & 0.20 &  10.9\\
    \hline
    GS4  & RGS4  & 0.71 & \textbf{0.12} & 6.81 &  5.80 & 1.89 & 163.3\\
    \hline
    El16 & OEl16 & 1.01 & \textbf{0.14} & 7.06 & 13.04 & 6.91 & 161.3\\
    \hline  
    \end{tabular}
  }
\end{table*}

The results listed in Table~\ref{tab:mink-time}, produced by
experiments conducted on a Pentium PC clocked at 1.7~GHz, show that
our methods are much more efficient in all cases, and the \cgm{}
method in particular is more than fifty times faster than the
convex-hull\index{convex hull} method in one case. The number of
models used as summands in the listed experiments is just a small
fraction of the total number of models in our collection, which
contains hundreds of models of polytopes; see
Section~\ref{sec:software:availability} for instructions how to download
the corresponding files. The listed experiments are just a small
sample of all the experiments we have conducted. The last column
of the table indicates the ratio $\frac{F_1 F_2}{F}$, where $F_1$ and
$F_2$ are the number of facets of the input polytopes respectively,
and $F$ is the number of facets of the Minkowski sum. As this ratio
increases, the relative performance of the output-sensitive algorithms
compared to the convex-hull method improves as expected.
Figure~\ref{fig:models2} illustrates some of the resulting Minkowski
sums listed in Table~\ref{tab:mink-time}.

The \sgmo{}, \cgmo{}, \ngmo{}, and \ch{} methods are all based on \cgal{}.
As the implementation of these methods and of \cgal{} is generic, it is
possible to instantiate specific components, such as the number type,
the geometric kernel, and the segment-handling traits class with the
model of the respective concept that achieves the best result. The code
of the programs used to obtain the results listed in Table~\ref{tab:mink-time}
are instantiated with the {\tt CGAL::Gmpq} exact rational number-type, the
{\tt Simple\_cartesian} geometric kernel, and the {\tt Lazy\_kernel}
kernel adaptor, which performs lazy exact
computations~\cite{pf-glese-06}. The computation is delayed until a
point where the approximation with interval arithmetic is not
precise enough to perform safe comparisons. In other words, these
programs need only to compute to sufficient precision to evaluate
predicates correctly, exploiting a significant relaxation of the naive
concept of numerical exactness.

We experimented with two different exact number types: One provided
by \leda{} 4.4.1, namely {\tt leda::rational}, and another based on
\gmp{} 4.1.2, namely {\tt CGAL::Gmpq}. The former does not normalize
the rational numbers automatically. Therefore, we had to initiate
normalization operations to contain their bit-length growth. In case
of the \cgm{} method, for example, we chose to do it right after the
central projections of the facet-normals are calculated, and before
the chains of segments, which are the mapping of facet-edges, are
inserted into the planar maps. Our experience shows that indiscriminate
normalization considerably slows down the planar-map construction, and
the choice of number type may have a drastic impact on the performance
of the code overall.

\label{sec:mscn:models}
\begin{figure*}[!hbp]%
  \centerline{
    \begin{tabular}{ccc}
      \epsfig{figure=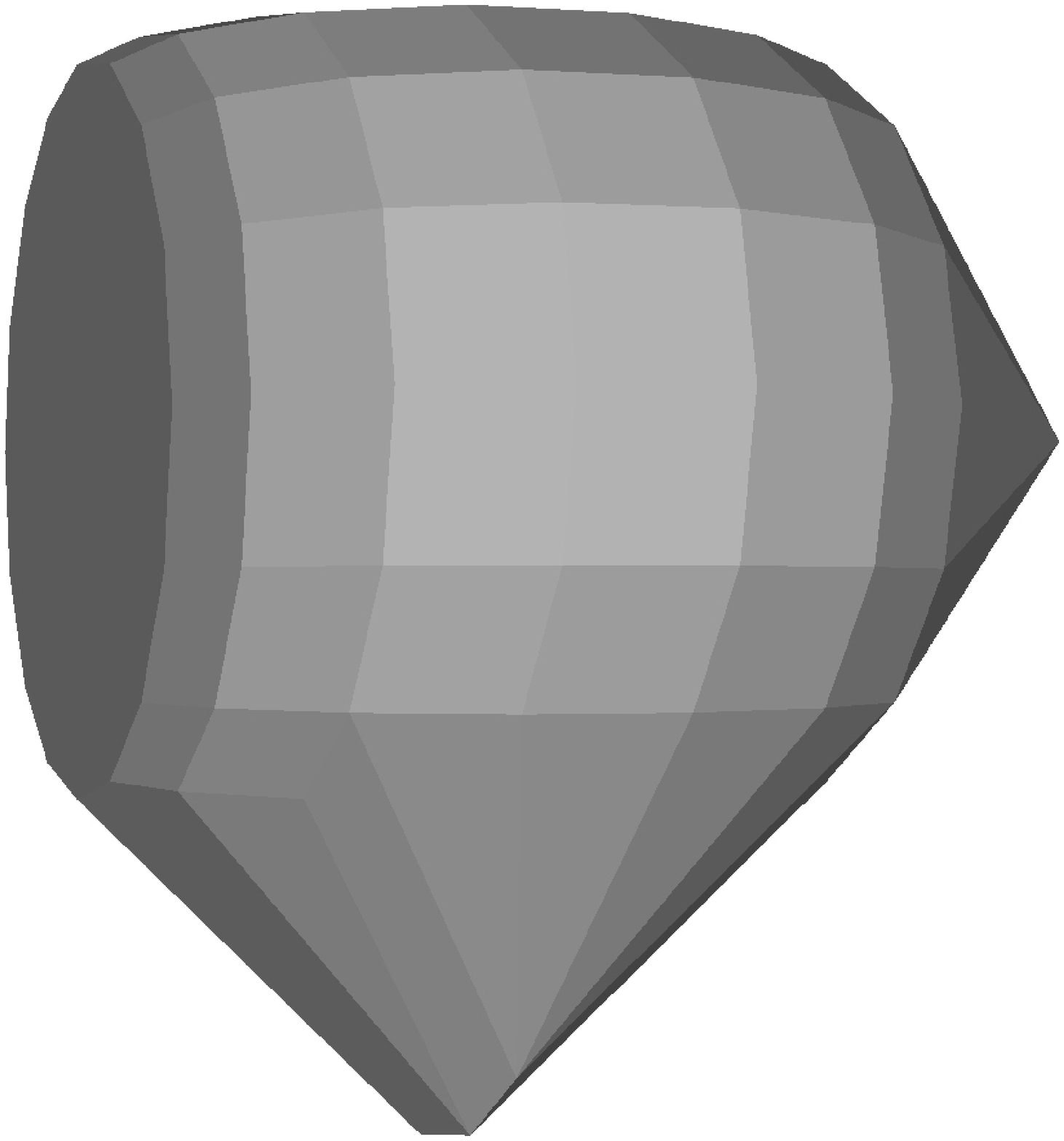,width=4cm,silent=} &
      \epsfig{figure=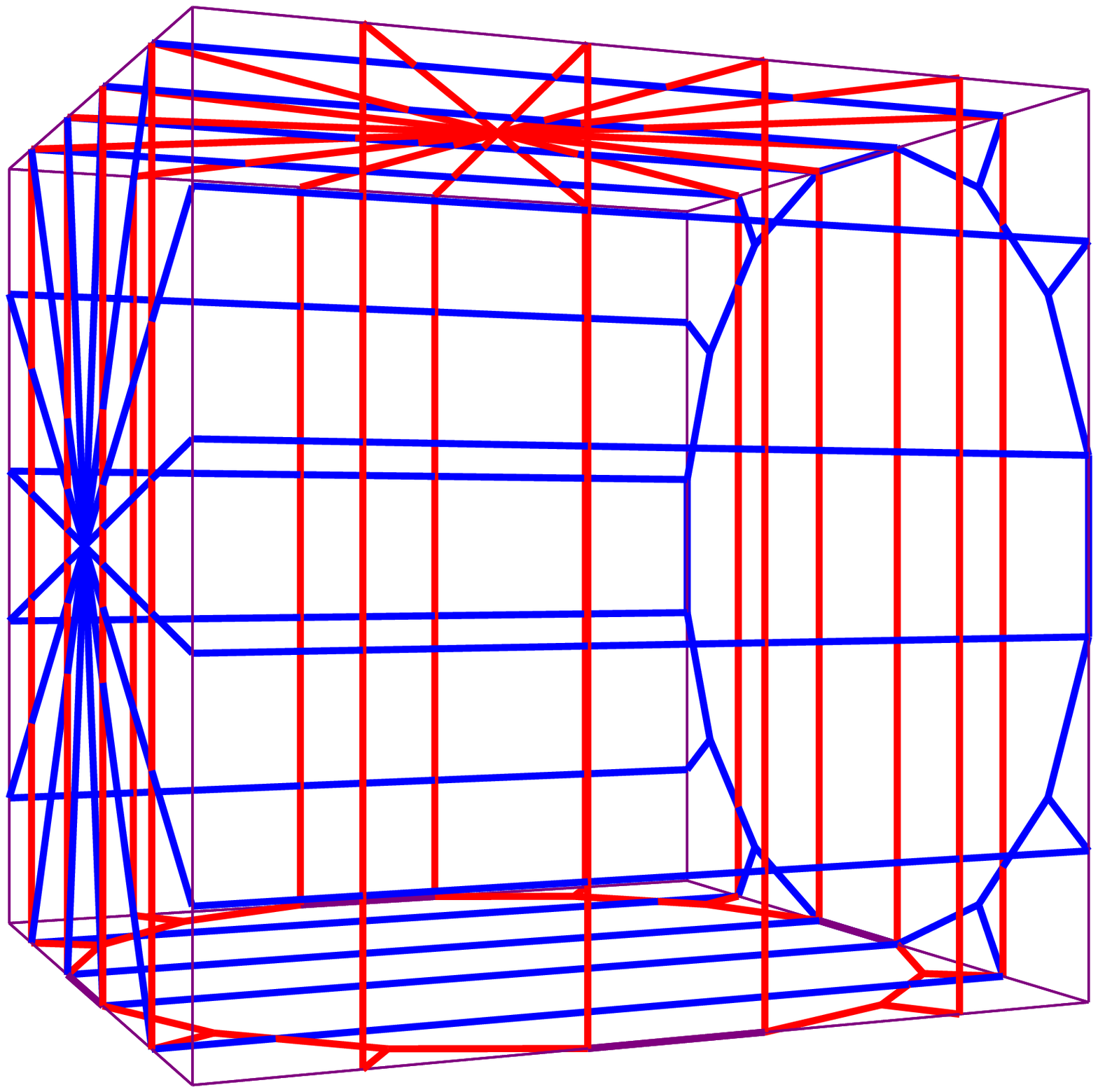,width=4cm,silent=} &
      \epsfig{figure=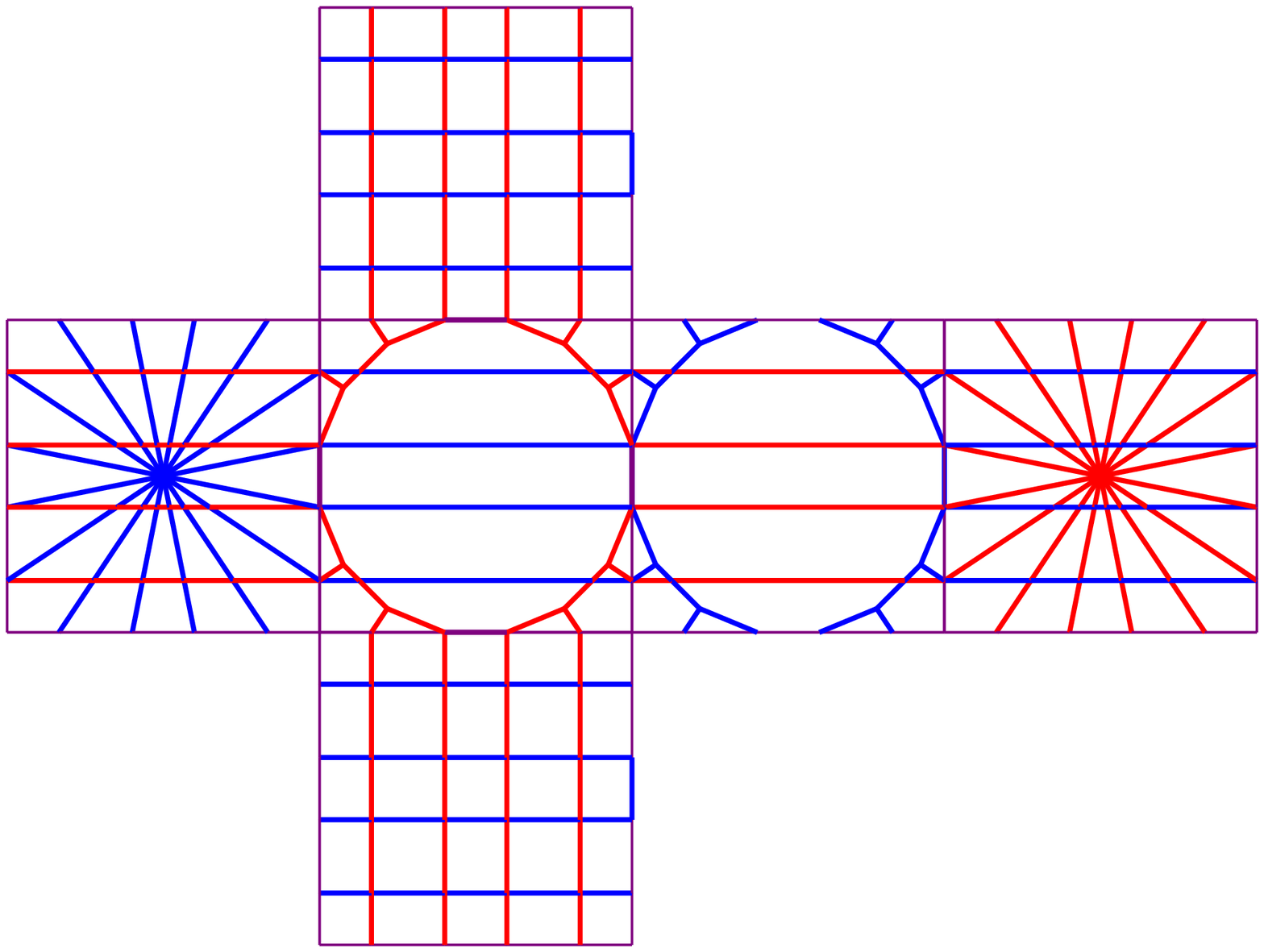,width=5cm,silent=}\\
      (a) & (b) & (c)\\
      \epsfig{figure=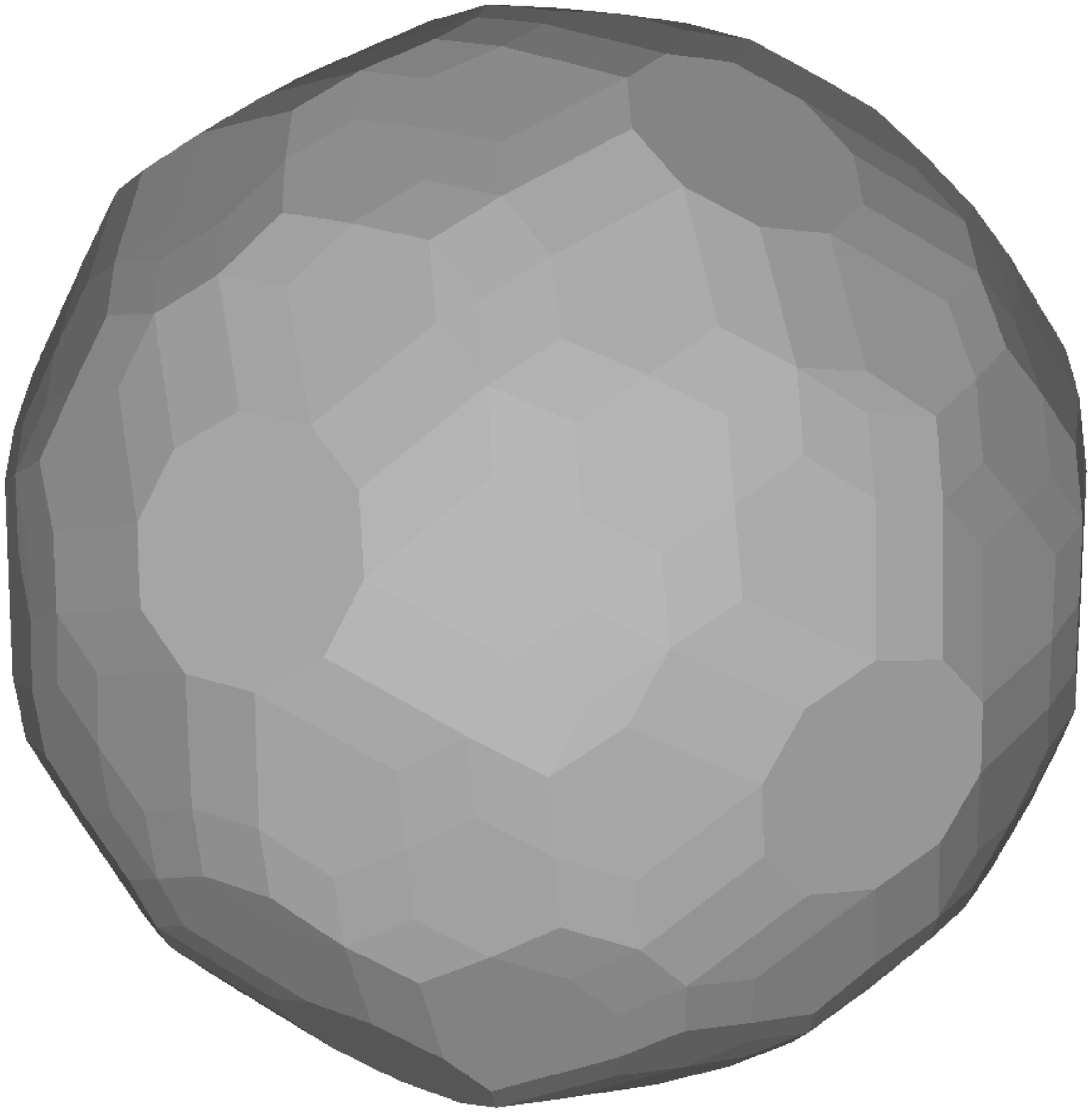,width=4cm,silent=} &
      \epsfig{figure=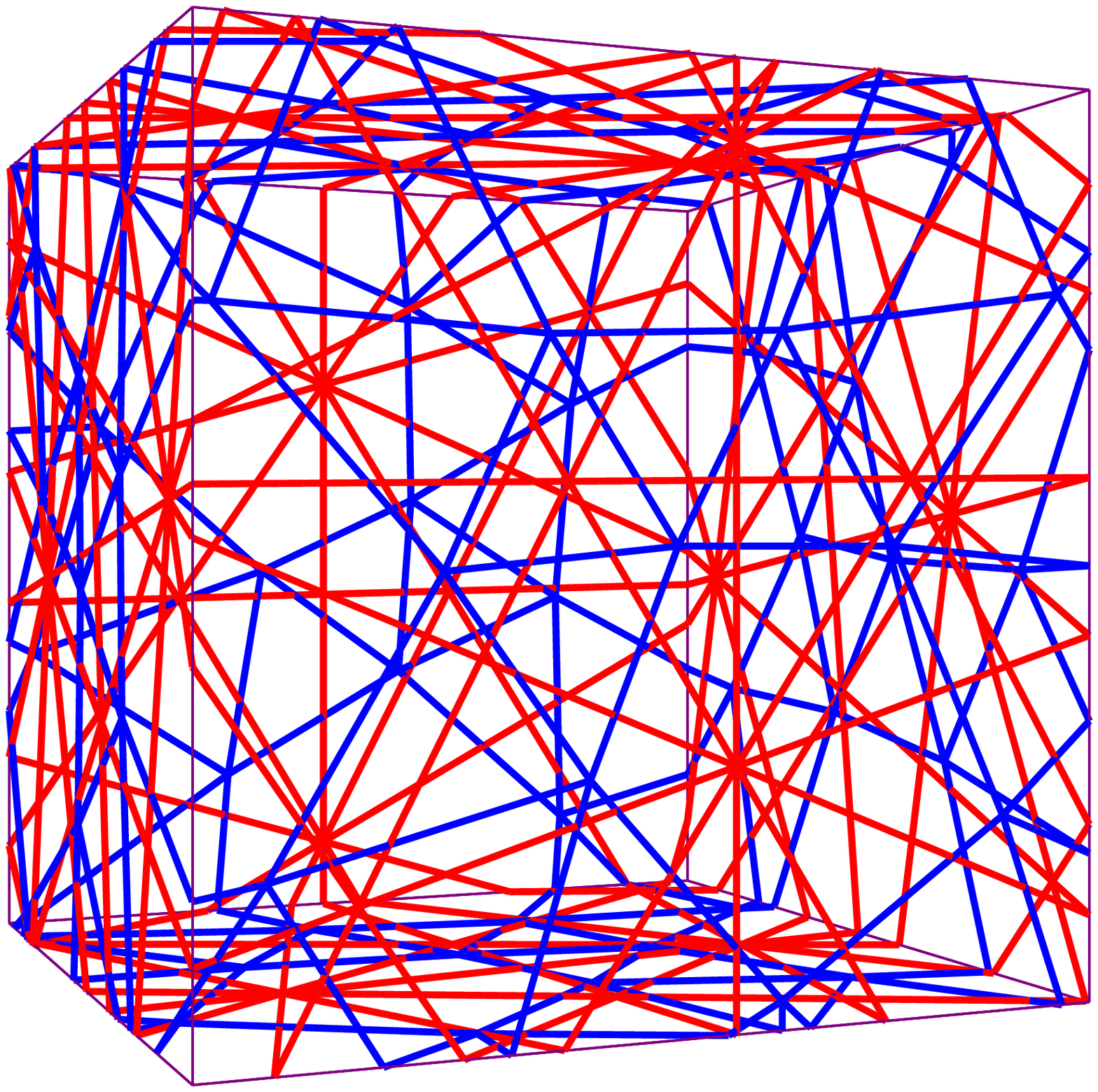,width=4cm,silent=} &
      \epsfig{figure=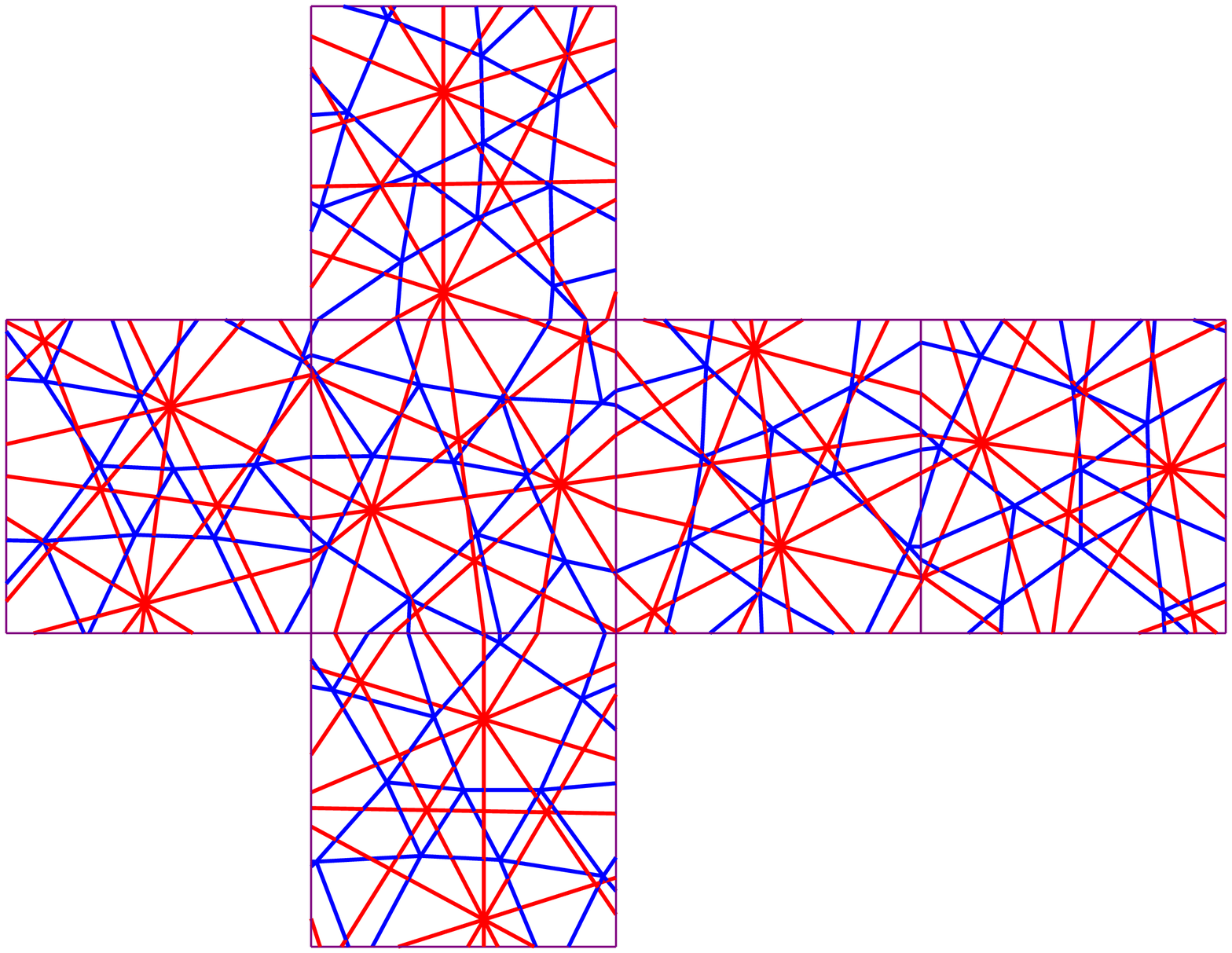,width=5cm,silent=}\\
      (d) & (e) & (f)\\
      \epsfig{figure=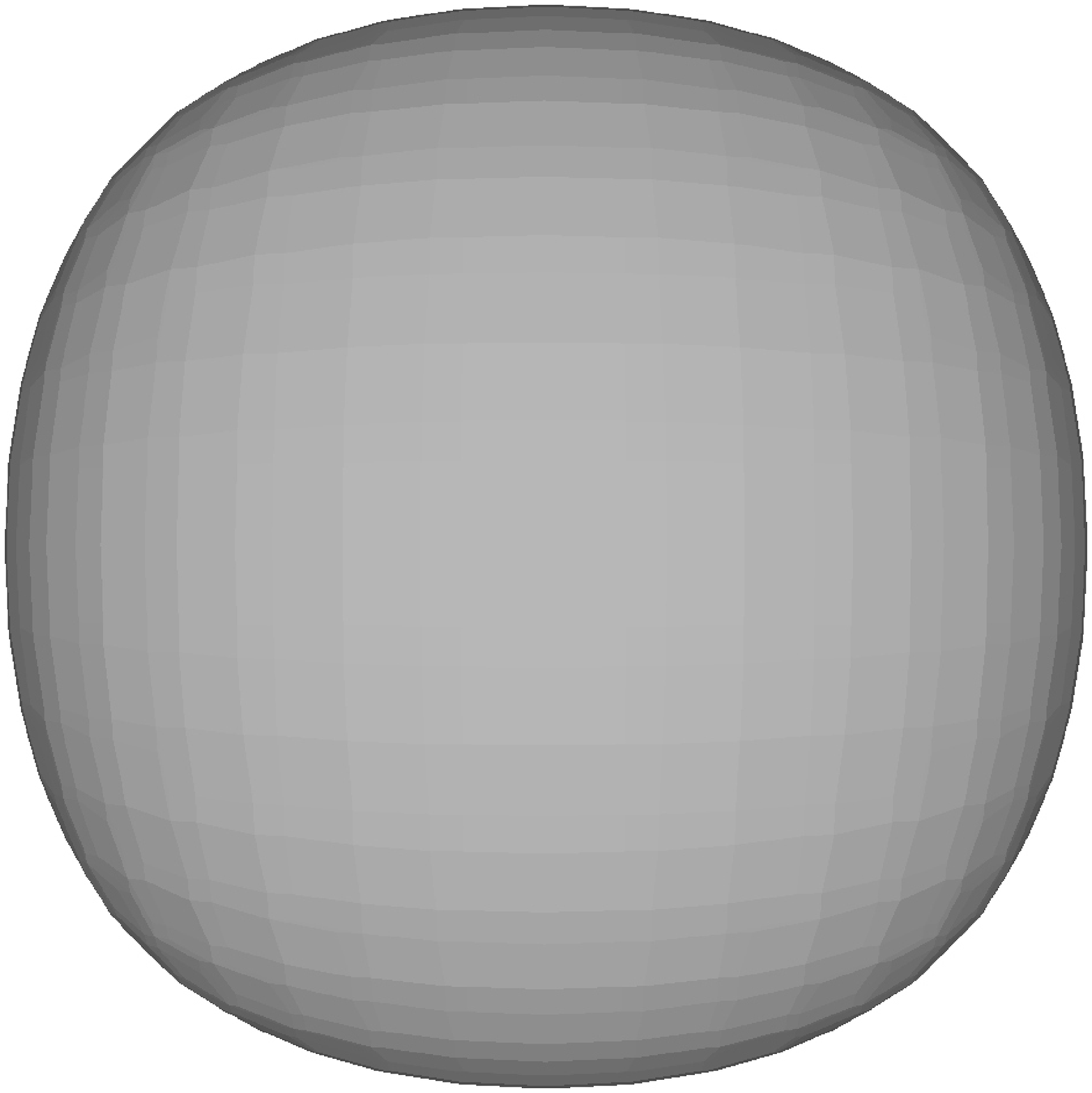,width=4cm,silent=} &
      \epsfig{figure=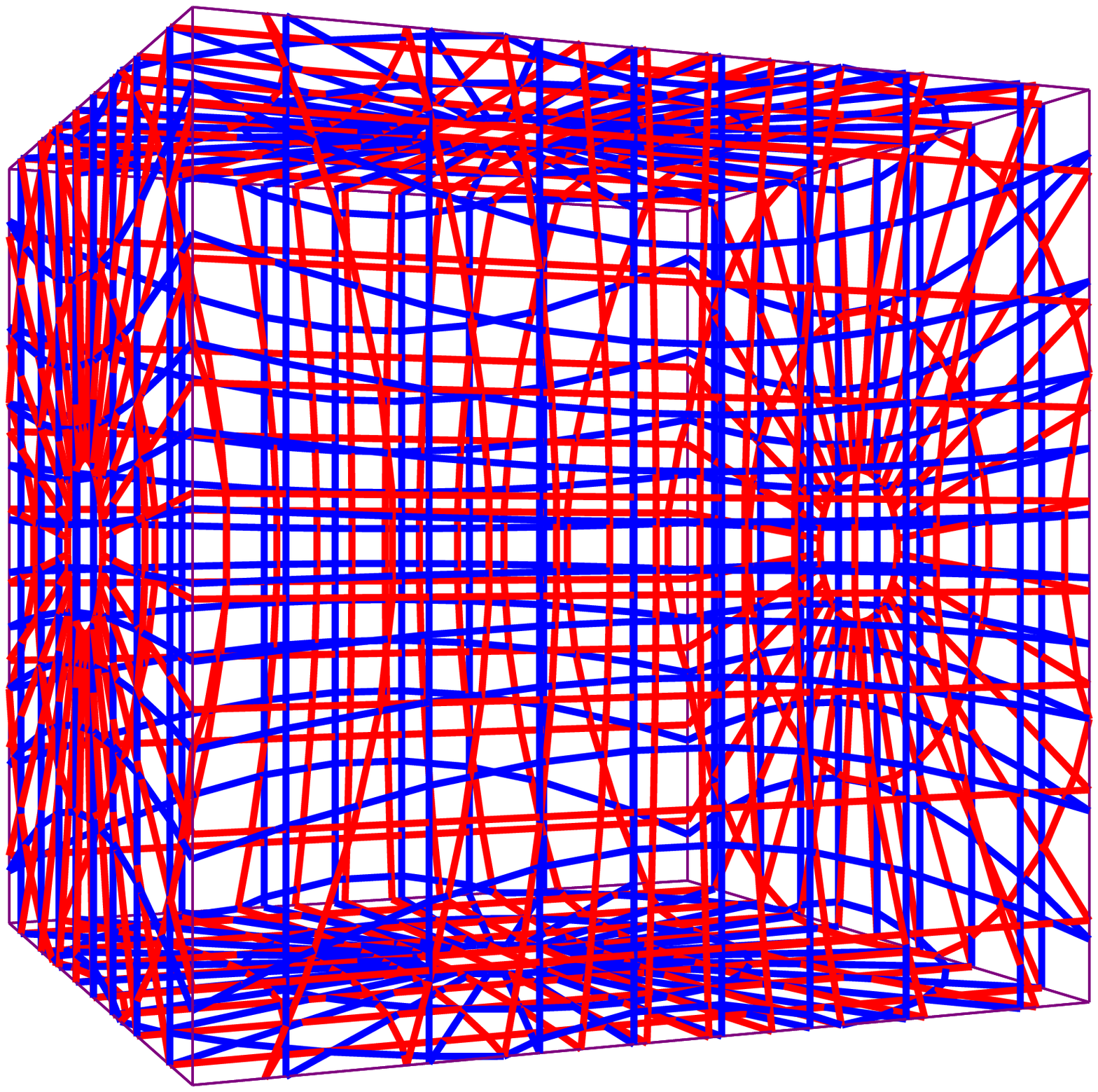,width=4cm,silent=} &
      \epsfig{figure=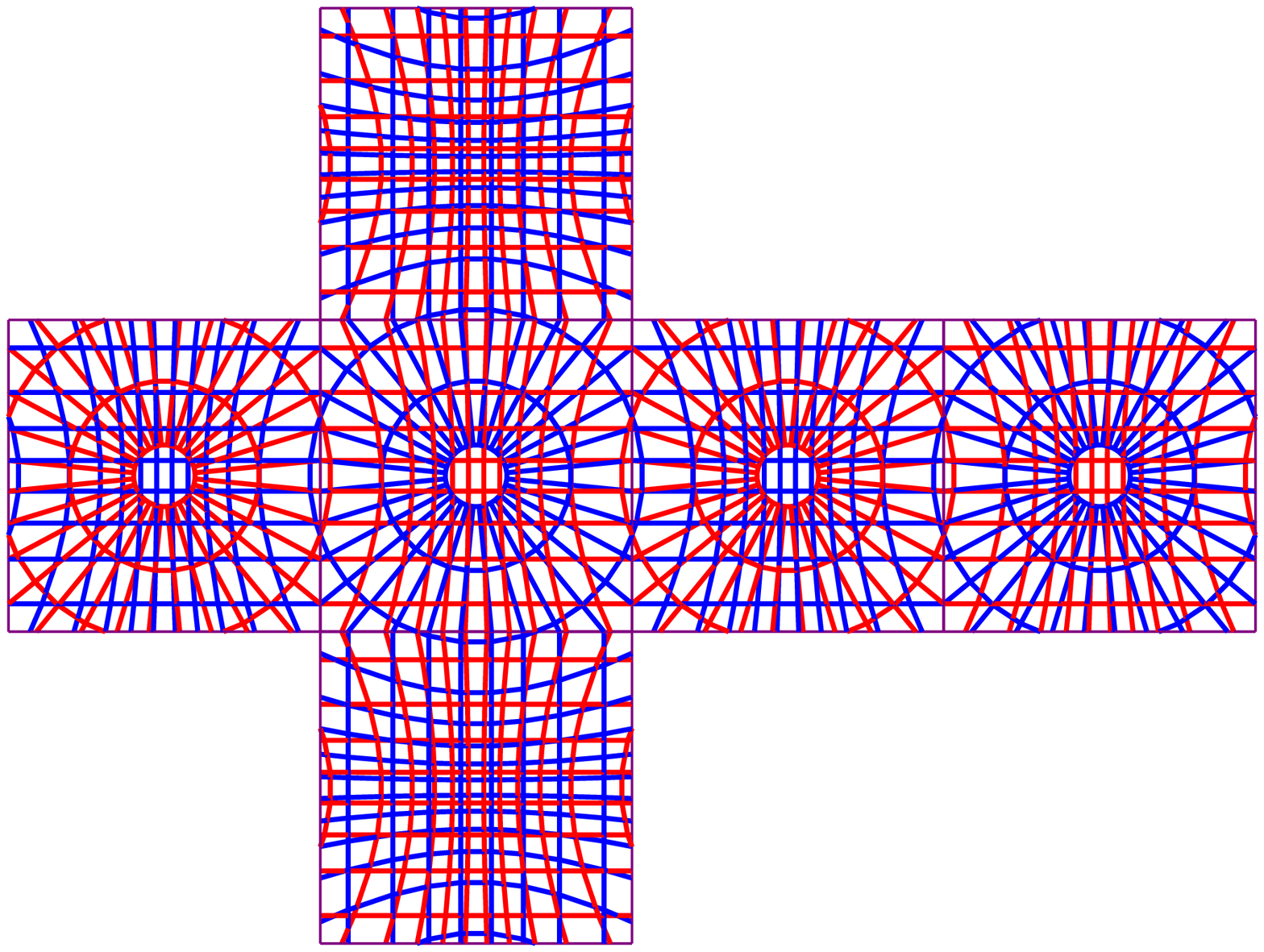,width=5cm,silent=}\\
      (g) & (h) & (i)\\
    \end{tabular}
  }
  \caption[Cubical Gaussian maps and Minkowski sums of polyhedra]%
           {\capStyle{(a) The Minkowski sum of two approximately
            orthogonal dioctagonal pyramids, (d) the Minkowski sum of
            a Pentagonal Hexecontahedron and a Truncated
            Icosidodecahedron, (g) the Minkowski sum of two approximately
            orthogonal ellipsoid-like polyhedra, (b,e,h) the \cgm{} of the  
             respective polytope, and (c,f,i) the \cgm{} unfolded.}}
  \label{fig:models2}
\end{figure*}

\begin{savequote}[10pc]
\sffamily
I was working on the proof of one of my poems all the morning, and
took out a comma. In the afternoon I put it back again.
\qauthor{Oscar Wilde}
\end{savequote}
\chapter{Exact Complexity of Minkowski Sums}
\label{chap:mink-sum-complexity}
\newlength{\radius}\setlength{\radius}{3.2cm}

\begin{figure}[!b]
  \centerline{
  \begin{tabular}{cccc}
  \epsfig{figure=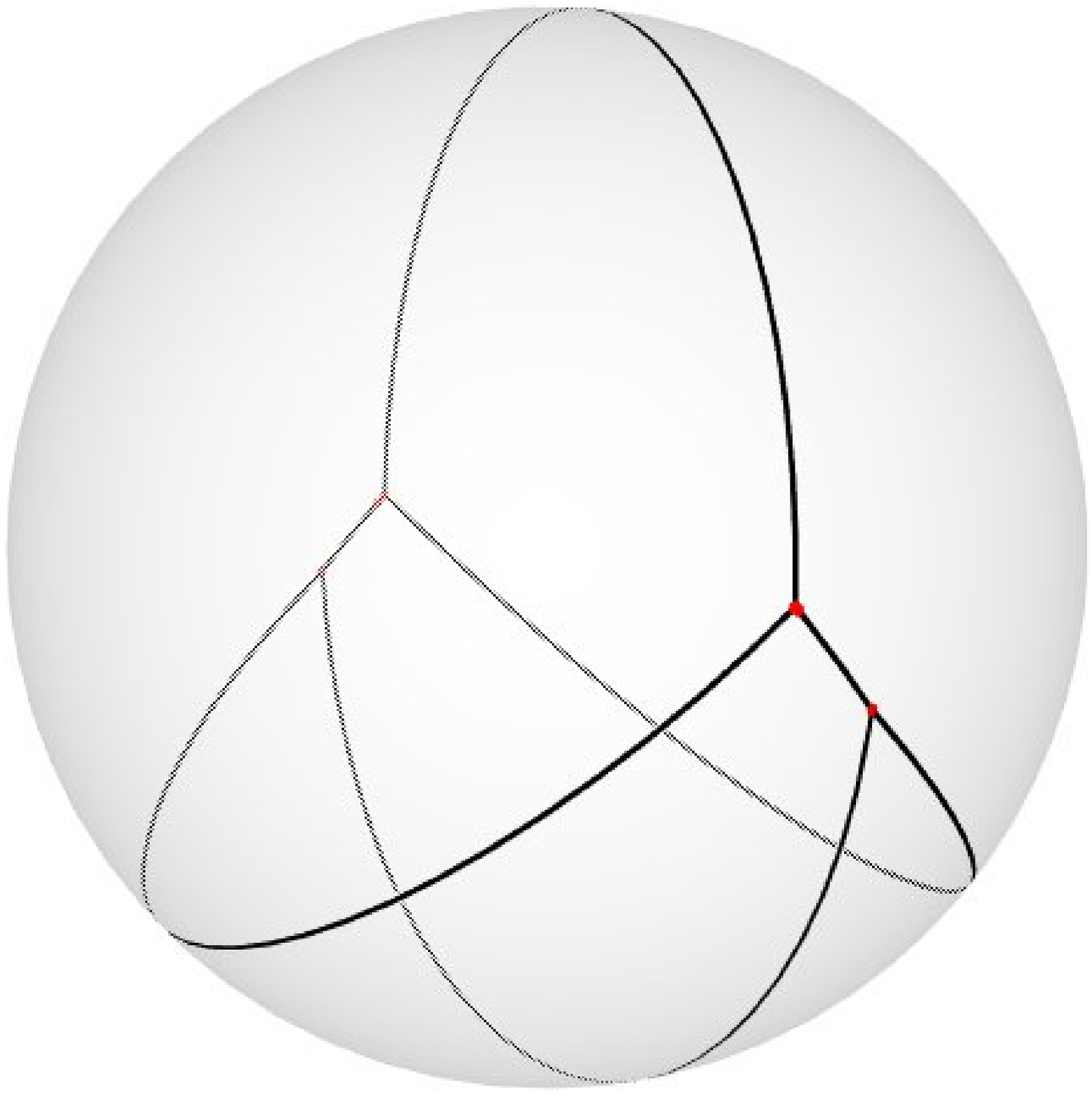,width=3.5cm,silent=} &
  \epsfig{figure=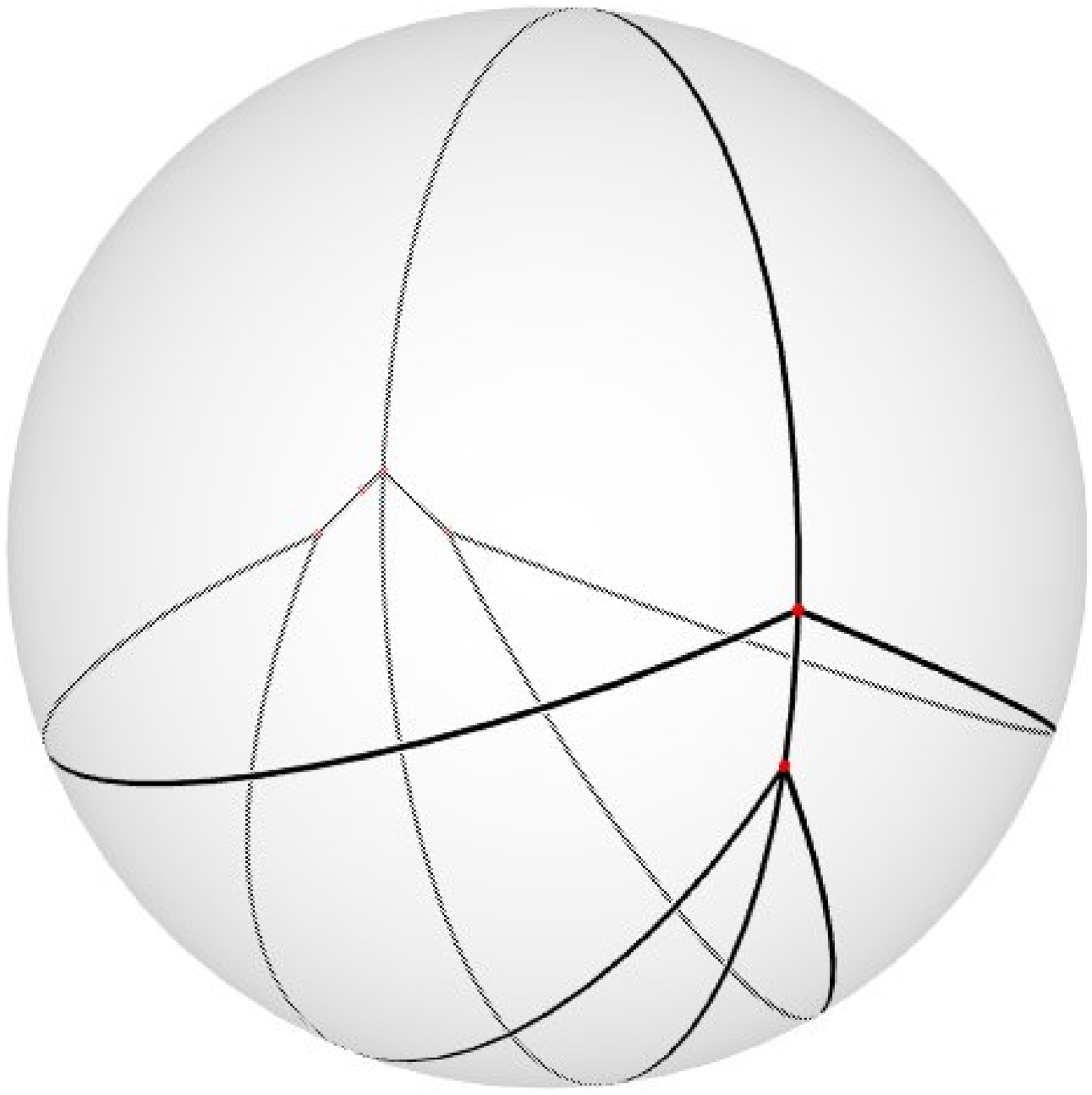,width=3.5cm,silent=} &
  \epsfig{figure=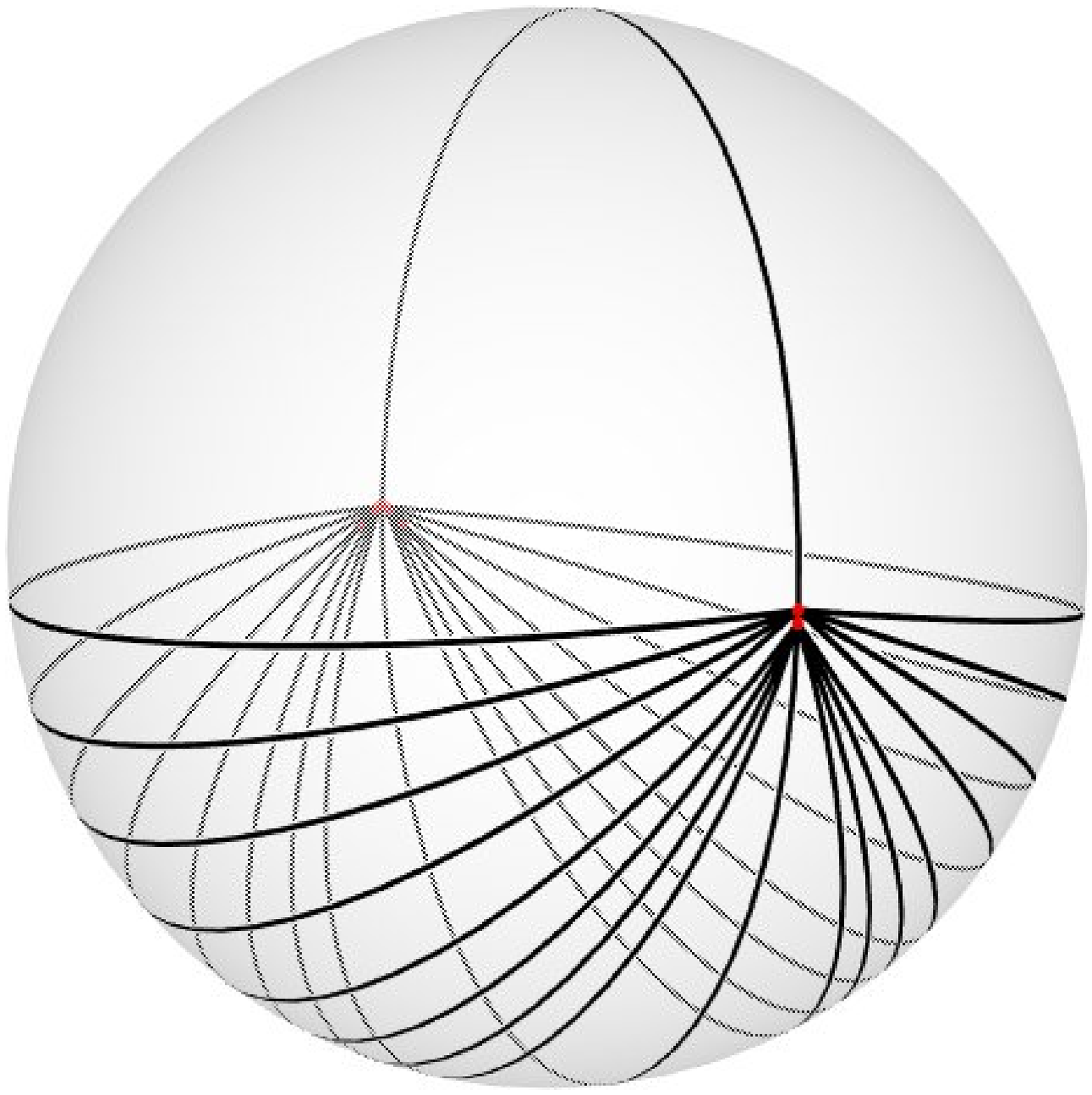,width=3.5cm,silent=} &
  \epsfig{figure=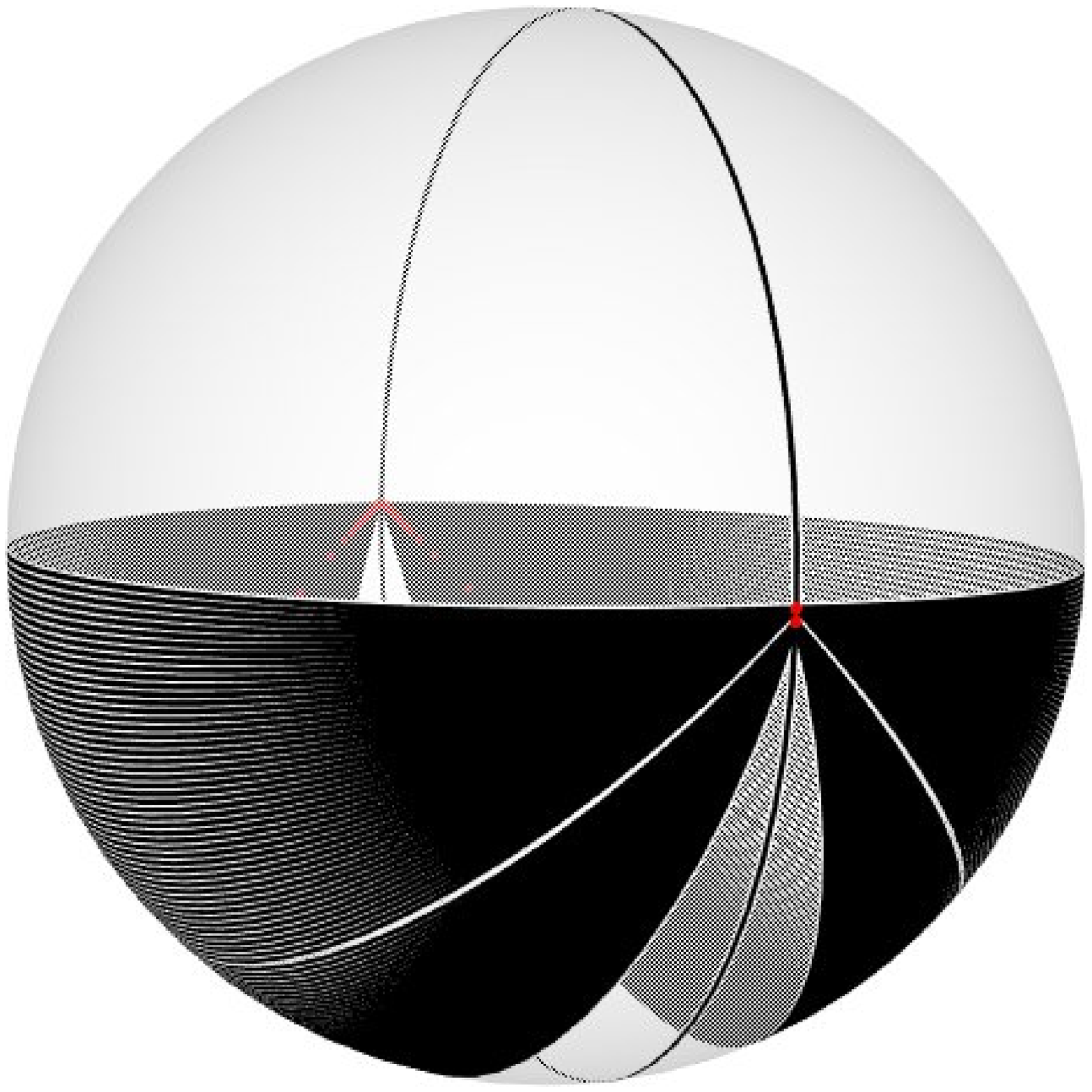,width=3.5cm,silent=}\\
  (a) & (b) & (c) & (d)
  \end{tabular}
  }
  \caption[Gaussian maps of summands of Minkowski sums with maximal complexities]
          {\capStyle{Gaussian maps of summands of Minkowski sums with maximal
          number of facets. The polytopes represented by the Gaussian maps (a),
          (b), (c), and (d) consist of 4, 5, 11, and 101 facets, respectively.}}
  \label{fig:summands-max-ms}  
\end{figure}

We present a tight bound on the exact maximum complexity of Minkowski 
sums of polytopes in $\mathbb{R}^3$. In particular, we prove that the
maximum number of facets of the \Index{Minkowski sum} of $k$ polytopes
with $m_1,m_2,\ldots,m_k$ facets respectively is bounded from above by
$\sum_{1 \leq i < j \leq k}(2m_i - 5)(2m_j - 5) + 
\sum_{1 \leq i \leq k}m_i + \binom{k}{2}$. Given $k$ positive integers
$m_1,m_2,\ldots,m_k$, we describe how to construct two polytopes with
corresponding number of facets, such that the number of facets of their
Minkowski sum is exactly $\sum_{1 \leq i < j \leq k}(2m_i - 5)(2m_j - 5) + 
\sum_{1 \leq i \leq k}m_i + \binom{k}{2}$. When $k = 2$, for example, the
expression above reduces to $4m_1m_2 - 9m_1 - 9m_2 + 26$.
Figure~\ref{fig:summands-max-ms} illustrates some polytopes that when
rotated properly and used as summand, the number of facets of the
resulting Minkowski sums is maximal.

Various methods to compute the Minkowski sum of two polyhedra in $\rrr$
have been proposed; see Section~\ref{sec:intro:ms-bg} for details
about these methods and about the combinatorial complexity of the sum.
Recall, that (i) a common approach for computing Minkowski sums of
non-convex polyhedra decomposes each polyhedron into convex pieces, and
computes pairwise Minkowski sums of the convex pieces, and (ii) all the
efficient methods are output sensitive. Thus, the exact maximum
complexity of the Minkowski sum structure has practical implications.

\newcounter{ms-comp:cntr}
One method to compute the Minkowski sum of two polytopes is to compute 
the \Index{convex hull} of the pairwise sum of the vertices of the two
polytopes. While being simple and easy to implement, the time
complexity of this method is $\Omega(mn)$ regardless of the size of
the resulting sum, which can be as low as $(m+n)$ (counting facets)
for degenerate cases.\footnote{It can be as low as $m (= n)$ in the
extremely-degenerate case of two similar polytopes with parallel facets.}
In Chapter~\ref{chap:mink-sums-construction} we describe several complete
implementations of output-sensitive methods for computing exact
Minkowski sums (beyond the naive method mentioned above), including
two methods that we introduce. These methods exploit efficient innovative
techniques in the area of exact geometric computing to minimize the
time it takes to ensure exact results. However, even with the use of
these techniques, the amortized time of a single arithmetic operation
is larger than the time it takes to carry out a single arithmetic
operation on native number types, such as floating point. Thus, the
constant that scales the dominant element in the expression of the
time complexity of these algorithms increases, which makes the
question this chapter attempts to answer, ``What is the exact maximum
complexity of Minkowski sums of polytopes in $\rrr$?'', even more relevant.

Gritzmann and Sturmfels~\cite{gs-mapca-93} formulated an upper
bound on the number of features $f^d_i$ of any given dimension $i$ of
the Minkowski sum of many polytopes in $d$ dimensions:
$f^d_i(P_1 \oplus P_2 \oplus \ldots \oplus P_k) \leq 2
\binom{j}{i}\sum\limits_{\substack{h=0}}^{d-i-1}\binom{j-i-1}{h}$ 
for $i = 0,1,\ldots d-1$, where $j$ denotes the number of non-parallel 
edges of $P_1,P_2,\ldots,P_k$. According to this expression, the number of 
facets $f^3_2$ of the Minkowski sum of two polytopes in $\mathbb{R}^3$ is
bounded from above by $j(j-1)$. Fukuda and Weibel~\cite{fw-fvmsa-05} 
obtained upper bounds on the number of edges and facets of the Minkowski 
sum of two polytopes in $\mathbb{R}^3$ in terms of the number of vertices 
of the summands:
$f^3_2(P_1 \oplus P_2) \leq f^3_0(P_1)f^3_0(P_2) + f^3_0(P_1) + f^3_0(P_2) - 6$.
They also studied the properties of Minkowski sums of perfectly centered
polytopes and their polars, and provided a tight bound on the number of
vertices of the sum of polytopes in any given dimension.

The main result presented in this chapter concerning two polytopes follows.
\begin{theorem}
\label{the:theorem}
Let $P_1,P_2,\ldots,P_k$ be a set of $k$ polytopes in $\mathbb{R}^3$,
such that the number of facets of $P_i$ is $m_i$ for $i = 1,2,\ldots,k$.
The number of facets of the Minkowski sum
$P_1 \oplus P_2 \oplus\ldots\oplus P_k$ cannot exceed
$\sum_{1 \leq i < j \leq k}(2m_i - 5)(2m_j - 5) +
\sum_{i=1}^k m_i + \binom{k}{2}$. 
This bound is tight. Namely, given $k$ integers
$m_1,m_2,\ldots,m_k$, such that $m_i \geq 4$ for $i = 1,2,\ldots,k$,
it is possible to construct $k$ polytopes in $\mathbb{R}^3$ with
corresponding number of facets, such that the number of facets of their
Minkowski sum is exactly the expression above.
\end{theorem}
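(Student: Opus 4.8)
Proof proposal. The plan is to argue entirely in the dual picture provided by Gaussian maps (Section~\ref{sec:mscn:gauss_map}). Recall that $G(P_i)$ is an arrangement of geodesic arcs on $\spheretwo$ combinatorially dual to $P_i$, and that the Gaussian map of the Minkowski sum $M = P_1 \oplus \cdots \oplus P_k$ is exactly the overlay (common refinement) of $G(P_1),\ldots,G(P_k)$, with the facets of $M$ in bijection with the vertices of this overlay. So the statement is: bound the number of vertices of an overlay of $k$ Gaussian maps, and then realize the bound. A vertex of the overlay is either (a)~a vertex of some individual $G(P_i)$, i.e.\ the outer normal of a facet of $P_i$, or (b)~a point where an edge-arc of $G(P_i)$ crosses an edge-arc of $G(P_j)$ for some $i\ne j$. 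Applying a generic rotation to each summand makes all facet normals distinct and puts the $k$ maps in general position (no three arcs through a common point, no vertex of one map on an arc of another, no overlapping arcs), and such a perturbation cannot decrease the number of facets of $M$ (a simple overlay vertex persists, a degenerate one can only split). Hence it suffices to bound $F(M)$ in general position, where it equals $\sum_i m_i + \sum_{1\le i<j\le k} X_{ij}$ with $X_{ij}$ the number of crossing pairs of edge-arcs between $G(P_i)$ and $G(P_j)$. The whole theorem therefore reduces to the pairwise estimate
\[ X_{ij}\;\le\;(2m_i-5)(2m_j-5)+1, \]
the ``$+1$'' accumulating into $\binom{k}{2}$ when summed over pairs; the same reduction also drives the tightness construction (make every pair realize the pairwise maximum).

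For the pairwise bound I would use two geometric facts about a Gaussian map $G(P)$ of a polytope $P$ with $m$ facets and $V$ vertices. (i)~Every edge-arc of $G(P)$ has length equal to the exterior dihedral angle at the corresponding edge of $P$, hence strictly less than $\pi$; consequently any edge-arc of $G(P_i)$ and any edge-arc of $G(P_j)$ lying on distinct great circles meet in at most one point, and a pair of parallel edges (same supporting great circle) contributes no facet to the sum. (ii)~Each face of $G(P)$ is the intersection of $\spheretwo$ with the pointed normal cone of a vertex of $P$, hence a geodesically convex spherical polygon; since a geodesic arc shorter than $\pi$ meets a geodesically convex region in a connected sub-arc, a single edge-arc of $G(P_j)$ visits each face of $G(P_i)$ at most once, so it passes through at most $V(P_i)$ faces and crosses $G(P_i)$ at most $V(P_i)-1$ times. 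Euler's formula ($V-E+F=2$, $2E\ge 3V$, $F=m$) gives $V\le 2m-4$ and $E\le 3m-6$, so every edge-arc of $G(P_j)$ crosses $G(P_i)$ at most $2m_i-5$ times, and symmetrically.

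The naive product of the last bound with the number of edge-arcs, $O((3m_i-6)(2m_j-5))$, is far too weak, and closing the gap to $(2m_i-5)(2m_j-5)+1$ is the main obstacle. The natural route is a global double count of the incidences $(\alpha,\psi)$ where $\alpha$ is an edge-arc of $G(P_j)$ passing through a face $\psi$ of $G(P_i)$: grouping these along the arcs of $G(P_j)$ expresses $\sum_\alpha(\#\text{faces visited})$, and Euler's formula applied to the overlay itself turns $X_{ij}$ into this quantity minus $E(G(P_j))$; grouping them instead around the faces of $G(P_i)$ \emph{and} around the faces of $G(P_j)$ lets one bound, using convexity of both maps, how many arcs of one can invade a given face of the other in terms of that face's degree, so that after summation and re-substituting $V\le 2m-4$, $E\le 3m-6$ the excess cross-terms cancel. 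One must then (a)~account for a single residual boundary crossing, producing the extra $+1$, and (b)~verify that no choice of the \emph{internal type} of $P_i,P_j$ (how far from simple they are, i.e.\ the vertex/edge trade-off) beats this, which is exactly the point at which the above substitutions are tuned so that the maximum of the resulting expression in $V(P_i),V(P_j)$ is attained at the simple case $V=2m-4$. I expect this optimization-over-polytope-type step, together with pinning down the $+1$, to be the delicate part.

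Finally, for tightness I would exhibit, for each $m\ge 4$, a polytope $P_m$ with exactly $m$ facets whose Gaussian map has a grid-friendly structure: a prism-like, slightly sheared polytope with two ``polar'' facets and $m-2$ nearly-equatorial facets whose edges fall into only a handful of direction classes, so that $G(P_m)$ is one equatorial chain of arcs together with families of almost-parallel meridian-like arcs running between the two poles. Placing $k$ such maps $G(P_{m_1}),\ldots,G(P_{m_k})$ with generic relative rotations about the polar axis (roughly $90^\circ$ apart for a single pair) makes the meridian families of any two of them cross transversally in a grid, producing exactly $(2m_i-5)(2m_j-5)+1$ crossings between $G(P_{m_i})$ and $G(P_{m_j})$; a direct count of the grid cells together with Euler's formula on the overlay certifies that $M=P_{m_1}\oplus\cdots\oplus P_{m_k}$ has exactly $\sum_i m_i + \sum_{i<j}(2m_i-5)(2m_j-5)+\binom{k}{2}$ facets. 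One checks along the way that the prescribed facet planes can be perturbed to be in strictly convex position so that each $P_m$ is a genuine polytope; the summands and the resulting sum are the ones drawn in Figure~\ref{fig:summands-max-ms} and Figure~\ref{fig:max11}.
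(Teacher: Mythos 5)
Your dual-space framework is the paper's, and your reduction of the $k$-summand upper bound to the pairwise estimate $X_{ij}\le(2m_i-5)(2m_j-5)+1$ is valid and in fact cleaner than the paper's route (the paper proves a separate lemma bounding the number of faces of the overlay of $k$ Gaussian maps via a ``western-most corner'' count and reruns the Euler computation for general $k$; simply summing the pairwise crossing bounds over all $\binom{k}{2}$ pairs of arc families gives the same expression). The general-position perturbation step is also fine. However, two essential pieces are missing or wrong. First, the pairwise bound itself is never proved: you set up the identity $X_{ij}=\sum_\alpha(\#\text{faces visited})-e(G(P_j))$ and then explicitly defer ``the delicate part.'' That part is the theorem. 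The paper closes it by (a) a labeling argument using spherical convexity showing the overlay of $G(P_i)$ and $G(P_j)$ has at most $f_if_j$ faces; (b) the degree count $2e_1+2e_2+4v_x=2e$ combined with Euler's formula on the overlay, giving $v_x\le f_1f_2+v_1+v_2-2-e_1-e_2$; and (c) writing $e_i=3v_i-6-\ell_i$, $f_i=2v_i-4-\ell_i$ and using connectivity of $G(P_i)$ (so $e_i\ge v_i-1$, i.e.\ $\ell_i\le 2v_i-5$) to show the correction term is nonpositive, which yields $v_x\le(2v_1-5)(2v_2-5)+1$. Note the ``$+1$'' is not a residual boundary crossing to be accounted for separately; it falls out of the algebra in (c). Your proposed face-by-face double count is a plan, not a proof, and it omits precisely the trade-off between the vertex, edge and face counts of each summand that step (c) settles.

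Second, the tightness construction as you describe it would fail. The extremal summand is not a prism-like body with $m-2$ nearly-equatorial facets: it is a flat slab \emph{all} of whose facet normals cluster at the two poles (two facets facing nearly straight up, $m-2$ facing nearly straight down), so that $2m-5$ of the dual arcs each have length almost $\pi$ and all sweep through the same hemisphere $\{y\le0\}$. With facet normals near the equator the pole-to-equator arcs have length about $\pi/2$ and most pairs of arcs from the two maps never meet. Moreover, the relative rotation must be about an axis \emph{perpendicular} to the polar axis (the $Y$ axis in the paper), not about the polar axis: two near-meridian arcs running from the north cap to the south cap, whose supporting great circles intersect only near the poles, do not cross at all, because each arc stops just short of the poles. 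Rotating one map $90^{\circ}$ about $Y$ turns its long arcs into arcs running from near $(1,0,0)$ to near $(-1,0,0)$ inside the same hemisphere, and only then is every pair of long arcs forced to cross exactly once, producing the $(2m_i-5)(2m_j-5)$ grid plus the single extra crossing of the two remaining edges at $(0,1,0)$.
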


The rest of this chapter is organized as follows. The upper bound on
the number of facets of Minkowski sums for the special case of two
polytopes in $\mathbb{R}^3$ is derived in Section~\ref{sec:mscy:upper}.
In Section~\ref{sec:mscy:lower} we describe how to construct two
polytopes, the number of facets of which is given, such that the
number of facets of their Minkowski sum is identical to the bound
derived in Section~\ref{sec:mscy:upper}. The bounds for the general
case of $k$ polytopes is proved in Section~\ref{sec:mscy:many}.
Information about the polyhedra models and the interactive program that 
computes their Minkowski sums and visualizes them, used to verify the
results and generate the figures in this chapter are provided in the Appendix.

\section{The Upper Bound for $k=2$}
\label{sec:mscy:upper}
The \Index{overlay} (see Section~\ref{ssec:aos:facilities:overlay} for the
exact definition) of the Gaussian maps\index{Gaussian map} (see
Section~\ref{sec:mscn:gauss_map} for the exact definition) of two
polytopes $P$ and $Q$ respectively is the Gaussian map of the
Minkowski sum of $P$ and $Q$; see Section~\ref{ssec:mscn:sgm:mink_sum}
for a detailed explanation.

The number of facets of the Minkowski sum $M$ of two polytopes $P$ and
$Q$ with $m$ and $n$ facets respectively is equal to the number of
vertices of the Gaussian map $G(M)$ of $M$. A vertex in $G(M)$ is
either due to a vertex in the Gaussian map of $P$, or due to
a vertex in the Gaussian map of $Q$, or due to an intersection of 
exactly two edges, one of the Gaussian map of $P$ and the other of
the Gaussian map of $Q$. Thus, the number of facets of $M$ cannot
exceed $m + n + g(M)$, where $g(M)$ is the number of intersections of
edges of $G(P)$ with edges of $G(Q)$ in $G(M)$.\footnote{The number of
facets is strictly equal to the given  expression, only when no
degeneracies occur.}

\begin{observation}
\label{obser:max-gauss}
The maximum exact number of edges in a Gaussian map $G(P)$ of a 
polytope $P$ with $m$ facets is $3m - 6$. The maximum exact number of
faces is $2m - 4$. Both maxima occur at the same Gaussian maps.
\end{observation}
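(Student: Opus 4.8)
The plan is to exploit the combinatorial duality between $G(P)$ and $P$ stated in the excerpt, together with Euler's formula for cell complexes on $\spheretwo$. Recall that under $G$ each facet of $P$ maps to a vertex of $G(P)$, each edge of $P$ to an edge of $G(P)$, and each vertex $v$ of $P$ to a face $G(v)$ of $G(P)$ (the spherical polar of the link of $v$). So, writing $V_P$ and $E_P$ for the numbers of vertices and edges of $P$ and $m$ for its number of facets, the cell complex $G(P)$ on $\spheretwo$ has $m$ vertices, $E_P$ edges, and $V_P$ faces. First I would apply Euler's relation $m - E_P + V_P = 2$, i.e. $V_P = E_P - m + 2$.

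Next I would bound $E_P$ from above by a double-counting (edge–face incidence) argument. Since $P$ is a $3$-polytope, the link of every vertex of $P$ is a convex polygon with at least three vertices, so every face $G(v)$ of $G(P)$ is bounded by at least three edges; as each edge of $G(P)$ lies on the boundary of exactly two faces, summing incidences over faces gives $3 V_P \le 2 E_P$. Substituting $V_P = E_P - m + 2$ yields $3(E_P - m + 2) \le 2 E_P$, hence $E_P \le 3m - 6$, and then $V_P = E_P - m + 2 \le 2m - 4$. This establishes both upper bounds.

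For tightness and the coincidence of the two maxima, I would observe that the inequality $3 V_P \le 2 E_P$ is an equality precisely when every face of $G(P)$ is a triangle, equivalently when every vertex of $P$ has degree exactly three, i.e. $P$ is simple; in that case $E_P = 3m - 6$ and $V_P = 2m - 4$ hold simultaneously. Conversely, since $V_P = E_P - m + 2$, each of the equalities $E_P = 3m - 6$ and $V_P = 2m - 4$ forces the other. Hence the Gaussian maps attaining the maximum number of edges are exactly those attaining the maximum number of faces — namely those arising from simple polytopes — and such polytopes with $m$ facets exist for every $m \ge 4$ (e.g. polars of simplicial polytopes, or generic polytopes), so the stated values are indeed the exact maxima.

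The only delicate point — a minor one — is to make sure the duality dictionary is used in the correct direction (facets of $P$ correspond to \emph{vertices} of $G(P)$) and that $G(P)$ is a genuine polygonal cell complex on $\spheretwo$, so that Euler's formula and the "every edge bounds exactly two faces" statement apply without degenerate exceptions; this follows from $G(P)$ being combinatorially the boundary complex of the polar dual $3$-polytope $P^{*}$.
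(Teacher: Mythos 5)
Your proof is correct and is essentially the paper's argument: the paper disposes of this observation with the single remark that it follows from ``a simple application of Euler's formula for planar graphs to the Gaussian map $G(P)$'', and your write-up is precisely that application spelled out (duality giving $m$ vertices, $E_P$ edges, $V_P$ faces, then $3V_P \le 2E_P$ and $m - E_P + V_P = 2$, with simultaneous equality exactly for simple polytopes). No gap; you have merely supplied the details the paper leaves implicit.
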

The above can be obtained by a simple application of Euler's formula
for planar graphs to the Gaussian map $G(P)$. It can be used to trivially
bound the exact maximum number of facets of the Minkowski sum of two
polytopes defined as $f(m,n) = \max\{f(P \oplus Q)\,|\,f(P) = m, f(Q) = n\}$,
where $f(P)$ is the number of facets of a polytope $P$.
First, we can use the bound on the number of edges to obtain: 
$f(m,n) \leq m + n + (3m-6) \cdot (3n-6) = 9mn - 17m - 17n + 36$. Better
yet, we can plug the bound on the number of dual faces, which is the number 
of primal vertices, in the expression introduced by Fukuda and 
Weibel, see above, to obtain:
$f(m,n) \leq (2m - 4) \cdot (2n - 4) + (2m - 4) + (2n - 4) - 6 = 
4mn - 6m -6n + 2$.
Still, we can improve the bound even further, but first we need to bound
the number of faces in $G(M)$.

\begin{lemma}
\label{lemma:overlay-single}
Let $G_1$ and $G_2$ be two Gaussian maps of convex polytopes, and let $G$
be their overlay. Let $f_1$, $f_2$, and $f$ denote the number of faces of
$G_1$, $G_2$, and $G$, respectively. Then, $f \le f_1 \cdot f_2$.
\end{lemma}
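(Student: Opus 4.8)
The plan is to use the fact that the overlay $G$ is a common refinement of the two cell complexes $G_1$ and $G_2$ on the sphere, so every face of $G$ is a maximal connected component of $F_1 \cap F_2$ for some face $F_1$ of $G_1$ and some face $F_2$ of $G_2$. This gives a natural map from the faces of $G$ to the set of ordered pairs $(F_1, F_2)$, and the bound $f \le f_1 \cdot f_2$ will follow immediately once I show that this map is injective, i.e. that for each pair $(F_1, F_2)$ there is \emph{at most one} face of $G$ lying in $F_1 \cap F_2$. The whole content of the lemma is therefore the claim that $F_1 \cap F_2$ is connected whenever $F_1, F_2$ are faces of Gaussian maps of convex polytopes.

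First I would recall the key geometric property: a face of a Gaussian map $G(P)$ of a convex polytope $P$ is the set of outward unit normals to support planes at a fixed vertex $v$ of $P$ (the "spherical polar of the link of $v$"), and such a face is a \emph{spherically convex} region of $\mathbb{S}^2$ — it is the intersection of $\mathbb{S}^2$ with the normal cone at $v$, which is a convex polyhedral cone. (For faces that happen to be a full hemisphere or the whole sphere one treats the degenerate cases separately, but the convex-cone description still applies.) The crucial point is then that the intersection of two spherically convex sets on $\mathbb{S}^2$ — equivalently, the intersection of two convex cones in $\mathbb{R}^3$ — is again convex, hence connected. Thus $F_1 \cap F_2$ is connected, so it consists of a single face of the overlay $G$ (or is empty), which establishes injectivity of the map described above.

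From here the counting is immediate: $f = \#\{\text{faces of } G\} \le \#\{\text{pairs } (F_1,F_2) : F_1 \cap F_2 \neq \emptyset\} \le f_1 \cdot f_2$. I would also note the harmless boundary subtlety that a face of $G$ might be sandwiched inside a vertex or an edge of one of the $G_i$ rather than a $2$-face, but since the statement only counts $2$-dimensional faces $f$ and each such face of $G$ still lies in a unique $2$-face of each $G_i$ (its relative interior is disjoint from the $1$-skeleton of each $G_i$), the same argument applies verbatim.

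The main obstacle is making the "spherically convex, hence connected intersection" step fully rigorous in the presence of antipodal or great-circle degeneracies: a normal cone can be a halfspace (face $=$ hemisphere) or, for a segment-like "polytope," even all of $\mathbb{S}^2$, and two hemispheres can intersect in a lune whose convexity needs a slightly careful statement (a lune is still connected and geodesically convex as long as it is not the whole sphere). I would handle this by arguing at the level of cones in $\mathbb{R}^3$: each face of a Gaussian map is $\mathbb{S}^2 \cap C$ for a convex cone $C$, the intersection of two convex cones is a convex cone $C_1 \cap C_2$, and $\mathbb{S}^2 \cap (C_1 \cap C_2)$ is the radial projection of a convex set and therefore connected. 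This sidesteps any spherical-geometry case analysis and yields the lemma cleanly.
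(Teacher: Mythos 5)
Your proposal is correct and follows essentially the same route as the paper: label each face of the overlay by its pair of originating faces and use the spherical convexity of Gaussian-map faces to show the labeling is injective. The only cosmetic difference is that the paper establishes injectivity by picking points in two allegedly distinct faces with the same label and observing that the geodesic segment joining them cannot be split by any edge, whereas you argue via connectedness of the intersection of the corresponding normal cones in $\rrr$; both hinge on the same convexity fact.
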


Each face in the overlay is an intersection of a face of each
map. Since these faces are spherically convex (and smaller than
hemispheres), the intersection is also spherically convex (and thus
connected). This lemma is similar to the one where convex planar maps
replace the Gaussian maps. Nevertheless, we provide a formal proof directly
applied to the spherical case.
\begin{proof}
  We label each face in $G$ by a pair of indices of the originating
  overlaid faces in $G_1$ and $G_2$ respectively, and argue that no
  two faces in $G$ can have the same label. Assume to the contrary that
  there exist two faces $h_a$ and $h_b$ in $G$ that have the same label,
  say $\langle i,j \rangle$. That is, the faces $h_1^i$ in $G_1$ and
  $h_2^j$ in $G_2$ induce the two distinct faces $h_a$ and $h_b$. Pick
  two points $a \in h_a$ and $b \in h_b$. There must be a geodesic
  segment between $a$ and $b$ that is entirely contained in $h_1^i$ and
  also in $h_2^j$, as both maps are spherically convex. This implies
  that none of the edges in $G_1$ and $G_2$ split this geodesic segment,
  contradicting the fact that they reside in two different faces of $G$.
\end{proof}

We are ready to tackle the upper bound of
Theorem~\ref{the:theorem} for the special case $k=2$, that is, prove that
the number of facets of the Minkowski sum $P \oplus Q$ of two polytopes
$P$ and $Q$ with $m$ and $n$ facets respectively cannot exceed
$4mn - 9m - 9n + 26$; see Page~\pageref{the:theorem}.
\begin{proof}
Let $v_1, e_1, f_1$ and $v_2, e_2, f_2$ denote the number of vertices,
edges, and faces of $G(P)$ and $G(Q)$, respectively. 
The number of vertices, edges, and faces of $G(M)$ is denoted as $v$, $e$,
and $f$, respectively. Assume that $P$ and $Q$ are two polytopes, such that the
number of facets of their Minkowski sum is maximal.
Recall that the number of facets of a polytope is equal to the number of
vertices of its Gaussian map. Thus, we have $v_1 = m$, $v_2 = n$,
and $v = f(m,n)$.
First, we need to show that vertices of $G(P)$, vertices
of $G(Q)$, and intersections between edges of $G(P)$ and edges of
$G(Q)$ do not coincide. Assume to the contrary that some do. Then, one
of the polytopes $P$ or $Q$ or both can be slightly rotated to escape
this degeneracy, but this would increase the number of vertices
$v = f(m,n)$, contradicting the fact that $f(m,n)$ is maximal.
Therefore, the number of vertices $v$ is exactly equal to
$v_1 + v_2 + v_x$, where $v_x$ denotes the number of intersections of
edges of $G(P)$ and edges of $G(Q)$ in $G(M)$.
Counting the degrees of all vertices in $G(M)$ implies that
$2e_1 + 2e_2 + 4v_x = 2e$. Using Euler's formula, we get
$e_1 + e_2 + 2v_x = f + v_1 + v_2 + v_x - 2$. Applying
Lemma~\ref{lemma:overlay-single}, we can bound $v_x$ from above
$v_x \leq f_1 f_2 + v_1 + v_2 - 2 - e_1 - e_2$.

Observation~\ref{obser:max-gauss} sets an upper bound on the number of
edges $e_1$. Thus, $e_1$ can be expressed in terms of $\ell_1$, a
non-negative integer, as follows: $e_1 = 3v_1 - 6 - \ell_1$. Applying
Euler's formula, the number of facets can be expressed in terms of
$\ell_1$ as well: $f_1 = e_1 + 2 - v_1 = 2v_1 - 4 - \ell_1$.
Similarly, we have $e_2 = 3v_2 - 6 - \ell_2$ and
$f_2 = 2v_2 - 4 - \ell_2$ for some non-negative integer $\ell_2$.
\begin{align}
v_x & \leq (2v_1 - 4 - \ell_1)(2v_2 - 4 - \ell_2) + v_1 + v_2 - 2 -
  (3v_1 - 6 - \ell_1) - (3v_2 - 6 - \ell_2)\notag\\
    & \leq 4v_1 v_2 - 10v_1 - 10v_2 + 26 + h(\ell_1,\ell_2)\label{vx}\ ,
\end{align}
where $h(\ell_1,\ell_2) = \ell_1 \ell_2 + 5 \ell_1 + 5 \ell_2 - 2 v_1 \ell_2 - 2 v_2 \ell_1$.

$G(P)$ consists of a single connected component. Therefore, the number
of edges $e_1$ must be at least $v_1 - 1$. This is used to obtain an
upper bound on $\ell_1$ as follows:
$v_1 - 1 \leq e_1 = 3 v_1 - 6 - \ell_1$, which implies
$\ell_1 \leq 2 v_1 - 5$, and similarly $\ell_2 \leq 2 v_2 - 5$. Thus, we have:
\begin{align*}
h(\ell_1,\ell_2) & = \ell_1 \ell_2 + 5 \ell_1 + 5 \ell_2 - 2 v_1 \ell_2 - 2 v_2 \ell_1\\
 & = \ell_1(\frac{\ell_2}{2} - (2v_2 - 5)) + \ell_2(\frac{\ell_1}{2} - (2v_1 - 5))
 \leq 0\ .
\end{align*}

From Equation \eqref{vx} we get that
$v_x \leq 4v_1 v_2 - 10v_1 - 10v_2 + 26$, and since
$f(m,n) = v_1 + v_2 + v_x$, we conclude that
$f(m,n) \leq 4v_1 v_2 - 9v_1 - 9v_2 + 26$. The maximum number of facets can
be reached when $h(\ell_1,\ell_2)$ vanishes. This occurs when
$\ell_1 = \ell_2 = 0$. That is, when the number of edges of $G(P)$ and
$G(Q)$ is maximal. This concludes the proof of the upper bound of
Theorem~\ref{the:theorem} for the special case $k=2$.
\end{proof}
\begin{corollary}
\label{col:maximal}
The maximum number of facets can be attained only when the number of edges
of each of $P$ and $Q$ is maximal for the given number of facets.
\end{corollary}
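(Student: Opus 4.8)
The plan is to read the corollary directly off the inequality chain already assembled in the proof of the upper bound for $k=2$, so essentially no new ingredient is required. Write $m$ and $n$ for the numbers of facets of $P$ and $Q$, and recall that the numbers of vertices of the Gaussian maps $G(P)$ and $G(Q)$ are $v_1=m$ and $v_2=n$. Put $\ell_1=(3v_1-6)-e(G(P))\ge 0$ and $\ell_2=(3v_2-6)-e(G(Q))\ge 0$, the amounts by which the edge counts of $G(P)$ and $G(Q)$ fall short of the maximum permitted by Observation~\ref{obser:max-gauss}. Connectivity of the Gaussian maps gives $e(G(P))\ge v_1-1$, hence $\ell_1\le 2v_1-5$, and likewise $\ell_2\le 2v_2-5$; and since a polytope has at least four facets, $2v_i-5\ge 3>0$. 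For a pair $(P,Q)$ whose Minkowski sum has the maximal number $f(m,n)$ of facets, the proof of the $k=2$ bound established
\[
f(m,n)=v_1+v_2+v_x\le 4mn-9m-9n+26+h(\ell_1,\ell_2),
\]
where $v_x$ is the number of edge--edge crossings in the overlay $G(P\oplus Q)$ and $h(\ell_1,\ell_2)=\ell_1\ell_2+5\ell_1+5\ell_2-2m\ell_2-2n\ell_1\le 0$ on the range just described.

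First I would recall why the left equality holds for a maximizer: by the rotation argument in that proof, a pair attaining $f(m,n)$ cannot have a coincidence among the vertices of $G(P)$, the vertices of $G(Q)$ and the crossings of their edges, because escaping such a degeneracy by a small rotation would strictly increase the number of vertices of the overlay; hence the overlay has exactly $v_1+v_2+v_x$ vertices. Next I would invoke the lower-bound construction of Section~\ref{sec:mscy:lower}, which exhibits summands realizing the value $4mn-9m-9n+26$, so $f(m,n)=4mn-9m-9n+26$ exactly. Feeding this back into the displayed inequality yields $0\le h(\ell_1,\ell_2)$, and combined with $h(\ell_1,\ell_2)\le 0$ this forces $h(\ell_1,\ell_2)=0$.

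The single genuine computation is then to see that $h(\ell_1,\ell_2)=0$ can occur only at $\ell_1=\ell_2=0$. Rewrite $h(\ell_1,\ell_2)=\ell_1\bigl(\tfrac{\ell_2}{2}-(2n-5)\bigr)+\ell_2\bigl(\tfrac{\ell_1}{2}-(2m-5)\bigr)$; since $\ell_2\le 2n-5$ and $\ell_1\le 2m-5$, each parenthesised factor is at most $-\tfrac{1}{2}(2n-5)<0$, respectively $-\tfrac{1}{2}(2m-5)<0$. Consequently $h(\ell_1,\ell_2)<0$ whenever $\ell_1>0$ or $\ell_2>0$, so indeed $h(\ell_1,\ell_2)=0$ implies $\ell_1=\ell_2=0$. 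Unwinding the definitions, this says $e(G(P))=3v_1-6$ and $e(G(Q))=3v_2-6$; since the edges of $G(P)$ are in bijection with the edges of $P$ and similarly for $Q$, this is exactly the assertion that each of $P$ and $Q$ carries the maximal number of edges for a polytope with the given number of facets.

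I expect the only delicate point to be the logical plumbing rather than any estimate: one must be sure that ``the maximum is attained'' genuinely licenses collapsing the chain $f(m,n)\le(\text{RHS})+h(\ell_1,\ell_2)$ with $h(\ell_1,\ell_2)\le 0$ into a chain of equalities, and this is precisely where the matching lower bound of Section~\ref{sec:mscy:lower} (together with the non-degeneracy of maximizers) carries the argument; the sign analysis of $h$ is routine.
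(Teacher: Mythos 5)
Your proposal is correct and follows the paper's own route: the corollary is read off the function $h(\ell_1,\ell_2)$ from the $k=2$ upper-bound proof, and the sign analysis showing $h=0$ forces $\ell_1=\ell_2=0$ is exactly the intended argument. If anything, your write-up is slightly more rigorous than the paper's one-line justification, since you make explicit both that $h$ is \emph{strictly} negative unless $\ell_1=\ell_2=0$ (using $2v_i-5>0$) and that the matching lower bound of Section~\ref{sec:mscy:lower} is what licenses concluding $h=0$ at an actual maximizer rather than merely at a maximizer of the upper bound.
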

\section{The Lower Bound for $k=2$}
\label{sec:mscy:lower}
Given two integers $m \geq 4$ and $n \geq 4$, we describe how to
construct two polytopes in $\mathbb{R}^3$ with $m$ and $n$ facets
respectively, such that the number of facets of their Minkowski sum is
exactly $4mn - 9m - 9n + 26$, establishing the lower bound of
Theorem~\ref{the:theorem} for the special case $k=2$. More precisely,
given $i$, we describe how to construct a skeleton of a polytope $P_i$
with $i$ facets, $3i - 6$ edges, and $2i-4$ vertices, and prove that
the number of facets of the Minkowski sum of $P_m$ and $P_n$, properly
adjusted and oriented, is exactly $4mn - 9m - 9n + 26$. As in the previous
sections we mainly operate in the dual space of Gaussian maps. However,
the construction of the desired Gaussian maps described below is an
involved task, since not every \Index{arrangement} of arcs of great
circles embedded on the unit sphere, the faces of which are convex and the
edges of which are strictly less than $\pi$ long constitutes a valid
Gaussian map.

\begin{wrapfigure}[8]{r}{3.5cm}
  \vspace{-15pt}
  \epsfig{figure=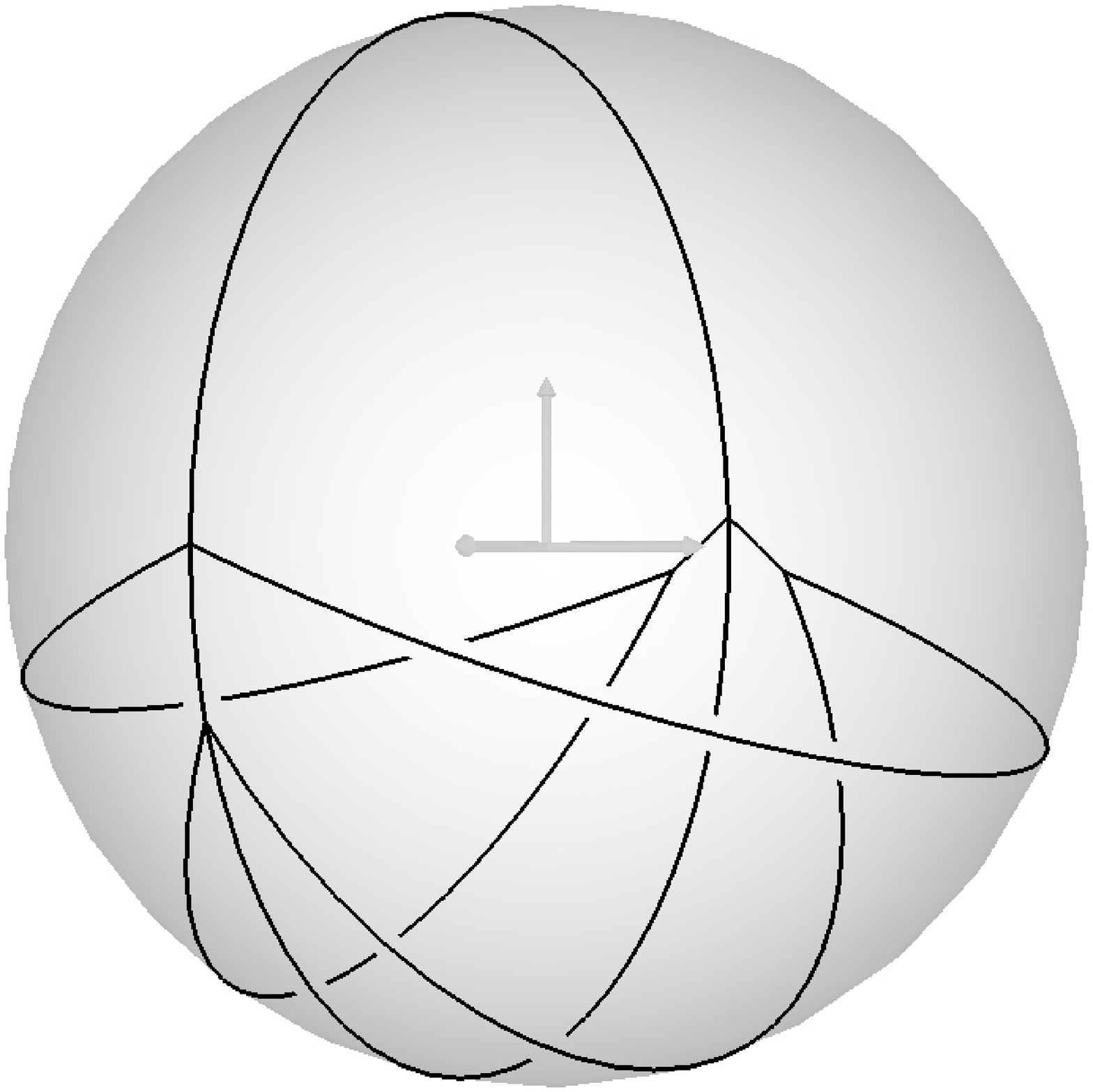,width=3.5cm,silent=}
  \vspace{-24pt}
  \vspace{-3.5cm}
  \pspicture[](0,0)(3.5,3.5)
  \psset{unit=1cm,linewidth=1pt}
  \rput{0}(2.31,1.55){\qdisk(0,0){2pt}\uput[ur]{0}(0,0){$u$}}
  \rput{0}(0.65,1.46){\qdisk(0,0){2pt}\uput[ul]{0}(0,0){$v$}}
  \rput{0}(0.71,0.89){\qdisk(0,0){2pt}\uput[-30]{0}(0,0){$w$}}
  \uput[0]{0}(1.49,2.20){$Y$}
  \pscircle(2.11,1.35){2pt}
  \pscircle(2.50,1.36){2pt}
  \endpspicture
\end{wrapfigure}
We defer the treatment of the special case $i=4$ to the sequel,
and start with the general case $i \geq 5$. The figure to the right
depicts the Gaussian map of $P_5$. We use the subscript letter $i$ in 
all notations $X_i$ to identify some object $X$ with the polytope $P_i$.
For example, we give the Gaussian map $G(P_i)$ of $P_i$ a shorter 
notation $G_i$, but in this paragraph we omit the subscript letter in 
all notations for clarity. First, we examine the structure of the 
Gaussian map $G$ of $P$ to better understand the structure of $P$. Let 
$V$ and $E$ denote the set of vertices and edges of $G$, respectively.
Recall that the number of vertices, edges, and faces of $G$ is $i$, $3i-6$,
and $2i-4$, respectively. The unit sphere, where $G$ is embedded on, is
divided by the plane $y = 0$ into two hemispheres
$H^- \subset \{(x,y,z)\,|\, y \leq 0\}$ and
$H^+ \subset \{(x,y,z)\,|\, y > 0\}$. 
Three vertices, namely $u$, $v$, and $w$, lie in the plane $x = 0$. 
$u$ is located very close to the pole $(0,0,-1)$. It is the only vertex 
(out of the $i$ vertices) that lies in $H^+$. $v$ is located exactly at
the opposite pole $(0,0,1)$, and $w$ lies in $H^-$ very close to $v$. 
None of the remaining $i-3$ vertices in $V \setminus \{u,v,w\}$ lie in 
the plane $x = 0$; they are all concentrated near the pole $(0,0,-1)$
and lie in $H^-$. The edge $\overline{uv}$, which is contained in the 
plane $x = 0$, is the only edge whose interior is entirely contained
in $H^+$. Every vertex in $V \setminus \{u,v,w\}$ is connected by two
edges to $v$ and $w$, respectively. These edges together with the edge
$\overline{uw}$, contained in the plane $x = 0$, form a set of $2i - 5$
edges, denoted as $E' = E \setminus \{\overline{uv}\}$. The length of
each of the edges in $E'$ is almost $\pi$, due to the near proximity of
$u$, $v$, and $w$ to the respective poles.

It is easy to verify that if the polytope $P$ is not degenerate,
namely, its \Index{affine hull} is 3-space, then any edge of $G(P)$ is
strictly less than $\pi$ long. Bearing this in mind, the main
difficulty in arriving at a tight-bound construction is forcing
sufficient edges of the set $E'$ of the Gaussian map of one
polytope to intersect sufficient edges of the set $E'$ of the
Gaussian map of the other polytope. The remaining pair of edges,
one from each Gaussian map, contributes a single intersection to the
total count. As shown below, this is the best one can do in terms of
intersections.

\begin{wrapfigure}[13]{l}{3.7cm}
  \vspace{-15pt}
  \epsfig{figure=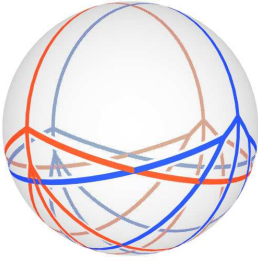,width=3.5cm,silent=}
  \caption[The overlay of $G_5$ and $G_5$ rotated about the $Y$ axis]%
          {\capStyle{The overlay of $G_5$ and $G'_5$, where $G'_5$ is
           $G_5$ rotated $90^{\circ}$ about the $Y$ axis.}}
  \label{fig:lb-5-5}
  \vspace{-10pt}
\end{wrapfigure}
The number of facets of the Minkowski sum of $P_m$ and $P_n$ is
maximal, when the number of vertices in the overlay of $G_m$ and $G_n$
is maximal. This occurs, for example, when one of them is rotated
$90^{\circ}$ about the $Y$ axis, as depicted on the left for the case of
$m = n = 5$. In this configuration, all the $2m - 5$ edges in $E'_m$
intersect all the $2n - 5$ edges in $E'_n$. All intersections occur in
$H^-$. In addition, the edge $\overline{uv}_m$ intersects the edge
$\overline{uv}_n$. The intersection point lies in $H^+$ exactly at the
pole $(0,1,0)$. Counting all these intersections results with  
$(2m - 5)(2n - 5) + 1 = 4mn - 10m -10n +26$. Adding the original
vertices of $G(P_m)$ and $G(P_n)$, yields the desired result.

Next, we explain how $P_i, i \geq 5$ is constructed to match the
description of $G_i$ above. The construction of $P_i$ is guided by a
cylinder.
All the vertices of $P_i$ lie on the boundary of a cylinder the
axis of which coincides with the $Z$ axis.
We start with the case $i=5$, and show how to generalize the
construction for $i > 5$. The special case $i = 4$ is explained last.

\subsection{Constructing $P_5$}
\label{ssec:mscy:lower:p5}
Figure~\ref{fig:lb-5} shows various views of $P_5$. Recall that $P_5$
has 6 vertices, denoted as $v_0$, $v_1,\ldots,v_5$, and 9 edges. We
omit the subscript digit $5$ in all the notations through the rest of
this subsection for clarity. Let $\overline{v_1 v_2 \ldots v_n}$ denote
the face defined by the sequence of vertices $v_1,v_2,\ldots,v_n$ on the
face boundary. The projection of all vertices onto the plane $z = 0$ lie
on the unit circle. As a matter of fact, the entire face
$f^v = \overline{v_0 v_1 v_2 v_3}$ lies in the plane $z=0$. It is mapped
under $G$ to the vertex $v = G(f^v)$. Similarly, the faces
$f^u = \overline{v_5 v_4 v_2 v_1}$ and
$f^w = \overline{v_3 v_4 v_5 v_0}$ are mapped under $G$ to the
vertices $u = G(f^u)$ and $w = G(f^w)$, respectively. Consider
the projection of the vertices onto the plane $z = 0$ best seen in
Figure~\ref{fig:lb-5}(b). Once the projection $v'_5$ of $v_5$ is
determined as explained below, $v_0$ is placed exactly on the bisector
of $\angle{v'_5 o v_1}$. The vertices $v_4$, $v_3$, and $v_2$
are the reflection of the vertices $v_5$, $v_0$, and $v_1$
respectively through the plane $x = 0$.

Two parameters govern the exact placement of $v_5$ (and $v_4$). One is
the size of the exterior-dihedral angle at the edge
$\overline{v_0 v_3}$, denoted as  $\alpha$, that is, the length of the
geodesic-segment that is the mapping of the edge $\overline{vw}$ of
$G$. This angle is best seen in Figure~\ref{fig:lb-5}(c). Notice,
that the $Z$ axis is scaled up for clarity, and the angle in practice
is much smaller. The other parameter is the size of the angle
$\beta = \angle{v'_4 o v'_5}$, where $v'_4$ and $v'_5$ are the
projections of $v_4$ and $v_5$ respectively onto the plane $z = 0$.
This is best seen in Figures~\ref{fig:lb-5}(b) and (e). Given $m$
and $n$, these angles for each of $P_m$ and $P_n$ depend on both $m$
and $n$. For large values of $m$ and $n$ the values of $\alpha$ and
$\beta$ should be small. For example, setting
$\alpha = \beta = 10^{\circ}$ is sufficient for the case $m = n = 5$
depicted in Figure~\ref{fig:lb-5-5}. The actual setting is discussed
below after the description of the general case $i > 5$.

\begin{figure*}[t]
  \centerline{
  \begin{tabular}{c|c|c}
  \multirow{1}*[115pt]{\epsfig{figure=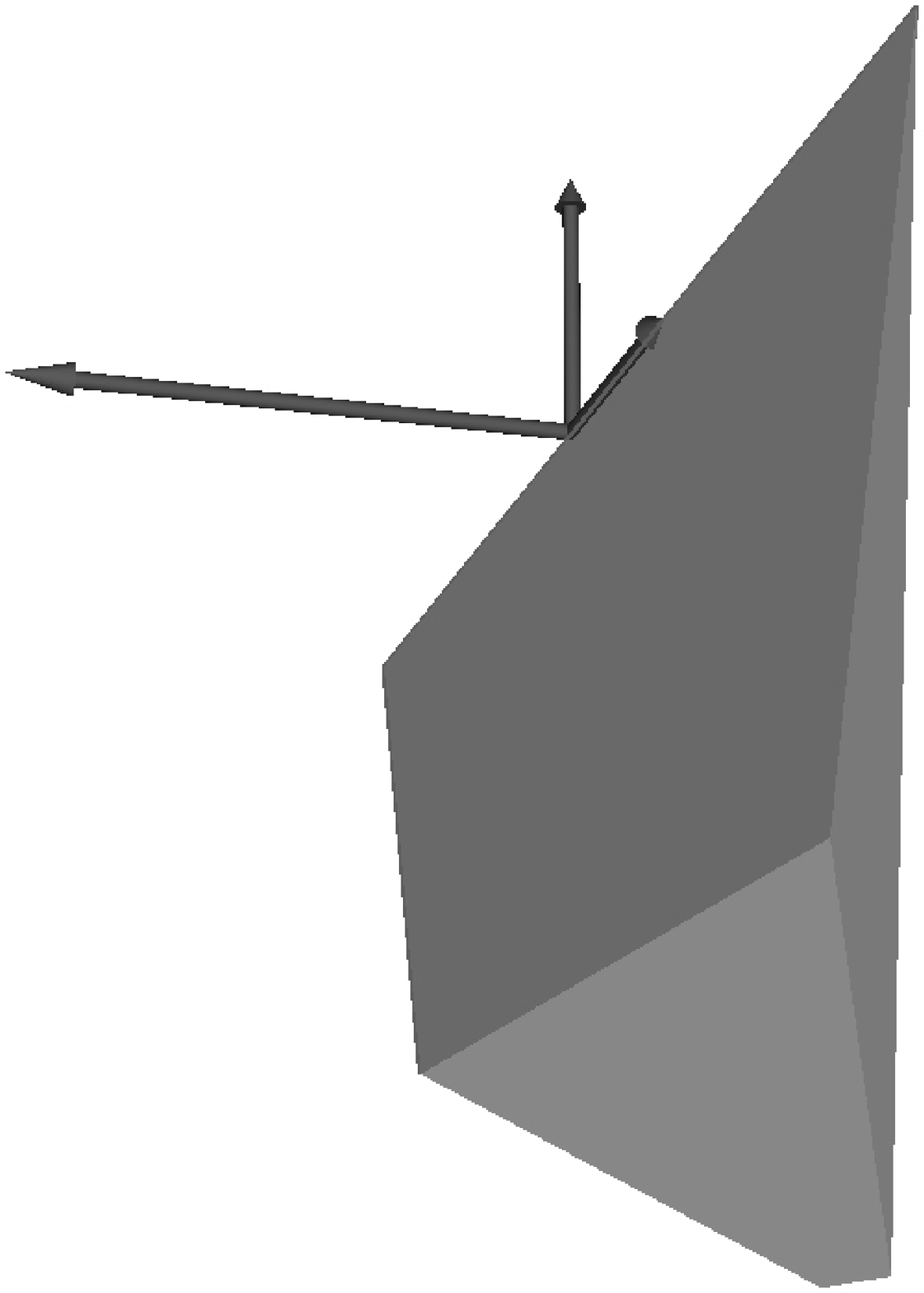,width=4cm,silent=}} &
  \pspicture[](-2.5,-2.5)(2.5,2.5)
  \psset{unit=1cm,linewidth=1pt,framesep=1.5pt}
  \pscircle[linewidth=0.5pt](0,0){2}
  \rput{320}(0,0){\cnode*(2,0){3pt}{0}\uput[0]{-320}(2,0){$v_0$}}
  \cnode*(2,0){3pt}{1}\uput[0]{0}(2,0){$v_1$}
  \cnode*(-2,0){3pt}{2}\uput[180]{0}(-2,0){$v_2$}
  \rput{220}(0,0){\cnode*(2,0){3pt}{3}\uput[0]{-220}(2,0){$v_3$}}
  \rput{260}(0,0){\cnode*(2,0){3pt}{4}\uput[0]{-260}(2,0){$v_4$}}
  \rput{280}(0,0){\cnode*(2,0){3pt}{5}\uput[0]{-280}(2,0){$v_5$}}
  \ncline{0}{1}
  \ncline{1}{2}
  \ncline{2}{3}
  \ncline{3}{4}
  \ncline{4}{5}
  \ncline{5}{0}
  \ncline{0}{3}
  \ncline[linestyle=dotted]{2}{4}
  \ncline[linestyle=dotted]{1}{5}
  \my_axis{0}{1}{90}{1}{210}{1.4}
  \pnode(0,0){c}\uput[ur]{0}(0,0){$o$}
  \endpspicture &
  \pspicture[](-2.5,-2.5)(2.5,2.5)
  \psset{unit=1cm,linewidth=1pt,framesep=1.5pt}
  \psline[linewidth=0.5pt](-2,2)(2,2)
  \psline[linewidth=0.5pt](-2,-2)(2,-2)
  \rput{0}(2,0){
  \psellipse[linewidth=0.5pt](0,0)(0.3,2)}
  \cnode*(0,0){3pt}{c}\uput[ur]{0}(0,0){$o$}
  \cnode*(0,-1.285){3pt}{a}
  \cnode*(0.5,-2){3pt}{b}
  \ncline{c}{a}
  \ncline{a}{b}
  \ncline{b}{c}
  \rput{0}(0,-1.285){
    \psarc[linewidth=0.5pt,arcsepB=2pt]{->}{0.65}{-90}{-50}
    \rput{-75}(0,0){\uput[0]{75}(0.2,0){$\alpha$}}
  }
  \psline[linewidth=0.5pt](0,-1.285)(0,-2)
  \my_axis{210}{0.7}{90}{1}{180}{2}
  \endpspicture\\
  {(a)} & {(b)} & {(c)}\\
  \hline
  \multirow{1}*[115pt]{\epsfig{figure=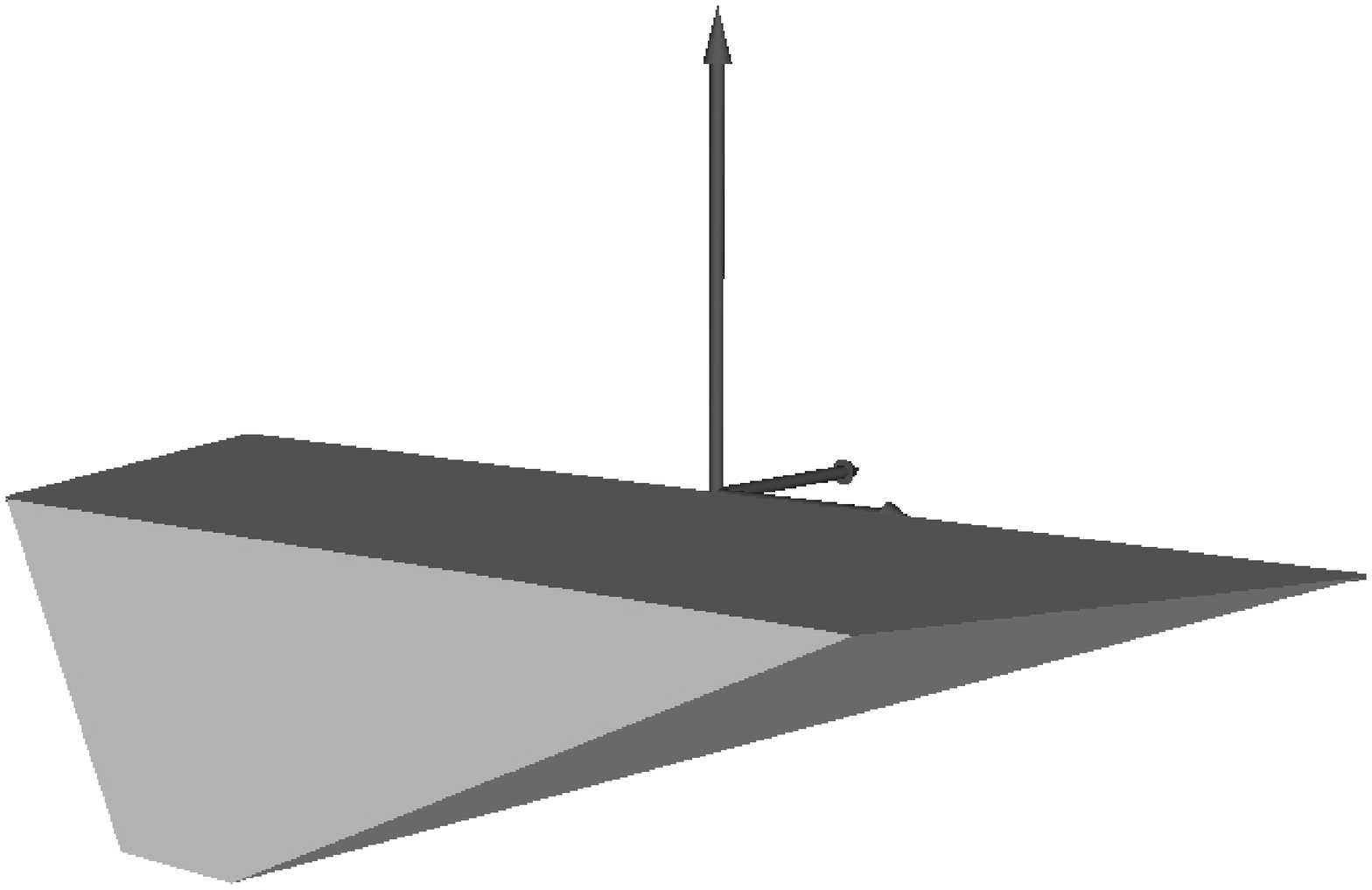,width=4cm,silent=}} &
  \pspicture[](-2.5,-2.5)(2.5,2.5)
  \psset{unit=1cm,linewidth=1pt,framesep=1.5pt}
  \psline[linewidth=0.5pt](2,-2)(2,2)
  \psline[linewidth=0.5pt](-2,-2)(-2,2)
  \rput{0}(0,-2){
  \psellipse[linewidth=0.5pt](0,0)(2,0.3)}
  \pnode(0,0){c}\uput[ul]{0}(0,0){$o$}
  \cnode*(1.532,0){3pt}{0}\uput[u]{0}(1.532,0){$v_0$}
  \cnode*(2,0){3pt}{1}\uput[0]{0}(2,0){$v_1$}
  \cnode*(-2,0){3pt}{2}\uput[180]{0}(-2,0){$v_2$}
  \cnode*(-1.532,0){3pt}{3}\uput[u]{0}(-1.532,0){$v_3$}
  \cnode*(-0.347,-0.5){3pt}{4}\uput[d]{0}(-0.347,-0.5){$v_4$}
  \cnode*(0.347,-0.5){3pt}{5}\uput[d]{0}(0.347,-0.5){$v_5$}
  \ncline{1}{2}
  \ncline{2}{4}
  \ncline{4}{5}
  \ncline{5}{1}
  \ncline{3}{4}
  \ncline{0}{5}
  \my_axis{0}{1}{30}{0.7}{90}{1.9}
  \endpspicture &
  \pspicture[](-2.5,-2.5)(2.5,2.5)
  \psset{unit=1cm,linewidth=1pt,framesep=1.5pt}
  \pscircle[linewidth=0.5pt](0,0){2}
  \rput{320}(0,0){\cnode*(2,0){3pt}{0}\uput[0]{-320}(2,0){$v_3$}}
  \cnode*(2,0){3pt}{1}\uput[0]{0}(2,0){$v_2$}
  \cnode*(-2,0){3pt}{2}\uput[180]{0}(-2,0){$v_1$}
  \rput{220}(0,0){\cnode*(2,0){3pt}{3}\uput[0]{-220}(2,0){$v_0$}}
  \rput{260}(0,0){\cnode*(2,0){3pt}{4}\uput[0]{-260}(2,0){$v_5$}}
  \rput{280}(0,0){\cnode*(2,0){3pt}{5}\uput[0]{-280}(2,0){$v_4$}}
  \ncline{0}{1}
  \ncline{1}{2}
  \ncline{2}{3}
  \ncline{3}{4}
  \ncline{4}{5}
  \ncline{5}{0}
  \ncline{2}{4}
  \ncline{1}{5}
  \ncline[linestyle=dotted]{0}{3}
  \pnode(0,0){c}\uput[dl]{0}(0,0){$o$}
  \my_axis{180}{1}{90}{1}{30}{1.4}
  \endpspicture\\
  {\sf (d)} & {\sf (e)} & {\sf (f)}\\
  \end{tabular}
  }
  \caption[Different views of $P_5$]%
          {\capStyle{Different views of $P_5$. (a) and (d) are
           perspective views, while (b), (c), (e), and (f) are
           orthogonal views. Notice that the $Z$ axis is scaled up for
           clarity.}}
  \label{fig:lb-5}
\end{figure*}

\pagebreak
\subsection{Constructing $P_i, i \geq 5$}
\label{ssec:mscy:lower:pi}
\begin{wrapfigure}[9]{r}{5.5cm}
  \vspace{-15pt}
  \pspicture[](-2.6,-2.5)(2.6,1.6)
  \psset{unit=1cm,linewidth=1pt,framesep=1.5pt}
  \pnode(0,0){c}\uput[ur]{0}(0,0){$o$}
  \psarc[linewidth=0.5pt](0,0){2}{180}{0}
  \rput{320}(0,0){\cnode*(2,0){3pt}{j0}\uput[0]{-320}(2,0){$v_{j_0}$}}
  \rput{350}(0,0){\cnode*[linecolor=white](2,0){3pt}{jx}\cnode(2,0){3pt}{jx}\uput[0]{-350}(2,0){$v_{j_1-1}$}}
  \cnode*(2,0){3pt}{j1}\uput[0]{0}(2,0){$v_{j_1}$}
  \cnode*(-2,0){3pt}{j2}\uput[180]{0}(-2,0){$v_{j_2}$}
  \rput{220}(0,0){\cnode*(2,0){3pt}{j3}\uput[0]{-220}(2,0){$v_{j_3}$}}
  \rput{260}(0,0){\cnode*(2,0){3pt}{j4}\uput[0]{-260}(2,0){$v_{j_4}$}}
  \rput{280}(0,0){\cnode*(2,0){3pt}{j5}\uput[0]{-280}(2,0){$v_{j_5}$}}
  \ncline{j1}{j2}
  \ncline{j4}{j5}
  \ncline{j0}{j3}
  \ncline[linestyle=dotted]{j2}{j4}
  \ncline[linestyle=dotted]{j1}{j5}
  \my_axis{0}{1}{90}{1}{210}{0.7}
  \endpspicture
  \vspace{-10pt}
\end{wrapfigure}
We construct a polytope, such that two facets are visible when looked at
from $z = \infty$, and  $i-2$ facets are visible when looked at from
$z = -\infty$. First, we place the projection of all vertices onto
the plane $z = 0$ along the unit circle, and denote the projection of
a vertex $v$ as $v'$. The projection of the vertices $v_{j_0}$, $v_{j_1}$, 
$v_{j_2}$, $v_{j_3}$, $v_{j_4}$, and $v_{j_5}$, where $j_0 = 0$,
$j_1 = \lfloor (i-2)/2 \rfloor$, $j_2 = \lfloor (i-2)/2 \rfloor+1$,
$j_3 = i - 2$, $j_4 = \lfloor (3i-7)/2 \rfloor$, and
$j_5 = \lfloor (3i-7)/2 \rfloor+1$, are placed at the same locations as
those of the corresponding vertices of $P_5$, as depicted on the
right. The projection of the remaining vertices are placed on the arcs
$\widehat{v'_{j_5},v_{j_0}}$, $\widehat{v_{j_0},v_{j_1}}$,
$\widehat{v_{j_2},v_{j_3}}$, and $\widehat{v_{j_3},v'_{j_4}}$ in cyclic
order.

The angle $\gamma = \angle{v_{j_0} o v_{j_1-1}}$ is another
parameter that governs the final configuration of $P_i$. Once the
placement of the projection of $v_{j_1-1}$ is determined, the
projections of the vertices $v_{j_0+1},v_{j_0+2},\ldots,v_{j_1-2}$
are arbitrarily spread along the open arc $\widehat{v_{j_0},v_{j_1-1}}$. 
The vertex placement along the arc $\widehat{v'_{j_5},v_{j_0}}$ must be
a symmetric reflection of the vertex placement along the arc
$\widehat{v_{j_0},v_{j_1}}$. This guarantees that all the
quadrilateral facets are planar. Similarly, the vertex placement along
the arc $\widehat{v_{j_2},v_{j_3}}$ is a symmetric reflection of the
vertex placement along the arc $\widehat{v_{j_3},v'_{j_4}}$. For large
values of $m$ and $n$ the angle $\gamma$ should be small as explained
below, implying that the projection of the vertices are concentrated
near $v_{j_0}$ and $v_{j_3}$, (which lie on the bisectors of
$\angle{v_{j_1} o v'_{j_5}}$ and $\angle{v_{j_2} o v'_{j_4}}$,
respectively). Figure~\ref{fig:lb-i} depicts the cases
$i = 10$, and $i = 11$. In these examples we force a regular placement,
which is sufficient in many cases. As in the case of $i=5$, the face 
$f^v_i = \overline{v_0,v_1,\ldots,v_{i-2}}$, represented by the vertex
$v_i$ of $G_i$, lies in the plane $z = 0$. The exterior-dihedral angle
$\alpha$ at the edge $\overline{v_{j_0}v_{j_3}}$ is made small, so
that the vertex $w_i$ of $G_i$ representing the adjacent face 
$f^w_i = \overline{v_{j_3}, v_{j_3 + 1},\ldots,v_{2i-5},v_0}$, is kept
in close proximity to $v_i$.

\begin{figure*}[t]
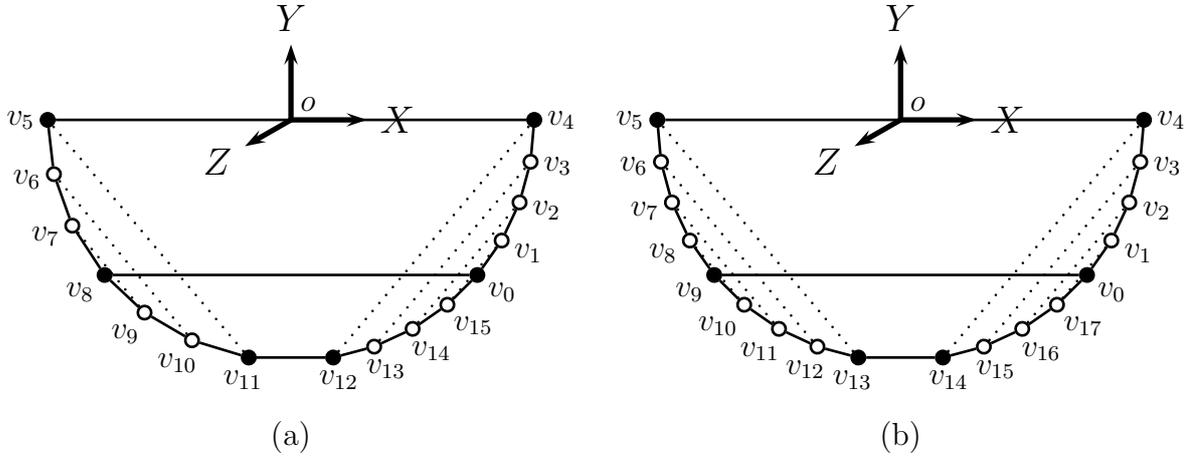

  \centerline{
\begin{tabular}{cc}
\pspicture[](-3.8,-3.8)(3.8,1.5)
\psset{unit=1cm,linewidth=1pt,framesep=1.5pt}
\pnode(0,0){c}\uput[ur]{0}(0,0){$o$}
\rput{320}(0,0){\cnode*(\radius,0){3pt}{0}\uput[0]{-320}(\radius,0){$v_0$}}
\rput{330}(0,0){\cnode(\radius,0){3pt}{1}\uput[0]{-330}(\radius,0){$v_1$}}
\rput{340}(0,0){\cnode(\radius,0){3pt}{2}\uput[0]{-340}(\radius,0){$v_2$}}
\rput{350}(0,0){\cnode(\radius,0){3pt}{3}\uput[0]{-350}(\radius,0){$v_3$}}
\cnode*(\radius,0){3pt}{4}\uput[0]{0}(\radius,0){$v_4$}
\rput{180}(0,0){\cnode*(\radius,0){3pt}{5}\uput[0]{-180}(\radius,0){$v_5$}}
\rput{193}(0,0){\cnode(\radius,0){3pt}{6}\uput[0]{-193}(\radius,0){$v_6$}}
\rput{206}(0,0){\cnode(\radius,0){3pt}{7}\uput[0]{-206}(\radius,0){$v_7$}}
\rput{220}(0,0){\cnode*(\radius,0){3pt}{8}\uput[0]{-220}(\radius,0){$v_8$}}
\rput{233}(0,0){\cnode(\radius,0){3pt}{9}\uput[0]{-233}(\radius,0){$v_9$}}
\rput{246}(0,0){\cnode(\radius,0){3pt}{10}\uput[0]{-246}(\radius,0){$v_{10}$}}
\rput{260}(0,0){\cnode*(\radius,0){3pt}{11}\uput[0]{-260}(\radius,0){$v_{11}$}}
\rput{280}(0,0){\cnode*(\radius,0){3pt}{12}\uput[0]{-280}(\radius,0){$v_{12}$}}
\rput{290}(0,0){\cnode(\radius,0){3pt}{13}\uput[0]{-290}(\radius,0){$v_{13}$}}
\rput{300}(0,0){\cnode(\radius,0){3pt}{14}\uput[0]{-300}(\radius,0){$v_{14}$}}
\rput{310}(0,0){\cnode(\radius,0){3pt}{15}\uput[0]{-310}(\radius,0){$v_{15}$}}
\ncline{0}{1}
\ncline{1}{2}
\ncline{2}{3}
\ncline{3}{4}
\ncline{4}{5}
\ncline{5}{6}
\ncline{6}{7}
\ncline{7}{8}
\ncline{8}{9}
\ncline{9}{10}
\ncline{10}{11}
\ncline{11}{12}
\ncline{12}{13}
\ncline{13}{14}
\ncline{14}{15}
\ncline{15}{0}
\ncline{0}{8}
\ncline[linestyle=dotted]{4}{12}
\ncline[linestyle=dotted]{3}{13}
\ncline[linestyle=dotted]{2}{14}
\ncline[linestyle=dotted]{1}{15}
\ncline[linestyle=dotted]{5}{11}
\ncline[linestyle=dotted]{6}{10}
\ncline[linestyle=dotted]{7}{9}
\my_axis{0}{1}{90}{1}{210}{0.7}
\endpspicture &
\pspicture[](-3.8,-3.8)(3.8,1.5)
\psset{unit=1cm,linewidth=1pt,framesep=1.5pt}
\pnode(0,0){c}\uput[ur]{0}(0,0){$o$}
\rput{320}(0,0){\cnode*(\radius,0){3pt}{0}\uput[0]{-320}(\radius,0){$v_0$}}
\rput{330}(0,0){\cnode(\radius,0){3pt}{1}\uput[0]{-330}(\radius,0){$v_1$}}
\rput{340}(0,0){\cnode(\radius,0){3pt}{2}\uput[0]{-340}(\radius,0){$v_2$}}
\rput{350}(0,0){\cnode(\radius,0){3pt}{3}\uput[0]{-350}(\radius,0){$v_3$}}
\cnode*(\radius,0){3pt}{4}\uput[0]{0}(\radius,0){$v_4$}
\rput{180}(0,0){\cnode*(\radius,0){3pt}{5}\uput[0]{-180}(\radius,0){$v_5$}}
\rput{190}(0,0){\cnode(\radius,0){3pt}{6}\uput[0]{-190}(\radius,0){$v_6$}}
\rput{200}(0,0){\cnode(\radius,0){3pt}{7}\uput[0]{-200}(\radius,0){$v_7$}}
\rput{210}(0,0){\cnode(\radius,0){3pt}{8}\uput[0]{-210}(\radius,0){$v_8$}}
\rput{220}(0,0){\cnode*(\radius,0){3pt}{9}\uput[0]{-220}(\radius,0){$v_9$}}
\rput{230}(0,0){\cnode(\radius,0){3pt}{10}\uput[0]{-230}(\radius,0){$v_{10}$}}
\rput{240}(0,0){\cnode(\radius,0){3pt}{11}\uput[0]{-240}(\radius,0){$v_{11}$}}
\rput{250}(0,0){\cnode(\radius,0){3pt}{12}\uput[0]{-250}(\radius,0){$v_{12}$}}
\rput{260}(0,0){\cnode*(\radius,0){3pt}{13}\uput[0]{-260}(\radius,0){$v_{13}$}}
\rput{280}(0,0){\cnode*(\radius,0){3pt}{14}\uput[0]{-280}(\radius,0){$v_{14}$}}
\rput{290}(0,0){\cnode(\radius,0){3pt}{15}\uput[0]{-290}(\radius,0){$v_{15}$}}
\rput{300}(0,0){\cnode(\radius,0){3pt}{16}\uput[0]{-300}(\radius,0){$v_{16}$}}
\rput{310}(0,0){\cnode(\radius,0){3pt}{17}\uput[0]{-310}(\radius,0){$v_{17}$}}
\ncline{0}{1}
\ncline{1}{2}
\ncline{2}{3}
\ncline{3}{4}
\ncline{4}{5}
\ncline{5}{6}
\ncline{6}{7}
\ncline{7}{8}
\ncline{8}{9}
\ncline{9}{10}
\ncline{10}{11}
\ncline{11}{12}
\ncline{12}{13}
\ncline{13}{14}
\ncline{14}{15}
\ncline{15}{16}
\ncline{16}{17}
\ncline{17}{0}
\ncline{0}{9}
\ncline[linestyle=dotted]{4}{14}
\ncline[linestyle=dotted]{3}{15}
\ncline[linestyle=dotted]{2}{16}
\ncline[linestyle=dotted]{1}{17}
\ncline[linestyle=dotted]{5}{13}
\ncline[linestyle=dotted]{6}{12}
\ncline[linestyle=dotted]{7}{11}
\ncline[linestyle=dotted]{8}{10}
\my_axis{0}{1}{90}{1}{210}{0.7}
\endpspicture\\
  {(a)} & {(b)}
  \end{tabular}
  }
  \caption[Views of $P_{10}$ and $P_{11}$]%
          {\capStyle{(a) An orthogonal view of $P_{10}$. (b) An orthogonal view
           of $P_{11}$.}}
  \label{fig:lb-i}
\end{figure*}

Given $m$ and $n$, three parameters per polytope listed below govern
the final configurations of $P_m$ and $P_n$.
\begin{compactenum}
\item the exterior-dihedral angle $\alpha$ at the edge
$\overline{v_{j_0} v_{j_3}}$,
\item the angle $\beta = \angle{v'_{j_4} o v'_{j_5}}$, and
\item the angle $\gamma = \angle{v_{j_0} o v_{j_1-1}}$.
\end{compactenum}
The settings of these angles must satisfy certain conditions,
which in turn enable all the necessary intersections of edges in the
Gaussian map of the Minkowski sum. We denote the 
face $\overline{v_{j_5+1}v_{j_5}v_{j_1}v_{j_1-1}}$ adjacent to $f^u$
by $f^x$. The vertex $x = G(f^x)$ is the nearest vertex to
$u$. The $y$-coordinate of the vertex $w_n$ must be greater than the
$y$-coordinate of the edge $\overline{xv}_m$ at $z = 0$ in $P_m$'s coordinate
system. Similarly, the $y$-coordinate of the vertex $w_m$ must be 
greater than the $y$-coordinate of the edge $\overline{xv}_n$ at $z = 0$ in
$P_n$'s coordinate system.\footnote{The rotation of, say $P_n$, is
performed about the $Y$ axis. Thus, it has no bearing on
$y$-coordinates.} This is best seen in Figure~\ref{fig:lb-11}(c). The
values of the $y$-coordinates of $w_n$ and $w_m$ are simply
$\sin(\alpha_n)$ and $\sin(\alpha_m)$, respectively. The value of the
$y$-coordinate of the edge $\overline{xv}_m$ at $z = 0$ however depends on all
the three parameters $\alpha_m$, $\beta_m$, and $\gamma_m$. Similarly,
the $y$-coordinate of the edge $\overline{xv}_n$ at $z = 0$ in the respective
coordinate system depends on $\alpha_n$, $\beta_n$, and $\gamma_n$.
Instead of deriving an expression that directly evaluates these
$y$-coordinates, we suggest an iterative procedure that decreases the
angles at every iteration until the conditions above are met, and
argue that this procedure eventually terminates, because at the limit,
we are back at the case where $m = n = 5$, for which valid settings exist.

\begin{figure*}[t]
  \centerline{
    \begin{tabular}{cccc}
    \epsfig{figure=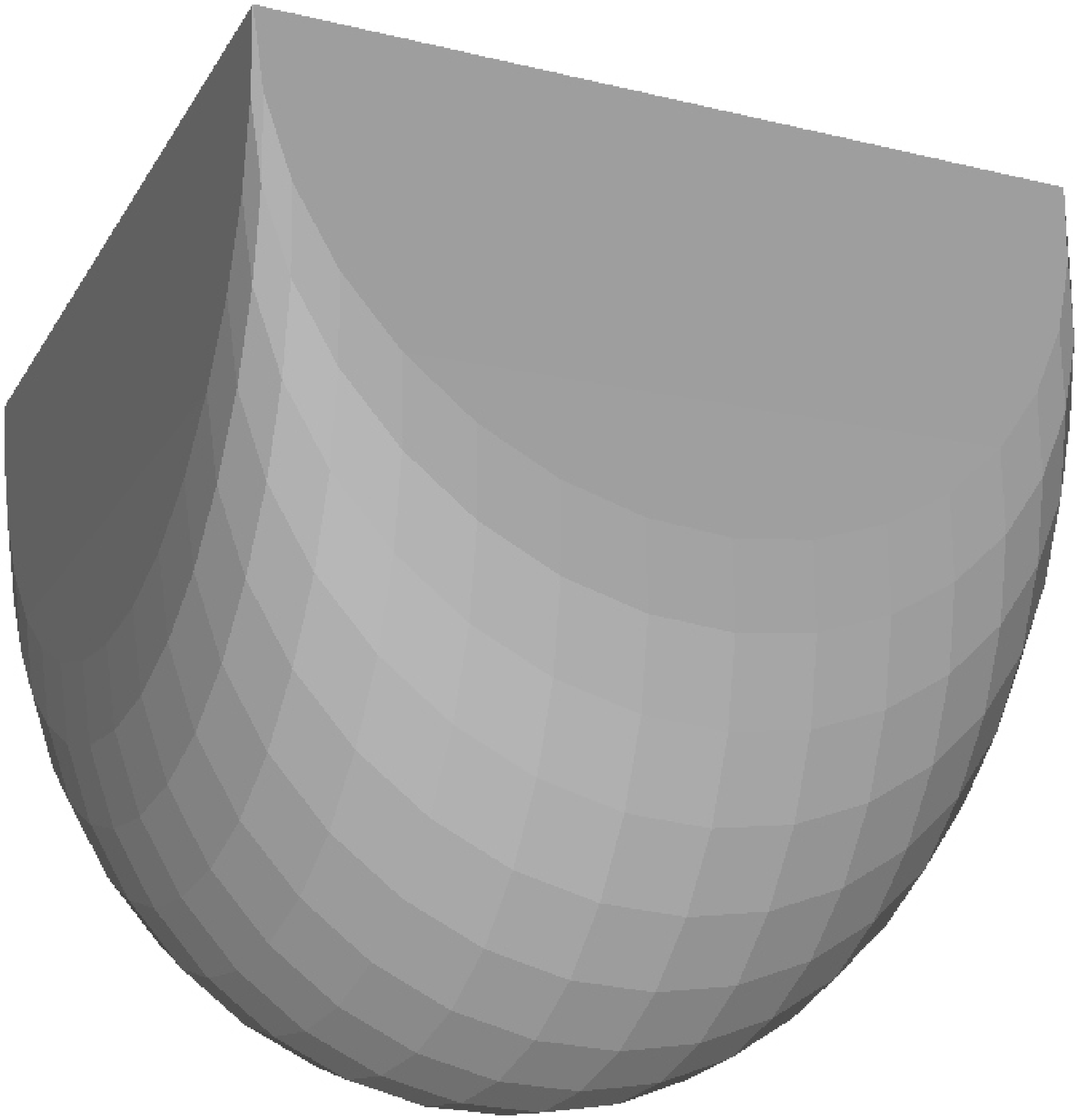,width=3.5cm,silent=} &
    \epsfig{figure=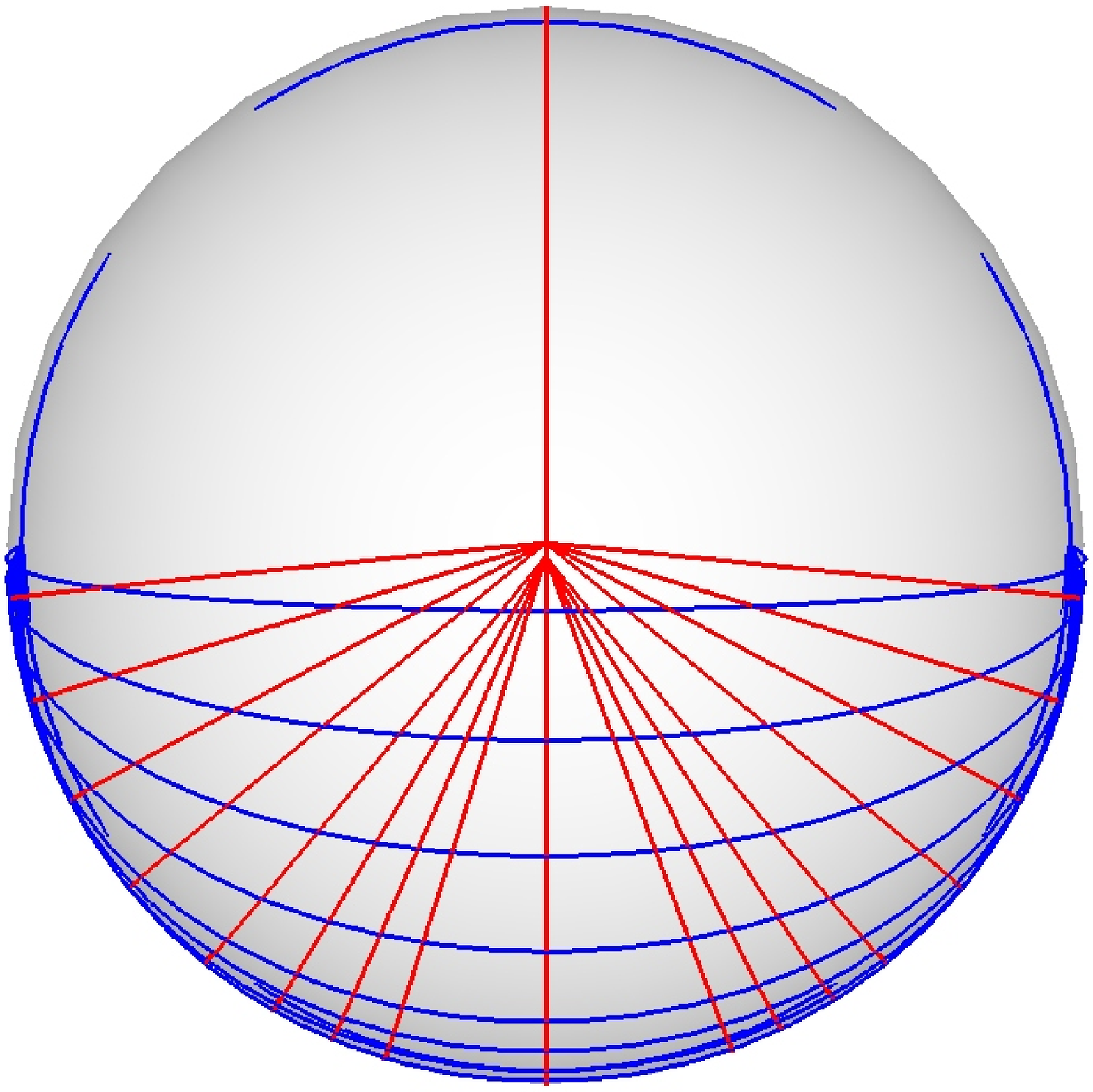,width=3.5cm,silent=} &
    \epsfig{figure=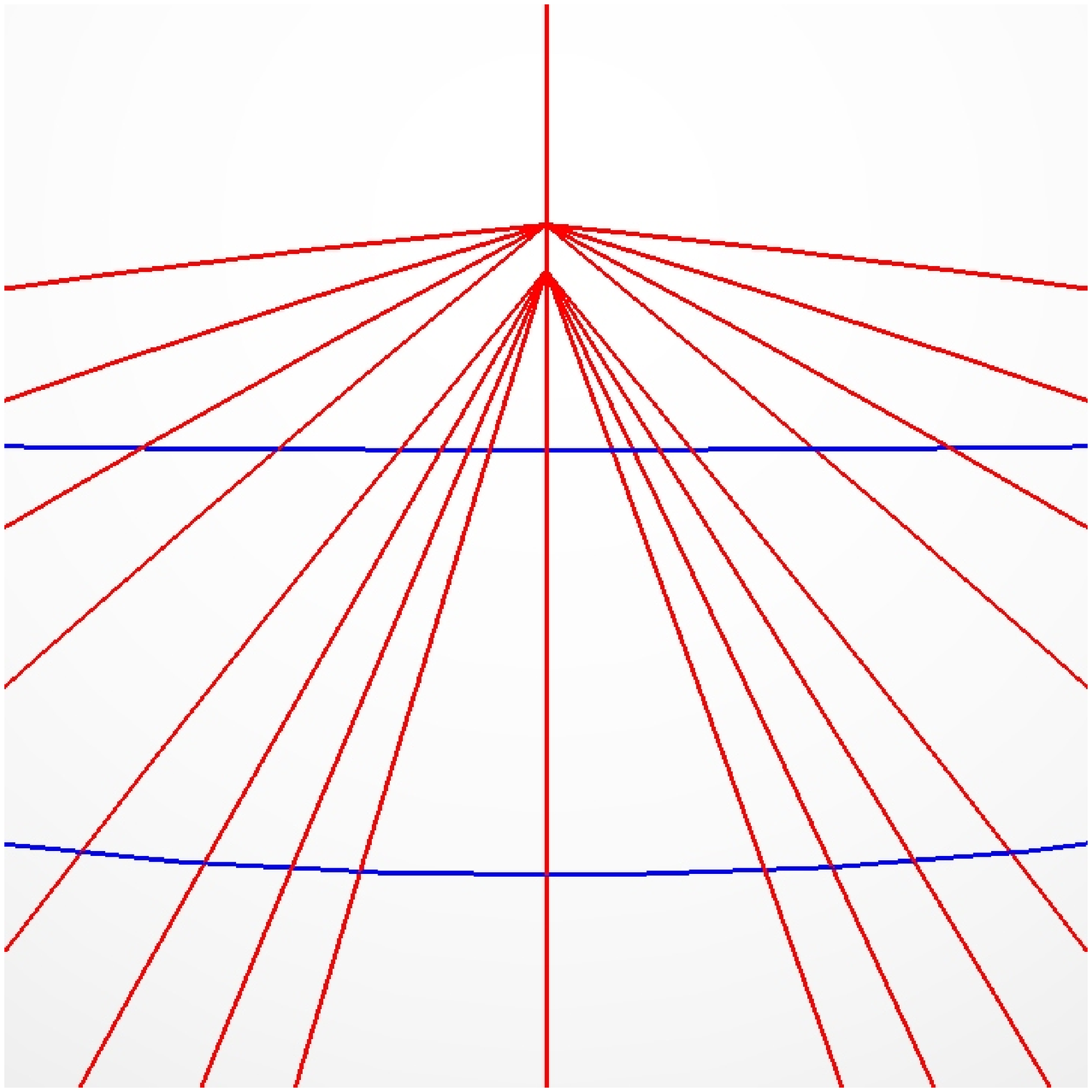,width=3.5cm,silent=} &
    \epsfig{figure=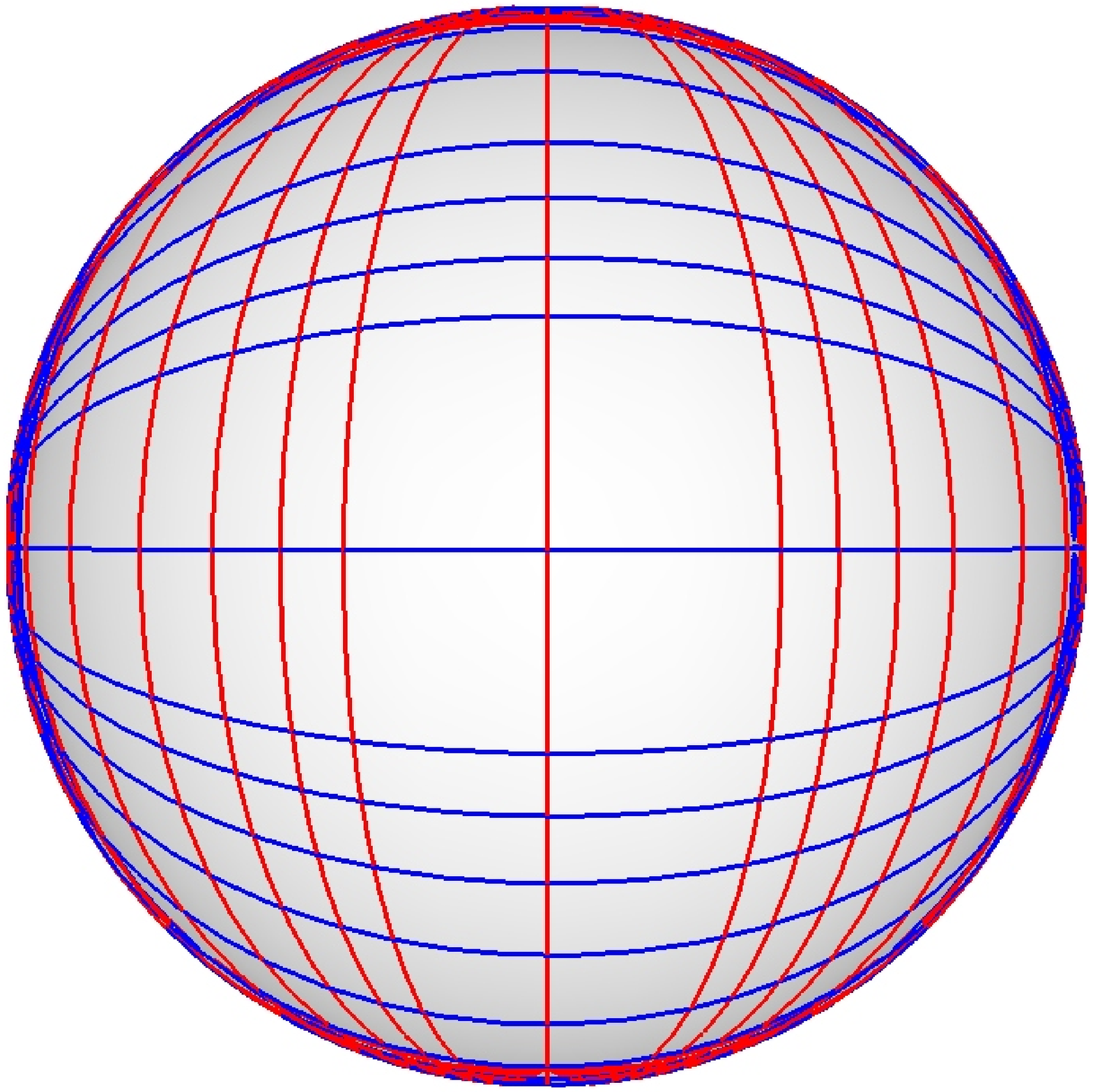,width=3.5cm,silent=}\\
    {(a)} & {(b)} & {(c)} & {(d)}
    \end{tabular}
  }
  \caption[The Minkowski sum of $M_{11,11}$ and $M_{11,11}$ rotated about the
           $Y$ axis]%
          {\capStyle{(a) The Minkowski sum $M_{11,11} = P_{11} \oplus P'_{11}$,
           where $P'_{11}$ is $P_{11}$ rotated $90^{\circ}$ about the $Y$ axis.
           (b) The Gaussian map of $M_{11,11}$ looked at from $z = \infty$.
           (c) A scaled up view of the Gaussian map of $M_{11,11}$ looked at
           from $z = \infty$. (d) The Gaussian map of $M_{11,11}$ looked at
           from $y = -\infty$.}}
  \label{fig:lb-11}
\end{figure*}

\subsection{Constructing $P_4$}
\label{ssec:mscy:lower:p4}
\begin{wrapfigure}[7]{l}{3.5cm}
  \vspace{-15pt}
  \epsfig{figure=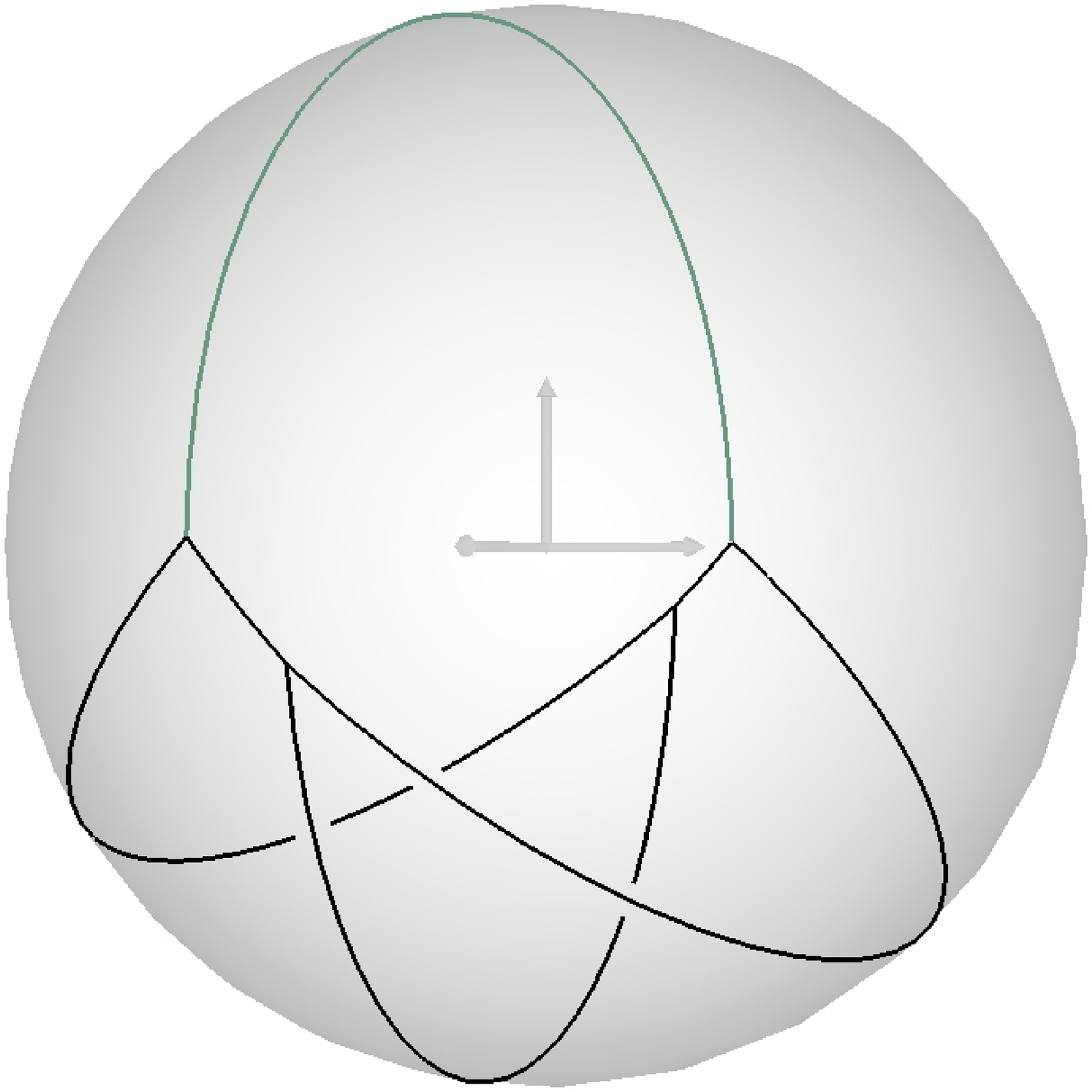,width=3.5cm,silent=}
  \vspace{-8pt}
  \vspace{-3.5cm}
  \pspicture[](0,0)(3.5,3.5)
  \psset{unit=1cm,linewidth=1pt}
  \rput{0}(2.30,1.98){\qdisk(0,0){2pt}}
  \rput{0}(2.12,1.79){\qdisk(0,0){2pt}}
  \rput{0}(0.96,1.64){\qdisk(0,0){2pt}}
  \rput{0}(0.63,2.01){\qdisk(0,0){2pt}}
  \endpspicture
  \vspace{-15pt}
\end{wrapfigure}
Recall that $P_4$ has 4 facets, 6 edges, and 4 vertices. Therefore, it
cannot be constructed according to the prescription provided in the
previous section. Applying the same principles though, we place two
vertices of $G_4$ near the pole $(0,0,-1)$, and two vertices near the
opposite pole $(0,0,1)$. One edge, which
connects a vertex near one pole to a vertex near the other, lightly
shaded in the figure on the left, is entirely contained in $H^+$. The
other three edges that connect vertices near opposite poles mostly lie
in $H^-$. They form a set of $2i - 5 = 3$ edges, denoted as $E'_4$.
The length of every edge in $E'_4$ is almost $\pi$. In contrast to the
case $i \geq 5$, two out of the three edges in $E'_4$ cross the plane
$y = 0$. Namely, small sections of them lie in $H^+$. As in the case
$i > 4$, one edge, the lightly shaded one, is entirely contained in
$H^+$.

\begin{wrapfigure}{r}{5.1cm}
  \vspace{-15pt}
  \pspicture[](-2.5,-2.5)(2.6,1.6)
  \psset{unit=1cm,linewidth=1pt,framesep=1.5pt}
  \pnode(0,0){c}\uput[ur]{0}(0,0){$o$}
  \psarc[linewidth=0.5pt](0,0){2}{180}{0}
  \cnode*(2,0){3pt}{0}\uput[0]{0}(2,0){$v_0$}
  \cnode*(-2,0){3pt}{1}\uput[180]{0}(-2,0){$v_1$}
  \rput{260}(0,0){\cnode*(2,0){3pt}{2}\uput[0]{-260}(2,0){$v_2$}}
  \rput{280}(0,0){\cnode*(2,0){3pt}{3}\uput[0]{-280}(2,0){$v_3$}}
  \ncline{0}{1}
  \ncline{1}{2}
  \ncline{2}{3}
  \ncline{3}{0}
  \ncline{2}{0}
  \ncline[linestyle=dotted]{1}{3}
  \my_axis{0}{1}{90}{1}{210}{0.7}
  \endpspicture
  \vspace{-10pt}
\end{wrapfigure}
We construct $P_4$, such that the two facets
$f^1 = \overline{v_0v_1v_2}$ and $f^2 = \overline{v_0v_2v_3}$ are
visible when looked at from $z = \infty$, and when looked
at from $z = -\infty$, the remaining two facets 
$f^3 = \overline{v_3v_1v_0}$ and $f^4 = \overline{v_3v_2v_1}$ are
visible. As depicted on the right, the projection of all four vertices
onto the plane $z = 0$ lie on the unit circle. The vertices $v_0$
and $v_2$ lie on the plane $z = 0$, and the vertices $v_1$ and $v_3$
lie in a parallel plane. The distance between the planes is small to
form small exterior-dihedral angles at the edges $\overline{v_0v_2}$
and $\overline{v_1 v_3}$.

As in the general case, two parameters govern the exact placement of
$v_1$, $v_2$, and $v_3$. One is the size of the exterior-dihedral
angle at the edge $\overline{v_0v_2}$. The other parameter is
the size of the angle $\angle{v_2 o v'_3}$, where $v'_3$ is the
projection of $v_3$ onto the plane $z = 0$. The sizes of these angles
are determined by the same rationale as in the general case.

This concludes the proof of the lower bound of Theorem~\ref{the:theorem}
for the special case $k=2$.
\hyphenation{Min-kowski}
\section{Maximum Complexity of Minkowski Sums of Many Polytopes}
\label{sec:mscy:many}
Let $P_1,P_2,\ldots,P_k$ be a set of $k$ polytopes in $\mathbb{R}^3$, 
such that the number of facets of $P_i$ is $m_i$ for $i = 1,2,\ldots,k$.
In this section we present a tight bound on the number of facets of the
Minkowski sum $M = P_1 \oplus P_2 \oplus\ldots\oplus P_k$ generalizing
the arguments presented above for $k=2$.

\subsection{The Lower Bound}
\label{ssec:mscy:many:lower-bound}
\begin{wrapfigure}{l}{3.5cm}
  \vspace{-15pt}
  \epsfig{figure=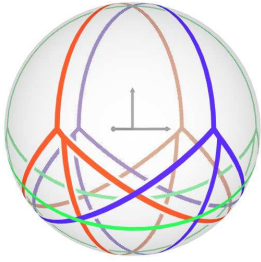,width=3.5cm,silent=}
  \caption[The overlay of the Gaussian maps of three rotated tetrahedra]%
          {\capStyle{The overlay of the Gaussian maps of three
           tetrahedra rotated about the $Y$ axis $0^{\circ}$,
           $60^{\circ}$, and $120^{\circ}$, respectively.}}
  \label{fig:lb-4-4-4}
  \vspace{-10pt}
\end{wrapfigure}
Given $k$ positive integers $m_1,m_2,\ldots,m_k$, such that $m_i \geq 4$,
we describe how to construct $k$ polytopes in $\mathbb{R}^3$ with
corresponding number of facets, such that the number of facets of
their Minkowski sum is exactly
$\sum_{1 \leq i < j \leq k}(2m_i - 5)(2m_j - 5) + \binom{k}{2} +
\sum_{i=1}^k m_i$.
More precisely, given $i$, we describe how to construct a skeleton of
a polytope $P_i$ with $i$ facets, $3i - 6$ edges, and $2i-4$ vertices,
and prove that the number of facets of the Minkowski sum 
$M = P_1 \oplus P_2 \oplus\ldots\oplus P_k$ of the $k$ polytopes
properly adjusted and oriented is exactly the expression above.
We use the same construction described in Section~\ref{sec:mscy:lower}.

The number of facets in the Minkowski sum of $P_i, i = 1,2,\ldots,k$
is maximal, when the number of vertices in the overlay of
$G_i, i = 1,2,\ldots,k$ is maximal. This occurs, for example, when
$G_i$ is rotated $180^{\circ} (i-1)/k$ about the $Y$ axis for
$i = 1,2,\ldots,k$, as depicted on the left for the case of
$m_1 = m_2 = m_3 = 4$. (Recall, that $E'_i$ refers to the set of edges
that span the lowest hemispheres, and its cardinality is smaller than
cardinality of $E$ by one.) In this configuration, all the $2m_i - 5$
edges in $E'_i$ intersect all the $2m_j - 5$ edges in $E'_j$, for
$1 \leq i < j \leq k$. All intersections occur in $H^-$. In addition,
the edge $\overline{uv}_{m_i}$ intersects the edge $\overline{uv}_{m_j}$
for $1 \leq i < j \leq k$. These intersection points lie in $H^+$
near the pole $(0,1,0)$. Counting all these intersections results with  
$\sum_{1 \leq i < j \leq k}(2m_i - 5)(2m_j - 5) + \binom{k}{2}$.
Adding the original vertices of $G(P_i), i = 1,2,\ldots,k$, yields the
bound asserted in Theorem~\ref{the:theorem}.

\subsection{The Upper Bound}
\label{ssec:mscy:upper-bound-many}
We can apply the special case $k=2$ of Theorem~\ref{the:theorem} to obtain
\begin{align*}
f(m_1,m_2,\ldots,m_k) & \leq f(m_1,f(m_2,m_3,\ldots,m_k)) \\
& \leq 4m_1 f(m_2,m_3,\ldots,m_k) - 9m_1 - 9 f(m_2,m_3,\ldots,m_k) + 26\\
& \leq 4^k \prod_{i=1}^k m_i + \ldots
\end{align*}
However, we can apply a technique similar to the one used in
Section~\ref{sec:mscy:upper} and improve this upper bound, but first we
must extend Lemma~\ref{lemma:overlay-single}.

\begin{lemma}
\label{lemma:overlay-many}
Let $G_1,G_2,\ldots,G_k$ be a set of $k$ Gaussian maps of convex polytopes,
and let $G$ be their overlay. Let $f_i$ denote the number of faces of
$G_i$, and let $f$ denote the number of faces of $G$. Then, 
$f \le \sum_{1\leq i<j\leq k}f_if_j-(k-2)\sum_{1\leq i\leq k}f_i+(k - 1)(k - 2)$.
\end{lemma}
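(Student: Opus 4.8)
\textbf{Proof proposal for Lemma~\ref{lemma:overlay-many}.}

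The plan is to reduce the $k$-fold statement to the two-fold Lemma~\ref{lemma:overlay-single} by induction on $k$, but with a carefully chosen strengthening of the inductive hypothesis so that the bound is not wasteful at each merge. First I would set up the induction: for $k=1$ the claimed inequality reads $f\le f_1$ (all other sums are empty), which is trivially an equality, and the case $k=2$ is exactly Lemma~\ref{lemma:overlay-single}. For the inductive step, write $G=\mathrm{overlay}(G_1,\ldots,G_k)$ as $\mathrm{overlay}(G_1,G')$ where $G'=\mathrm{overlay}(G_2,\ldots,G_k)$ has $f'$ faces. The key structural observation — which I would justify exactly as in the proof of Lemma~\ref{lemma:overlay-single} — is that a face of the $k$-fold overlay is labelled by a $k$-tuple of indices, one face-index per map, and this label is injective because each spherical face is spherically convex and contained in an open hemisphere, so the intersection of one face from each map is again spherically convex hence connected. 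That injectivity gives the ``naive'' bound $f\le\prod_i f_i$, but more usefully it shows that $f$ equals the number of tuples $\langle i_1,\ldots,i_k\rangle$ whose corresponding face-intersection is nonempty, and I can count these by fixing the $G_1$-coordinate: $f\le\sum_{\text{faces } h \text{ of } G_1}(\text{number of faces of } G' \text{ meeting } h)$.

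The main work is then a bookkeeping argument to push this through the induction without loss. The clean way is to prove the slightly more refined statement by induction: $f \le \sum_{1\le i<j\le k} f_i f_j - (k-2)\sum_{1\le i\le k} f_i + (k-1)(k-2)$. Assuming it for $k-1$ maps, I have $f' \le \sum_{2\le i<j\le k} f_i f_j - (k-3)\sum_{2\le i\le k} f_i + (k-2)(k-3)$. Now apply Lemma~\ref{lemma:overlay-single} in the form $f\le f_1 f'$ — this is too lossy; instead I would use the sharper consequence of the injective-labelling/face-counting argument, namely that merging $G_1$ into an already-overlaid map $G'$ satisfies $f \le f_1 f' - f_1 - f' + (\text{correction})$ only if we track components, which $G'$ no longer is. So the honest route is: each face of $G_1$ is spherically convex, and overlaying it with $G'$ subdivides it into at most (number of $G'$-faces it meets) pieces; summing over the $f_1$ faces of $G_1$ and using that $G'$ itself is connected with $f'$ faces gives $f \le f_1 \cdot f'$, and then I substitute the inductive bound for $f'$ and simplify. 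The simplification is the routine-calculation part: expanding $f_1\big(\sum_{2\le i<j\le k} f_i f_j - (k-3)\sum_{2\le i\le k} f_i + (k-2)(k-3)\big)$ does \emph{not} obviously telescope to the target, so I expect this to be the main obstacle — the crude $f\le f_1 f'$ step overcounts, and I will likely need the same trick used in the $k=2$ upper-bound proof in Section~\ref{sec:mscy:upper}, where Euler's formula on the overlay graph converts the face count into a vertex/edge count and the quadratic cross-terms appear with the right coefficients.

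Concretely, the more promising approach — and the one I would actually pursue — mirrors Section~\ref{sec:mscy:upper} directly rather than inducting abstractly. I would compute the combinatorics of the $k$-fold overlay graph $G$ via Euler's formula: its vertices are the original vertices of the $G_i$ together with all pairwise edge–edge crossings (generic position, justified as before by a small rotation argument since we only need an upper bound), its edges come from subdividing the original edges at those crossings, and then $f = e - v + 2$. The pairwise-crossing count $v_x = \sum_{i<j} v_x^{ij}$ is bounded, for each pair $(i,j)$, by applying Lemma~\ref{lemma:overlay-single} together with Euler to that pair exactly as in the $k=2$ case, yielding $v_x^{ij}\le f_i f_j + v_i + v_j - 2 - e_i - e_j$; summing over all $\binom{k}{2}$ pairs and feeding into $f = e - v + 2 = (\sum e_i + 2v_x) - (\sum v_i + v_x) + 2$ collapses, after substituting $e_i = f_i + v_i - 2$, to an expression in the $f_i$ alone. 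The coefficients $-(k-2)\sum f_i$ and $(k-1)(k-2)$ should then emerge from the $\binom{k}{2}$ copies of the ``$-2$'' and ``$+v_i+v_j-e_i-e_j$'' terms, each $f_i$ being hit $k-1$ times across pairs but with the single-map contribution counted once in the Euler assembly — tracking this multiplicity precisely is the delicate step, and it is where I would spend the most care. Once the arithmetic checks out, the lemma follows, and it is exactly what is needed to generalize the Section~\ref{sec:mscy:upper} upper-bound computation to the $k$-polytope bound of Theorem~\ref{the:theorem}.
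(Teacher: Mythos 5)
Your second route --- the one you say you would actually pursue --- is correct, and it is a genuinely different proof from the paper's. To confirm that the arithmetic closes: the crossings between edges of $G_i$ and edges of $G_j$ do not depend on the other maps, so the pairwise bound $v_x^{ij}\le f_if_j+v_i+v_j-2-e_i-e_j$ from the $k=2$ argument applies to each pair; Euler's formula on each summand ($v_i-e_i=2-f_i$) turns this into $v_x^{ij}\le f_if_j-f_i-f_j+2$, summing over pairs gives $v_x\le\sum_{i<j}f_if_j-(k-1)\sum_i f_i+k(k-1)$, and feeding this into $f=\sum_i e_i-\sum_i v_i+v_x+2=\sum_i(f_i-2)+v_x+2$ yields exactly $\sum_{i<j}f_if_j-(k-2)\sum_i f_i+(k-1)(k-2)$, as required. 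The paper instead proves the lemma by a direct charging argument that uses no Euler's formula at all: it picks generic poles, observes that spherical convexity gives every face of the overlay (other than the two containing the poles) a unique ``western-most corner,'' classifies each such corner as either a corner of a single $G_i$ or a corner of the pairwise overlay of some $G_i$ and $G_j$, and corrects for the single-map corners being counted $k-1$ times across the $\binom{k}{2}$ pairs --- whence the $-(k-2)\sum_i(f_i-2)$ term. Your route buys a near-mechanical reduction to Lemma~\ref{lemma:overlay-single}: via Euler the lemma becomes equivalent to the statement that the total crossing count is at most the sum of the pairwise maxima. The paper's route buys a self-contained geometric proof that stays entirely in the face-counting world. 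You were also right to abandon your first, inductive route: $f\le f_1 f'$ gives a bound cubic in the $f_i$ already for $k=3$ and cannot reach the quadratic target. Both surviving routes share the same genericity caveat (degeneracies are removed by a small rotation without decreasing the counts), which you flag and the paper asserts with equal brevity.
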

\begin{proof}
  Let us choose two antipodal points on the sphere $\mathbb{S}^2$, such
  that no arc of the overlay is aligned with them. In particular, the
  points are in the interior of two distinct faces. We consider these two
  points to be the north pole and south pole of the sphere, and define
  the direction \emph{west} to be the clockwise direction when looking
  from the north pole toward the south pole. We define a
  \emph{western-most corner} to be a pair of a face and one of its
  vertices, which is to the west of all of its other vertices. Apart
  from the two faces, which contain the poles, any face has a unique
  western-most corner, since no edge is aligned with the poles, and all
  faces of any Gaussian map are spherically convex. So for any overlay
  with $f$ faces, there are $f-2$ such western-most corners.
  
  The maximal number of faces is attained when the overlay $G$ is
  non-degenerate. Thus, a vertex of $G$ is either the intersection of
  two edges of some distinct $G_i$ and $G_j$, or a vertex of some
  $G_i$. Therefore, a western-most corner for a face of $G$ is either
  a western-most corner for the overlay of some $G_i$ and $G_j$, or a
  western-most corner for some $G_i$, in which case it also is a
  western-most corner for any overlay involving $G_i$.  The number of
  western-most corners in the Gaussian map $G_i$ is $f_i-2$, and the
  maximal number of western-most corners in the overlay of some $G_i$ and
  $G_j$ is $f_if_j-2$.

  We can therefore write:
\begin{align*}
  f & \leq \sum_{1\leq i<j\leq k}(f_if_j-2)-(k-2)\sum_{i=1}^k(f_i-2) + 2.
\end{align*}
This corresponds to summing the western-most corners appearing in the
overlay of all pairs of Gaussian maps, and subtracting $(k-2)$ times
the western-most corners appearing in all original Gaussian maps, since
each of them appeared $(k-1)$ times in the first sum. Finally, we have:
\begin{align*}
  \sum_{1\leq i<j\leq k}(f_if_j-2)-(k-2)\sum_{i=1}^k(f_i-2) + 2 & =
  \sum_{1\leq i<j\leq k}f_if_j-(k-2)\sum_{i=1}^kf_i + (k - 1)(k - 2)\ .
\end{align*}
\end{proof}

Let $P_1,P_2,\ldots,P_k$ be $k$ polytopes in $\mathbb{R}^3$ with 
$m_1,m_2,\ldots,m_k$ facets, respectively. Let $G(P_i)$ denote the
Gaussian map of $P_i$, and let $v_i$, $e_i$, and $f_i$ denote the
number of vertices, edges, and faces of $G(P_i)$, respectively. Let
$v_x$ denote the number of intersections of edges of $G(P_i)$ and
edges of $G(P_j)$, $i \neq j$ in $G(M)$. Applying the same technique
as in Section~\ref{sec:mscy:upper}, that is, counting the total degrees 
of vertices in $G(M)$ implies that $\sum_{i=1}^k e_i + 2v_x = e$.
Using Euler's formula, we get $\sum_{i=1}^k e_i + 2v_x = f + v - 2$.
Applying Lemma~\ref{lemma:overlay-many} and respecting 
$v = \sum_{1 \leq i \leq k} v_i + v_x$, we can bound $v_x$ from above:
\begin{align}
v_x \leq \sum_{1\leq i<j\leq k}f_if_j-(k-2)\sum_{i=1}^kf_i + (k - 1)(k - 2) + \sum_{i=1}^k (v_i - e_i) - 2\label{vxm1}\ .
\end{align}
According to Corollary~\ref{col:maximal} the maximum number of facets
of the Minkowski sum of two polytopes is attained when the number of
edges of each summand is maximal. We need to establish a similar
property for the general case. Generalizing the derivation procedure in
Section~\ref{sec:mscy:upper}, we introduce $k$ non-negative integers
$\ell_i, i = 1,2,\ldots,k$, such that $e_i = 3v_i - 6 - \ell_i$ and
$f_i = 2v_i - 4 - \ell_i$. Substituting $e_i$ in~\eqref{vxm1} we get:
\begin{align}
v_x & \leq \sum_{1\leq i<j\leq k}f_if_j-(k-2)\sum_{i=1}^kf_i + (k - 1)(k - 2) + \sum_{i=1}^k (v_i - 3v_i + 6 + \ell_i) - 2\notag\\
    & \leq \sum_{1\leq i<j\leq k}f_if_j-(k-2)\sum_{i=1}^kf_i -
\sum_{i=1}^k (2 v_i - 5) + \sum_{i=1}^k \ell_i + k^2 - 2k\label{vxm2}\ .
\end{align}
Substituting $f_i$ in~\eqref{vxm2} we get:
\begin{align}
v_x & \leq \sum_{1\leq i<j\leq k}(2v_i - 4 - \ell_i)(2v_j - 4 -\ell_j) -
           (k-2)\sum_{i=1}^k(2v_i - 4 - \ell_i) -
           \sum_{i=1}^k (2v_i - 5 - \ell_i) + k^2 - 2k\notag\\
    & = \sum_{1\leq i<j\leq k}(2v_i - 5)(2v_j - 5) + \binom{k}{2} + h(\ell_1, \ell_2,\ldots,\ell_k)\notag\ ,
\end{align}
where
\begin{align}
h(\ell_1, \ell_2,\ldots,\ell_k) & =
\sum_{1 \leq i < j \leq k}(\ell_i \ell_j - \ell_j(2v_i - 5) - \ell_i(2v_j - 5))\notag\\
& = \sum_{i=1}^k\ell_{i}(\sum_{j \neq i}(\ell_j/2 - (2v_j - 5)))\notag\ .
\end{align}
Connectivity of $G(P_i)$ implies that $\ell_i \leq 2v_i - 5$, which in
turn implies that $h(\ell_1, \ell_2,\ldots,\ell_k) \leq 0$.
Thus, we have:
\begin{align}
v_x & \leq \sum_{1 \leq i < j \leq k}(2v_i - 5)(2v_j - 5) + \binom{k}{2}\label{vxm3}\ .
\end{align}

We conclude that the exact maximum number of facets of the Minkowski sum of $k$
polytopes cannot exceed $\sum_{1 \leq i < j \leq k}(2m_i - 5)(2m_j - 5) + 
\sum_{1 \leq i \leq k}m_i + \binom{k}{2}$, which completes the proof of
Theorem~\ref{the:theorem}.
For example, the exact maximum number of facets of the Minkowski sum of
$k$ tetrahedra is $5k^2 - k$. For $k = 2$ the expression evaluates to 18.

\begin{savequote}[10pc]
  Divide each difficulty into as many parts as is feasible and
  necessary to resolve it.
\sffamily
\qauthor{Rene Descartes}
\end{savequote}
\chapter{Assembly Planning}
\label{chap:assem_plan}
\newcommand{\redpart}{{\sc \textcolor{red}{$R$}}}
\newcommand{\greenpart}{{\sc \textcolor{green}{$G$}}}
\newcommand{\bluepart}{{\sc \textcolor{blue}{$B$}}}
\newcommand{\purplepart}{{\sc \textcolor{magenta}{$P$}}}
\newcommand{\yellowpart}{{\sc \textcolor{orange}{$Y$}}}
\newcommand{\turquoisepart}{{\sc \textcolor{cyan}{$T$}}}

\begin{figure*}[!b]%
  \setlength{\tabcolsep}{3pt}
  \centerline{%
    \begin{tabular}{cccc}
      \epsfig{figure=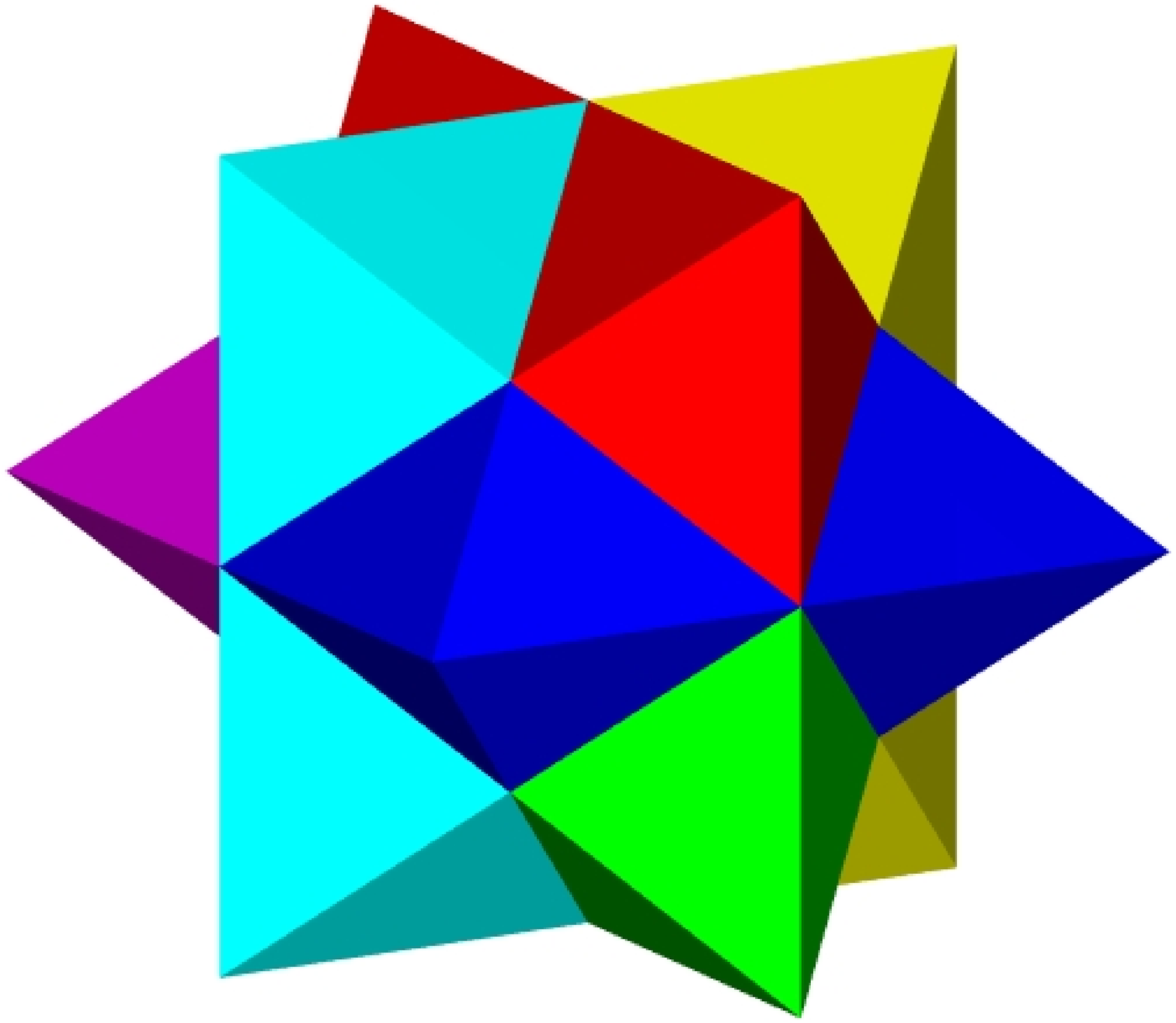,height=3.5cm,silent=} &
      \epsfig{figure=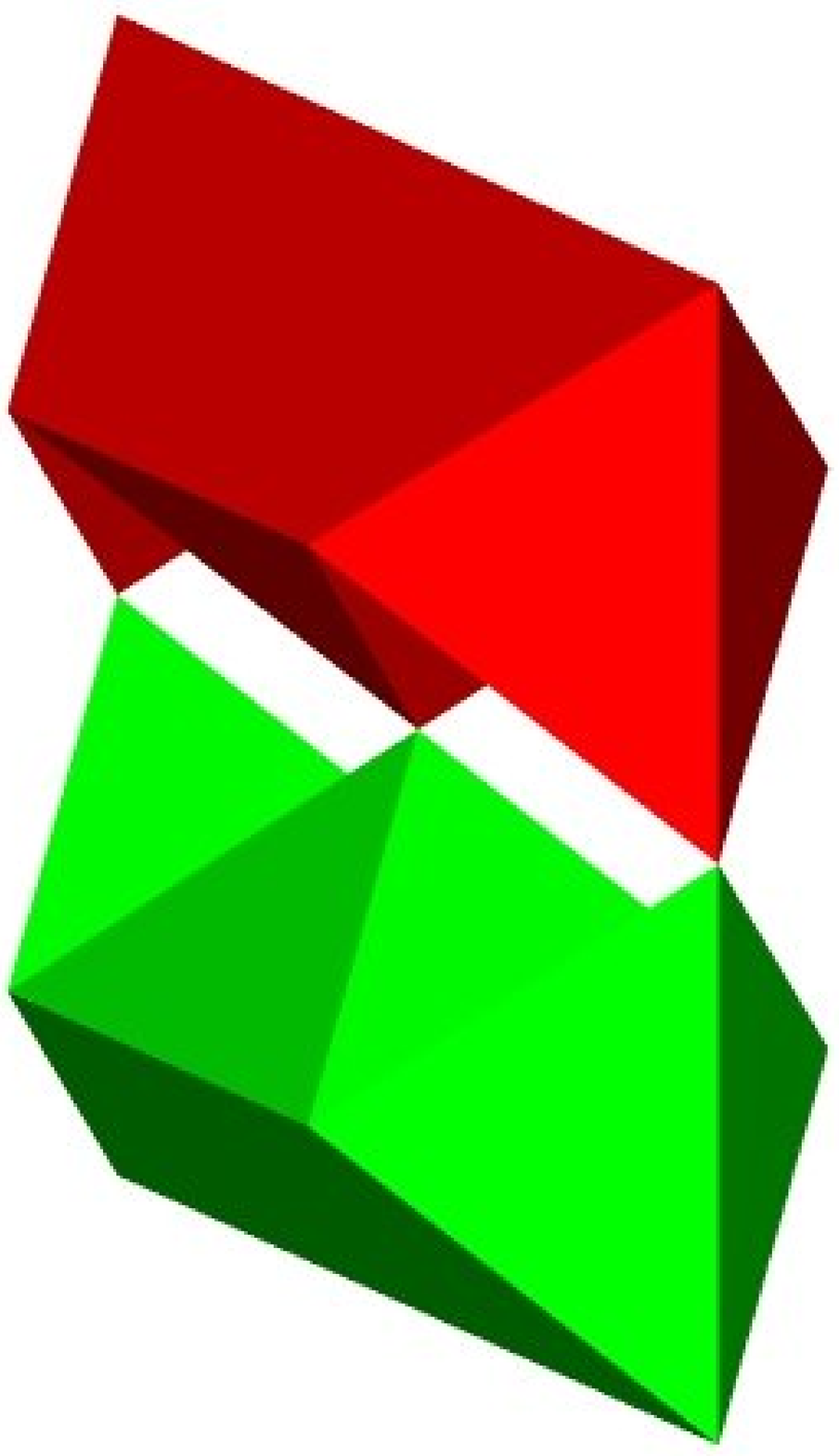,height=3.5cm,silent=} &
      \epsfig{figure=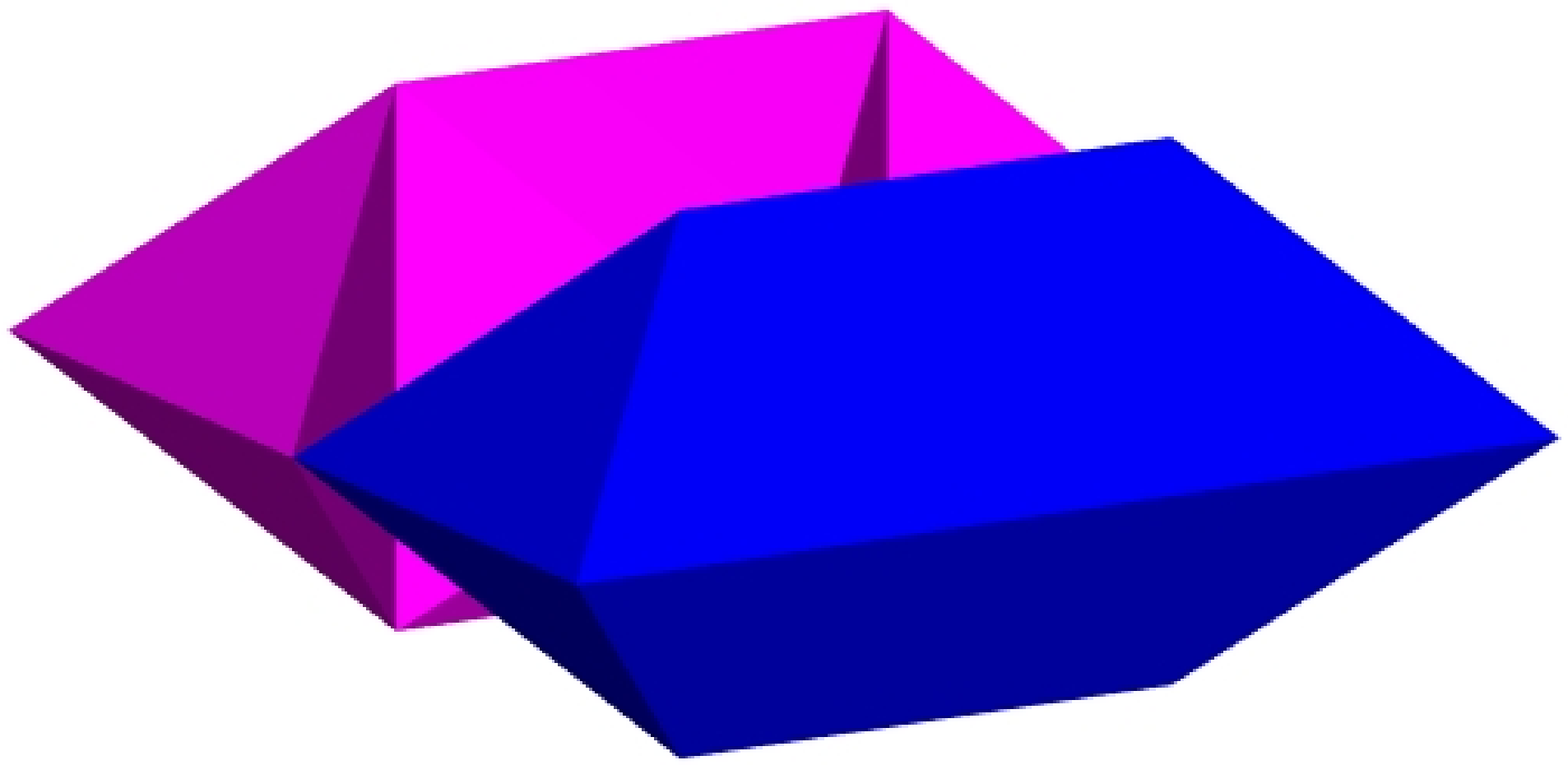,height=3.5cm,silent=} &
      \epsfig{figure=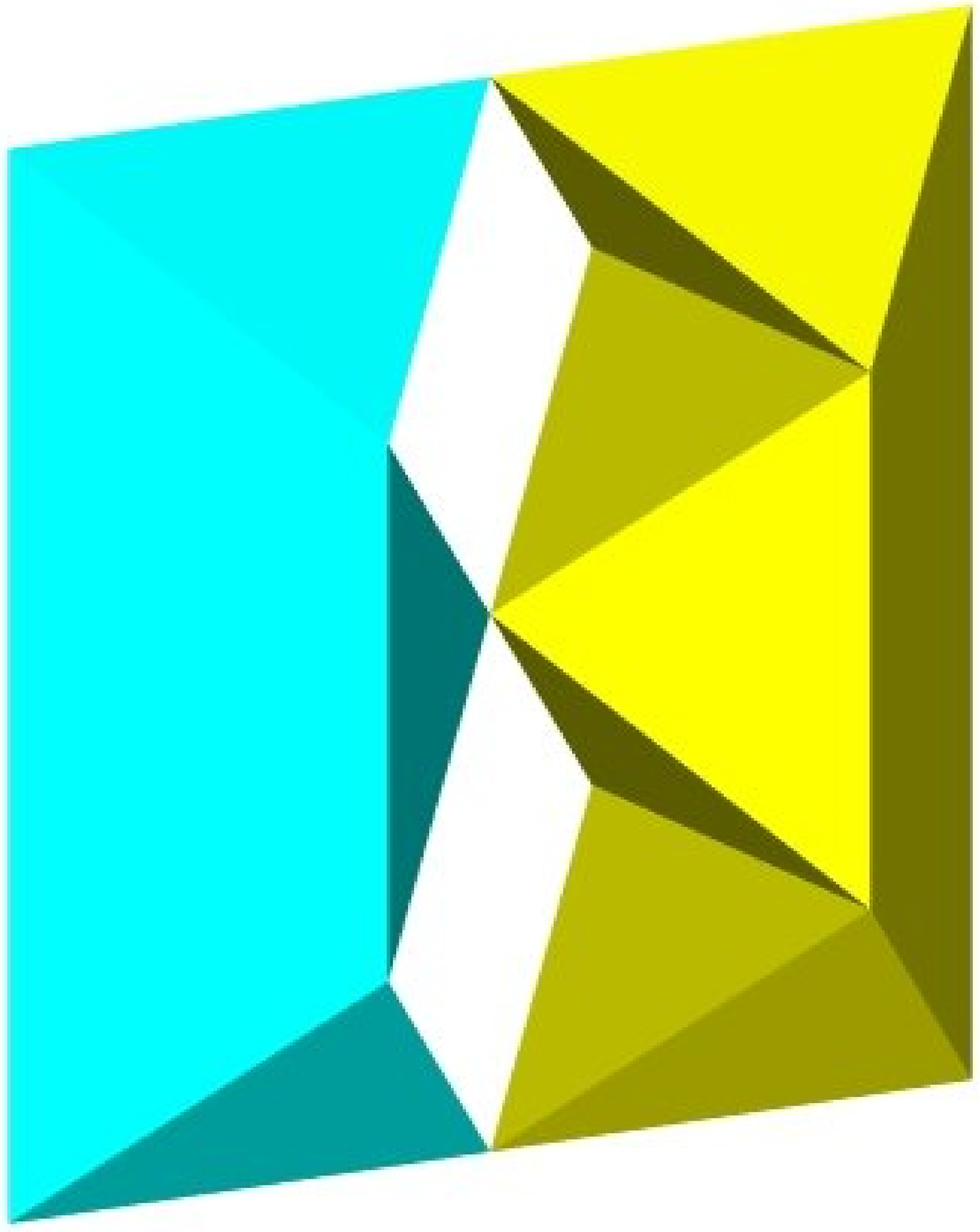,height=3.5cm,silent=}\\
      (a) & (b) & (c) & (d)
    \end{tabular}
  }
  \caption[The Split Star assembly]%
          {\capStyle{(a) The Split Star assembly, and (b),(c), and (d)
              the Split Star  six parts divided into three pairs of 
           symmetric parts. The six parts are named according to their
           color $R$(ed), $G$(reen), $B$(lue), $P$(urple), $Y$(ellow),
           and $T$(urquoise).}}
  \label{fig:split-star}
\end{figure*}

Assembly partitioning\index{assembly partitioning} with an infinite
translation is the application of an infinite translation to partition
an assembled product into two complementing subsets of parts, referred
to as subassemblies, each treated as a rigid body. We present an
exact implementation of an efficient algorithm based on the general
framework described in~\cite{hlw-gfapm-00,wl-grma-94} to obtain such a
motion and subassemblies given an assembly of polyhedra in $\rrr$. As
usual, we do not assume general position. Namely, we handle degenerate
input, and produce exact results. As often occurs, motions that
partition a given assembly or subassembly might be isolated in the
infinite space of motions. Any perturbation of the input or of
intermediate results, caused by, for example, imprecision, might
result with dismissal of valid partitioning-motions. In the extreme
case, there is only a finite number of valid partitioning motions, as
occurs in the assembly shown in Figure~\ref{fig:split-star}, no motion
may be found, even though such exists. Proper handling requires
significant enhancements applied to the original algorithmic
framework. The implementation is based on software components
introduced in Chapters~\ref{chap:aos}
and~\ref{chap:mink-sums-construction}. They paved the way to a
complete, efficient, and concise implementation.
\section{Introduction}
\label{sec:assem_plan:intro}
Assembly planning\index{planning!assembly} is the problem of finding a
sequence of motions that transform the initially separated parts of an
assembly to form the assembled product. The reverse order of
sequenced motions separates an assembled product to its parts. Thus,
for rigid parts, assembly planning and disassembly planning refer to
the same problem, and used interchangeably. In this chapter we
concentrate on the case where the assembly consists of polyhedra in
$\mathbb{R}^3$ and the motions are infinite translations.

The \Index{motion space} is the space of possible motions that
assembly parts may undergo. For each motion in a motion space, a
subset of parts of a given assembly may collide with a different
subassembly, when transformed as a rigid body according to the
motion. Pairs of subassemblies that collide constitute constraints.
The motion space approach dictates the precomputation of a 
decomposition of a motion space into regions, such that the constraints
among the parts in the assembly are the same for all the motions in the
same region. All constraints over a region are represented by a graph,
called the {\em \Index{directional blocking graph}}
(DBG)\index{DBG|see{directional blocking graph}}~\cite{wl-grma-94}.
The collection of all regions in a motion space together with their
associated DBGs, collectively referred to as the
{\em \Index{nondirectional blocking graph}}
(NDBG)\index{NDBG|see{nondirectional blocking graph}}, can be used to
obtain assembly (or disassembly) sequences.

The general framework and some of the techniques presented here have already
been described in a series of papers and reports published in the past
mainly during the late 90's. Halperin, Latombe, and Wilson made the
connection between previously presented techniques that had used the
motion space approach, and introduced a unified general
framework~\cite{hlw-gfapm-00} at the end of the previous millennium. Only
few publications related to this topic appeared ever since, to
the best of our knowledge, which creates a long gap in the time line of the
respective research. We certainly hope that the tools exposed in this chapter
will help revive the research on algorithmic assembly planing, a
research subject of considerable importance. Moreover, we believe that
the machinery presented here, together with other recent advances in
the practice of computational-geometry algorithms, can more generally
support the development of new and improved techniques in
{\em \Index{algorithmic automation}}~\citelinks{algorithmic-automation}.

Solution to the assembly planning problem enables better feedback to
designers. It provides a design team with additional tools to asses a
design, prior to the construction of physical mock-ups, and helps
creating products that are more cost-effective to manufacture and
maintain. This is emphasized in light of the strategy to ``plan
anywhere, build anywhere'' many Computer Aided Design (CAD) and
Computer Aided Manufacturing (CAM) companies are trying to
adopt. Assembly sequences are also useful for selecting assembly tools
and equipment, and for laying out manufacturing facilities.

We restrict ourselves to two-handed partitioning steps,
meaning that we partition the given assembly into two complementing
subsets each treated as a rigid body. Even for two-handed
partitioning, if we allow arbitrary translational motions
(and not restrict ourselves to infinite translations) the
problem is NP-hard~\cite{kk-ppatc-95}. The
more general problem of assembly sequencing, namely
planning a total ordering of assembly operations that merge
the individual parts into the assembled product, is
PSPACE-hard, even when each part is a polygon with a
constant number of vertices~\cite{n-ps-88}.

Notice that the problem that we address in this chapter, namely
partitioning with {\em infinite translations}, is technically
considerably more complex than partitioning with {\em infinitesimal
motions}. Although the latter may sound more general, as it handles
infinitesimal translations and {\em rotations}, it is far simpler to
implement, since it deals only with constraints that can be described
linearly. Thus, the problem can be reduced to
solving linear programs. Indeed, there are several implementations
for partitioning with infinitesimal motions (see, e.g.,
\cite{ghhlw-papum-98,ss-ocbta-94}),
but none that we are aware of dealing robustly with infinite
translations. The shortcoming of using infinitesimal motions only is
that suggested disassembly moves may not be extendible to practical
finite-length separation motions.

Infinite-translation partitioning was not fully robustly implemented
until recently, in spite of being more useful than infinitesimal
partitioning, most probably due to the hardship of accurately
constructing the underlying geometric primitives. What enables the
solution that we present here, is the significant headway in the
development of computational-geometry software over the past decade,
the availability of stable code in the form of the Computational
Geometry Algorithms Library (\cgal{}) in general and code for
Minkowski sums\index{Minkowski sum} and
arrangements\index{arrangement} in particular~\cite{wfzh-aptac-07}.

The implementation presented in this chapter is based on the \aos{}
package of \cgal{}; see Chapter~\ref{chap:aos} for more details. The
implementation uses in particular arrangements of geodesic arcs
embedded on the sphere; see Section~\ref{sec:aos:geodesics}. The
ability to robustly construct such arrangements, and carry out exact
operations on them using only (exact) rational arithmetic is a key
property that enables an efficient implementation. The implementation
exploits supported operations, and requires additional operations,
e.g., \Index{central projection} of polyhedra, which we implemented.
We plan to make these new components available as part of a future
public release of \cgal{} as well. 

\subsection{Split Star Puzzle}
\label{ssec:assem_plan:split-star}
We use the assembly depicted in Figure~\ref{fig:split-star} as a running
example throughout the chapter. The name ``Split Star'' was given to this
shape by Stewart Coffin in one of his Puzzle Craft booklet editions back
in 1985. He uses the term {\em \Index{puzzle}} to refer to any sort of
geometric recreation having pieces that come apart and fit back together.
We use it as an assembly. He describes how to produce the actual pieces
out of wood~\cite{c-gpd-06}, and suggests that they are made very accurately.
He observes that finding the solution requires dexterity and patience,
when the pieces are accurately made with a tight fit. Even though the
assembly seems relatively simple, this should come as little surprise,
since the first partitioning motion is one out of only eight possible
translations of four symmetric pairs of motions in opposite directions
associated with two complementing subassemblies of three parts each.
Evidently, any automatic process that introduces even the slightest error
along the way, will most likely fail to compute the correct first motion
in the sequence, and falsely claim that the assembly is interlocked.

\begin{wrapfigure}[9]{l}{5.5cm}
  \begin{tabular}{cc}
    \multirow{2}*[0pt]{\epsfig{figure=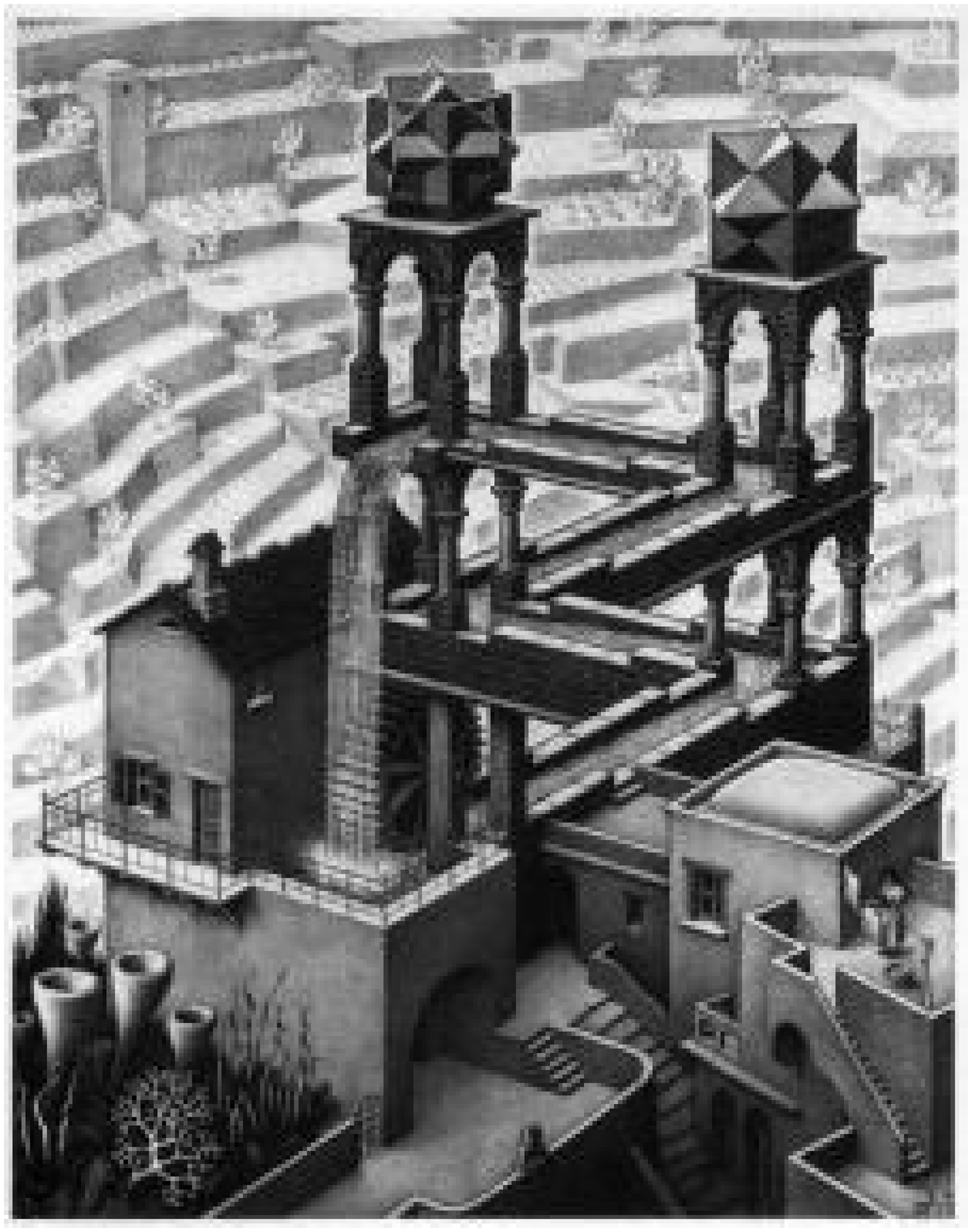,width=2.9cm,silent=}}\vspace{-18pt}\\
    & \epsfig{figure=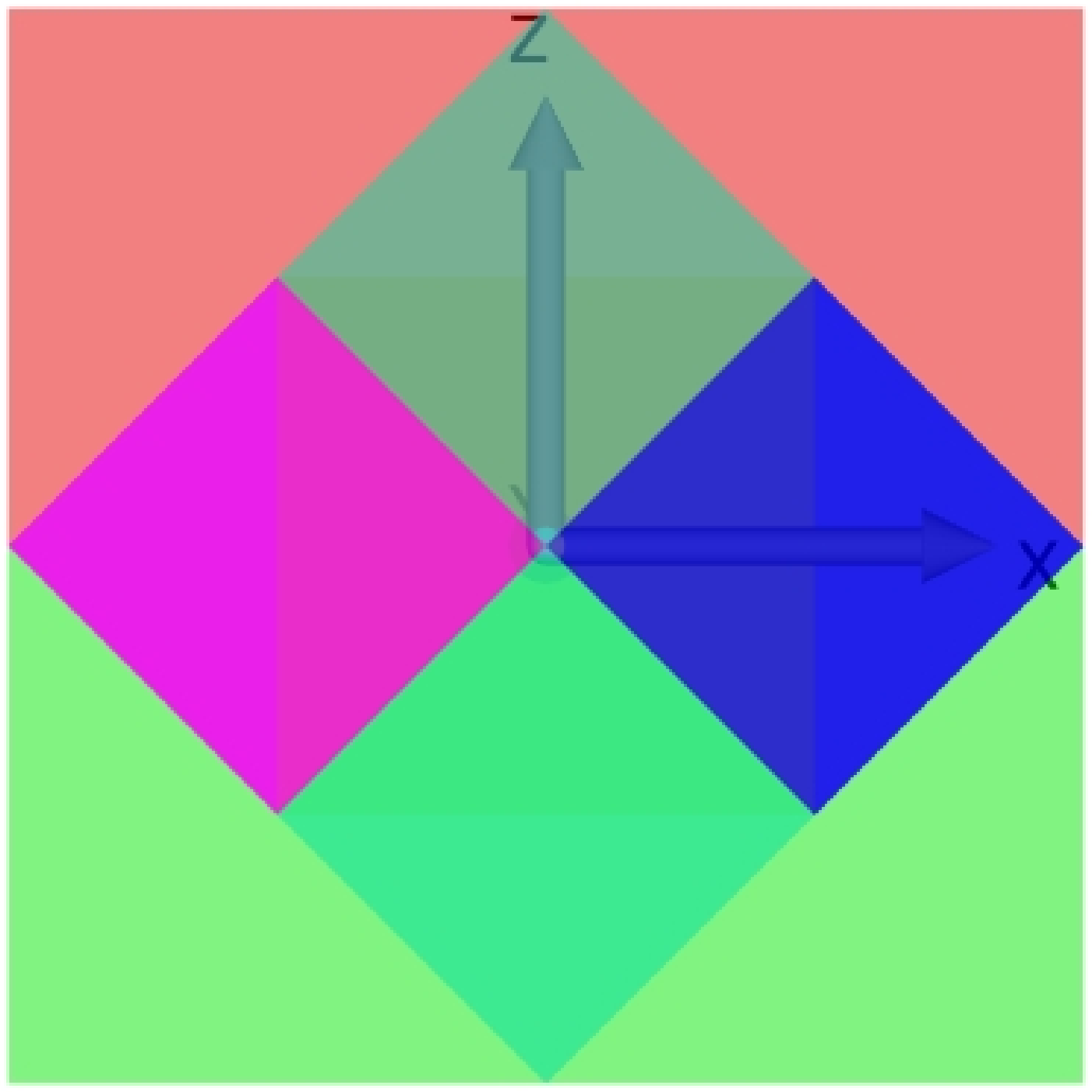,height=2cm,silent=}\vspace{-8pt}\\
    & \epsfig{figure=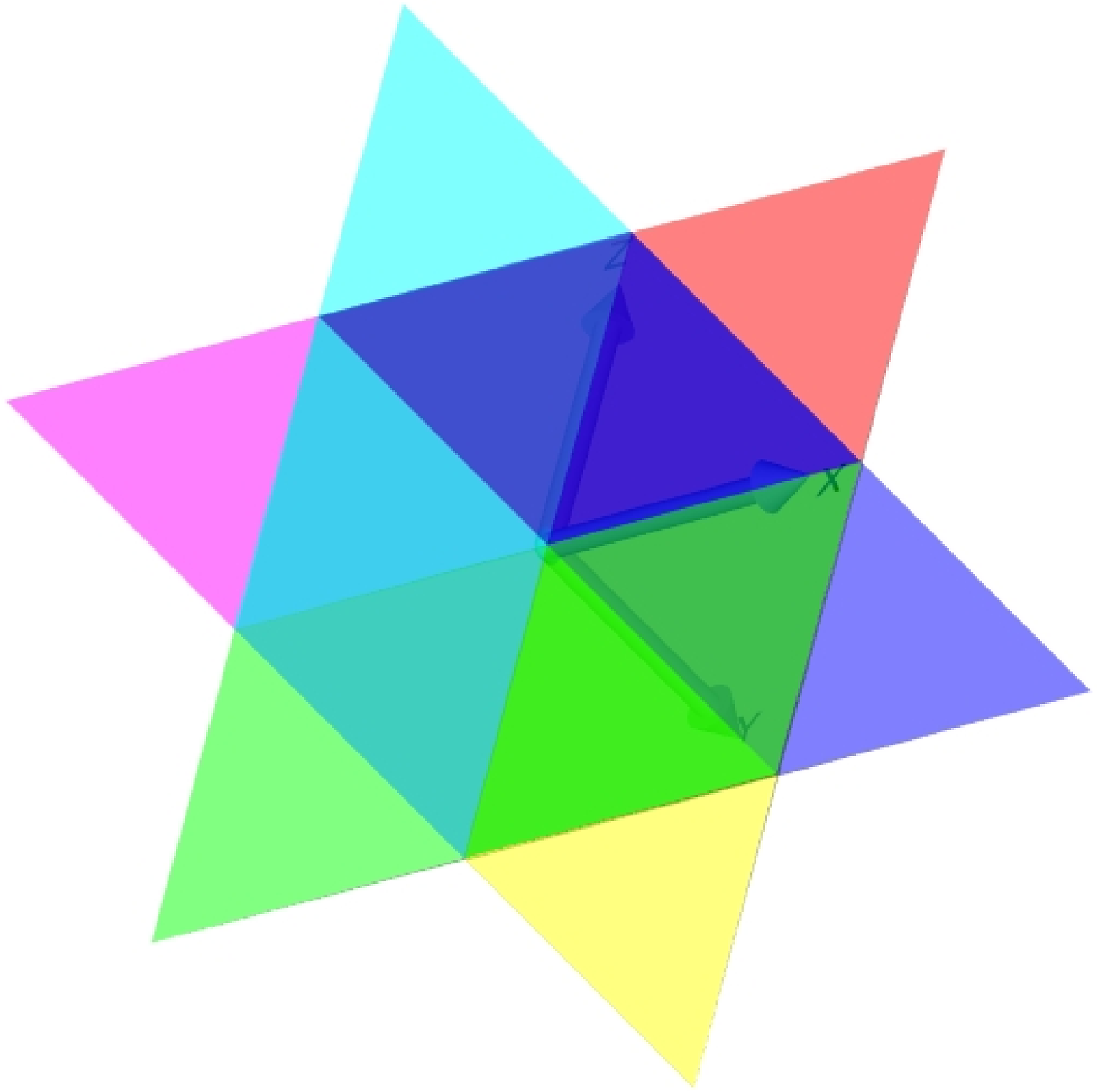,height=2cm,silent=}
  \end{tabular}
\end{wrapfigure}
The \Index{Split Star} assembly has the assembled shape of the first
stellation of the rhombic
dodecahedron\index{dodecahedron!rhombic}~\cite{l-srd-57}, illustrated 
atop the right pedestal in M.~C.~Escher's Waterfall
woodcut~\cite{b-ehlcg-82}. Its orthogonal projection along one of its
fourfold axes of symmetry is a square, while the Star of David is
obtained when it is projected along one of its threefold axes of
symmetry, as seen on the left; for more details see~\cite{c-gpd-06}.
The assembly is a space-filling solid when assembled. It consists of
six identical concave parts. Each part can be decomposed into eight
identical tetrahedra yielding 48 tetrahedra in total. As manufacturing
the pieces requires extreme precision, it is suggested to produce the
48 identical pieces and glue them as necessary. Each part can also be
decomposed into three convex polytopes --- two square pyramids and one
octahedron, yielding 18 polytopes in total. The partitioning described
in this chapter requires the decomposition of the parts into convex
pieces. The choice of decomposition may have a drastic impact on the
time consumption of the entire process, as observed in a different
study in $\rr$~\cite{afh-pdecm-02}, and shown by experiments in
Section~\ref{sec:assem_plan:experiments}.

\subsection{Chapter Outline}
\label{ssec:assem_plan:outline}
The rest of this chapter is organized as follows. The partitioning
algorithm is described in Section~\ref{sec:assem_plan:algorithm} along
with the necessary terms and definitions. In
Section~\ref{sec:assem_plan:implementation} we provide implementation
details. Section~\ref{sec:assem_plan:optimization} presents
optimizations that are not discussed in the preceding sections, some
of which we have already implemented and proved to be useful. We
report on experimental results in Section
\ref{sec:assem_plan:experiments}.

\section{The Partitioning Algorithm}
\label{sec:assem_plan:algorithm}
The main problem we address in this chapter, namely,
{\em polyhedral assembly partitioning with infinite translations},
is formally defined as follows: Let $A = \{P_1, P_2, \ldots, P_n\}$ 
be a set of $n$ pairwise interior disjoint polyhedra in $\rrr$.
$A$ denotes the assembly that we aim to partition. We
look for a proper subset $S\subset A$ and a direction
$\vecd$ in $\rrr$, such that $S$ can be moved as a rigid
body to infinity along $\vecd$ without colliding with the
rest of the parts of the assembly $A\setminus S$. (We allow
sliding motion of one part over the other. We disallow the
intersection of the interior of two polyhedra.)

We follow the NDBG approach~\cite{wl-grma-94}, and describe it
here using the general formulation and notation of
\cite{hlw-gfapm-00}. The {\em \Index{motion space}} in our case,
namely the space of all possible partitioning directions, is
represented by the unit sphere $\spheretwo$. Every point $p$ on
$\spheretwo$ defines the direction from the center of $\spheretwo$
towards $p$. Every direction $\vecd$ is associated with the directed
graph $DBG(\vecd) = (V,E)$ that encodes the blocking relations between
the parts in $A$ when moved along $\vecd$ as follows: The nodes in $V$
correspond to polyhedra in $A$; we denote a node corresponding to the
polyhedron $P_i$ by $v(P_i) \in V$. There is an edge directed from
$v(P_i)$ to $v(P_j)$, denoted $e(P_i,P_j) \in E$, if and only if the
interior of the polyhedron $P_i$ intersects the interior of the
polyhedron $P_j$ when $P_i$ is moved to infinity along the direction
$\vecd$, and $P_j$ remains stationary.

The key idea behind the NDBG approach is that in problems such as ours,
where the number of parts is finite, and any allowable partitioning
motion can be described by a small number of parameters, there is only
a relatively small (polynomial) number of distinct DBGs that need to
be constructed in order to detect a possible partitioning direction.
Stated differently, the motion space can be represented by an arrangement
comprising a finite number of cells each assigned with a fixed DBG.
Once this arrangement is constructed, we construct the DBG over each cell
of the arrangement, and check it for strong connectivity. A DBG that is
not strongly connected is associated with a direction, or a set of
directions in case the cell is not a singular point, that partition the
given assembly. The desired movable subset $S\subset A$ is a byproduct
of the algorithm that checks for strong connectivity. If all the DBGs
over all the cells of the arrangement are strongly connected, 
we conclude that the assembly is {\em interlocked}, as a subset of the
parts in $A$ that can be separated from the rest of the assembly by an
infinite translation does not exist.

Next we show how to construct the motion-space arrangement and compute
the DBG over each one of the arrangement cells. Each ordered pair of
distinct polyhedra $<\!P_i,P_j\!>$ defines a region $Q_{ij}$ on
$\spheretwo$, which is the union of all the directions $\vecd$,
such that when $P_i$ is moved along $\vecd$ its interior will
intersect the interior of $P_j$. How can we effectively compute
this region? Let $M_{ij}$ denote the Minkowski sum
$P_j\oplus(-P_i) = \{b-a\,|\,a\in P_i,\, b\in P_j\}$. We claim that
the \Index{central projection} of $M_{ij}$ onto $\spheretwo$ is exactly $Q_{ij}$.

\begin{lemma}
A direction $\vecd$ is in the interior of the central projection of
$M_{ij}$ onto $\spheretwo$ if and only if when $P_i$ is moved along
$\vecd$ its interior will intersect the interior of $P_j$.
\end{lemma}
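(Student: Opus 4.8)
The claim is a clean ``iff'' that relates a translational collision to membership in a central projection, so the plan is to unwind both sides into the same statement about Minkowski sums. First I would recall the definitions: moving $P_i$ along the direction $\vecd$ (to infinity) means considering the family of translates $P_i^{t\vecd}$ for $t \ge 0$ (or $t$ in some initial interval), and the ``interior intersects interior'' condition for some such $t$ is, by the standard Minkowski-sum characterization stated in Chapter~\ref{chap:intro} and reused in Section~\ref{sec:mscn:3d_col_det}, equivalent to $t\vecd \in \mathrm{int}(P_j \oplus (-P_i)) = \mathrm{int}(M_{ij})$ for some $t > 0$. So the right-hand side of the lemma is equivalent to: the open ray $\{t\vecd : t > 0\}$ meets $\mathrm{int}(M_{ij})$.

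Next I would identify the left-hand side. The central projection of $M_{ij}$ onto $\spheretwo$ is the set of directions $\vecd$ such that the ray $\{t\vecd : t > 0\}$ meets $M_{ij}$ (here one should note $0 \in M_{ij}$ since $P_i, P_j$ are interior-disjoint — actually $0$ may be on the boundary of $M_{ij}$; this is exactly the subtlety to be careful about). The interior of this projected set consists of those $\vecd$ for which the open ray meets $\mathrm{int}(M_{ij})$. This is where the main work lies: I must show that a direction $\vecd$ is in the \emph{interior} of the spherical region $Q_{ij} = $ (central projection of $M_{ij}$) if and only if the open ray in direction $\vecd$ pierces the \emph{interior} of $M_{ij}$, rather than merely touching $M_{ij}$ on its boundary. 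For one direction: if the open ray meets $\mathrm{int}(M_{ij})$, then by openness a whole neighborhood of directions also has rays meeting $\mathrm{int}(M_{ij}) \subseteq M_{ij}$, so $\vecd \in \mathrm{int}(Q_{ij})$. For the converse: if $\vecd \in \mathrm{int}(Q_{ij})$, then a neighborhood of directions all have rays meeting $M_{ij}$; using convexity of $M_{ij}$ (it is a Minkowski sum of two convex polytopes, hence convex) one argues that the ray in direction $\vecd$ cannot be tangent to $M_{ij}$ — a supporting-hyperplane argument shows a tangent ray would leave some nearby directions outside $Q_{ij}$ — so it must enter $\mathrm{int}(M_{ij})$.

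The hard part will be handling the boundary cases cleanly, in particular the status of the origin (the apex of the central projection) and rays that graze $M_{ij}$ along a face: I expect the convexity of $M_{ij}$ to be the crucial lever, letting me invoke a separating/supporting hyperplane whenever a ray touches $M_{ij}$ without penetrating it, and then conclude that the set of such ``grazing'' directions has empty interior on the sphere. I would organize the final write-up as: (1) reduce RHS to ``open ray meets $\mathrm{int}(M_{ij})$'' via the Minkowski-sum collision characterization; (2) observe $Q_{ij}$ equals ``open ray meets $M_{ij}$'' and establish $\mathrm{int}(Q_{ij}) = \{\vecd : \text{open ray meets } \mathrm{int}(M_{ij})\}$ using convexity and a supporting-hyperplane argument; (3) combine. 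Steps (1) and (3) are routine; step (2) is the geometric heart of the argument.
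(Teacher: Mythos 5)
Your step (1) is, in essence, the entirety of the paper's own proof: the paper simply unwinds $m = p_j - p_i = s\cdot\vecd$ to translate membership in the central projection into a point-to-point collision, and leaves both the ``interior'' qualifiers and the converse to the reader. So you have correctly located where the remaining work lies, and your forward direction of step (2) (open ray meets $\mathrm{int}(M_{ij})$ $\Rightarrow$ a whole neighborhood of directions lies in $Q_{ij}$) is sound.

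The gap is in the converse of step (2). You justify it with a supporting-hyperplane argument resting on the assertion that $M_{ij}$ is ``a Minkowski sum of two convex polytopes, hence convex.'' That is not the setting of this lemma: here $P_i$ and $P_j$ are full parts of the assembly, which are in general non-convex (the Split Star parts are concave; convexity is only available after the decomposition phase, for the sub-part sums $M^{ij}_{k\ell}$). Hence $M_{ij} = P_j \oplus (-P_i)$ is only a finite union of convex polytopes, and $Q_{ij}$ is a union of spherically convex patches rather than a single convex one. A ray can be tangent to every individual convex piece while every nearby ray still hits the union --- picture two wedges of $M_{ij}$ meeting along a single edge that the ray crosses transversally through the pinch --- so a supporting plane for one piece tells you nothing about the directions that escape the whole of $Q_{ij}$, and the converse does not follow. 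To close it you would need an argument at the level of the union, e.g.\ showing that for a regular closed polyhedral set the directions whose rays meet $M_{ij}$ only in $\partial M_{ij}$ cannot fill an open patch of $\mathrm{int}(Q_{ij})$, or an analysis of how the convex pieces $M^{ij}_{k\ell}$ can abut; as part of that you should check whether the statement with interiors actually survives such pinch configurations. The paper sidesteps all of this by proving only the version without interiors.
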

\begin{proof}
Let $\vecd$ be some direction in the central projection of $M_{ij}$ onto
$\spheretwo$. In other words, there exists a point $m \in M_{ij}$, such
that $m = s \cdot \vecd$, for some positive scalar $s$. As $m$ is in
$M_{ij}$, there exist two points $p_i \in P_i$ and $p_j \in P_j$, such
that $m = p_j - p_i$. Thus, $p_j = p_i + s \cdot \vecd$, meaning that the
point $p_i$ intersects $p_j$ when moved along $\vecd$. A similar argument
can be used to show the converse.
\end{proof}

Next, we describe how, given two polyhedra $P_i$ and $P_j$, we
compute the region $Q_{ij}$, using robust and efficient
hierarchy of building blocks, which we have developed in recent years.
The existing tools that we use are 
(i) computing the arrangement of spherical polygons
\cite{bfhmw-scmtd-07,fsh-agas-08,fsh-eiaga-08,wfzh-aptac-07}, and
(ii) construction of Minkowski sums of convex polytopes
\cite{bfhmw-scmtd-07,fh-eecms-07,fsh-agas-08,fsh-eiaga-08}.
We also need some extra machinery, as explained below.

Assume $P_i$ is given as the union of a collection of (not
necessarily disjoint) convex polytopes $P^i_1,P^i_2, \ldots,P^i_{m_i}$,
and similarly $P_j$ is given as the collection of convex polytopes
$P^j_1,P^j_2,\ldots,P^j_{m_j}$. It is easily verified that
$M_{ij} = \bigcup_{k=1,\ldots,m_i, \ell = 1,\ldots,m_j}
P^j_\ell \oplus(-P^i_k)$.
So we compute the Minkowski sum of each pair $P^j_\ell\oplus(-P^i_k)$,
and centrally project it onto $\spheretwo$. Finally, we take the union
of all these projections to yield $Q_{ij}$.

There are several ways to effectively compute the central projection
of a convex polyhedron $C$ (one of the polytopes
$M^{ij}_{k\ell} = P^j_\ell\oplus(-P^i_k)$) from the origin onto $\spheretwo$.
We opted for the following. An edge $e$ of $C$ is called a
{\em silhouette edge}, if the plane $\pi$ through the origin and $e$
is tangent to $C$ at $e$; namely, $\pi$ intersects $C$ in
$e$ only. We assume for now that no tangent plane contains a facet of $C$;
we relax this assumption in
Section~\ref{ssec:assem_plan:pairwise-sub-part-ms-projection}, where we
provide a detailed description of the procedure. We traverse the edges of
$C$ till we find a silhouette edge $e_1$. One can verify that the
silhouette edges form a cycle on $C$. We start with $e_1$, and search
for a silhouette edge adjacent to $e_1$. We proceed in the same manner,
till we end up discovering $e_1$ again. Projecting this cycle
of edges onto $\spheretwo$ is straightforward.

All the boundaries of the regions $Q_{ij}$ form an arrangement of
geodesic arcs on the sphere. We traverse the motion-space arrangement
in say a breadth-first fashion. For the first face we check which ones
of the regions $Q_{ij}$ contain it. We construct the corresponding DBG
and check it for strong connectivity. If it is not strongly connected,
we stop and announce a solution as described above. Otherwise we move
to an adjacent feature of the current face. During this move we may
have stepped out from a set of regions $Q_{ij}$, and may have stepped
into a new set of regions $Q_{ij}$. We update the current DBG
according to the regions we left or entered, test the new DBG for
strong connectivity, and so on till the traversal of all the
arrangement cells is completed. Notice that it is important to visit
also vertices and edges of the arrangement, since the solution may not
lie in the interior of a face. Indeed, in our Split Star example,
solutions are on vertices of the arrangement. Without careful exact
constructions, such solutions could easily be missed.

\section{Implementation Details}
\label{sec:assem_plan:implementation}
The implementation of the assembly-partitioning operation consists of
eight phases listed below. They all exploit arrangements of geodesic
arcs embedded on the sphere~\cite{bfhmw-scmtd-07,fsh-eiaga-08} in
various ways. The \aos{} package of \cgal{} already supports the
construction and maintenance of such arrangements, the computation of
union of faces of such arrangements, the construction of Gaussian maps
of polyhedra represented by such arrangements, and the computation of
their Minkowski sums. It provides the ground for efficient and generic
implementation of the remaining required operations, such as central
projection.

\begin{enumerate}
\item \textbf{Convex Decomposition}
\item \textbf{Sub-part Gaussian map construction}
\item \textbf{Sub-part Gaussian map reflection}
\item \textbf{Pairwise sub-part Minkowski sum construction}
\item \textbf{Pairwise sub-part Minkowski sum projection}
\item \textbf{Pairwise Minkowski sum projection}
\item \textbf{Motion-space construction}
\item \textbf{Motion-space processing}
\end{enumerate}

The partitioning process is implemented as a free function that
accepts as input an ordered list of polyhedra in $\rrr$, which
are the parts of the assembly. Each part
is represented as a polyhedral mesh in $\rrr$; see
Section~\ref{ssec:mscn:sgm:representation} for a definition. We proceed
with a detailed discussion of the implementation of each phase.

We deal below with various details that are typically ignored in
reports on geometric algorithms (for example, under the general position
assumption). However, in assembly planning, or more generally in
movable-separability problems~\cite{t-mss-85} in tight scenarios,
much of the difficulty shifts exactly to these technical details and
in particular to handling degeneracies. This is especially emphasized
in Phases~5 and~6
(Subsections~\ref{ssec:assem_plan:pairwise-sub-part-ms-projection}
and~\ref{ssec:assem_plan:pairwise-ms-projection} respectively), but
prevails throughout the entire section.

\subsection{Convex Decomposition}
\label{ssec:assem_plan:convex-decomposition}
We decompose each concave part into convex polyhedra referred to as
sub-parts. The output of this phase is an ordered list of parts, where
each part is an ordered list of convex sub-parts represented as
polyhedral surfaces. Each polyhedral surface is maintained as a 
\cgal{} \ccode{Polyhedron\_3}~\cite{cgal:k-ps-07} data structure,
which consists of vertices, edges, and facets and incidence relations
on them~\cite{k-ugpdd-99}. A part that is convex to start with is
simply converted into an object of type \ccode{Polyhedron\_3}.

\begin{figure*}[t]%
  \setlength{\tabcolsep}{3pt}
  \centerline{%
    \begin{tabular}{ccc}
      \epsfig{figure=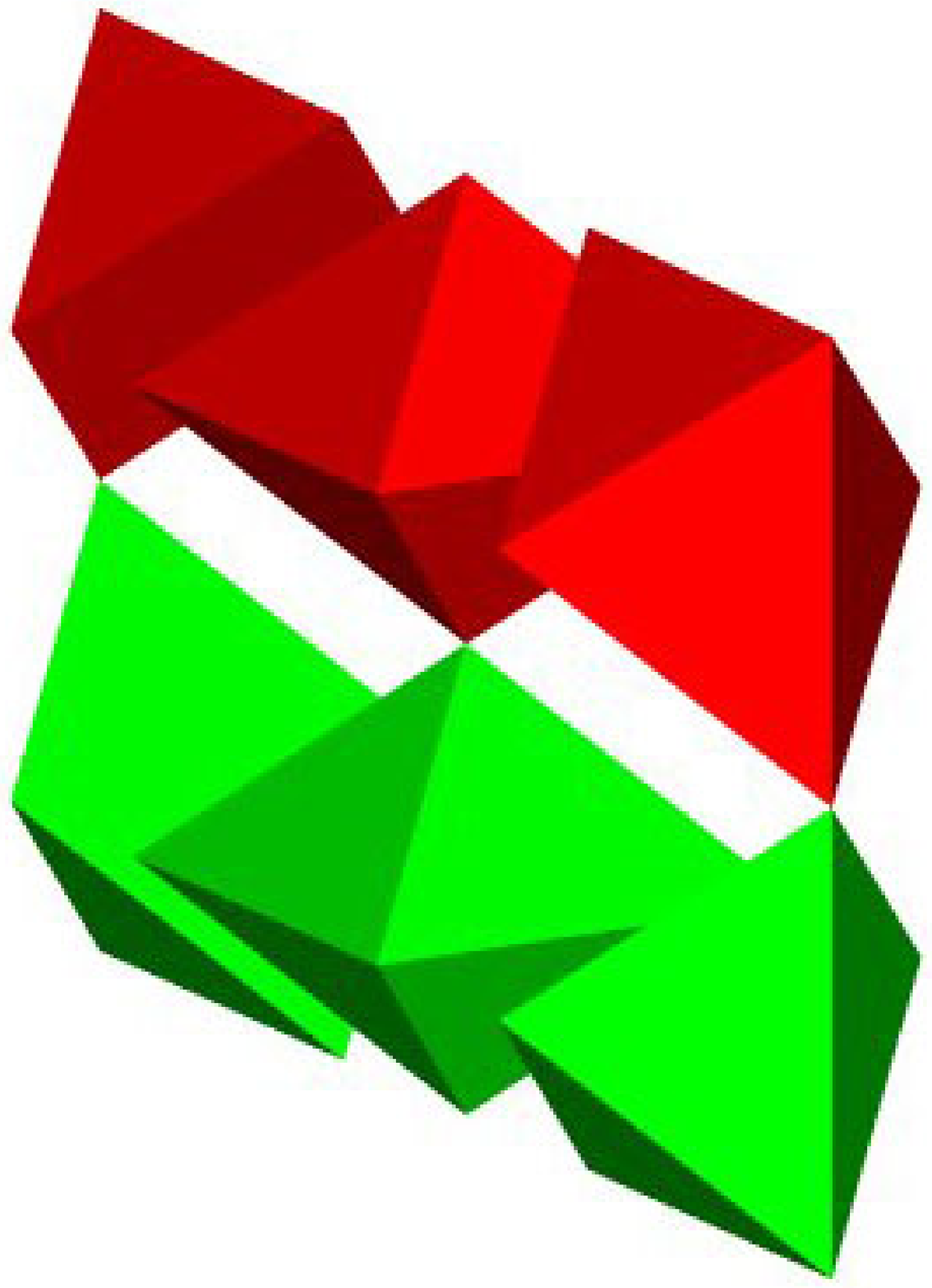,height=5cm,silent=} &
      \epsfig{figure=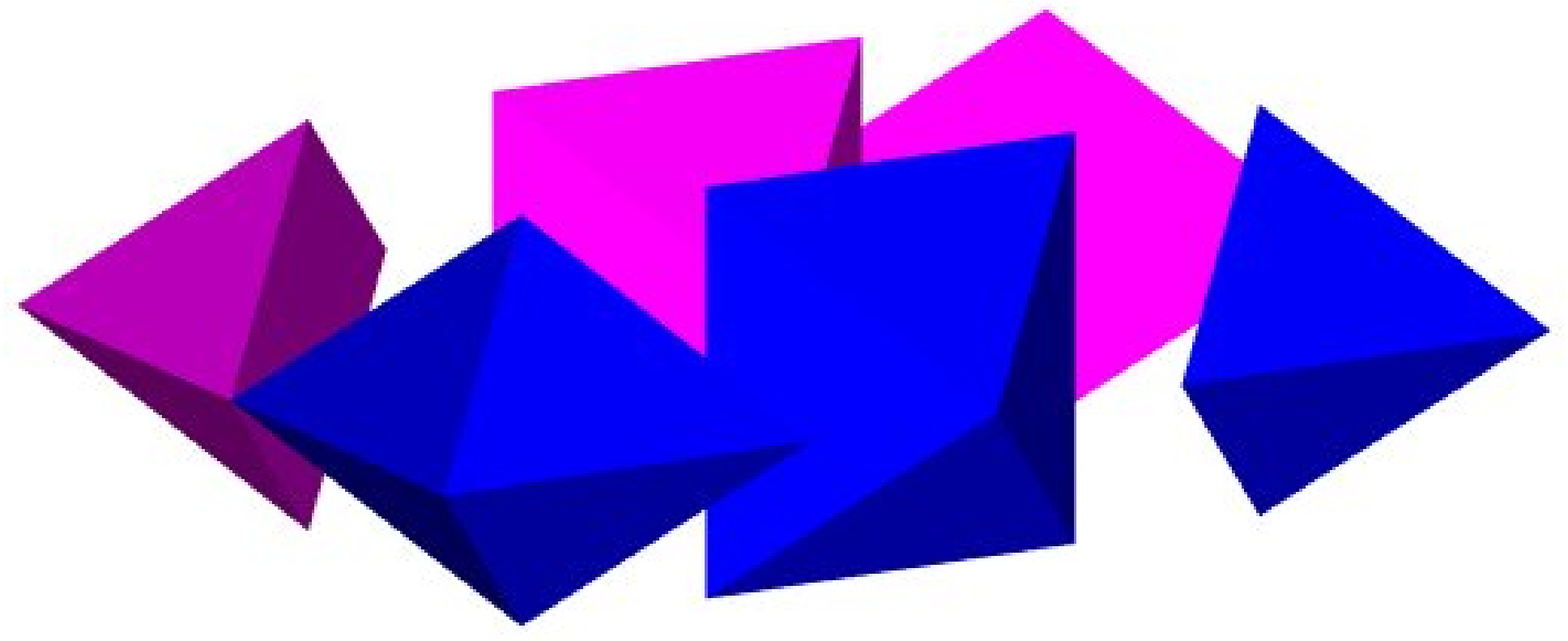,height=5cm,silent=} &
      \epsfig{figure=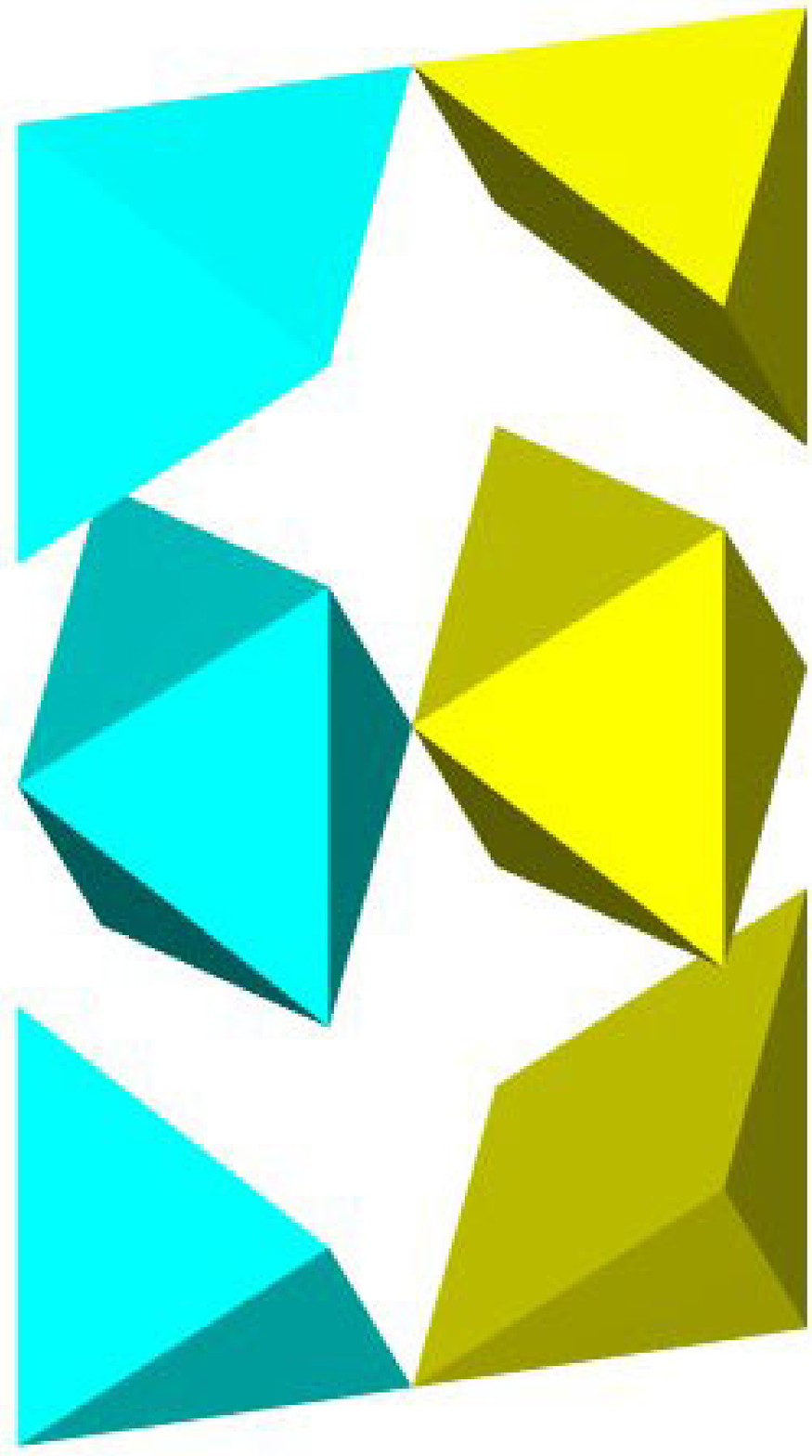,height=5cm,silent=}
    \end{tabular}
  }
  \vspace{-5pt}
  \caption[Decomposition of the Split Star assembly]%
          {\capStyle{The Split Star six parts
           decomposed into three convex sub-parts each.}}
  \label{fig:convex-decomposition}
  \vspace{-5pt}
\end{figure*}
A new package of \cgal{} that supports convex decomposition of
polyhedra has been recently introduced~\cite{h-emspe-07}, but has not
become publicly available yet. As we aim for a fully automatic
process, we intend to exploit such components, once they become
available, and study their impact. For the time being we resorted to a
manual procedure. A simple decomposition of the Split Star parts used
in the running example is illustrated in
Figure~\ref{fig:convex-decomposition}.

\subsection{Sub-part Gaussian Map Construction}
\label{ssec:assem_plan:construction}
We convert each sub-part represented as a polyhedral surface into a
Gaussian map represented as an arrangement of geodesic arcs embedded
on the sphere, where each face $f$ of the arrangement is extended with
the coordinates of its associated primal vertex $v = G^{-1}(f)$,
resulting with a unique representation; see
Section~\ref{sec:mscn:gauss_map} for the exact definition of
Gaussian maps and see Section~\ref{sec:mscn:sgm-method} the exact
procedure to construct one from a polytope.

\begin{figure*}[t]%
  \setlength{\tabcolsep}{3pt}
  \centerline{%
    \begin{tabular}{cc}
      \begin{tabular}{ccc}
        \epsfig{figure=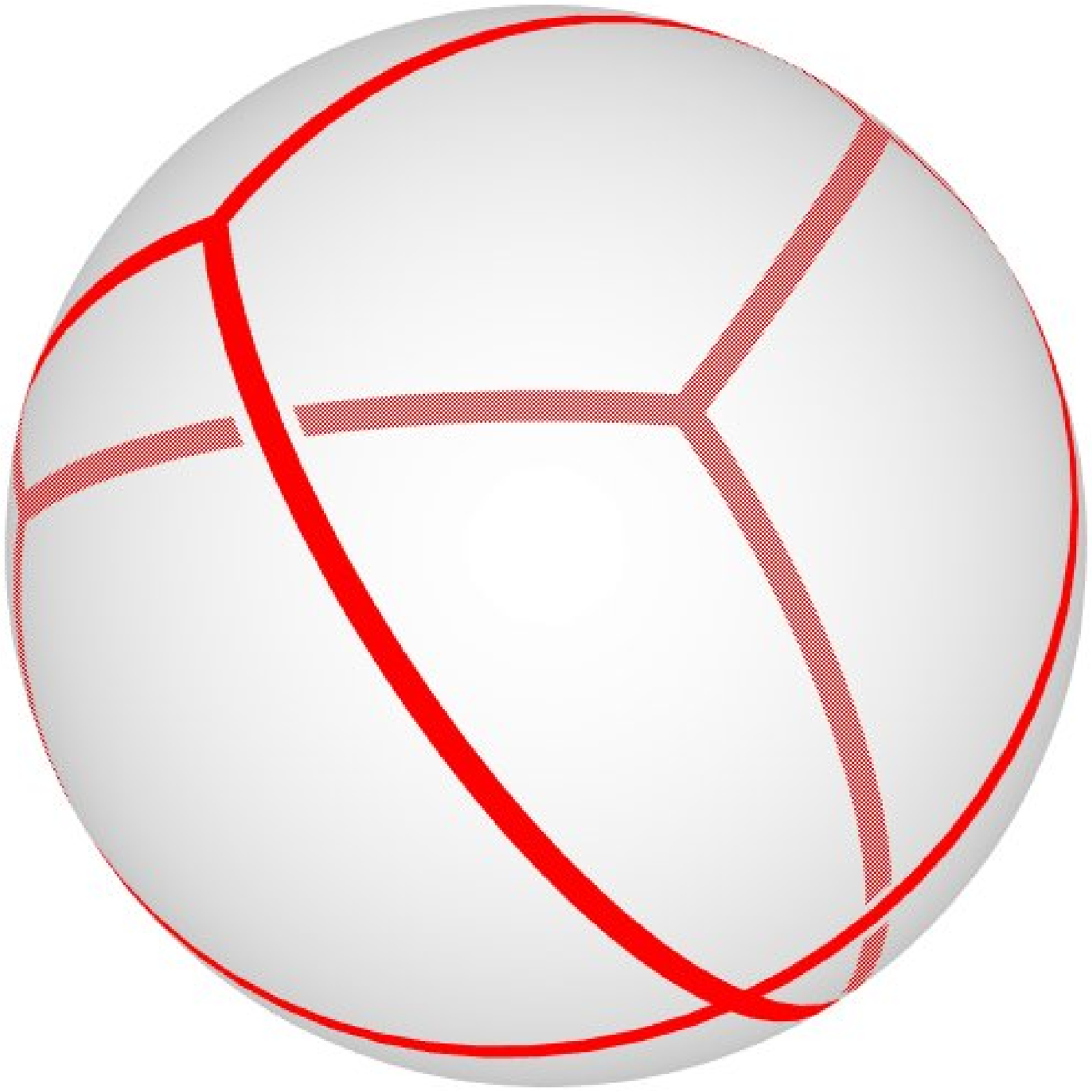,height=1.8cm,silent=} &
        \epsfig{figure=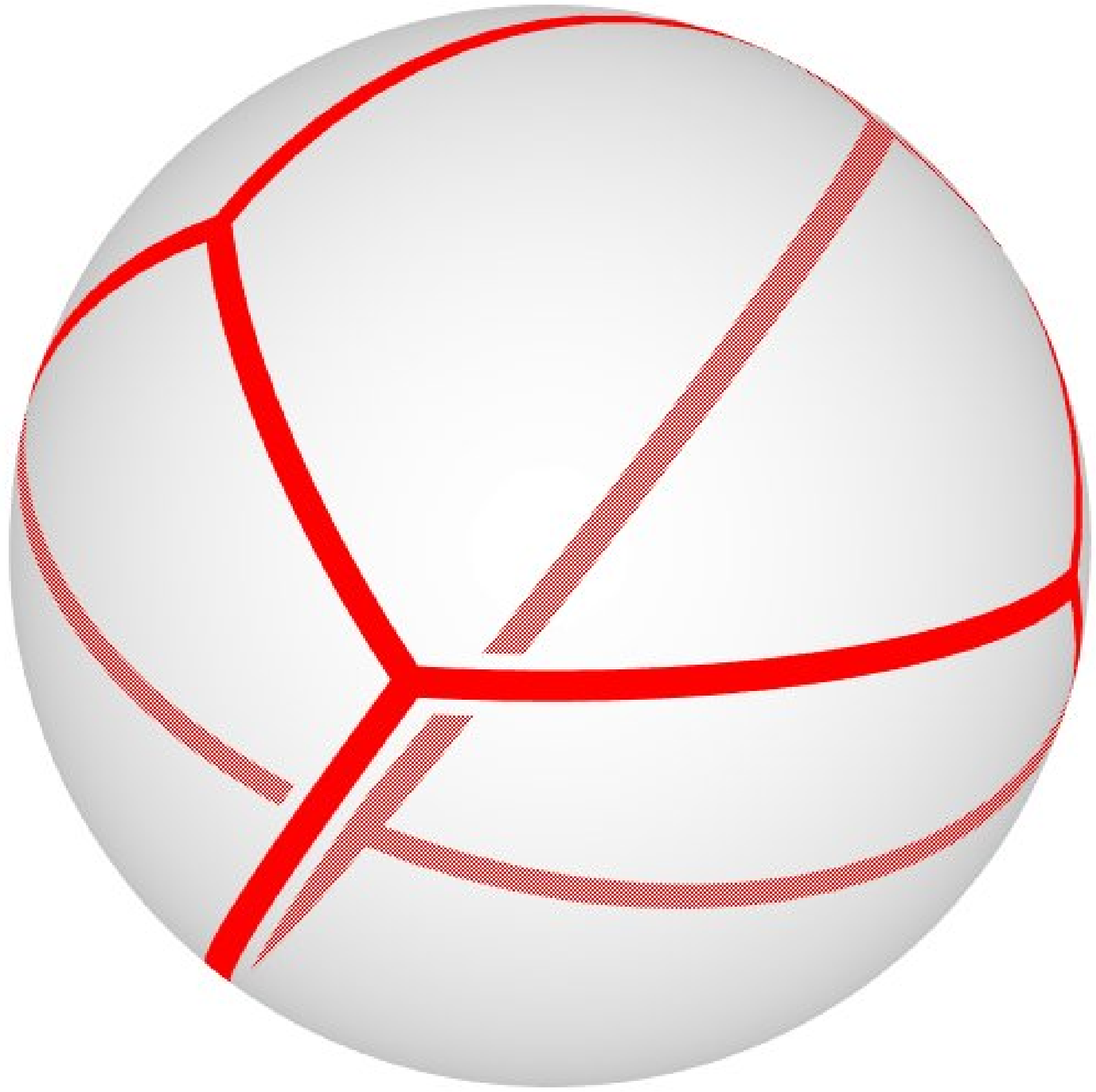,height=1.8cm,silent=} &
        \epsfig{figure=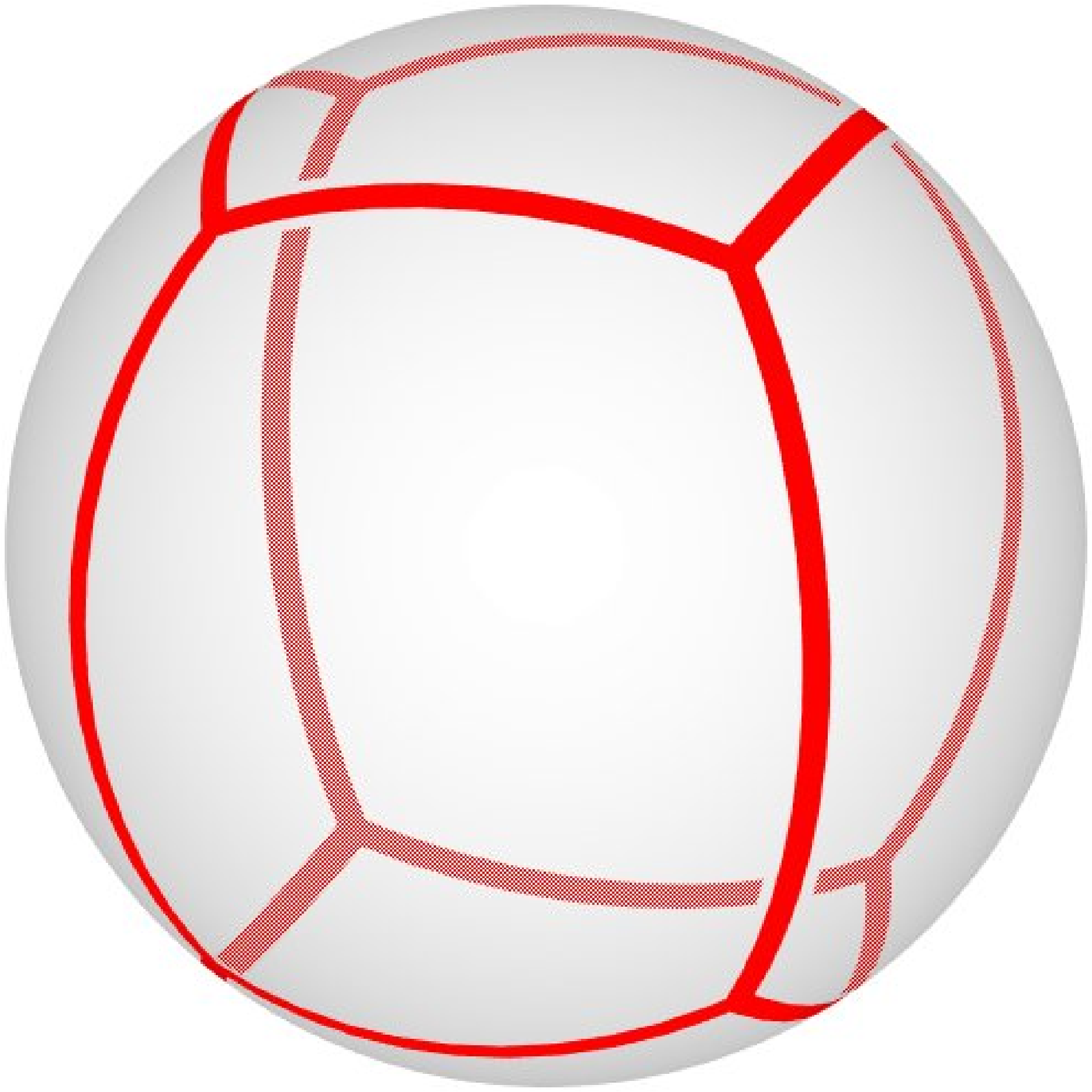,height=1.8cm,silent=}\\
        $R_1$ & $R_2$ & $R_3$
      \end{tabular} &
      \begin{tabular}{ccc}
        \epsfig{figure=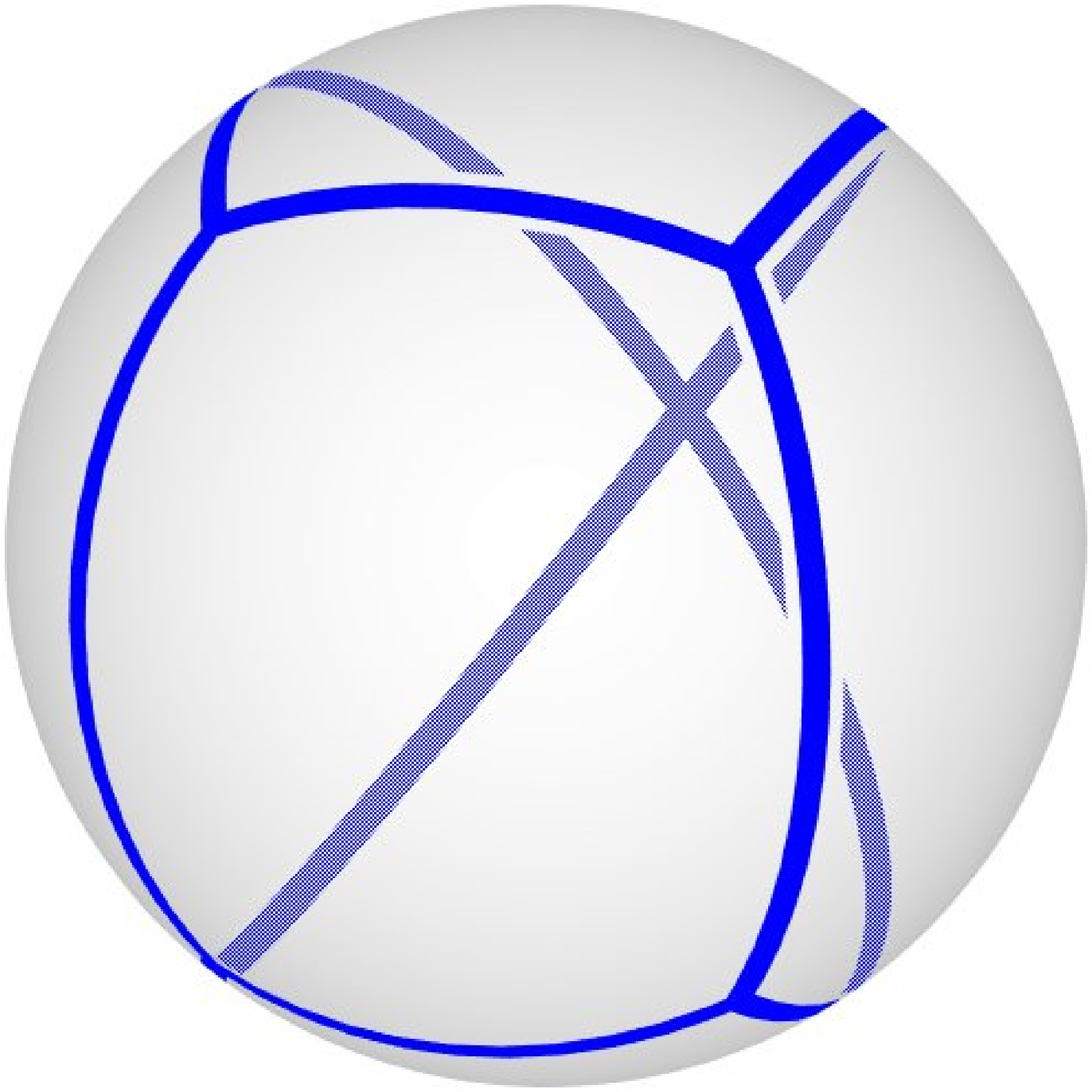,height=1.8cm,silent=} &
        \epsfig{figure=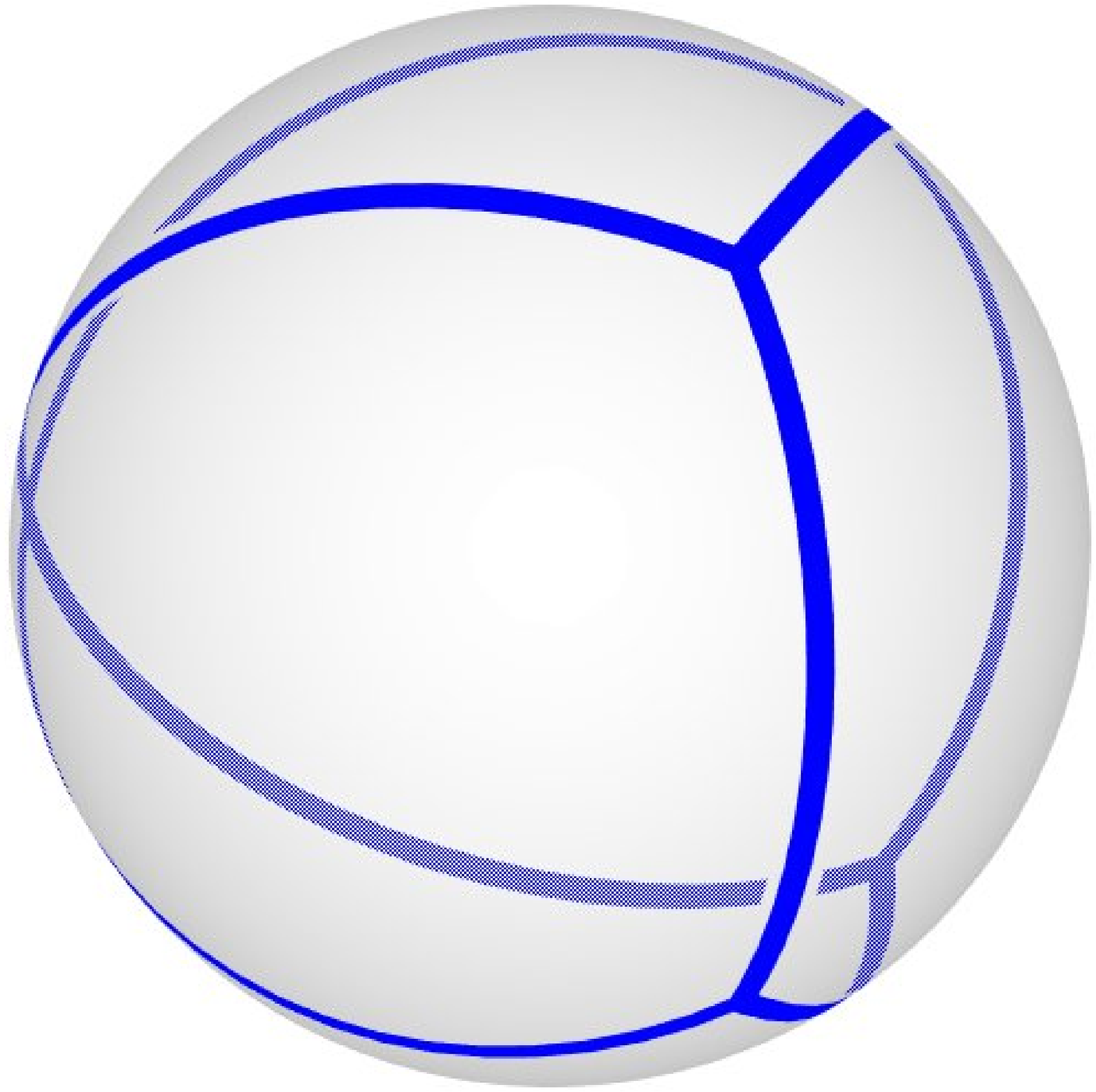,height=1.8cm,silent=} &
        \epsfig{figure=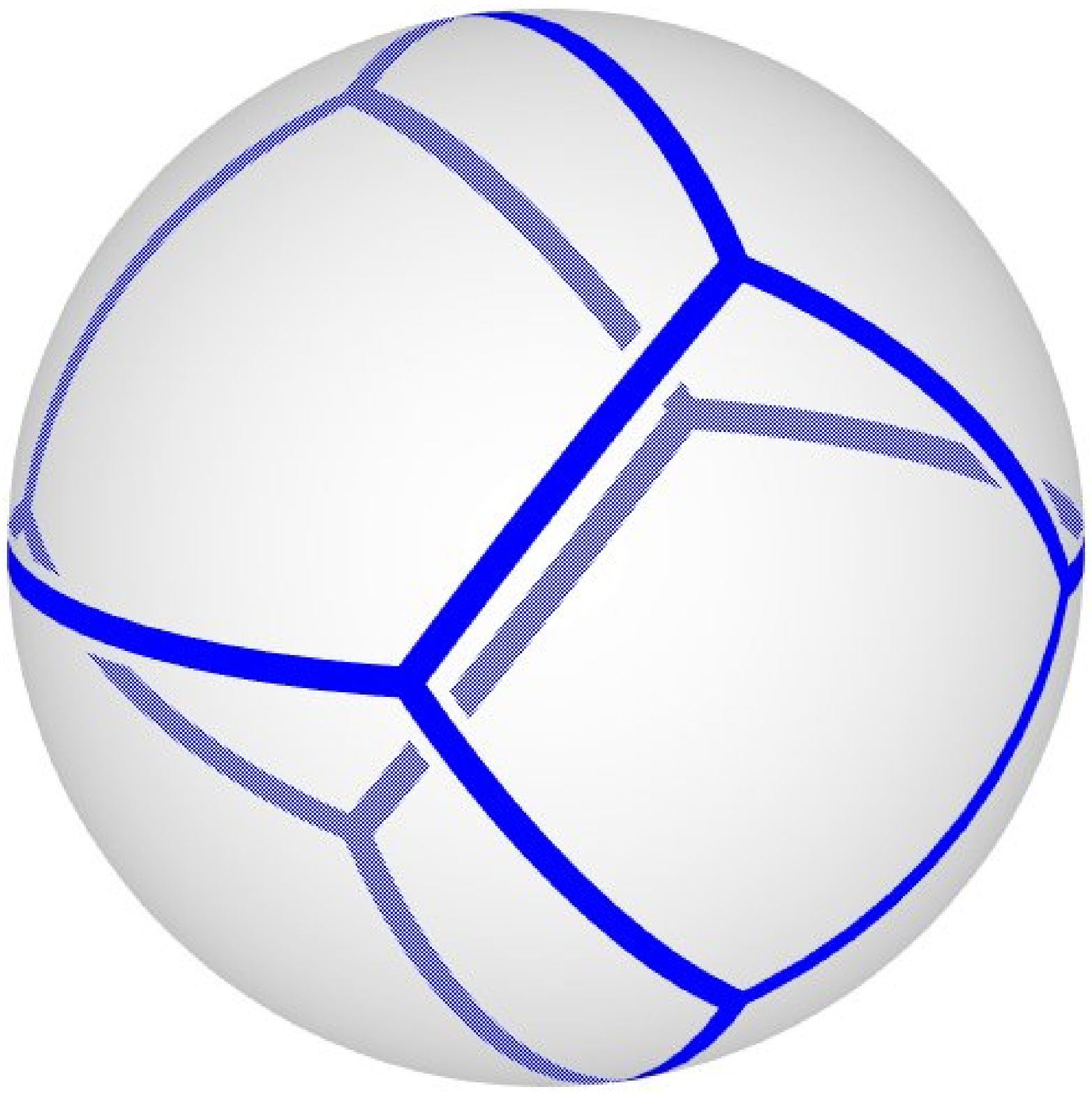,height=1.8cm,silent=}\\
        $B_1$ & $B_2$ & $B_3$
      \end{tabular}\\
      \begin{tabular}{ccc}
        \epsfig{figure=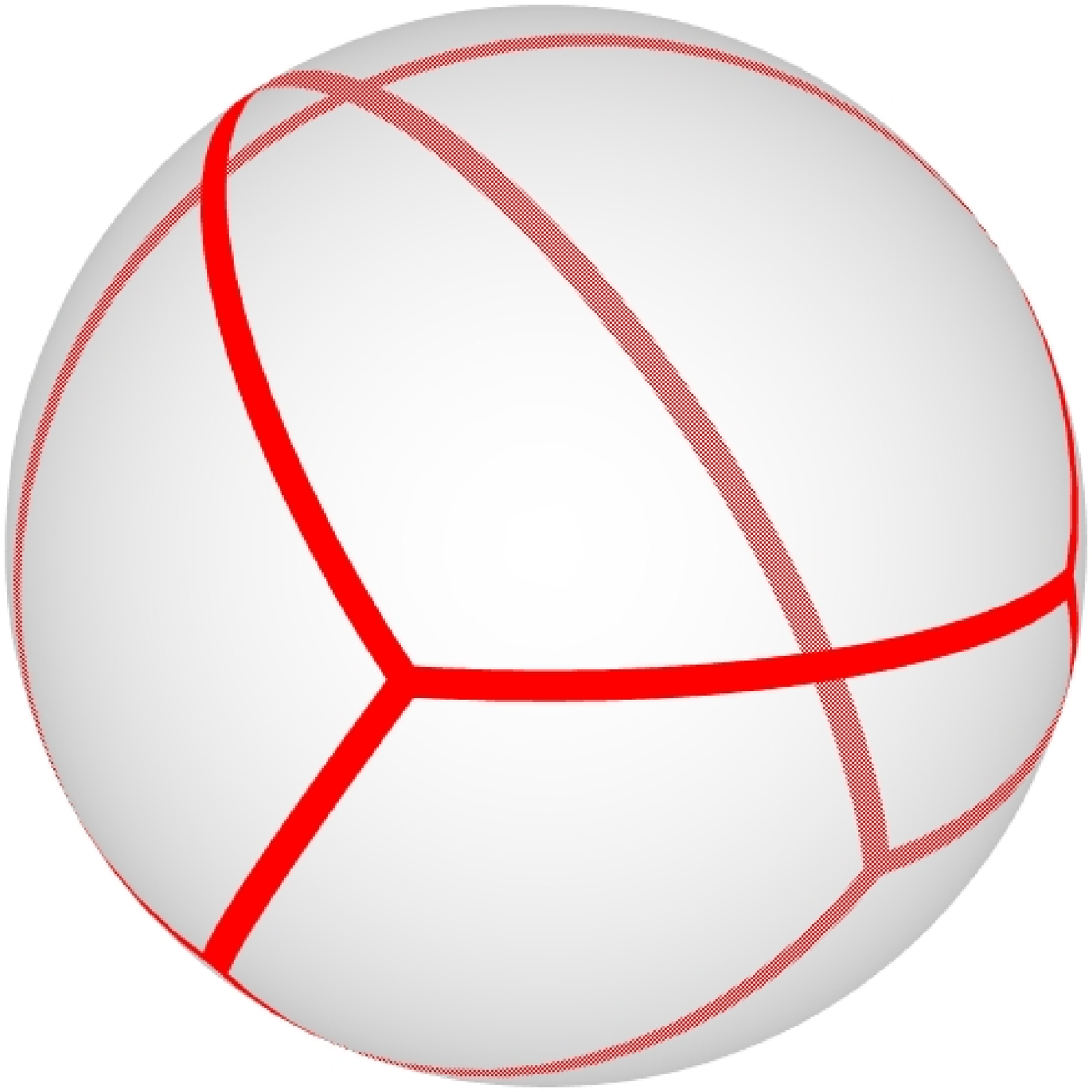,height=1.8cm,silent=} &
        \epsfig{figure=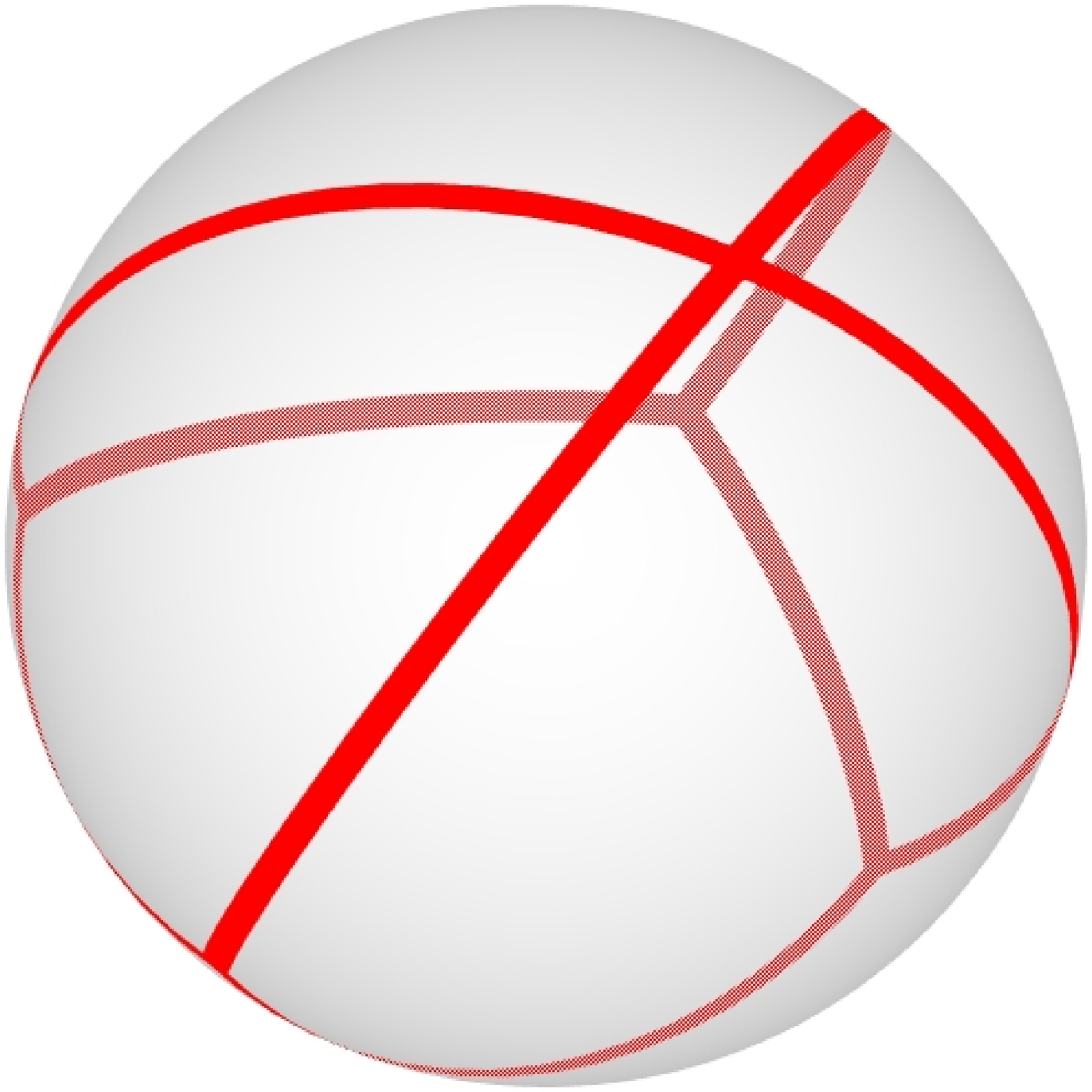,height=1.8cm,silent=} &
        \epsfig{figure=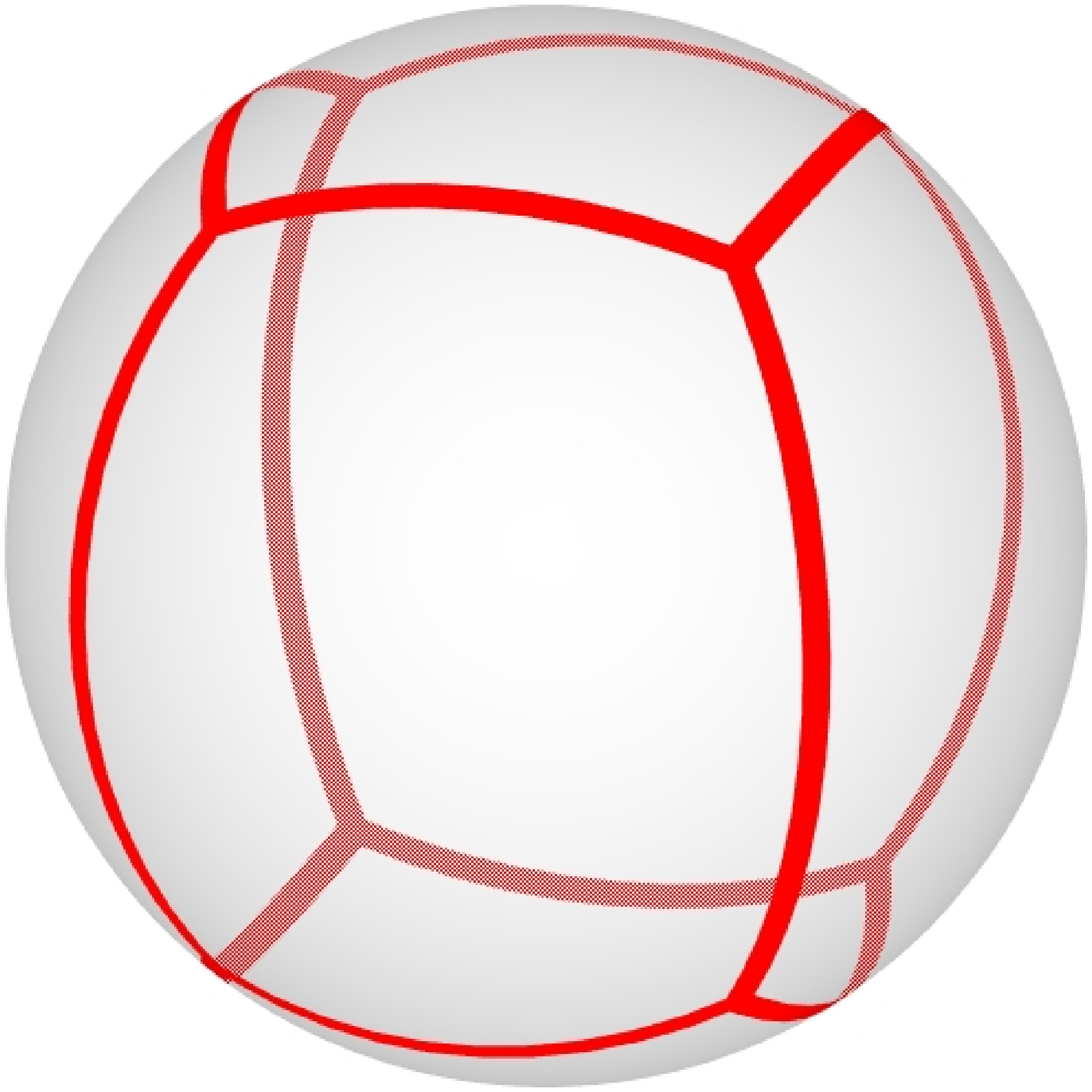,height=1.8cm,silent=}\\
        $-R_1$ & $-R_2$ & $-R_3$
      \end{tabular} &
      \begin{tabular}{ccc}
        \epsfig{figure=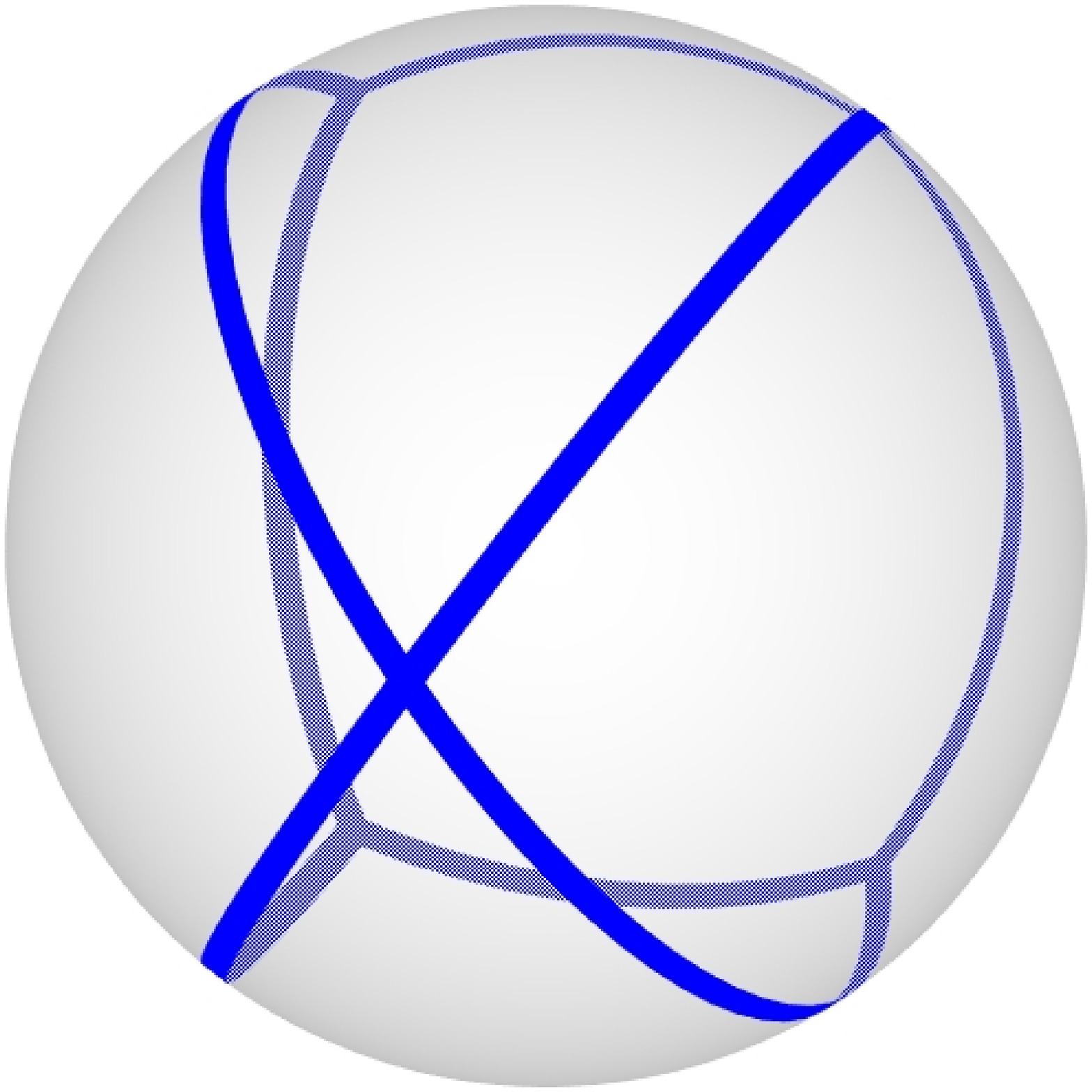,height=1.8cm,silent=} &
        \epsfig{figure=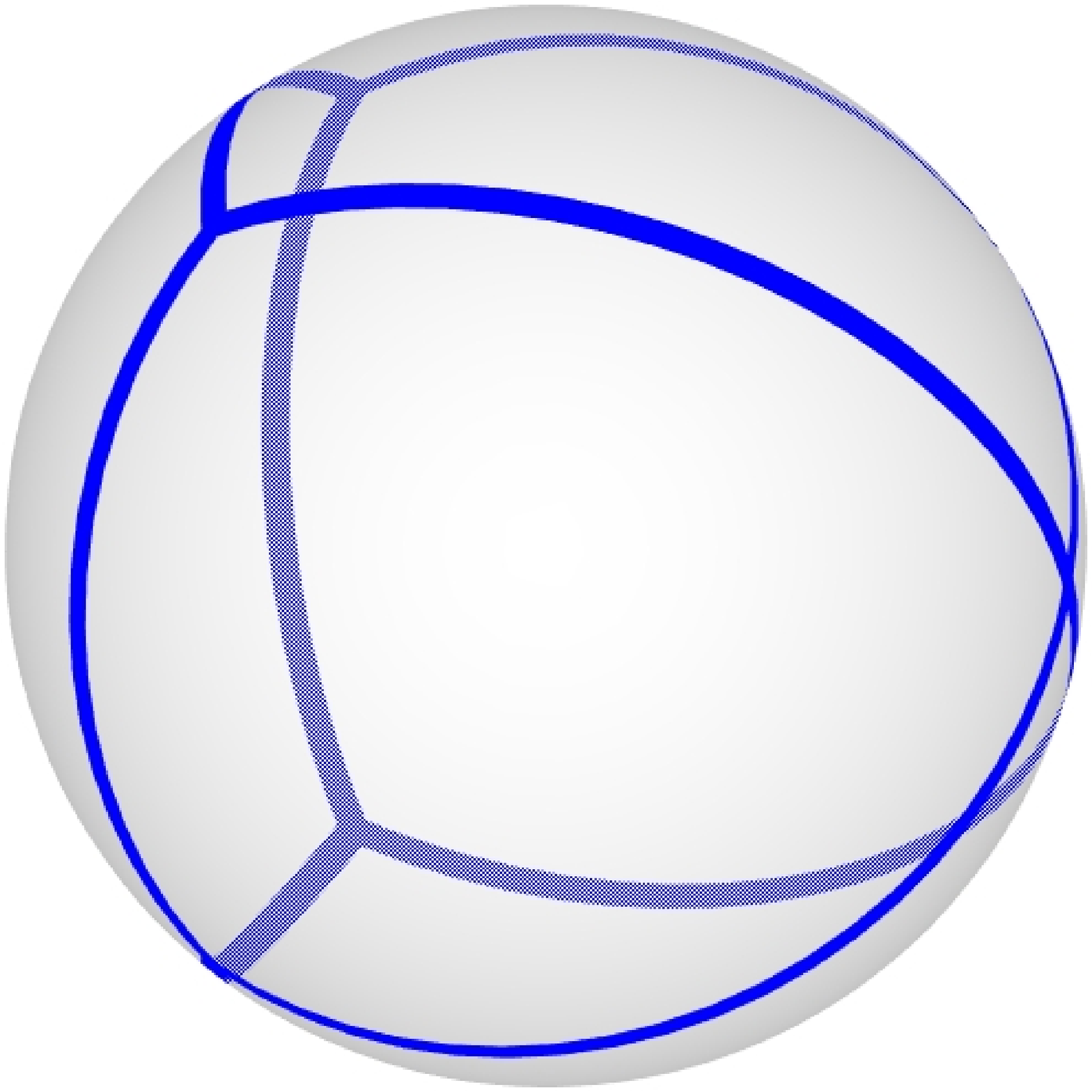,height=1.8cm,silent=} &
        \epsfig{figure=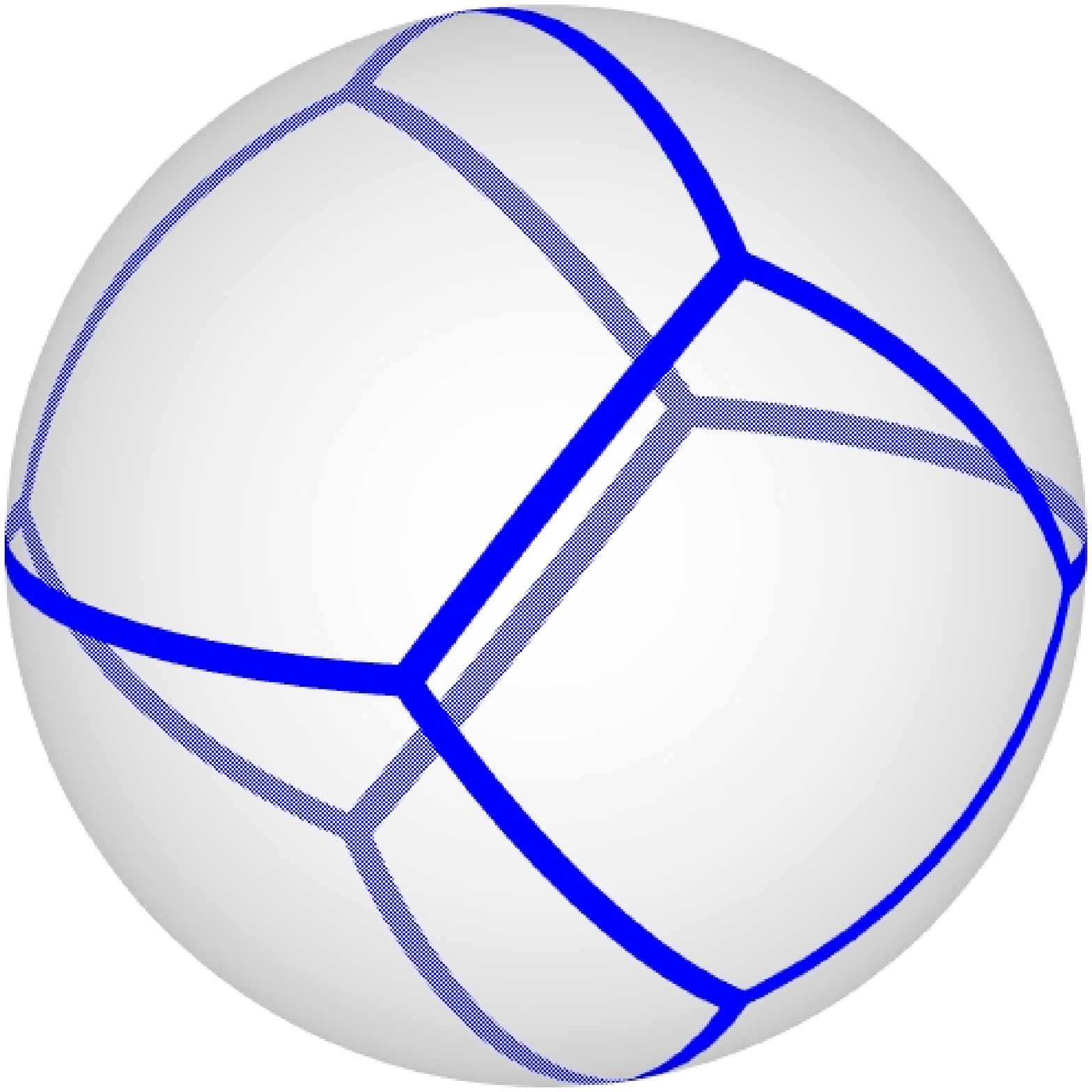,height=1.8cm,silent=}\\
        $-B_1$ & $-B_2$ & $-B_3$
      \end{tabular}
    \end{tabular}
  }
  \vspace{-5pt}
  \caption[Samples of the Gaussian maps of sub-parts of the Split Star
    assembly]{\capStyle{Samples of the Gaussian maps of sub-parts of
      the Split Star assembly. The bottom row contains the reflections
      of the Gaussian maps at the top row.}}
  \label{fig:assem-gaussian-map}
  \vspace{-5pt}
\end{figure*}
The output of this phase is an ordered list of parts, where each part
is an ordered list of the Gaussian maps of the convex
sub-parts. Figure~\ref{fig:assem-gaussian-map} depicts the Gaussian maps of
six of the 18 polytopes that comprise the set of sub-parts of our
Split Star assembly.

\subsection{Sub-part Gaussian Map Reflection}
\label{ssec:assem_plan:sub-part-gaussian-map-reflection}
We reflect each sub-part $P^i_k$ through the origin to obtain $-P^i_k$.
This operation can be performed directly on the output of the previous
phase, reflecting the underlying arrangements of geodesic arcs embedded
on the sphere, which represent the Gaussian maps, while handling the
additional data attached to the arrangement faces. As a matter of fact,
this phase can be reduced as part of an optimization
discussed in Section~\ref{sec:assem_plan:optimization}.

The output of this phase is an ordered list of parts, where each part is
an ordered list of Gaussian maps of the reflected convex sub-parts.
Figure~\ref{fig:assem-gaussian-map} depicts the Gaussian maps of six of the
18 polytopes that comprise the set of reflected sub-parts of the Split
Star example.
\subsection{Pairwise Sub-part Minkowski Sum Construction}
\label{ssec:assem_plan:pairwise-sub-part-ms-construction}
\begin{wraptable}{l}{6.8cm}
  \vspace{-15pt}
  \begin{tabularx}{6.5cm}{l@{}p{2ex}p{2ex}p{2ex}p{2ex}X}
    \hline
    \hline
    \multicolumn{6}{l}{Construct Pairwise Sub-part}\\
    \multicolumn{6}{r}{Minkowski Sums}\\
    \hline
    & \multicolumn{5}{l}{\textbf{for} $i$ \textbf{in} $\{1,2,\ldots,n\}$}\\
    & & \multicolumn{4}{l}{\textbf{for} $j$ \textbf{in} $\{1,2,\ldots,n\}$}\\
    & & & \multicolumn{3}{l}{\textbf{if} $i == j$ \textbf{continue}}\\
    & & & \multicolumn{3}{l}{\textbf{for} $k$ \textbf{in} $\{1,2,\ldots,m_i\}$}\\
    & & & & \multicolumn{2}{l}{\textbf{for} $\ell$ \textbf{in} $\{1,2,\ldots,m_j\}$}\\
    & & & & & \multicolumn{1}{l}{$M^{ij}_{k\ell} = P^j_\ell \oplus (-P^i_k)$}\\
    \hline
  \end{tabularx}
  \vspace{-15pt}
\end{wraptable}
We compute the Minkowski sums of the pairwise sub-parts and reflected
sub-parts. Aiming for an efficient output sensitive algorithm, the
construction of an individual Minkowski sum from two Gaussian maps
represented as two arrangements respectively is performed by
overlaying the two arrangements. When the \Index{overlay} operation progresses,
new vertices, edges, and faces of the resulting arrangement are created
based on features of the two operands. When a new feature is created its
attributes are updated. There are ten cases that must be handled; see
Sections~\ref{ssec:mscn:sgm:mink_sum} for details. The \aos{} package
conveniently supports the overlay operation allowing users to provide
their own version of these ten operations. The overlay operation is
exploited below in several different variants of arrangements of geodesic
arcs embedded on the sphere. Each application requires the provision of a
different set of those ten operations.

The output of this phase is a map from ordered pairs of distinct indices
into lists of Minkowski sums represented as Gaussian maps. Each ordered
pair $<\!i,j\!>, i \neq j$ is associated with the list of Minkowski
sums $\{M^{ij}_{k\ell}\,|\,k = 1,2,\ldots,m_i,\,\ell = 1,2,\ldots,m_j\}$.
In case of our Split Star the map consists of 30 entries that correspond
to all configurations of ordered distinct pairs of parts.
Each entry consists of a list of nine Minkowski sums, that is, 270
Minkowski sums in total.

\subsection{Pairwise Sub-part Minkowski Sum Projection}
\label{ssec:assem_plan:pairwise-sub-part-ms-projection}
\begin{wraptable}{l}{6.8cm}
  \vspace{-15pt}
  \begin{tabularx}{6.5cm}{l@{}p{2ex}p{2ex}p{2ex}p{2ex}X}
    \hline
    \hline
    \multicolumn{6}{l}{Project Pairwise Sub-part}\\
    \multicolumn{6}{r}{Minkowski sums}\\
    \hline
    & \multicolumn{5}{l}{\textbf{for} $i$ \textbf{in} $\{1,2,\ldots,n\}$}\\
    & & \multicolumn{4}{l}{\textbf{for} $j$ \textbf{in} $\{1,2,\ldots,n\}$}\\
    & & & \multicolumn{3}{l}{\textbf{if} $i == j$ \textbf{continue}}\\
    & & & \multicolumn{3}{l}{\textbf{for} $k$ \textbf{in} $\{1,2,\ldots,m_i\}$}\\
    & & & & \multicolumn{2}{l}{\textbf{for} $\ell$ \textbf{in} $\{1,2,\ldots,m_j\}$}\\
    & & & & & \multicolumn{1}{l}{$Q^{ij}_{k\ell} = \text{project}(M^{ij}_{k\ell})$}\\
    \hline
  \end{tabularx}
  \vspace{-15pt}
\end{wraptable}
We centrally project all pairwise sub-part Minkowski sums computed
in the previous phase onto the sphere. Each projection is represented
as an arrangement of geodesic arcs on the sphere, where each cell $c$
of the arrangement is extended with a Boolean flag that indicates whether
all infinite rays emanating from the origin in all directions
$\vecd \in c$ pierce the corresponding Minkowski sum. As the Minkowski
sums are convex, their spherical projections are spherically convex.

\begin{figure*}[t]%
  \setlength{\tabcolsep}{3pt}
  \centerline{%
    \begin{tabular}{cccccc}
      \epsfig{figure=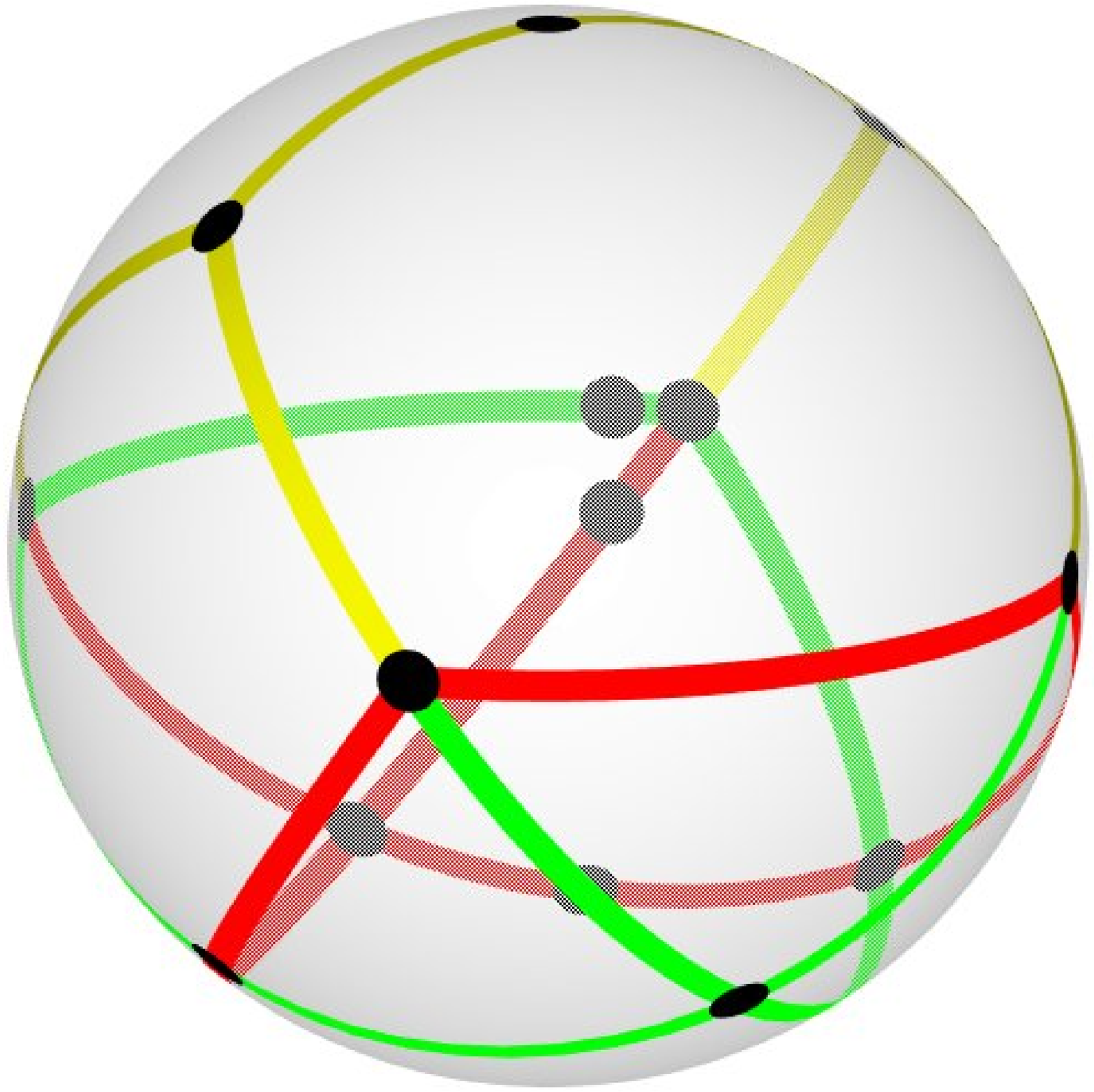,height=1.85cm,silent=} &
      \epsfig{figure=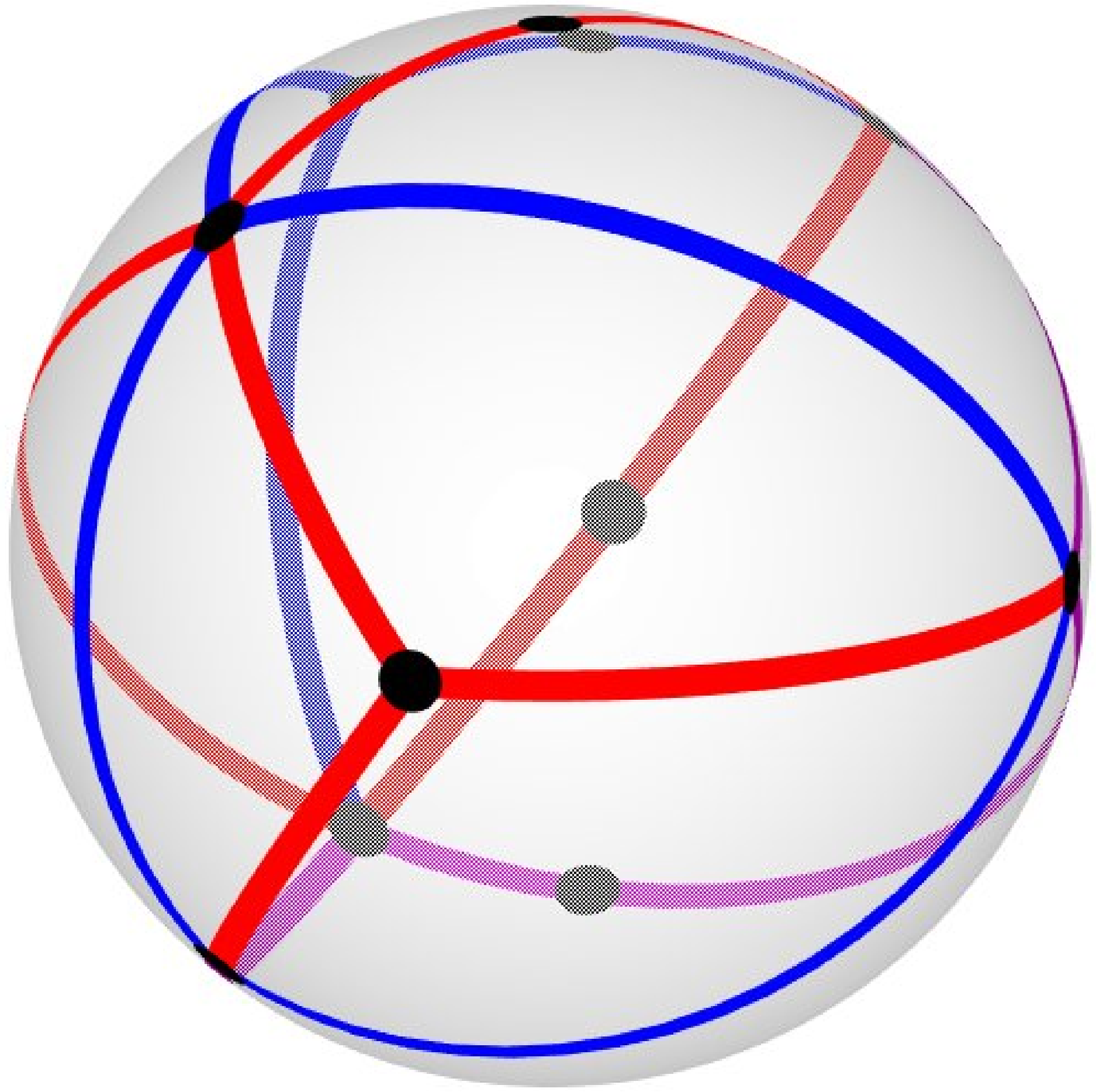,height=1.85cm,silent=} &
      \epsfig{figure=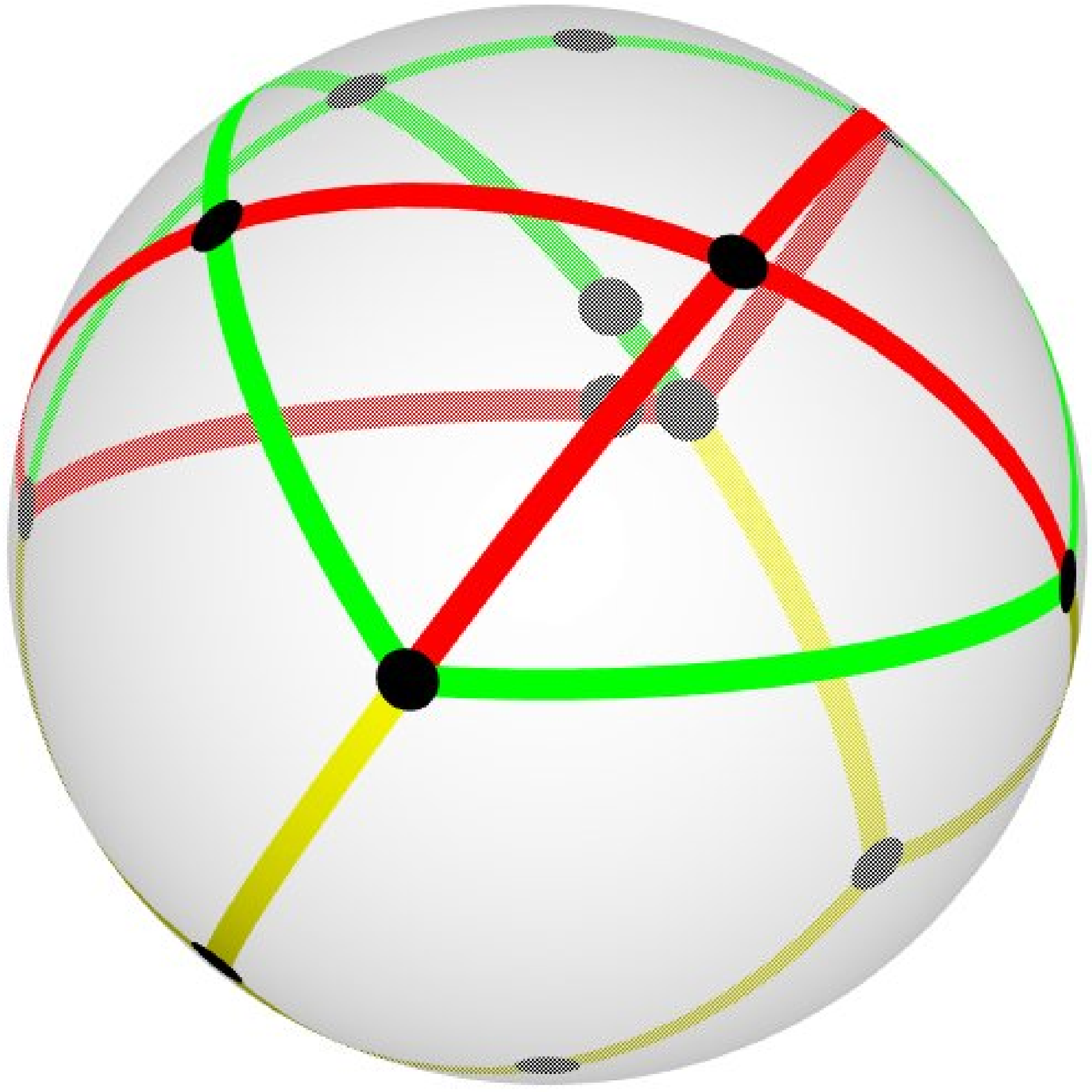,height=1.85cm,silent=} &
      \epsfig{figure=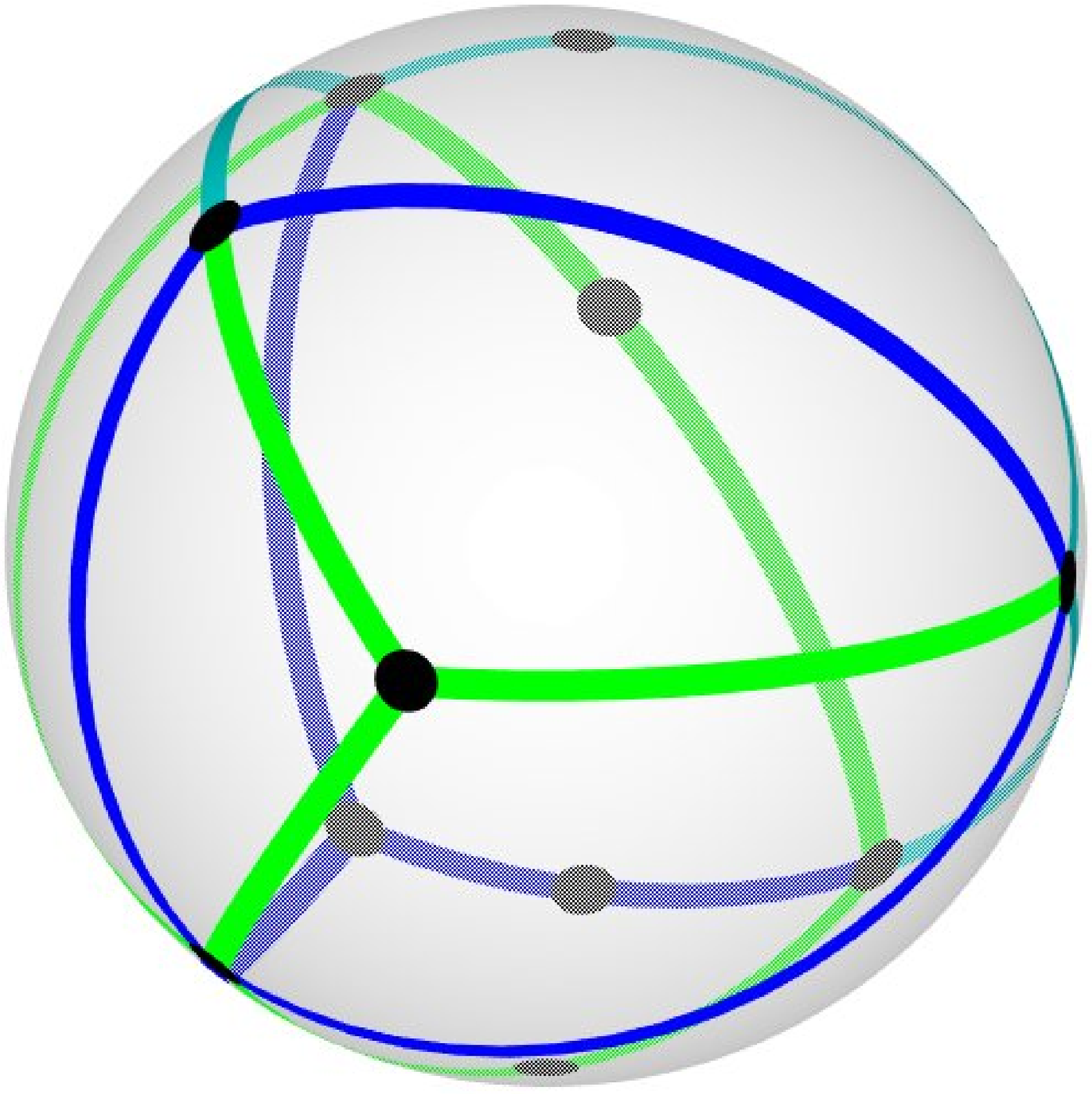,height=1.85cm,silent=} &
      \epsfig{figure=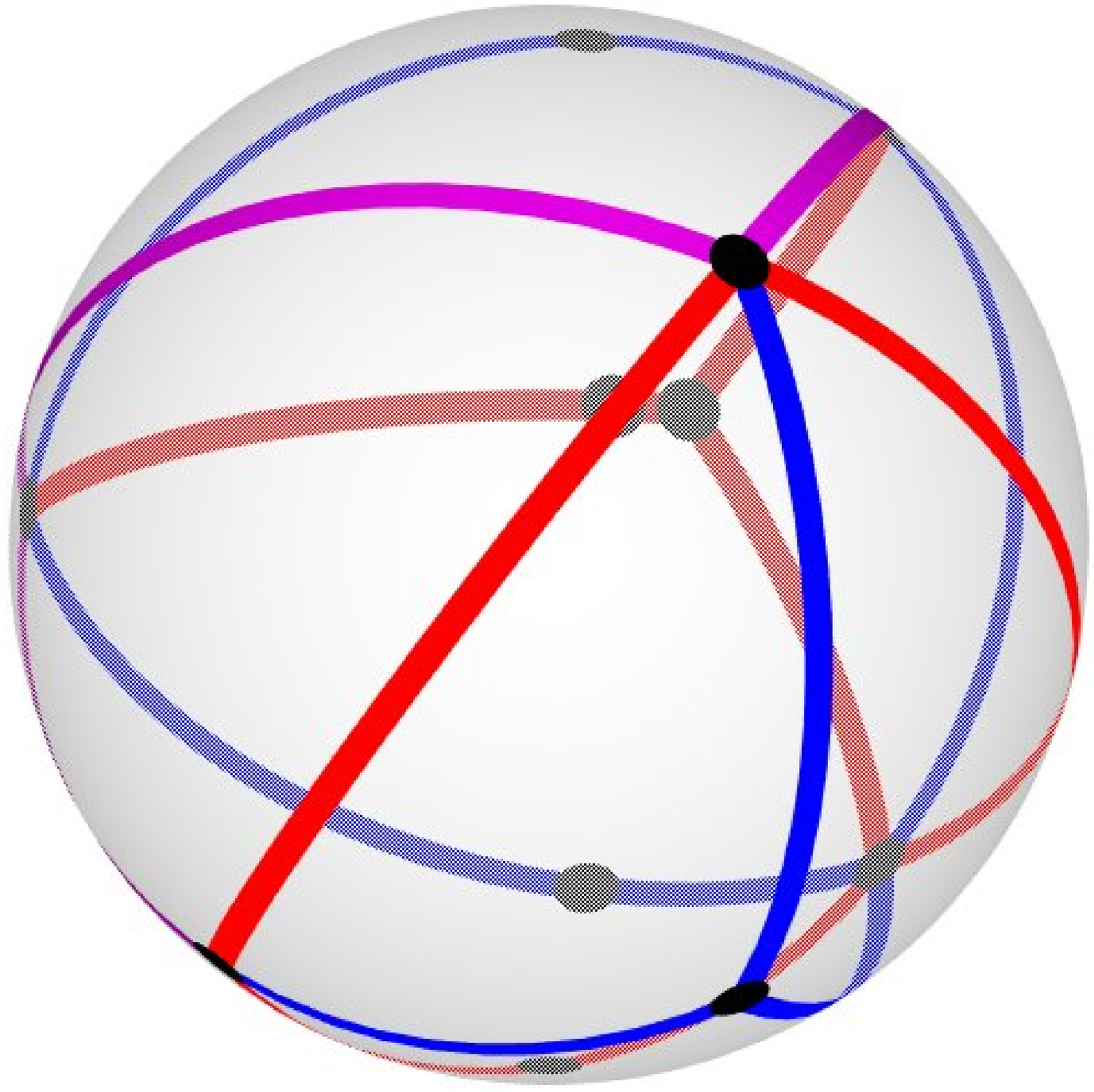,height=1.85cm,silent=} &
      \epsfig{figure=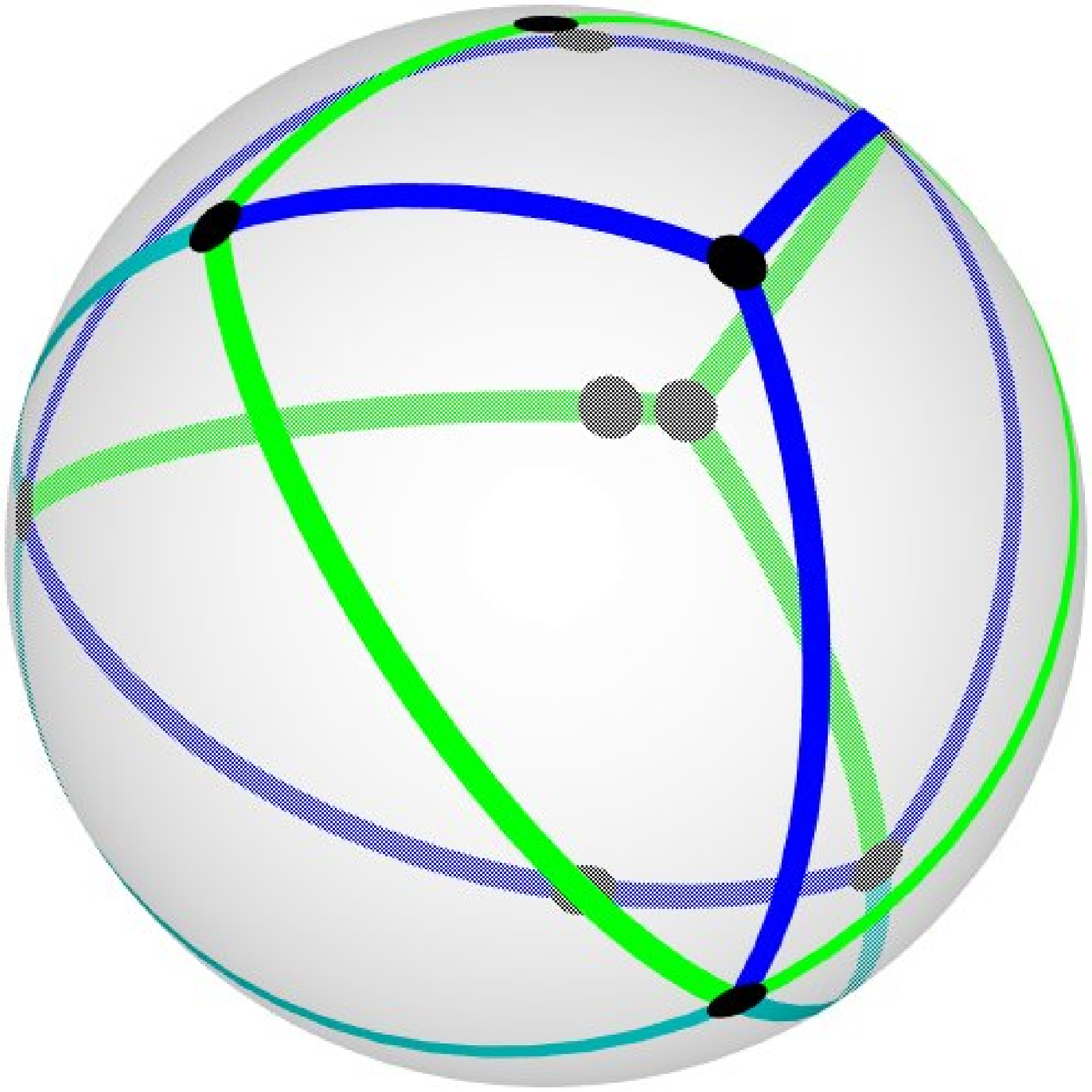,height=1.85cm,silent=}\\
      \epsfig{figure=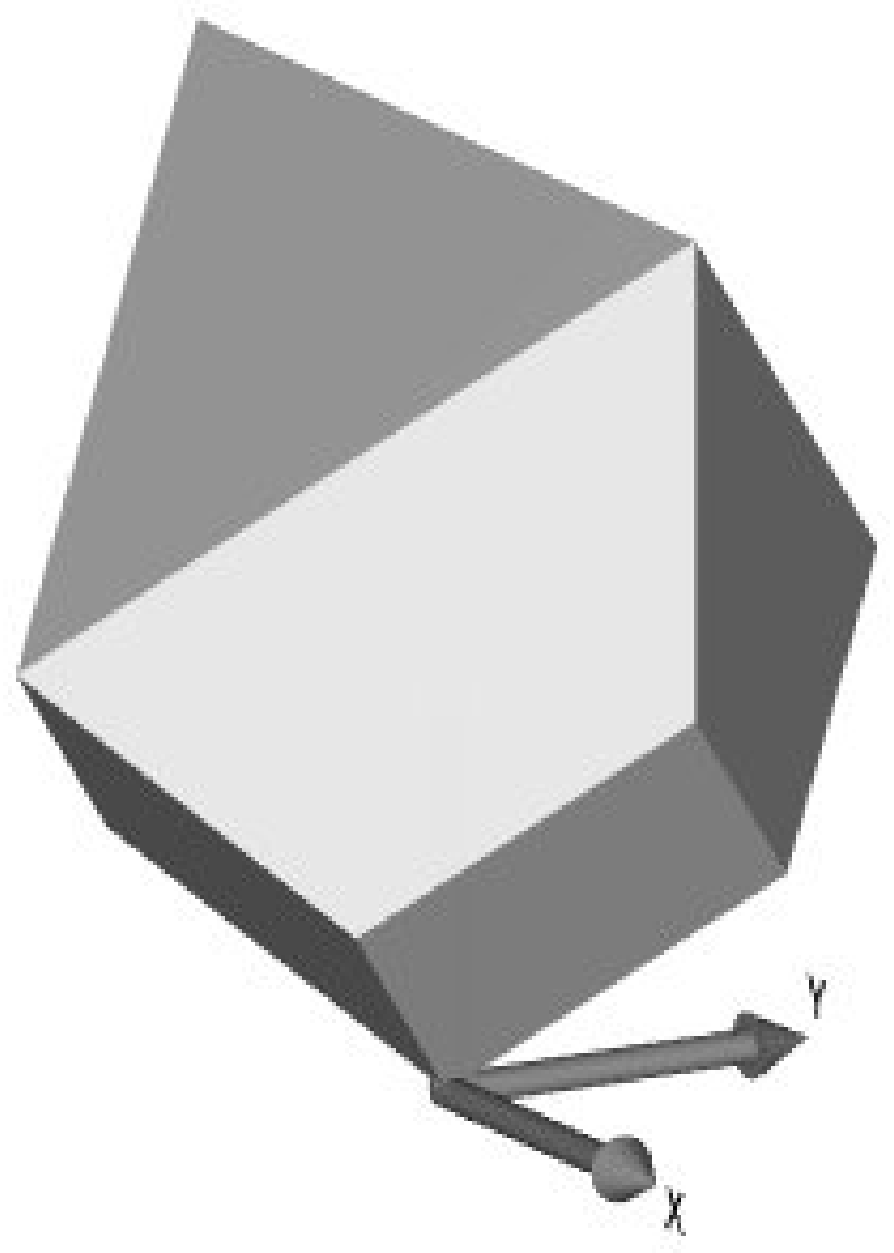,height=2.1cm,silent=} &
      \epsfig{figure=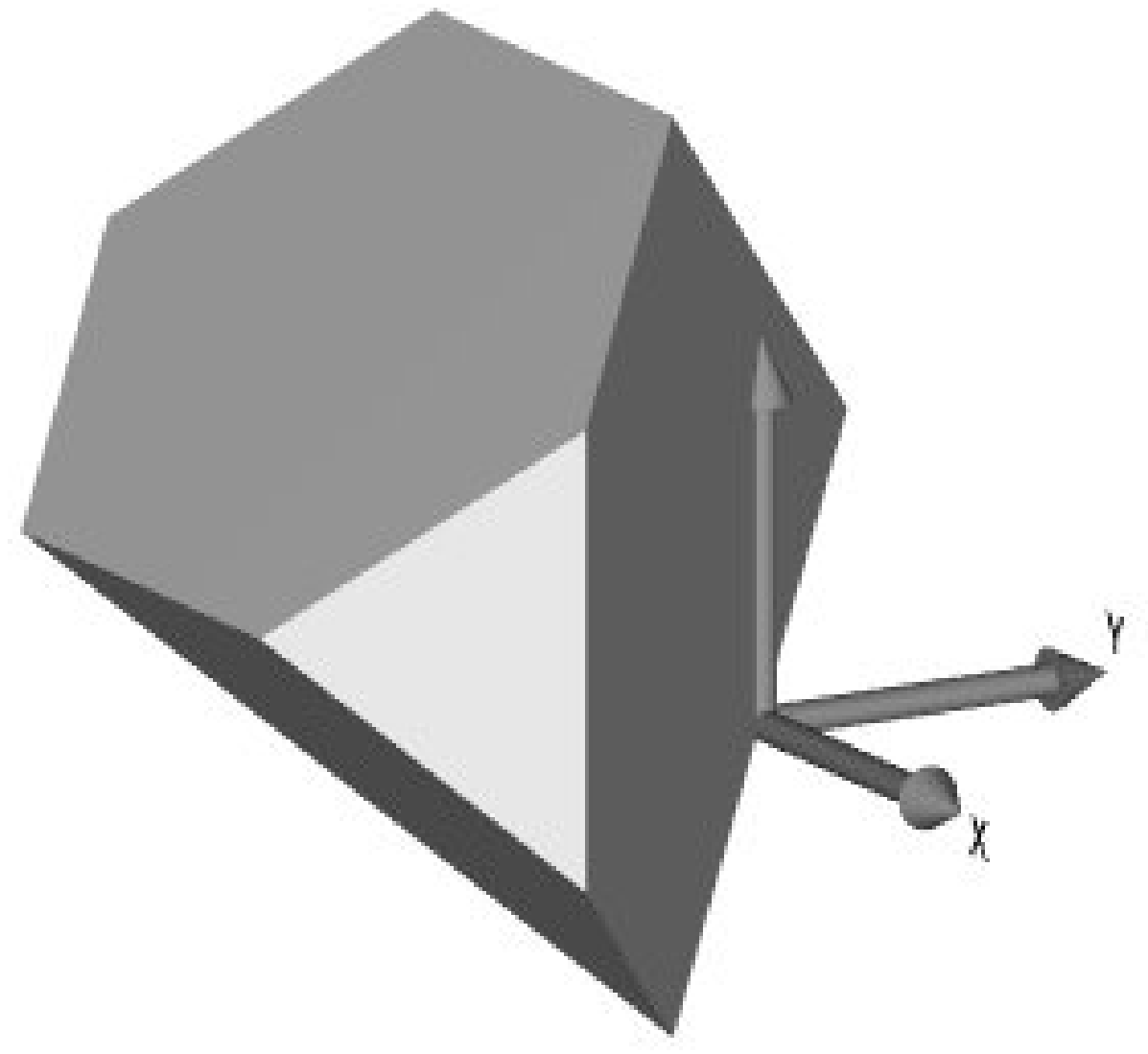,height=2.1cm,silent=} &
      \epsfig{figure=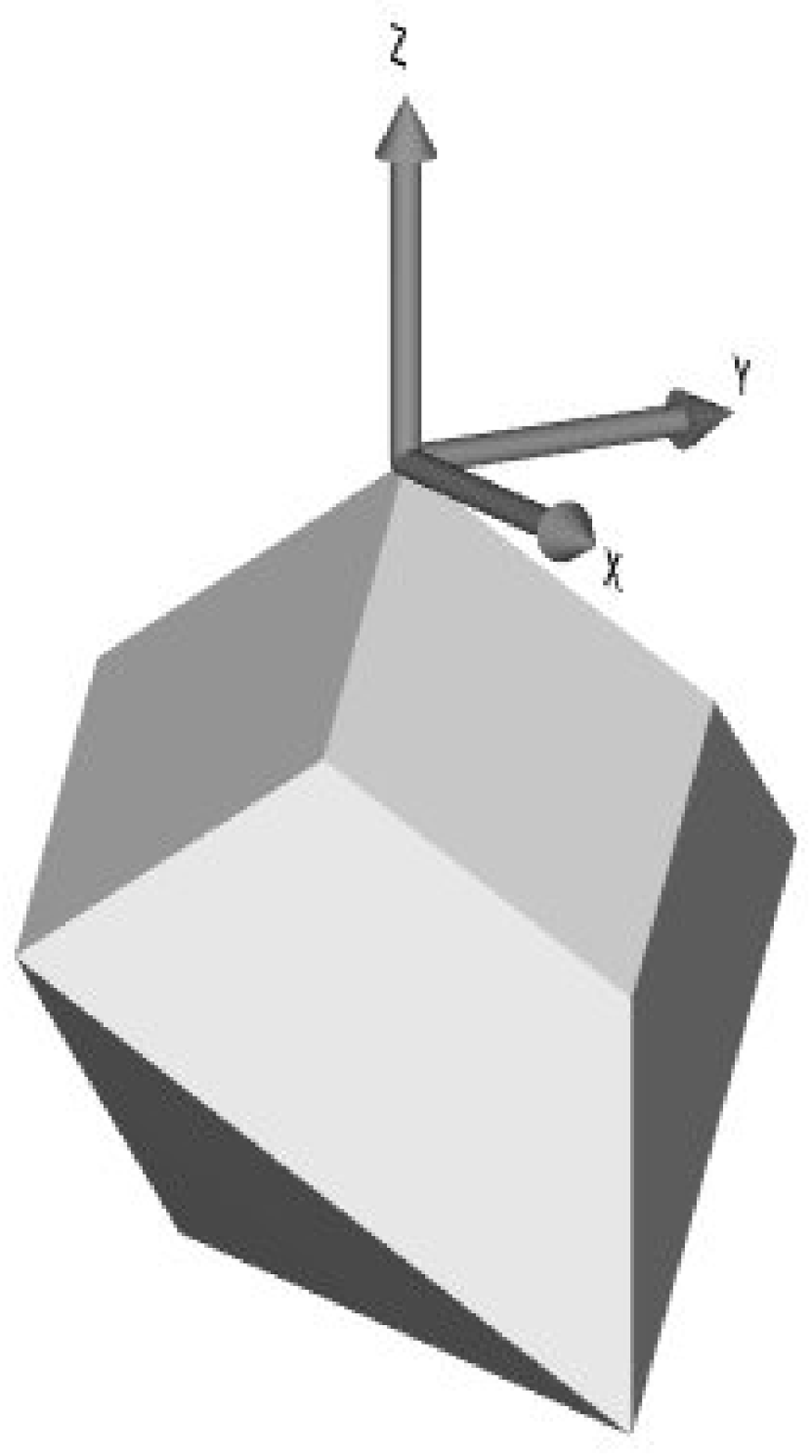,height=2.1cm,silent=} &
      \epsfig{figure=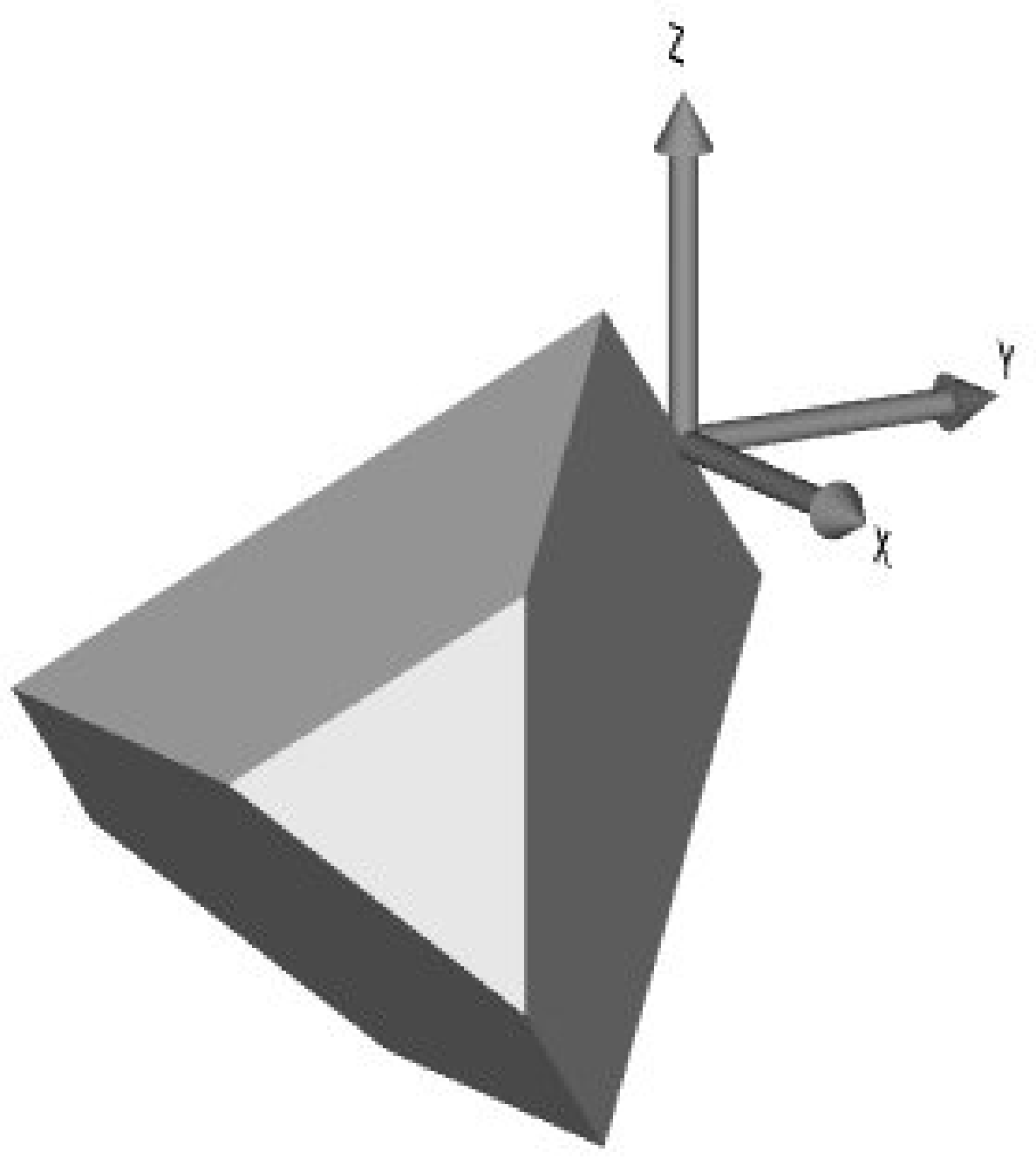,height=2.1cm,silent=} &
      \epsfig{figure=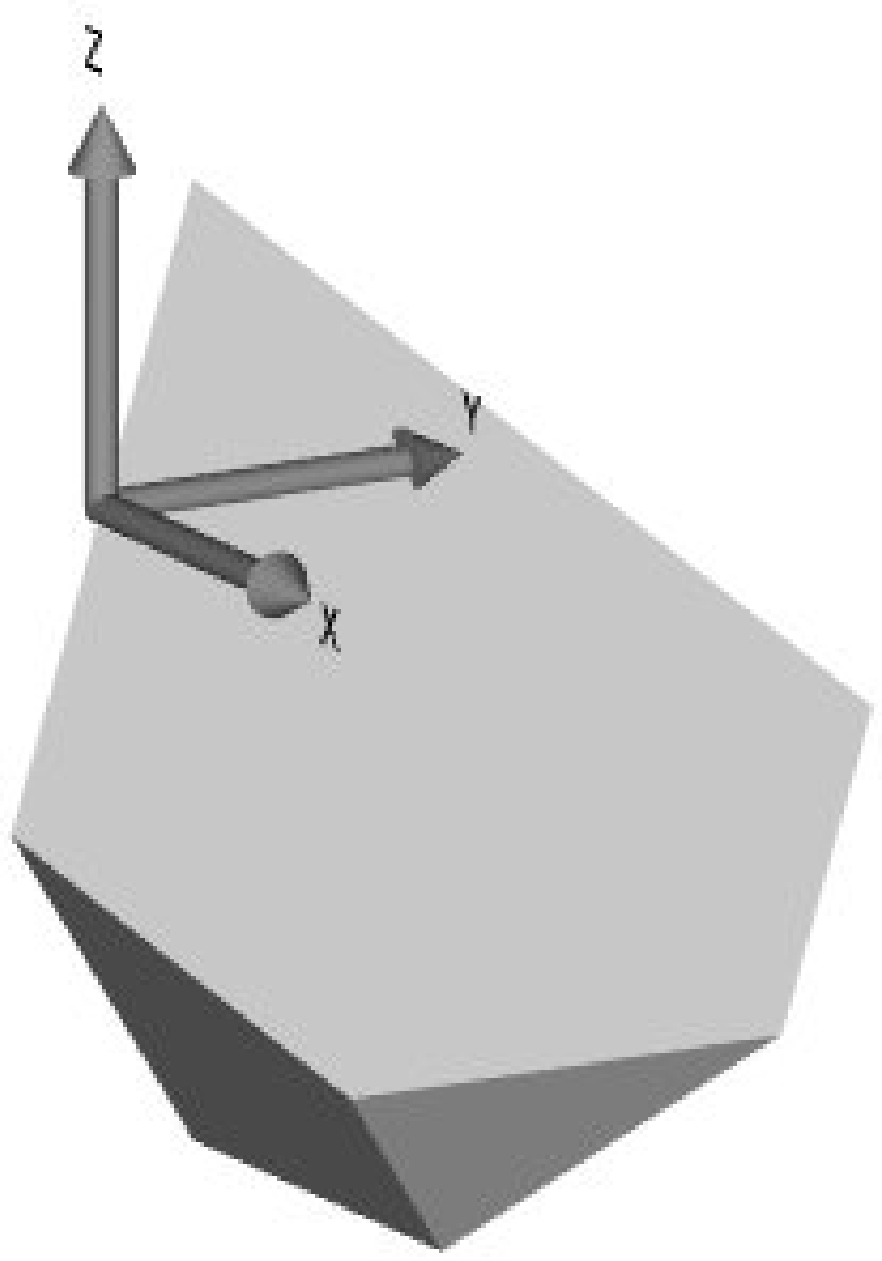,height=2.1cm,silent=} &
      \epsfig{figure=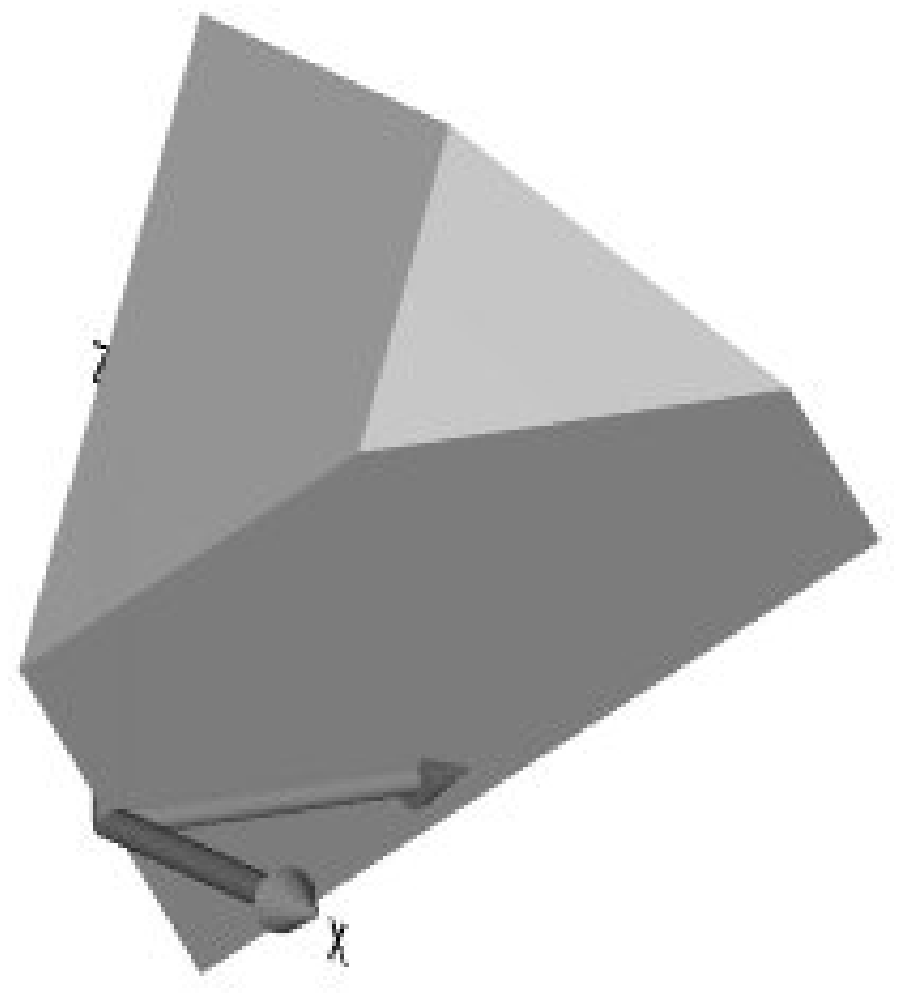,height=2.1cm,silent=}\\
      \epsfig{figure=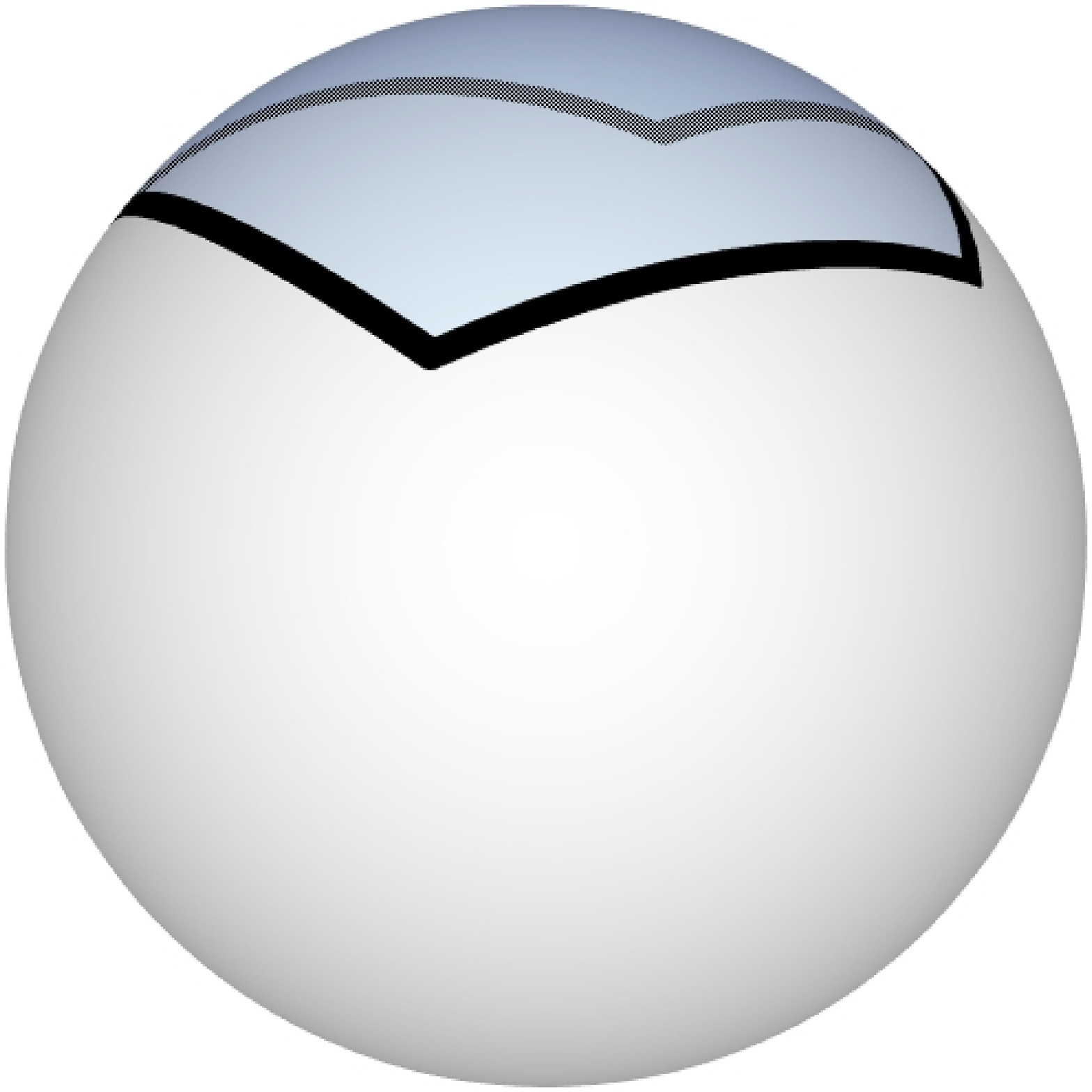,height=1.85cm,silent=} &
      \epsfig{figure=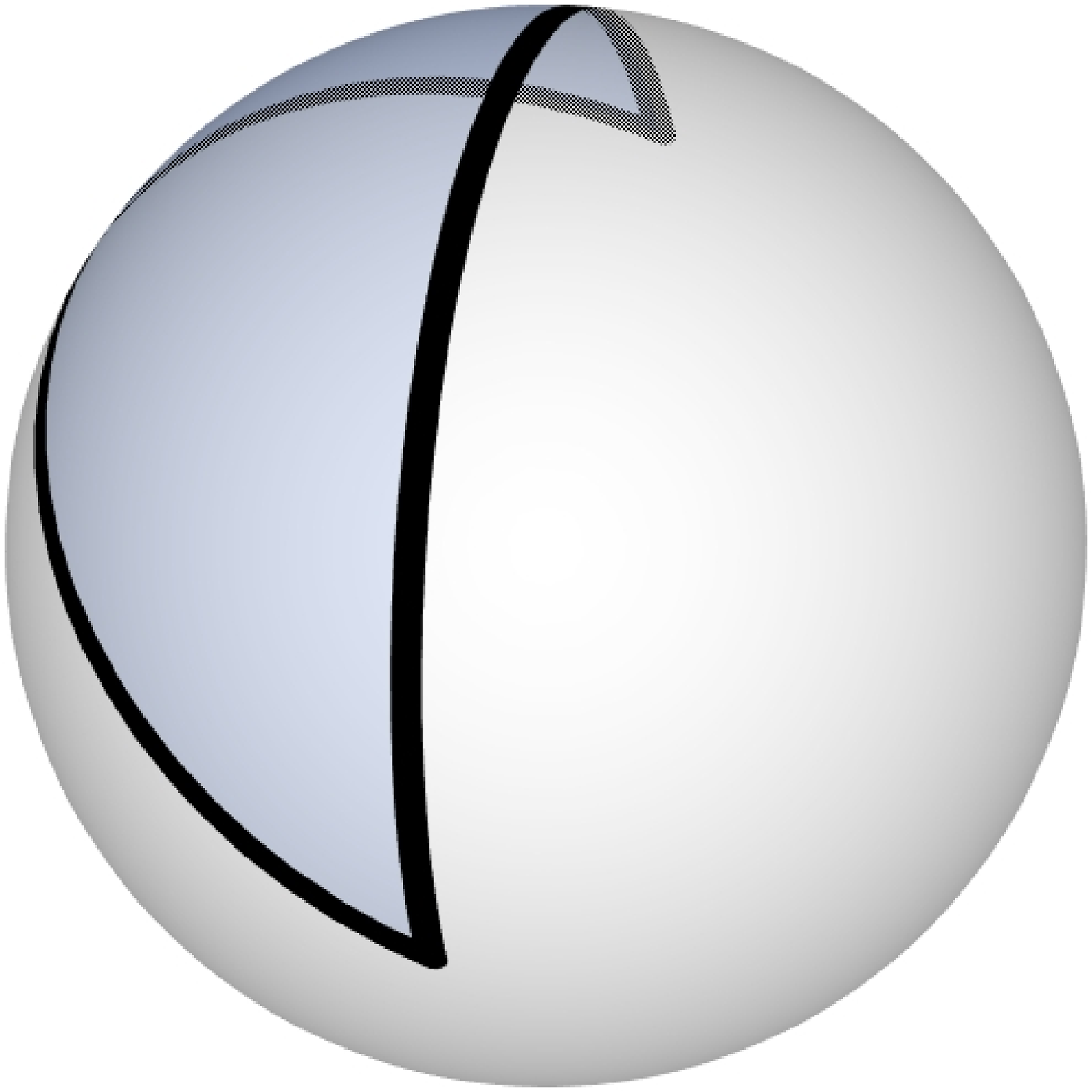,height=1.85cm,silent=} &
      \epsfig{figure=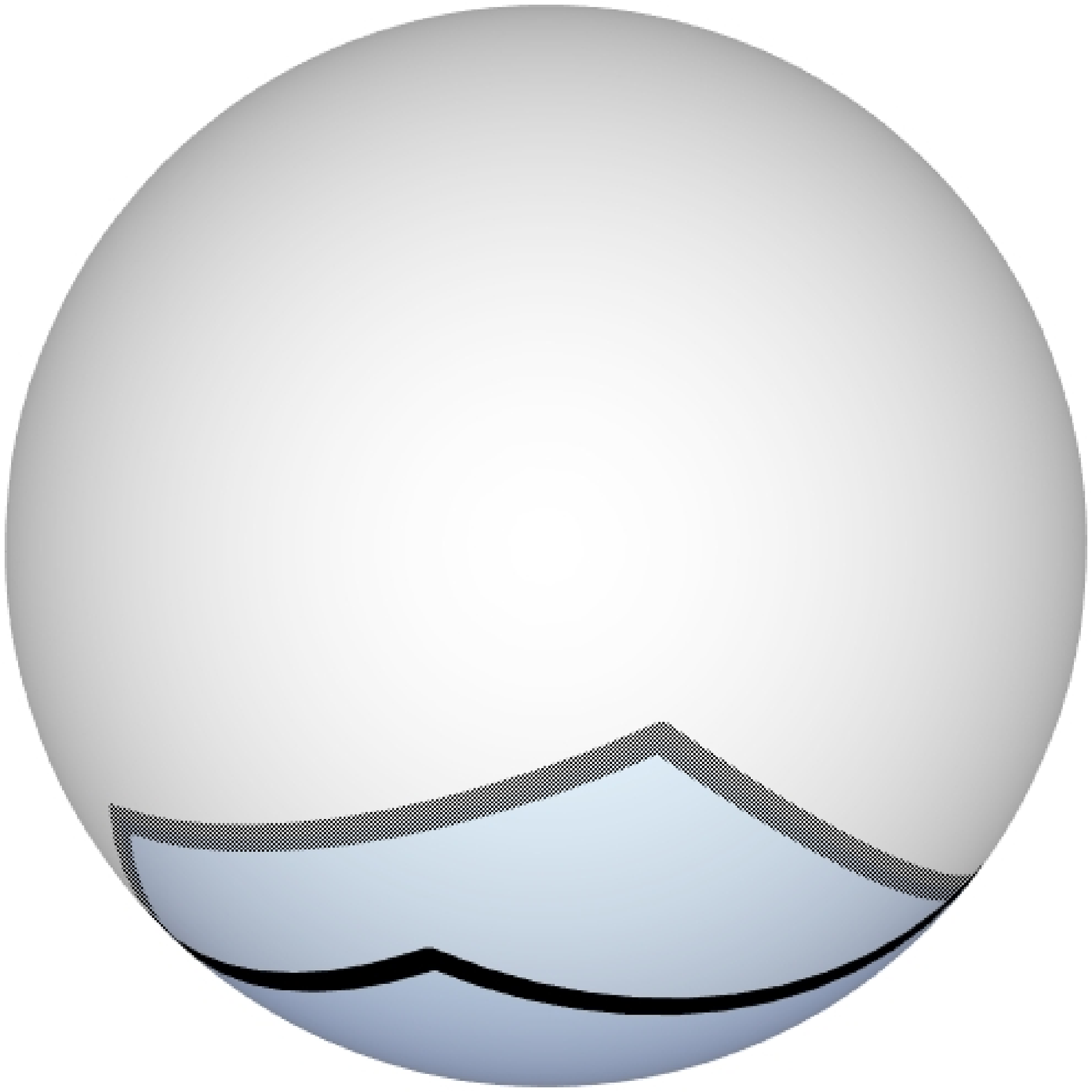,height=1.85cm,silent=} &
      \epsfig{figure=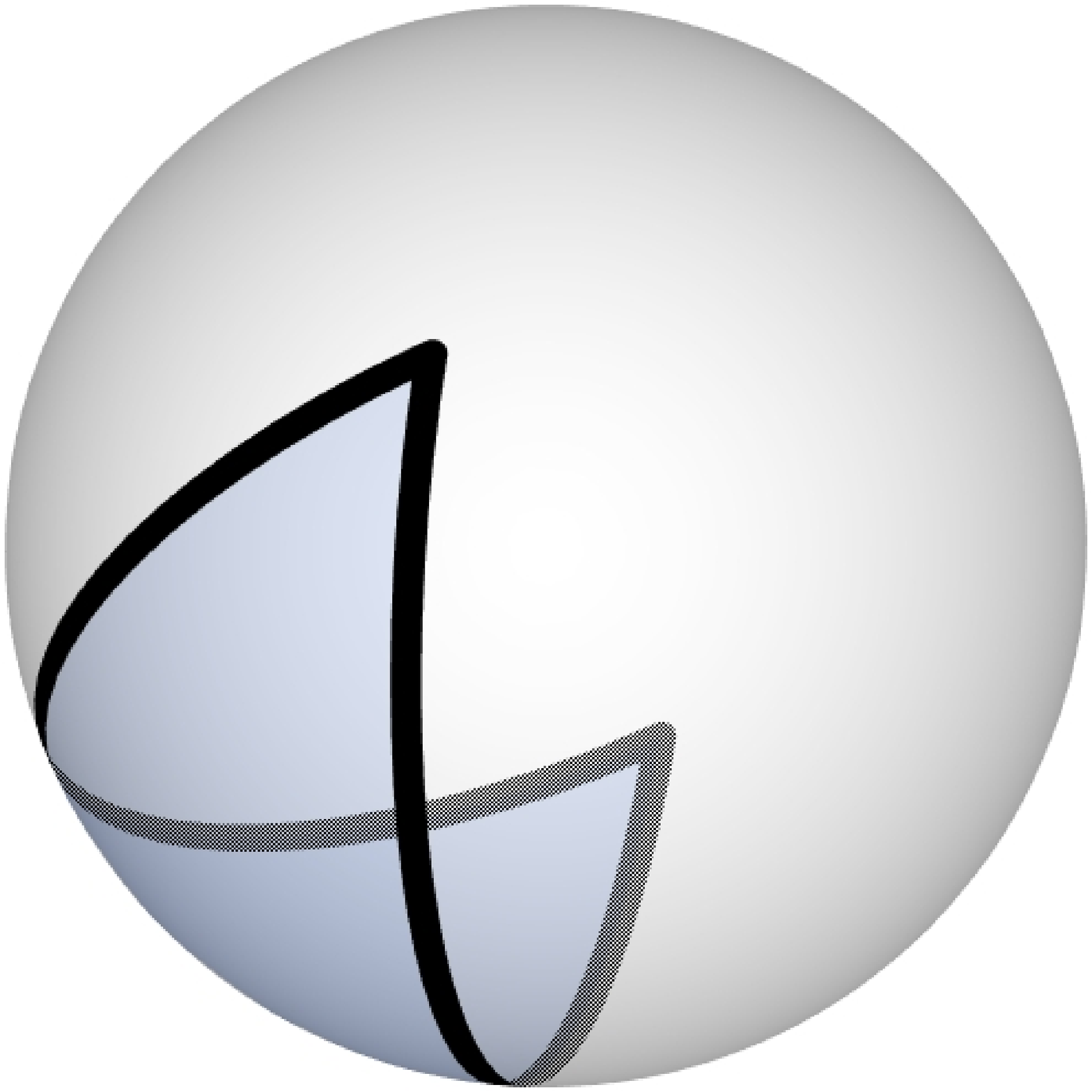,height=1.85cm,silent=} &
      \epsfig{figure=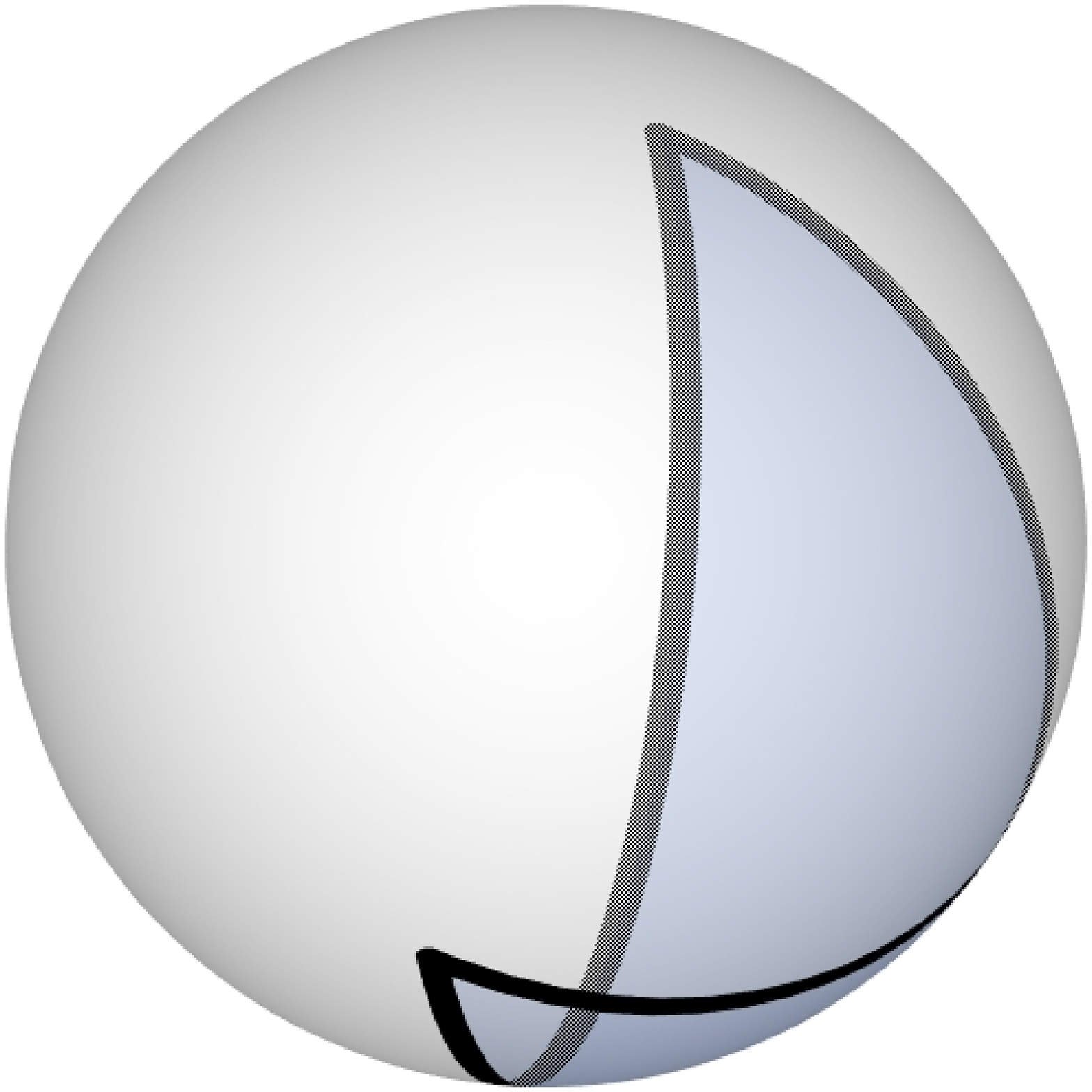,height=1.85cm,silent=} &
      \epsfig{figure=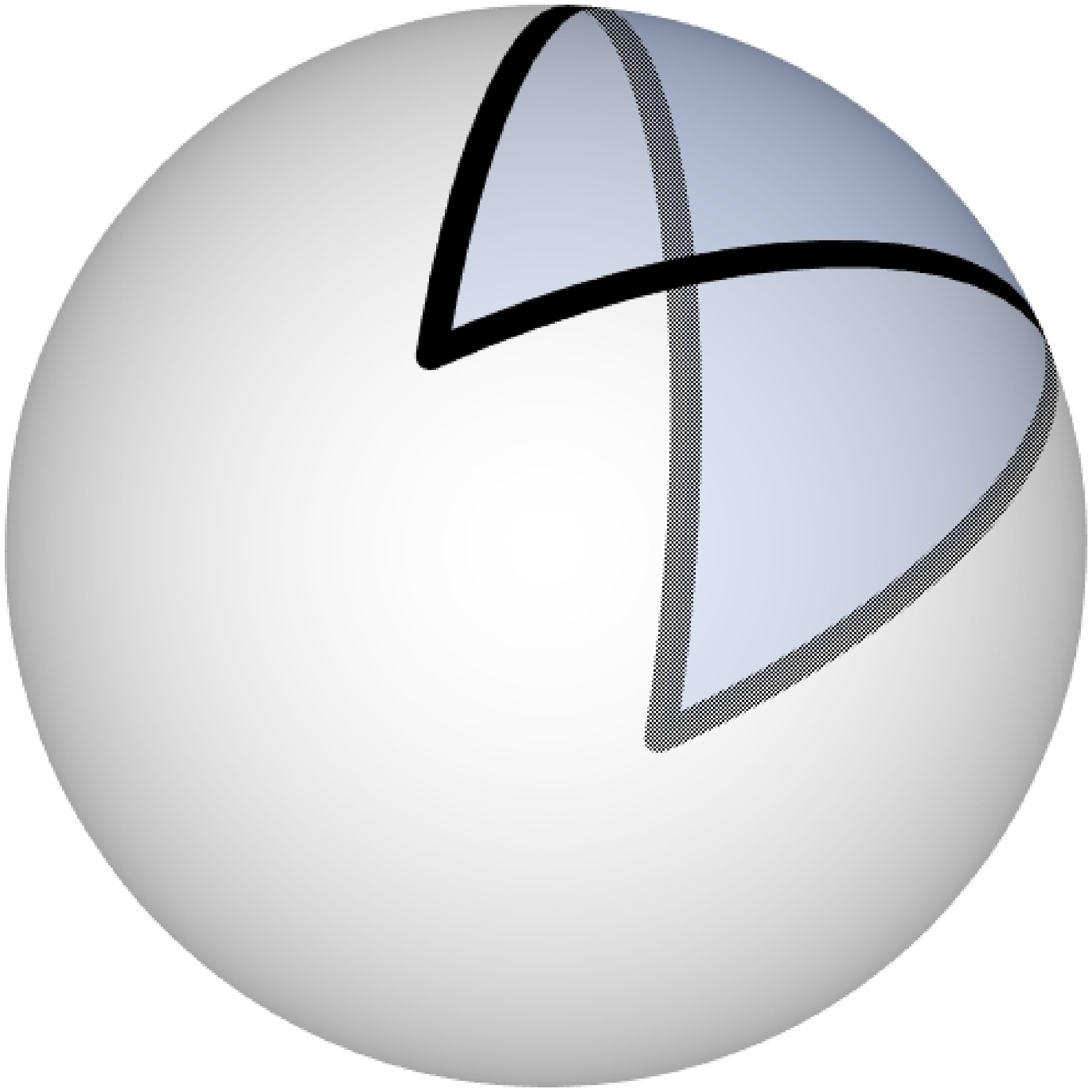,height=1.85cm,silent=}\\
      $R_1 \oplus (-G_1)$ &
      $R_1 \oplus (-B_1)$ &
      $G_1 \oplus (-R_1)$ &
      $G_1 \oplus (-B_1)$ &
      $B_1 \oplus (-R_1)$ &
      $B_1 \oplus (-G_1)$
    \end{tabular}}
  \caption[Samples of the pairwise Minkowski sums of the Split Star assembly
           sub-parts]%
          {\capStyle{Samples of the pairwise Minkowski sums
           of sub-parts of the Split Star assembly. The middle row contains
           six Minkowski sums. The top row contains the corresponding Gaussian
           maps. The bottom row contains the corresponding central projection
           of the Minkowski sums on $\spheretwo$.}}
  \label{fig:mink-sum}
  \vspace{-10pt}
\end{figure*}

Given a convex Minkowski sum $C$, we distinguish between four different
cases as follows:
\begin{enumerate}
\item The origin is contained in the interior of a facet of $C$.
\item The origin lies in the interior of an edge of $C$.
\item The origin coincides with a vertex of $C$.
\item The origin is separated from $C$.
\end{enumerate}
Computing the projection of a convex polytope $C$ can be done efficiently
using dedicated procedures that handle the four cases, respectively. Recall
that $C$ is represented as a Gaussian map, which is internally represented
as an arrangement of geodesic arcs embedded on the sphere. We traverse the
vertices of the arrangement. For each vertex $v$ we extract its associated
primal facet $f = G^{-1}(v)$. We dispatch the appropriate computation
based on the relative position of the origin with respect to the supporting
plane to $f$, and the supporting plane to adjacent facets of $f$.

{\bf If the origin is contained in the interior of a facet $f$ of $C$}, the
projection of the silhouette of $C$ is a great (full) circle that
divides the sphere into two hemispheres. The normal to the plane that
contains the great circle is identical to the normal to the supporting
plane to $f$, easily extracted from the arrangement representing the
Gaussian map of $C$. The \aos{} package conveniently supports the
insertion of a great circle, provided by the normal to the plane that
contains it, into an arrangement of geodesic arcs embedded on the
sphere.

We omit the implementation details of the following two cases, and proceed
to the general case.
{\bf If the origin is separated from $C$}, we traverse all edges of
$C$ until we find a silhouette edge characterized as follows: Let
$v_{s}$ and $v_{t}$ be the source and target vertices of some edge $e$
in the arrangement representing the Gaussian map of $C$, and let
$f_s = G^{-1}(v_s)$ and $f_t = G^{-1}(v_t)$ be their associated primal
facets, respectively. $e$ is a silhouette edge, if and only if, the
origin is not in the negative side of the supporting plane to $f_s$
and not in the positive side of the supporting plane to $f_t$.
We start with the first silhouette edge we find, and search for an
\begin{wrapfigure}[8]{l}{3cm}
  \vspace{-10pt}
  \epsfig{figure=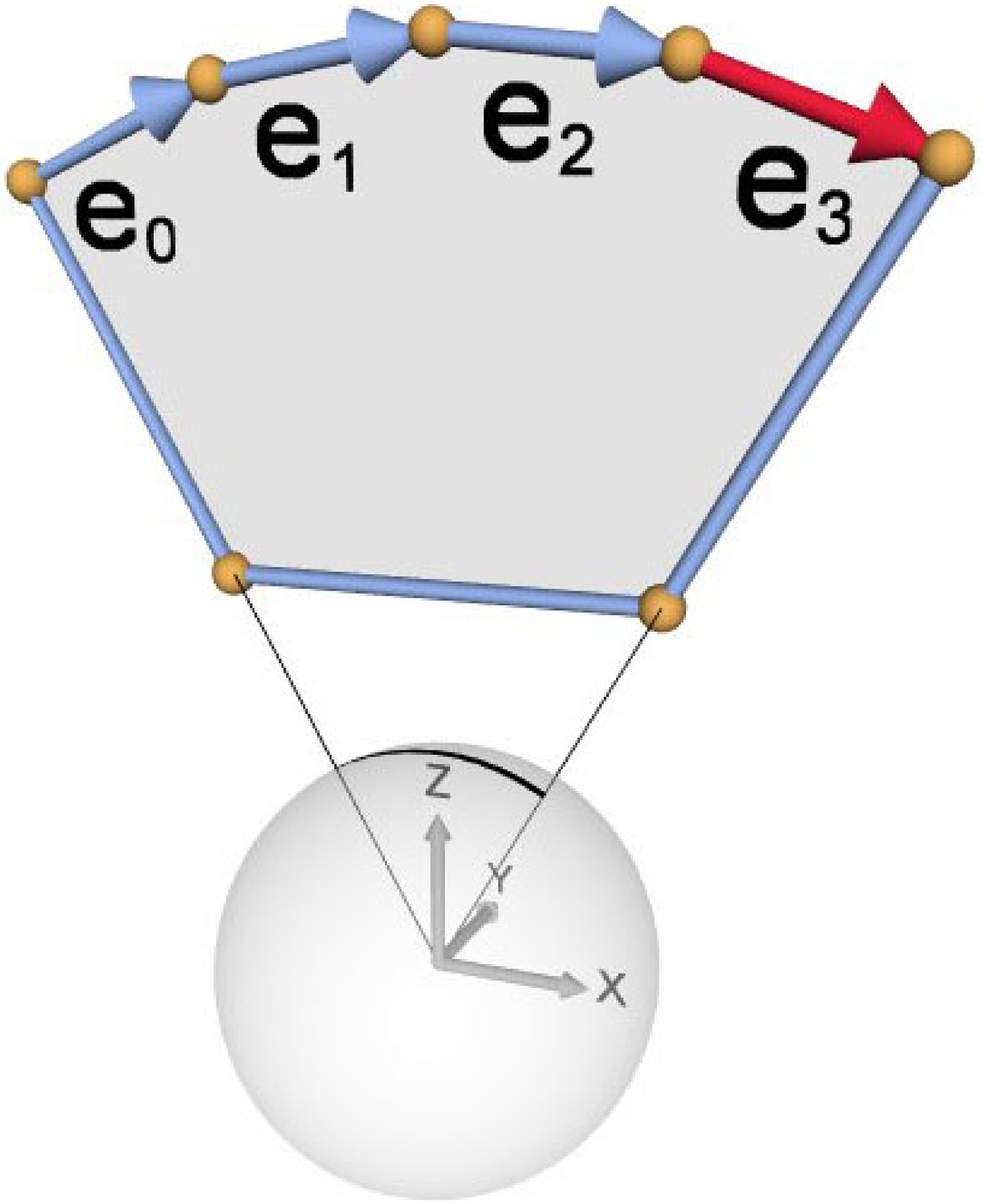,width=3cm,silent=}
\end{wrapfigure}
adjacent silhouette edge in a loop, until we rediscover the first
one. We project only the target vertices of significant silhouette
edges, and connect consecutive projections using arcs of great
circle. Let $e$ and $e'$ be adjacent silhouette edges. We skip $e$,
if the projections of $e$ and $e'$ lie on the same great circle.
For example, all but the last adjacent silhouette edges incident
to a facet supported by a plane that contains the origin are
redundant, as illustrated in the figure above. Here we skip $e_0$,
$e_1$, and $e_2$, and project the target vertex of $e_3$.

The output of this phase is a map from ordered pairs of distinct indices
into lists of arrangements as described above. Each ordered pair
$<\!i,j\!>, i \neq j$ is associated with the list of central
projections of the pairwise Minkowski sums of $P_j$'s sub-parts and
the reflection through the origin of $P_i$'s sub-parts.

\subsection{Pairwise Minkowski Sum Projection}
\label{ssec:assem_plan:pairwise-ms-projection}
For each pair of distinct parts $P_i$ and $P_j$ we compute the
union of projections of the pairwise Minkowski sums of all
sub-parts of part $P_j$ and reflections of all sub-parts of part $P_i$.

The output of this phase is a map from ordered pairs of distinct indices
into arrangements. Each ordered pair $<\!i,j\!>, i \neq j$ is
associated with a single arrangement extended as described above, that
represents the central projection $Q_{ij}$ of $M_{ij} = P_j \oplus (-P_i)$.

\begin{wraptable}{l}{6.8cm}
  \vspace{-15pt}
  \begin{tabularx}{6.5cm}{l@{}p{2ex}p{2ex}p{2ex}p{2ex}X}
    \hline
    \hline
    \multicolumn{6}{l}{Unite Pairwise Sub-part}\\
    \multicolumn{6}{r}{Minkowski sums Projections}\\
    \hline
    & \multicolumn{5}{l}{\textbf{for} $i$ \textbf{in} $\{1,2,\ldots,n\}$}\\
    & & \multicolumn{4}{l}{\textbf{for} $j$ \textbf{in} $\{1,2,\ldots,n\}$}\\
    & & & \multicolumn{3}{l}{\textbf{if} $i == j$ \textbf{continue}}\\
    & & & \multicolumn{3}{l}{$Q_{ij} = \emptyset$}\\
    & & & \multicolumn{3}{l}{\textbf{for} $k$ \textbf{in} $\{1,2,\ldots,m_i\}$}\\
    & & & & \multicolumn{2}{l}{\textbf{for} $\ell$ \textbf{in} $\{1,2,\ldots,m_j\}$}\\
    & & & & & \multicolumn{1}{l}{$Q_{ij} = Q_{ij} \cup Q^{ij}_{k\ell}$}\\
    \hline
  \end{tabularx}
  \vspace{-15pt}
\end{wraptable}
We exploit the \Index{overlay} operation in this phase the second time
throughout this process, this time in a loop. Given two distinct parts
$P_i$ and $P_j$ we traverse all projections in the set
$\{Q^{ij}_{k\ell}\,|\,k = 1,2,\ldots,m_i, \ell = 1,2,\ldots,m_j\}$,
and accumulate the result in the arrangement $Q_{ij}$. As mentioned
in Section~\ref{ssec:assem_plan:pairwise-sub-part-ms-construction}, when
the overlay operation progresses, new vertices, edges, and faces of the
resulting arrangement are created. When a new face $f$ is created as a
result of the overlay of a face $g$ in some projection $Q^{ij}_{k\ell}$, and
a face in the accumulating arrangement, the Boolean flag associated with
$f$, which indicates whether all directions $\vecd \in f$ pierce $M_{ij}$,
is turned on, if $\vecd$ pierces $M^{ij}_{k\ell}$, that is, if the flag
associated with the face of $g$ is on.

The intermediate result of this step are arrangements with potentially
redundant edges and vertices. It is desired (but not necessary) to
remove these cells, as it reduces the time consumption of the
succeeding operations, which is directly related to the complexity of
the arrangements. It has even a larger impact when the optimization
described in Section~\ref{sec:assem_plan:optimization} is applied, as the
optimization decreases the number of preceding operation at the
account of slightly increasing the number of succeeding operations. We
remove all edges and vertices that are in the interior of the projection,
that is all marked edges and vertices. We also remove spherically collinear
vertices on the boundary of the projection, the degree of which decreased
below three, as a result of the redundant-edge removal.

\begin{figure*}[!htp]%
  \centerline{%
    \begin{tabular}{cccccc}
      \epsfig{figure=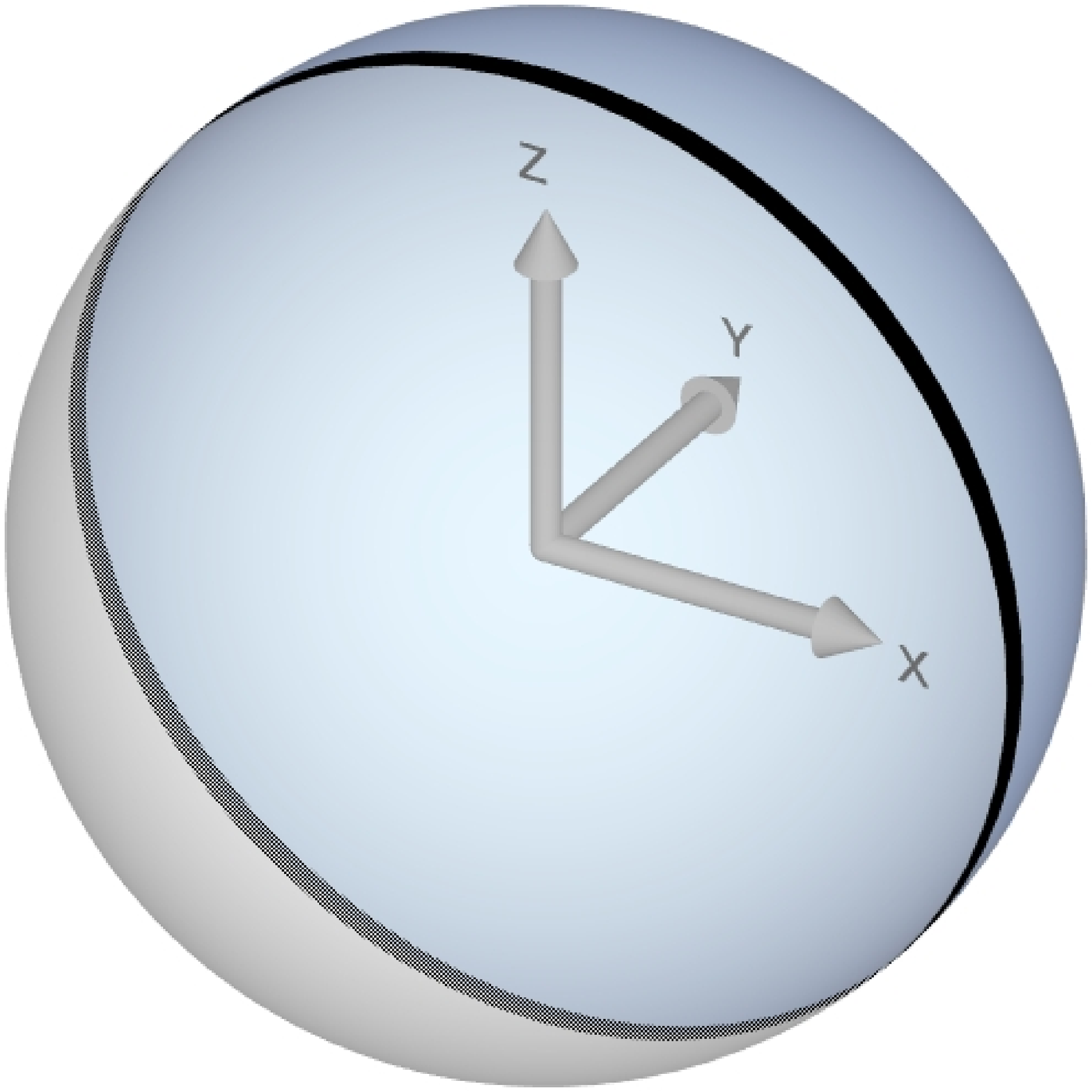,height=1.85cm,silent=} &
      \epsfig{figure=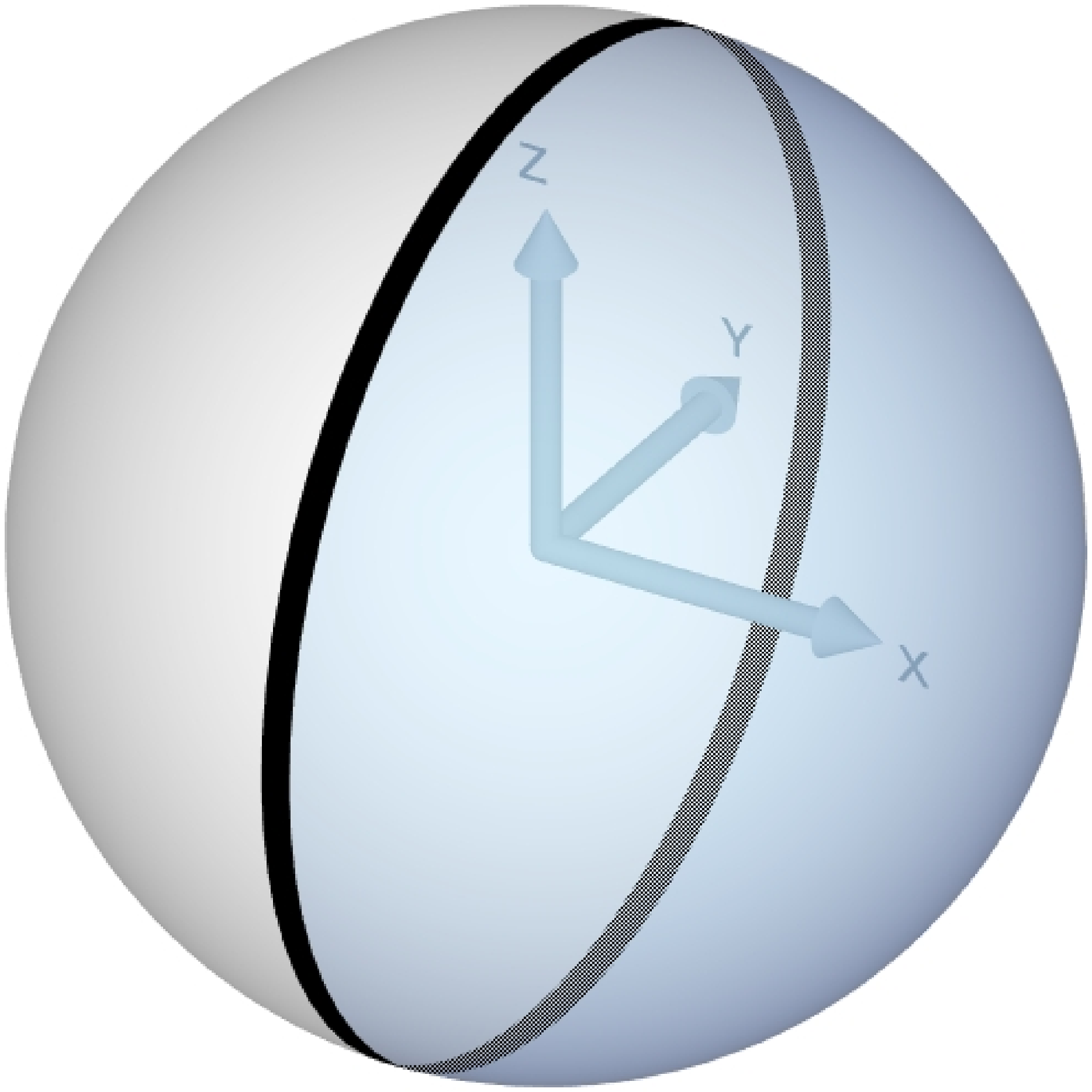,height=1.85cm,silent=} &
      \epsfig{figure=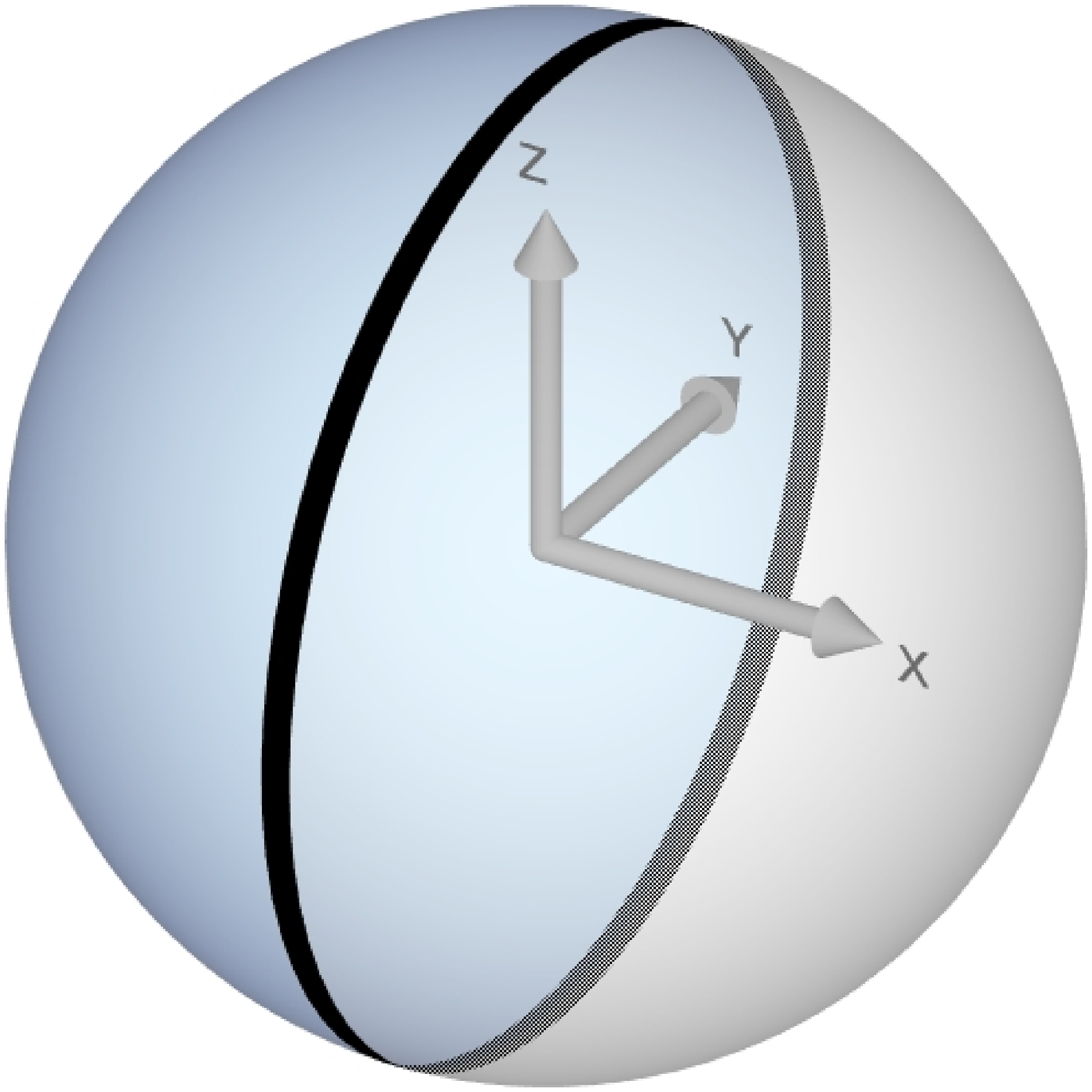,height=1.85cm,silent=} &
      \epsfig{figure=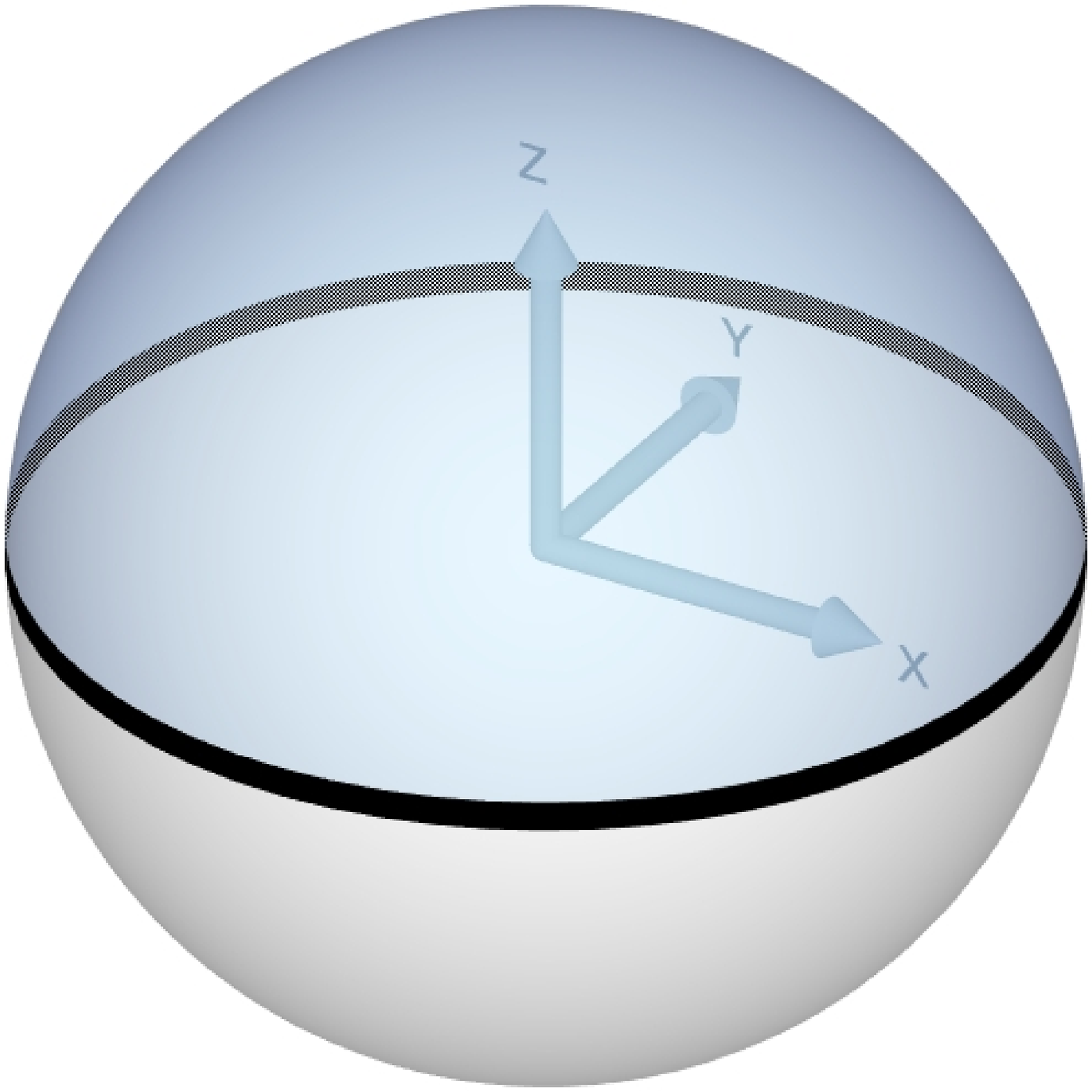,height=1.85cm,silent=} &
      \epsfig{figure=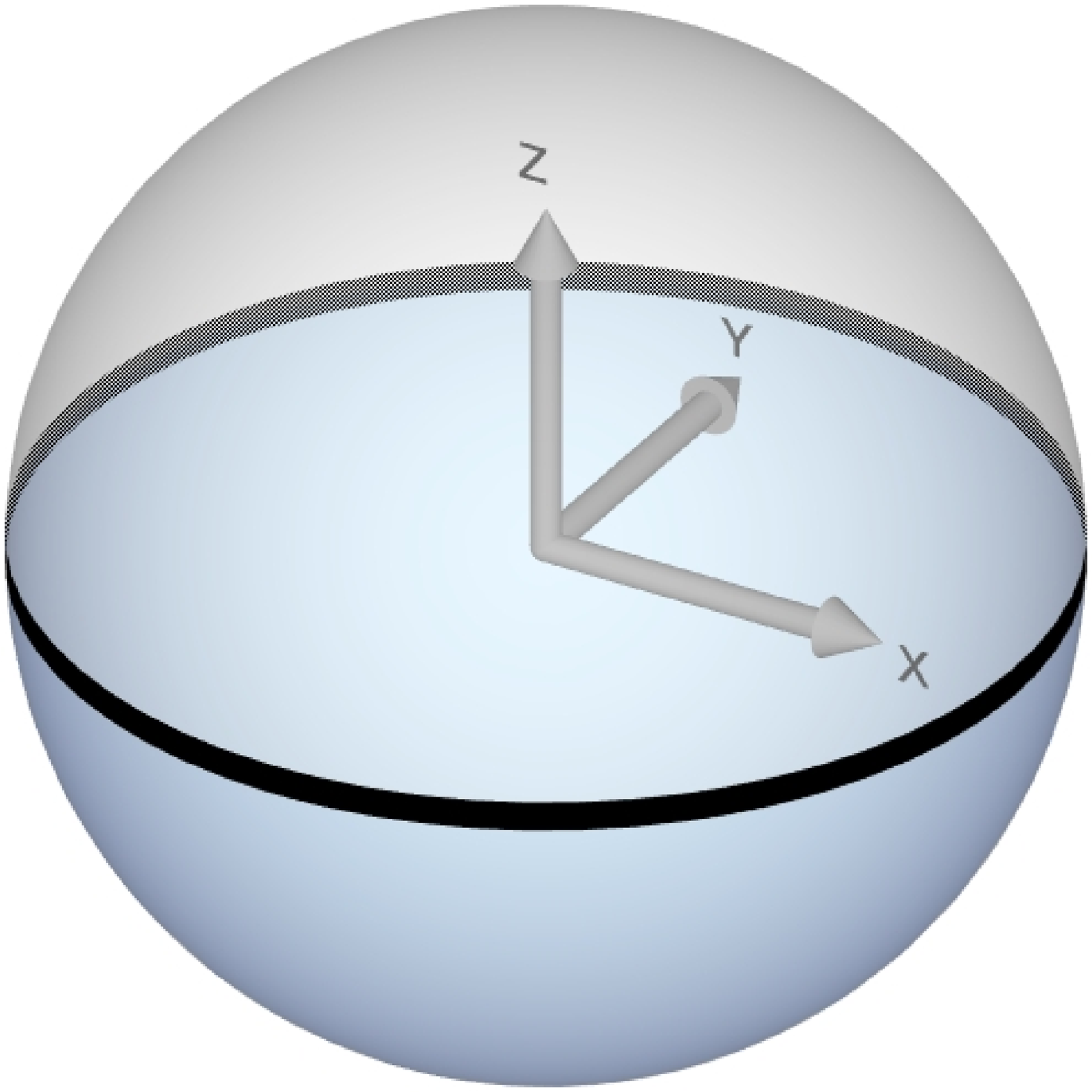,height=1.85cm,silent=} &
      \epsfig{figure=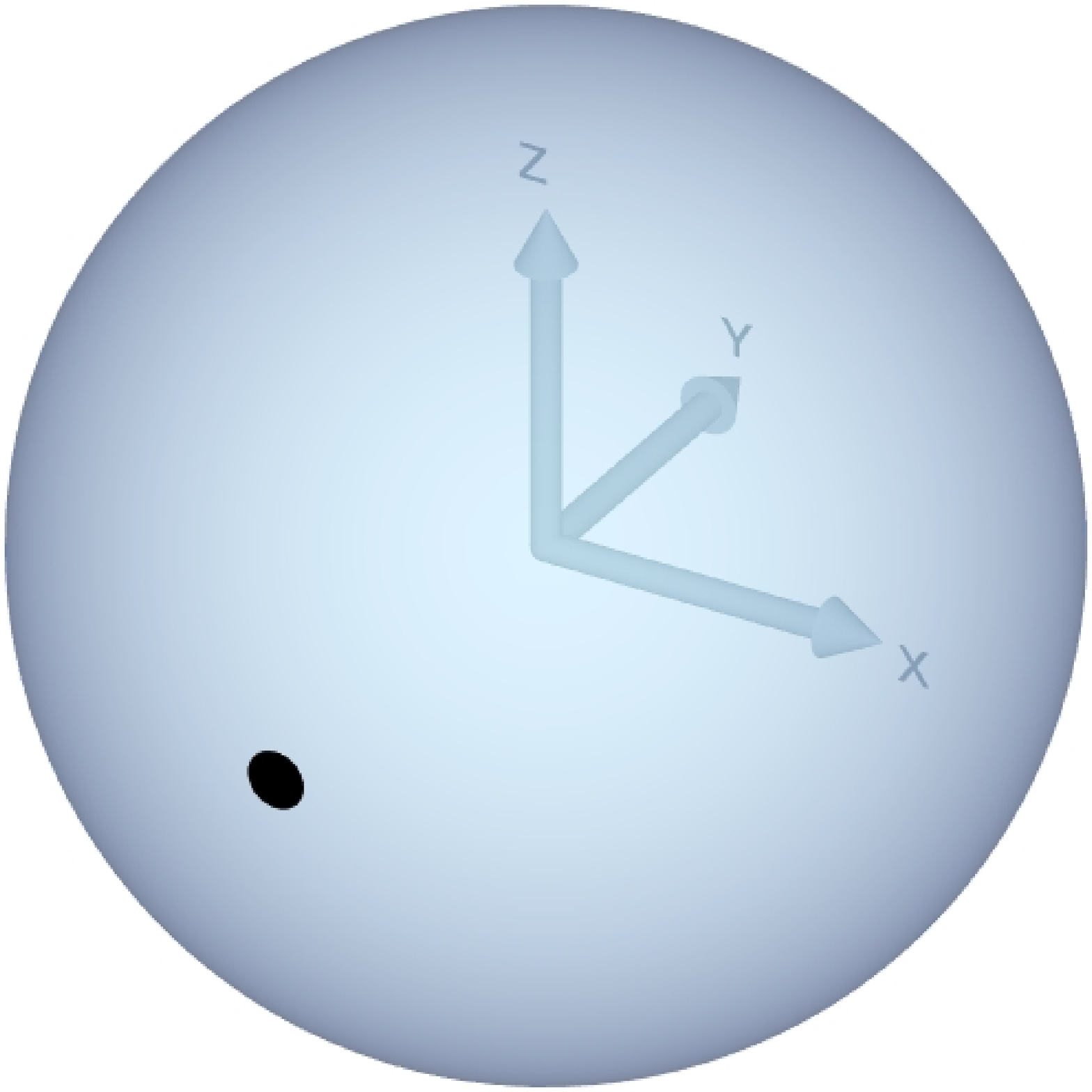,height=1.85cm,silent=}\\
      (a) & (b) & (c) & (d) & (e) & (f)
    \end{tabular}
  }
  \caption[Peg-in-the-hole Minkowski sum projections]%
          {\capStyle{Peg-in-the-hole Minkowski sum
           projections. (a), (b), (c), (d), and (e) are the sub-part
           projection. (f) is the union of the former.}}
  \label{fig:peg-in-the-hole}
  \vspace{-10pt}
\end{figure*}

\cgal{} also supports Boolean operations applied to general
polygons\footnote{The generic code supports point sets bounded by
algebraic curves embedded on parametric surfaces referred to as general
polygons.} and in particular the union operation. However, it consumes
and produces {\em regularized} general polygons; see
Section~\ref{ssec:aos:applications:bso}. This \Index{regularization}
operation is harmful in the realm of assembly planning.
\begin{wrapfigure}[5]{r}{4.6cm}
  \vspace{-10pt}
  \begin{tabular}{cc}
    \epsfig{figure=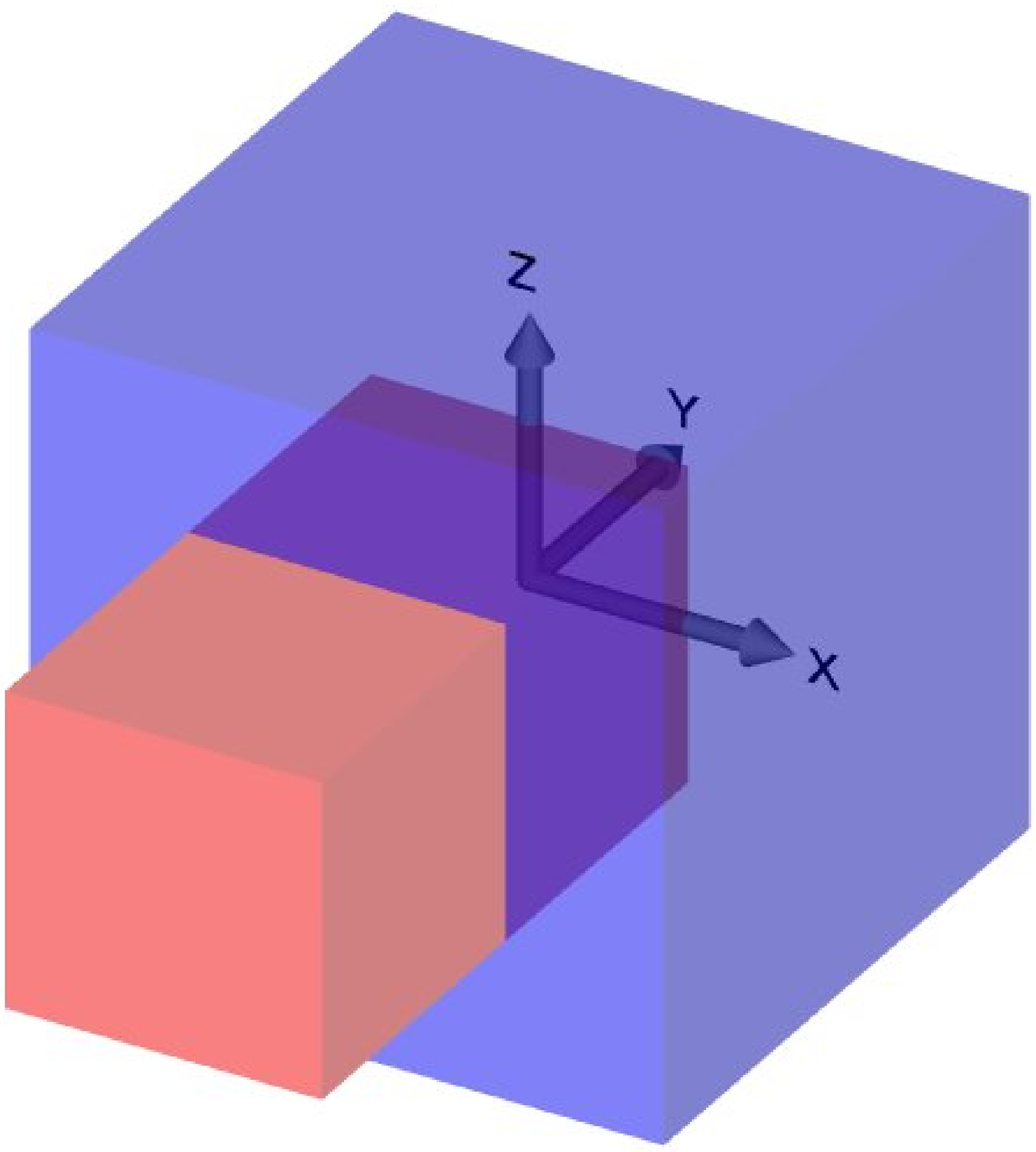,height=2cm,silent=} &
    \epsfig{figure=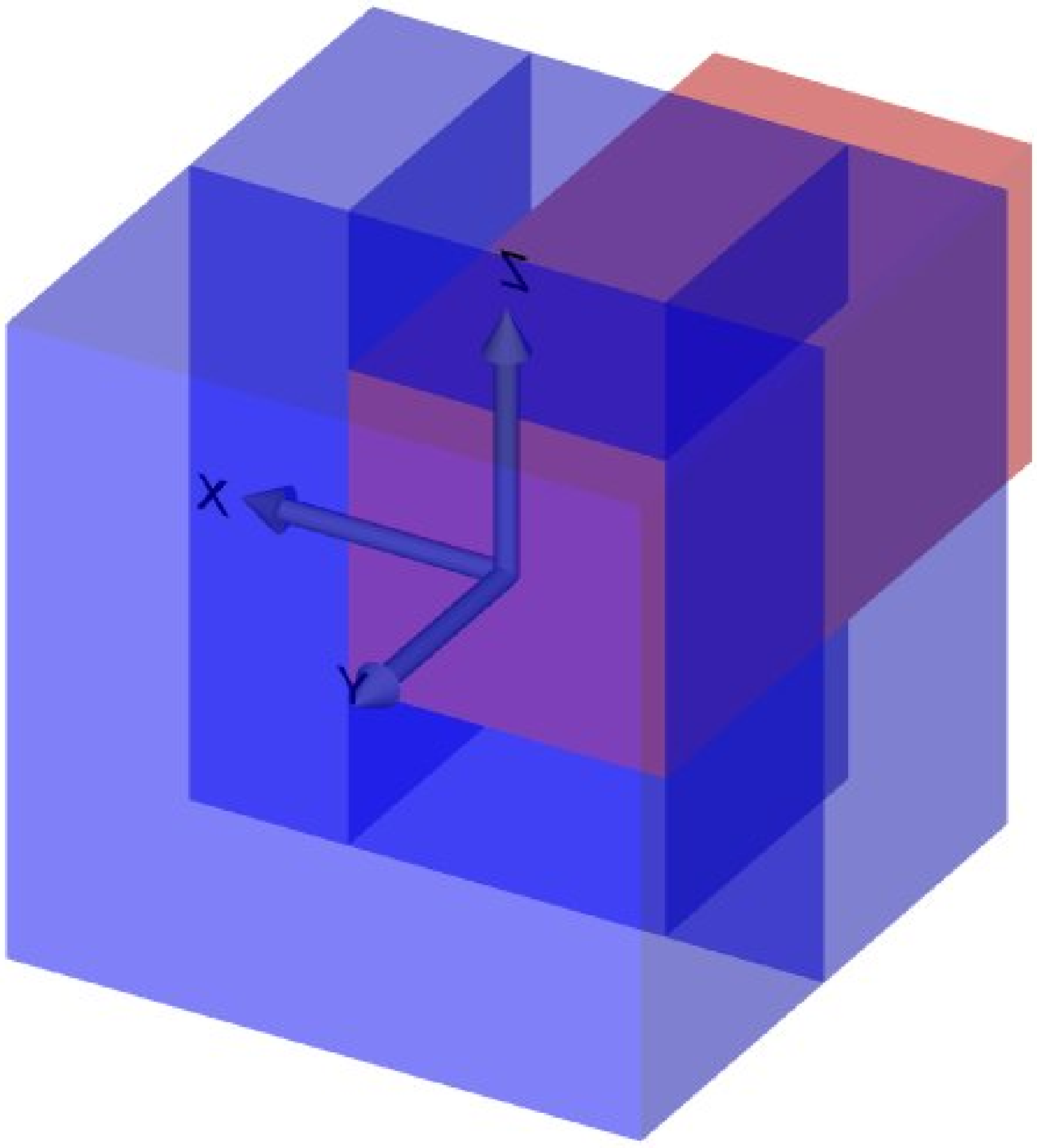,height=2cm,silent=}
  \end{tabular}
\end{wrapfigure}
Therefore, we work 
directly on the
cells of the arrangements $Q_{ij}$. Quite often the projection
contains isolated vertices and edges, as occurs in
the example depicted on the right, referred to as ``peg-in-the-hole''.
Here the assembled product is translucently viewed from two opposite
directions. The blue part is stationary and is decomposed into five
sub-parts. Figure~\ref{fig:peg-in-the-hole}
illustrates the corresponding five pairwise Minkowski sum projections, and
their union. The complement of the union consists of a single isolated vertex.

\begin{wrapfigure}[9]{l}{4.2cm}
  \begin{tabular}{cc}
    \multicolumn{2}{c}{\epsfig{figure=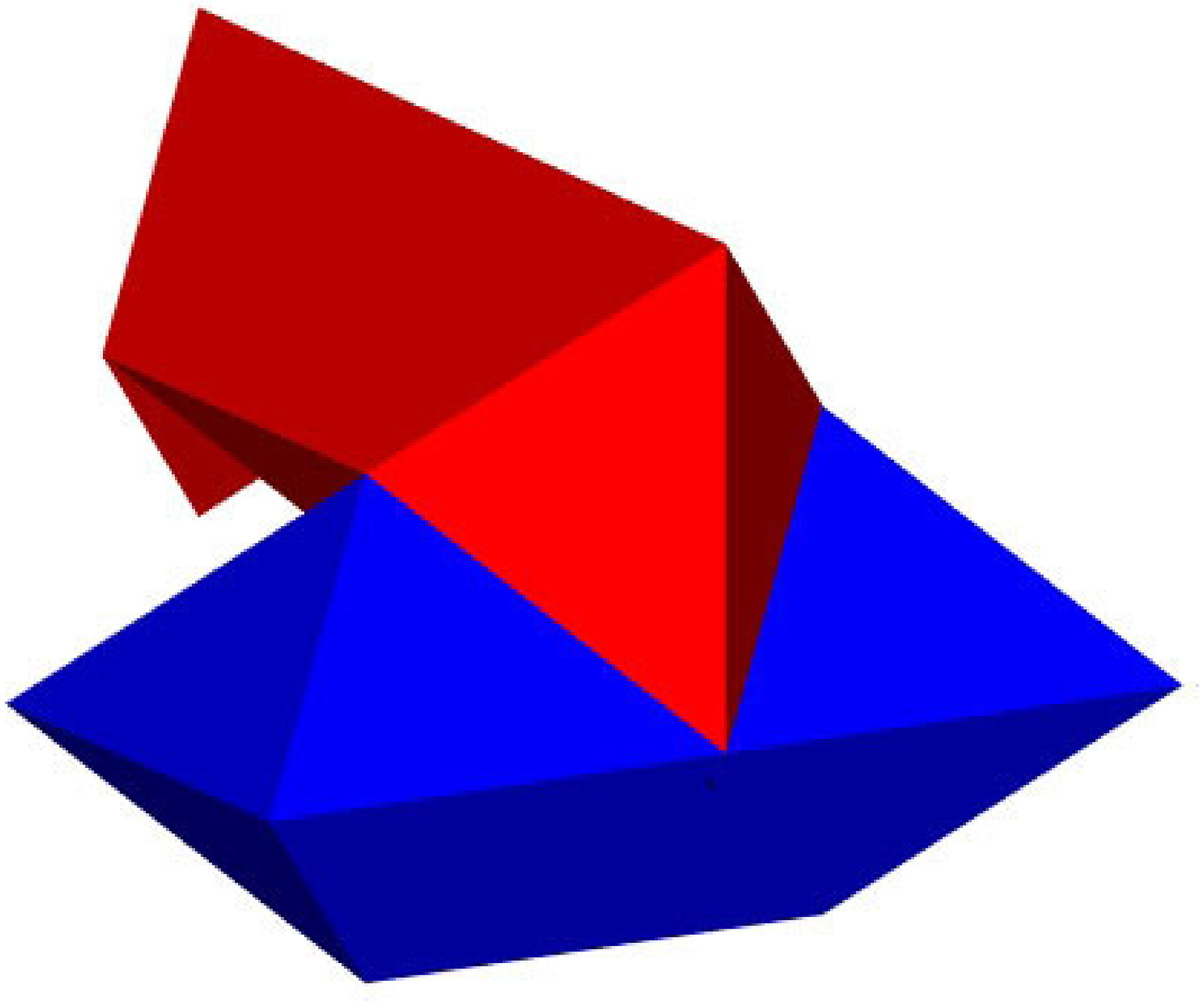,width=3.5cm,silent=}}\\   
    \epsfig{figure=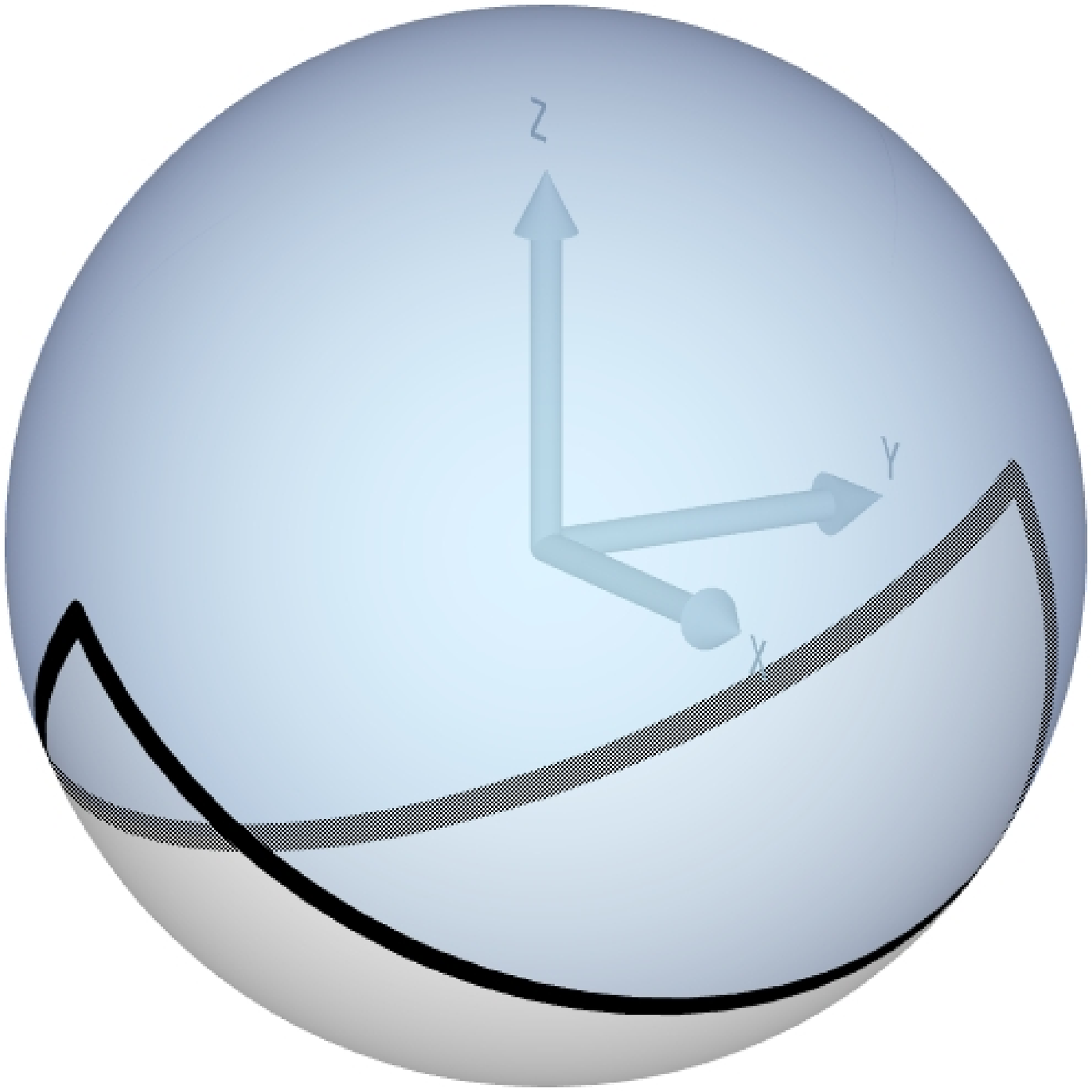,width=1.85cm,silent=} &
    \epsfig{figure=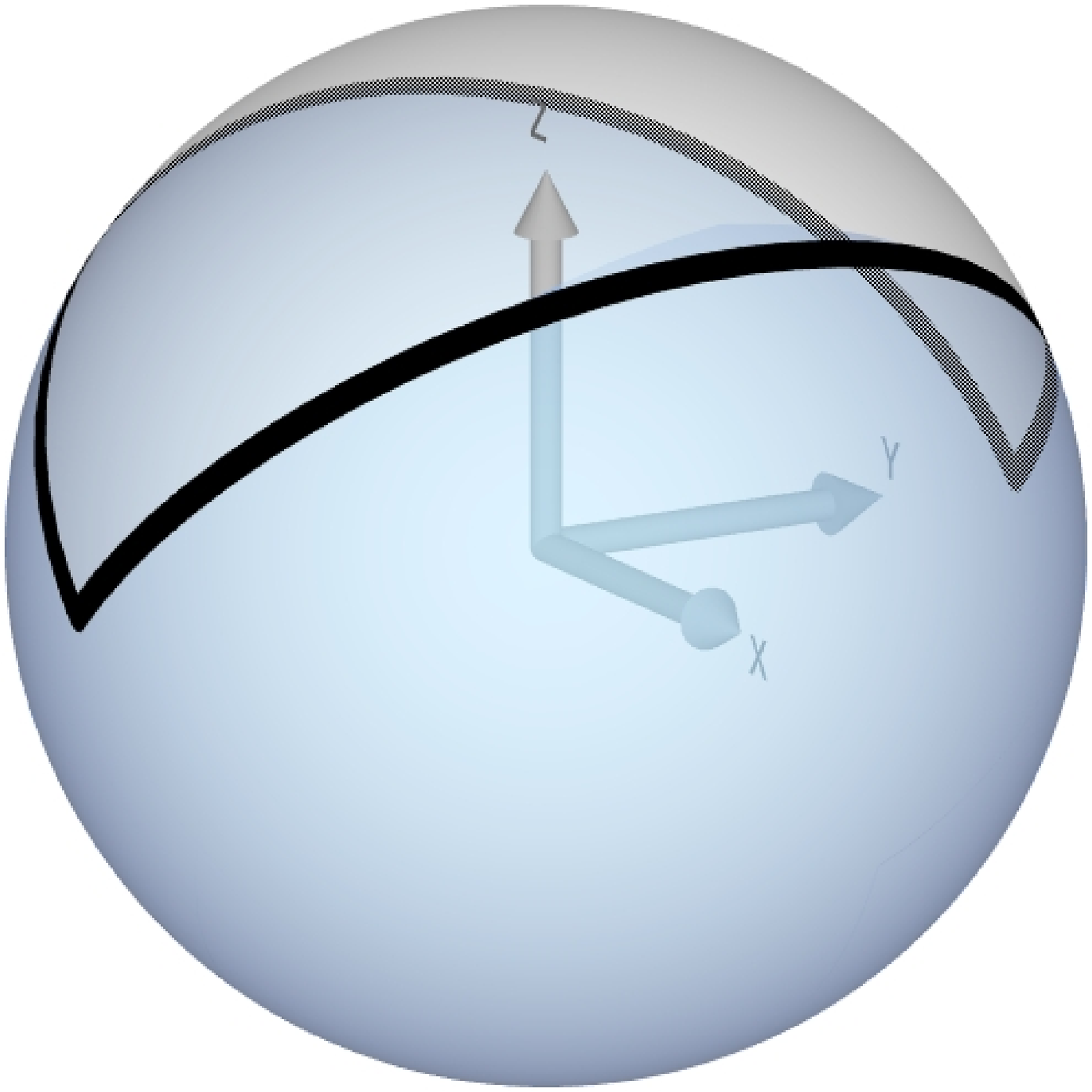,width=1.85cm,silent=}\\
    $B \oplus (-R)$ & $R \oplus (-B)$
  \end{tabular}
\end{wrapfigure}
\noindent
Recalling our Split Star assembly, the projection of the Minkowski sum
of the red part and the reflection of the blue part, and its
reflection, that is, the projection of the Minkowski sum of the blue
part and the reflection of the red part are depicted on the left.

\subsection{Motion-Space Construction}
\label{ssec:assem_plan:motion-space-construction}
We compute a single arrangement that represents the motion space, where
each cell $c$ of the arrangement is extended with a DBG. We use the
adjacency-matrix storage format provided by \boost~\citelinks{boost}
to represent each DBG. Recall that for a graph with $n$ vertices such
as ours, an $n \times n$ matrix is used, where each element $a^c_{ij}$
of a DBG associated with cell $c$ is a Boolean flag that indicates
whether part $P_i$ collides with part $P_j$ when moved along any
direction $\vecd \in c$.
In particular we use the \ccode{adjacency\_matrix} class. It
implements the Boost Graph Library (\bgl)~\cite{sll-bgl-02} interface,
which supports, among the other, easy insertions of new edges into
existing graphs. Handling large assemblies with sparse blocking
relations may require different representations of DBGs to reduce
memory consumption.

We exploit the \Index{overlay} operation in this phase the third time
similar to its application in
Section~\ref{ssec:assem_plan:pairwise-ms-projection}. We traverse all
central projections in the set $\{Q_{ij}\,|\, 1 \leq i < j \leq n\}$,
and accumulate the result in the final motion-space arrangement. As
mentioned above in
Section~\ref{ssec:assem_plan:pairwise-sub-part-ms-construction}, when
the overlay operation progresses, new vertices, edges, and faces of the
\begin{wrapfigure}[7]{r}{3.5cm}
  \vspace{-15pt}
  \epsfig{figure=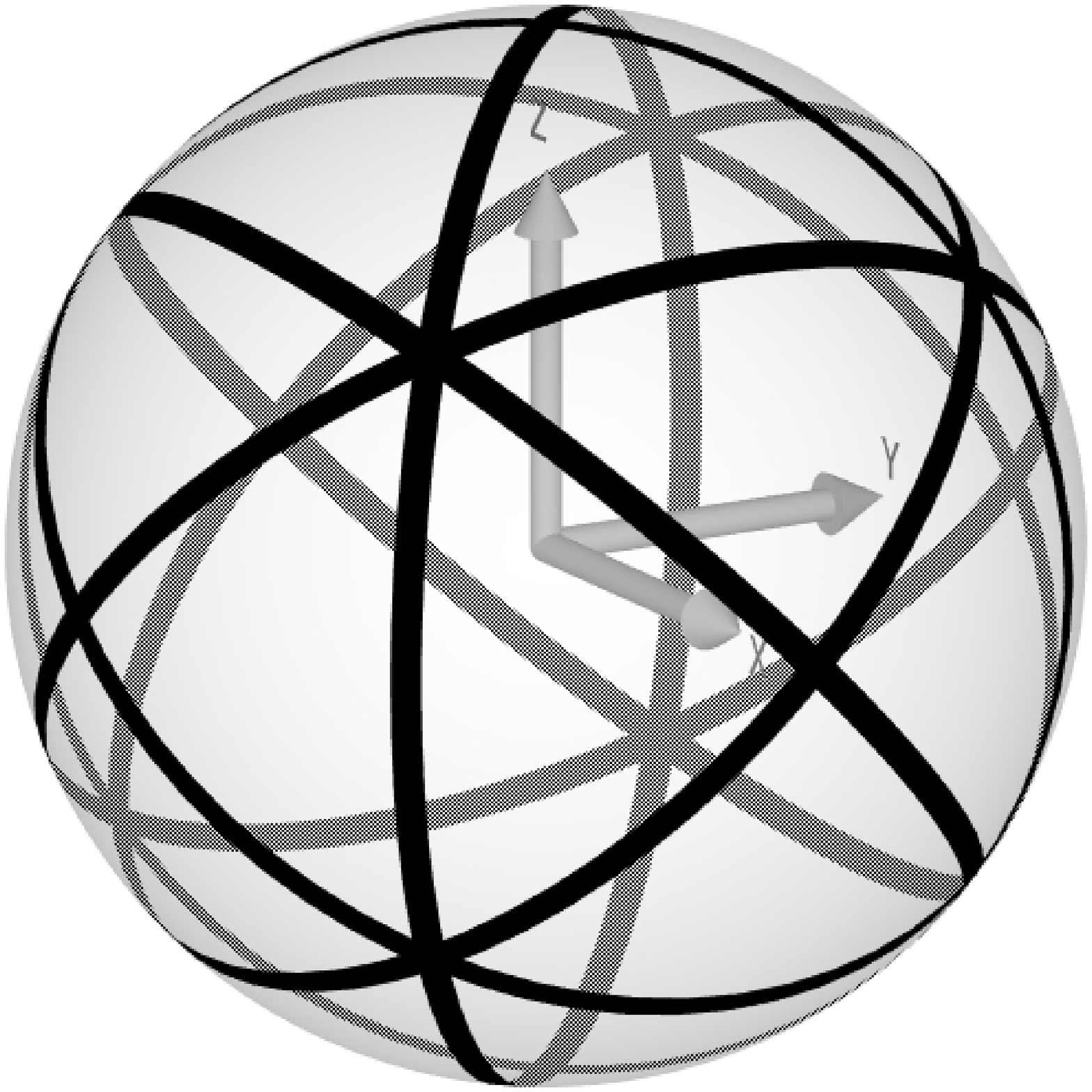,width=3.5cm,silent=}
\end{wrapfigure}
resulting arrangement are created. When a new cell $c$ is created as a
result of the overlay of a face $g$ in some projection $Q_{ij}$, and a
cell in the accumulating arrangement, the DBG associated with $c$ is
updated. That is, if the flag associated with $g$ is turned on, we
insert an edge between vertex $i$ and vertex $j$ into the DBG
associated with $c$.

Depicted on the left is the motion-space arrangement
computed by our program for the Split Star assembly.

\subsection{Motion-Space Processing}
\label{ssec:assem_plan:processing}
\begin{wraptable}[12]{l}{5cm}
  \vspace{-25pt}
  \caption[Split Star partitioning directions and corresponding subassemblies]%
            {\capStyle{The Split Star valid partitioning directions and
                corresponding subassemblies.}}
  \label{tab:split-start-result}
  \begin{tabular}{c|r@{,}r@{,}r|c}
    & \multicolumn{3}{c|}{\bf Direction} & {\bf Subset}\\
    \hline
    1. & $-1$ & $-1$ & $-1$ & \greenpart \bluepart\turquoisepart\\
    2. & $-1$ & $-1$ & $ 1$ & \redpart \bluepart\turquoisepart\\
    3. & $-1$ & $ 1$ & $-1$ & \greenpart \purplepart\turquoisepart\\
    4. & $-1$ & $ 1$ & $ 1$ & \redpart\purplepart\turquoisepart\\
    5. & $ 1$ & $-1$ & $-1$ & \greenpart\bluepart\yellowpart\\
    6. & $ 1$ & $-1$ & $ 1$ & \redpart\bluepart\yellowpart\\
    7. & $ 1$ & $ 1$ & $-1$ & \greenpart\purplepart\yellowpart\\
    8. & $ 1$ & $ 1$ & $ 1$ & \redpart\purplepart\yellowpart
  \end{tabular}
\end{wraptable}
We traverse all vertices, edges, and faces of the motion-space
arrangement in this order, and test the DBG associated with each cell
for strong connectivity using the \boost{} global function
\ccode{strong\_components()}. This function computes the strongly
connected components of a directed graph using Tarjan's algorithm
based on depth-first search\index{depth-first search|see{DFS}}
(\Index{DFS})~\cite{t-dfslg-72}. The set of constraints associated
with a vertex $v$ is a proper subset of the constraints associated with
the edges incident to $v$. Similarly, the set of constraints associated
with an edge $e$ is a proper subset of the constraints associated with
the two faces incident to $e$. Therefore, if the DBGs of all vertices
are strongly connected, we terminate with the conclusion that the

\begin{wrapfigure}[6]{r}{3.5cm}
  \epsfig{figure=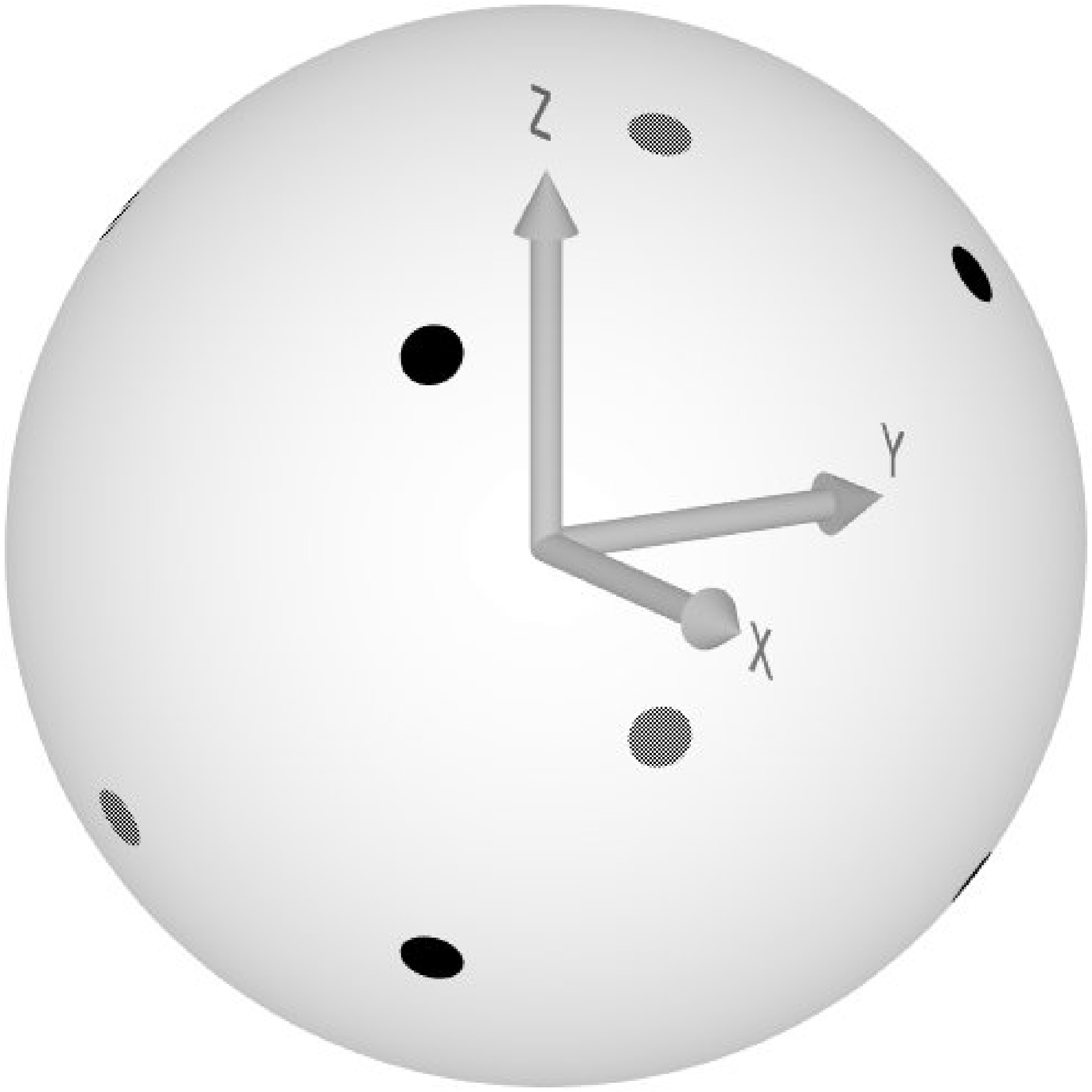,width=3.5cm,silent=}
\end{wrapfigure}
\noindent assembly
is interlocked. Similarly, if we are interested in finding
all solutions, and the DBGs of all edges are strongly connected, we
terminate, as no further solutions on faces exist.

For the Split Star assembly, our program successfully identifies all
the eight partitioning directions depicted on the right along with the
corresponding subset of parts listed in Table~\ref{tab:split-start-result}.

\section{Additional Optimization}
\label{sec:assem_plan:optimization}

The reflection of the sub-parts through the origin as described in
Section~\ref{ssec:assem_plan:sub-part-gaussian-map-reflection} has been
naively implemented. The computation is applied to the polyhedral-mesh
representation of each sub-part. An immediate optimization calls for
an application of the reflection operation directly on the
arrangements that represent the Gaussian map. We are planning to
introduce a generic implementation of the reflection operation that
operates on any applicable arrangement. This operation alters the
incidence relations between the arrangement features and their
geometric embeddings.
For each vertex, it negates its associated point, and inverts the
order of the halfedges incident to it. For each edge, it negates the
associated curve. For each face, it inverts the order of the halfedges
along its outer boundary. Similar to the \Index{overlay} operation (see
Section~\ref{ssec:aos:facilities:overlay}), where the user can provide
a set of ten functions, which are invoked when new vertices, edges,
and faces of the resulting arrangement are created, while the overlay
operation progresses, the user can provide a set of three functions
that are invoked when a new vertex, halfedge, and face are created,
while the reflection operation progresses. Extended data associated
with these types, such as a primal vertex associated with an
arrangement face as in the case of an arrangement representing a
Gaussian map, can easily be updated with the provision of an
appropriate function.

The trivial observation that $P \oplus (-Q) = -((-P) \oplus Q)$ leads to
another optimization. Instead of reflecting all sub-parts in the set
$\{P^i_k\,|\, i=1,2,\ldots,n, k=1,2,\ldots,m_i\}$, we reflect only the
sub-parts in the set $\{P^i_k\,|\, i=2,\ldots,n, k=1,2,\ldots,m_i\}$,
and compute only the pairwise sub-parts Minkowski sums in the set
$\{M^{ij}_{k\ell}\,|\, 1 \leq i < j \leq n, k=1,2,\ldots,m_j,
\ell=1,2,\ldots,m_j\}$, their central projection, and the union of
the appropriate projections to yield the set
$\{Q_{ij}\,|\, 1 \leq i < j \leq n\}$. Then, we apply the reflection
operation described above on each member of this set, and obtain the full
set of projections $\{Q_{ij}\,|\, i=1,2,\ldots,n, j=1,2,\ldots,n\}$.
The Boolean flag associated with a face of an arrangement
that represents a central projection is equal to the flag associated
with its reflection. In other words, a face of $Q_{ij}$ consists of 
directions that pierce $M_{ij}$, if and only if, its reflection in
$Q_{ji}$ consists of directions that pierce $M_{ji}$.

Phase~8 is purely topological. Thus, we do not expect the time
consumption of this phase to dominate the time consumption of the
entire process for any input. Nevertheless, it might be possible to
reduce its contribution to the total time consumption through
efficient testing for strong connectivity applied to all the
DBGs~\cite{kmw-cldgp-98}, exploiting the similarity between DBGs
associated with incident cells. Recall, that the set of arcs in a DBG
associated with a vertex $v$ is a subset of the set of arcs associated
with an edge incident to $v$. Similarly, the set of arcs in a DBG
associated with an edge $e$ is a subset of the set of arcs associated
with a face incident to $e$. The proposed technique reduces the
cost from $O(n^2)$ per DBG to an amortized cost of $O(n^{1.376})$,
where $n$ is the maximum number of arcs in any blocking graph. 

\section{Experimental Results}
\label{sec:assem_plan:experiments}
\begin{table}[t]
  \newlength{\cellwidth}\setlength{\cellwidth}{0.8cm}
  \begin{center}
    \caption[Time consumption of assembly partitioning of the Split Star]%
            {\capStyle{Time consumption (in seconds) of
             the execution of the eight phases applied to the Split Star
             assembly as input. Each one of the three rows refers to a
             different decomposition of the assembly.
             {\bf A} --- number of convex sub-parts per part.
             {\bf B} --- number of sub-part vertices per part.
             {\bf C} --- total number of convex sub-parts.
             {\bf D} --- total number of Minkowski sums.
             {\bf E} --- total number of arrangements of geodesic arcs embedded
             on the sphere constructed throughout the process.}}
    \label{tab:experiments}
    \begin{tabular}{r|r|r|r|r|c|r|r|r|r|r|r|r}
      \multicolumn{1}{p{\cellwidth}|}{{\bf ~~A}} &
      \multicolumn{1}{p{\cellwidth}|}{{\bf ~~B}} &
      \multicolumn{1}{p{\cellwidth}|}{{\bf ~~C}} &
      \multicolumn{1}{p{\cellwidth}|}{{\bf ~~D}} &
      \multicolumn{1}{p{\cellwidth}|}{{\bf ~~E}} &
      \multicolumn{1}{p{\cellwidth}|}{{\bf ~~~1}} &
      \multicolumn{1}{p{\cellwidth}|}{{\bf ~~~2}} &
      \multicolumn{1}{p{\cellwidth}|}{{\bf ~~~3}} &
      \multicolumn{1}{p{\cellwidth}|}{{\bf ~~~4}} &
      \multicolumn{1}{p{\cellwidth}|}{{\bf ~~~5}} &
      \multicolumn{1}{p{\cellwidth}|}{{\bf ~~~6}} &
      \multicolumn{1}{p{\cellwidth}|}{{\bf ~~~7}} &
      \multicolumn{1}{p{\cellwidth}}{{\bf ~~~8}}\\
      \hline
      \hline
      3 & 16 & 18 & 270 & 607 & NA & 0.01 & 0.04 & 2.38 & 0.41 & 2.05 &
      \multirow{3}*[0pt]{0.36} & \multirow{3}*[0pt]{0.01}\\
      5 & 22 & 30 & 750 & 1591 & NA & 0.01 & 0.05 & 5.03 & 1.09 & 7.07 & &\\
      8 & 32 & 48 & 1920 & 3967 & NA & 0.01 & 0.06 & 11.12 & 2.41 & 27.99 & &\\
      \hline
    \end{tabular}
  \end{center}
\end{table}
Our program can handle all inputs. However, we limit ourselves to a
representative set of test cases, where we compare the impact of
different decompositions on the process time consumption. The results
listed in Table~\ref{tab:experiments} were produced by experiments
conducted on a Pentium PC clocked at 1.7~GHz. In all three test cases
we use the Split Star assembly as input. Naturally, in all three cases
identical projections are obtained as the intermediate results of
Phase~6, hence the identical time consumption of the succeeding last
two phases. Evidently, it is desired to decompose each part into as
few as possible sub-parts with as small as possible number of
features. However, an automatic decomposition operation may require
large amount of resources to arrive at optimal or near optimal
decompositions. Notice that Phases~4 and~6 dominate the time
complexity. This is due to the large number of geometric predicates
that must be evaluated during the execution of the overlay operation.

\begin{savequote}[10pc]
  Seriousness is the only refuge of the shallow.
\sffamily
\qauthor{Oscar Wilde}
\end{savequote}
\chapter{Conclusion and Future Work}
\label{chap:conclusion}
In this thesis we show how a complete implementation of extendible 
arrangements\index{arrangement} with a rich set of operations enables
a broad spectrum of robust applications that solve problems arising in
domains such as motion planning, assembly planning, and solid
modeling. For example, we describe how arrangements embedded on
two-dimensional surfaces can be efficiently used to compute Minkowski
sums\index{Minkowski sum} of two polytopes in $\rrr$, which in turn,
and in conjunction with several other operations based on such
arrangements, can be used to partition an assembly with an infinite
translation motion. The rest of this chapter is devoted to future
prospects related to our research topics.

\section{Arrangements on Two-Dimensional Surfaces}
\label{sec:conclusion:arr-2d}
Constructing Minkowski Sums of polytopes in $\rrr$ has been
successfully attempted in the past. We introduce a robust, yet
efficient method. Table~\ref{tab:mink-time} shows that both our exact
methods outperform the other exact methods. However, we believe that
both of our methods, and in fact all \cgal{} based methods have great
potential for further improvements through future optimizations applied
to the infrastructure of \cgal{}, as \cgal{} is an evolving project.
While the space consumption of the \cgm{} method is greater than the
space consumption of the spherical Gaussian-map method, the table also
reveals that the \cgm{} method is currently more efficient than its
rival. We estimate that the gap will decrease, if not vanish, once all
optimizations for the \aos{} data-structure that are still pending are
implemented and enabled.

We are constantly striving to improve the quality of our infrastructure,
that is the \aos{} package. We have already identified few weak spots.
Eliminating them will increase the genericity, extendibility,
efficiency, and functionality of the package. We provide one example in
each category.

\subsection{Generic Observers}
\label{ssec:conclusion:arr-2d:Generic-observers}
There is a certain similarity between observers\index{observer}
(see Section~\ref{ssec:aos:facilities:notification}) and
visitors\index{visitor} (see
Section~\ref{ssec:aos:facilities:sweep-line}), as typically each of
their methods is triggered as a response to a certain event --- a
member of a pre-determined list of events. Technically, the main
difference between them is that observers define a one-to-many mapping
between objects, while visitors define a one-to-one
mapping.\footnote{They also differ in essence.
While an observer typically implements a notifier, a visitor is usually
a coherent part of an algorithm based on a fundamental and flexible
framework.~\cite{ghjv-dp-95}} Recall, for example, that a single arrangement
may register many observers, but it is only natural to relate a single
visitor to a specific algorithmic framework in order to realize a certain
concrete algorithm. Consequently, arrangement observers are derived from a
common base class, and their methods must be virtual. This is how
modules, which are closed for modification, are extended using
\Index{object-oriented programming}. However, composability of such
modules is limited, since independently produced modules generally do
not agree on common abstract interfaces from which supplied types must
inherit. In addition, when techniques from the object-oriented
programming and the generic programming\index{generic programming}
paradigms are mixed, they often clash. There are known methods to
replace lists of objects, derived from a common base class, and linked
during run time, with a list of syntactically unrelated objects
concatenated during compile time (coded using a generic programming
technique)~\cite[Chapter~3]{a-mcd-01}.
Nevertheless, we would like to simultaneously enjoy the benefits of both
the object-oriented and the generic programming paradigms, that is,
to enable the immediate production of composable modules that support
dynamic polymorphism. A very important research direction in our context
is to explore these possibilities, perhaps pushing the limits of the \Cpp{}
programming language along the way.

\subsection{Property Maps}
\label{ssec:conclusion:arr-2d:attribute-maps}
In many cases we need to associate values (called ``properties'') with
the vertices, the halfedges, and the faces of the arrangement. In
addition, it is often necessary to associate multiple properties
with each vertex, edge, or face; this is what \boost{} literature
refers to as multi-parameterization. \bgl{}~\cite{sll-bgl-02}
graph classes have template parameters for vertex and edge
``properties''. A property specifies the parameterized type of the
property and also assigns an identifying tag to the property.

There are various ways to associate properties with arrangement cells.
One option is to extend the geometric types of the kernel, as the
kernel is fully adaptable and extensible~\cite{hhkps-aegk-07}. However,
this indiscriminating extension may lead to an undue
space-consumption, as every geometric object is extended, regardless
of its use.\footnote{It also requires nontrivial knowledge about the
kernel structure and the techniques to extend it.} Another option, is
to extend the vertex, halfedge, or face records of the
\dcel\index{DCEL@\dcel};
see Section~\ref{ssec:aos:facilities:extension}. This may also lead to
excessive space-consumption, for example, when the data associated
with a halfedge is in fact tied to the embedded geometric curve. In
this case the data, or at least a handle to the data, must be stored
twice in both twin halfedges. A third option is to extend the curve
(or point) types defined by the geometry-traits class; see
Section~\ref{ssec:aos:architecture:data-structure}. But even this
option leads to unjustified space-consumption, when only a limited
number of arrangement features are associated with real data. In such
cases it is advantageous to use external search structures that map
individual arrangement features with their data.

Designing a useful and convenient interface, while taking all
considerations above into account, is a research topic on its own, which
may further push the limits of good usage of the \Cpp{} programming
language.

\subsection{Point Location for Surfaces}
\label{ssec:conclusion:arr-2d:Point Location}
Point location is one of the most fundamental operations applied to
arrangements; see Section~\ref{ssec:aos:facilities:point-location}.
The contest between the different point-location strategies for
arrangement embedded in the plane was settled in favor of the
``landmark'' variants for many types of
arrangements~\cite{hh-esplp-08}. Unfortunately, at this point, these
strategies cannot be applied on arrangements embedded on 
surfaces other than the plane. This is due to limitations that arise
from the specific implementations of geometry traits that support
these types of embedded surfaces, e.g., the geometry traits that
handles geodesic arcs embedded on the sphere; see
Section~\ref{sec:aos:geodesics}. The problem should be attacked from
both its ends. That is, we can try to enhance the geometry traits
implementations, and add all ingredients required by the concept
\concept{ArrangementLandmarksTraits\_2} as defined today, and at the
same time, try to come up with alternative, perhaps similar,
strategies that induce different requirements that are easier to
satisfy by the geometry-traits classes.

\subsection{Geometry-Traits Models}
\label{ssec:conclusion:arr-2d:geometry-traits-models}
Continuous and steady effort is made to further extend the arsenal of
geometry-traits models, or simply to improve the existing ones; see
Section~\ref{ssec:aos:geometry-traits:models}. Supporting arrangements
induced by rich families of curves opens the door for numerous
applications.

The dominant bottleneck of all applications mentioned in this thesis
is the application of the geometry operations implemented in the
geometry-traits classes. Expediting their performance, while containing
the growth of their memory footprint is always desired. For example,
the arrangement package provides two traits classes that handle line
segments. Both are parameterized by a geometric kernel; see
Section~\ref{sssec:intro:cgal:content}. Segments defined by most
\cgal{} kernels are represented only by their two endpoints. When a
segment is split several times, the bit-length needed to represent the
coordinates of its endpoints may grow exponentially
(see~\cite{fwh-cfpeg-04} for a discussion), which may significantly
slow down the computation. Therefore, one of the two traits classes
represents a segment by its supporting line in addition to its two
endpoints. When the traits class computes an intersection point of two
line segments, it uses the coefficients of their supporting lines.
When a segment is split at an intersection point, the underlying line
of the two resulting sub-segments remains the same, and only its
endpoints are updated. This traits class thus overcomes the undesired
effect of cascading intersection-point representation, as described
above, at the account of a larger memory footprint. A similar idea can
be applied to the traits class that handles geodesic arcs embedded on
the sphere. An implementation of a geometry-traits that handles such
arcs that stores the projections of all the arrangement geometric
features once calculated, and retrieves them when subsequently needed,
has great potential to reduce time consumption at the price of growth
in space consumption.

Another direction is to expand existing implementations to meet the
requirements of the various concepts in the hierarchy described in
Chapter~\ref{chap:aos}. Consider, for example, the geometry traits
class that handles geodesic arcs embedded on the sphere. Currently,
it supports the basic operations required to construct and maintain
arrangements induced by such arcs. As mentioned in the previous section,
it might be possible to enhance it to enable the use of one or more
of the variants of the ``landmark'' point-location strategies.
Similarly, enabling Boolean set operations of spherical patches bounded
by geodesic arcs embedded on the sphere requires the provision of few
additional operations by the traits class.

\section{Three-Dimensional Arrangements}
\label{sec:conclusion:arr-3d}
Consider the following task: Given a set
$\calS = \{S_1,S_2,\ldots,S_n\}$ of two-dimensional surfaces
in $\rrr$, construct the three-dimensional arrangement $\calA(\calS)$
induced by $\calS$. Fulfilling this task in an efficient, complete,
and robust manner has not been attempted yet, and is considered challenging.
Implementing various strategies of point-location that operate on
arrangements in $\rrr$ and a plane-sweep and zone-construction frameworks
for such a data structure is greatly desired, but extremely ambitious.
In analogy to two-dimensional arrangements, a generic implementation of
a plane-sweep framework can enable various operations, such as the overlay
of spatial subdivisions and ordinary and regularized Boolean set
operations of point sets bounded by general algebraic surfaces. These
operations, in turn, can enable the implementation of a multitude of
applications.

Arrangements embedded on two-dimensional surfaces can be used as 
building blocks in the implementation of a data structure that
represents a three-dimensional arrangement~\cite{w-ieaem-07}. We can
consider each surface $S_k$ separately, and construct the arrangement
$\calA_k = \calA_k(\calS)$ induced by intersection curves between
$S_k$ and $\calS \setminus S_k$ embedded on $\calS$. The arrangements
$\calA_1$,$\calA_2$,\ldots,$\calA_n$ can subsequently be connected
together to properly represent the spatial subdivision $\calA(\calS)$.

\section{Boolean Set-Operations\index{Boolean set-operations}}
\label{sec:conclusion:bso}
In some sense and to some extent this thesis attempts to close gaps
between theoretical results and practical needs. It is not accidental
that great parts of the thesis are closely related to \cgal, as one of
the goals of \cgal{} stated in the Introduction chapter of the thesis
is to translate (theoretical) results into useful, reliable, and
efficient programs for industrial and academic applications. Evidently,
the Boolean set operations package of \cgal{}, which is based on the
\aos{} package (see Section~\ref{ssec:aos:applications:bso})  is one
of the most popular packages among \cgal{} packages in the commercial
market. Naturally, we would like to continue improving this package.
The problems addressed in the next two subsections were raised during
the $3^{rd}$ \cgal{} User Workshop~\cite{w-ucrpg-08}~\citelinks{agilent}.

\subsection{Fixing the Data}
\label{ssec:conclusion:bso:data}
Input data of Boolean set operations, namely, a set of one or more
polygons, used in real-world applications is occasionally corrupted, as
it originates from measuring devices that are susceptive to noise and
physical disturbances. In some other cases, it contains many
degeneracies, which either disable computations based on
fixed-precision arithmetic, or slow down further computation using
exact geometric computation.

\subsubsection{Invalid Data}
\label{ssec:conclusion:bso:data:invalid}
\begin{wrapfigure}[10]{r}{4cm}
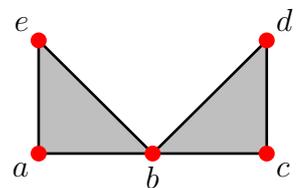

  \vspace{-10pt}
  \centerline{
    \pspicture[](-1.7,0)(1.7,1.7)
    \psset{unit=1cm,linewidth=1pt}
    \pspolygon*[linecolor=lightgray](-1.5,0)(1.5,0)(1.5,1.5)(0,0)(-1.5,1.5)
    \pspolygon(-1.5,0)(1.5,0)(1.5,1.5)(0,0)(-1.5,1.5)
    \pscircle*[linecolor=red](-1.5,0){3pt}
    \pscircle*[linecolor=red](1.5,0){3pt}
    \pscircle*[linecolor=red](1.5,1.5){3pt}
    \pscircle*[linecolor=red](-1.5,1.5){3pt}
    \pscircle*[linecolor=red](0,0){3pt}
    \uput[-135]{0}(-1.5,0){$a$}
    \uput[-90]{0}(0,0){$b$}
    \uput[-45]{0}(1.5,0){$c$}
    \uput[45]{0}(1.5,1.5){$d$}
    \uput[135]{0}(-1.5,1.5){$e$}
    \endpspicture
  }
  \caption[A relatively simple polygon]{\capStyle{A {\em relatively simple}
  polygon that is not simple, given by its boundary $\{a,b,c,d,b,e\}$.}}
\end{wrapfigure}
A polygon $P$ is said to be {\em simple}\index{polygon!simple}
(or Jordan)\index{polygon!Jordan}, if the only points of the plane
belonging to two edges of $P$ are vertices of consecutive edges
$P$~\citelinks{wolfram-simple-polygon}. Namely, no two edges intersect,
except for every two consecutive edges, which share one endpoint. A
simple polygon is topologically equivalent to a disk. A polygon $P$ is
said to be {\em weakly simple}\index{polygon!weakly simple}, if the
chain of the edges of $P$ does not cross itself. A polygon $P$ is said
to be {\em relatively simple}\index{polygon!relatively simple}, if it
is weakly simple and the edges of $P$ do not intersect in their
relative interior. Observe, that a relatively simple polygon, the
vertices of which appear only once in the boundary (the degree of
each vertex is two), is simple.

\begin{wrapfigure}[14]{r}{4cm}
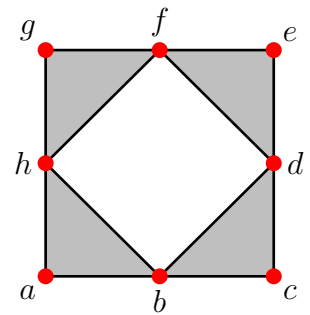

  \vspace{-10pt}
  \centerline{
    \pspicture[](-1.9,-1.9)(1.9,1.9)
    \psset{unit=1cm,linewidth=1pt}
    \pspolygon*[linecolor=lightgray](-1.5,-1.5)(1.5,-1.5)(1.5,1.5)(-1.5,1.5)
    \pspolygon*[linecolor=white](0,-1.5)(1.5,0)(0,1.5)(-1.5,0)
    \pspolygon(-1.5,-1.5)(1.5,-1.5)(1.5,1.5)(-1.5,1.5)
    \pspolygon(0,-1.5)(1.5,0)(0,1.5)(-1.5,0)
    \pscircle*[linecolor=red](-1.5,-1.5){3pt}
    \pscircle*[linecolor=red](0,-1.5){3pt}
    \pscircle*[linecolor=red](1.5,-1.5){3pt}
    \pscircle*[linecolor=red](1.5,0){3pt}
    \pscircle*[linecolor=red](1.5,1.5){3pt}
    \pscircle*[linecolor=red](0,1.5){3pt}
    \pscircle*[linecolor=red](-1.5,1.5){3pt}
    \pscircle*[linecolor=red](-1.5,0){3pt}
    \uput[-135]{0}(-1.5,-1.5){$a$}
    \uput[-90]{0}(0,-1.5){$b$}
    \uput[-45]{0}(1.5,-1.5){$c$}
    \uput[0]{0}(1.5,0){$d$}
    \uput[45]{0}(1.5,1.5){$e$}
    \uput[90]{0}(0,1.5){$f$}
    \uput[135]{0}(-1.5,1.5){$g$}
    \uput[180]{0}(-1.5,0){$h$}
    \endpspicture
  }
  \caption[A polygon with holes]{\capStyle{A polygon with a hole
  given by its outer boundary $\{a,b,c,d,e,f,g,h\}$ and its hole $\{h,f,d,b\}$.}}
\end{wrapfigure}
Input data for any Boolean set operation represents points set that
may be bounded or unbounded, and may have holes. Items of such input
take the form of general polygons or general polygons with holes with
well-defined interiors and exteriors. A valid polygon must be weakly
simple (but not necessarily simple) and its vertices must be ordered
in counterclockwise direction around its interior. A valid polygon
with holes that represents a bounded point set, has an outer boundary
represented as a weakly simple (but not necessarily simple) polygon,
the vertices of which are oriented in counterclockwise order around
its interior. In addition, the set may contain holes, where each hole
is represented as a simple polygon, the vertices of which are oriented
in clockwise order around the interior of the hole.
Note that an unbounded polygon without holes spans the entire plane.
Vertices of holes may coincide with vertices of the boundary.

\begin{figure}[!htp]
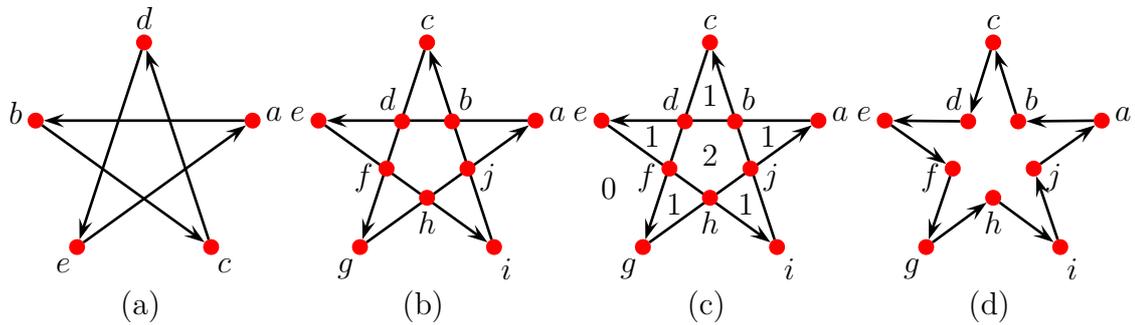

  \centerline{
  \begin{tabular}{cccc}
    \pspicture[](-1.7,-1.6)(1.6,2)
    \psset{unit=1cm,linewidth=1pt}
    \rput{18}(0,0){\cnode*[linecolor=red](1.5,0){3pt}{a}}
    \rput{90}(0,0){\cnode*[linecolor=red](1.5,0){3pt}{d}}
    \rput{162}(0,0){\cnode*[linecolor=red](1.5,0){3pt}{b}}
    \rput{234}(0,0){\cnode*[linecolor=red](1.5,0){3pt}{e}}
    \rput{306}(0,0){\cnode*[linecolor=red](1.5,0){3pt}{c}}
    \ncline{->}{a}{b}
    \ncline{->}{b}{c}
    \ncline{->}{c}{d}
    \ncline{->}{d}{e}
    \ncline{->}{e}{a}
    \rput{18}(0,0){\uput[0]{-18}(1.5,0){$a$}}
    \rput{90}(0,0){\uput[0]{-90}(1.5,0){$d$}}
    \rput{162}(0,0){\uput[0]{-162}(1.5,0){$b$}}
    \rput{234}(0,0){\uput[0]{-234}(1.5,0){$e$}}
    \rput{306}(0,0){\uput[0]{-306}(1.5,0){$c$}}
    \endpspicture &
    \pspicture[](-1.7,-1.6)(1.6,2)
    \psset{unit=1cm,linewidth=1pt}
    \rput{18}(0,0){\cnode*[linecolor=red](1.5,0){3pt}{a}}
    \rput{90}(0,0){\cnode*[linecolor=red](1.5,0){3pt}{b}}
    \rput{162}(0,0){\cnode*[linecolor=red](1.5,0){3pt}{c}}
    \rput{234}(0,0){\cnode*[linecolor=red](1.5,0){3pt}{d}}
    \rput{306}(0,0){\cnode*[linecolor=red](1.5,0){3pt}{e}}
    \ncline{->}{a}{c}
    \ncline{->}{c}{e}
    \ncline{->}{e}{b}
    \ncline{->}{b}{d}
    \ncline{->}{d}{a}
    \rput{18}(0,0){\uput[0]{-18}(1.5,0){$a$}}
    \rput{90}(0,0){\uput[0]{-90}(1.5,0){$c$}}
    \rput{162}(0,0){\uput[0]{-162}(1.5,0){$e$}}
    \rput{234}(0,0){\uput[0]{-234}(1.5,0){$g$}}
    \rput{306}(0,0){\uput[0]{-306}(1.5,0){$i$}}
    \rput{54}(0,0){\pscircle*[linecolor=red](0.56,0){3pt}\uput[0]{-54}(0.56,0){$b$}}
    \rput{126}(0,0){\pscircle*[linecolor=red](0.56,0){3pt}\uput[0]{-126}(0.56,0){$d$}}
    \rput{198}(0,0){\pscircle*[linecolor=red](0.56,0){3pt}\uput[0]{-198}(0.56,0){$f$}}
    \rput{270}(0,0){\pscircle*[linecolor=red](0.56,0){3pt}\uput[0]{-270}(0.56,0){$h$}}
    \rput{342}(0,0){\pscircle*[linecolor=red](0.56,0){3pt}\uput[0]{-342}(0.56,0){$j$}}
    \endpspicture &
    \pspicture[](-1.7,-1.6)(1.6,2)
    \psset{unit=1cm,linewidth=1pt}
    \rput{18}(0,0){\cnode*[linecolor=red](1.5,0){3pt}{a}}
    \rput{90}(0,0){\cnode*[linecolor=red](1.5,0){3pt}{b}}
    \rput{162}(0,0){\cnode*[linecolor=red](1.5,0){3pt}{c}}
    \rput{234}(0,0){\cnode*[linecolor=red](1.5,0){3pt}{d}}
    \rput{306}(0,0){\cnode*[linecolor=red](1.5,0){3pt}{e}}
    \ncline{->}{a}{c}
    \ncline{->}{c}{e}
    \ncline{->}{e}{b}
    \ncline{->}{b}{d}
    \ncline{->}{d}{a}
    \rput{18}(0,0){\uput[0]{-18}(1.5,0){$a$}}
    \rput{90}(0,0){\uput[0]{-90}(1.5,0){$c$}}
    \rput{162}(0,0){\uput[0]{-162}(1.5,0){$e$}}
    \rput{234}(0,0){\uput[0]{-234}(1.5,0){$g$}}
    \rput{306}(0,0){\uput[0]{-306}(1.5,0){$i$}}
    \rput{54}(0,0){\pscircle*[linecolor=red](0.56,0){3pt}\uput[0]{-54}(0.56,0){$b$}}
    \rput{126}(0,0){\pscircle*[linecolor=red](0.56,0){3pt}\uput[0]{-126}(0.56,0){$d$}}
    \rput{198}(0,0){\pscircle*[linecolor=red](0.56,0){3pt}\uput[0]{-198}(0.56,0){$f$}}
    \rput{270}(0,0){\pscircle*[linecolor=red](0.56,0){3pt}\uput[0]{-270}(0.56,0){$h$}}
    \rput{342}(0,0){\pscircle*[linecolor=red](0.56,0){3pt}\uput[0]{-342}(0.56,0){$j$}}
    \rput{0}(0,0){$2$}
    \rput{18}(0,0){\rput{-18}(0.8,0){$1$}}
    \rput{90}(0,0){\rput{-90}(0.8,0){$1$}}
    \rput{162}(0,0){\rput{-162}(0.8,0){$1$}}
    \rput{234}(0,0){\rput{-234}(0.8,0){$1$}}
    \rput{306}(0,0){\rput{-306}(0.8,0){$1$}}
    \rput{198}(0,0){\rput{-198}(1.4,0){$0$}}
    \endpspicture &
    \pspicture[](-1.7,-1.6)(1.6,2)
    \psset{unit=1cm,linewidth=1pt}
    \rput{18}(0,0){\cnode*[linecolor=red](1.5,0){3pt}{a}}
    \rput{54}(0,0){\cnode*[linecolor=red](0.56,0){3pt}{b}}
    \rput{90}(0,0){\cnode*[linecolor=red](1.5,0){3pt}{c}}
    \rput{126}(0,0){\cnode*[linecolor=red](0.56,0){3pt}{d}}
    \rput{162}(0,0){\cnode*[linecolor=red](1.5,0){3pt}{e}}
    \rput{198}(0,0){\cnode*[linecolor=red](0.56,0){3pt}{f}}
    \rput{234}(0,0){\cnode*[linecolor=red](1.5,0){3pt}{g}}
    \rput{270}(0,0){\cnode*[linecolor=red](0.56,0){3pt}{h}}
    \rput{306}(0,0){\cnode*[linecolor=red](1.5,0){3pt}{i}}
    \rput{342}(0,0){\cnode*[linecolor=red](0.56,0){3pt}{j}}
    \ncline{->}{a}{b}
    \ncline{->}{b}{c}
    \ncline{->}{c}{d}
    \ncline{->}{d}{e}
    \ncline{->}{e}{f}
    \ncline{->}{f}{g}
    \ncline{->}{g}{h}
    \ncline{->}{h}{i}
    \ncline{->}{i}{j}
    \ncline{->}{j}{a}
    \rput{18}(0,0){\uput[0]{-18}(1.5,0){$a$}}
    \rput{90}(0,0){\uput[0]{-90}(1.5,0){$c$}}
    \rput{162}(0,0){\uput[0]{-162}(1.5,0){$e$}}
    \rput{234}(0,0){\uput[0]{-234}(1.5,0){$g$}}
    \rput{306}(0,0){\uput[0]{-306}(1.5,0){$i$}}
    \rput{54}(0,0){\uput[0]{-54}(0.56,0){$b$}}
    \rput{126}(0,0){\uput[0]{-126}(0.56,0){$d$}}
    \rput{198}(0,0){\uput[0]{-198}(0.56,0){$f$}}
    \rput{270}(0,0){\uput[0]{-270}(0.56,0){$h$}}
    \rput{342}(0,0){\uput[0]{-342}(0.56,0){$j$}}
    \endpspicture\\
    (a) & (b) & (c) & (d)
  \end{tabular}
  }
  \caption[A self crossing polygon]{\capStyle{(a) A self crossing polygon
  given by $\{a,b,c,d,e\}$. (b) The \arr{} data structure constructed
  from the polygon edges. (c) The \arr{} data structure with updated
  face winding-numbers. (d) The \arr{} data structure with internal
  edges removed.}}
  \label{fig:winding}
\end{figure}

As mentioned above, real-world data is often corrupted. Naturally, passing
invalid polygons (polygons with holes, respectively) as input to a
Boolean set operation must be avoided. Apparently, automatically
``fixing'' corrupted data, that is, converting invalid input polygons
or polygons with holes to valid ones, is not a simple task. Consider,
for example, the self intersecting star depicted in
Figure~\ref{fig:winding}(a). A point is considered inside the point
set, if and only if the number of counterclockwise turns the oriented
boundary makes around the point, also called the winding number, is
greater than zero. It can be efficiently calculated using an \arr{}
data structure as follows. We extend each halfedge $h$ with a Boolean
flag that indicates whether the winding number increases or decreases
when we cross $h$, that is, when we move from the face incident to $h$
to the face incident to the twin of $h$. Observe that twin halfedges
always have opposite flag values. We extend each face with an integer
that counts the winding number of every point in the face. Finally, we
apply a Breath-First Search\index{breath-first search|see{BFS}}
(\Index{BFS}) on all the arrangement faces starting from the unbounded
face and updating the face counters as we cross halfedges. The \bgl{},
for example, can be employed for this task. Figure~\ref{fig:winding}
illustrates the process. Once the winding number of every face is
updated, we remove internal edges, as they are redundant, and convert
the arrangement back into a valid polygon or polygon with holes.

The problem becomes more complicated when the input polygons or
polygons with holes violate several validity properties at the same
time. Naturally, converting corrupted data into valid data consumes
time. The challenge is to perform this task flawlessly and efficiently,
while presenting a convenient interface to the user.

\subsubsection{Degenerate Data}
\label{ssec:conclusion:bso:data:degenerate}
In computational geometry there are two main techniques to eliminate
degeneracies and near-degeneracies. One is
\Index{snap rounding}~\cite{gm-rad-95,gght-srlse-97,h-psifp-99} and
the other is \Index{controlled
perturbation}~\cite{hl-cpac-04,mo-recgc-06}. Both techniques aim at
processing geometric data, e.g., curves of the boundaries of general
polygons, to yield new data that can be further robustly and more
efficiently processed, perhaps using only limited precision.
Traditionally, snap rounding has been applied to linear objects
embedded in the plane~\cite{hp-srr-01,hp-isr-02}, where it replaces
sets of linear segments with sets of polylines. It can be extended
though to other types of curves in the plane, such as \bez{}
curves~\cite{ekw-srbc-07}, or even other curve types embedded on other
surfaces that have a well defined grid (and perhaps other properties),
such as geodesic arcs embedded on the sphere. Controlled perturbation
has even a larger spectrum of applicable platforms. With respect to our
polygons, applying any one of the two techniques above may result with
(partially) overlapping curves, which belong to two different polygons,
respectively. In this case, we need to merge the incremental winding
contributions of the original curves mentioned above.

\subsection{Improving the Efficiency}
\label{ssec:conclusion:bso:efficiency}
The \cBso{} package provides efficient operations that compute the
regularized union or the regularized intersection of a set of input
polygons. There is no restriction on the polygons in the set;
naturally, they may intersect each other. The package also provides an
efficient predicate that determines whether all polygons in a given
set intersect.

\begin{wrapfigure}[12]{r}{4cm}
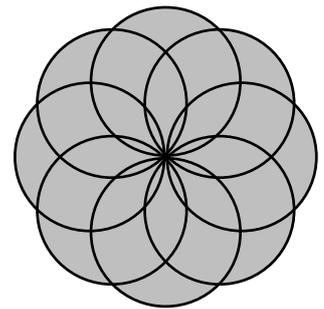

  \centerline{
    \pspicture[](-2,-2)(2,2)
    \psset{unit=1cm,linewidth=1pt}
    \rput{0}(0,0){\pscircle*[linecolor=lightgray](1,0){1cm}}
    \rput{45}(0,0){\pscircle*[linecolor=lightgray](1,0){1cm}}
    \rput{90}(0,0){\pscircle*[linecolor=lightgray](1,0){1cm}}
    \rput{135}(0,0){\pscircle*[linecolor=lightgray](1,0){1cm}}
    \rput{180}(0,0){\pscircle*[linecolor=lightgray](1,0){1cm}}
    \rput{225}(0,0){\pscircle*[linecolor=lightgray](1,0){1cm}}
    \rput{270}(0,0){\pscircle*[linecolor=lightgray](1,0){1cm}}
    \rput{315}(0,0){\pscircle*[linecolor=lightgray](1,0){1cm}}
    \rput{0}(0,0){\pscircle(1,0){1cm}}
    \rput{45}(0,0){\pscircle(1,0){1cm}}
    \rput{90}(0,0){\pscircle(1,0){1cm}}
    \rput{135}(0,0){\pscircle(1,0){1cm}}
    \rput{180}(0,0){\pscircle(1,0){1cm}}
    \rput{225}(0,0){\pscircle(1,0){1cm}}
    \rput{270}(0,0){\pscircle(1,0){1cm}}
    \rput{315}(0,0){\pscircle(1,0){1cm}}
    \endpspicture
  }
  \caption[The union of eight discs]{\capStyle{The union of eight discs.}}
\end{wrapfigure}
There are at least three different methods to compute the union of a set
of polygons $P_1, \ldots P_m$. We can do it incrementally as follows.
At each step we compute the union of $S_{k-1} = \bigcup_{i=1}^{k-1}{P_i}$
with $P_{k}$ and obtain $S_k$. A second option is to use a
divide-and-conquer approach. First, we divide the set of polygons into
two subsets. Then, we compute the union of each subset recursively, and
obtain the partial results in $S_1$ and $S_2$, respectively. Finally, we
compute the union $S_1 \cup S_2$. A third option aggregately computes the
union of all polygons. We construct an arrangement inserting the polygon
edges at once, utilizing the sweep-line\index{sweep line} framework,
(see Section~\ref{ssec:aos:facilities:sweep-line}) and extract the result
from the arrangement. Similarly, it is also possible to aggregately compute
the intersection $\bigcap_{i=1}^{m}{P_i}$ of a set of input polygons.

The incremental method is more efficient for a small (constant) size of
input polygons, and the aggregate method is more efficient for sparse
polygons with a relatively small number of intersections. It is also
possible to mix between the three methods, reaping the benefits of them
all. We would like to figure out what are the exact conditions that
should be used to determine when to use each method, or when to switch
from one to another.

\subsection{Non Regularized Operations}
\label{ssec:conclusion:bso:nonreg-opts}
The \cgal{} package \cNefii{} supports ordinary set-operations on point
sets in $\rr$~\cite{cgal:s-bonp2-07}. The point-set operands and results
are rectilinear polygonal model. Such a point set can be
defined by a finite set of open halfspaces, or obtained by set complement
and set intersection operations. The package supports operations that
consume and produce linear polygons defined by linear edges. The Boolean set
operations package, on the other hand, supports only regularized
set-operations, but the operations consume and produce general polygons.
Recall, that a general polygon is a point set in $\rr$ that has a
topology of a polygon, but its boundary edges map to arcs of curves,
which are not necessarily linear. Extending the package to support not
only regularized operations, but also ordinary ones, will make it useful
for more applications; see, for example,
Section~\ref{ssec:assem_plan:pairwise-ms-projection}.

\subsection{Operating in 3-Space}
\label{ssec:conclusion:bso:space}
Boolean set operations are intuitive and therefore popular in many
fields. For example, \Index{CSG} is a representation model for solids
based on Boolean set operations. Solids represented using CSG result
from Boolean set operations applied to elementary solids called
primitives, e.g., cubes, spheres, cones, and cylinders. A CSG solid
is represented in a tree structure, where the leaves represent
primitives, and internal nodes represent Boolean operations.

The \cgal{} package \cNefiii{} supports ordinary set-operations on point
sets in $\rrr$~\cite{cgal:hk-bonp3-07}. Similar to the planar case, the
package supports operations that consume and produce (linear)
polyhedra defined by (linear) halfspaces. Having the ability to
construct and maintain arrangements in $\rrr$, will enable the
development of a new package, that will support either regularized, or
even non-regularized, robust Boolean set-operations that consume and
produce general (curved) polyhedra in $\rrr$, the boundaries of which
are general surfaces. Many fundamental problems in solid modeling, motion
planning, and other domains, can benefit from such a package.

\section{Collision Detection}
\label{sec:conclusion:cd}
One possible progression of the collision detection algorithm and its
implementation described in Section~\ref{sec:mscn:3d_col_det} is
a complete integrated framework that answers proximity queries about the
relative placement of polytopes that undergo rigid motions including
{\em rotation}. The framework may use either the spherical or the
cubical Gaussian-map to represent polytopes. The interface of these
two data structures should be consolidated to allow rapid interchanging.

Some of the methods we foresee compute only those portions of the
Minkowski sum that are absolutely necessary, making our approach even
more competitive. Briefly, instead of computing the Minkowski sum of
$P$ and $-Q$, we walk simultaneously on the two respective \cgm's,
producing one feature of the Minkowski sum at each step of the
walk. Such a strategy could be adapted to the case of rotation by
rotating the trajectory of the walk, keeping the \cgm{} of $-Q$
intact, instead of rotating the \cgm{} itself.

\section{Reflection Mapping and GIS}
\label{sec:conclusion:reflection-mapping}
\begin{wrapfigure}{r}{5cm}
  \vspace{-12pt}
  \epsfig{figure=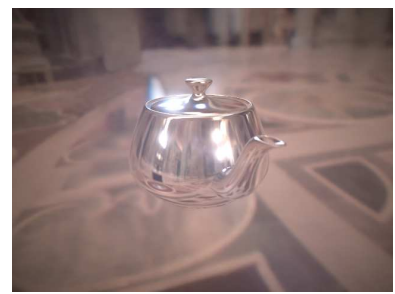,width=5cm,silent=}
    \caption[Environment mapping]{\capStyle{Environment of the St.\ Peters
    Cathedral mapped on a teapot with a silver material applied. Taken in
    RTHDRIBLE~\citelinks{rthdribl}.}}
  \label{fig:env-map}
\end{wrapfigure}
We have developed a new data structure that can be used to construct
and maintain cubical Gaussian-maps and compute Minkowski sums of pairs
of polytopes represented by the new data structure; see
Chapter~\ref{chap:mink-sums-construction}. The name of the data
structure is \cCgm{}, and we are considering introducing a package by
the same name to a prospective release of \cgal. The implementation is
generic and can be used for other purposes, where six planar
subdivisions embedded on a unit cube and stitched properly at the
edges of the cube is useful, for example Cubical Environment Mapping;
see, e.g.,~\cite[Section~16.6]{fvfh-cgppc-95}.

In computer graphics, {\em \Index{reflection mapping}} is an efficient
method of simulating a complex mirroring surface by means of a
precomputed texture image. The texture is used to store the image of the
environment surrounding the rendered object. The surrounding environment
can be represented, constructed, maintained, and stored in several ways;
the most common ones are the Spherical Environment Mapping in which a
single texture contains the image of the surrounding as reflected on a
mirror ball, or the Cubical Environment Mapping in which the environment
is unfolded onto the six faces of a cube and stored therefore as six
square textures.

Reflection Mapping can be categorized as some sort of a
\Index{geographic information system} (GIS). There are two broad
methods used to store data in a GIS: Raster and Vector. In a GIS 
data is often related from different sources possibly of different
storing types. Regardless of whether a single arrangement embedded on
a sphere, or six arrangements embedded on the cube, are concerned, the
connection to GIS is clear --- both data-structures can accommodate
geographic vector data in a natural way.

\section{Exact Complexity of Minkowski Sums}
\label{sec:conclusion:exact-complexity}
\begin{wraptable}[4]{r}{11.8cm}
  \vspace{-12pt}
  \begin{tabular}{c||c}
    \textbf{Dimension} & \textbf{Exact Maximum Complexity}\\
    \hline
    $d = 2$ & $m_1 + m_2 + \ldots + m_k$ \\
    \hline
    $d = 3$ &
    $\sum_{1 \leq i < j \leq k}(2m_i - 5)(2m_j - 5) + \sum_{1 \leq i \leq k}m_i + \binom{k}{2}$\\
  \end{tabular}
\end{wraptable}
The table to the right summarizes the known exact bounds on the maximum
complexity of Minkowski sums of polytopes in terms of number of facets
($(d-1)$-faces) derived in Chapter~\ref{chap:mink-sum-complexity}. The
exact bounds are unknown for higher Dimensions as far as we know.

It is known that the exact complexity (counting faces) of the
Minkowski sum of two polytopes with $m$ and $n$ facets can be as low
as $m$, when the two polytopes have the same number of facets $m$ and
parallel features, but it is unknown what is the minimum exact
complexity of Minkowski sums of polytopes that have only a limited
number of parallel features, or none at all.

\begin{appendix}
\renewcommand{\chaptermark}[1]{\markboth{{\it \appendixname\ \thechapter.\ #1}}{}}

\chapter{Software Components, Libraries, and Packages}
\label{app:software}
\newcounter{sw:cntr}

\section{Visual Simulation}
\label{sec:software:visual-simulation}
We have developed a toolkit, called \sgal{} (Scene Graph Algorithm
Library),\footnote{We plan to offer \sgal{} with an open-source
license in the future, making it available to the public.} that
supports the construction and maintenance of directed acyclic graphs
that represent scenes and models in $\rrr$. The toolkit includes,
among the other, two interactive 3D applications. The first detects
collisions and answers proximity queries for polytopes that undergo
translation and rotation. The second enables users to visualize a
scene in an interactive manner. It parses input files that describe
the scene in a degenerate yet extended \vrml{} format~\citelinks{web3d}.
The format is degenerate, as not all \vrml{} features are supported
(yet). However, it has been significantly extended as described below.

Both applications are linked with
\setcounter{sw:cntr}{1}(\roman{sw:cntr}) \cgal,
\addtocounter{sw:cntr}{1}(\roman{sw:cntr}) a library that provides
the exact rational number-type,
\addtocounter{sw:cntr}{1}(\roman{sw:cntr}) several \boost{} libraries,
\addtocounter{sw:cntr}{1}(\roman{sw:cntr})
imagemagick~\citelinks{imagemagic} --- a library that creates, edits,
and composes bitmap images, and
\addtocounter{sw:cntr}{1}(\roman{sw:cntr}) internal libraries that
construct and maintain 3D scene-graphs, written in C++, and built on
top of {\tt OpenGL}. The internal code is divided into two libraries;
{\tt SGAL} --- The main 3D scene-graph library and
{\tt SCGAL} --- Extensions that depend on \cgal.

We added several geometry nodes that represent polytopes using various
sub-representations, such as Gaussian maps, a few other related nodes
that handle coordinates, and many other miscellaneous nodes that
provide services, such as antialiasing and snapshoting. The descriptions
of some of the geometry nodes follows.
\begin{compactdesc}
\item[{\tt ArrangementOnSphere}]
  This node represents models as arrangements of geodesic arcs
  embedded on the sphere.
\item[{\tt ExactPolyhedron}]
  This node represents polyhedra as meshes using the \cgal{}
  \cPolyhedron{} data structure.
\item[{\tt SphericalGaussianMap}]
  This node represents polytopes as spherical Gaussian maps using 
  the \aos{} data structure instantiated as an arrangement embedded on
  the sphere.
\item[{\tt CubicalGaussianMap}]
  This node represents polytopes as cubical Gaussian maps using the
  \cCgm{} data structure.
\end{compactdesc}

The implementation relies on inputing {\em exact} coordinates. To this
end, the format was further extended with a node called
{\tt ExactCoordinate} that represents exact coordinates, and enables
the provision of exact rational coordinates as input.

The entire partitioning process described in
Chapter~\ref{chap:assem_plan} is realized within \sgal{}. The library
contains all the necessary ingredients required to represent and
visualize the input and the output, and to simulate the process. In
particular, it has been extended with two geometry node types: the
{\tt Assembly} node type represents assemblies or subassemblies, and
the {\tt AssemblyPart} node type represents parts of assemblies.
Notice, that each node object of the three types {\tt AssemblyPart},
{\tt SphericalGaussianMap}, and {\tt ArrangementOnSphere} internally
maintains the \cgal{} data structure that represents an arrangement of
geodesic arcs embedded on the sphere~\cite{bfhmw-scmtd-07}, an
instance of the \aos{} class template.

\section{Software Availability}
\label{sec:software:availability}
Compiling and executing the programs require the latest internal
release of \cgal{} (post version~3.31) and the components listed above
including an internal package of \cgal{} that supports the \cCgm{} data
structure. The source code is available upon
request.\footnote{Send email to \url{efifogel@gmail.com}.}
Precompiled executables compiled either with {\tt g++}~4.2.3 on Linux
or with {\tt VC~8} on Windows, data sets, and documentation can be
downloaded from \url{http://www.cs.tau.ac.il/~efif/CD/3d}.

\end{appendix}

\bibliographystyle{alpha}
\bibliography{abrev,thesis}

\bibliographystylelinks{plain}
\label{bib:links}
\bibliographylinks{links}
\printindex
\end{document}